\documentclass[phd,tocprelim]{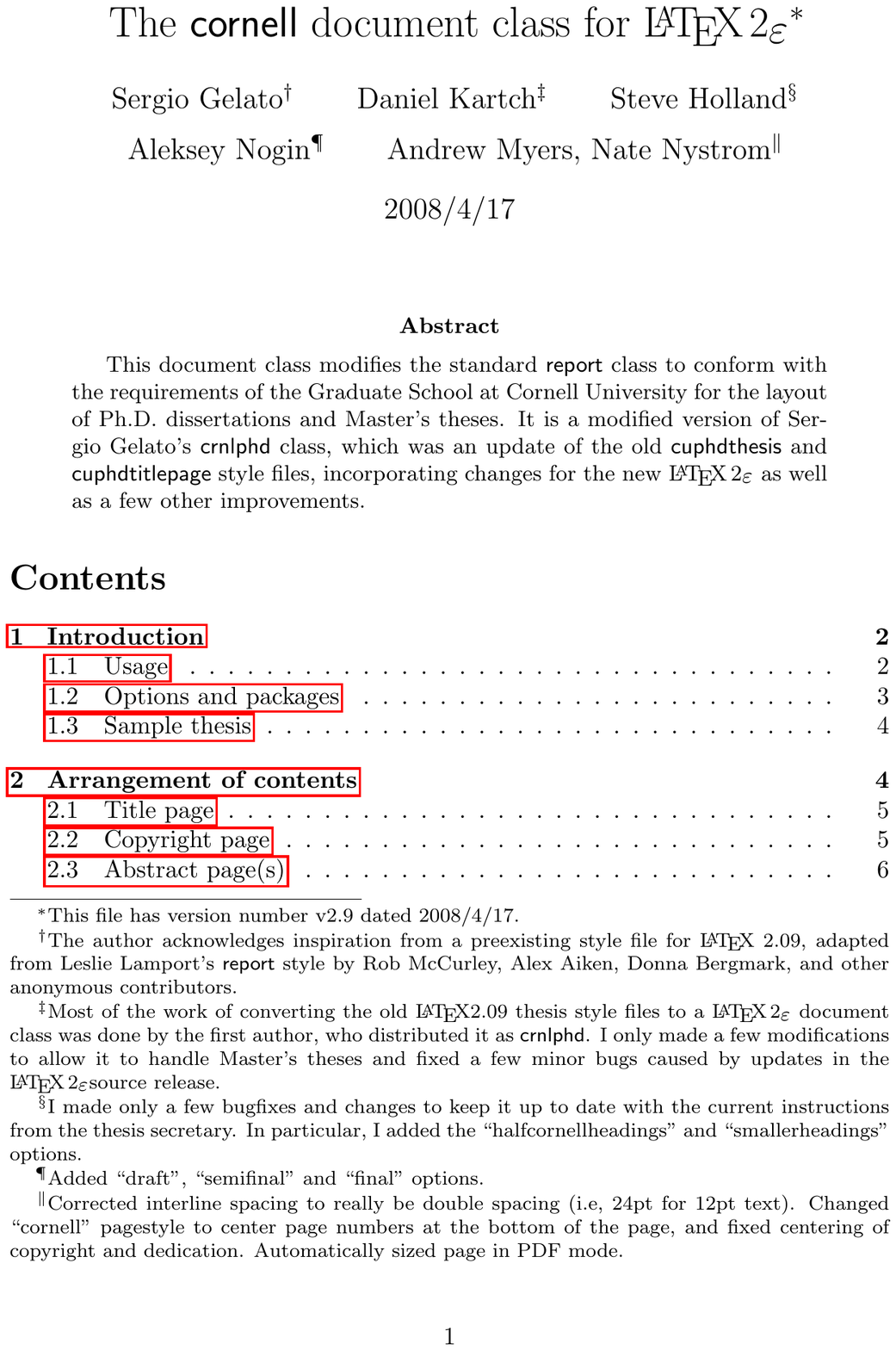}
%
% tocprelim option must be included to put the roman numeral pages in the
% table of contents
%
% The cornellheadings option will make headings completely consistent with
% guidelines.
%
% This sample document was originally provided by Blake Jacquot, and
% fixed up by Andrew Myers.
%
%Some possible packages to include

%\usepackage{graphicx,pstricks}
%\usepackage{graphics}
%\usepackage{moreverb}
%\usepackage{subfigure}
%\usepackage{epsfig}
%\usepackage{subfigure}
%\usepackage{hangcaption}
%\usepackage{txfonts}
%\usepackage{palatino}

\setcounter{secnumdepth}{3}

\usepackage{amsmath}
\usepackage{amsfonts}       % blackboard math symbols
\usepackage{booktabs}       % professional-quality tables

\usepackage{bbm}

\usepackage{caption}
\usepackage{graphicx}

\usepackage{hyperref}       % hyperlinks

\usepackage{makecell}
\usepackage{microtype}      % microtypography
\usepackage{mdwlist}
\usepackage{nicefrac}       % compact symbols for 1/2, etc.

\usepackage{mathtools}

\usepackage{subfigure}

\usepackage{thmtools} 
\usepackage{thm-restate}
\usepackage{times}
\usepackage{todonotes}

\usepackage{url}            % simple URL typesetting

%if you're having problems with overfull boxes, you may need to increase
%the tolerance to 9999
\tolerance=9999

\bibliographystyle{plain}

\renewcommand{\caption}[1]{\singlespacing\hangcaption{#1}\normalspacing}

% Theorem-like environments
\usepackage{lmodern}
\usepackage{xspace}
\usepackage{amsfonts,amsmath,amssymb,amsthm, pbox}

% Colors (with names)
%\usepackage[usenames,dvipsnames,table]{xcolor}
%\usepackage[colorlinks,citecolor=blue,bookmarks=true]{hyperref}

% Required for the table of results
\usepackage{multirow}
\usepackage{dsfont} %\indic
%\usepackage{fullpage}

% Algorithm environment
\usepackage{algorithmicx,algpseudocode, algorithm}

% compact lists, and handy shortcuts for items
\usepackage[shortlabels]{enumitem}
\setitemize{noitemsep,topsep=0pt,parsep=0pt,partopsep=0pt}
\setenumerate{noitemsep,topsep=0pt,parsep=0pt,partopsep=0pt}

\makeatletter
\@ifundefined{theorem}{% Theorems (each with its own style)
  \theoremstyle{definition}
  \newtheorem{definition}{Definition}
  \theoremstyle{plain}
  \newtheorem{theorem}{Theorem}
  \newtheorem{corollary}{Corollary}
  \newtheorem{lemma}{Lemma}
  \newtheorem{claim}{Claim}
  \newtheorem{fact}{Fact}
  
  \theoremstyle{remark}
  \newtheorem{remark}{Remark}
  \newtheorem{question}{Question}
  
  \newtheorem{example}{Example}
  
}{}
\makeatother

% Begin / End Theorems

     % No box, when ends in displayed equation

%\def \bRemark    {\bthm{Remark}}
%\def \eRemark    {\ethm{Remark}}

% Environments

%\newenvironment{Problems}{\begin{trivlist}\item[]{\textbf{Problems}}{\end{trivlist}}}

% Headers

%\def \Example  {\skpbld{Example}}

%\def \Proof    {\skpbld{Proof}}
%\def \Proofs   {\skpblds{Proofs}}
%\def \Problem  {\skpbld{Problem}}
%\def \Problems {\skpblds{Problems}}

%\def \Remark   {\skpbld{Remark}}

% Ignores

\newcommand{\ignore}[1]{}

% Problems & solutions

%\newcommand{\solution}[1]{\mbox{}\\ \medskip\noindent{\bf Solution\medskip}#1}
%\newcommand{\solution}[1]{}

%\newcommand{\source}[1]{}

%\newcommand{\modifiedfrom}[1]{}

%\newcommand{\takenfrom}[1]{}

%\newcommand{\reportedby}[1]{}

%\newcommand{\hint}[1]{}

% Equation formatting

% Blackboard fonts
\newcommand{\II}{\mathbb{I}} % Added by Theertha on April 16th 2013.
\newcommand{\EE}{\mathbb{E}}

\newcommand{\NN}{\mathbb{N}}

\newcommand{\RR}{\mathbb{R}}

% probability
\newcommand{\expectation}[1]{\EE\left[#1\right]}
\newcommand{\variance}[1]{Var\left(#1\right)}

% Number sets

% boldface

% caligraphics

\def \cA     {{\cal A}}
\def \cB     {{\cal B}}
\def \cC     {{\cal C}}
\def \cD     {{\cal D}}
\def \cE     {{\cal E}}

\def \cG     {{\cal G}}
\def \cH     {{\cal H}}

\def \cL     {{\cal L}}

\def \cN     {{\cal N}}

\def \cP     {{\cal P}}
\def \cQ     {{\cal Q}}

\def \cS     {{\cal S}}

\def \cV     {{\cal V}}
\def \cW     {{\cal W}}
\def \cX     {{\cal X}}
\def \cY     {{\cal Y}}

% vectors

% Random variables

%\newcommand{\poi}{{\rm poi}}
\newcommand{\Var}{{\rm Var}}

% following should not exist (indicate random sets)

% arrow vectors

% Abbreviations - \xspace puts space iff there is space after command

\newcommand{\ie}{\textit{i.e.,}\xspace}  % note that overridden in spanish
%\newcommand{\iid}{\textit{i.i.d.}\xspace} 
 % Edit by Theertha, \xspace not compiling

% Colors

%\definecolor{light}{gray}{.75}

% marginal notes

% qed's --  Also consider \qedhere

% sequences

\def \upto  {{,}\ldots{,}}

% sets

 % use \intsgm instead

% functions

% Set operations

%\DeclareMathOperator*{\union}{\cup}  %Commented by Theertha on April 15
%\DeclareMathOperator*{\Union}{\bigcup}  %Commented by Theertha on April 15

%\DeclareMathOperator*{\inter}{\cap}  %Commented by Theertha on April 15
%\DeclareMathOperator*{\Inter}{\bigcap}   %Commented by Theertha on April 15

% Floors, Ceilings, Absolute value

\def \ceil#1{{\lceil{#1}\rceil}}

\newcommand{\absv}[1]{\left|#1\right|}
% Parentheses, brackets

\def \Paren#1{{\left({#1}\right)}}

\def \Brack#1{{\left[{#1}\right]}}

%\def \frac#1#2{{{#1}\over{#2}}}

% number theory

% equalities

\newcommand{\ed}{:=}

%\newcommand{\eae}{\approx} % exponentially asymptotically eq - replace by \ere
%\newcommand{\eal}{\stackrel<\approx} % replace by \erl

 % roughly (asymptotically taken) <=
%\newcommand{\rl}{\stackrel<\sim}

%\newcommand{\rg}{\stackrel>\sim}

 % exponentially roughly equal

%\newcommand{\erl}{\stackrel<\approx}

%\newcommand{\erg}{\stackrel>\approx}

% Relations

% values

\def \half    {{\frac12}}

% probability

%\newcommand{\pr}{{p}} % phase out in favor of \prb

%\newcommand{\Pr}{{\text{Pr}}} % phase out in favor of \Prb
  % phase out in favor of \Prb

 % phase out in favor of \Prbs
 % Check if used - phase out
\newcommand{\probof}[1]{\Pr\Paren{#1}}

% notation

% ignore

\def\ignore#1{}

% Logic

%\def \ob {\overline}

%\def \ve {{\lor}} % needed?
 % needed?

% spaces

% Default text appears in regular print in both book and class versions. 
% There are two types of text that need to be highlighted in class:
% clson - not mentioned in book version (eg jokes)
% clsbk - regular text in book (the parts that need be said)

%For book:
%\newcommand{\clson}[1]{}
%\newcommand{\clsbk}[1]{#1}
%For class:

%\newcommand{\bi}{\begin{aopl}}
%\newcommand{\ei}{\end{aopl}}

\newcommand{\bi}{\begin{itemize}}
\newcommand{\ei}{\end{itemize}}
%\newcommand{\bq}{\begin{quote}}
%\newcommand{\eq}{\end{quote}}

% operators

 % Falling power

\def\orpro{\mathop{\mathchoice
   {\vee\kern-.49em\raise.7ex\hbox{$\cdot$}\kern.4em}
   {\vee\kern-.45em\raise.63ex\hbox{$\cdot$}\kern.2em}
   {\vee\kern-.4em\raise.3ex\hbox{$\cdot$}\kern.1em}
   {\vee\kern-.35em\raise2.2ex\hbox{$\cdot$}\kern.1em}}\limits}

\def\andpro{\mathop{\mathchoice
 {\wedge\kern-.46em\lower.69ex\hbox{$\cdot$}\kern.3em}
 {\wedge\kern-.46em\lower.58ex\hbox{$\cdot$}\kern.25em}
 {\wedge\kern-.38em\lower.5ex\hbox{$\cdot$}\kern.1em}
 {\wedge\kern-.3em\lower.5ex\hbox{$\cdot$}\kern.1em}}\limits}

\def\simge{\mathrel{%
   \rlap{\raise 0.511ex \hbox{$>$}}{\lower 0.511ex \hbox{$\sim$}}}}

\def\simle{\mathrel{
   \rlap{\raise 0.511ex \hbox{$<$}}{\lower 0.511ex \hbox{$\sim$}}}}

% for old documents

% picture macros

% inductiion macros

%\newcommand{\induction}[3]{
%Inductive statement: {#1}\\
%We prove: {#2}\\
%Basis: {#3}\\
%Step: {#4}\\
%Proof of step: {#5}
%}

\newcommand{\ab}{k}
\newcommand{\absvn}[1]{\left\|#1\right\|_1}
\newcommand{\absz}{k}
\newcommand{\alg}{\cA}
\newcommand{\argmin}[1]{\underset{#1}{\operatorname{arg}\,\operatorname{min}}\;}

\newcommand{\ber}{\cB}
\newcommand{\Bern}[1]{B(#1)}
\newcommand{\bino}[2]{{\rm Bin}(#1, #2)}

\newcommand{\chisquare}[2]{\chisquarerestr[]{#1}{#2}}
\newcommand{\chisquarerestr}[3][]{{\operatorname{d}^{#1}_{\chi^2}\!\left({#2 \mid\mid #3}\right)}}

\newcommand{\Dgk}{\Delta_{\ge\frac1\ab}}
\newcommand{\dims}{d}
\newcommand{\Dinfty}{D_\infty}
\newcommand{\dist}{\alpha}

\newcommand{\Dk}{\Delta_{\ab}}

\newcommand{\DKL}{D_{\mathrm{KL}}}
\newcommand{\dkl}[2]{d_{KL}(#1, #2)}

\newcommand{\dst}{\alpha}

\newcommand{\dtv}[2]{d_{TV}(#1, #2)}

\newcommand{\dP}{P}
\newcommand{\dQ}{Q}

\newcommand{\E}{\EE}

\newcommand{\ellone}[1]{||#1||_{\ell_1}}
\newcommand{\emp}{\hat{\p}_\ns}
\newcommand{\entp}[1]{H(#1)}
\newcommand{\entemp}{H(\emp)}
\newcommand{\eps}{\varepsilon}
\renewcommand{\epsilon}{\ve}

\newcommand{\expectationsub}[2]{\EE_{#1}\left[#2\right]}
\newcommand{\expectationf}[2]{\EE_{#2}\left[#1\right]}

\newcommand{\fp}{\beta}

\newcommand{\ham}[2]{d_{ham}(#1,#2)}
\newcommand{\he}{\hat{E}}

\newcommand{\id}{\mathbb{I}}

\newcommand{\kldist}[2]{D(#1\lVert #2)}

\newcommand{\Lap}{\mathrm{Lap}}

\newcommand{\MM}{\texttt{MM}}

\newcommand{\Mltsmb}[2]{M_{#1}(#2)}

\newcommand{\mP}{p}

\newcommand{\newzs}[1]{\textcolor{black}{#1}}
\newcommand{\newhz}[1]{\textcolor{black}{#1}}

 \newcommand{\newnewzs}[1]{{\color{black}#1}}
 
 \newcommand{\norm}[1]{\left\|#1\right\|_2}
 
 \newcommand{\normone}[1]{\left\lVert#1\right\rVert_1}
\newcommand{\norminf}[1]{\left\lVert#1\right\rVert_{\infty}}
\newcommand{\norminfone}[1]{\left\lVert#1\right\rVert_{\infty,1}}

 \newcommand{\norms}[1]{\left\|#1\right\|^2_2}

\newcommand{\ns}{n}
\newcommand{\nsmb}{N_{\smb}}

\newcommand{\p}{p}
\newcommand{\pemps}{\emp(\smb)}

\newcommand{\plugin}{\texttt{plug-in}}

\newcommand{\polylog}{\mathrm{polylog}}
\newcommand{\polyn}{\texttt{poly}}
\newcommand{\posa}{u}
\newcommand{\posb}{v}
\providecommand{\poly}{\operatorname*{poly}}
\newcommand{\pr}[2][]{\mathrm{Pr}\ifthenelse{\not\equal{}{#1}}{_{#1}}{}\!\left[#2\right]}

\newcommand{\prfi}[1]{\Phi_{#1}}

\newcommand{\proboff}[2]{\Pr_{#2}\Paren{#1}}
\newcommand{\probofsub}[2]{\Pr\nolimits_{#1}\Paren{#2}}
\newcommand{\psmb}{\p_\smb}

\newcommand{\q}{q}

\newcommand{\R}{\mathbb{R}}

\newcommand{\Sct}{S_{\tt CT}(\ab, \dist,\eps)}

\newcommand{\Sctnp}{S_{\tt CT}(\ab, \dist)}

\newcommand{\Sitnp}{S_{\tt IT}(\ab, \dist)}
\newcommand{\Sit}{S_{\tt IT}(\ab, \dist,\eps)}

\newcommand{\smb}{x}

\newcommand{\Sut}[2]{S_{\tt UT}(#1, #2,\eps)}

\newcommand{\Sutnp}[2]{S_{\tt UT}(#1, #2)}

\newcommand{\unif}{u}
\newcommand{\unifab}{\unif[{\ab}]}
\newcommand{\unifabs}[1]{\unif[{#1}]}

\def \ve {{\eps}} % needed?

\newcommand{\xon}{x_1^\ns}
\newcommand{\Xon}{X_1^\ns}
\newcommand{\Xonn}{\widetilde{X}_1^{\ns}}
\newcommand{\Xot}{X_1^t}
\newcommand{\Xos}{X_1^s}
\newcommand{\Xosmo}{X_1^{s-1}}
\newcommand{\Xosp}{X_1^{'s}}

\newcommand{\yon}{y_1^\ns}
\newcommand{\Yon}{Y_1^\ns}
\newcommand{\Yonn}{\widetilde{Y}_1^{\ns}}

\newcommand{\Yosmo}{Y_1^{s-1}}
\newcommand{\Yos}{Y_1^s}
\newcommand{\Yosp}{Y_1^{'s}}
\newcommand{\Yot}{Y_1^t}

\newcommand{\Zon}{Z_1^\ns}

\title {Statistical Inference in the Differential Privacy Model}
\author {Huanyu Zhang}
\conferraldate {August}{2021}
\degreefield {Ph.D.}
\copyrightholder{Huanyu Zhang}
\copyrightyear{2021}

\begin{document}

\maketitle
\makecopyright

\begin{abstract}
In modern settings of data analysis, we may be running our algorithms on datasets that are sensitive in nature. However, classical machine learning and statistical algorithms were not designed with these risks in mind, and it has been demonstrated that they may reveal personal information. These concerns disincentivize individuals from providing their data, or even worse, encouraging intentionally providing fake data.
 
To assuage these concerns, we import the constraint of differential privacy to the statistical inference, considered by many to be the gold standard of data privacy. 
This thesis aims to quantify the cost of ensuring differential privacy, i.e., understanding how much additional data is required to perform data analysis with the constraint of differential privacy. Despite the maturity of the literature on differential privacy, there is still inadequate understanding in some of the most fundamental settings.

In particular, we make progress in the following problems:

\medskip
\begin{itemize}
\item What is the sample complexity of DP hypothesis testing?

\item Can we privately estimate distribution properties with a negligible cost?

\item What is the fundamental limit in private distribution estimation?

\item How can we design algorithms to privately estimate random graphs?

\item What is the trade-off between the sample complexity and the interactivity in private hypothesis selection?

\end{itemize}
\end{abstract}

\begin{biosketch}
Huanyu Zhang received his B.S. degree in Electronics Engineering from Peking University, Beijing, China, and the M.S. degree in Electrical and Computer Engineering from Cornell University, Ithaca, NY, USA, in 2016 and 2019, respectively.
His research interest lies broadly in machine learning and algorithms, especially in the areas of privacy-preserving data analysis. Specifically, he is interested in studying the tradeoffs between privacy and other resources when performing fundamental statistical tasks.
\end{biosketch}

%\begin{dedication}
%To my advisor, family, and friends.
%\end{dedication}

\begin{acknowledgements}

I am extremely fortunate to have Jayadev Acharya as my Ph.D. advisor, and I want to express my deepest gratitude to him for his support and help throughout my Ph.D. study. 
He has guided and inspired me in many ways: 
not only have I acquired all my academic skills from him on how to find and solve problems, write technical papers, and give academic presentations, but most importantly, I have learned how to be an enthusiastic and reliable researcher like him. I will never forget how serious he is about research: every time we submit a paper to Arxiv, he is always polishing it again and again, until we both get satisfied. 
He
has been very supportive of my decisions and thoughts, and tolerant of my mistakes. 
%He always encourage me to pursue the things I want, which are sometimes even less favorable to him. 
I will never forget he stayed up the whole night and changed every word of the introduction when we submitted the paper ``Differentially Private Assouad, Fano, and Le cam''. 
All the achievements in my graduate studies could not happen without his countless help, and I can never thank him enough! It is my honor to be his student.

Next, I want to thank Aaron Wagner and Siddhartha Banerjee for serving as my committee members and providing valuable feedback. I have learned a lot from their lectures and discussing with them. Special thanks to Aaron Wagner for his advising. I always wish I could be as knowledgeable and energetic as him.

I thank Gautam Kamath for being an amazing mentor, and friend throughout. I will never forget the Friday afternoon we spent in Seattle, when we finally figured out the algorithm of the adversarial selection.

I would like to thank all of my co-authors so far in my academic career: Jayadev Acharya, Sivakanth Gopi, Meisam Hejazinia, Gautam Kamath, Janardhan Kulkarni, Ilya Mironov, Aleksandar Nikolov, Ziteng Sun, Di Wang and Zhiwei Steven Wu. It is amazing to have so many smart people around! 
In particular, I want to thank Janardhan Kulkarni,  Meisam Hejazinia, and Ilya Mironov for their mentoring during my internships.

I am extremely fortunate to work with a wonderful group of colleagues and labmates at Cornell: Sourbh Bhadane, Boshuang Huang, Yu Gan, Saravanan Kandasamy, Yuhan Liu, Ziteng Sun, Chao Wang, Xu Xiao, and Chengrun Yang. I will never forget the fun times we had in Ithaca. It is them who made my journey as a Ph.D. student memorable.
Special thanks to my best friend, Ziteng Sun. I really enjoy our conversation, both academic and non-academic. He is bright and insightful, and his sharp idea always lights up my day.

Last but most importantly, I want to express my gratitude to my family, for their unconditional support and love. Thanks for raising me up, educating me, and respecting all my thoughts and decisions. Special Thanks to my beloved girlfriend, Naipeng Lin, for her trust, caring and support throughout my Ph.D. studies. There is clearly not enough space to convey my love towards them.

\end{acknowledgements}

\contentspage
\tablelistpage
\figurelistpage

\normalspacing \setcounter{page}{1} \pagenumbering{arabic}
\pagestyle{cornell} \addtolength{\parskip}{0.5\baselineskip}

\chapter{Introduction}
Statistical estimation is one of the most classical statistical problems, which asks the following question: given samples from an unknown probabilistic model,  can we estimate  some property of the underlying model? For example, given samples from an unknown Gaussian distribution, is there an accurate way to estimate its mean?

This question has received great attention in statistics, which can be traced back to the early 20th century~\cite{NeymanP33, Fisher92, LehmannR06}. Unfortunately, the classical studies are not entirely aligned with the needs of modern data science. Specifically, there are two stringent challenges:

\noindent\textbf{Non-asymptotic regime.} Previously, this problem was mostly studied in the asymptotic regime when the number of samples $\ns\to\infty$, where the empirical distribution is a good approximate of the true distribution. However, this assumption is not always satisfied in modern data science. For example, in estimating gene mutations, it is quite common that the domain size is extremely large, and data is ``scarce'' in comparison to the size of the domain. Carrying out statistical inference in such setting can be totally inaccurate, if the problem is studied in the asymptotic assumption.

Motivated by these new scenarios, there has been recently a lot of work from the computer science, information theory, and statistics community on various statistical inference problems in the non-asymptotic (small-sample) regime, where the domain size $k$ could be potentially larger than $\ns$ (see, e.g.,~\cite{BatuFRSW00, BatuFFKRW01, GoldreichR00, Paninski08, OrlitskySW16}). 
Instead of characterizing the asymptotic performance, the goal is to characterize the sample complexity, which is the minimum number of samples necessary (or equivalently, the minimax risk), as a function of the domain size $k$, and the other parameters.

\noindent\textbf{Privacy.} 
There is another challenge rising from modern data science: in several estimation tasks, individual samples have sensitive information that must be protected. This is particularly of concern in applications such as healthcare, finance, geo-location, etc. For example, in medical studies, the data may contain individuals' health records and whether they carry some disease which bears a social stigma. Alternatively, navigation apps suggest routes based on aggregate positions of individuals, which contains delicate information including users' residence data. However, classical statistical inference algorithms were not designed with this issue in mind, and it has been demonstrated that they may reveal personal information~\cite{HomerSRDTMPSNC08}. These concerns disincentivize individuals from providing their data, or even worse, encouraging intentionally providing fake data.

To preserve the privacy of sensitive data, a common strategy in practice is anonymization, which simply removes the sensitive information from the original data, such as name, race, and social security number. However, this strategy is far from enough, where the sensitive information can still be learned after that information is anonymized. One of the most famous examples is that in the late 2000s, Netflix ran a competition to develop a better film recommendation algorithm. To drive the competition, they released an ``anonymized''  version of the dataset that had removed obvious identifying information. Unfortunately, this scheme turned out to be insufficient. In~\cite{NarayananS08}, it was shown that when paired with a small amount of additional information, the released dataset could be used to re-identify specific users, and even predict their political affiliation.

Another strategy is to mediate data access through a trusted interface, which only answers specific queries from data analysts. However, it remains a non-trivial task to design such a system that can protect privacy. For example, a natural question is what kind of query is allowed, and what kind of query is prohibited. Clearly, the queries should not be allowed which target  specific persons (e.g., ``Does Bob smoke?''). However, even if every single query does not do so, a combination of queries can still be used to detect individual information (e.g., ``what is the average salary of the dataset?'' and ``what is the average salary of the people who are not Ph.D. students?''). 

%A possible solution to alleviate privacy issues is to design algorithms with differential privacy guarantees.
A possible solution to alleviate privacy issues is to design algorithms with privacy guarantees, where a natural problem to ask is how to define privacy.  
In 1977,  statistician Tore Dalenius proposed the following definition of data privacy~\cite{Dalenius77}: the attacker should know nothing new about each individual after the analysis. However,  the following example~\cite{Dwork08} shows that it is impossible to be achieved. We suppose revealing one's exact height is a privacy violation, and there is a dataset containing the heights of all the people in China. An adversary who has access to the dataset, and the prior that ``Alice's is two inches taller than an average Chinese'' easily learns Alice's height. Note that Alice is not even included in the dataset! This example shows that it is impossible to fully protect  privacy with no constraint of the attacker's prior knowledge.

Here we consider an alternative way to define privacy: the attacker should learn virtually nothing more about an individual than they would learn if that person’s record were absent from the dataset, which is exactly the motivation of differential privacy (DP)~\cite{DworkMNS06}. Informally, DP requires that the outputs of the algorithm are indistinguishable for two neighbouring datasets, which differ at exactly one record. In other words, it is impossible for an adversary to infer whether a specific individual is involved, thus protecting the privacy of the data providers. DP allows statistical inference while preserving the privacy of the individual samples, which has become one of the most {popular notions of privacy}~\cite{DworkMNS06, WassermanZ10, DworkRV10, BlumLR13, McSherryT07, DworkR14, KairouzOV17}. It has been adopted by the US Census Bureau for the 2020 census and several large technology companies, including Google, Apple, and Microsoft~\cite{ErlingssonPK14, AppleDP17, DingKY17}.

%Differential privacy (DP)~\cite{DworkMNS06}, which allows statistical inference while preserving the privacy of the individual samples, has become one of the most {popular notions of privacy}~\cite{DworkMNS06, WassermanZ10, DworkRV10, BlumLR13, McSherryT07, DworkR14, KairouzOV17}. Informally, DP requires that the outputs of the algorithm are indistinguishable for two neighbouring dataset, which differ at exactly one record. In other words, it is impossible for an adversary to infer whether a specific individual is involved, thus protecting the privacy of the data providers. DP has been adopted by the US Census Bureau for the 2020 census and several large technology companies, including Google, Apple, and Microsoft~\cite{ErlingssonPK14, AppleDP17, DingKY17}.

Combining these two challenges, we want to answer the following question:
\begin{center}
\textbf{What is the fundamental limit and how to design algorithms in DP statistical inference?}
\end{center}

Specifically, the objective of this thesis is to quantify the cost of ensuring differential privacy in statistical inference, i.e., investigating how the sample complexities change with the constraint of differential privacy. 

\section{Contributions}

In this section, we outline the contributions of the thesis.

\begin{enumerate}

\item In Chapter~~\ref{cha:toolbox}, we firstly establish a toolbox of proving lower bounds in DP statistical inference. Le Cam's method, Fano's inequality, and Assouad's lemma are three widely used techniques to prove lower bounds for statistical estimation tasks. In Chapter~\ref{cha:toolbox}, we propose their analogues under DP. Our new tools are simple, easy to apply and we use them to establish sample complexity lower bounds in several statistical inference tasks, which are illustrated in later chapters.

\item  In Chapter~\ref{cha:testing}, we study the problems of DP identity testing (goodness of fit), and closeness testing (two sample test) of distributions over $\ab$ elements, which are both fundamental problems in statistical inference. We derive upper and lower bounds on the sample complexity of both problems under DP. For both problems, our results are tight up to constant factors for all parameter ranges. The lower bounds are established through our private Le Cam's method.

\item In Chapter~\ref{cha:ins}, we develop DP methods for estimating various distributional properties.
Specifically, we prove almost-tight bounds on the sample complexity for this problem for several functionals of interest, including support size, support coverage, and entropy. We show that the cost of privacy is negligible in a variety of settings, both theoretically and experimentally. We establish the lower bounds by applying our private Le Cam's method.

\item In Chapter~\ref{cha:estimation}, we move to another important problem in statistical inference -- distribution estimation. We establish the optimal sample complexity of DP discrete distribution estimation under total variation distance and $\ell_2$ distance, and we provide lower bounds for several other distribution classes, including product distributions and Gaussian mixtures that are tight up to logarithmic factors. Our lower bounds can be viewed as applications of the private Fano's inequality, and the private Assouad's lemma.
%The technical component of our paper relates coupling between distributions to the sample complexity of estimation under differential privacy. 

\item In Chapter~\ref{cha:MRF},  we focus on estimating a more complicated class of distributions -- random graphs. Specifically, we consider the problem of learning Markov Random Fields (including the prototypical example, the Ising model) under the constraint of
 DP.  Our learning goals include both \emph{structure learning}, where we try to estimate the underlying graph structure of the model, as well as the harder goal of \emph{parameter learning}, in which we additionally estimate the parameter on each edge.  We provide algorithms and lower bounds for both problems under a variety of privacy constraints -- namely pure, concentrated, and approximate differential privacy. While non-privately, both learning goals enjoy roughly the same complexity, we show that this is not the case under differential privacy.  
%In particular, only structure learning under approximate differential privacy maintains the non-private logarithmic dependence on the dimensionality of the data, while a change in either the learning goal or the privacy notion would necessitate a polynomial dependence. 
As a result, we show that the privacy constraint imposes a strong separation between these two learning problems in the high-dimensional data regime. 
 
 \item In Chapter~\ref{cha:phs},  we initiate the study of hypothesis selection under local differential privacy, which can be viewed as a generalization of the classic problem of multi-way hypothesis testing. Absent privacy constraints, this problem requires $O(\log k)$ samples, where $k$ is the size of the hypothesis class. We first show that the constraint of local differential privacy incurs an exponential increase in cost: any algorithm for this problem requires at least $\Omega(k)$ samples. Second, for the special case of multi-way hypothesis testing, we provide a non-interactive algorithm which nearly matches this bound, requiring $\tilde O(k)$ samples. Finally, we provide sequentially interactive algorithms for the general case, requiring $\tilde O(k)$ samples and only $O(\log \log k)$ rounds of interactivity. Our algorithms for the general case are achieved through a reduction to maximum selection with adversarial comparators. For this problem, we provide a family of algorithms which are near-optimal in the trade-off between the error and the interaction.

\end{enumerate}
\section{Organization and Bibliographic Information}
In Section~\ref{sec:main_prelim}, we introduce notations, preliminaries and problem formulations that will be used in the rest of the thesis.

Chapter~\ref{cha:toolbox} establishes a toolbox of proving DP lower bounds for statistical inference tasks. This is based on the paper ``Differentially Private Assouad, Fano, and Le cam,'' which is joint work with Jayadev Acharya and Ziteng Sun, and appeared in the 31st International Conference on Algorithmic Learning Theory~\cite{AcharyaSZ21}.

Chapter~\ref{cha:testing} studies the problem of identity testing and closeness testing of discrete distributions in DP. Part of the chapter is based on the paper ``Differentially Private Testing of Identity and Closeness of Discrete Distributions,'' which is joint work with Jayadev Acharya and Ziteng Sun, and appeared in the Proceedings of the 32nd International Conference on Neural Information Processing Systems~\cite{AcharyaSZ18}.

Chapter~\ref{cha:ins} investigates the problem of privately estimation several distribution properties. This is based on the paper ``INSPECTRE: Privately Estimating the Unseen,'' which is joint work with Jayadev Acharya, Gautam Kamath, and Ziteng Sun, and appeared in the Proceedings of the 35th International Conference on Machine Learning~\cite{AcharyaKSZ18}. 

Chapter~\ref{cha:estimation} focuses on the problem of private distribution estimation. This is based on the paper ``Differentially Private Assouad, Fano, and Le cam,''  which is joint work with Jayadev Acharya and Ziteng Sun, and appeared in the 31st International Conference on Algorithmic Learning Theory~\cite{AcharyaSZ21}.

Chapter~\ref{cha:MRF} describes results of privately estimating Markov random fields. This is based on the paper ``Privately Learning Markov Random Fields,'' which is joint work with Gautam Kamath, Janardhan Kulkarni, and Zhiwei Steven Wu, and appeared in the Proceedings of the 37th International Conference on Machine Learning~\cite{ZhangKKW20}.

Chapter~\ref{cha:phs} considers the problem of private hypothesis selection in distributed setting. This is based on the paper ``Locally Private Hypothesis Selection,'' which is joint work with Sivakanth Gopi, Gautam Kamath, Janardhan Kulkarni, Aleksandar Nikolov, and Zhiwei Steven Wu, and appeared in the 33rd Annual Conference on Learning Theory~\cite{GopiKKNWZ20}.

Other papers by the author over his PhD studies, but not in this thesis include~\cite{AcharyaSZ18a, WangZGX21,ZhangMH21,KamathLZ21,AcharyaSZ21-2}.
\section{Preliminaries and Notation}
\label{sec:main_prelim}
Let $\Delta_k$ be the class of all discrete distributions over a domain of size $\ab$, which wlog is assumed to be $[\ab]\ed\{1\upto\ab\}$. Let $X\sim\p$ denote that the random variable $X$ has distribution $\p$. We denote length-$\ns$ samples $X_1\upto X_\ns$ by $\Xon$. For $\smb\in[\ab]$, let $\psmb$ or $\p(\smb)$ be the probability of observing element $\smb$ under $\p$. The choice will be clear from the context. For $A\subseteq[\ab]$, let $\p(A) = \sum_{\smb\in A} \psmb$.  We use $\Mltsmb{\smb}{\Xon}$ to denote the number of times $\smb$ appears in $\Xon$. 

\subsection{Privacy Preliminaries}

We first introduce the definition of $(\eps,\delta)$-differential privacy.
\begin{definition}
A randomized algorithm $\cA$ on a set $\cX^\ns\to \cS$ is said to be $(\eps,\delta)$-differentially private if for any $S\subset \text{range}(\cA)$, and all pairs of $\xon$, and $\yon$ with $\ham{\xon}{\yon}\le 1$ \newhz{such that $\probof{\cA(\xon)\in S}\le e^{\eps}\cdot\probof{\cA(\yon)\in S}+\delta$.}
\end{definition}
The case when $\delta=0$ is called \emph{pure differential privacy}.  For simplicity, we denote pure differential privacy as $\eps$-differential privacy ($\eps$-DP). 

Next we show two properties of differential privacy. The first is the post-processing inequality, which says that differential privacy is immune to post-processing. 
\begin{lemma}
Let  $\cA$ be a randomized algorithm  $\cA: \cX^\ns\to \cS$ which is $(\eps,\delta)$-differentially private, and let $f:\cS \rightarrow \cS^\prime$ be an arbitrary randomized mapping. $f \circ \cA$ is $(\eps,\delta)$-differentially private.
\end{lemma}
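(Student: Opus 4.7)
The plan is to first reduce the randomized post-processing to the deterministic case, then directly invoke the $(\eps,\delta)$-DP guarantee of $\cA$ on an appropriate preimage set. This is a textbook-style argument; the only subtlety is in handling the independent randomness of $f$.

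First, I would handle the case where $f$ is a deterministic mapping. Fix any two neighboring inputs $\xon, \yon \in \cX^\ns$ with $\ham{\xon}{\yon}\le 1$, and any measurable $S'\subseteq \cS'$. Since $f$ is deterministic, the preimage $T\eqdef f^{-1}(S') = \{s\in\cS : f(s)\in S'\}$ is a well-defined subset of $\cS$, and the events $\{f(\cA(\xon))\in S'\}$ and $\{\cA(\xon)\in T\}$ coincide. Applying the $(\eps,\delta)$-DP guarantee of $\cA$ to the set $T$ yields
\[
\probof{f(\cA(\xon))\in S'} = \probof{\cA(\xon)\in T} \le e^\eps\cdot\probof{\cA(\yon)\in T} + \delta = e^\eps\cdot\probof{f(\cA(\yon))\in S'} + \delta,
\]
establishing the claim when $f$ is deterministic.

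Next, I would lift this to the general randomized case by conditioning on the internal randomness of $f$. Write $f$ as a family $\{f_r\}_{r}$ of deterministic mappings indexed by a random seed $r$ drawn from some distribution $\mu$ independent of $\cA$. Then for any measurable $S'\subseteq\cS'$, total probability gives
\[
\probof{f(\cA(\xon))\in S'} = \int \probof{f_r(\cA(\xon))\in S'}\, d\mu(r).
\]
For each fixed $r$, the deterministic case bounds the integrand by $e^\eps\cdot\probof{f_r(\cA(\yon))\in S'} + \delta$. Integrating the inequality against $\mu$ preserves the bound, since $e^\eps$ and $\delta$ are constants and $\mu$ is a probability measure, yielding
\[
\probof{f(\cA(\xon))\in S'} \le e^\eps\cdot\probof{f(\cA(\yon))\in S'} + \delta,
\]
which is exactly the $(\eps,\delta)$-DP condition for $f\circ\cA$.

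The main conceptual point, which is not really an obstacle but worth flagging, is that the post-processing map $f$ must be chosen independently of the dataset; this is implicit in the statement since $f$ is a mapping on $\cS$ alone. Given this independence, the mixture argument above goes through without any issue, and no additional assumption on the structure of $\cS$, $\cS'$, or $f$ is required beyond measurability.
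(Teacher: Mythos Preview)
Your proof is correct and follows essentially the same approach as the paper: prove the deterministic case via the preimage $f^{-1}(S')$, then handle randomized $f$ by decomposing it as a mixture of deterministic maps. Your treatment of the randomized case is actually more explicit than the paper's, which simply asserts that a convex combination of DP mechanisms is DP.
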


\begin{proof}
We prove the proposition for a deterministic function $f$. The result then follows because any randomized mapping can be decomposed into a convex combination of deterministic functions, and a convex combination of differentially private mechanisms is differentially private.

For any neighbouring datasets $\xon$ and $\yon$, and a fixed event $T \subset \cS^\prime$. Let $R = \{r \in \cS: f(r) \in T\}$. We then have
\begin{align*}
\probof{f(\cA(\xon))\in T} &= \probof{\cA(\xon)\in R} \\
&\le e^\eps \cdot  \probof{\cA(\yon)\in R} +\delta \\
&= e^\eps \cdot  \probof{ f(\cA(\yon))\in T} +\delta. 
\end{align*}
\end{proof}

Then we introduce another important property of differential privacy -- group property. Informally, it describes the relationship between the outputs of a DP algorithm on two datasets, which differ in at most $t$ records.
\newhz{
\begin{lemma} \label{lem:group}
	Let $\alg$ be a $(\eps,\delta)$-DP algorithm, then for sequences $\xon$, and $\yon$ with $\ham{\xon}{\yon} \le t$, and $\forall S \subset \text{range}(\cA)$, 
$		\probof{\cA(\xon)\in S}\le e^{t\eps}\cdot\probof{\cA(\yon)\in S}+ \delta t e^{\eps (t-1)}$. 
\end{lemma}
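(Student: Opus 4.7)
The plan is to prove the lemma by induction on $t$, using the definition of $(\eps,\delta)$-DP as the base case and ``walking'' from $\xon$ to $\yon$ one record at a time.

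For the base case $t=1$, the claimed inequality $\probof{\cA(\xon)\in S}\le e^{\eps}\probof{\cA(\yon)\in S}+\delta$ is exactly the definition of $(\eps,\delta)$-differential privacy, and $\delta \cdot 1 \cdot e^{\eps(1-1)} = \delta$ matches the stated bound.

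For the inductive step, assume the statement holds for $t-1$. Given $\xon,\yon$ with $\ham{\xon}{\yon}\le t$, I would introduce an intermediate sequence $\zon$ obtained from $\xon$ by changing a single record so that $\ham{\xon}{\zon}\le 1$ and $\ham{\zon}{\yon}\le t-1$. Applying the definition of DP to the pair $(\xon,\zon)$ and the inductive hypothesis to the pair $(\zon,\yon)$ yields
\begin{align*}
\probof{\cA(\xon)\in S}
&\le e^{\eps}\probof{\cA(\zon)\in S}+\delta\\
&\le e^{\eps}\Bigl(e^{(t-1)\eps}\probof{\cA(\yon)\in S}+\delta(t-1)e^{\eps(t-2)}\Bigr)+\delta\\
&= e^{t\eps}\probof{\cA(\yon)\in S}+\delta(t-1)e^{\eps(t-1)}+\delta.
\end{align*}

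The remaining task is a routine manipulation of the additive term: I would bound $\delta(t-1)e^{\eps(t-1)}+\delta \le \delta t e^{\eps(t-1)}$, which reduces to $\delta \le \delta e^{\eps(t-1)}$, and holds since $\eps,t\ge 0$ (so $e^{\eps(t-1)}\ge 1$ when $t\ge 1$). Plugging this back completes the induction. I do not anticipate any real obstacle; the only mild subtlety is being careful that the slack in the additive term telescopes to exactly $\delta t e^{\eps(t-1)}$ rather than something larger, which the calculation above confirms.
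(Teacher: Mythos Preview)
Your proposal is correct and follows essentially the same approach as the paper: both construct a chain of intermediate sequences differing by one record each and apply the $(\eps,\delta)$-DP definition repeatedly, with the paper unrolling the chain explicitly (obtaining $\delta\sum_{i=0}^{\hat D-1}e^{i\eps}\le \delta \hat D e^{(\hat D-1)\eps}$) while you phrase the same telescoping argument as a formal induction.
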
}

\begin{proof}
Let $\ham{\xon}{\yon}= \hat D$. When $\hat D = 0$ or $1$, the lemma is trivially true.  Then, suppose $\hat D \ge 2$ we can find $\hat D-1$ sequences  $z^\ns_1, \ldots, z^\ns_{\hat D-1}$ over $\cX^\ns$ with $\ham{\xon}{z^\ns_1} =1, \ham{z^\ns_{\hat D - 1} }{\yon} = 1$ and $\ham{z^\ns_i}{z^\ns_{i+1}} = 1$ for $i \in \{1,2,...,\hat D - 2\}$. 
Hence, by the condition of $(\eps,\delta)$ - differential privacy,
\begin{align}
	\probof{\alg\Paren{\xon}=\q} & \leq  e^{\eps} \probof{\alg\Paren{z^\ns_{1}}=\q} + \delta 
	\leq e^{\eps} (e^{\eps} \probof{\alg\Paren{z^\ns_{2}}=\q} + \delta) + \delta \le \ldots \nonumber\\
	&\leq  e^{\hat D\eps} \probof{\alg\Paren{\yon}=\q}+ \delta\cdot\sum_{i = 0}^{\hat D - 1} e^{i\eps} 
	\leq  e^{\hat D\eps} \probof{\alg\Paren{\yon}=\q}  + \delta \hat D e^{(\hat D-1) \eps } \nonumber \\ 
	%\frac{e^{\eps \hat D} - 1}{e^{\eps} - 1} \nonumber \\
	&\leq  e^{t\eps} \probof{\alg\Paren{\yon}=\q}  + \delta t e^{\eps(t-1)}. \nonumber
%	\leq  e^{t\eps} \probof{\alg\Paren{\xon}=\q}  + \delta t e^{\eps t} \nonumber
\end{align}
\end{proof}

In their original paper,~\cite{DworkMNS06} provides a famous scheme for achieving differential privacy, known as the Laplace mechanism. We first define the sensitivity, and then introduce their scheme. Informally, the sensitivity measures the maximum divergence between the outputs of the algorithm on two neighboring datasets.
\begin{definition}
\label{def:sensitivity} The \emph{sensitivity} of $f:\cX^{\ns}\rightarrow \RR$ is
\begin{align*}
	\Delta_{\ns,f}\ed\max_{\ham{\xon}{\yon}\le1} \absv{f(\xon)-f(\yon)}.
\end{align*}
\end{definition}

Next we introduce the Laplace mechanism. This method adds Laplace noise to a non-private output in order to make it private.
 \begin{lemma}[Laplace mechanism~\cite{DworkMNS06}]
 \label{lem:laplace_mechanism}
 For any $\eps \ge 0$, and $f: \cX^\ns \rightarrow \mathbb{R}$, $\cA(\Xon) = f(\Xon)+ Laplace \Paren{\frac{\Delta_{\ns,f}}{\eps}}$ satisfies $(\eps,0)$-DP.
 \end{lemma}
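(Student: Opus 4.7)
The plan is to verify the $(\eps, 0)$-DP guarantee directly from the definition by bounding the pointwise ratio of output densities on neighboring inputs. Fix neighboring datasets $\xon$ and $\yon$ with $\ham{\xon}{\yon} \le 1$, and recall that a $\mathrm{Laplace}(b)$ random variable with $b = \Delta_{\ns,f}/\eps$ has density $\frac{1}{2b} e^{-|z|/b}$ on $\mathbb{R}$. Since $\cA(\Xon)$ is a deterministic shift of this noise by $f(\Xon)$, the density of $\cA(\xon)$ at a point $s$ is $\frac{1}{2b} e^{-|s - f(\xon)|/b}$, and likewise for $\yon$.

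The first step is therefore to form the ratio of the two densities at an arbitrary $s$, which simplifies to $\exp\left(\frac{|s - f(\yon)| - |s - f(\xon)|}{b}\right)$. I would then apply the reverse triangle inequality to bound the numerator of the exponent by $|f(\xon) - f(\yon)|$, and use the sensitivity bound from Definition~\ref{def:sensitivity} to conclude $|f(\xon) - f(\yon)| \le \Delta_{\ns, f}$. Substituting $b = \Delta_{\ns,f}/\eps$ yields that the density ratio is at most $e^\eps$ pointwise.

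Finally, I would lift this pointwise density bound to the set-based definition of $(\eps, 0)$-DP. For any measurable $S \subseteq \mathbb{R}$, integrating the inequality $p_{\cA(\xon)}(s) \le e^\eps \cdot p_{\cA(\yon)}(s)$ over $s \in S$ gives $\Pr[\cA(\xon) \in S] \le e^\eps \cdot \Pr[\cA(\yon) \in S]$, which is the $(\eps, 0)$-DP condition with $\delta = 0$.

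There is no substantive obstacle here; the only subtlety is being careful that Laplace noise is defined on all of $\mathbb{R}$ (so the pointwise bound is well-defined everywhere) and that the argument works symmetrically in $\xon$ and $\yon$, so the same bound automatically applies in the reverse direction. Everything else is an essentially mechanical chain of the reverse triangle inequality, the sensitivity bound, and the choice of scale parameter $b = \Delta_{\ns,f}/\eps$.
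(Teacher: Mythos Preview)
Your proposal is correct and is exactly the standard argument for the Laplace mechanism. The paper itself does not prove this lemma; it simply states it as a known result with a citation to~\cite{DworkMNS06}, so there is nothing to compare against beyond noting that your argument matches the classical proof.
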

 
 \subsection{Problem Formulation}
 \label{sec:preliminary_problem_formulation}
In this section, we formally define the framework of DP statistical inference.

\noindent\textbf{Setting.} Let $\cP$ be \emph{{any}} collection of distributions over $\cX^\ns$, {where $\ns$ denotes the number of samples.}\footnote{In the general setting, we are not assuming i.i.d. distribution over $\cX^\ns$, although we will specialize to this case later.} {Let $\theta:\cP\to\Theta$ be a parameter of the distribution that we want to estimate}. {Let $\ell:\Theta\times\Theta\to\RR_+$ be a pseudo-metric which is the loss function for estimating $\theta$.
\medskip

\noindent\textbf{Statistical inference.} The risk of an estimator $\hat{\theta}:\cX^\ns\to\Theta$ under loss $\ell$ is $\max_{\p \in \cP} \EE_{\Xon \sim \p}\left[\ell(\hat \theta (\Xon),\theta(p))\right]$, {the worst case expected loss of $\hat\theta$ over $\cP$. Note that $\Xon\in\cX^\ns$, since $\p$ is a distribution over $\cX^\ns$. The minimax risk of estimation under $\ell$ for the class $\cP$ is}
\begin{align*}
R(\cP, \ell) := \min_{\hat \theta}\ \max_{\p \in \cP}\ \EE_{\Xon \sim \p}\left[\ell(\hat \theta (\Xon),\theta(p))\right].
\end{align*}

In this thesis, we study the the minimax risk under differentially private protocols, which is given by restricting $\hat\theta$ to be differentially private. For $(\eps, \delta)$-DP, we study the following minimax risk:
\begin{align}
R(\cP, \ell, \eps, \delta) := \min_{\hat \theta \text{ is }(\eps,\delta)\text{-DP}}\ \max_{\p \in \cP}\ \EE_{\Xon \sim \p}\left[\ell(\hat \theta (\Xon),\theta(p))\right].\label{eqn:minmax-risk}
\end{align}
For $\delta=0$, the above minimax risk under $\eps$-DP is denoted as $R(\cP, \ell, \eps)$. 

There are several specific problems we explore in the thesis.

\begin{enumerate}

\item \noindent\textbf{Hypothesis testing.} {Let $\cP_1\subset \cP$, and $\cP_2\subset \cP$ be two disjoint subsets of distributions denoting the two hypothesis classes. Let $\Theta=\{1,2\}$, such that for $\p\in\cP_i$, let $\theta(p)=i$. For a test $\hat\theta:\cX^n\to \{1,2\}$, and $\ell(\theta, \theta')=\II\{\theta\ne\theta'\}=|\theta-\theta'|$,} {the error probability is the worst case risk under this loss function:}
\begin{align*}
P_e(\hat\theta, \cP_1, \cP_2):= \max_i \max_{\p\in \cP_i}\probofsub{ 
\newnewzs{\Xon\sim \p}}{\hat\theta(\Xon)\ne i } = \max_i \max_{\p\in 
\cP_i}\EE_{\Xon\sim \p} 
\left[{|\hat\theta(\Xon)-\theta(p)}|\right].%\label{eqn:error-prob}
\end{align*}
We explore this problem in Chapter~\ref{cha:testing}.

\item \noindent\textbf{Distribution property estimation.} Let $\cQ$ be a collection of distributions over $\cX$, and for this $\cQ$, let $\cP=\cQ^{\ns}:=\{q^\ns:q\in\cQ\}$ be the collection of $n$-fold distributions over $\cX^\ns$ induced by i.i.d. draws from a distribution over $\cQ$. Let $f: \cQ \rightarrow \mathbb{R}$ be the property of interest, and the parameter space be $\Theta=\{ f(q): q \in \cQ  \}$. For a tester $\hat{\theta}: \cX^\ns \rightarrow \mathbb{R}$, the loss function is defined as $\ell (\theta, \hat{\theta}) = |\theta - \hat{\theta}|$. Let $\alpha>0$ be a fixed parameter. The sample complexity,  $S_{\tt PE}(\cQ,\alpha, \eps, \delta)$ is the smallest number of samples $n$ to make $R(\cQ^\ns, \ell, \eps, \delta)\le \alpha$, \ie
\[
S_{\tt PE}(\cQ,\alpha, \eps, \delta)= \min \{n: R(\cQ^\ns, \ell, \eps, \delta)\le \alpha\}.
\]
When $\delta = 0$, we denote the sample complexity by $S_{\tt PE}(\cQ, \alpha, \eps)$. 

We explore this problem in Chapter~\ref{cha:ins}.

\item \noindent\textbf{Distribution estimation.} Let $\cQ$ be a collection of distributions over $\cX$, and for this $\cQ$, let $\cP=\cQ^{\ns}:=\{q^\ns:q\in\cQ\}$ be the collection of $n$-fold distributions over $\cX^\ns$ induced by i.i.d. draws from a distribution over $\cQ$. The parameter space is $\Theta=\cQ$, where $\theta(\q^n)=\q$, and $\ell$ is a distance measure between distributions in $\cQ$. 
Let $\alpha>0$ be a fixed parameter. The sample complexity,  $S_{\tt DE}(\cQ, \ell, \alpha, \eps, \delta)$ is the smallest number of samples $n$ to make $R(\cQ^\ns, \ell, \eps, \delta)\le \alpha$, \ie
\[
S_{\tt DE}(\cQ, \ell, \alpha, \eps, \delta)= \min \{n: R(\cQ^\ns, \ell, \eps, \delta)\le \alpha\}.
\]
When $\delta = 0$, we denote the sample complexity by $S_{\tt DE}(\cQ, \ell, \alpha, \eps)$. 

This problem is explored in Chapter~\ref{cha:estimation}, and~\ref{cha:MRF}.

\end{enumerate}
 
 \subsection{Measures of Distance}

We firstly introduce several measures of distance between distributions, which are heavily used in this thesis.

\begin{definition}
The \emph{total variation} distance between distributions $\p$ and $\q$ over $[\ab]$ is 
\[
\dtv{\p}{\q} \ed \sup_{A\subset[\ab]}\{ \p(A) - \q(A)\} = \frac12\|\p-\q\|_1.
\]
Note that this is equivalent to the half of the $\ell_1$ distance between $\p$ and $\q$.
\end{definition}

\begin{definition}
The $KL$ divergence between distributions $\p$, and $\q$ over $[\ab]$ is 
\[
D_{KL}\Paren{\p,\q}:=\sum_{\smb\in [\ab]}p(\smb)\log \frac{p(\smb)}{q(\smb)}.
\]
This definition uses the convention that $0\log 0 = 0$.
\end{definition}

\begin{definition}
The $\chi^2$-distance (or chi-squared distance) between $\p$, and $\q$ over $[\ab]$ is 
\[
D_{\chi^2}\Paren{\p,\q}:=\sum_{\smb\in [\ab]} \frac{\Paren{p(\smb)-q(\smb)}^2}{q(\smb)}.
\]
\end{definition}

\begin{definition}
The $\ell_2$-distance between distributions $\p$, and $\q$ over $[\ab]$ is 
\[
D_{\ell_2}\Paren{\p,\q}:=\sqrt{ \sum_{\smb \in [\ab]} \Paren{\p(\smb)-\q(\smb)}^2} = \|\p-\q\|_2.
\]
\end{definition}

We have the following relationships between these distance measures. The following lemma, which is known as Pinsker's inequality, reveals the relationship between the total variation distance, the $KL$ divergence, and the $\chi^2$-distance.

\begin{lemma}[Pinsker's inequality]
\label{lem:distance_relationship}
Let  $\p$ and $\q$ be distributions over $[\ab]$,
\[
2 \cdot \dtv{\p}{\q} \le  \sqrt {D_{KL} \Paren{\p,\q} }\le \sqrt{ D_{\chi^2}\Paren{\p,\q}}.
\]
\end{lemma}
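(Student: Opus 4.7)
The plan is to split the chain into its two independent inequalities and handle them separately: first the easier upper bound $D_{KL}(\p,\q) \le D_{\chi^2}(\p,\q)$, and then the classical Pinsker inequality $4\, d_{TV}(\p,\q)^2 \le D_{KL}(\p,\q)$. Taking square roots at the end assembles the claim.

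For $D_{KL} \le D_{\chi^2}$, the key inequality is $\log x \le x - 1$ for all $x > 0$. Applying this pointwise with $x = \p(\smb)/\q(\smb)$ yields $\log(\p(\smb)/\q(\smb)) \le \p(\smb)/\q(\smb) - 1$, so
\begin{align*}
D_{KL}(\p,\q) = \sum_{\smb} \p(\smb)\log\frac{\p(\smb)}{\q(\smb)} \le \sum_{\smb} \p(\smb)\Paren{\frac{\p(\smb)}{\q(\smb)}-1} = \sum_{\smb}\frac{\p(\smb)^2}{\q(\smb)} - 1.
\end{align*}
A one-line algebraic expansion shows $\sum_{\smb}(\p(\smb)-\q(\smb))^2/\q(\smb) = \sum_{\smb}\p(\smb)^2/\q(\smb) - 1$, so the right side equals $D_{\chi^2}(\p,\q)$, which gives the second inequality.

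For Pinsker, I would reduce to the binary case via the data processing inequality. Let $A = \{\smb : \p(\smb) \ge \q(\smb)\}$, so that $d_{TV}(\p,\q) = \p(A) - \q(A)$. Write $P = \p(A)$ and $Q = \q(A)$, and consider the Bernoulli distributions $(P,1-P)$ and $(Q,1-Q)$ obtained from $\p,\q$ by the (deterministic) test $\II\{\smb \in A\}$. Since KL divergence is non-increasing under this deterministic map, it suffices to prove the scalar inequality
\begin{align*}
d(P\|Q) := P\log\frac{P}{Q} + (1-P)\log\frac{1-P}{1-Q} \ge 2(P-Q)^2.
\end{align*}
The main (and really only) obstacle is this one-dimensional bound; I would fix $P \in [0,1]$ and define $g(Q) = d(P\|Q) - 2(P-Q)^2$. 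One checks $g(P) = 0$ and
\begin{align*}
g'(Q) = (Q-P)\Paren{\frac{1}{Q(1-Q)} - 4},
\end{align*}
and since $Q(1-Q) \le 1/4$ the second factor is non-negative on $(0,1)$. Thus $g'(Q)$ has the same sign as $Q - P$, so $Q = P$ is the global minimum and $g(Q) \ge 0$ throughout. Chaining back through the data processing inequality gives $D_{KL}(\p,\q) \ge 4\, d_{TV}(\p,\q)^2$, and taking square roots completes the proof.
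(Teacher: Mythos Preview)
The paper does not prove this lemma; it is stated without proof as a standard preliminary result. Your approach is the textbook one and is correct in substance: $\log x \le x-1$ gives $D_{KL}\le D_{\chi^2}$, and reducing Pinsker to the binary case via the data processing inequality followed by the one-variable calculus argument is exactly the standard route.

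One point to flag on the constant. Your calculus correctly establishes $d(P\|Q)\ge 2(P-Q)^2$, and since $|P-Q|=d_{TV}(\p,\q)$ for the induced Bernoulli pair, this yields $D_{KL}(\p,\q)\ge 2\,d_{TV}(\p,\q)^2$, i.e., $\sqrt{2}\,d_{TV}(\p,\q)\le\sqrt{D_{KL}(\p,\q)}$. Your concluding line ``$D_{KL}(\p,\q)\ge 4\,d_{TV}(\p,\q)^2$'' does not follow from the preceding step and is in fact false in general: for $\p=(1/2,1/2)$ and $\q=(1/2+\delta,1/2-\delta)$ one has $d_{TV}(\p,\q)=\delta$ and $D_{KL}(\p,\q)=-\tfrac{1}{2}\ln(1-4\delta^2)=2\delta^2+O(\delta^4)$, so $D_{KL}/d_{TV}^2\to 2$, not $4$. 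In other words, the lemma as printed carries the wrong leading constant (it should be $\sqrt{2}$ rather than $2$, assuming natural logarithm in $D_{KL}$); your argument actually proves the sharp form of Pinsker, and the mismatch in your last line is an artifact of trying to hit the stated---but unattainable---constant.
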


The next lemma, which follows from Cauchy-Schwarz, tells the relationship between the total variation distance and the $\ell_2$-distance.
\begin{lemma}
\label{lem:distance_relationship2}
Let  $\p$ and $\q$ be distributions over $[\ab]$,
\[
D_{\ell_2}\Paren{\p,\q} \le 2 \cdot \dtv{\p}{\q} \le  \sqrt{k} D_{\ell_2}\Paren{\p,\q}.
\]
\end{lemma}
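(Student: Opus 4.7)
The plan is to reduce both inequalities to standard comparisons between the $\ell_1$ and $\ell_2$ norms on $\RR^\ab$. Recall that $2\cdot\dtv{\p}{\q} = \|\p-\q\|_1$ and $D_{\ell_2}(\p,\q) = \|\p-\q\|_2$ by definition. Setting $v = \p-\q \in \RR^\ab$, the claim becomes $\|v\|_2 \le \|v\|_1 \le \sqrt{\ab}\cdot\|v\|_2$, which holds for every vector in $\RR^\ab$ and so in particular for differences of probability vectors.

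For the left inequality, I would square and expand:
\[
\|v\|_1^2 \;=\; \Paren{\sum_{i=1}^{\ab} |v_i|}^2 \;=\; \sum_{i=1}^{\ab} v_i^2 + 2\sum_{i<j} |v_i|\,|v_j| \;\ge\; \sum_{i=1}^{\ab} v_i^2 \;=\; \|v\|_2^2,
\]
since all cross terms are nonnegative. Taking square roots gives $D_{\ell_2}(\p,\q) \le 2\cdot\dtv{\p}{\q}$.

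For the right inequality, I would invoke Cauchy--Schwarz (as the hint before the lemma suggests) with the vectors $(|v_1|,\ldots,|v_\ab|)$ and $(1,\ldots,1)$:
\[
\|v\|_1 \;=\; \sum_{i=1}^{\ab} 1 \cdot |v_i| \;\le\; \sqrt{\sum_{i=1}^{\ab} 1^2}\cdot\sqrt{\sum_{i=1}^{\ab} v_i^2} \;=\; \sqrt{\ab}\cdot\|v\|_2,
\]
which yields $2\cdot\dtv{\p}{\q} \le \sqrt{\ab}\cdot D_{\ell_2}(\p,\q)$.

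There is no real obstacle here since both bounds are textbook norm inequalities; the only thing worth noting is that both are tight in the worst case (the left bound is tight when $v$ is supported on a single coordinate, and the right bound is tight when $|v_i|$ is constant across coordinates), so the $\sqrt{\ab}$ factor cannot be improved without exploiting structure beyond that of arbitrary probability vectors.
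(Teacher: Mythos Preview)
Your proof is correct and matches the paper's approach: the paper does not spell out a proof but simply states the lemma ``follows from Cauchy--Schwarz,'' and your argument does exactly that, reducing the claim to the standard norm inequalities $\|v\|_2 \le \|v\|_1 \le \sqrt{\ab}\,\|v\|_2$ on $\RR^\ab$.
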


Finally, we introduce the Hamming distance, which measures the distance between two sequences of samples.

\begin{definition}
The \emph{Hamming distance} between two sequences $\Xon$ and $\Yon$ is 
$\ham{\Xon}{\Yon} \ed \sum_{i=1}^{\ns} \II\{{X_i\ne Y_i}\},$
 the number of positions where $\Xon$, and $\Yon$ differ. 
\end{definition}

\chapter{A Toolbox of Proving Lower Bounds}
\label{cha:toolbox}
\section{Introduction}

Statistical estimation tasks are often characterized by the optimal trade-off between the sample size and estimation error. Generally speaking, there are two steps in establishing tight sample complexity bounds: An information-theoretic lower bound on sample complexity and an algorithmic upper bound that achieves it.
Several works
have developed general tools to obtain the lower bounds (e.g.,~\cite{LeCam73, Assouad83, IbragimovK13, BickelR88, Devroye87, HanV94, CoverT06, ScarlettC19}, and references therein), {and three prominent techniques are Le Cam's method, Fano's inequality, and Assouad's lemma.} Le Cam's method is used to establish lower bounds for hypothesis testing and functional estimation. Fano's inequality, and Assouad's lemma prove {lower bounds} for multiple hypothesis testing problems and can be applied to parameter estimation tasks such as estimating distributions. An exposition of {these three methods and their connections} is presented in~\cite{Yu97}\footnote{The title of~\cite{Yu97}, ``Assouad, Fano, and Le Cam'' is the inspiration for our title.}. 

In this chapter, we propose their analogs under differential privacy, which will be frequently used to establish lower bounds in later chapters. We firstly introduce the following observation, which is the motivation of our new results.

\subsection{An Observation}

{
We recall the definition of coupling.
\begin{definition}
A \emph{coupling} between distributions $\p_1$ and $\p_2$ over 
$\cX^\ns$ is a joint distribution $(\Xon,\Yon)$ over $\cX^\ns \times\cX^\ns$ 
whose marginals satisfy $\Xon \sim \p_1$ and $\Yon \sim 
\p_2$\footnote{\newnewzs{We use the term coupling to refer to both the random 
variable $(\Xon,\Yon)$ and the joint distribution.}}. 
\end{definition}

We remark that coupling can be viewed as a randomized function $f: \cX^\ns\to\cX^\ns$ such that if $\Xon\sim\p$, then $\Yon=f(\Xon)\sim\q$. Note that $\Xon$, and $\Yon$ are not necessarily independent. 
\begin{example}
\label{exm:coin}
Let $\Bern{b_1}$ be Bernoulli distributions with bias $b_1$. Let $\p_1$, and $\p_2$ be distributions over $\{0,1\}^\ns$ obtained by $\ns$ \emph{i.i.d.} samples from $\Bern{0.5}$, and $\Bern{0.5+\dist}$ respectively, with $0<\dist<0.5$. Let $\Xon$ be distributed according to $\p$. A sequence $\Yon$ can be generated as follows: If $X_i=1$, then $Y_i=1$. If $X_i=0$, we flip another coin with bias $2\dist$, and let $Y_i$ be the output of this coin. Repeat the process independently for each $i$, such that the $Y_i$'s are all independent of each other. Then $\probof{Y_i=1} = 0.5 +(1-0.5) \cdot 2\dist = 0.5+\dist$, and $\Yon$ are distributed according to $\p_2$.
\end{example}

Our lower bounds are based on the following observation. If there is a coupling $(\Xon,\Yon)$ between distributions $\p_1$ and $\p_2$ over $\cX^\ns$ with $\expectation{\ham{\Xon}{\Yon}}=D$, then a draw from $\p_1$ can be converted to a draw from $\p_2$ by changing $D$ coordinates in expectation. 
%We then apply~\eqref{eqn:dp-definition} for any DP algorithm that distinguishes $p_1$ and $p_2$ about $D$ times.}
\newzs{By the group property of differential privacy (Lemma~\ref{lem:group}), roughly speaking, for 
any $(\eps,\delta)$-DP estimator $\hat{\theta}$, it must satisfy $$\forall 
S\subseteq \Theta, \probofsub{ 
		\newnewzs{\Xon\sim \p_1}}{\hat\theta(\Xon) \in S} \le e^{D\eps}\cdot 
		\probofsub{ 
		\newnewzs{\Yon\sim \p_2}}{\hat\theta(\Yon) \in S}+ \delta D e^{\eps(D-1)}.$$
	Hence, if there exists an algorithm that distinguishes between $\p_1$ and $\p_2$ reliably, $D$ must be large, i.e., $D = \Omega\Paren{\frac1{\eps+\delta}}$.}

\subsection{Organization}

In Section~\ref{sec:dp-lecam},~\ref{sec:dp-fano}, and~\ref{sec:dp-assouad} we state the privatized versions of Le Cam, Fano, and Assouad's method respectively. In Section~\ref{sec:related}, we compare them with the classic lower bound tools in differential privacy. Finally, we present their proofs in Section~\ref{sec:proofs}.

\section{DP Le Cam's Method}
\label{sec:dp-lecam}
\newhz{
Le Cam's method (Lemma 1 of~\cite{Yu97}) is widely used to prove lower bounds for composite hypothesis testings such as uniformity testing~\cite{Paninski08}, density estimation~\cite{Yu97, ray2016cam}, and estimating functionals of distributions~\cite{JiaoVHW15, WuY16, polyanskiy2019dualizing}.}
%\amargin{Find more applications ... check this paper to add ... Dualizing le cam by Yihong wu and yury}  

%We use the expected Hamming distance between couplings of distributions in the two classes to obtain the following extension of Le Cam's method with $(\eps, \delta)$-DP. \anew{Coupling based lower bounds for DP appeared independently in~\cite{AcharyaSZ18, KarwaV18}.~\cite{KarwaV18} established their results for simple hypothesis testing where $\p_1$ and $\p_2$ below are i.i.d. distributions. The result below is similar to the result in~\cite{AcharyaSZ18} which was stated for arbitrary distributions over $\cX^\ns$ and used to establish sample complexity lower bounds for composite hypothesis testing problems.} For the hypothesis testing problem described above, let $\text{co}(\cP_i)$ be the convex hull of distributions in $\cP_i$, which are also families of distributions over $\cX^\ns$.

We use the expected Hamming distance between couplings of distributions in the two classes to obtain the following extension of Le Cam's method with $(\eps, \delta)$-DP. For the hypothesis testing problem described above, let $\text{co}(\cP_i)$ be the convex hull of distributions in $\cP_i$, which are also families of distributions over $\cX^\ns$.

\begin{restatable}[$(\eps, \delta)$-DP Le Cam's method]{theorem}{dplecam}
	\label{thm:le_cam}
	Let $\p_1\in\text{co}(\cP_1)$ and $\p_2\in\text{co}(\cP_2)$. Let $(\Xon,\Yon)$ be a coupling between $\p_1$ and $\p_2$ with $D=\expectation{\ham{\Xon}{\Yon}}$. Then for $\eps\ge0, \delta\ge0$, any $(\eps,\delta)$-differentially private  hypothesis testing algorithm $\hat{\theta}$ must satisfy 
	\begin{align}
	P_e(\hat{\theta}, \cP_1, \cP_2)  \ge  \frac12 \max \left\{  1-\dtv{p_1}{p_2}, 0.9 e^{-10 \eps D} - 10 D \delta\right\},\label{eqn:le-cam}
	\end{align}
	{where $\dtv{p_1}{p_2} := \sup_{A \subseteq \cX^\ns} 
	\Paren{p_1(A)-p_2(A)}=\frac12\|\p_1 -  \p_2\|_1$ is the total 
	variation (TV) distance of $p_1$ and $p_2$.}
\end{restatable}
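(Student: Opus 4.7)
The plan is to prove the two lower bounds inside the maximum separately and then combine them. I would first reduce to the case where the adversary places the data-generating distributions at $p_1 \in \text{co}(\cP_1)$ and $p_2 \in \text{co}(\cP_2)$: by linearity of $p \mapsto \probofsub{\Xon \sim p}{\hat\theta(\Xon) \ne i}$, the worst-case risk over $\cP_i$ is at least the risk at any element of its convex hull. So $P_e(\hat\theta, \cP_1, \cP_2) \ge \max_i \probofsub{\Xon \sim p_i}{\hat\theta(\Xon) \ne i}$. Denote $a \ed \probofsub{\Xon \sim p_1}{\hat\theta(\Xon) = 2}$ and $b \ed \probofsub{\Yon \sim p_2}{\hat\theta(\Yon) = 1}$, so that $P_e \ge \max\{a,b\} \ge (a+b)/2$. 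It then suffices to lower-bound $a + b$ by each of the two quantities inside the maximum in~\eqref{eqn:le-cam}.

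For the classical term $1 - \dtv{p_1}{p_2}$, I would follow the textbook route: by data processing, the TV distance between the output distributions of $\hat\theta$ under the two inputs is at most $\dtv{p_1}{p_2}$, and the standard identity $\inf_{\hat\theta}(a+b) = 1 - \dtv{p_1}{p_2}$ gives $a + b \ge 1 - \dtv{p_1}{p_2}$. This part does not use privacy at all.

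For the DP term $0.9\, e^{-10 \eps D} - 10 D \delta$, I would use the coupling $(\Xon,\Yon)$ together with Markov's inequality. Let $E = \{\ham{\Xon}{\Yon} \le 10 D\}$; since $\EE[\ham{\Xon}{\Yon}] = D$, Markov gives $\probof{\bar E} \le 1/10$. Applying the group-privacy bound (Lemma~\ref{lem:group}) pointwise to each realization $(\xon,\yon) \in E$ with parameter $t = 10D$, and then averaging over the coupling, I would obtain, for any output set $S$,
\begin{align*}
\probofsub{\Xon \sim p_1}{\hat\theta(\Xon) \in S} \le e^{10\eps D}\, \probofsub{\Yon \sim p_2}{\hat\theta(\Yon) \in S} + 10 D \delta\, e^{\eps(10D - 1)} + \probof{\bar E}.
\end{align*}
Specializing to $S = \{1\}$ (so that the left side equals $1 - a$) yields $1 - a \le e^{10 \eps D} b + 10 D \delta\, e^{\eps(10D-1)} + 1/10$. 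Rearranging gives $a + e^{10 \eps D} b \ge 9/10 - 10 D \delta\, e^{\eps(10D-1)}$, and multiplying through by $e^{-10 \eps D}$, using $e^{-\eps} \le 1$ and $a\, e^{-10\eps D} \le a$, delivers $a + b \ge 0.9\, e^{-10 \eps D} - 10 D \delta$. The factor $1/2$ in $P_e \ge (a+b)/2$ then finishes the bound.

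The main obstacle will be handling the randomness of $\ham{\Xon}{\Yon}$: the group-privacy lemma requires a deterministic bound on the Hamming distance, whereas the coupling only controls its expectation. Truncating at $10D$ via Markov fixes this, but at the cost of introducing the multiplicative constant $0.9$ (from $\probof{E} \ge 9/10$) and the factor $10$ inside the exponent; a sharper choice of the truncation threshold would shift the tradeoff between these constants but not change the qualitative $e^{-\Theta(\eps D)}$ rate. Once the truncation step is in place, the rest is routine algebra.
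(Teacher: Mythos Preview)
Your proposal is correct and follows essentially the same route as the paper: reduce to bounding $a+b$, invoke classical Le Cam for the first term, and for the DP term truncate the coupling's Hamming distance at $10D$ via Markov and then apply group privacy (Lemma~\ref{lem:group}) pointwise on the good event. The only cosmetic difference is that the paper symmetrizes (derives the analogous inequality with the roles of $p_1,p_2$ swapped and adds the two), whereas you use a single direction and then the monotonicity $a \ge a\,e^{-10\eps D}$; both yield the same bound.
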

%\begin{theorem}[$(\eps, \delta)$-DP Le Cam's method]
%\label{thm:le_cam}
%Let $\p_1\in\text{co}(\cP_1)$ and $\p_2\in\text{co}(\cP_2)$. Let $(X,Y)$ be a coupling between $\p_1$ and $\p_2$ with $D=\expectation{\ham{X}{Y})}$. Then for $\eps\ge0, \delta\ge0$, any $(\eps,\delta)$-differentially private binary hypothesis testing algorithm $\cA$ must satisfy 
%\begin{align*}
%	P_e(\cA, \cP_1, \cP_2)  \ge  \frac12 \max \big\{  1-\dtv{p_1}{p_2}, \left(0.9 e^{-10 \eps D} - 10 D \delta\right) \big\}.
%\end{align*}
%\end{theorem}
The first term here is the original Le Cam's result~\cite{LeCam73, LeCam86, Yu97, Canonne15} and the second term is a lower bound on the additional error due to privacy. \newzs{Note that the second term increases when $D$ decreases. Choosing $\p_1, \p_2$ with small $D$ makes the RHS of~\eqref{eqn:le-cam} large, hence giving better testing lower bounds.} {A similar result (Theorem 1 in~\cite{AcharyaSZ18})}, along with a suitable coupling was used in~\cite{AcharyaSZ18} to obtain the optimal sample {complexity of testing discrete distributions}. We defer the proof of this theorem to Section~\ref{sec:le}.
%{~\cite{AcharyaSZ18} used their result with a suitable coupling to obtain the optimal sample complexity of the composite hypothesis testing problem of testing discrete distributions (see Section~\ref{??}). \anew{An independent result of~\cite{KarwaV18} obtained a similar result for testing simple hypotheses, where $\p_1$ and $\p_2$ are both i.i.d. distributions over $\cX^\ns$, and $D$ is }
%\medskip

%\noindent\textbf{Applications.} The \newhz{private} Le Cam's method has found usage in several applications. Theorem~\ref{thm:le_cam} can be used to establish optimal sample complexity for testing uniformity of discrete distributions~\cite{AcharyaSZ18, AliakbarpourDR18}, and for establishing lower bounds of differentially private entropy and support size estimation~\cite{AcharyaKSZ18}. The key component to use this result is to design couplings with small \newhz{expected} Hamming distance. 

\section{DP Fano's Inequality}
\label{sec:dp-fano}
\newzs{Theorem~\ref{thm:le_cam} (DP Le Cam's method) characterizes lower bounds for binary hypothesis testing. In estimation problems with multiple parameters, it is common to reduce the problem to a multi-way hypothesis testing problem. The following theorem, proved in Section~\ref{sec:fano_pro}, provides a lower bound on the risk of multi-way hypothesis testing under $\eps$-DP.}

\begin{restatable} [$\eps$-DP Fano's inequality] {theorem}{dpfano}
	\label{thm:dp_fano}
	Let $\cV=\{\p_1, \p_2,...,\p_M\}\subseteq \cP$ such that {for all $i\ne j$,} 
	\begin{enumerate}[label=(\alph*)]
		\item  $\ell \Paren{\theta(\p_i),\theta(\p_j)} \ge \alpha$,
		\item $D_{KL} \Paren{\p_i,\p_j} \le \beta$,
		\item \label{item:dham} there exists a coupling $(\Xon, \Yon)$ between $\p_i$ and $\p_j$ such that $\expectation{\ham{\Xon}{\Yon}} \le D$, then
	\end{enumerate}
	\begin{align} \label{eqn:fano_result}
		R(\cP, \ell, \eps) \ge \max \Bigg\{ & \frac{\alpha}{2} \left(1 - \frac{\beta + \log 2}{\log M}\right), 0.4\alpha \min\left\{1, \frac{M}{e^{10\eps D}}\right\} \Bigg\}. 
	\end{align}
\end{restatable}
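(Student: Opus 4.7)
The bound to prove is a maximum of two terms: the first is the classical (non-private) Fano term, and the second is the term arising from the $\eps$-DP constraint. Since both lower bounds hold, it suffices to prove each separately.

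My plan is to first perform the standard reduction from estimation to multi-way hypothesis testing. Let $V$ be uniform on $[M]$ and, given an estimator $\hat\theta$, define the test $\hat V(\Xon) \in \arg\min_{i} \ell(\hat\theta(\Xon),\theta(\p_i))$. Condition (a) together with the (pseudo-)triangle inequality for $\ell$ gives $\ell(\hat\theta(\Xon),\theta(\p_V)) \ge \alpha/2$ on the event $\{\hat V \ne V\}$, so that $R(\cP,\ell,\eps) \ge \tfrac{\alpha}{2}\,\Pr[\hat V \ne V]$. The first term of \eqref{eqn:fano_result} then follows from classical Fano's inequality applied to $V \to \Xon \to \hat V$, using condition (b) and convexity of KL to bound $I(V;\Xon) \le \beta$. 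This part is completely standard and I would only state it, referring to \cite{Yu97}.

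For the DP term, let $\mu_i$ denote the output distribution of $\hat\theta$ when $\Xon \sim \p_i$, and let $S_i := \{s : \hat V(s) = i\}$, so that $\{S_i\}$ partitions the output space and $\Pr[\hat V = V] = \tfrac{1}{M}\sum_i \mu_i(S_i)$. The key step is to prove the pairwise comparison
\begin{equation*}
\mu_i(S) \;\le\; e^{10\eps D}\,\mu_j(S) + \tfrac{1}{10}, \qquad \forall S, \ \forall i \ne j.
\end{equation*}
To obtain this, I would use the coupling $(\Xon,\Yon)$ from condition (c). Markov's inequality gives $\Pr[\ham{\Xon}{\Yon} > 10D] \le 1/10$, and the group-privacy property (Lemma~\ref{lem:group} with $\delta = 0$) applied pointwise on the event $\{\ham{\Xon}{\Yon} \le 10D\}$ gives $Q(S\mid \Xon) \le e^{10\eps D} Q(S\mid \Yon)$, where $Q(\cdot\mid\cdot)$ is the algorithm's output kernel. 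Splitting $\mu_i(S) = \E[Q(S\mid\Xon)]$ over the coupling according to whether $\ham{\Xon}{\Yon} \le 10D$ then yields the displayed inequality.

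The final step is a packing-style averaging argument. Setting $S = S_i$, averaging the inequality over $j \ne i$, summing over $i$, and using $\sum_{i,j}\mu_j(S_i) = M$, I would obtain a bound of the form
\begin{equation*}
\sum_i \mu_i(S_i) \;\le\; \frac{e^{10\eps D}\,M + M(M-1)/10}{M - 1 + e^{10\eps D}},
\end{equation*}
which gives $\Pr[\hat V \ne V] \ge \tfrac{9(M-1)/10}{M - 1 + e^{10\eps D}}$. Splitting into the two regimes $e^{10\eps D} \le M$ and $e^{10\eps D} > M$ and absorbing constants then produces the claimed $0.4\,\alpha\,\min\{1, M/e^{10\eps D}\}$ lower bound. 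The main obstacle I anticipate is bookkeeping the constants cleanly so that the final bound matches $0.4$ in both regimes; this is a minor calibration of the Markov threshold (here taken as $10D$) rather than a substantive difficulty.
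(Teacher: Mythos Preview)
Your proposal is correct and matches the paper's approach essentially step for step: the same reduction to multi-way testing via $\hat V = \arg\min_i \ell(\hat\theta,\theta(\p_i))$, the same invocation of classical Fano for the first term, and the same coupling + Markov (threshold $10D$) + group privacy combination for the DP term, leading to the identical quantity $\tfrac{0.9(M-1)}{M-1+e^{10\eps D}}$. The only cosmetic difference is that the paper applies the group-privacy inequality in the reverse direction (lower bounding $\Pr_{\p_i}[\hat V=j]$ in terms of $1-\beta_j$) rather than upper bounding $\mu_i(S_i)$ in terms of $\mu_j(S_i)$ as you do; the two directions produce the same final inequality after summation.
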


\newzs{Under pure DP constraints, Theorem~\ref{thm:dp_fano} extends Theorem~\ref{thm:le_cam} to the multiple hypothesis case.} Non-private Fano's inequality (e.g., Lemma 3 of~\cite{Yu97}) requires only conditions $(a)$ and $(b)$ and provides the first term of the risk bound above. {Now, if we consider the second term, which is the additional cost due to privacy, we would require $\exp(10\eps D)\ge M$, \ie $D \ge {\log M}/{(10\eps)}$ to achieve a risk less than $0.4\alpha$.} Therefore, for reliable estimation, the expected Hamming distance between any pair of distributions cannot be too small. 

Theorem~\ref{thm:dp_fano} ($\eps$-DP Fano's inequality) can also be seen as a probabilistic generalization of the classic packing lower bound~\cite{Vadhan17}. The packing argument, with its roots in database theory, considers inputs to be deterministic datasets whose pairwise Hamming distances are bounded with probability one, while Theorem~\ref{thm:dp_fano} considers randomly generated datasets whose Hamming distances are bounded in expectation. This difference makes Theorem~\ref{thm:dp_fano} better suited for proving lower bounds for statistical estimation problems. We discuss this difference in details in Section~\ref{sec:related}.

\smallskip
\noindent\textbf{Remark.} {Theorem~\ref{thm:dp_fano} is a bound on the risk for pure differential privacy ($\delta=0$). Our proof extends to $(\eps,\delta)$-DP for \newhz{$\delta = O \Paren{\frac1M}$},  which is not sufficient to establish meaningful bounds since in most problems $M$ will be chosen to be exponential in the problem parameters. To circumvent this difficulty, in the next section we provide a private analogue of Assouad's method, which also works for $(\eps,\delta)$-DP.}

%\noindent\textbf{Remark.} In Corollary~\ref{coro:fano} we present a corollary of this theorem that is easily applicable to establish sample complexity lower bounds for distribution estimation problems under $\eps$-DP.

%\medskip

%\noindent\textbf{Applications.} We use this theorem to establish the optimal sample complexity of estimating $k$-ary discrete distributions under total variation distance (Theorem~\ref{thm:pure-dv}), $k$-ary discrete distributions under $\ell_2$ distance (Theorem~\ref{thm:pure-l2}), and improved \newhz{sample complexity} lower bounds for product distributions (Theorem~\ref{thm:main_product}) and Gaussian mixtures (Theorem~\ref{thm:main_Gaussian}). 

\section{DP Assouad's Method}
\label{sec:dp-assouad}
{Our next result is a private version of Assouad's lemma (Lemma 2 of~\cite{Yu97}, and~\cite{Assouad83}). Recall that $\cP$ is a set of distributions over $\cX^\ns$. Let $\cV\subseteq\cP$ be} a set of distributions indexed by the hypercube $ \cE_{\ab} := \{\pm 1\}^{\ab}$, and the loss  $\ell$ is such that
\begin{align}
\forall u,v \in \cE_{\ab}, \ell(\theta(\p_u), \theta(\p_v)) \ge 2\tau \cdot \sum_{i=1}^{\ab} \mathbb{I}\Paren{u_i\neq v_i}.
\label{eqn:assouad-loss}
\end{align}

{Assouad's method provides a lower bound on the estimation risk for distributions in $\cV$, which is a lower bound for $\cP$. For each coordinate $i\in[k]$, consider the following mixture distributions obtained by averaging over all distributions with a fixed value at the $i$th coordinate,}
\[
	\p_{+i}  =  \frac{2}{|\cE_{k}|} \sum_{e \in \cE_{k}: e_i= + 1}\p_e, ~~~ \p_{-i}  =  \frac{2}{|\cE_{k}|} \sum_{e \in \cE_{k}: e_i= -1}\p_e.
\]
Assouad's lemma provides a lower bound on the risk by using~\eqref{eqn:assouad-loss} and considering the problem of distinguishing $\p_{+i}$ and $\p_{-i}$. 
{Analogously, we prove the following privatized version of Assouad's lemma by considering the minimax risk of a private hypothesis testing $\phi:\cX^\ns\to\{-1,+1\}$ between $\p_{+i}$ and $\p_{-i}$. The detailed proof is in Section~\ref{sec:private_assouad}.
}
%In particular, we consider the minimax risk of a hypothesis test $\phi:\cX^\ns\to\{-1,+1\}$ between $\p_{+i}$ and $\p_{-i}$ to prove the following privatized version of Assouad's lemma, and prove it in Section~\ref{sec:private_assouad}. 
\begin{restatable} [DP Assouad's method] {theorem} {dpassouad}
	\label{thm:assouad} 
{$\forall i \in [k]$, let $\phi_i:\cX^\ns\to\{-1,+1\}$ be a binary classifier.%\amargin{I dont understand how we can have $\phi$ inside the summation below, and define one $\phi$ above.
	\begin{align*}
	R(\cP, \ell, \eps, \delta) \ge \frac{\tau}{2} \cdot \sum_{i=1}^{\ab} \min_{\phi_i \text{ is $(\eps, \delta)$-DP}}   ( \probofsub{ \Xon\sim \p_{+i} }{ \phi_i(\Xon) \neq 1 } 
	+ \probofsub{\Xon\sim \p_{-i} }{\phi_i(\Xon) \neq -1 } ). \nonumber
	\end{align*}
	Moreover, if $\forall i\in [\ab]$, there exists a coupling $(\Xon,\Yon)$ between $\p_{+i}$ and $\p_{-i}$ with $\expectation{\ham {\Xon} {\Yon}} \le D$,
	\begin{equation} \label{eqn:assouad}
	R(\cP, \ell, \eps, \delta) \ge \frac{\ab \tau}{2} \cdot \Paren{0.9 e^{-10 \eps D} - 10 D \delta}. 
	\end{equation} }
%	\begin{align*}
%	R(\cP, \ell, \eps, \delta) \ge R_1:= \frac{\tau}{2} \cdot \sum_{i=1}^{\ab} \min_{\phi_i \text{ is $(\eps, \delta)$-DP}}   ( \probofsub{ X\sim \p_{+i} }{ \phi_i(X) \neq 1 } 
%	+ \probofsub{X\sim \p_{-i} }{\phi_i(X) \neq -1 } ). \nonumber
%	\end{align*}
%Moreover, if $\forall i\in [\ab]$, there exists a coupling $(X,Y)$ between $\p_{+i}$ and $\p_{-i}$ with $\expectation{\ham {X} {Y}} \le D$,
%	\begin{equation} \label{eqn:assouad}
%	R(\cP, \ell, \eps, \delta) \ge \max\left\{R_1, \frac{\ab \tau}{2} \cdot \Paren{0.9 e^{-10 \eps D} - 10 D \delta}\right\}.
%	\end{equation}
\end{restatable}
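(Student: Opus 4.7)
The proof has two parts, which I will handle in sequence. The first is the standard Assouad reduction (Lemma 2 of~\cite{Yu97}), with post-processing as the sole privacy ingredient; the second is a coordinate-wise invocation of the coupling plus group-privacy trick already used for Theorem~\ref{thm:le_cam}.

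For the reduction, I start with any $(\eps,\delta)$-DP estimator $\hat\theta$ and define the ``nearest-vertex'' map $\hat e(\Xon) := \arg\min_{e \in \cE_{\ab}} \ell(\hat\theta(\Xon), \theta(\p_e))$. Since $\ell$ is a pseudo-metric, the triangle inequality together with the minimality of $\hat e$ yields $\ell(\theta(\p_{\hat e(\Xon)}), \theta(\p_u)) \le 2\ell(\hat\theta(\Xon), \theta(\p_u))$, and the separation condition~\eqref{eqn:assouad-loss} then gives $\ell(\hat\theta(\Xon), \theta(\p_u)) \ge \tau \ham{\hat e(\Xon)}{u}$. Averaging over $u$ uniform in $\cE_{\ab}$ and regrouping the per-coordinate indicators produces the mixtures $\p_{+i}$ and $\p_{-i}$, so that
\[
R(\cP, \ell, \eps, \delta) \ge \frac{\tau}{2} \sum_{i=1}^{\ab} \min_{\phi_i\,(\eps,\delta)\text{-DP}} \Big( \proboff{\phi_i(\Xon) \ne 1}{\Xon \sim \p_{+i}} + \proboff{\phi_i(\Xon) \ne -1}{\Xon \sim \p_{-i}} \Big),
\]
because $\phi_i := \hat e_i$ is a post-processing of $\hat\theta$ and is therefore $(\eps,\delta)$-DP. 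This yields the first inequality.

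For each coordinate $i$ I then lower-bound the testing error via the promised coupling. Let $(\Xon, \Yon)$ denote the coupling of $\p_{+i}$ and $\p_{-i}$ with $\EE[\ham{\Xon}{\Yon}] \le D$, and set $E := \{\ham{\Xon}{\Yon} \le 10D\}$; Markov's inequality gives $\Pr[E] \ge 0.9$. For any fixed $(x,y)$ with $\ham{x}{y} \le 10D$, applying the group property (Lemma~\ref{lem:group}) with $t = 10D$ and rearranging yields $\Pr[\phi_i(y) = +1] \ge e^{-10\eps D}\Pr[\phi_i(x) = +1] - 10D\delta$. Substituting $\Pr[\phi_i(x) = +1] = 1 - \Pr[\phi_i(x) = -1]$ and using $1 - e^{-10\eps D} \ge 0$ gives $\Pr[\phi_i(x) = -1] + \Pr[\phi_i(y) = +1] \ge e^{-10\eps D} - 10D\delta$. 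Taking expectation over the coupling restricted to $E$ and dropping the nonnegative $E^c$ contribution lower-bounds the testing error by $\Pr[E](e^{-10\eps D} - 10D\delta) \ge 0.9\,e^{-10\eps D} - 10D\delta$. Summing over $i$ produces~\eqref{eqn:assouad}.

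The main care lies in this last step: the group property must be invoked only on the event $E$, so that its $t$ parameter stays bounded at $10D$ rather than growing with $\ns$; and the substitution turning $\Pr[\phi_i(x)=+1]$ into $1-\Pr[\phi_i(x)=-1]$ must be done \emph{before} averaging over the coupling, so that the Markov factor $0.9$ attaches only to $e^{-10\eps D}$ in the stated bound and not also to $10D\delta$. Everything else is a routine combination of the two ingredients.
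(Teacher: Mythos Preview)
Your proposal is correct and follows essentially the same route as the paper: the first part is the standard Assouad reduction via the nearest-vertex estimator and post-processing, and the second part is the coupling-plus-group-privacy argument. The only cosmetic difference is that the paper dispatches the second part by invoking Theorem~\ref{thm:le_cam} (whose proof already shows $\beta_1+\beta_2 \ge 0.9e^{-10\eps D}-10D\delta$), whereas you re-derive that bound inline with a one-sided application of Lemma~\ref{lem:group}.
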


\noindent The first bound is the classic Assouad's Lemma and~\eqref{eqn:assouad} is the loss due to privacy constraints. Once again note that~\eqref{eqn:assouad} grows with decreasing $D$. \newzs{Compared to Theorem~\ref{thm:dp_fano} (DP Fano's inequality), Theorem~\ref{thm:assouad} works under $(\eps, \delta)$-DP, which is a less stringent privacy notion.}
%\medskip
%\noindent\textbf{Applications.} We use Theorem~\ref{thm:assouad} to establish optimal sample complexity for $k$-ary discrete distributions under total variation distance (Theorem~\ref{thm:ApproximateDiscreteTotalVariation}), and $\ell_2$ distance (Theorem~\ref{thm:ApproximateDiscreteL2}). \newhz{say something about product}

\section{Related and Prior Work}
\label{sec:related}
Several methods have been proposed in the literature to prove lower bounds {under DP constraints}. These include packing argument~\cite{HardtT10,Vadhan17}, fingerprinting~\cite{BunNSV15, SteinkeU17a, SteinkeU15, BunSU17, BunUV18, KamathLSU18} and coupling based arguments~\cite{AcharyaSZ18, KarwaV18}.

\smallskip
\noindent\textit{Binary Testing and Coupling.} Coupling based arguments have been recently used to prove lower bounds for binary hypothesis testing, including the independent works of~\cite{AcharyaSZ18, KarwaV18}.~\cite{AcharyaSZ18} establishes a very similar result to Theorem~\ref{thm:le_cam} and uses it to obtain lower bounds for a composite hypothesis testing problem on discrete distributions.~\cite{KarwaV18} proves a similar result for simple hypothesis testing and uses it to lower bound the sample complexity of estimating the mean of a one-dimensional Gaussian distribution. %In~\cite{KarwaV18}, they construct two different Gaussian distributions with mean $\mu_0$ and $\mu_1$, and point out that the probability of $\mu_1$ in the estimated confidence interval of $\mu_0$ must be connected with $\absv{\mu_0-\mu_1}$ by a coupling argument. 
For both papers, the coupling argument implies that it is hard to 
differentially privately distinguish between two distributions, supposing 
there exists a coupling with small expected Hamming distance. This 
method can be viewed as another form of private Le Cam's method 
(Theorem~\ref{thm:le_cam}) and it can only be applied where binary 
hypothesis testing is involved. \newnewzs{\cite{barber2014privacy} also 
uses a private version of Le 
Cam's method to prove lower bounds for differentially private mean 
estimation. However, instead of the expected Hamming distance between 
any couplings, their method only depends on the TV distance between the 
distributions, which corresponds to the naive independent coupling.}
\newzs{\cite{CanonneKMSU19} uses coupling bounds in~\cite{AcharyaSZ18} to derive instance-optimal bounds for simple binary hypothesis testing under pure DP. They consider a coupling only for symbols whose likelihood ratio between the two hypothesis distributions is large, which results in better bounds for certain instances. The argument only considers pure DP and the case where samples are i.i.d. generated while Theorem~\ref{thm:le_cam} and~\cite{AcharyaSZ18} can handle approximate DP and arbitrary distributions (e.g. mixtures of i.i.d. distributions) .}
%\snote{Should we state a bit more in detail why it is stronger?}
%\anote{Take a stab at it Ziteng, and I can look again. I think it does not hurt to explain the strength.}
\smallskip

%\newhzn{
%	\smallskip
%	\noindent\textbf{Remark.} Theorem~\ref{thm:dp_fano} can be viewed as a probabilistic version of the classic packing lower bound:
%	
%	\begin{lemma}[Packing lower bound~\cite{Vadhan17}] \label{thm:packing}
%		Let $\cV=\{x_1, x_2,...,x_M\}$ be a set of $M$ datasets over $\cX^n$.  For any pair of datasets $x_i$ and $x_j$, we have $\ham{x_i}{x_j} \le d$. Let $ \{S_{i}\}_{i \in [M]}$ be a collection of disjoint subsets of $\cS$. If there exists an $\eps$-DP algorithm $\cA : \cX^{\ns} \to \cS$ such that for every $i \in [M]$, $\probof{\cA(x_i) \in S_{i}}\ge \frac{9}{10}$, then
%		$$\eps = \Omega \Paren{\frac{\log M}{d}}.$$
%	\end{lemma}
%	
%	Note that $D \le d$ if $d$ is the worst case Hamming distance. This lemma, with its roots in database theory, only considers the worst case Hamming distance between deterministic datasets. However, in the statistical setting, this lemma is too pessimistic. Suppose two data sequences $X$ and $Y$ are generated from two distributions $\p_1$ and $\p_2$, in the worst case, they can be completely different with hamming distance $d$. But if $\p_1$ and $\p_2$ are close in TV distance, their expected hamming distance can be much closer. This lemma fails to take advantage of the fact that the samples are generated from the same distribution to gain a sharper rate.
%}

\noindent\textit{Pure DP Estimation and Packing.} Packing argument~\cite{HardtT10, Vadhan17} is a geometric approach to prove lower bounds for estimation under pure DP. We state a form of the packing bound below:
\begin{lemma}[Packing lower bound~\cite{Vadhan17}] \label{thm:packing}
	Let $\cV=\{x_1, x_2,...,x_M\}$ be a set of $M$ datasets over $\cX^n$.  For any pair of datasets $x_i$ and $x_j$, we have $\ham{x_i}{x_j} \le d$. Let $ \{S_{i}\}_{i \in [M]}$ be a collection of disjoint subsets of $\cS$. If there exists an $\eps$-DP algorithm $\cA : \cX^{\ns} \to \cS$ such that for every $i \in [M]$, $\probof{\cA(x_i) \in S_{i}}\ge {9}/{10}$, then
	\[
	\eps = \Omega \Paren{\frac{\log M}{d}}.
	\]
\end{lemma}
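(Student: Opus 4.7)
The plan is to reduce the packing lower bound to a single application of the group property of pure differential privacy (Lemma~\ref{lem:group} with $\delta=0$), combined with a pigeonhole argument exploiting the disjointness of the sets $S_i$. The intuition is that from $\cA$'s perspective the $M$ datasets $x_1,\dots,x_M$ all look ``close'' — their pairwise Hamming distance is at most $d$ — so $\cA(x_1)$ is forced to put nontrivial mass on every $S_j$ simultaneously, which constrains $M$ unless $e^{d\eps}$ is large.

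First I would fix a reference dataset, say $x_1$, and apply group privacy to each pair $(x_1,x_j)$. Since $\cA$ is $\eps$-DP with $\delta=0$ and $\ham{x_1}{x_j}\le d$, Lemma~\ref{lem:group} yields
$$\probof{\cA(x_j)\in S_j} \;\le\; e^{d\eps}\cdot \probof{\cA(x_1)\in S_j}.$$
Combining with the accuracy hypothesis $\probof{\cA(x_j)\in S_j}\ge 9/10$, this rearranges to $\probof{\cA(x_1)\in S_j}\ge (9/10)\,e^{-d\eps}$ for every $j\in[M]$.

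Second, I would sum these $M$ lower bounds and invoke the disjointness of $S_1,\dots,S_M$, which forces $\sum_{j=1}^M \probof{\cA(x_1)\in S_j}\le 1$. Putting the two bounds together yields $1 \ge (9M/10)\,e^{-d\eps}$, i.e. $d\eps \ge \log(9M/10)$, hence $\eps = \Omega(\log M / d)$ as claimed (the additive $\log(10/9)$ is absorbed in $\Omega(\cdot)$ for $M$ large).

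The argument presents no serious obstacle; the one design decision worth flagging is using a single reference $x_1$ rather than playing arbitrary pairs $(x_i,x_j)$ against each other, which would require extra combinatorial bookkeeping and yield no improvement. It is also worth noting that the proof critically uses $\delta=0$: for $\delta>0$ the additive error $\delta t e^{\eps(t-1)}$ in Lemma~\ref{lem:group} blows up when summed across $M$ hypotheses, which is precisely why the packing technique does not extend cleanly to approximate DP — motivating the private analogues of Assouad and Fano developed earlier in the chapter, which do accommodate nonzero $\delta$.
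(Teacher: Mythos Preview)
Your proof is correct and is precisely the standard packing argument. The paper does not give its own proof of this lemma---it is stated as prior work cited to~\cite{Vadhan17}---so there is nothing to compare against directly; the paper does remark, however, that the lemma can alternatively be recovered from Theorem~\ref{thm:dp_fano} ($\eps$-DP Fano's inequality) by taking the distributions $\p_i$ to be point masses on the datasets $x_i$, which is a more indirect route than yours.
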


Our $\eps$-DP Fano's inequality (Theorem~\ref{thm:dp_fano}) can be viewed as a probabilistic packing argument which generalizes Lemma~\ref{thm:packing} to the case where $\cV$ consists of distributions over $\cX^n$ instead of deterministic datasets. The distances between distributions are measured in the minimum expected hamming distance between random datasets generated from a coupling between the distributions. Lemma~\ref{thm:packing} can be obtained from $\eps$-DP Fano's inequality by setting the distributions to be point masses over $\cX^n$. 

Note that $d$ in Lemma~\ref{thm:packing} is an upper bound on the worst-case Hamming distance while $D$ is a bound on the expected Hamming distance and therefore $D \le d$. In statistical applications where $D \ll d$, we can obtain stronger lower bounds by replacing $d$ with $D$. For example, in the $\ab$-ary distribution estimation problem, a naive application of the packing argument can only give a lower bound of $\ns = \Omega\Paren{{\ab \log \Paren{1/\dist}}/{\eps}}$ instead of the optimal $\ns = \Omega\Paren{{\ab}/{\dist \eps}}$ lower bound, where there is an exponential gap in the parameter $1/\dist$.

\smallskip

\noindent\textit{Approximate DP and Fingerprinting.} Fingerprinting~\cite{SteinkeU15, BunNSV15, DworkSSUV15, SteinkeU17a,  BunSU17, BunUV18, KamathLSU18, CaiWZ19} is a versatile lower bounding method for $(\eps,\delta)$-DP for $\delta=O(1/\ns)$. It has been used to prove lower bounds for several problems, including attribute mean estimation in databases~\cite{SteinkeU17a}, lower bounds on the number of online statistical queries~\cite{BunSU17}, and private selection problem~\cite{SteinkeU17b}.~\cite{KamathLSU18} uses fingerprinting to prove lower bounds on estimating Bernoulli product distributions and Gaussian distributions. We believe fingerprinting and DP Assouad's lemma are both powerful tools for proving lower bounds under approximate DP. In estimating Gaussian distributions, fingerprinting provides strong lower bounds under approximate DP, whereas private Assouad's method gives an additional polynomial blow-up compared to fingerprinting. 
%However, fingerprinting lower bound fails to be applied when the underlying distribution is a $\ab$-ary distribution.
However, for discrete distribution estimation, private Assouad's method provides tight lower bounds, and we do not know how to obtain such bounds from the fingerprinting lemma.

%Furthermore, private Assouad's method is a much more simple and user-friendly method. We consider these two methods as incomparable, and 
%So we think private Assouad's method is a valuable addition to the present toolkit of proving lower bounds under approximate DP, which is notoriously difficult.
%As a side remark, Fingerprinting lower bounds hold for  $(\eps, \delta)$-DP when $\delta=O(1/n)$. This is the interesting case for designing privacy-preserving schemes since otherwise one could blatantly disregard privacy of $\delta\ns$ users, which is a constant. However, our results using private Assouad's method (Theorem~\ref{thm:assouad}) shows that even with large $\delta$ (as large as $\eps$), the same lower bounds hold as with $\delta =O(1/\ns)$ for various statistical estimation tasks. In other words, there is no utility gain by compromising the privacy of a fraction of users, and hence there is no incentive to do it.

%However, this technique has two disadvantages. First, it can only be used under approximate DP when $\delta = O\Paren{\frac1{\ns}}$, whe re $\ns$ is the sample size. Second, this method is defined over the Boolean hypercube, and it remains unclear how to extend this technique to $\ab$-ary alphabets. We note that these two drawbacks can be overcome by private Assouad's method (Theorem~\ref{thm:assouad}).

\cite{DuchiJW13} derives analogues of Le Cam, Assouad, and Fano in the local model of differential privacy, and uses them to establish lower bounds for several problems under local differential privacy.~\cite{AcharyaCT19, acharya2020interactive} proves lower bounds for various testing and estimation problems under local differential privacy using a notion of chi-squared contractions based on Le Cam's method and Fano's inequality. 

%\subsection{Organization of the paper}
%In Section~\ref{sec:prem}, we discuss the definitions and notations. We made a summary of our results in Section~\ref{sec:res}. The proof for Theorem~\ref{thm:coupling} is presented in Section~\ref{sec:private_fano} along with ``private'' Fano's inequality. Section~\ref{sec:pdp_kary} gives upper and lower bounds of $\eps$-DP $\ab$-ary distribution learning. The results under $(\eps,\delta)$-DP are given in Section~\ref{sec:adp_kary}. Finally, we study learning product distributions and Gaussian mixtures under $\eps$-DP in Section~\ref{sec:product} and Section~\ref{sec:gaussian}, respectively.
%1. The Complexity of Differential Privacy: rephrase, only works for data base setting.
%\url{https://privacytools.seas.harvard.edu/files/privacytools/files/complexityprivacy_1.pdf}
%
%2. On the Geometry of Differential Privacy: data base setting, $l_1$ norm dp definition, linear transform
%\url{https://arxiv.org/abs/0907.3754}
%
%3. Raef Bassily DP ERM: mean estimation, still data base setting (no distribution)
%\url{https://arxiv.org/abs/1405.7085}
%
%4. confidence interval: naive coupling, only work in i.i.d. case, doesn't work for mixture, only works for two distributions
%\url{https://arxiv.org/abs/1711.03908}
%
%Finger printing lower bound:
%1. Fingerprinting codes and the price of approximate differential privacy.
%\url{https://arxiv.org/abs/1311.3158}
%
%2. Between pure and approximate differential privacy.
%\url{https://arxiv.org/abs/1501.06095}
%
%3. G's paper
%\url{https://arxiv.org/abs/1805.00216}
\section{Proof of Theorems}
\label{sec:proofs}
\subsection{Proof of DP Le Cam's Method (Theorem~\ref{thm:le_cam})}
\label{sec:le}
%We now prove Theorem~\ref{thm:le_cam}, restated below. 
The proof technique is similar to the proof of coupling lemma in~\cite{AcharyaSZ18}. However, we directly characterize the error probability in Theorem~\ref{thm:le_cam}, which we then use to prove Theorem~\ref{thm:assouad} (DP Assouad's method).

\dplecam*

%\amargin{Why are we using $\cA$ instead of $\hat \theta$, when we have used $\theta$ everywhere else in the paper, including when we defined the hypothesis testing problem in the beginning.}
\begin{proof}
{From the definition of hypothesis testing,} 
	\[
	P_e(\hat \theta, \cP_1, \cP_2) \ge \frac12 \Paren{ \probofsub{ \Xon\sim \p_1 }{ \hat \theta(\Xon) \neq \p_1 } + \probofsub{\Xon\sim \p_2 }{\hat \theta(\Xon) \neq \p_2 }}.
	\]
	
{The first term in Theorem~\ref{thm:le_cam} follows from the classic Le Cam's method (Lemma 1 in~\cite{Yu97}). For the second term, let $(\Xon, \Yon)$ be distributed according to a coupling of $\p_1$ and $\p_2$ with $\expectation{\ham{\Xon}{\Yon}} \le D$. 
	By Markov's inequality, 
	$\probof{\ham{\Xon}{\Yon}>10D}<0.1$.
	Let $\xon$ and $\yon$ be the realization of $\Xon$ and $\Yon$.  $W := \{ (\xon,\yon) \in \cX^n \times \cX^n| \ham{\xon}{\yon} \le 10D \}$ be the set of pairs of realizations with Hamming distance at most $10D$.} Then we have 
	\begin{align}
	\probof{\alg\Paren{\Xon}=\p_2}  = & \sum_{\xon,\yon} \probof{\Xon = \xon, \Yon = \yon} \cdot \probof{\alg\Paren{\xon}=\p_2} \nonumber\\
	\ge &  \sum_{(\xon,\yon)\in W} \probof{\Xon = \xon, \Yon = \yon} \cdot \probof{\alg\Paren{\xon}=\p_2}.
	\end{align}
Let  $\beta_1=\probof{\alg\Paren{\Xon}=\p_2} $, so we have 
	$$ \sum_{(\xon,\yon)\in W} \probof{\Xon = \xon, \Yon = \yon}  \cdot\probof{\alg\Paren{\xon}=\p_2} \le \beta_1$$
	
	Next, we need the following group property of differential privacy.
	
	\begin{lemma} \label{lem:group}
		Let $\alg$ be a $(\eps,\delta)$-DP algorithm, then for sequences $\xon$, and $\yon$ with $\ham{\xon}{\yon} \le t$, and $\forall S$, 
		$		\probof{\hat \theta(\xon)\in S}\le e^{t\eps}\cdot\probof{\hat \theta(\yon)\in S}+ \delta t e^{\eps (t-1)}$. 
	\end{lemma}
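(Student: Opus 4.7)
The plan is to prove the claim by induction on the Hamming distance, via a ``hybrid chain'' of datasets interpolating between $\xon$ and $\yon$. First I would dispose of the trivial cases: if $\ham{\xon}{\yon}=0$, the inequality is immediate since $\xon=\yon$ and $e^{t\eps} \ge 1$; if $\ham{\xon}{\yon}=1$, the inequality is exactly the definition of $(\eps,\delta)$-DP, and the extra factor $t e^{\eps(t-1)} \ge 1$ only loosens it.

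For the general case, let $\hat D = \ham{\xon}{\yon} \le t$ and $\hat D \ge 2$. I would construct a sequence $z_0^n, z_1^n, \ldots, z_{\hat D}^n$ with $z_0^n = \xon$, $z_{\hat D}^n = \yon$, and $\ham{z_i^n}{z_{i+1}^n} = 1$ for each $i$; this is possible by flipping the coordinates on which $\xon$ and $\yon$ disagree, one at a time. Applying $(\eps,\delta)$-DP to each adjacent pair yields
\begin{equation*}
\Pr[\cA(z_i^n) \in S] \le e^{\eps}\Pr[\cA(z_{i+1}^n) \in S] + \delta.
\end{equation*}

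I would then unroll this recursion. Substituting inductively gives
\begin{equation*}
\Pr[\cA(\xon) \in S] \le e^{\hat D \eps}\Pr[\cA(\yon) \in S] + \delta \sum_{i=0}^{\hat D -1} e^{i\eps}.
\end{equation*}
The final step is to bound the accumulated noise: since every term in the geometric sum is at most $e^{(\hat D-1)\eps}$, we get $\sum_{i=0}^{\hat D-1} e^{i\eps} \le \hat D e^{(\hat D-1)\eps}$. Using $\hat D \le t$ and monotonicity of $x \mapsto x e^{(x-1)\eps}$ for $\eps \ge 0$, this is at most $t e^{(t-1)\eps}$, and $e^{\hat D\eps}\Pr[\cA(\yon)\in S] \le e^{t\eps}\Pr[\cA(\yon)\in S]$ yields the claim.

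The argument is essentially bookkeeping; the only mild subtlety is checking that the monotone bound $\hat D e^{(\hat D-1)\eps} \le t e^{(t-1)\eps}$ holds, which I would verify by noting that both factors $\hat D$ and $e^{(\hat D-1)\eps}$ are nondecreasing in $\hat D$ when $\eps \ge 0$. No delicate probabilistic reasoning is needed beyond the definition of $(\eps,\delta)$-DP itself.
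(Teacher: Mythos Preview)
Your proposal is correct and follows essentially the same approach as the paper: construct a Hamming-distance-one chain between $\xon$ and $\yon$, unroll the $(\eps,\delta)$-DP inequality along the chain to obtain $e^{\hat D\eps}\Pr[\cA(\yon)\in S] + \delta\sum_{i=0}^{\hat D-1} e^{i\eps}$, bound the geometric sum by $\hat D e^{(\hat D-1)\eps}$, and then use $\hat D \le t$ with monotonicity. The only cosmetic difference is that you explicitly call out the monotonicity step, which the paper leaves implicit.
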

	
	\noindent By Lemma~\ref{lem:group}, and $\probof{\ham{\Xon}{\Yon}>10D}<0.1$, let $\probof{\alg\Paren{\Yon}=\p_2}=1- \beta_2$,
	\begin{align}
	1- \beta_2=&  
	\sum_{(\xon,\yon)\in W} \probof{\xon,\yon}  \cdot \probof{\alg\Paren{\yon}=\p_2}+ 
	\sum_{(\xon,\yon)\notin W} \probof{\xon,\yon} \cdot \probof{\alg\Paren{\yon}=\p_2} \nonumber \\
	\leq& \sum_{(\xon,\yon)\in W} \probof{\xon,\yon}  \cdot \Paren{ e^{\eps \cdot 10D} \probof{\alg\Paren{\xon}=\p_2} + 10 D\delta \cdot e^{\eps \cdot 10(D-1)}} +0.1	 \nonumber \\
	\leq& \beta_1 \cdot e^{\eps\cdot 10 D} +  10 D\delta \cdot e^{\eps \cdot 10D} + 0.1. \nonumber
	\end{align}
	Similarly, we get
	\begin{align}
	1- \beta_1 \le  \beta_2 \cdot e^{\eps\cdot 10 D} +  10 D\delta \cdot e^{\eps \cdot 10D} + 0.1. \nonumber
	\end{align}
	
	Adding the two inequalities and rearranging terms,
	\[
	\beta_1 + \beta_2 \ge \frac{1.8 - 20D\delta e^{\eps \cdot 10D} }{1 + e^{\eps\cdot 10 D}} \ge 0.9 e^{-10 \eps D} - 10 D \delta.
	\]
\end{proof}

\subsection{Proof of Private Fano's Inequality (Theorem~\ref{thm:dp_fano})}
\label{sec:fano_pro}
In this section, we prove $\eps$-DP Fano's inequality (Theorem~\ref{thm:dp_fano}), restated below. 
%\amargin{If the theorem statement is identical, why are we restating it again? \newhzz{ HZ: I am fine with the theorem restating, especially when there is enough space. I think it is easier for the readers since they do not need to go to previous pages. Let Ziteng decide.}}

\dpfano*

The proof is based on the observation that if you can change a sample from $\p_i$ to $\p_j$ by changing $D$ coordinates in expectation, then an algorithm that  algorithm that correctly outputs a sample as from $\p_i$ has to output $\p_j$ with probability roughly $e^{-\eps D}$. With a total of $M$ distributions in total, we show that the error probability is large as long as $\frac{M}{e^{\eps D}}$ is large.

\begin{proof}
{The first term in~\eqref{eqn:fano_result} follows from the non-private Fano's inequality (Lemma~3 in~\cite{Yu97}).}
{For an observation $\Xon\in\cX^\ns$,} 
\[
	\hat{\p}(\Xon) := \arg \min_{\p \in \cV} \ell \Paren{\theta(\p),  \hat \theta(\Xon)},
\]
{be the distribution in $\cP$ closest in parameters to an $\eps$-DP estimate $\hat \theta(\Xon)$. Therefore, $\hat{\p}(\Xon)$ is also $\eps$-DP. By the triangle inequality,
	\[
	\ell\Paren{\theta(\hat \p), \theta(\p)} \le \ell \Paren{\theta(\hat \p), \hat \theta(\Xon)}  + \ell \Paren{ \theta(\p), \hat \theta(\Xon)} \le 2\ell \Paren{\theta(\p), \hat \theta(\Xon)}.
	\]}
	Hence,
	\begin{align} 
	\max_{\p \in \cP} \EE_{\Xon \sim \p} \left[\ell(\hat \theta (\Xon),\theta (\p)) \right]  
	\ge \max_{\p \in \cV} \EE_{\Xon \sim \p} \left[\ell(\hat \theta (\Xon),\theta (\p)) \right]
	&\ge \frac{1}{2}\max_{\p \in \cV} \EE_{\Xon \sim \p} \left[\ell(\theta(\hat \p) ,\theta (\p)) \right]  \nonumber \\
	& \ge \max_{\p \in \cV} \frac{\alpha}{2}\probofsub{\Xon \sim \p}{\hat{\p} (\Xon) \neq \p} \nonumber \\
	&\ge \frac{\alpha}{2M}\sum_{\p \in \cV} \probofsub{\Xon \sim \p}{\hat{\p} (\Xon) \neq \p}.
	\label{eqn:reduction}
	\end{align}
	Let $\beta_i = \probofsub{\Xon \sim \p_i}{\hat{\p} (\Xon) \neq \p_i}$ be the probability that $\hat p(\Xon)\ne\p_i$ when the underlying distribution generating $\Xon$ is $\p_i$. For $\p_i, \p_j \in \cV$, let $(\Xon,\Yon)$ be the coupling in condition $(c)$. By Markov's inequality $\probof{\ham{\Xon}{\Yon}>10D}<1/{10}.$

{Similar to the proof of Theorem~\ref{thm:le_cam} in the previous section, let $W := \{ (\xon,\yon) | \ham{\xon}{\yon} \le 10D \}$ and $\probof{\xon,\yon}  = \probof{\Xon = \xon, \Yon = \yon}$ . Then
	\begin{align}
	1- \beta_j = \probof{\hat{\p}\Paren{\Yon} =\p_j}  \leq  \sum_{(\xon, \yon)\in W} \probof{\xon,\yon}  \cdot \probof{\hat{\p} \Paren{\yon}  = \p_j} + \sum_{(\xon, \yon)\notin W} \probof{\xon,\yon}. \nonumber
	\end{align}}
Therefore, 
	\[ \sum_{(\xon, \yon)\in W} \probof{\xon,\yon}  \cdot \probof{\hat{\p} \Paren{\yon}  = \p_j} \ge 0.9 - \beta_j.\] Then, we have
	\begin{align}
	\probof{\hat{\p} \Paren{\Xon}  = \p_j} & \ge  \sum_{(\xon,\yon)\in W} \probof{\xon, \yon} \cdot \probof{\hat{\p} \Paren{\xon}  = \p_j}\nonumber \\ 
	&\ge \sum_{(\xon,\yon)\in W} \probof{\xon, \yon} e^{-10 \eps D} \probof{\hat{\p} \Paren{\yon}  = \p_j} \label{eqn:step-dp} \\
	& \ge ( 0.9 - \beta_j)  e^{-10 \eps D},\nonumber
	\end{align}
	where~\eqref{eqn:step-dp} uses that $\hat p$ is $\eps$-DP and $\ham{\xon}{\yon} \le 10D$. Similarly, for all $j' \neq i$, 
	\[
	\probof{\hat{\p} \Paren{\Xon}  = \p_{j'}} \ge ( 0.9 - \beta_{j'})  e^{-10 \eps D}.
	\]
	Summing over $j'\ne i$, we obtain
	\begin{align}
	\beta_i &= \sum_{j' \neq i} \probof{\hat{\p} \Paren{\Xon} = \p_{j'}}  \ge \Paren{ 0.9(M-1) - \sum_{j' \neq i}\beta_{j'} } e^{-10 \eps D}.\nonumber
	\end{align}
	Summing over $i \in [M]$,
	\[
	\sum_{i \in [M]} \beta_i \ge  \Paren{ 0.9 M (M-1) - (M - 1)\sum_{i \in [M]} \beta_{i} } e^{-10 \eps D}.
	\]
	Rearranging the terms
	\[
	\sum_{i \in [M]} \beta_i \ge \frac{0.9 M (M - 1)}{M - 1 + e^{10 \eps D}} \ge 0.8M \min\left\{1,  \frac{M}{e^{10 \eps D}}\right\}.
	\]
	Combining this with~\eqref{eqn:reduction} completes the proof.
\end{proof}

\subsection{{Proof of Private Assouad's Method (Theorem~\ref{thm:assouad})}}
\label{sec:private_assouad}
\newzs{We restate the theorem below and the notions are the same as defined in Section~\ref{sec:dp-assouad}.}
\dpassouad*
\begin{proof}
{The first part is from the non-private Assouad's lemma, which we include here for completeness. Let $\p\in\cV \subset \cP$ and $\Xon\sim\p$. For an estimator $\hat \theta(\Xon)$, consider an estimator $\he(\Xon) = \arg \min_{e \in \cE_{\ab}} \ell \Paren{\hat \theta(\Xon), \theta(\p_e)}$.} Then, by the triangle inequality,
		\[
	\ell \Paren{ \theta(\p_{\he}), \theta(\p) } \le \ell \Paren{ \hat \theta, \theta(\p_{\he})}  + \ell \Paren{ \hat \theta, \theta(\p) } \le 2 \ell \Paren{ \hat \theta, \theta(\p) }.
	\]
	Hence,
	\begin{align} \label{eqn:reduce_index}
	R(\cV, \ell, \eps, \delta) &= \min_{\hat \theta \text{ is }(\eps,\delta)-DP}\ \max_{\p \in \cV}\ \EE_{\Xon \sim \p}\left[\ell(\hat \theta (\Xon),\theta(p))\right] \\
	&\ge \frac{1}{2} \min_{\he \text{ is }(\eps,\delta)-DP}\ \max_{\p \in \cV}\ \EE_{\Xon \sim \p}\left[\ell(\theta(\p_{\he(\Xon)}),\theta(p))\right].
	\end{align}
	
\noindent For any $(\eps, \delta)$-DP index estimator $\he$, and by~\eqref{eqn:assouad-loss},
	\begin{align}
	\max_{\p \in \cV }\expectationsub{\Xon \sim \p} {\ell(\theta(\p_{\he}), \theta(\p))} \geq \frac1{|\cE_{\ab}|} \sum_{ e \in \cE_{k}} \expectationsub{\Xon \sim \p_{e}} {\ell(\theta(\p_{\he}),\theta(\p_{e}) )} 
	\geq  \frac{2\tau}{|\cE_{\ab}|}  \sum_{i=1}^{k}  \sum_{ e \in \cE_{k}} \probof{\he_i \neq {e}_i| E = e }. \nonumber 
	\end{align}
	%
	%\begin{align}
	%\sup_{\p \in \cV_{\ab} }\expectationsub{X \sim \p} {\ell(\p_{\he}, \p)} \geq \frac1{|\cE_{\ab}|} \sum_{ e \in \cV } \expectationsub{X \sim \p_{e}} {\ell(\p_{\he},\p_{e} )} 
	%\geq  \frac{2\tau}{|\cE_{\ab}|}  \sum_{i=1}^{k}  \sum_{ e \in \cE_{k}} \probof{\he(i) \neq {e}(i) | E = e }. \nonumber 
	%\end{align}
	
\noindent For each $i$, we divide $ \cE_{k} =  \{\pm1\}^{k}$ into two sets according to the value of $i$-th position,
	\begin{align}
	\max_{\p \in \cV }\expectationsub{\Xon \sim \p} {\ell(\theta(\p_{\he}), \theta(\p))} & \geq  \frac{2\tau}{|\cE_{\ab} |} \sum_{i=1}^{k}  \Brack {\sum_{e:e_i=1} \probof{ \he_i\neq 1| E=e }+\sum_{e:e_i=-1} \probof{\he_i \neq -1  | E = e} }\nonumber\\
	&= \tau \cdot \sum_{i=1}^{\ab}  \Paren{ \probofsub{ \Xon\sim \p_{+i} }{ \he_i\neq 1 } + \probofsub{\Xon\sim \p_{-i} }{\he_i\neq -1 }}  \nonumber \\
	&\ge \tau \cdot \sum_{i=1}^{\ab} \min_{ \phi_i: \phi_i \text{ is DP}}  \Paren{ \probofsub{ \Xon\sim \p_{+i} }{ \phi_i(\Xon) \neq 1 } + \probofsub{\Xon\sim \p_{-i} }{\phi_i(\Xon) \neq -1 }}.  \nonumber
	\end{align}
	Combining with~\eqref{eqn:reduce_index}, we have
	\[
	R(\cP, \ell, \eps, \delta)  \ge R(\cV, \ell, \eps, \delta) \ge  \frac{\tau}{2} \cdot \sum_{i=1}^{\ab} \min_{ \phi_i: \phi_i \text{ is DP}}  \Paren{ \probofsub{ \Xon\sim \p_{+i} }{ \phi_i(\Xon) \neq 1 } + \probofsub{\Xon\sim \p_{-i} }{\phi_i(\Xon) \neq -1 }},
	\]
	proving the first part.

	For the second part. Note that for each $i \in [\ab]$, the summand above is the error probability of hypothesis testing between the mixture distributions $\p_{+i}$ and $\p_{-i}$. Hence, using Theorem~\ref{thm:le_cam}, 
	\[
	R(\cP, \ell, \eps, \delta) \ge \frac{\ab \tau}{2} \cdot \Paren{0.9 e^{-10 \eps D} - 10 D \delta}.
	\]
\end{proof}

\chapter{Private Identity Testing and Closeness Testing of Discrete Distributions}
\label{cha:testing}
\section{Introduction}
%Testing whether observed data conforms to an underlying model is a fundamental scientific problem in statistical inference. Given samples from an unknown probabilistic model, the goal is to determine whether the underlying  model has a property of interest. 

%This question has received great attention in statistics as hypothesis testing~\cite{NeymanP33, LehmannR06}, 
%where it was mostly studied in the asymptotic regime when the number of samples $\ns\to\infty$. In the past two decades there has been a lot of work from the computer science, information theory, and statistics community on various distribution testing problems in the non-asymptotic (small-sample) regime, where the domain size $k$ could be potentially larger than $\ns$ 
%(See~\cite{BatuFRSW00, BatuFFKRW01, GoldreichR00, Batu01, Paninski08, AcharyaJOS13b, AcharyaJOS14b, ChanDVV14, DiakonikolasKN15a, BhattacharyaV15, CanonneDGR16, DiakonikolasK16, BatuC17}, references therein, and~\cite{Canonne15} for a recent survey). 
%Here the goal is to characterize the minimum number of samples necessary (sample complexity) as a function of the domain size $\ab$, and the other parameters.

Testing whether observed data conforms to an underlying model is a fundamental scientific problem. %In a statistical framework, given samples from an unknown probabilistic model, the goal is to determine whether the underlying  model has a property of interest. 
In the past two decades, there has been a lot of work from the computer science, information theory, and statistics community on various distribution testing problems in the non-asymptotic (small-sample) regime, where the domain size $\ab$ could be potentially larger than $\ns$. Here the goal is to characterize the minimum number of samples necessary (sample complexity) as a function of the domain size $k$, and the other parameters.

%Meanwhile, preserving the privacy of individuals who contribute to the data samples has emerged as one of the critical challenges in designing statistical mechanisms over the last few years. 
Meanwhile, preserving the privacy of individuals who contribute to the data samples has emerged as one of the critical challenges in statistical inference. Hypothesis testing is definitely one of the most compelling examples of the need for private statistics, since it forms the lifeblood of the scientific method, 
%which is ubiquitous in medical studies,
and involves huge amounts of highly sensitive data. Without a properly designed mechanism, statistical processing might divulge sensitive information about the data. 

A natural and interesting question when designing a differentially private algorithm is to understand how the data requirement grows to ensure privacy, along with the same accuracy. In this chapter, we study the sample size requirements for differentially private discrete distribution testing. %Huanyu

%In this paper, we study statistical hypothesis testing problems in the finite sample regime under differential privacy. 

\subsection{Results and Techniques}
We consider two fundamental statistical tasks for testing distributions over $[\ab]$: (i) identity testing, where given sample access to an unknown distribution $\p$, and a known distribution $\q$, the goal is to decide whether $\p=\q$, or $\dtv{\p}{\q}\ge\dist$, and (ii) closeness testing, where given sample access to   unknown distributions $\p$, and $\q$, the goal is to decide whether $\p=\q$, or $\dtv{\p}{\q}\ge\dist$. (See Section~\ref{sec:testing_preliminaries} for precise statements of these problems).
Given differential privacy constraints $(\eps,\delta)$, we provide $(\eps,\delta)$-differentially private algorithms for both these tasks. For identity testing, our bounds are optimal up to constant factors for all ranges of $\ab, \alpha, \eps, \delta$, and for closeness testing the results are tight in the small sample regime where $\ns=O(\ab)$. Our upper bounds are based on various methods to privatize the previously known tests. A critical component is to design and analyze test statistic that have low sensitivity (see Definition~\ref{def:sensitivity}), in order to preserve privacy. 

\newhz{
We first state that any $(\eps+\delta,0)$-DP algorithm is also an $(\eps, \delta)$ algorithm.~\cite{CaiDK17} showed that for testing problems, any $(\eps, \delta)$ algorithm will also imply a $(\eps+ c\delta, 0)$-DP algorithm. Please refer to Lemma~\ref{lm:epsdelta} and Lemma~\ref{lm:epsdelta2} for more detail. Therefore, for all the problems, we simply consider $(\eps,0)$-DP algorithms \newhz{($\eps$-DP)}, and we can replace $\eps$ with $(\eps+\delta)$ in both the upper and lower bounds without loss of generality. 
}

We describe our other results below. A summary of the results is presented in Table~\ref{fig:badass:table}, which we now describe in detail.
\begin{enumerate}[leftmargin=*]
\item
{\bf Reduction from identity to uniformity.} We reduce the problem of $\eps$-DP identity testing of distributions over $[\ab]$ to $\eps$-DP uniformity testing over distributions over $[6\ab]$. Such a reduction, without privacy constraints was shown in~\cite{Goldreich16}, and we use their result to obtain a reduction that also preserves privacy, with at most a constant factor blow-up in the sample complexity. This result is given in Theorem~\ref{thm:unif-identity}.  

\item
{\bf Identity Testing.} It was recently shown that $O(\frac{\sqrt\ab}{\alpha^2})$~\cite{Paninski08, ValiantV14, DiakonikolasKN15a, AcharyaDK15} samples are necessary and sufficient for identity testing without privacy constraints. The statistic used in these papers are variants of chi-squared tests, which could have a high global sensitivity.

Given the reduction from identity to uniformity, it suffices to consider uniformity testing. We consider the test statistic studied by~\cite{DiakonikolasGPP17} which is simply the distance of the empirical distribution to the uniform distribution. This statistic also has a low sensitivity, and futhermore has the optimal sample complexity in all parameter ranges, without privacy constraints. In Theorem~\ref{thm:main-identity}, we state the optimal sample complexity of identity testing. The upper bounds are derived by privatizing the statistic in~\cite{DiakonikolasGPP17}.  For lower bound, we use our technique in Theorem~\ref{thm:coupling}. We design a coupling between the uniform distribution $\unifab$, and a mixture of distributions, which are all at distance $\alpha$ from $\unifab$ in total variation distance. In particular, we consider the mixture distribution used in~\cite{Paninski08}. Much of the technical details go into proving the existence of couplings with small expected Hamming distance.~\cite{CaiDK17} studied identity testing under pure differential privacy, and obtained an algorithm with complexity $O\Paren{ \frac{\sqrt \ab}{\dist^2}+ \frac{\sqrt {\ab\log\ab} } {\dist^{3/2} \eps}+ \frac {(k\log\ab)^{1/3} } {\dist^{5/3}\eps^{2/3}}}$. Our results improve their bounds significantly.

\item
{\bf Closeness Testing.}
Closeness testing problem was proposed by~\cite{BatuFRSW00}, and optimal bound of $\Theta\Paren{\max\{\frac{\ab^{2/3}}{\dist^{4/3}}, \frac{\sqrt\ab} {\dist^2}\}}$ was shown in~\cite{ChanDSS14}. In a recent work,~\cite{DiakonikolasGKPP20} proposed a statistic based on the empirical total variation distance, which we show has a small sensitivity. We privatize their algorithm to obtain the sample complexity bounds. Since closeness testing is a harder problem than identity testing, all the lower bounds from identity testing port over to closeness testing. In Theorem~\ref{thm:close-main}, we establish the optimal sample complexity bounds for DP closeness testing, which are tight up to constant factors in all parameter ranges.

%The first term is larger when $\dist^2>\frac1{\sqrt \ab}$, otherwise the second term is larger, and the terms are equal when both terms are $\Theta(\ab)$. This shows that when $\alpha^2<1/\sqrt{\ab}$, the complexity of closeness testing is the same as that of identity testing, and in particular, we can use closeness testing to perform identity testing without loss of sample efficiency. Moreover We show that in contrast to the $\chi$-squared tests for identity testing, the $\chi$-squared tests for closeness testing in~\cite{chan2014optimal} have a small sensitivity, which allows us to obtain strong sample complexity results. 

\end{enumerate}

\begin{table}[htbp]
\centering
      \begin{tabular}{| c | c | }
      \hline
      {\bf Problem} & {\bf Sample Complexity Bounds}  \\ \hline
     % {\pb{0.20\textwidth}{\centerline{Is $\p = \Bern{\half}$?}}} & {\bf Non-private: $\Theta\Paren{\frac1{\alpha^2}} $}\\
      %& \newhz{ {\bf $\eps$-DP: }$\Theta\Paren{\frac1{\alpha^2} +\frac1{\alpha\eps} }$ } \cite{DworkR14}, [Theorem~\ref{thm:bina-main}]\\ \hline
      {\bf Identity Testing} & {\bf Non-private : $\Theta\Paren{\frac{\sqrt \ab}{\alpha^2}} $ \cite{Paninski08}}\\
              & {\bf $\eps$-DP algorithms:} $O\Paren{ \frac{\sqrt \ab}{\alpha^2}+\frac{\sqrt {\ab\log k}}{\alpha^{3/2}\eps}}$~\cite{CaiDK17}\\
       %& \newhz{ {\bf $\eps$-DP algorithms:} } \\
       & \newhz{$\Sit = \Theta\Paren{\frac{\sqrt \ab}{\alpha^2}+{\max\left\{\frac{\ab^{1/2}}{\alpha \eps^{1/2}}, \frac{\ab^{1/3}}{\alpha^{4/3}\eps^{2/3}}, \frac1{\dist \eps}\right\}}}$}[Theorem~\ref{thm:main-identity}]\\\hline
            {\bf Closeness Testing} & {\bf Non-private:} {$\Theta{ \left(\frac{\ab^{2/3}}{\dist^{4/3}}+ \frac{\ab^{1/2}}{\dist^2}\right) }$ \cite{ChanDVV14}}\\    
        & \newhz { {\bf $\eps$-DP algorithms:} }\\
              & $\Sct = \Theta\Paren{\frac{\ab^{2/3}}{\dist^{4/3}}+ \frac{\ab^{1/2}}{\dist^2}  +{{\max\left\{\frac{\ab^{1/2}}{\alpha \eps^{1/2}}, \frac{\ab^{1/3}}{\alpha^{4/3}\eps^{2/3}}, \frac1{\dist \eps}\right\}}} }$  [Theorem~\ref{thm:close-main}]\\ \hline
      \end{tabular}
    \caption{\label{fig:badass:table} \newhz{Summary of the sample complexity bounds for $\eps$-DP identity, and closeness testing. For $(\eps,\delta)$-DP algorithms, we can simply replace $\eps$ in the sample complexity by $(\eps+\delta)$.}}
    % Summary of the sample complexity bounds for the problems of identity testing, and closeness testing of discrete distributions.}
\end{table}
%\vspace{-20pt} 
\subsection{Related Work}

A number of papers have recently studied hypothesis testing problems under differential privacy guarantees~\cite{WangLK15, GaboardiLRV16, RogersK17}. Some works analyze the distribution of the test statistic in the asymptotic regime. The work most closely related to ours is~\cite{CaiDK17}, which studied identity testing in the finite sample regime. We mentioned their guarantees along with our results on identity testing in the previous section. 

There has been a line of research for statistical testing and estimation problems under the notion of \emph{local} differential privacy~\cite{WainwrightJD12, DuchiJW13,ErlingssonPK14, PastoreG16, KairouzBR16, WangHWNXYLQ16, YeB18, AcharyaSZ18a, Sheffet18, AcharyaCFT18}. These papers study some basic statistical problems and provide minimax lower bounds using Fano's inequality.~\cite{DiakonikolasHS15} studies structured distribution estimation under differential privacy.
Information theoretic approaches to data privacy have been studied recently using quantities like mutual information, and guessing probability to quantify privacy~\cite{Mir12, SankarRP13, CuffY16, WangYZ16, IssaW17}.

In a contemporaneous and independent work,~\cite{AliakbarpourDR18}, the authors study the same problems that we consider, and obtain the same upper bounds for the sparse case, when $\ns\le\ab$. They also provide experimental results to show the performance of the privatized algorithms. However, their results are sub-optimal for $\ns=\Omega(\ab)$ for identity testing, and they do not provide any lower bounds for the problems. Both~\cite{CaiDK17}, and~\cite{AliakbarpourDR18} consider only pure-differential privacy, which are a special case of our results.

\medskip
\subsection{Organization}
In Section~\ref{sec:testing_preliminaries}, we discuss the definitions and notations. %In Section~\ref{sec:binary}, we study differentially private binary hypothesis testing as a warm-up. 
Section~\ref{sec:identity} gives upper and lower bounds for identity testing, and closeness testing is studied in Section~\ref{sec:close:}. The proofs of some lemmas are given in Section~\ref{sec:testing_proof}.
\section{Preliminaries} \label{sec:testing_preliminaries}
%Let $\unifab$ be the uniform distribution over $[\ab]$, and $\Bern{b}$ be the Bernoulli distribution with bias $b$. 
%\begin{definition}
%\label{def:coupling}
%Let $\p$, and $\q$ be distributions over $\cX$, and $\cY$ respectively. A \emph{coupling} between $\p$ and $\q$ is a distribution over $\cX\times\cY$ whose marginals are $\p$ and $\q$ respectively.
%\end{definition}

%\begin{definition}
%The \emph{Hamming distance} between two sequences $\Xon$ and $\Yon$ is 
%$\ham{\Xon}{\Yon} \ed \sum_{i=1}^{\ns} \II\{{X_i\ne Y_i}\},$
% the number of positions where $\Xon$, and $\Yon$ differ. 
%\end{definition}

\newhz{We first introduce the following lemmas, which state a relationship between $(\eps, \delta)$ and $\eps$-differential privacy in testing. We give a proof of Lemma~\ref{lm:epsdelta} below. And Lemma~\ref{lm:epsdelta2} follows from~\cite{CaiDK17}. }

\begin{lemma} \label{lm:epsdelta}
	Any $(\eps+\delta,0)$- differentially private algorithm is also $(\eps, \delta)$-differentially private.
\end{lemma}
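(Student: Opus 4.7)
The plan is to unfold the definitions of both privacy notions and translate from one to the other. Suppose $\cA$ is $(\eps+\delta,0)$-DP. Then for every pair of neighbouring datasets $\xon,\yon$ with $\ham{\xon}{\yon}\le 1$ and every measurable $S\subseteq\mathrm{range}(\cA)$,
\[
\probof{\cA(\xon)\in S}\le e^{\eps+\delta}\probof{\cA(\yon)\in S}.
\]
The task reduces to showing that $e^{\eps+\delta}\probof{\cA(\yon)\in S}\le e^{\eps}\probof{\cA(\yon)\in S}+\delta$ for the same fixed $\xon,\yon,S$.

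First, I would factor $e^{\eps+\delta}=e^{\eps}\cdot e^{\delta}$, so that the residual term to control is $e^{\eps}(e^{\delta}-1)\probof{\cA(\yon)\in S}$. To bound this, I would split the analysis according to the size of $q:=\probof{\cA(\yon)\in S}$. In the regime where $q\le e^{-\delta}$, I would invoke the elementary identity $(e^{\delta}-1)e^{-\delta}=1-e^{-\delta}\le\delta$, which gives $e^{\delta}q\le q+\delta$; multiplying by $e^{\eps}$ yields the privacy bound with the desired additive $\delta$ correction. In the complementary regime $q>e^{-\delta}$, the bound $\probof{\cA(\xon)\in S}\le 1$ is compared directly against $e^{\eps}q+\delta\ge q+\delta\ge e^{-\delta}+\delta\ge 1$, so the $(\eps,\delta)$-DP inequality holds trivially.

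The main step that carries the argument is the elementary inequality $1-e^{-\delta}\le\delta$, together with the monotonicity trick of handling the ``$q$ close to $1$'' case by the universal upper bound $\probof{\cA(\xon)\in S}\le 1$. There are no real obstacles beyond being careful about the case split boundary; the whole proof is a short chain of inequalities once the two regimes are identified. The argument works for arbitrary output ranges of $\cA$, so the same proof simultaneously covers discrete, continuous, and mixed outputs that arise in the testers analyzed later in the chapter.
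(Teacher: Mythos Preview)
Your overall strategy matches the paper's: factor $e^{\eps+\delta}=e^{\eps}e^{\delta}$, isolate the residual $e^{\eps}(e^{\delta}-1)q$ with $q=\probof{\cA(\yon)\in S}$, and handle large $q$ via the trivial bound $\probof{\cA(\xon)\in S}\le 1$. Your Case~2 ($q>e^{-\delta}$) is correct as written.

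The gap is in Case~1. From $q\le e^{-\delta}$ you obtain $(e^{\delta}-1)q\le 1-e^{-\delta}\le\delta$, i.e.\ $e^{\delta}q\le q+\delta$. But ``multiplying by $e^{\eps}$'' gives
\[
e^{\eps+\delta}q\le e^{\eps}q+e^{\eps}\delta,
\]
which has additive slack $e^{\eps}\delta$, not $\delta$. Concretely, with $\eps=1$, $\delta=0.1$, $q=e^{-0.1}$ one has $e^{\eps+\delta}q=e\approx 2.718$ while $e^{\eps}q+\delta=e^{0.9}+0.1\approx 2.560$, so the inequality you claim in Case~1 is false for this $q$. (The lemma itself still holds here because $e^{\eps}q+\delta>1$, but that is precisely the trivial-bound regime, not the regime your Case~1 argument covers.)

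The fix is to move the case split to the quantity that actually appears in the residual, namely $e^{\eps}q$: if $e^{\eps}q\le 1-\delta$ then $(e^{\delta}-1)\,e^{\eps}q\le(e^{\delta}-1)(1-\delta)\le\delta$ (equivalently $e^{-\delta}\ge 1-\delta$), while if $e^{\eps}q>1-\delta$ then $e^{\eps}q+\delta>1\ge\probof{\cA(\xon)\in S}$. This is exactly the paper's split; your threshold on $q$ alone leaves an uncovered strip where neither of your two arguments applies.
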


\begin{proof}
	Suppose $\cA$ is a $(\eps+\delta)$-differentially private algorithm. Then for any $\Xon$ and $\Yon$ with $\ham{\Xon}{\Yon}\le 1$ and any $S\subset \text{range}(\cA)$, we have
	\begin{align*}
		% \probof{\cA(\Xon)\in S} \le e^{\eps+\delta}\cdot\probof{\cA(\Yon)\in S} 
		% 									 = e^{\eps}\cdot\probof{\cA(\Yon)\in S} + (e^{\delta} - 1)\cdot e^{\eps}\probof{\cA(\Yon)\in S}. 
		\probof{\cA(\Xon)\in S} \le e^{\eps}\cdot\probof{\cA(\Yon)\in S} + (e^{\delta} - 1)\cdot e^{\eps}\probof{\cA(\Yon)\in S}. 
	\end{align*}
	If $e^{\eps}\cdot\probof{\cA(\Yon)\in S} > 1-\delta$, then $\probof{\cA(\Xon)\in S}\le1< e^{\eps}\cdot\probof{\cA(\Yon)\in S}+\delta$.
	Otherwise, $e^{\eps}\cdot\probof{\cA(\Yon)\in S} \leq 1 - \delta$. To prove $(e^{\delta}-1)\cdot e^{\eps}\cdot\probof{\cA(\Yon)\in S}<\delta$, it suffices to show $(e^{\delta} - 1)(1 - \delta) \le \delta$, which is equivalent to $e^{-\delta} \ge 1 - \delta$, completing the proof.
\end{proof}

\begin{lemma}
\label{lm:epsdelta2}
An $(\eps, \delta)$-DP algorithm for a testing problem can be converted to an $(\eps+c\delta, 0)$ algorithm for some constant $c>0$. 
\label{lem:delta-pure}
\end{lemma}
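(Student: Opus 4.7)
The plan is to exploit the fact that a testing algorithm outputs a single bit. Let $\cA$ be an $(\eps,\delta)$-DP testing algorithm with $\cA(D)\in\{0,1\}$, and write $p(D):=\probof{\cA(D)=1}$. For neighboring $\Xon,\Yon$, the $(\eps,\delta)$-DP condition collapses to the two scalar inequalities $p(\Xon)\le e^{\eps}p(\Yon)+\delta$ and $1-p(\Xon)\le e^{\eps}(1-p(\Yon))+\delta$. The entire privacy region is therefore captured by a pair of affine inequalities in $[0,1]^2$, which is what makes binary-output algorithms special.

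I would build $\cA'$ as a pure-DP post-processing of $\cA$ that shifts its output probability away from the boundary $\{0,1\}$. Concretely, pick a parameter $a=a(\eps,\delta)>0$ (to be tuned) and, after observing $b=\cA(D)$, output $1$ with probability $(1+a)/(1+2a)$ if $b=1$ and with probability $a/(1+2a)$ if $b=0$. A one-line computation shows that $q(D):=\probof{\cA'(D)=1}=(p(D)+a)/(1+2a)$. Because $\cA'$ is a post-processing of $\cA$, everything reduces to verifying that the transformed probabilities $q$ satisfy pure DP.

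The key inequality to check is
\[
\frac{q(\Xon)}{q(\Yon)}=\frac{p(\Xon)+a}{p(\Yon)+a}\le\frac{e^{\eps}p(\Yon)+\delta+a}{p(\Yon)+a},
\]
together with the symmetric bound for $1-q$. Choosing $a=\delta/(e^{\eps+c\delta}-1)$ turns the right-hand side into exactly $e^{\eps+c\delta}$, so $\cA'$ is $(\eps+c\delta,0)$-DP for a suitable absolute constant $c$. It remains to argue that the conversion preserves the testing accuracy: since $|q(D)-p(D)|\le a=O(\delta/\eps+\delta)$, the Type I and Type II error probabilities change by at most $O(\delta)$ under constant $\eps$, so $\cA'$ remains a valid test with essentially the same guarantees.

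The main obstacle I anticipate is juggling the single free parameter $a$ so that pure DP holds with an increase of only $c\delta$ in $\eps$ while the accuracy degradation is simultaneously absorbable into the testing error budget. The choice $a=\delta/(e^{\eps+c\delta}-1)$ is the unique scaling making the privacy inequality tight; verifying that the corresponding accuracy loss is acceptable for testing (where a constant-factor slack in the error probability is harmless) is the one place where the binary-output hypothesis is genuinely used, since the same construction on larger output spaces would lose a factor growing with the alphabet.
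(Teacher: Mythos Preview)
The paper does not prove this lemma; it simply attributes the result to \cite{CaiDK17}. Your approach is the standard one and is correct: exploit that a tester's output is a single bit, mix that bit with a small amount of uniform noise (your transformation $q=(p+a)/(1+2a)$ is exactly randomized response applied to the output), and tune $a$ so that the additive $\delta$ slack in the $(\eps,\delta)$-DP inequalities is absorbed into a multiplicative $e^{c\delta}$ factor. The privacy calculation you sketch is right; with $\delta+a = a\,e^{\eps+c\delta}$ the worst case of $(e^{\eps}p+\delta+a)/(p+a)$ over $p\in[0,1]$ is attained at $p=0$ and equals $e^{\eps+c\delta}$, and the bound on $(1-q(\Xon))/(1-q(\Yon))$ is symmetric.

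One remark on the accuracy step, which you flag as the delicate point: you bound the change in error probability by $O(\delta)$ only ``under constant $\eps$,'' but the lemma should hold uniformly in $\eps$. The cleaner observation is that $e^{\eps+c\delta}-1 \ge c\delta$ for all $\eps,\delta\ge 0$, so your choice of $a$ always satisfies $a\le 1/c$. Hence taking $c$ to be a fixed sufficiently large absolute constant forces $|q(D)-p(D)|\le a\le 1/c$ uniformly, which keeps the Type~I and Type~II errors below any desired constant threshold regardless of $\eps$. That single line removes the dependence on ``constant $\eps$'' and completes the argument.
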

%\vspace{-8pt}
Combining these two results, it suffices to prove bounds for $(\eps,0)$-DP, and plug in $\eps$ with $(\eps+\delta)$ to obtain bounds that are tight up to constant factors for $(\eps,\delta)$-DP. %We sometimes refer to $(\eps, 0)$-DP as $\eps$-DP for simplicity. 

For $x\in\RR$,
$\sigma(x) \ed \frac1{1+\exp(-x)} = \frac{\exp(x)}{1+\exp(x)}$ is the sigmoid function. The following properties follow from the definition of $\sigma$.
\begin{lemma}\label{lem:sig-cont}
\begin{enumerate}
	\item  For all $x, \gamma\in\RR$, 
$\exp(-\absv{\gamma})\le \frac{\sigma(x+\gamma)}{\sigma(x)} \le \exp(\absv{\gamma})$.
\item
Let $0<\eta<\frac12$. Suppose $x\ge \log\frac1\eta$. Then $\sigma(x)>1-\eta$.
\end{enumerate}
\end{lemma}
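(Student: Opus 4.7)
The plan is to prove both parts by direct manipulation of the definition $\sigma(x) = 1/(1+e^{-x})$. Neither part is difficult; the statements are elementary properties of the sigmoid.

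For part (1), the cleanest approach is to show that $\log \sigma$ is $1$-Lipschitz. Differentiating, $\frac{d}{dx}\log \sigma(x) = 1 - \sigma(x) \in (0,1)$ for all $x \in \RR$. Hence by the mean value theorem,
\[
\absv{\log \sigma(x+\gamma) - \log \sigma(x)} \le \absv{\gamma},
\]
and exponentiating yields the two-sided bound $e^{-\absv{\gamma}} \le \sigma(x+\gamma)/\sigma(x) \le e^{\absv{\gamma}}$. If one prefers to avoid calculus, the same inequalities can be obtained by writing $\sigma(x+\gamma)/\sigma(x) = (1+e^{-x})/(1+e^{-x-\gamma})$ and checking the two signs of $\gamma$ separately, each reducing to the trivial inequality $e^{\absv{\gamma}} \ge 1$.

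For part (2), I rearrange the desired conclusion: $\sigma(x) > 1 - \eta$ is equivalent to $1 + e^{-x} < 1/(1-\eta)$, i.e.\ $e^{-x} < \eta/(1-\eta)$, i.e.\ $x > \log\bigl((1-\eta)/\eta\bigr)$. Since $0 < \eta < 1/2$, we have $1-\eta < 1$, so $\log((1-\eta)/\eta) < \log(1/\eta)$. Therefore the hypothesis $x \ge \log(1/\eta)$ implies $x > \log((1-\eta)/\eta)$, giving $\sigma(x) > 1-\eta$.

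I do not anticipate any obstacles; both parts are short algebraic verifications, and the only minor care needed is to ensure the strict inequality in part (2) (which comes for free because $1-\eta < 1$ strictly when $\eta > 0$).
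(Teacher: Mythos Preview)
Your proof is correct. The paper does not give a proof of this lemma at all --- it simply states that both properties ``follow from the definition of $\sigma$'' --- so your direct verifications from the definition are exactly in the spirit of what the paper intends, just written out in full.
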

%\begin{proof}
%	Since $\sigma(x)$ is an increasing function, it suffice to assume that $\gamma>0$. In this case, 
%$\frac{\sigma(x+\gamma)}{\sigma(x)} = \exp(\gamma) \cdot \frac{{1+\exp(x)}}{{1+\exp(x+\gamma)}}< \exp(\gamma)$.
%For the second part,  
%$\sigma(x) = 1-\frac{1}{1+e^x}\ge 1-\frac1{e^x} \ge 1-\eta$.
%\end{proof}
\paragraph{Identity Testing ({\tt IT}).} Given description of $\q\in\Delta_k$ over $[\ab]$, parameters $\alpha$, and $\ns$ independent samples $\Xon$ from unknown $\p\in\Delta_k$. $\cA$ is an $(\ab,\alpha)$-identity testing algorithm for $\q$, if
when $\p=\q$, $\cA$ outputs ``$\p=\q$'' with probability at least 0.9, and
when $\dtv{\p}{\q}\ge\alpha$, $\cA$ outputs ``$\p\ne\q$'' with probability at least 0.9.
%Furthermore, if $\cA$ is \newhz{$(\eps,\delta)$}-differentially private, we say $\cA$ is an \newhz{$(\ab,\alpha,\eps,\delta)$}-identity testing algorithm. %When $\delta = 0$ in pure differential privacy case, $\cA$ is called an $(\ab,\alpha,\eps)$-identity testing algorithm.

\begin{definition}
The sample complexity of DP-identity testing, denoted $\Sit$, is the smallest $\ns$ for which there exists an $\eps$-DP algorithm $\cA$ that uses $\ns$ samples to achieve $(\ab,\dist)$-identity testing.
%\newhz{$(\ab,\dist,\eps,0)$}-identity testing algorithm $\cA$ that uses $\ns$ samples. 
Without privacy concerns, $\Sitnp$ denotes the sample complexity. When $\q=\unifab$, the problem reduces to uniformity testing, and the sample complexity is denoted as $\Sut{\ab}{\dist}$.% We also use $\Sit$ to refer to the sample complexity in pure differential pivacy case.
\end{definition}
\noindent\textbf{Closeness Testing ({\tt CT}).} Given $\ns$ independent samples $\Xon$, and $\Yon$ from unknown distributions $\p$, and $\q$. An algorithm $\cA$ is an $(\ab,\alpha)$-closeness testing algorithm if 
when $\p=\q$, $\cA$ outputs $\p=\q$ with probability at least 0.9, and
when $\dtv{\p}{\q}\ge\alpha$, $\cA$ outputs $\p\ne\q$ with probability at least 0.9.
%Further, if $\cA$ is \newhz{$(\eps,\delta)$-differentially private, we say $\cA$ is an $(\ab,\alpha,\eps,\delta)$}-closeness testing algorithm. 
%Similar to identity testing, the sample complexity of closeness testing is defined below.
\begin{definition}
The sample complexity of DP-closeness testing, denoted $\Sct$, is the smallest $\ns$ for which there exists an $\eps$-DP algorithm $\cA$ that uses $\ns$ samples to achieve $(\ab,\dist)$-closeness testing. 
%\newhz{$(\ab,\dist,\eps,0)$}-closeness testing algorithm $\cA$. 
When privacy is not a concern, we denote the sample complexity of closeness testing as $\Sctnp$. 
\end{definition}
\section{Identity Testing}

 \label{sec:identity}

In this section, we prove the bounds for identity testing. Our main result is the following.

\begin{theorem}
\label{thm:main-identity}
%\[
%    \Sit =
%    \begin{cases}
%                                   \Theta\Paren{\frac{\sqrt k}{\alpha^2} + \frac{\ab^{1/2}}{\alpha \eps^{1/2}}}, & \text{when $m \le k$} \\
%                                   \Theta\Paren{\frac{\sqrt k}{\alpha^2} +\frac{\ab^{1/3}}{\alpha^{4/3} \eps^{2/3}}}, & \text{when $k< m \le \frac{k}{\alpha^2}$} \\
%                                   \Theta\Paren{\frac{\sqrt k}{\alpha^2} + \frac1{\alpha \eps}} & \text{when  $m \ge \frac{k}{\alpha^2}$.  } 
%    \end{cases}
%\]
%
%
\newhz{
\begin{center}
$\Sit = \Theta\Paren{\frac{k^{1/2}}{\dist^2}+\max\left\{\frac{\ab^{1/2}}{\alpha \eps^{1/2}}, \frac{\ab^{1/3}}{\alpha^{4/3} \eps^{2/3}}, \frac1{\dist\eps}\right\}}$.
\end{center}}
Or we can write it according to the parameter range,
\newhz{
\[
    \Sit =
    \begin{cases}
                                   \Theta\Paren{\frac{\sqrt k}{\alpha^2} + \frac{\ab^{1/2}}{\alpha \eps^{1/2}}}, & \text{if $k = \Omega \Paren{\frac1{\alpha^4}}$ and $k = \Omega \Paren{\frac1{\alpha^2 \eps}}$,} \\
              		  \Theta\Paren{\frac{\sqrt k}{\alpha^2} +\frac{\ab^{1/3}}{\alpha^{4/3} \eps^{2/3}}}, & \text{if $k = \Omega \Paren{\frac{\alpha}{\eps}}$ and $k = O \Paren{\frac1{\alpha^4}+\frac1{\alpha^2 \eps} }$,}\\
              		               \Theta\Paren{\frac{\sqrt k}{\alpha^2} + \frac1{\alpha \eps}}, & \text{if  $k = O \Paren{\frac{\alpha}{\eps}}$.}\\ 
    \end{cases}
\] }
\end{theorem}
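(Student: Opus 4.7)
The plan is to first invoke Theorem~\ref{thm:unif-identity} to reduce identity testing against an arbitrary $\q$ to uniformity testing on a domain of size at most $6\ab$, at the cost of only a constant factor in the sample complexity. Because the reduction of~\cite{Goldreich16} is a sample-wise randomized relabelling that does not depend on the raw data, it commutes with differential privacy by post-processing. From then on I focus on the uniformity testing problem over $[\ab]$ and prove matching upper and lower bounds; reinserting the reduction yields the statement.

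\textbf{Upper bound.} For uniformity I plan to privatize a low-sensitivity statistic. The natural first candidate is the empirical $\ell_1$ deviation $T(\Xon) = \sum_{x\in[\ab]} |M_x(\Xon) - \ns/\ab|$: changing one sample alters $M_x$ for at most two symbols by $1$ each, so its global sensitivity is at most $2$, and adding $\mathrm{Lap}(2/\eps)$ by Lemma~\ref{lem:laplace_mechanism} yields an $\eps$-DP tester. To match the non-private $\sqrt{\ab}/\dist^2$ rate I will instead work with the DGPP tester from~\cite{DiakonikolasGPP17} and verify that it too has sensitivity $O(1)$. The analysis then rests on three standard pieces: under the null the statistic concentrates around its mean with fluctuation matching the non-private threshold, under any $\dist$-far alternative its expectation shifts by a quantity that scales like $\ns\dist$ times an amplification factor that depends on the bucket occupancy, and the Laplace term contributes additive noise of magnitude $O(1/\eps)$. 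Choosing the threshold to balance (i) the non-private fluctuation gives the $\sqrt{\ab}/\dist^2$ term; (ii) directly overcoming the Laplace noise gives the $1/(\dist\eps)$ term; (iii) balancing partial concentration at the square-root and cube-root scales of the bias against Laplace noise yields the intermediate $\ab^{1/2}/(\dist \eps^{1/2})$ and $\ab^{1/3}/(\dist^{4/3}\eps^{2/3})$ terms.

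\textbf{Lower bound.} For the matching lower bound I apply the DP Le Cam method (Theorem~\ref{thm:le_cam}) to $\p_1 = \unifab^\ns$ versus $\p_2$ = the uniform mixture over the Paninski construction, in which $[\ab]$ is partitioned into $\ab/2$ pairs and each pair is independently assigned bias $\pm 2\dist/\ab$. The classical $1-\dtv{\p_1}{\p_2}$ piece of~\eqref{eqn:le-cam} reproduces the $\sqrt{\ab}/\dist^2$ term whenever $\ns = o(\sqrt{\ab}/\dist^2)$. For the privacy term I need couplings $(\Xon,\Yon)$ between $\p_1$ and $\p_2$ with small expected Hamming distance $D$, since~\eqref{eqn:le-cam} forces $\eps D = \Omega(1)$ for non-trivial testing. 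The key construction is an adaptive coupling: first draw $\Xon \sim \unifab^\ns$, record the split $(a_i, c_i-a_i)$ of samples within each pair $i$, and then choose the Paninski sign bit of pair $i$ to agree with the observed imbalance, only re-sampling those coordinates whose empirical bias differs from the target bias. A case split according to whether the per-pair occupancy $c_i \sim \ns/\ab$ is above or below the threshold $1/\dist^2$ controls whether the binomial fluctuation $\sqrt{c_i}$ already exceeds the required bias $2\dist c_i$, and summing the per-pair contributions produces three distinct scalings of $D$ that match the three privacy terms in the theorem.

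\textbf{Main obstacles.} The harder direction is the lower bound: designing the adaptive coupling and bounding its expected Hamming distance in the regime where $\ns/\ab$ is moderate requires a delicate calculation balancing the Gaussian-scale fluctuation of the binomial against the signed $\dist$-bias, and translating into the two intermediate rates $\ab^{1/2}/(\dist\eps^{1/2})$ and $\ab^{1/3}/(\dist^{4/3}\eps^{2/3})$. On the upper-bound side, the main technical work is to show that a single privatized DGPP-type statistic simultaneously achieves all three private regimes with only an $O(1/(n\eps))$ Laplace correction, rather than requiring a case-split between different testers for different parameter ranges.
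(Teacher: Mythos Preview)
Your plan matches the paper's architecture: reduce identity to uniformity via Theorem~\ref{thm:unif-identity}, privatize the DGPP empirical-$\ell_1$ statistic for the upper bound, and apply DP Le Cam to the Paninski mixture for the lower bound. Two places where the paper's execution differs from your sketch are worth noting.

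\textbf{Upper bound.} The paper does not add Laplace noise; it outputs $\Bern{\sigma(\eps Z)}$ where $Z$ is a centered and rescaled version of $S(\Xon)$ with sensitivity at most $1$ (Algorithm~\ref{algorithm_uniformity}), and Lemma~\ref{lem:sig-cont} gives $\eps$-DP directly. More importantly, the three privacy terms do not come from ``balancing partial concentration at different scales''; they all arise from the single requirement $\eps\,|\E[Z]|\gtrsim 1$, and the three regimes are precisely the three scalings of the mean gap in Lemma~\ref{lem:diako}: $\alpha^2 n^2/k$ when $n\le k$, $\alpha^2 n^{3/2}/k^{1/2}$ when $k<n\le k/\alpha^2$, and $\alpha n$ when $n\ge k/\alpha^2$. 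Solving $\eps$ times each of these $\gtrsim 1$ yields exactly the three privacy rates.

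\textbf{Lower bound coupling.} Your ``choose the sign to agree with the observed imbalance and re-sample minimally'' heuristic points at the right symmetry, but as stated it is not obviously a valid coupling with the correct $Q_2$-marginal, and bounding its Hamming cost would still be the hard part. The paper makes this precise differently. For a single pair with $t$ samples it proves (Lemma~\ref{stod}) that $\max\{N_2,t-N_2\}$ under the biased mixture stochastically dominates $\max\{N_1,t-N_1\}$ under the fair coin; this yields a monotone coupling whose expected Hamming cost equals $\E[\max\{N_2,t-N_2\}]-\E[\max\{N_1,t-N_1\}]$, and a second-order Taylor expansion (Lemma~\ref{lem:coup_p}) bounds this by $O(\alpha^2 t^{3/2})$ when $\alpha\sqrt t\le 1$. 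Summing over pairs with $t\sim 2n/k$ gives $D=O(\alpha^2 n^{3/2}/k^{1/2})$, matching the $k^{1/3}/(\alpha^{4/3}\eps^{2/3})$ rate. The sparse regime $n\le k$ uses the cruder TV-based coupling (Lemmas~\ref{lem:p1-p2}--\ref{lem:coupling-bin}) to get $D=O(\alpha^2 n^2/k)$, and the $1/(\alpha\eps)$ term is obtained separately from the two-point binary instance of Example~\ref{exm:coin}, not from the Paninski mixture.
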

Our bounds are tight up to constant factors in all parameters. To get the sample complexity for $(\eps,\delta)$-differential privacy, we can simply replace $\eps$ by $(\eps+\delta)$.

 % For proving upper bounds, by Lemma~\ref{lm:epsdelta}, it suffices to prove them only in pure differential privacy case, which means $\Sit = O\Paren{\frac{\sqrt{\absz}}{\alpha^2}+\frac{\sqrt{k}}{\alpha \sqrt{\eps}} +\frac{ k^{ 1/3 } } { \alpha^{4/3}  \eps^{ 2/3 } } + \frac1{\alpha \eps}}$. 
 In Theorem~\ref{thm:unif-identity} we will show a reduction from identity to uniformity testing under pure differential privacy. Using this, it will be enough to design algorithms for uniformity testing, which is done in Section~\ref{sec:upper-uniformity}.
 
 Moreover since uniformity testing is a special case of identity testing, any lower bound for uniformity will port over to identity, and we give such bounds in Section~\ref{sec:lower-identity}.

\subsection{Uniformity Testing Implies Identity Testing}
\label{sec:iden-to-unif}

The sample complexity of testing identity of any distribution is $O(\frac{\sqrt{\ab}}{\alpha^2})$, a bound that is tight for the uniform distribution. Recently~\cite{Goldreich16} proposed a scheme to reduce the problem of testing identity of distributions over $[\ab]$ for total variation distance $\alpha$ to the problem of testing uniformity over $[6k]$ with total variation parameter $\alpha/3$. In other words, they show that $\Sitnp\le \Sutnp{6\ab}{\dist/3}$. 
% Our upper bounds use $(\eps+\delta, 0)$-DP, and therefore we only need to prove this result for pure differential privacy. 
Building on~\cite{Goldreich16}, we prove that a similar bound also holds for differentially private algorithms.
\begin{theorem}
\label{thm:unif-identity}
$\Sit \le \Sut{6\ab}{\dist/3}.$
\end{theorem}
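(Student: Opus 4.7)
\noindent\textbf{Proof plan for Theorem~\ref{thm:unif-identity}.}
My plan is to show that Goldreich's non-private reduction from identity testing to uniformity testing~\cite{Goldreich16} can be implemented as a per-sample randomized transformation, and then invoke the post-processing property of differential privacy to port any DP uniformity tester into a DP identity tester with no loss in parameters.

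First, I would recall the structure of Goldreich's reduction. Given the description of $\q\in\Delta_\ab$, one constructs a (publicly known) randomized map $F_\q:[\ab]\to[6\ab]$, depending only on $\q$, with the following two properties: (i) if $X\sim\q$ and $X'=F_\q(X)$, then $X'\sim\unifabs{6\ab}$; (ii) for any $\p\in\Delta_\ab$, if $\p'$ denotes the pushforward distribution of $\p$ under $F_\q$, then $\dtv{\p'}{\unifabs{6\ab}}\ge \dtv{\p}{\q}/3$. This is precisely the reduction in~\cite{Goldreich16}, and since it only depends on the known $\q$, both parties (the data curator and the analyst) can implement it without touching the private data beyond a single coordinate at a time.

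Next, given an $\eps$-DP uniformity tester $\cA_U$ on $[6\ab]$ with sample complexity $\Sut{6\ab}{\dist/3}$, I define the identity tester $\cA_I$ as follows: on input $\Xon\sim\p^\ns$, independently draw $X_i' = F_\q(X_i)$ using fresh private randomness for each $i$, and output $\cA_U(X_1^{\prime\ns})$. Completeness and soundness are immediate from properties (i) and (ii) together with the soundness of $\cA_U$. For privacy, observe that the map $\Xon\mapsto X_1^{\prime\ns}$ is a randomized function applied coordinate-wise; hence if $\Xon$ and $\Yon$ are neighboring (differing in one coordinate), then there is a natural coupling of the randomness under which $X_1^{\prime\ns}$ and $Y_1^{\prime\ns}$ also differ in at most one coordinate. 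By the post-processing lemma (applied to the composite mechanism that first samples the per-coordinate randomness and then runs $\cA_U$), $\cA_I$ inherits $\eps$-DP from $\cA_U$.

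The plan for $(\eps,\delta)$-DP is identical, with the same post-processing argument. The main conceptual point---and the only real thing to verify carefully---is that Goldreich's reduction is genuinely sample-wise and uses no information beyond the public description of $\q$; once that is established, no obstacle remains, since the coupling-preserving coordinate-wise map interacts cleanly with the group/post-processing structure of DP. No additional sample overhead is incurred, so $\Sit\le \Sut{6\ab}{\dist/3}$.
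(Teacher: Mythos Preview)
Your approach is correct and matches the paper's: both apply Goldreich's per-sample map $F_\q$ coordinate-wise and then run the DP uniformity tester, with privacy inherited because the coordinate-wise randomized transformation sends neighboring datasets to neighboring datasets. The paper carries this out via an explicit calculation conditioning on $F_\q$'s randomness (your coupling argument written out); one small terminological note is that the principle you are invoking is neighbor-preserving \emph{pre}-processing of inputs, not the post-processing lemma, which concerns functions of a DP mechanism's \emph{output}.
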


\begin{proof} We first briefly describe the essential components of the construction of~\cite{Goldreich16}. Given an explicit distribution $\q$ over $[\ab]$, there exists a randomized function $F_{\q}:[\ab]\to[6\ab]$  such that if $X\sim\q$, then $F_{\q}(X)\sim \unifabs{6\ab}$, and if $X\sim\p$ for a distribution with $\dtv{\p}{\q}\ge\dist$, then the distribution of $F_{\q}(X)$ has a total variation distance of at least $\dist/3$ from $\unifabs{6\ab}$. 
Given $s$ samples $X_1^s$ from a distribution $\p$ over $[\ab]$. Apply $F_\q$ independently to each of the $X_i$ to obtain a new sequence $Y_1^s = F_\q(X_1^s)\ed F_q(X_1)\ldots F_{\q}(X_s)$. Let $\cA$ be an algorithm that distinguishes $\unifabs{6\ab}$ from all distributions with total variation distance at least $\dist/3$ from it. Then consider the algorithm $\cA^{'}$ that outputs $\p=\q$ if $\cA$ outputs ``$\p = \unifabs{6k}$'', and outputs $\p\ne \q$ otherwise. This shows that without privacy constraints, $\Sitnp \le \Sutnp{6\ab}{\dist/3}$ (See~\cite{Goldreich16} for details). 

We now prove that if further $\cA$ was an $\eps$-DP algorithm, then $\cA^{'}$ is also an $\eps$-DP algorithm. Suppose $X_1^s$, and $X_1^{'s}$ be two sequences in $[\ab]^s$ that could differ only on the last coordinate, namely $\Xos = \Xosmo X_s$, and $\Xosp = \Xosmo X_s^{'}$. 

Consider two sequences $\Yos= \Yosmo Y_s$, and $\Yosp= \Yosmo Y_s^{'}$ in $[6\ab]^s$ that could differ on only the last coordinate. Since $\cA$ is $\eps$-DP, 
\begin{align}
\cA(\Yos=\unifabs{6\ab}) \le \cA(\Yosp=\unifabs{6\ab})\cdot e^{\eps}.\label{eqn:dp-unif}
\end{align}
Moreover, since $F_{\q}$ is applied independently to each coordinate, 
\[
\probof{F_{\q}(\Xos) = \Yos} = \probof{F_{\q}(\Xosmo) = \Yosmo}\probof{F_{\q}(X_s) = Y_s}.
\]
Then, 
\begin{align}
&\ \ \ \ \probof{\cA^{'}(\Xos) = \q} \nonumber\\
& = \probof{\cA(F_{\q}(X_1^s)) = \unifabs{6\ab}}\nonumber\\
& = \sum_{\Yos}\ \probof{\mathcal{A}(Y^s_1) = \unifabs{6\ab}}\probof{F_{\q}(\Xos) = \Yos)}\nonumber\\
& = \sum_{\Yosmo} \sum_{Y_s\in[6\ab]}\ \probof{\mathcal{A}(Y^s_1) = \unifabs{6\ab}} \probof{F_{\q}(\Xosmo) = \Yosmo}\probof{F_{\q}(X_s) = Y_s}\nonumber\\
& = \sum_{\Yosmo}\probof{F_{\q}(\Xosmo) = \Yosmo}\Brack{ \sum_{Y_s\in[6\ab]}\ \probof{\mathcal{A}(Y^s_1) = \unifabs{6\ab}} \probof{F_{\q}(X_s) = Y_s}}.\label{eqn:first-term-prob}
\end{align}
Similarly, 
\begin{align}
\probof{\cA^{'}(\Xosp) = \q}
 \!\!=\!\! \sum_{\Yosmo}\probof{F_{\q}(\Xosmo)\!\! = \!\!\Yosmo}\Brack{ \sum_{Y_s^{'}\in[6\ab]} \probof{\mathcal{A}(\Yosp) = \unifabs{6\ab}} \probof{F_{\q}(X_s^{'}) = Y_s^{'}}}.\label{eqn:second-term-prob}
\end{align}
For a fixed $\Yosmo$, the term within the bracket in~\eqref{eqn:first-term-prob}, and~\eqref{eqn:second-term-prob} are both expectations over the final coordinate. However, by~\eqref{eqn:dp-unif} these expectations differ at most by a multiplicative $e^{\eps}$ factor. This implies that
\begin{align}
	\probof{\cA'(\Xos) = \q}\le \probof{\cA'(\Xosp) = \q}e^{\eps}.\nonumber
\end{align}
The argument is similar for the case when the testing output is \textbf{not $\unifabs{6\ab}$}, and is omitted here. We only considered sequences that differ on the last coordinate, and the proof remains the same when any of the coordinates is changed. This proves the privacy guarantees of the algorithm.
\end{proof}

\subsection{Identity Testing -- Upper Bounds} \label{sec:upper-uniformity}

%We had mentioned in the results that we can use the statistic of~\cite{Paninski08} to achieve the optimal sample complexity in the sparse case. This result is shown in Section~\ref{app:iden-lemma}.
 In this section, we will show that by privatizing the statistic proposed in~\cite{DiakonikolasGPP17} we can achieve the sample complexity in Theorem~\ref{thm:main-identity} for all parameter ranges. The procedure is described in Algorithm~\ref{algorithm_uniformity}.

Recall that $\Mltsmb{\smb}{\Xon}$ is the number of appearances of $\smb$ in $\Xon$. Let
\begin{align}\label{eqn:dggp-stat}
S(\Xon) \ed \frac12 \cdot \sum_{x=1}^{n} \absv{ \frac {\Mltsmb{\smb}{\Xon} } {\ns} -\frac1{k} },
\end{align}
be the TV distance from the empirical distribution to the uniform distribution. Let $\mu(p) = \expectation{S(\Xon)}$ when the samples are drawn from distribution $p$. They show the following separation result on the expected value of $S(\Xon)$. 
\begin{lemma}[\cite{DiakonikolasGPP17}] \label{lem:diako}
Let $p$ be a distribution over $[\absz]$ and $\dtv{\p}{\unifab} \ge \alpha$, then there is a constant $c$ such that 
\begin{center}
	$\mu(\p) - \mu (\unifab) \ge c \dist ^2 \min \left\{ \frac{\ns^2}{k^2}, \sqrt { \frac{\ns}{k} }, \frac1\alpha\right\}$.
\end{center}
%$$ \mu(\p) - \mu (\unifab) \ge  \left\{
%\begin{array}{lll}
%c{\alpha^2 \cdot  \frac{m^2}{k^2}},    &      & \text{when $m<k$}\\
%c{\alpha^2 \cdot \sqrt { \frac{m}{k} }},  &      &  \text{when $k< m \le \frac{k}{\alpha^2}$} \\
%c{\alpha},       &      & \text{when $~ m \ge \frac{k}{\alpha^2}$ }
%\end{array} \right. $$
\end{lemma}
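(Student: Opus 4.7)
The plan is to analyze the separation coordinate by coordinate. Rewrite the statistic as $S(X_1^n)=\frac{1}{2n}\sum_{x=1}^{k}|M_x - n/k|$ and therefore
\[
\mu(p)-\mu(u_k) \;=\; \frac{1}{2n}\sum_{x=1}^{k}\Bigl(\mathbb{E}_p|M_x - n/k| - \mathbb{E}_{u_k}|M_x - n/k|\Bigr).
\]
The first step I would take is to \emph{Poissonize}: replace the multinomial counts by independent counts $M_x \sim \mathrm{Poi}(np_x)$ (resp.\ $\mathrm{Poi}(n/k)$) using a standard coupling, absorbing the $O(1/\sqrt{n})$ loss into the constant. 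Under Poissonization the sum decouples into $\frac{1}{2n}\sum_x\bigl[h(np_x)-h(n/k)\bigr]$, where
\[
h(\lambda) \;:=\; \mathbb{E}_{Y\sim \mathrm{Poi}(\lambda)}\bigl|\,Y - n/k\,\bigr|.
\]

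Second, I would record the key structural properties of $h$. It is convex (as the expectation of a convex function of $\lambda$ via the coupling generator of the Poisson semigroup), and at $\lambda = n/k$ the one-sided derivatives straddle zero, so $n/k$ is essentially the minimizer of $h$. Writing $\delta_x := p_x - 1/k$ and noting $\sum_x \delta_x = 0$, the linear term in a Taylor expansion around $n/k$ vanishes upon summation, leaving the quadratic:
\[
\sum_x \bigl(h(np_x)-h(n/k)\bigr) \;\gtrsim\; n^{2}\sum_x \psi(\mu,\delta_x)\,\delta_x^{2},
\]
where $\psi(\mu,\delta_x)$ is a local curvature factor whose scaling depends on $\mu := n/k$. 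Two regimes need different inputs: for $\mu \gtrsim 1$ I would apply a Gaussian/Stein approximation, yielding $\psi \asymp 1/\sqrt{\mu}$, i.e.\ $h(\mu+t)-h(\mu)\asymp t^{2}/\sqrt{\mu}$ for $|t|\lesssim\sqrt{\mu}$; for $\mu \ll 1$ the Gaussian picture fails and I would work directly with the Poisson mass function, bounding $h(\mu+t)-h(\mu)$ using the discrete structure (the dominant contribution comes from the $Y=0,1,2$ terms, giving a bound of the form $\min\{t^2/\mu,\,t\}$).

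Third, I would combine these per-coordinate bounds with $\sum_x|\delta_x|=2\alpha$ in three parameter regimes. When $n/k$ is moderate, Cauchy--Schwarz gives $\sum_x \delta_x^2 \ge 4\alpha^2/k$, so the Gaussian estimate yields $\mu(p)-\mu(u_k)\gtrsim \alpha^2\sqrt{n/k}$. When $n$ is so small that the per-coordinate perturbation $n|\delta_x|$ is always below the sparse-Poisson threshold, the bound degrades to $\alpha^2 n^2/k^2$ by substituting $\psi \asymp n/k$. When $n$ is large enough that $\alpha \sqrt{n/k} \gtrsim 1$, I would instead bypass the Taylor route and apply a direct concentration argument: $\mathbb{E}_p|\hat p - u_k|_1 \to 2\alpha$ while $\mathbb{E}_{u_k}|\hat u_k - u_k|_1 = o(\alpha)$, via a McDiarmid-style bound together with $\mathbb{E}_{u_k}|M_x - n/k|\asymp\sqrt{n/k}$, giving the $\alpha^2\cdot(1/\alpha)=\alpha$ term.

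The main obstacle will be the sparse regime $n\ll k$. Here $h$ is neither smooth nor well approximated by a Gaussian at the scale of interest, and the per-coordinate perturbations $|np_x - n/k|$ can straddle the nonsmooth region. Carefully bookkeeping which coordinates lie in the ``well above threshold'' regime versus the ``near zero'' regime, and verifying that the worst-case adversarial choice of $p$ subject to $\sum |\delta_x| = 2\alpha$ spreads mass essentially uniformly over the support (so that the Cauchy--Schwarz step is not lossy), is the technical heart of the argument; everything else is bookkeeping and regime-matching against the three branches of the $\min$ in the statement.
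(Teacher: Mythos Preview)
The paper does not prove this lemma: it is stated with attribution to \cite{DiakonikolasGPP17} and used as a black box to analyze the privatized uniformity tester. There is therefore no ``paper's own proof'' to compare against.

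That said, your sketch is a reasonable outline of how one would establish such a separation result, and is in the spirit of the analysis in \cite{DiakonikolasGPP17}: Poissonize to decouple the coordinates, reduce to a one-dimensional function $h(\lambda)=\mathbb{E}_{Y\sim\mathrm{Poi}(\lambda)}|Y-n/k|$, exploit its convexity so that the linear term vanishes under $\sum_x\delta_x=0$, and then do a case analysis on $n/k$ to extract the three branches of the $\min$. The one place I would urge more care is the sparse regime $n\ll k$ leading to the $n^2/k^2$ term: the per-coordinate lower bound you want is not simply $\min\{t^2/\mu,t\}$, and the worst case over $p$ with $\sum_x|\delta_x|=2\alpha$ is \emph{not} the uniform spread (which would give $\sum_x\delta_x^2\asymp\alpha^2/k$) but rather concentrating the deviation on as few coordinates as possible. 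You correctly flag this as the technical heart, but be aware that the Cauchy--Schwarz step is in fact lossy here, and the argument in \cite{DiakonikolasGPP17} proceeds differently in this regime (working directly with the unseen-element count $\Phi_0$ rather than a second-order expansion of $h$).
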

\cite{DiakonikolasGPP17} used this result to show that thresholding $S(\Xon)$ at 0 is an optimal algorithm for identity testing. We first normalize the statistic to simplify the presentation of our DP algorithm. Let
\begin{equation}\label{eqn:statistic}
    Z (\Xon) \ed
    \begin{cases}
                                   k \Paren {S(\Xon) - \mu (\unifab) - \frac1{2} c \alpha^2 \cdot  \frac{\ns^2}{k^2} ~}, & \text{when $\ns \le k$,} \\
                                   \ns \Paren {S(\Xon) - \mu (\unifab) - \frac1{2} c \alpha^2 \cdot \sqrt { \frac{\ns}{k} } ~}, & \text{when $k< \ns \le \frac{k}{\alpha^2}$,} \\
                                   \ns \Paren {S(\Xon) - \mu (\unifab)  - \frac1{2} c \alpha}, & \text{when  $\ns \ge \frac{k}{\alpha^2}$. } 
    \end{cases}
\end{equation}
where $c$ is the constant in Lemma~\ref{lem:diako}, and $\mu(\unifab)$ is the expected value of  ${S(\Xon)}$ when $\Xon$ are drawn from uniform distribution.

\begin{algorithm}
    \caption{Uniformity testing}
    \label{algorithm_uniformity}
    \hspace*{\algorithmicindent} \textbf{Input:}  $\eps$, $\dist$,  i.i.d. samples $\Xon$ from $\p$
    \begin{algorithmic}[1] 
    \State Let $Z(\Xon)$ be evaluated from~\eqref{eqn:dggp-stat}, and~\eqref{eqn:statistic}.
    ~~~~~~~~
    \State Generate $Y\sim\Bern{\sigma\Paren{\eps\cdot Z}}$, $\sigma$ is the sigmoid function.
    \State{ {\bf if} $Y=0$, {\bf return} $\p=\unifab$, {\bf else}, {\bf return} $\p \neq \unifab$}.
\end{algorithmic}
\end{algorithm}

We now prove that this algorithm is $\eps$-DP. We need the following sensitivity result. 
\begin{lemma}
	$\Delta(Z)\le 1$ for all values of $\ns$, and $\ab$.  
\end{lemma}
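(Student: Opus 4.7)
The plan is to verify $\Delta(Z) \le 1$ in each of the three regimes defined in~\eqref{eqn:statistic} separately. For Cases 2 and 3 (both with $n > k$), where the multiplier applied to the parenthesized expression in $Z$ is $n$, it suffices to establish the generic bound $\Delta(S) \le 1/n$. For Case 1 ($n \le k$) the multiplier is $k$, so the generic bound would only yield $k/n \ge 1$; handling this regime will require a structural identity that exploits $n \le k$, and is the main conceptual obstacle.

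The generic bound is routine. A single-sample change from $a$ to $b$ (with $a\ne b$) modifies only $\Mltsmb{a}{\Xon}$ and $\Mltsmb{b}{\Xon}$, each by one. Since the map $m \mapsto |m/n - 1/k|$ is $1/n$-Lipschitz in $m$, the two affected summands of $\sum_x |\Mltsmb{x}{\Xon}/n - 1/k|$ each change by at most $1/n$. Combined with the factor $1/2$ in the definition of $S$, this gives $\Delta(S) \le 1/n$, so in Cases 2 and 3 we have $\Delta(Z) = n\cdot \Delta(S) \le 1$. The additive constants $\mu(\unifab)$ and the $\alpha$-dependent offset do not depend on $\Xon$ and hence do not affect sensitivity.

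For Case 1, the key observation is that when $n \le k$, every $x$ with $\Mltsmb{x}{\Xon}\ge 1$ satisfies $\Mltsmb{x}{\Xon}/n \ge 1/n \ge 1/k$, so the absolute value collapses: $|\Mltsmb{x}{\Xon}/n - 1/k| = \Mltsmb{x}{\Xon}/n - 1/k$ on the support, and equals $1/k$ off the support. Letting $L(\Xon)$ denote the number of distinct symbols in $\Xon$, and using $\sum_x \Mltsmb{x}{\Xon} = n$,
\[
\sum_{x=1}^{k} \Absvlu{\frac{\Mltsmb{x}{\Xon}}{n} - \frac1k} \;=\; \Paren{1 - \frac{L(\Xon)}{k}} + \frac{k - L(\Xon)}{k} \;=\; 2\Paren{1 - \frac{L(\Xon)}{k}}.
\]
Therefore $S(\Xon) = 1 - L(\Xon)/k$. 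Since a single-sample change alters $L$ by at most one (the removed symbol may or may not disappear from the support, and the inserted symbol may or may not be new), we obtain $\Delta(S) \le 1/k$, and hence $\Delta(Z) = k\cdot \Delta(S) \le 1$, which completes the proof.
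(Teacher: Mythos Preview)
Your proof is correct and follows essentially the same approach as the paper: the generic $\Delta(S)\le 1/n$ bound handles the $n>k$ regimes, and for $n\le k$ you derive the closed form $S(\Xon)=1-L(\Xon)/k$, which the paper writes equivalently as $\Phi_0(\Xon)/k$ with $\Phi_0=k-L$. The arguments are the same up to this cosmetic reparameterization.
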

%\vspace{-15pt}
\begin{proof}
	Recall that $S(\Xon) = \frac12 \cdot \sum_{x=1}^{n} \absv{ \frac {\Mltsmb{\smb}{\Xon} } {\ns} -\frac1{\ab} }$. Changing any one symbol changes at most two of the $\Mltsmb{\smb}{\Xon}$'s. Therefore at most two of the terms change by at most $\frac1\ns$. Therefore, $\Delta(S(\Xon))\le\frac1\ns$, for any $\ns$. When $\ns\le \ab$, this can be strengthened with observation that $\Mltsmb{\smb}{\Xon}/\ns\ge \frac1\ab$, for all $\Mltsmb{\smb}{\Xon}\ge1$. Therefore, 
	$S(\Xon) = \frac12\cdot \Paren{\sum_{\smb: \Mltsmb{\smb}{\Xon}\ge1}\Paren{\frac {\Mltsmb{\smb}{\Xon} } {\ns} -\frac1{k}}+ \sum_{\smb: \Mltsmb{\smb}{\Xon}=0} \frac1{k}} = \frac{\Phi_0(\Xon)}{\ab},$
	where $\Phi_0(\Xon)$ is the number of symbols not appearing in $\Xon$. This changes by at most one when one symbol is changed, proving the result. 
\end{proof}

Using this lemma, $\eps\cdot Z(\Xon)$ changes by at most $\eps$ when $\Xon$ is changed at one location. Invoking Lemma~\ref{lem:sig-cont}, the probability of any output changes by a multiplicative $\exp(\eps)$, and the algorithm is $\eps$-differentially private.

To prove the sample complexity bound, we first show that the mean of the test statistic is well separated using Lemma~\ref{lem:diako}. Then we use the concentration bound of the test statistic from~\cite{DiakonikolasGPP17} to get the final complexity.

%%%%%%%%%%%%%%%%%%%%%%%%%%%%%%%%%%%%%%%%

Because of the normalization in Equation~\ref{eqn:statistic} and lemma~\ref{lem:diako}, for $\Xon$ drawn from $\unifab$
\begin{equation}\label{eqn:statistic-expectation-uniform}
\expectation{Z (\Xon)} \le
\begin{cases}
- \frac1{2} c \alpha^2 \cdot  \frac{\ns^2}{k}, & \text{when $\ns \le k$,} \\
- \frac1{2} c \alpha^2 \cdot  { \frac{\ns^{3/2}}{k^{1/2}} }, & \text{when $k< \ns \le \frac{k}{\alpha^2}$,} \\
-\frac1{2} c\ns \alpha, & \text{when  $\ns \ge \frac{k}{\alpha^2}$ . } 
\end{cases}
\end{equation}

For $\Xon$ drawn from $\p$ with $\dtv{\p}{\unifab}\ge\alpha$, 
\begin{equation}\label{eqn:statistic-expectation-far}
\expectation{Z (\Xon)} \ge
\begin{cases}
\frac1{2} c \alpha^2 \cdot  \frac{\ns^2}{k}, & \text{when $\ns \le k$,} \\
\frac1{2} c \alpha^2 \cdot  { \frac{\ns^{3/2}}{k^{1/2}} }, & \text{when $k< \ns \le \frac{k}{\alpha^2}$,} \\
\frac1{2} c \ns\alpha, & \text{when  $\ns \ge \frac{k}{\alpha^2}$ . } 
\end{cases}
\end{equation}

In order to prove the utility bounds, we also need the following (weak) version of the result of~\cite{DiakonikolasGPP17}, which is sufficient to prove the sample complexity bound for constant error probability.
\begin{lemma}
	\label{lem:identityerror}
	There is a constant $C>0$, such that when $\ns>C\sqrt{\ab}/\dist^2$, then for $\Xon\sim \p$, where either $\p=\unifab$, or $\dtv{\p}{\unifab}\ge\alpha$, 
	\[
	\probof{\absv{{Z(\Xon)-\expectation{Z(\Xon)}}}> \frac{2\expectation{Z(\Xon)}}{3} }<0.01.
	\]
\end{lemma}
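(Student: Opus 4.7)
The plan is to reduce the claim to a concentration statement for the raw statistic $S(X_1^n)$ and then transfer it to $Z$ via the simple scalar relationship between them. Observe that within each of the three regimes defining $Z$, we can write $Z(X_1^n) = c_1 \cdot S(X_1^n) + c_2$ where $c_1 \in \{k, n\}$ and $c_2$ is a deterministic constant. Consequently $|Z(X_1^n) - \mathbb{E}[Z(X_1^n)]| = c_1 \cdot |S(X_1^n) - \mathbb{E}[S(X_1^n)]|$, so it suffices to show $|S - \mathbb{E}[S]|$ is at most a $\frac{2}{3}|\mathbb{E}[Z]|/c_1$ fraction of the appropriate separation with probability at least $0.99$.

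First I would invoke the concentration/variance analysis of~\cite{DiakonikolasGPP17}, which is exactly what powers the sample-complexity upper bound $n = O(\sqrt{k}/\alpha^2)$ for the non-private test based on $S$. Specifically, that paper establishes a variance bound on $S(X_1^n)$ (via Poissonization and a symbol-by-symbol analysis of $|M_x/n - 1/k|$) which, combined with Chebyshev's inequality, shows that for $n \geq C\sqrt{k}/\alpha^2$ with $C$ sufficiently large, $|S - \mathbb{E}[S]|$ is at most one third of the separation guaranteed by Lemma~\ref{lem:diako} with probability at least $0.99$. (The bounded-difference/McDiarmid inequality using $\Delta(S) \leq 1/n$ from the sensitivity lemma gives a weaker but qualitatively similar bound; the variance argument of~\cite{DiakonikolasGPP17} is the tight version.)

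Next I would carry out the three-case verification. In the regime $n \leq k$, we have $c_1 = k$ and $|\mathbb{E}[Z]| \geq \frac{1}{2}c\alpha^2 n^2/k$ by~\eqref{eqn:statistic-expectation-uniform} and~\eqref{eqn:statistic-expectation-far}, so the requirement becomes $|S - \mathbb{E}[S]| \leq \frac{1}{3}c\alpha^2 n^2/k^2$; in the regime $k < n \leq k/\alpha^2$ we need $|S - \mathbb{E}[S]| \leq \frac{1}{3}c\alpha^2\sqrt{n/k}/n$; in the regime $n \geq k/\alpha^2$ we need $|S - \mathbb{E}[S]| \leq \frac{1}{3}c\alpha$. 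In each case, the variance bound of~\cite{DiakonikolasGPP17} yields exactly the right threshold as soon as $n \geq C\sqrt{k}/\alpha^2$ with a sufficiently large constant $C$, because this is precisely the sample size at which their two-sided testing argument achieves constant error.

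The main obstacle is the first regime ($n \leq k$), where $S$ is essentially $\Phi_0(X_1^n)/k$, the fraction of unseen symbols, so the naive McDiarmid bound from $\Delta(S) \leq 1/n$ is too weak; one needs the sharper variance estimate from~\cite{DiakonikolasGPP17} (obtained by writing $S$ as a sum over symbols, Poissonizing, and controlling the resulting independent terms). Once that input is in hand, the remainder is bookkeeping: apply Chebyshev, verify the three threshold inequalities, and union-bound against both the uniform and the $\alpha$-far alternative to get the stated $0.01$ failure probability.
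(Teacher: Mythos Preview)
Your plan is correct and follows the same case-by-case skeleton as the paper, but the concentration tool differs. The paper does not pass to $S$ or use Chebyshev; it applies McDiarmid-type bounded-difference inequalities directly to $Z$ (using $\Delta(Z)\le 1$). For the two dense regimes $k<n\le k/\alpha^2$ and $n>k/\alpha^2$, the standard McDiarmid inequality with $B=1$ already gives sub-Gaussian tails $\exp(-\Omega(n^2\alpha^4/k))$ and $\exp(-\Omega(n\alpha^2))$ respectively, which suffices once $n\ge C\sqrt{k}/\alpha^2$. For the sparse regime $n\le k$, the paper invokes the \emph{Bernstein} version of McDiarmid with the per-coordinate conditional-variance input $\mathrm{Var}_{X_j}[Z(x_1,\dots,X_j,\dots,x_n)]\le n/k$ taken from \cite{DiakonikolasGPP17}; this is not a Poissonized symbol-by-symbol total-variance bound as you describe, but a conditional variance bound tailored for Bernstein-McDiarmid. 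Your Chebyshev route would also go through (Efron--Stein converts the same conditional input into $\mathrm{Var}[Z]\le n^2/k$, and Chebyshev then recovers the same $\sqrt{k}/\alpha^2$ threshold), so the difference is one of packaging rather than substance. The paper's choice buys exponential tails, which are not needed for the stated $0.01$ bound but sidestep any fuss about constants.
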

\noindent The proof of this result is in Section~\ref{app:iden-lemma}.

We now proceed to prove the sample complexity bounds. Assume that $\ns>C\sqrt{\ab}/\alpha^2$, so Lemma~\ref{lem:identityerror} holds. Suppose $\eps$ be any real number such that $\eps|\expectation{Z(\Xon)}|>3\log {100}$. Let $\cA(\Xon)$ be the output of Algorithm~\ref{algorithm_uniformity}. Denote the output by 1 when $\cA(\Xon)$ is ``$\p\ne\unifab$'', and 0 otherwise. Consider the case when $\Xon\sim \p$, and $\dtv{\p}{\unifab}\ge\alpha$. Then, 
\begin{align*}
\probof{\cA(\Xon) = 1} &\ge  \probof{\cA(\Xon) = 1 \text{ and } Z(\Xon)>\frac{\expectation{Z(\Xon)}}3}\\
& =\probof{Z(\Xon)>\frac{\expectation{Z(\Xon)}}3}\cdot \probof{\cA(\Xon) = 1|Z(\Xon)>\frac{\expectation{Z(\Xon)}}3}\\
& \ge 0.99\cdot \probof{B\Paren{\sigma(\eps\cdot \frac{\expectation{Z(\Xon)}}3)}=1}\\
&\ge 0.99 \cdot 0.99 \ge 0.9,
\end{align*}
where the last step uses that $\eps \expectation{Z(\Xon)}/3>\log {100}$, along with Lemma~\ref{lem:sig-cont}. The case of $\p=\unifab$ follows from the same argument. 

Therefore, the algorithm is correct with probability at least $0.9$, whenever, $\ns>C\sqrt{\ab}/\dist^2$, and $\eps|\expectation{Z(\Xon)}|>3\log {100}$. By ~\eqref{eqn:statistic-expectation-far}, note that $\eps|\expectation{Z(\Xon)}|>3\log {100}$ is satisfied when, 
\begin{align*}
c \alpha^2 \cdot  {\ns^2}/{k}\ge&(6\log {100})/\eps ,\ \text{ for $\ns \le k$}, &\\
c \alpha^2 \cdot  { {\ns^{3/2}}/{k^{1/2}} }\ge&(6\log {100})/\eps,  \ \text{ for $k< \ns \le {k}/{\alpha^2}$}, &\\
c \alpha\cdot \ns\ge& (6\log {100})/\eps,\ \text{ for  $\ns \ge {k}/{\alpha^2}$}.&
\end{align*}
This gives the upper bounds for all the three regimes of $\ns$.

\subsection{Sample Complexity Lower bounds for Uniformity Testing}\label{sec:lower-identity}

In this section, we will show the lower bound part of Theorem~\ref{thm:main-identity}. The first term is the lower bound without privacy constraints, proved in~\cite{Paninski08}. In this section, we will prove the terms associated with privacy. 

Our lower bound is based on the following theorem, which can be viewed as a direct corollary of our DP Le Cam's method (Theorem~\ref{thm:le_cam}).

\begin{theorem}
\label{thm:coupling}
Suppose there is a coupling between $\p$ and $\q$ over $\cX^\ns$, such that $\expectation{\ham{\Xon}{\Yon}} \le D$ where $\Xon \sim p, \Yon \sim q$. Then, any $(\eps,\delta)$-differentially private \newhz{hypothesis testing} algorithm $\cA:\cX^\ns\to\{\p,\q\}$ on $\p$ and $\q$
%with $\probof{\alg\Paren{\Xon}=\q}<0.1$ and $\probof{\alg\Paren{\Yon}=\p}<0.1$
must satisfy $\eps+\delta = \Omega\Paren{\frac1{D}}$ .%.\todo{J: The statement is correct, but is it too DENSE?}
\end{theorem}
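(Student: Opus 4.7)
The plan is to derive Theorem~\ref{thm:coupling} as an essentially immediate consequence of the second lower bound in Theorem~\ref{thm:le_cam}. First, I would instantiate Theorem~\ref{thm:le_cam} with the singleton hypothesis classes $\cP_1 = \{\p\}$ and $\cP_2 = \{\q\}$ (trivially convex, so $\p_1 = \p$ and $\p_2 = \q$ are valid choices inside their respective convex hulls), using the given coupling $(\Xon, \Yon)$ with $\mathbb{E}[\ham{\Xon}{\Yon}] \le D$. This directly gives, for any $(\eps,\delta)$-DP algorithm $\cA$,
\begin{equation*}
P_e(\cA, \{\p\}, \{\q\}) \;\ge\; \tfrac{1}{2}\bigl(0.9\, e^{-10\eps D} - 10 D \delta\bigr).
\end{equation*}

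Second, I would use the definition of a hypothesis testing algorithm: by assumption $\cA$ distinguishes $\p$ from $\q$ with error at most a fixed constant (say, failure probability below $1/3$, matching the constant-$0.9$ success convention used throughout Section~\ref{sec:testing_preliminaries}). Hence $P_e \le 1/3$, which rearranged gives
\begin{equation*}
0.9\, e^{-10\eps D} - 10 D \delta \;\le\; \tfrac{2}{3}.
\end{equation*}

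Third, I would convert this inequality into the desired asymptotic bound $\eps + \delta = \Omega(1/D)$ by a short contradiction argument. Suppose $\eps + \delta < c/D$ for a constant $c$ to be chosen. Then both $\eps D < c$ and $\delta D < c$, so
\begin{equation*}
0.9\, e^{-10\eps D} - 10 D \delta \;>\; 0.9\, e^{-10 c} - 10 c.
\end{equation*}
The right-hand side is a continuous function of $c$ equal to $0.9$ at $c=0$, so for any sufficiently small constant (e.g., $c = 0.01$ gives $0.9\, e^{-0.1} - 0.1 \approx 0.714 > 2/3$), it exceeds $2/3$, contradicting the previous display. Therefore $\eps + \delta \ge c/D = \Omega(1/D)$.

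There is no real obstacle here; the statement is a corollary, and the only subtle point is fixing the constant-success convention for what a ``hypothesis testing algorithm'' means, which is consistent with the rest of the chapter. I would keep the writeup short and emphasize that the role of the coupling is exactly to let us plug $D$ into the group-privacy term of Theorem~\ref{thm:le_cam}, so that better couplings (smaller $D$) yield stronger privacy lower bounds.
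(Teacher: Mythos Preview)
Your proposal is correct and follows essentially the same approach as the paper: both apply Theorem~\ref{thm:le_cam} to the singleton classes $\{\p\}$ and $\{\q\}$, use the constant error-probability assumption of a testing algorithm, and extract $\eps+\delta=\Omega(1/D)$ from the resulting inequality. The only cosmetic differences are that the paper uses the $0.9$ success convention (error $\le 0.1$) rather than $1/3$, and concludes via a case split (``either $\eps D$ or $\delta D$ is bounded below'') instead of your explicit contradiction with a numerical constant; both are equivalent.
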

\begin{proof}
Note that the error probability is smaller than $0.1$. By Theorem~\ref{thm:le_cam}, we have $ e^{-10 \eps\cdot D} - 10D\delta <0.1$.
Hence, either $e^{\eps\cdot 10 D} = \Omega(1)$ or $10 D \delta = \Omega(1)$, which implies that 
$D = \Omega\Paren{\min \left\{\frac1\eps, \frac1\delta \right\}} = \Omega\Paren{\frac1{\eps + \delta}},$
proving the theorem.
\end{proof}
%%%%%%%%%%%%%%%%%%%%%%%%%%%%%%%%%%%%%%%%%%%%%%%%%%%%%%%%%%%%%%%%%%%%%%%%%%%%%%

The simplest argument is for $\ns \ge \frac{k}{\alpha^2}$, which hopefully will give you a sense of how coupling argument works. We consider the case of binary identity testing where the goal is to test whether the bias of a coin is $1/2$ or $\alpha$-far from $1/2$. This is a special case of identity testing for distributions over $[k]$ (when $k-2$ symbols have probability zero). This is strictly harder than the problem of distinguishing between $\Bern{1/2}$ and $\Bern{1/2+\alpha}$. The coupling given in Example~\ref{exm:coin} has expected hamming distance of $\alpha \ns$. Hence combing with Theorem~\ref{thm:coupling}, we get a lower bound of $\Omega({\frac{1}{\dist \eps}})$.

We now consider the cases $\ns \le k$ and $k< \ns \le \frac{k}{\alpha^2}$.

%The final term is a lower bound for testing identity to $\Bern{0.5}$, which is easier than uniformity testing by Theorem~\ref{thm:unif-identity}. 

To this end, we invoke LeCam's two point theorem, and design a hypothesis testing problem that will imply a lower bound on uniformity testing. The testing problem will be to distinguish between the following two cases. 

{\bf Case 1:} We are given $\ns$ independent samples from the uniform distribution $\unifab$.

{\bf Case 2:} Generate a distribution $\p$ with $\dtv{\p}{\unifab}\ge\dist$ according to some prior over all such distributions. We are then given $\ns$ independent samples from this distribution $\p$. 

Le Cam's two point theorem~\cite{Yu97} states that any lower bound for distinguishing between these two cases is a lower bound on identity testing problem. 

We now describe the prior construction for {\bf Case 2}, which is the same as considered by~\cite{Paninski08} for lower bounds on identity testing without privacy considerations. For each $\textbf{z} \in\{\pm1\}^{\ab/2}$, define a distribution $\p_{\textbf{z}}$ over $[\ab]$ such that 
\begin{align}
\p_{\textbf{z}}(2i-1) = \frac{1+\textbf{z}_i\cdot 2\dist}{\ab}, \text{ and } \p_{\textbf{z}}(2i) = \frac{1-\textbf{z}_i\cdot 2\dist}{\ab}.\nonumber
\end{align}
Then for any $\textbf{z}$, $\dtv{P_{\textbf{z}}}{\unifab}= \alpha$. For {\bf Case 2}, choose  $\p$ uniformly from these $2^{\ab/2}$ distributions. Let $Q_2$ denote the distribution on $[\ab]^\ns$ by this process. In other words, $Q_2$ is a mixture of product distributions over $[\ab]$. 

In {\bf Case 1}, let $Q_1$ be the distribution of $\ns$ $i.i.d.$ samples from $\unifab$. 

To obtain a sample complexity lower bound for distinguishing the two cases, we will design a coupling between $Q_1$, and $Q_2$, and bound its expected Hamming distance. While it can be shown that the Hamming distance of the coupling between the uniform distribution with any \emph{one} of the $2^{\ab/2}$ distributions grows as $\alpha\ns$, it can be significantly smaller, when we consider the mixtures. In particular, the following lemma shows that there exist couplings with bounded Hamming distance. 

\begin{lemma}\label{lem:coupling-hamming}
	There is a coupling between $\Xon$ generated by $Q_1$, and $\Yon$ by $Q_2$ such that 
\begin{center}
   $\expectation{\ham{\Xon}{\Yon}} \le C \cdot \dist^2 \min \{ \frac{\ns^2}{\ab}, \frac{\ns^{3/2}}{{\ab}^{1/2}} \}$.
%    \begin{cases}
%                                   8\frac{\ns^2\dist^2}{\ab}, & \text{when $m \le k$} \\
%                                    C\cdot \dist^2\frac{\ns^{3/2}}{{\ab}^{1/2}}, & \text{when $k< m \le \frac{k}{\alpha^2}$}.
%    \end{cases}
\end{center}
\end{lemma}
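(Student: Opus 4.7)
The plan is to construct the coupling hierarchically, exploiting that for every $\mathbf{z} \in \{\pm 1\}^{k/2}$ the bucket $B_i := \{2i-1, 2i\}$ is hit with probability $p_{\mathbf{z}}(2i-1) + p_{\mathbf{z}}(2i) = 2/k$, i.e.\ independent of $\mathbf{z}$, so the bucket sequence has the same multinomial distribution under $Q_1$ and $Q_2$. Concretely, I will draw $\Xon$ from $Q_1$, record the bucket sequence $(b_1,\ldots,b_n)$ and counts $m_i := |\{j : b_j = i\}|$, draw $\mathbf{z} \in \{\pm 1\}^{k/2}$ whose coordinates will be allowed to depend on $\Xon$ (but stay marginally uniform), reuse the same bucket sequence for $\Yon$, and inside each bucket draw $Y$-orientations from $\Bern{1/2 + \alpha z_i}$. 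Since $b_j^X = b_j^Y$ for every $j$, we get $\ham{\Xon}{\Yon} = \sum_i \#\{j \in B_i : X_j \ne Y_j\}$.

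Inside bucket $i$, call $a_i^X$ and $a_i^Y$ the numbers of samples equal to $2i-1$; by permuting orientations inside the bucket (a free choice in the coupling), the per-bucket Hamming contribution equals $|a_i^X - a_i^Y|$. Moreover, since we can also let $z_i$ depend on $(a_i^X, a_i^Y)$ through its posterior under the $Q_2$ model, any joint distribution of $(a_i^X, a_i^Y)$ with the correct marginals $a_i^X \sim \bino{m_i}{1/2}$ and $a_i^Y \sim \tfrac12 \bino{m_i}{1/2+\alpha} + \tfrac12 \bino{m_i}{1/2-\alpha}$ can be realized. I pick the $W_1$-optimal one, so the per-bucket expected Hamming cost becomes the Wasserstein-$1$ distance $W_1^{(m_i,\alpha)}$ between these one-dimensional distributions.

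The main technical step is to show $W_1^{(m,\alpha)} \le C\alpha^2 m^{3/2}$ for $m \ge 2$, with $W_1^{(m,\alpha)} = 0$ for $m \le 1$ (there averaging $\Bern{1/2 + \alpha z}$ over $z \in \{\pm 1\}$ erases the bias, so the two marginals literally coincide). Writing
\[
R_k \;:=\; \frac{\Pr(a^Y = k)}{\Pr(a^X = k)} \;=\; \tfrac12\bigl[(1+2\alpha)^k(1-2\alpha)^{m-k} + (1-2\alpha)^k(1+2\alpha)^{m-k}\bigr]
\]
and Taylor-expanding gives $R_k - 1 = 2\alpha^2\bigl[(2k-m)^2 - m\bigr] + O(\alpha^4 (2k-m)^4)$, so to leading order the signed excess mass $\Pr(a^Y=k) - \Pr(a^X=k)$ is proportional to $\alpha^2 [(2k-m)^2 - m]\Pr(a^X=k)$ --- positive in the tails $|2k-m| > \sqrt m$ and negative in the center. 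Since $\EE[(2a^X-m)^2] = m$, the total mass to be transported is $\Theta(\alpha^2 m)$ and the typical distance from the center to the tails is $\Theta(\sqrt m)$, yielding transport cost $\Theta(\alpha^2 m^{3/2})$. The main obstacle is extending this asymptotic calculation to a uniform bound that also handles the mildly bimodal regime $\alpha\sqrt m \gtrsim 1$, where higher-order Taylor corrections matter; in that regime the elementary monotone-coupling estimate $W_1^{(m,\alpha)} \le \alpha m$ is in fact stronger than $\alpha^2 m^{3/2}$ (since $\alpha m \le \alpha^2 m^{3/2}$ whenever $\alpha\sqrt m \ge 1$), so combining the two gives the uniform bound $W_1^{(m,\alpha)} \le C\alpha^2 m^{3/2}$.

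Summing across buckets, $\expectation{\ham{\Xon}{\Yon}} \le C\alpha^2\, \EE\bigl[\sum_i m_i^{3/2}\,\II[m_i \ge 2]\bigr]$. Two standard moment estimates for $m_i \sim \bino{n}{2/k}$ finish the argument: when $n \le k$, $\Pr(m_i \ge 2) = O(n^2/k^2)$ and the dominant contribution comes from the event $m_i = 2$, giving a sum of $O(n^2/k)$; when $n \ge k$, $\EE[m_i^{3/2}] = O((n/k)^{3/2})$, giving a sum of $O(n^{3/2}/\sqrt k)$. Taking the minimum of the two bounds yields $\expectation{\ham{\Xon}{\Yon}} \le C\alpha^2\min\!\bigl(n^2/k,\, n^{3/2}/\sqrt k\bigr)$, as claimed.
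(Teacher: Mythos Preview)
Your overall strategy---the bucket decomposition via the pairs $\{2i-1,2i\}$ (whose total mass $2/k$ is independent of $\mathbf z$), and the reduction of the per-bucket Hamming cost to the optimal coupling of the counts $a_i^X \sim \mathrm{Bin}(m_i,\tfrac12)$ and $a_i^Y \sim \tfrac12\mathrm{Bin}(m_i,\tfrac12+\alpha)+\tfrac12\mathrm{Bin}(m_i,\tfrac12-\alpha)$---is exactly the paper's. In fact, by the symmetry of both laws about $m/2$, your $W_1^{(m,\alpha)}$ coincides with $\EE[\max\{N_2,m-N_2\}]-\EE[\max\{N_1,m-N_1\}]$, which is precisely the quantity the paper controls after proving stochastic dominance of $\max\{N_2,m-N_2\}$ over $\max\{N_1,m-N_1\}$ and invoking the monotone coupling.

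The gap is the bound $W_1^{(m,\alpha)}\le C\alpha^2 m^{3/2}$ in the regime $\alpha\sqrt m\lesssim 1$. Your density-ratio expansion and ``mass $\Theta(\alpha^2 m)$ moved over distance $\Theta(\sqrt m)$'' heuristic identify the right order, but do not prove it: from total variation $O(\alpha^2 m)$ alone one only gets $W_1\le O(\alpha^2 m)\cdot m=O(\alpha^2 m^2)$, which is merely the paper's simpler $n\le k$ bound (there obtained via $\dtv{P_1}{P_2}\le 2t\alpha^2$ and the trivial coupling giving $4\alpha^2(t^2-t)$). To get $m^{3/2}$ you must actually control the transport, and the paper does this by a different Taylor expansion: with $f(\alpha):=\EE[\max\{N_2(\alpha),m-N_2(\alpha)\}]$, stochastic dominance gives $f'(0)=0$, so $f(\alpha)-f(0)=\tfrac12\alpha^2 f''(\beta)$ for some $\beta\in[0,\alpha]$; then $f''(\beta)$ is computed explicitly and bounded by absolute central moments $\EE|X-m/2|$ and $\EE|X-m/2|^3$ of a $\mathrm{Bin}(m,\tfrac{1+\beta}{2})$, yielding $W_1\le C(\alpha^2 m^{3/2}+\alpha^4 m^{5/2}+\alpha^5 m^3)$. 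Your fallback $W_1\le \alpha m$ when $\alpha\sqrt m\gtrsim 1$ is a clean device to absorb the two higher-order terms into $C\alpha^2 m^{3/2}$; the paper instead keeps them and bounds their bucket sums $\sum_i \alpha^4 m_i^{5/2}$, $\sum_i \alpha^5 m_i^3$ separately via binomial moment estimates, using $\alpha\sqrt{n/k}\le 1$ in the regime $k\le n\le k/\alpha^2$.
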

The lemma is proved in Appendix~\ref{app:lb-unif}. Now applying Theorem~\ref{thm:coupling}, we get the bound in Theorem~\ref{thm:main-identity}.

\section{Closeness Testing} \label{sec:close:}

Recall the closeness testing problem from Section~\ref{sec:testing_preliminaries}, and the tight non-private bounds from Table~\ref{fig:badass:table}. Our main result in this section is the following theorem characterizing the sample complexity of differentially private algorithms for closeness testing. 
\newhz{
\begin{theorem}
	\label{thm:close-main}	
		\[
		\Sct= \Theta \Paren{\frac{\ab^{1/2}}{\dist^2}+\frac{\ab^{2/3 }}{\dist^{4/3}}+ \frac1{\dist\eps}+\frac{\ab^{1/2}}{\dist\eps^{1/2}}+\frac{\ab^{1/3}}{\dist^{4/3 }\eps^{2/3} }}.
		\]
\end{theorem}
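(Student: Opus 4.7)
The plan is to split the proof into matching upper and lower bounds. The lower bound will be essentially for free, since closeness testing subsumes identity testing (fix $\q$ to be known), so every $\eps$-DP lower bound from Theorem~\ref{thm:main-identity} — namely $\Omega(\frac{1}{\dist\eps})$, $\Omega(\frac{\ab^{1/2}}{\dist\eps^{1/2}})$, and $\Omega(\frac{\ab^{1/3}}{\dist^{4/3}\eps^{2/3}})$ — transfers immediately, while the non-private terms $\Omega(\frac{\ab^{1/2}}{\dist^2}+\frac{\ab^{2/3}}{\dist^{4/3}})$ come from~\cite{ChanDVV14}. So all effort goes into the upper bound, which I will obtain by privatizing the empirical-TV statistic of~\cite{DiakonikolasGKPP20} in exactly the same style as Algorithm~\ref{algorithm_uniformity}.

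First I would define
\[
S(\Xon,\Yon) \ed \tfrac12 \sum_{x=1}^{\ab}\left|\tfrac{\Mltsmb{x}{\Xon}}{\ns} - \tfrac{\Mltsmb{x}{\Yon}}{\ns}\right|,
\]
and invoke~\cite{DiakonikolasGKPP20} to get an analogue of Lemma~\ref{lem:diako}: a lower bound on $\EE[S]$ under $\dtv{\p}{\q}\ge\dist$ relative to $\EE[S]$ under $\p=\q$, whose three regimes in $\ns$ produce the three non-private/privacy terms once balanced. Then the key sensitivity step: changing a single coordinate in either $\Xon$ or $\Yon$ perturbs at most two of the $\Mltsmb{x}{\cdot}$ counts by $1$, so $\Delta(S)\le 1/\ns$. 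Normalize $S$ to a statistic $Z$ (by subtracting the null-case mean and half the separation, and scaling by $\ns$, exactly as in~\eqref{eqn:statistic}) so that $\Delta(Z)\le 1$, and release $Y\sim\Bern{\sigma(\eps\cdot Z)}$. Privacy then follows from Lemma~\ref{lem:sig-cont} verbatim.

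For utility I would reuse the template of Section~\ref{sec:upper-uniformity}: by the concentration result of~\cite{DiakonikolasGKPP20} (the closeness-testing analogue of Lemma~\ref{lem:identityerror}), once $\ns \gtrsim \ab^{1/2}/\dist^2 + \ab^{2/3}/\dist^{4/3}$ the statistic $Z$ concentrates within a constant factor of its mean with probability $0.99$; correctness of the sigmoid test then requires $\eps\cdot|\EE[Z]|\gtrsim 1$, and plugging the three lower bounds on $|\EE[Z]|$ (one per regime of $\ns$ versus $\ab$ and $\dist$) yields exactly the three privacy terms $\frac{\ab^{1/2}}{\dist\eps^{1/2}}$, $\frac{\ab^{1/3}}{\dist^{4/3}\eps^{2/3}}$, and $\frac{1}{\dist\eps}$. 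A union bound between the two sample sets finishes the argument.

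The main obstacle I expect is verifying that the mean-separation and concentration statements of~\cite{DiakonikolasGKPP20} actually give the three-regime piecewise lower bound on $\EE[Z]$ matching the three privacy terms — the identity-testing case in Section~\ref{sec:upper-uniformity} benefited from the clean regime split in Lemma~\ref{lem:diako}, and the closeness version has an additional transition point governed by the $\ab^{2/3}/\dist^{4/3}$ regime. The rest — sensitivity, privacy via sigmoid mechanism, and the reduction-based lower bound — is essentially mechanical once the right statistic and normalization are chosen.
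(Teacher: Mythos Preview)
Your lower-bound half is fine and matches the paper exactly: reduce closeness to identity (Lemma~\ref{lem:close-to-ident}) to inherit the three privacy terms from Theorem~\ref{thm:main-identity}, and add the non-private $\Omega(\ab^{2/3}/\dist^{4/3}+\ab^{1/2}/\dist^2)$ from~\cite{ChanDVV14}.

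The upper bound, however, has a real gap in the choice of statistic. The two-sample empirical TV
\[
S(\Xon,\Yon)=\tfrac12\sum_x\Bigl|\tfrac{\Mltsmb{x}{\Xon}}{\ns}-\tfrac{\Mltsmb{x}{\Yon}}{\ns}\Bigr|
\]
cannot be centered in the way you describe. In the uniformity-testing analogue~\eqref{eqn:statistic}, the shift $\mu(\unifab)$ is a \emph{known, computable} constant because the null hypothesis is a single fixed distribution. In closeness testing the null is $\p=\q$ with both unknown, and $\EE_{\p=\q}[S]$ depends on $\p$: it is close to $1$ when $\p$ is near uniform and $\ns\ll\ab$, and equals $0$ when $\p$ is a point mass. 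So ``subtracting the null-case mean'' is not an operation available to the algorithm, and without it there is no threshold to compare against. The sensitivity and sigmoid steps are fine, but the test itself has no utility guarantee.

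What~\cite{DiakonikolasGKPP20} actually provide, and what the paper privatizes, is the \emph{four}-sample self-normalized statistic
\[
Z=\sum_{i}\Bigl(|\Mltsmb{i}{\Xon}-\Mltsmb{i}{\Yon}|+|\Mltsmb{i}{\Xonn}-\Mltsmb{i}{\Yonn}|-|\Mltsmb{i}{\Xon}-\Mltsmb{i}{\Xonn}|-|\Mltsmb{i}{\Yon}-\Mltsmb{i}{\Yonn}|\Bigr),
\]
whose crucial property is $\EE[Z]=0$ under $\p=\q$ for \emph{every} $\p$; the $|\Mltsmb{i}{X}-\Mltsmb{i}{\tilde X}|$ terms estimate and cancel the unknown null mean for you. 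With this $Z$ one has sensitivity $\le 2$, the separation $\EE[Z]\ge C_1\min(\ns\dist,\ns^2\dist^2/\ab,\ns^{3/2}\dist^2/\sqrt{\ab})$ under the alternative, and absolute concentration $|Z-\EE[Z]|\le C_1\sqrt{\ns}$; these three regimes then yield exactly the three privacy terms, as you anticipated. The rest of your template (sigmoid mechanism, union bound) goes through unchanged once you swap in the correct statistic.
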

}
{
This theorem shows that  our bounds are tight up to constant factors in all parameters. 

\subsection{Closeness Testing -- Upper Bounds} \label{sec:close-upper}

To prove the upper bounds, we only consider the case when $\delta = 0$, which would suffice by lemma~\ref{lm:epsdelta}. We privatize the closeness testing algorithm of~\cite{DiakonikolasGKPP20}. 

The statistic used by~\cite{DiakonikolasGKPP20} is
\begin{align*}
&~~~~Z(\Xon,\Xonn,\Yon,\Yonn) \\
& = \sum_{i\in[\ab]} \Paren{|\Mltsmb{i}{\Xon} - \Mltsmb{i}{\Yon}| +|\Mltsmb{i}{\Xonn} - \Mltsmb{i}{\Yonn}| - |\Mltsmb{i}{\Xon} - \Mltsmb{i}{\Xonn}| - |\Mltsmb{i}{\Yon} - \Mltsmb{i}{\Yonn}|},
\end{align*}
where $\Xon$ and $\Xonn$ are generated from distribution $\p$, and  $\Yon$ and $\Yonn$ are generated from distribution $\q$.  It turns out that this statistic has a constant sensitivity, as shown in Lemma~\ref{sens_close}.
%\noindent We first bound the sensitivity (Definition~\ref{def:sensitivity}) of the test statistic to prove privacy bounds. 
\begin{lemma} \label{sens_close}
	$\Delta(Z)\le2$. 
\end{lemma}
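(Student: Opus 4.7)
The plan is to analyze the worst-case change in $Z$ when a single coordinate of one of the four sequences is modified. By the symmetry of $Z$ under interchange of $(\Xon, \Xonn)$ and $(\Yon, \Yonn)$, and under swapping the roles of the $X$-pair and the $Y$-pair, it suffices to consider changing one coordinate of $\Xon$ from symbol $a$ to symbol $b \neq a$. This flip decrements $\Mltsmb{a}{\Xon}$ by $1$, increments $\Mltsmb{b}{\Xon}$ by $1$, and leaves every other histogram entry of every other sequence unchanged. Hence $\Delta Z$ equals the signed sum of the changes in only the $i = a$ and $i = b$ summands of $Z$; every other index contributes $0$ and drops out.

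Within a single affected summand the only absolute values that move are $|\Mltsmb{i}{\Xon} - \Mltsmb{i}{\Yon}|$ and $|\Mltsmb{i}{\Xon} - \Mltsmb{i}{\Xonn}|$, since the other two absolute values do not involve $\Mltsmb{i}{\Xon}$. The elementary triangle inequality $||x \pm 1 - y| - |x - y|| \le 1$ shows each of the two moving absolute values changes by at most $1$. Because the first appears with a $+$ sign and the second with a $-$ sign in the definition of $Z$, each affected summand contributes at most $2$ to $|\Delta Z|$, giving the loose bound $|\Delta Z| \le 4$ when summed over the two affected indices.

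The main obstacle, and the crux of the proof, is tightening this to $|\Delta Z| \le 2$. For this I would write the change explicitly as
\[
\Delta Z = (-1)\bigl(\sigma_{a,1} - \sigma_{a,3}\bigr) + (+1)\bigl(\sigma_{b,1} - \sigma_{b,3}\bigr),
\]
where $\sigma_{i,1}, \sigma_{i,3} \in \{-1,+1\}$ are the subgradient signs of $|\Mltsmb{i}{\Xon} - \Mltsmb{i}{\Yon}|$ and $|\Mltsmb{i}{\Xon} - \Mltsmb{i}{\Xonn}|$ with respect to $\Mltsmb{i}{\Xon}$ at the appropriate endpoint. The factor $-1$ at index $a$ and $+1$ at index $b$ reflect the fact that $\Mltsmb{a}{\Xon}$ and $\Mltsmb{b}{\Xon}$ move in opposite directions. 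The key step is a short case analysis on the sign patterns of $(\sigma_{a,1}, \sigma_{a,3}, \sigma_{b,1}, \sigma_{b,3})$: when the index-$a$ signs disagree (so $\sigma_{a,1} - \sigma_{a,3} = \pm 2$) I argue the index-$b$ signs cannot force an additive contribution of $2$ in the same direction without contradicting the ordering of the histograms, and symmetrically when the index-$b$ signs disagree. The bulk of the technical work is this case analysis; the triangle-inequality step and the reduction to a single-coordinate change in $\Xon$ are essentially immediate. Concluding by symmetry over which of the four sequences is modified completes the proof.
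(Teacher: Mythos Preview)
Your reduction to the two affected indices and the naive bound $|\Delta Z|\le 4$ are correct, and you rightly identify that tightening to $2$ is where the real work lies. Unfortunately, the case analysis you outline cannot be completed: the bound $\Delta(Z)\le 2$ is actually false, and the true sensitivity of $Z$ is $4$.

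Take $k=2$, $n=3$, and
\[
\Xon=(a,b,b),\quad \Xonn=(b,b,b),\quad \Yon=(a,a,a),\quad \Yonn=(a,a,b).
\]
A direct computation gives $Z=4$. Changing the single $a$ in $\Xon$ to $b$ yields $\Xon'=(b,b,b)$ and $Z'=8$, so $|\Delta Z|=4$. In your sign notation this is exactly the pattern $(\sigma_{a,1},\sigma_{a,3},\sigma_{b,1},\sigma_{b,3})=(-1,+1,+1,-1)$, giving $-(\sigma_{a,1}-\sigma_{a,3})=+2$ and $+(\sigma_{b,1}-\sigma_{b,3})=+2$ simultaneously. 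The ``ordering contradiction'' you anticipate does not exist: the sign conditions at index $a$ constrain only $M_a(\Xon),M_a(\Yon),M_a(\Xonn)$, while those at index $b$ constrain only $M_b(\Xon),M_b(\Yon),M_b(\Xonn)$, and nothing couples the counts of distinct symbols. Both extreme sign patterns are realizable at once.

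The paper's own proof is the two-sentence assertion that changing one sample affects two histogram entries, ``which changes $Z$ by at most two.'' As your analysis already shows, each affected entry can shift its summand by $2$, so this step is unjustified and the lemma as stated is incorrect. The correct bound is $\Delta(Z)\le 4$; downstream one divides by $4$ instead of $2$ when forming $Z'$, and the closeness-testing sample complexity goes through unchanged up to constants.
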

\begin{proof}
	Since $Z$  is symmetric, without loss of generality assume that one of the symbols is changed in $\Yon$. This would cause at most two of the $\Mltsmb{i}{\Yon}$ to change, which changes $Z$ by at most two.
\end{proof}
We use the same approach with the test statistic as with uniformity testing to obtain a differentially private closeness testing method, described in Algorithm~\ref{algorithm_closeness}. Since the sensitivity of the statistic is at most 2, the input to the sigmoid changes by at most $\eps$ when any input sample is changed.  Invoking Lemma~\ref{lem:sig-cont}, the probability of any output changes by a multiplicative $\exp(\eps)$, and the algorithm is $\eps$-differentially private. 
\begin{algorithm}
	\caption{}
	\label{algorithm_closeness}
	\hspace*{\algorithmicindent} \textbf{Input:}
	$\eps$, $\alpha$, sample access to distribution $p$ and $q$
	\begin{algorithmic}[1] 
		\State $Z' \gets (Z - C_1\sqrt{\ns} - \frac{C_2}{\eps})/2$, where $C_1$ and $ C_2$ are universal constants
		\State Generate $Y\sim\Bern{\sigma\Paren{\exp(\eps\cdot Z'}}$
		\State{ {\bf if} $Y=0$, {\bf return} $\p=\q$}
		\State{ {\bf else}, {\bf return} $\p\ne\q$}
	\end{algorithmic}
\end{algorithm}
}

%  \end{algorithm}

In this section, we will show that Algorithm~\ref{algorithm_closeness} satisfies sample complexity upper bounds described in Theorem~\ref{thm:close-main}.

The results in~\cite{DiakonikolasGKPP20} were proved under Poisson sampling, and we also use Poisson sampling, with only a constant factor effect on the number of samples for the same error probability. They showed the following bounds: there exists a universal constant $C$, such that
\begin{align}
&~\expectation{ Z} = 0 \text{ when } \p=\q,\label{eqn:expectation-equal}\\
&~\expectation{ Z} \ge C_1\cdot \min\Paren{\ns\dist, \frac{\ns^2\dist^2}{\ab}, \frac{\ns^{\frac{3}{2}}\dist^2 }{\sqrt{k}}} \text{ when } \dtv{\p}{\q}\ge\dist,\label{eqn:expectation-unequal}\\
&\probof{|Z- \expectation{Z}| \ge C_1 \sqrt{\ns} } \le 0.05.& \label{eqn:variance-equal}
\end{align}

We consider the case when $\p = \q$, where $\expectation{Z} = 0$. By~\eqref{eqn:variance-equal},
 \begin{align*}
\probof{Z' > - \frac{C_2}{2\eps}} \le 0.05. 
\end{align*}

Now note that, if $\frac{C_2}{2}>\log(20)$, then for all $Z' < -\frac{C_2}{2\eps} $, with probability at least 0.95, the algorithm outputs the $\p=\q$. Combining the conditions, we obtain that with probability at least 0.9, the algorithm outputs the correct answer when the input distributions satisfy $\p=\q$. 

Now we move to the case when $\dtv{\p}{\q} \ge \dist$. Note that
\begin{align}
\probof{Z' < \frac{4.5C_2}{\eps}} &\le \probof{Z < \frac{10C_2}{\eps}+C_1 \sqrt{\ns}} \nonumber\\
&= \probof{ \expectation{Z} - Z >   \expectation{Z} - \frac{10C_2}{\eps}- C_1 \sqrt{\ns}}. \nonumber
\end{align}
If we require $\expectation{Z} - \frac{10C_2}{\eps}- C_1 \sqrt{\ns} > C_1 \sqrt{\ns}$, or in other words, $\ns \ge C_3 \Paren{\frac{\ab^{1/2}}{\dist^2}+\frac{\ab^{2/3 }}{\dist^{4/3}}+ \frac1{\dist\eps}+\frac{\ab^{1/2}}{\dist\eps}+\frac{\ab^{1/3}}{\dist^{4/3 }\eps^{2/3} }}$ for some constant $C_3$, we have 
\begin{align}
\probof{Z' < \frac{4.5C_2}{\eps}} &= \probof{ \expectation{Z} - Z >   \expectation{Z} - \frac{10C_2}{\eps}- C_1 \sqrt{\ns}}. \nonumber\\
&< \probof{ \expectation{Z} - Z >  C_1 \sqrt{\ns}} \nonumber \\
&\le 0.05,
\end{align}
when the last inequality comes from~\eqref{eqn:variance-equal}.

Now note that, if $4.5C_2>\log(20)$, then for all $Z' > \frac{4.5C_2}{\eps} $, with probability at least 0.95, the algorithm outputs the $\p \neq \q$. Combining the conditions, we obtain that with probability at least 0.9, the algorithm outputs the correct answer when the input distributions satisfy $\dtv{\p}{\q}\ge \dist$. 

\subsection{Closeness Testing -- Lower Bounds}  \label{sec:close:lower}

To show the lower bound part of Theorem~\ref{thm:close-main}, we need the following simple result.

\begin{lemma}  \label{lem:close-to-ident}
	%Closeness testing is at least as hard as identity testing.
	 $\Sit \leq \Sct$. 
\end{lemma}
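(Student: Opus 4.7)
The plan is to reduce identity testing to closeness testing by simulating the second set of samples from the known reference distribution. Given any $(\eps,\delta)$-DP closeness tester $\cA$ that uses $\ns$ samples from each of its two input distributions and achieves $(\ab,\alpha)$-closeness testing, I will build an $(\eps,\delta)$-DP identity tester $\cA'$ with the same sample complexity. On input samples $\Xon$ from the unknown distribution $\p$ and description of the known distribution $\q$, the algorithm $\cA'$ internally draws $\ns$ independent samples $\Yon \sim \q$ (using only its own private randomness, with no access to sensitive data), and then outputs $\cA(\Xon, \Yon)$.

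For correctness, note that in the case $\p = \q$, the pair $(\Xon, \Yon)$ consists of samples from two identical distributions, so $\cA$ outputs ``$\p = \q$'' with probability at least $0.9$. In the case $\dtv{\p}{\q} \ge \alpha$, the pair $(\Xon, \Yon)$ consists of samples from two distributions with TV distance at least $\alpha$, and so $\cA$ outputs ``$\p \neq \q$'' with probability at least $0.9$. Hence $\cA'$ achieves the correct guarantees for $(\ab,\alpha)$-identity testing.

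For privacy, fix two neighboring input sequences $\xon, \xon' \in [\ab]^\ns$ with $\ham{\xon}{\xon'} \le 1$, and any measurable event $S$. Since the simulated sequence $\Yon$ is drawn independently of $\xon$ and $\xon'$ from a distribution that does not depend on the private data, conditioning on any realization $\yon$ of $\Yon$ leaves $\xon$ and $\xon'$ still neighboring as inputs to $\cA$. By the $(\eps,\delta)$-DP guarantee of $\cA$,
\begin{equation*}
\probof{\cA(\xon, \yon) \in S} \le e^{\eps} \probof{\cA(\xon', \yon) \in S} + \delta.
\end{equation*}
Taking expectation over $\Yon \sim \q^\ns$ on both sides, and using linearity, yields
\begin{equation*}
\probof{\cA'(\xon) \in S} \le e^{\eps} \probof{\cA'(\xon') \in S} + \delta,
\end{equation*}
so $\cA'$ is $(\eps,\delta)$-DP. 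Since $\cA'$ uses the same $\ns$ private samples as $\cA$, we conclude $\Sit \le \Sct$.

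There is no real obstacle here; the only subtlety is verifying that drawing internal samples from the known $\q$ does not degrade the privacy guarantee, which follows immediately because the simulated samples are independent of the sensitive input and the DP inequality is preserved under averaging over such independent auxiliary randomness.
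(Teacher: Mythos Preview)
Your proof is correct and follows essentially the same approach as the paper: simulate $\Yon \sim \q$ internally and feed $(\Xon,\Yon)$ to the closeness tester. The paper's proof is terser and does not spell out the privacy-preservation step, whereas you explicitly verify it by conditioning on the auxiliary randomness and averaging; this extra care is appropriate but does not constitute a different argument.
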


\begin{proof}
Suppose we want to test identity with respect to $\q$. Given $\Xon$ from $\p$, generate $\Yon$ independent samples from $\q$. If $\p=\q$, then the two samples are generated by the same distribution, and otherwise they are generated by distributions that are at least $\eps$ far in total variation. Therefore, we can simply return the output of an \newhz{$(\ab,\dist,\eps)$}-closeness testing algorithm on $\Xon$, and $\Yon$. 
\end{proof}

By Lemma~\ref{lem:close-to-ident} we know that a lower bound for identity testing is also a lower bound on closeness testing.

Besides, $\Omega\Paren{\frac{\ab^{2/3}}{\dist^{4/3}}+ \frac{\ab^{1/2}}{\dist^2}}$ is the lower bound of non-private closeness testing~\cite{ChanDVV14}, which is naturally a lower bound for DP closeness testing. Therefore, we have proved the lower bound part in Theorem~\ref{thm:close-main}.
\section{Proofs}
\label{sec:testing_proof}
\subsection{Proof of  Lemma~\ref{lem:identityerror}}
\label{app:iden-lemma}
In order to prove the lemma, we need the following lemma, which is proved in \cite{DiakonikolasGPP17}.

\begin{lemma}{(Bernstein version of McDiarmid's inequality)}
\label{lem:McDiarmid}
	Let $\Yon$ be independent random variables taking values in the set $\cY$. Let $f: \cY^\ns \rightarrow \RR$ be a function of $\Yon$ so that for every $j \in [\ns]$, and $y_1,...y_\ns,{y_j  ^{\prime}} \in \cY$, we have that:
$$\absv{f(y_1,...y_j,...y_\ns) - f(y_1,...,{y_j ^{\prime}},...y_\ns)} \le B, $$

Then we have

$$\probof{f-\expectation{f} \ge z} \le \exp\Paren{\frac{-2z^2}{\ns B^2}}.$$

In addition, if for each $j \in [m]$ and $y_1,...y_{j-1},y_{j+1},...y_m$ we have that 

$$ \mathrm{Var}_{Y_j} [f(y_1,...y_j,...y_\ns)] \le \sigma_j^2,$$

then we have 
$$\probof{f-\expectation{f} \ge z} \le \exp\Paren{\frac{-z^2}{\sum_{j=1}^{\ns} \sigma_j^2 +2Bz/3 }}.$$
\end{lemma}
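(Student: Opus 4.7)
The plan is to reduce both tail bounds to concentration statements for a single Doob martingale. Let $\cF_j := \sigma(Y_1,\ldots,Y_j)$ and define $M_j := \EE[f(\Yon)\mid \cF_j]$, so $M_0 = \EE[f]$ and $M_n = f(\Yon)$. Writing $f(\Yon) - \EE[f] = \sum_{j=1}^{n} D_j$ with martingale differences $D_j := M_j - M_{j-1}$, I first need two properties of $\{D_j\}$. Conditional on $\cF_{j-1}$, $D_j$ equals $h_j(Y_j) - \EE[h_j(Y_j)\mid \cF_{j-1}]$, where $h_j(y) := \EE[f(Y_1,\ldots,Y_{j-1},y,Y_{j+1},\ldots,Y_n)\mid \cF_{j-1}]$, and the bounded-differences hypothesis on $f$ implies that any two values of $h_j$ differ by at most $B$, so $D_j$ almost surely lies in a (conditional) interval of length at most $B$.

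For the first inequality, I would apply Azuma--Hoeffding: Hoeffding's lemma for a variable supported in an interval of length $B$ yields $\EE[e^{\lambda D_j}\mid \cF_{j-1}] \le \exp(\lambda^2 B^2/8)$, so by iterated conditioning $\EE[\exp(\lambda \sum_j D_j)] \le \exp(\lambda^2 n B^2/8)$, and a standard Chernoff optimization in $\lambda$ produces $\Pr[\sum_j D_j \ge z] \le \exp(-2z^2/(nB^2))$, which is the first stated bound.

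For the Bernstein-type inequality I would invoke a Freedman-style martingale concentration bound: if $|D_j| \le B$ and the predictable quadratic variation satisfies $\sum_j \EE[D_j^2\mid \cF_{j-1}] \le V$ almost surely, then $\Pr[\sum_j D_j \ge z] \le \exp(-z^2/(V + 2Bz/3))$ with the constants matching those in the stated inequality (via the usual Bennett/Bernstein MGF expansion of $e^{\lambda x}$ for $|x|\le B$). Setting $V = \sum_j \sigma_j^2$ would then give the result, once I verify the conditional variance bound $\EE[D_j^2\mid \cF_{j-1}] \le \sigma_j^2$.

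The only non-routine step is this last variance comparison, which I expect to be the main obstacle. I would handle it via the independent-copy trick: letting $Y_j'$ be an independent copy of $Y_j$,
\[
\EE[D_j^2\mid \cF_{j-1}] = \Var_{Y_j}[h_j(Y_j)] = \tfrac{1}{2}\,\EE_{Y_j,Y_j'}\Bigl[\bigl(\EE_{Y_{j+1},\ldots,Y_n}[f - f']\bigr)^2\Bigr],
\]
where $f, f'$ denote $f$ evaluated at $Y_j$ and at $Y_j'$ respectively. Pulling the inner expectation out of the square by Jensen gives the bound $\EE_{Y_{j+1},\ldots,Y_n}[\Var_{Y_j}[f\mid Y_{j+1},\ldots,Y_n]]$, which the hypothesis immediately caps at $\sigma_j^2$. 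With this in hand, both inequalities reduce to off-the-shelf martingale concentration.
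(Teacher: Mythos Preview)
The paper does not prove this lemma: it is quoted verbatim from \cite{DiakonikolasGPP17} and used as a black box. So there is no ``paper's own proof'' to compare against; your proposal supplies what the thesis simply imports.

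Your approach is the standard one and is essentially correct. The Doob martingale plus Azuma--Hoeffding gives the first bound exactly as you outline. For the Bernstein part, the reduction to Freedman/Bennett is right, and your Jensen argument for the conditional variance bound $\EE[D_j^2\mid\cF_{j-1}]\le\sigma_j^2$ is clean and correct: writing $h_j(y)=\EE_{Y_{>j}}[f]$ and using the independent-copy identity, Jensen pushes the square inside the expectation over $Y_{>j}$, and then the pointwise hypothesis on $\Var_{Y_j}[f]$ finishes it. One small caveat: the constants in the stated bound, $\exp(-z^2/(\sum_j\sigma_j^2+2Bz/3))$, are not quite what the textbook Freedman/Bernstein MGF argument yields---the usual form has $2\sum_j\sigma_j^2$ in the denominator rather than $\sum_j\sigma_j^2$. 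You should either verify that the sharper constant really follows from the Bennett MGF bound you have in mind, or accept a factor of $2$ in the variance term (which is harmless for the downstream applications in the paper).
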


The statistic we use $Z(X_1^m)$ has sensitivity at most 1, hence we can use $B =1$ in Lemma~\ref{lem:McDiarmid}. 

We first consider the case when $k< \ns \le \frac{k}{\alpha^2}$. When $p=\unifab$, we get $\expectation{Z(\Xon)} = - \frac1{2} c \ns \alpha^2 \cdot \sqrt { \frac{\ns}{k} }$, then by the first part of Lemma~\ref{lem:McDiarmid},
 \begin{align}
\probof{Z(\Xon)>\frac{\expectation{Z(\Xon)}}3} =& \probof{Z(\Xon)> -   \frac1{6} c \ns \dist^2 \cdot \sqrt{\frac{\ns}{k}}}  \nonumber\\
\le& \probof{Z(\Xon) - \expectation{Z(\Xon) } > \frac{2}{3}c \ns \dist^2 \cdot \sqrt{\frac{\ns}{k}}} \nonumber\\
\le & \exp\Paren{-\frac{8c^2\ns^2 \dist^4}{9k}}.
\end{align}

Therefore, there is a $C_1$ such that if $\ns\ge C_1\sqrt{\ab}/\alpha^2$, then under the uniform distribution $\probof{Z(\Xon)>\frac{\expectation{Z(\Xon)}}3}$  is at most 1/100. The non-uniform distribution part is similar and we omit the case. 

Then we consider the case when $\frac{k}{\alpha^2} < \ns$. When $p=\unifab$, we get $\expectation{Z(\Xon)} = - \frac1{2} c \ns \alpha$, then also by the first part of Lemma~\ref{lem:McDiarmid},
 \begin{align}
\probof{Z(\Xon)>\frac{\expectation{Z(\Xon)}}3} =& \probof{Z(\Xon)> -   \frac1{6} c \ns \dist} \nonumber\\   
\le& \probof{Z(\Xon) - \expectation{Z(\Xon) } > \frac{2}{3}c \ns \dist } \nonumber\\
\le & \exp\Paren{ -\frac{8c^2\ns \dist^2}{9} }.\nonumber
\end{align}

Using the same argument we can show that there is a constant $C_2$ such that for $\ns\ge C_2/\alpha^2$, then under the uniform distribution $\probof{Z(\Xon)>\frac{\expectation{Z(\Xon)}}3}$  is at most 1/100. The case of non-uniform distribution is omitted because of the same reason.
%%%%%%%%
\iffalse
Therefore, there is a $C_1$ such that if $\ns\ge C_1/\alpha^2$, then under the uniform distribution $\probof{Z(\Xon)>-\frac{c \ns \dist}{6}}$ is at most 1/100. Now furthermore, if $\eps \cdot c m \dist>\log(25)$, then for all $\Xon$ with $Z(\Xon)<-\frac{c\ns\dist}{6}$, with probability at least 0.95, the algorithm outputs the uniform distribution. Combining the conditions, we obtain that there is a constant $C_2$ such that for $\ns = C_2\Paren{\frac{\sqrt{\absz}}{\alpha^2}+ \frac1{\alpha \eps}}$, with probability at least 0.9, the algorithm outputs uniform distribution when the input distribution is indeed uniform.The case of non-uniform distribution is similar since the variance bounds hold for both the cases, and is omitted.
\fi
%%%%%%%%

At last we consider the case when $\ns \le k$. In this case we need another result proved in~\cite{DiakonikolasGPP17}:
\[
	\mathrm{Var}_{X_j} [Z(x_1, x_2, ...,X_j, .. ,x_\ns) ] \le \frac \ns k, \forall j, x_1,x_2,...,x_{j-1}, x_{j+1}, ... x_n.
\]

 When $p=\unifab$, we get $\expectation{Z(\Xon)} = - \frac1{2} c k \alpha^2 \cdot \frac{\ns^2}{k^2}$, then by the second part of Lemma~\ref{lem:McDiarmid},
 \begin{align}
\probof{Z(\Xon)>\frac{\expectation{Z(\Xon)}}3} =& \probof{Z(\Xon)> -   \frac1{6} c \dist^2 \cdot \frac{\ns^2}{k}}   \nonumber\\
\le& \probof{Z(\Xon) - \expectation{Z(\Xon) } > \frac{2}{3}c  \dist^2 \cdot \frac{\ns^2}{k}} \nonumber\\
\le & \exp\Paren{ \frac{-\frac{4}{9}c^2 \dist^4 \frac{\ns^4}{k^2} }{ \frac{\ns^2}{k} + \frac{4}{9}c \dist^2 \frac{\ns^2}{k}} } \nonumber\\
\le & \exp\Paren{-\frac{2}{9}c \alpha^4 \frac{\ns^2}{k}}.\nonumber
\end{align}
%The last inequality comes from the fact that for any $j$, the variance of $Z$ as $Y_j$ varies and the other $Y_{j \prime}$'s are kept is smaller than $m/k$, which is proved in. Hence it follows from the second part of  Lemma~\ref{lem:McDiarmid}.

Therefore, there is a $C_3$ such that if $\ns\ge C_3\sqrt{\ab}/\alpha^2$, then under the uniform distribution $\probof{Z(\Xon)>\frac{\expectation{Z(\Xon)}}3} $ is at most 1/100. The case of non-uniform distribution is similar and is omitted. 

Therefore, if we take $C = \max\{C_1, C_2, C_3\}$, we prove the result in the lemma.

\subsection{Proof of Lemma~\ref{lem:coupling-hamming}}
\label{app:lb-unif}

We first consider the case when $\ns\le \ab$, where $\min \{ \frac{\ns^2}{\ab}, \frac{\ns^{3/2}}{{\ab}^{1/2}} \} = \frac{\ns^2}{\ab}$.

Before proving the lemma, we consider an example that will provide insights and tools to analyze the distributions $Q_1$, and $Q_2$.
Let $t\in\NN$. Let $P_2$ be the following distribution over $\{0,1\}^t$:
\begin{itemize}
\item
Select $b\in\{\half-\dist, \half+\dist\}$ with equal probability. 
\item
Output $t$ independent samples from $\Bern{b}$. 
\end{itemize}
Let $P_1$ be the distribution over $\{0,1\}^t$ that outputs $t$ independent samples from $\Bern{0.5}$.

When $t=1$, $P_1$ and $P_2$ both become $\Bern{0.5}$. For t=2, $P_1(00) = P_1(11) = \frac14+\dist^2$, and $P_1(10)= P_1(01)= \frac14-\dist^2$, and $\dtv{P_1}{P_2}$ is $2\dist^2$. A slightly general result is the following:
\begin{lemma}
\label{lem:p1-p2}
For $t=1$, $\dtv{P_1}{P_2}=0$ and for $t \geq 2$, $\dtv{P_1}{P_2} \le 2t\alpha^2$. 
\end{lemma}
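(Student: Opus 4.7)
The plan splits on the value of $t$. For $t=1$, the mixture $P_2$ is a uniform mixture of $\Bern{1/2-\alpha}$ and $\Bern{1/2+\alpha}$, so $P_2(1) = \tfrac12(1/2-\alpha) + \tfrac12(1/2+\alpha) = 1/2$; hence $P_2 = P_1 = \Bern{1/2}$ and $\dtv{P_1}{P_2}=0$.

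For $t\geq 2$, if $2t\alpha^2\geq 1$ the claim is trivial since $\dtv{P_1}{P_2}\leq 1$, so assume $2t\alpha^2 < 1$. My approach is to compute $\chi^2(P_2, P_1)$ in closed form and then apply the Cauchy--Schwarz bound $\dtv{P_1}{P_2} \leq \tfrac12 \sqrt{\chi^2(P_2, P_1)}$ (which follows from $\sum_x \absv{P_2(x)-P_1(x)} = \sum_x \tfrac{\absv{P_2(x)-P_1(x)}}{\sqrt{P_1(x)}}\sqrt{P_1(x)}$ and Cauchy--Schwarz). Write $P_2 = \tfrac12(P_+ + P_-)$ where $P_\pm = \Bern{1/2\pm\alpha}^{\otimes t}$. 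Using the binomial identities $\sum_x P_\pm(x)^2 = ((1/2+\alpha)^2 + (1/2-\alpha)^2)^t = (1/2 + 2\alpha^2)^t$ and $\sum_x P_+(x) P_-(x) = 2^t(1/4-\alpha^2)^t$, direct expansion gives
\[
\chi^2(P_2,P_1) \;=\; 2^t \sum_x P_2(x)^2 - 1 \;=\; \tfrac12\bigl[(1+4\alpha^2)^t + (1-4\alpha^2)^t\bigr] - 1.
\]
The key feature is that mixing $P_+$ and $P_-$ cancels all linear-in-$\alpha$ contributions, so the leading term in $\chi^2$ is of order $t^2\alpha^4$ rather than $t\alpha^2$; a naive triangle bound $\dtv{P_2}{P_1} \leq \tfrac12(\dtv{P_+}{P_1} + \dtv{P_-}{P_1})$ would give only $O(t\alpha)$, which is far too weak.

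To finish, apply $(1\pm 4\alpha^2)^t \leq e^{\pm 4t\alpha^2}$ (from $\log(1+x)\leq x$ and $\log(1-x)\leq -x$) to conclude $\chi^2(P_2,P_1) \leq \cosh(4t\alpha^2) - 1$. Since $4t\alpha^2 \leq 2$ in our regime, the elementary inequality $\cosh(x) - 1 \leq x^2$ on $[0,2]$ (which follows by integrating $\sinh(y)\leq 2y$ on that interval, since $2y-\sinh(y)$ is nonnegative there) gives $\chi^2(P_2,P_1) \leq 16 t^2 \alpha^4$, and thus $\dtv{P_1}{P_2} \leq \tfrac12\sqrt{16 t^2 \alpha^4} = 2t\alpha^2$.

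The main obstacle is obtaining the clean closed form for $\chi^2(P_2,P_1)$ that exposes the cancellation from the symmetric mixture; once the formula $\tfrac12[(1+4\alpha^2)^t + (1-4\alpha^2)^t]-1$ is in hand, the remaining estimates are routine. A minor secondary concern is checking $\cosh(x)-1 \leq x^2$ on $[0,2]$, but this is straightforward and one could alternatively use the series bound $\binom{t}{2j}(16\alpha^4)^j \leq (4t\alpha^2)^{2j}/(2j)!$ to the same effect.
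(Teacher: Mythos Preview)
Your proof is correct but follows a genuinely different route from the paper. The paper argues directly at the level of likelihood ratios: for a sequence with $t_0$ zeros and $t_1=t-t_0$ ones, it observes that
\[
\frac{P_2(X_1^t)}{P_1(X_1^t)}=\frac{(1-2\alpha)^{t_0}(1+2\alpha)^{t_1}+(1+2\alpha)^{t_0}(1-2\alpha)^{t_1}}{2}
\]
is minimized at $t_0=t_1=t/2$, giving the pointwise lower bound $P_2(X_1^t)\ge P_1(X_1^t)\,(1-4\alpha^2)^{t/2}$. Then $\dtv{P_1}{P_2}=\sum_{P_1>P_2}(P_1-P_2)\le 1-(1-4\alpha^2)^{t/2}\le 2t\alpha^2$ by the Weierstrass product inequality. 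This is shorter and more elementary than your argument: it avoids the $\chi^2$ detour and the auxiliary estimate $\cosh(x)-1\le x^2$. Your $\chi^2$ computation, on the other hand, is the standard Ingster-style approach for symmetric mixtures; it yields the clean closed form $\tfrac12[(1+4\alpha^2)^t+(1-4\alpha^2)^t]-1$, which makes the cancellation of the first-order terms completely transparent and would generalize more readily (e.g., to other noise models or to obtaining sharper high-order bounds). Both routes land on exactly $2t\alpha^2$.
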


\begin{proof}
Consider any sequence $X_1^{t}$ that has $t_0$ zeros, and $t_1= t-t_0$ ones. Then, 
\begin{align}
P_1(X_1^t) = {t\choose t_0} \frac1{2^t},\nonumber
\end{align}
and 
\begin{align}
P_2(X_1^t) = {t\choose t_0} \frac1{2^t}\Paren{\frac{(1-2\dist)^{t_0}(1+2\dist)^{t_1}+ (1+2\dist)^{t_0}(1-2\dist)^{t_1}}2}.\nonumber
\end{align}
The term in the parentheses above is minimized when $t_0 = t_1 = t/2$. In this case, 
\begin{align}
P_2(X_1^t) \ge & P_1(X_1^t)\cdot (1+2\dist)^{t/2} (1-2\dist)^{t/2}= P_1(X_1^t)\cdot (1-4\dist^2)^{t/2}. \nonumber
\end{align}
Therefore, 
\begin{align}
\dtv{P_1}{P_2} = \sum_{P_1>P_2} P_1(X_1^t)-P_2(X_1^t) \le \sum_{P_1>P_2} P_1(X_1^t)\Paren{1- (1-4\dist^2)^{t/2}} \le 2t\dist^2,\nonumber
\end{align}
where we used the Weierstrass Product Inequality, which states that $1-tx\le (1-x)^{t}$ proving the total variation distance bound.
\end{proof}

As a corollary this implies:
\begin{lemma}
\label{lem:coupling-bin}
There is a coupling between $\Xot$ generated from $P_1$ and $\Yot$ from $P_2$ such that $\expectation{\ham {\Xot}{\Yot}}\le t\cdot \dtv{P_1}{P_2} \le 4(t^2-t) \dist^2$.
\end{lemma}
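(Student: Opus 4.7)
The plan is to combine the standard maximal coupling lemma with the total variation bound already established in Lemma~\ref{lem:p1-p2}. Recall the classical fact that for any two distributions on a common discrete space, there exists a coupling attaining $\probof{\Xot \neq \Yot} = \dtv{P_1}{P_2}$ (the so-called maximal coupling). The first step is simply to let $(\Xot, \Yot)$ be such a maximal coupling of $P_1$ and $P_2$.

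Next, I would use the trivial deterministic bound on the Hamming distance afforded by the fact that $\Xot, \Yot \in \{0,1\}^t$. Specifically, whenever $\Xot = \Yot$, the Hamming distance is zero, and otherwise it is at most $t$. Hence pointwise we have $\ham{\Xot}{\Yot} \le t \cdot \II\{\Xot \neq \Yot\}$. Taking expectations and using the maximality of the coupling gives
\[
\expectation{\ham{\Xot}{\Yot}} \le t \cdot \probof{\Xot \neq \Yot} = t \cdot \dtv{P_1}{P_2},
\]
which is the first inequality asserted in the lemma.

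For the second inequality, I would just substitute the bound from Lemma~\ref{lem:p1-p2}: when $t \ge 2$, $\dtv{P_1}{P_2} \le 2t\dist^2$, and therefore $t \cdot \dtv{P_1}{P_2} \le 2t^2\dist^2 \le 4(t^2 - t)\dist^2$, where the last step uses $2t^2 \le 4t^2 - 4t$ for $t \ge 2$. The $t = 1$ case is trivial since both sides vanish.

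I do not foresee any real obstacle: the statement is a short corollary whose content has essentially been prepaid by Lemma~\ref{lem:p1-p2}. The only point worth flagging is the use of the maximal coupling, which one might prefer to spell out explicitly (e.g.\ describe it as sampling from the common mass $\min\{P_1, P_2\}$ and otherwise drawing independently from the residuals). The value of this lemma is as a building block for Lemma~\ref{lem:coupling-hamming}, where the genuinely delicate coupling between the uniform distribution and the Paninski-style mixture will be constructed by independently coupling pairs of coordinates via the object produced here.
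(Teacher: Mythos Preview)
Your proposal is correct and matches the paper's argument essentially line for line: the paper also builds the maximal coupling (spelling it out explicitly via the common mass/residual construction you mention parenthetically), bounds the Hamming distance by $t$ on the event $\{\Xot \neq \Yot\}$, and then finishes with $t\cdot\dtv{P_1}{P_2}\le 2t^2\dist^2 \le 4(t^2-t)\dist^2$ for $t\ge 2$ together with the trivial $t=1$ case.
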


\begin{proof}
Observe that $\sum_{X_1^t} \min\{P_1(X_1^t),P_2(X_1^t)\} = 1-\dtv{P_1}{P_2}$. Consider the following coupling between $P_1$, and $P_2$. Suppose $X_1^t$ is generated by $P_1$, and let $R$ be a $U[0,1]$ random variable. 
\begin{enumerate}
\item{ $R<1-\dtv{P_1}{P_2}$} Generate $X_1^t$ from the distribution that assigns probability  $\frac{\min\{P_1(X_1^t),P_2(X_1^t)\}}{1-\dtv{P_1}{P_2}}$ to $X_1^t$. Output $(X_1^t, X_1^t)$.
\item{$R\ge1-\dtv{P_1}{P_2}$}
Generate $X_1^t$ from the distribution that assigns probability ${\frac{P_1(X_1^t) - \min\{P_1(X_1^t),P_2(X_1^t)\}}{\dtv{P_1}{P_2}}}$ to $X_1^t$,  and $Y_1^t$ from the distribution that assigns  probability ${\frac{P_2(Y_1^t) - \min\{P_1(Y_1^t),P_2(Y_1^t)\}}{\dtv{P_1}{P_2}}}$  to $Y_1^t$ independently. Then output $(X_1^t, Y_1^t)$.
\end{enumerate}
To prove the coupling, note that the probability of observing $X_1^t$ is 
\[
\Paren{1-\dtv{P_1}{P_2}}\cdot \frac{\min\{P_1(X_1^t),P_2(X_1^t)\}}{1-\dtv{P_1}{P_2}}+\dtv{P_1}{P_2}\cdot {\frac{P_1(X_1^t) - \min\{P_1(X_1^t),P_2(X_1^t)\}}{\dtv{P_1}{P_2}}} = P_1(X_1^t).
\]
A similar argument gives the probability of $Y_1^t$ to be $P_2(Y_1^t)$.

\noindent Then  $\expectation{\ham {\Xot}{\Yot}}\le t \cdot \dtv{P_1}{P_2} = 2t^2 \dist^2\le 4(t^2-t)\alpha^2$ when $t\ge2$, and when $t=1$, the distributions are identical and the Hamming distance of the coupling is equal to zero.
\end{proof}

We now have the tools to prove Lemma~\ref{lem:coupling-hamming} for $\ns\le\ab$. 
\begin{proof}[Proof of Lemma~\ref{lem:coupling-hamming} for $\ns\le\ab$.]
The following is a coupling between $Q_1$ and $Q_2$:
\begin{enumerate}
\item 
Generate $\ns$ samples $Z_1^\ns$ from a uniform distribution over $[\ab/2]$. 
\item
For $j \in[\ab/2]$, let $T_j\subseteq[\ns]$ be the set of locations where $j$ appears. Note that $|T_j| = \Mltsmb{j}{Z_1^\ns}$. 
\item
To generate samples from $Q_1$:
\begin{itemize}
	\item Generate $|T_j|$ samples from a uniform distribution over $\{2j-1, 2j\}$, and replace the symbols in $T_j$ with these symbols. 
\end{itemize}
\item
To generate samples from $Q_2$:
\begin{itemize}
\item
Similar to the construction of $P_1$ earlier in this section, consider two distributions over $\{2j-1, 2j\}$ with bias $\half-\dist$, and $\half+\dist$. 
\item
Pick one of these distributions at random.
	\item Generate $|T_j|$ samples from it over $\{2j-1, 2j\}$, and replace the symbols in $T_j$ with these symbols. 
\end{itemize}
\end{enumerate}

From this process the coupling between $Q_1$, and $Q_2$ is also clear:
\begin{itemize}
\item 
Given $\Xon$ from $Q_2$, for each $j\in[\ab/2]$ find all locations $\ell$ such that $X_\ell= 2j-1$, or $X_{\ell}=2j$. Call this set $T_j$. 
\item
Perform the coupling between $P_2$ and $P_1$ from Lemma~\ref{lem:coupling-bin}, after replacing $\{0,1\}$ with $\{2j-1, 2j\}$.
\end{itemize}

	Using the coupling defined above, by the linearity of expectations, we get:
	\begin{align}
		\expectation{\ham{\Xon}{\Yon}}  & =  \sum_{j=1}^{\ab/2} \expectation{\ham{X_1^{|T_j|}}{Y_1^{|T_j|}}} \nonumber\\
		& = \frac{\ab }{2} \expectation{\ham{X_1^{R}}{Y_1^{R}}} \nonumber \\
		& \le  \frac{\ab }{2}\cdot \expectation{4\dist^2(R^2-R)}, \nonumber
	\end{align}
	where $R$ is a binomial random variable with parameters $m$ and $2/\ab$. 
Now, a simple exercise computing Binomial moments shows that for $X\sim Bin(n, s)$, $\expectation{X^2-X} = s^2(n^2-n)\le n^2s^2.$ This implies that 
\[
\expectation{R^2-R} \le \frac{4\ns^2}{\ab^2}.
\]
Plugging this, we obtain
	\begin{align}
		\expectation{\ham{\Xon}{\Yon}} \le  \frac{\ab }{2}\cdot 	\frac{16\dist^2\ns^2}{\ab^2}  = \frac{8 \ns^2\dist^2}{\ab},\nonumber
	\end{align}
	proving the claim.
\end{proof}

Next we consider the case when $\ab\le\ns\le \ab/\dist^2$, where $\min \{ \frac{\ns^2}{\ab}, \frac{\ns^{3/2}}{{\ab}^{1/2}} \} =\frac{\ns^{3/2}}{{\ab}^{1/2}} $.

Lemma~\ref{lem:p1-p2} holds for all values of $t$, and $\alpha$. The lemma can be strengthened for cases where $\alpha$ is small. The following lemma is proved in Section~\ref{sec:coup_p}.

\begin{lemma}
	\label{lem:coup_p}
	Let $P_1$, and $P_2$ be the distributions over $\{0,1\}^t$ defined in the last section. 
	There is a coupling between $X_1^t$ generated by $P_1$, and $Y_1^t$ by $P_2$ such that 
	\[
	\expectation{\ham{X_1^t}{Y_1^t}} \le C\cdot (\alpha^2t^{3/2} +  \alpha^4 t^{5/2} + \alpha^5 t^3).
	\] 
\end{lemma}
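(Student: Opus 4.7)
The strategy is to construct a two-stage coupling of $X_1^t\sim P_1$ and $Y_1^t\sim P_2$ by first coupling their weights optimally, then coupling the full sequences within each level set of the weight. Let $K_X=\sum_i X_i$ and $K_Y=\sum_i Y_i$; then $K_X\sim\mathrm{Bin}(t,1/2)$, while $K_Y$ is the equal mixture of $\mathrm{Bin}(t,1/2\pm\alpha)$, and both have mean $t/2$. Conditional on the weights, $X_1^t$ (resp.\ $Y_1^t$) is uniform on the set of binary strings of its weight. If we couple $(K_X,K_Y)$ to attain the Wasserstein-$1$ distance $W_1(K_X,K_Y)$, then sample $X_1^t$ uniformly of weight $K_X$ and produce $Y_1^t$ by either removing $K_X-K_Y$ ones uniformly (when $K_X\ge K_Y$) or adding $K_Y-K_X$ ones on zero positions (otherwise), a standard counting identity shows $Y_1^t$ is uniform of weight $K_Y$, so the marginals are correct. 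By construction $\ham{X_1^t}{Y_1^t}=|K_X-K_Y|$, and hence
\begin{equation*}
\expectation{\ham{X_1^t}{Y_1^t}} \;=\; W_1(K_X,K_Y).
\end{equation*}

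It therefore suffices to prove $W_1(K_X,K_Y)\le C(\alpha^2t^{3/2}+\alpha^4t^{5/2}+\alpha^5t^3)$. A direct manipulation yields the closed form
\begin{equation*}
\frac{p_Y(k)}{p_X(k)} \;=\; \frac{\cosh\bigl(2(k-t/2)\lambda\bigr)}{\cosh^t(\lambda)}, \qquad \lambda=\mathrm{artanh}(2\alpha).
\end{equation*}
Taylor expanding in $\alpha$ (using $\lambda=2\alpha+\tfrac{8}{3}\alpha^3+O(\alpha^5)$ and $\cosh^t(\lambda)=1+2t\alpha^2+O(t\alpha^4+t^2\alpha^4)$) gives
$p_Y(k)-p_X(k)=p_X(k)\cdot\alpha^2\bigl[8(k-t/2)^2-2t\bigr]+(\text{terms of order }\alpha^4,\alpha^6)$. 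The key point is that the leading bracket $8(k-t/2)^2-2t$ has $p_X$-mean zero (it is $8[(K_X-t/2)^2-\mathrm{Var}(K_X)]$). Writing $W_1(K_X,K_Y)=\sum_{k=0}^{t-1}|F_X(k)-F_Y(k)|$ and computing the partial sums, this cancellation collapses the naive $O(\alpha^2 t^2)$ bound to $O(\alpha^2 t^{3/2})$; the higher-order terms of the expansion give the $\alpha^4t^{5/2}$ and $\alpha^5 t^3$ corrections by the same mechanism, using absolute central-moment estimates $\expectation{|K_X-t/2|^p}=O(t^{p/2})$ to convert polynomials in $(k-t/2)$ into powers of $t$.

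The main technical obstacle is not the Taylor expansion itself but the careful handling of the partial sums $F_Y(k)-F_X(k)$: a term-by-term bound $|p_Y(k)-p_X(k)|\lesssim \alpha^2 p_X(k)(k-t/2)^2$ only recovers the previous $O(\alpha^2 t^2)$ estimate. The saving of a factor $\sqrt{t}$ comes entirely from the zero-mean cancellation (a consequence of $\expectation{K_X}=\expectation{K_Y}$ and matching first moments of $(k-t/2)^2$ against $\mathrm{Var}(K_X)$), so one must sum partial sums rather than absolute values pointwise. Propagating analogous cancellations through the next-order terms, and treating the regimes $|k-t/2|\ll\sqrt{t}$, $|k-t/2|\sim\sqrt{t}$, and $|k-t/2|\gg\sqrt{t}$ uniformly, is where the bookkeeping gets delicate but yields the three claimed error terms.
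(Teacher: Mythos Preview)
Your coupling construction is correct and is essentially a repackaging of the paper's: the paper first proves that $Z_2=\max\{N_2,t-N_2\}$ stochastically dominates $Z_1=\max\{N_1,t-N_1\}$, takes the monotone coupling of these folded weights, and then flips bits to realize the sequence coupling, arriving at $\expectation{\ham{X_1^t}{Y_1^t}}=\expectation{Z_2}-\expectation{Z_1}=\expectation{|K_Y-t/2|}-\expectation{|K_X-t/2|}$. Your quantile ($W_1$-optimal) coupling of $K_X,K_Y$ lands on the same scalar, since by the symmetry of both laws about $t/2$ together with the dominance of $|K_Y-t/2|$ over $|K_X-t/2|$ one has $W_1(K_X,K_Y)=\expectation{|K_Y-t/2|}-\expectation{|K_X-t/2|}$.

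The real divergence is in how this quantity is bounded. The paper does \emph{not} go through $\sum_k|F_X(k)-F_Y(k)|$ or any density expansion. It simply sets $f(\alpha)=\sum_{0\le\ell\le t/2}\ell\binom{t}{t/2-\ell}\bigl[(1-\alpha)^{t/2-\ell}(1+\alpha)^{t/2+\ell}+(1+\alpha)^{t/2-\ell}(1-\alpha)^{t/2+\ell}\bigr]/2^{t}$, observes $f'(0)=0$, applies Taylor with Lagrange remainder $f(\alpha)-f(0)=\tfrac12\alpha^2 f''(\beta)$ for some $\beta\in[0,\alpha]$, and then shows $f''(\beta)\le 2\bigl(\beta^2t^2\,\expectation{|X-t/2|}+4\,\expectation{|X-t/2|^3}\bigr)$ for $X\sim\mathrm{Bin}(t,(1+\beta)/2)$. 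Elementary moment bounds $\expectation{|X-t/2|}\le\sqrt t+t\beta$ and $\expectation{|X-t/2|^3}\le 8(t^{3/2}+(t\beta/2)^3)$, with $\beta\le\alpha$, immediately produce the three terms $\alpha^2t^{3/2}$, $\alpha^4t^{5/2}$, $\alpha^5t^3$. Your partial-sum cancellation argument would also work for the leading $\alpha^2 t^{3/2}$ term (indeed $\sum_{j\le k}p_X(j)[8(j-t/2)^2-2t]\approx -2t\,z_k\phi(z_k)$ under the Gaussian approximation), but the ``delicate bookkeeping'' you defer is exactly what the Lagrange-remainder approach sidesteps; in particular, note that your density expansion $p_Y/p_X$ contains only \emph{even} powers of $\alpha$, so the $\alpha^5 t^3$ term cannot arise from ``the next order of the expansion'' as you suggest --- it appears in the paper only as an artifact of bounding $\beta^3\le\alpha^3$ in the remainder.
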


Given the coupling we defined inSection~\ref{sec:coup_p} for proving Lemma~\ref{lem:coup_p}, the coupling between $Q_1$, and $Q_2$ uses the same technique in the last section for $\ns\le \absz$. 
\begin{itemize}
	\item 
	Given $\Xon$ from $Q_2$, for each $j\in[\ab/2]$ find all locations $\ell$ such that $X_\ell= 2j-1$, or $X_{\ell}=2j$. Call this set $T_j$. 
	\item
	Perform the coupling in Appendix~\ref{sec:coup_p} between $P_2$ and $P_1$ on $T_j$, after replacing $\{0,1\}$ with $\{2j-1, 2j\}$.
\end{itemize}

Using the coupling defined above, by the linearity of expectations, we get:
\begin{align}
	\expectation{\ham{\Xon}{\Yon}}  & =  \sum_{j=1}^{\ab/2} \expectation{\ham{X_1^{|T_j|}}{Y_1^{|T_j|}}} \nonumber\\
	& = \frac{\ab }{2} \expectation{\ham{X_1^{R}}{Y_1^{R}}} \nonumber \\
	& \le  \frac{\ab }{2}\cdot \expectation{64\cdot\Paren{\alpha^4R^{5/2} + {\dist^2R^{3/2} + \alpha^5 R^3}} }, \nonumber
\end{align}
where $R \sim \bino{m}{2/\ab}$.

We now bound the moments of Binomial random variables. The bound is similar in flavor to~\cite[Lemma 3]{AcharyaOST17} for Poisson random variables.
\begin{lemma}
Suppose $\frac{\ns}{k}>1$, and $Y\sim \bino{\ns}{\frac1k}$, then for $\gamma \ge 1$, there is a constant $C_{\gamma}$ such that 
\[
\expectation{Y^{\gamma}}\le C_\gamma {\Paren{\frac{\ns}{k}}}^\gamma.
\]
\end{lemma}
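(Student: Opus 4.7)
\medskip
\noindent\textbf{Proof plan.} Let $\lambda := n/k \ge 1$, so $\mathbb{E}[Y]=\lambda$. The idea is to bound the integer moments via falling factorial moments, then interpolate for non-integer $\gamma$ using a concavity argument.

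First, for any positive integer $r$, the falling factorial moments of a binomial are exactly computable:
\[
\mathbb{E}\bigl[Y(Y-1)\cdots(Y-r+1)\bigr] \;=\; \binom{n}{r}r!\cdot \frac{1}{k^{r}} \;\le\; \frac{n^{r}}{k^{r}} \;=\; \lambda^{r}.
\]
This is the only place we use the Bernoulli structure, and it gives us a clean bound that is independent of the individual values of $n$ and $k$.

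Next, for any positive integer $m$, I would use the Stirling-number identity $y^{m}=\sum_{j=0}^{m} S(m,j)\, y(y-1)\cdots(y-j+1)$, where $S(m,j)$ denotes the Stirling number of the second kind. Taking expectations and applying the factorial-moment bound,
\[
\mathbb{E}[Y^{m}] \;\le\; \sum_{j=1}^{m} S(m,j)\,\lambda^{j} \;\le\; \Bigl(\sum_{j=1}^{m} S(m,j)\Bigr)\lambda^{m} \;=\; B_{m}\,\lambda^{m},
\]
where the second inequality uses the hypothesis $\lambda \ge 1$, and $B_{m}$ is the $m$-th Bell number. So the claim holds for integer $\gamma$ with $C_{\gamma}=B_{\gamma}$.

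Finally, to handle arbitrary real $\gamma\ge 1$, let $m := \lceil \gamma\rceil$ and apply Jensen's inequality to the concave map $x\mapsto x^{\gamma/m}$ on $[0,\infty)$ (concave since $\gamma/m\le 1$):
\[
\mathbb{E}[Y^{\gamma}] \;=\; \mathbb{E}\bigl[(Y^{m})^{\gamma/m}\bigr] \;\le\; \bigl(\mathbb{E}[Y^{m}]\bigr)^{\gamma/m} \;\le\; \bigl(B_{m}\lambda^{m}\bigr)^{\gamma/m} \;=\; B_{m}^{\gamma/m}\lambda^{\gamma}.
\]
Setting $C_{\gamma}:=B_{\lceil\gamma\rceil}^{\gamma/\lceil\gamma\rceil}$ completes the proof. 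There is no real obstacle here; the only substantive ingredient is the exact formula for binomial factorial moments, and the condition $\lambda\ge 1$ is used precisely to absorb lower powers $\lambda^{j}$ into $\lambda^{\gamma}$, which is tight in view of Jensen's lower bound $\mathbb{E}[Y^{\gamma}]\ge \lambda^{\gamma}$.
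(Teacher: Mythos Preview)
Your proof is correct and follows essentially the same approach as the paper: bound integer moments and then use Jensen's inequality with the exponent $\gamma/\lceil\gamma\rceil$ to handle non-integer $\gamma$. The only difference is that for the integer case the paper simply cites a binomial moment formula, whereas you spell out the argument via falling factorial moments and Stirling numbers, obtaining the explicit constant $C_\gamma = B_{\lceil\gamma\rceil}^{\gamma/\lceil\gamma\rceil}$.
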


\begin{proof}
	For integer values of $\gamma$, this directly follows from the moment formula for Binomial distribution~\cite{Knoblauch08}, and for other $\gamma \ge 1$, by Jensen's Inequality
	\[
		\expectation{Y^{\gamma}} \le \expectation{\Paren{Y^{\ceil{\gamma} }}^{\frac{\gamma}{\ceil{\gamma}}}}\le  \expectation{\Paren{Y^{\ceil{\gamma} }}} ^{\frac{\gamma}{\ceil{\gamma}}} \le \Paren{ C_{\ceil{\gamma}} \expectation{Y}^{\ceil{\gamma} }}^{\frac{\gamma}{\ceil{\gamma}}} = C' (\expectation{Y})^{\gamma} , 
	\]
	proving the lemma. 
	\end{proof}
Therefore, letting $C = \max \{C_{5/2}, C_{3}, C_{3/2}\}$, we obtain
\begin{align*}
\expectation{\ham{\Xon}{\Yon}}  \le 32k C\cdot\Paren{\alpha^4\Paren{\frac{\ns}{k}}^{5/2} + {\dist^2 \Paren{\frac{\ns}{k}}^{3/2} + {{\alpha^5 \Paren{\frac{\ns}{k}}}}^3}}.
\end{align*}
Now, notice $\alpha\sqrt\frac{\ns}{k}<1$. Plugging this, 
\begin{align*}
\expectation{\ham{\Xon}{\Yon}}  \le\ & 32 C\cdot k\cdot\Paren{\alpha^4\Paren{\frac{\ns}{k}}^{5/2} + {\dist^2 \Paren{\frac{\ns}{k}}^{3/2} + {{\alpha^5 \Paren{\frac{\ns}{k}}}}^3}} \\
=\ & 32 C\cdot k \alpha^2\cdot\Paren{\alpha^2\frac{\ns}{k}\cdot\Paren{\frac{\ns}{k}}^{3/2} + \Paren{\frac{\ns}{k}}^{3/2} + {{\alpha^3 \Paren{\frac{\ns}{k}}^{3/2}\Paren{\frac{\ns}{k}}}}^{3/2}} \\
\le\ & 96 C\cdot k \Paren{\frac{\ns}{k}}^{3/2},
\end{align*}
completing the argument. 

\subsection{Proof of Lemma~\ref{lem:coup_p}} \label{sec:coup_p}

To prove Lemma~\ref{lem:coup_p}, we need a few lemmas first:

\begin{definition}
	A random variable $Y_1$ is said to stochastically dominate $Y_2$ if for all $t$,  $\probof{Y_1\ge t}\ge \probof{Y_2\ge t}$. 
\end{definition}

\begin{lemma}\label{stod}
	Suppose $N_1 \sim \bino{t}{\frac{1}{2}}, N_2 \sim \frac{1}{2}\bino{t}{\frac{1+\alpha}{2}} + \frac{1}{2}\bino{t}{\frac{1-\alpha}{2}} $. Then $Z_2 = \max \{N_2, t-N_2\}$ stochastically dominates $Z_1 = \max \{N_1, t-N_1\}$.
\end{lemma}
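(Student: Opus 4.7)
The plan is to establish stochastic dominance $Z_2 \ge_{st} Z_1$ by showing $\Pr[Z_2 \geq s] \geq \Pr[Z_1 \geq s]$ for every $s$. Since $Z_i = \max\{N_i, t - N_i\} \geq t/2$ always, the claim is trivial for $s \leq t/2$, so I only need to handle integer $s > t/2$. For such $s$, the events $\{N_i \geq s\}$ and $\{N_i \leq t - s\}$ are disjoint, so $\Pr[Z_i \geq s] = \Pr[N_i \geq s] + \Pr[N_i \leq t - s]$.

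The first step is to use symmetry to collapse both probabilities into evaluations of a single function. Define $h(p) \ed \Pr[\mathrm{Bin}(t,p) \geq s]$. Since $N_1 \sim \mathrm{Bin}(t,1/2)$ is symmetric around $t/2$, we immediately get $\Pr[Z_1 \geq s] = 2h(1/2)$. For $N_2$, the two mixture components are related by $p \leftrightarrow 1-p$, and the identity $\mathrm{Bin}(t,(1-\alpha)/2) \stackrel{d}{=} t - \mathrm{Bin}(t,(1+\alpha)/2)$ makes $N_2$ symmetric about $t/2$ as well, giving $\Pr[Z_2 \geq s] = h((1+\alpha)/2) + h((1-\alpha)/2)$. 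So the whole lemma reduces to showing $h((1+\alpha)/2) + h((1-\alpha)/2) \geq 2 h(1/2)$.

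The second step is to show $g(p) \ed h(p) + h(1-p)$ is minimized on $[0,1]$ at $p = 1/2$. I would use the classical telescoping identity for the binomial tail,
\[
h'(p) = \frac{t!}{(s-1)!(t-s)!} \, p^{s-1}(1-p)^{t-s},
\]
so that
\[
g'(p) = \frac{t!}{(s-1)!(t-s)!}\Bigl[p^{s-1}(1-p)^{t-s} - (1-p)^{s-1}p^{t-s}\Bigr].
\]
Because $s$ is an integer with $s > t/2$, one checks that $s - 1 \geq t - s$, so I can factor out $(p(1-p))^{t-s}$ to obtain
\[
g'(p) = \frac{t!}{(s-1)!(t-s)!}\,(p(1-p))^{t-s}\bigl[p^{2s-t-1} - (1-p)^{2s-t-1}\bigr],
\]
with exponent $2s - t - 1 \geq 0$. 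The bracketed factor has the same sign as $p - 1/2$, so $g$ decreases on $[0,1/2]$ and increases on $[1/2,1]$, giving $g((1+\alpha)/2) \geq g(1/2) = 2h(1/2)$, which is exactly what is needed.

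I do not expect a serious obstacle here: the only delicate points are verifying that $s > t/2$ combined with integrality really does force $2s - t - 1 \geq 0$ in both parities of $t$ (and handling the degenerate case $2s - t - 1 = 0$, where $g$ is constant and both $\Pr[Z_i \geq s] = 1$), and checking that $N_2$ really is symmetric about $t/2$ so that $\Pr[N_2 \leq t - s] = \Pr[N_2 \geq s]$. Both are elementary bookkeeping; the substance of the argument is the monotonicity of $g$ via the derivative identity above.
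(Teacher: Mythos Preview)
Your proof is correct, and it takes a genuinely different route from the paper's. The paper fixes $\alpha$ and studies the difference $F(l)=\Pr[Z_2\ge l]-\Pr[Z_1\ge l]$ as a function of the threshold $l$: it computes the increment $f(l)=F(l+1)-F(l)=-\binom{t}{l}g(l)$ for a certain function $g$, then differentiates $g$ with respect to a continuous extension of $l$ to show $g$ crosses zero exactly once on $[t/2,t]$, so $F$ is unimodal (increasing then decreasing) with $F(t/2)=0$ and $F(t)\ge 0$, forcing $F\ge 0$. You instead fix the threshold $s$ and differentiate the tail $h(p)=\Pr[\mathrm{Bin}(t,p)\ge s]$ with respect to the bias $p$, using the classical telescoping identity for $h'(p)$; this reduces the inequality to the one-line observation that $p\mapsto h(p)+h(1-p)$ has its minimum at $p=1/2$. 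Your argument is shorter and avoids the artificial continuous extension in $l$; the paper's argument yields the extra structural fact that $F$ is unimodal in $l$, but that is not needed for the lemma itself.
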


\begin{proof}
	\begin{equation}
		\probof{Z_2 \ge l} = \sum_{i = 0}^{t - l} {t \choose i} \left[\Paren{\frac{1+\dist}{2}}^i \Paren{\frac{1-\dist}{2}}^{t-i} + \Paren{\frac{1-\dist}{2}}^i \Paren{\frac{1+\dist}{2}}^{t-i} \right],\nonumber
	\end{equation}
	\begin{equation}
		\probof{Z_1 \ge l} = 2\cdot \sum_{i = 0}^{t - l} {t \choose i} \Paren{\frac12}^t.\nonumber
	\end{equation}
	Define $F(l) =\probof{Z_2 \ge l} - \probof{Z_1 \ge l} $. What we need to show is $F(l) \ge 0, \forall l \ge \frac{t}{2}$. First we observe that $\probof{Z_2 \ge \frac{t}{2}} =  \probof{Z_1 \ge \frac{t}{2}} = 1$ and  $\probof{Z_2 \ge t} = (\frac{1 +\dist}{2})^t + (\frac{1 - \dist}{2})^t  \ge 2 (\frac12)^t  = \probof{Z_1 \ge t}$. Hence $F(\frac{t}{2}) = 0, F(t) > 0$.
	Let
	\[
		f(l) = F(l+1) - F(l)= - {{t \choose l} } \left[\Paren{\frac{1+\dist}{2}}^l \Paren{\frac{1-\dist}{2}}^{t-l} + \Paren{\frac{1-\dist}{2}}^l \Paren{\frac{1+\dist}{2}}^{t-l} - 2 \Paren{\frac12}^t\right].
	\]
	Let $g(x) = \Paren{\frac{1+\dist}{2}}^x \Paren{\frac{1-\dist}{2}}^{t-x} + \Paren{\frac{1-\dist}{2}}^x \Paren{\frac{1+\dist}{2}}^{t-x} - 2 \Paren{\frac12}^t, x \in [t/2, t]$, then
	\[
		\frac{dg(x)}{dx} =  \ln\Paren{\frac{1+\dist}{1-\dist}} \cdot\left[\Paren{\frac{1+\dist}{2}}^x \Paren{\frac{1-\dist}{2}}^{t-x} - \Paren{\frac{1-\dist}{2}}^x \Paren{\frac{1+\dist}{2}}^{t-x}\right] \ge 0.
	\]
	
	We know $g(t/2) < 0, g(t) > 0$, hence $\exists x^*, s.t. g(x) \le 0, \forall x< x^*$ and $g(x) \ge 0, \forall x > x^*$. Because $f(l) =  - {t \choose l}  g(l)$, hence $\exists l^*, s.t. f(l) \le 0, \forall l \ge l^*$ and $f(l) \ge 0, \forall l < l^*.$
	Therefore, $F(l)$ first increases and then decreases, which means $F(l)$ achieves its minimum at $\frac{t}{2}$ or $t$. Hence $F(l) \ge 0$, completing the proof.
\end{proof}

For stochastic dominance, the following definition~\cite{Hollander12} will be useful.  
\begin{definition}
	A coupling $(X',Y')$ is a monotone coupling if $\probof{X' \ge Y'}=1$. 
\end{definition}

The following lemma states a nice relationship between stochastic dominance and monotone coupling, which is provided as Theorem~7.9 in \cite{Hollander12}

\begin{lemma} \label{stod:coup}
	Random variable $X$ stochastically dominates $Y$ if and only if there is a monotone coupling between $(X',Y')$ with $\probof{X' \geq Y'} = 1$. 
\end{lemma}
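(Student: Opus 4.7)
The plan is to prove the two directions of the equivalence separately, with the forward (easy) direction handled by a direct computation on tail probabilities, and the reverse (constructive) direction handled via the standard quantile coupling.

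For the forward direction, I would assume a monotone coupling $(X',Y')$ exists with $X'\sim X$, $Y'\sim Y$, and $\probof{X'\ge Y'}=1$. Then for every threshold $t\in\RR$, the event $\{Y'\ge t\}$ is almost surely contained in $\{X'\ge t\}$, so $\probof{X\ge t}=\probof{X'\ge t}\ge \probof{Y'\ge t}=\probof{Y\ge t}$. This is exactly stochastic dominance.

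For the reverse direction, suppose $X$ stochastically dominates $Y$, i.e.\ $F_X(t)=\probof{X\le t}\le \probof{Y\le t}=F_Y(t)$ for every $t$. I will construct the coupling through a common uniform random variable. Let $U\sim\mathrm{Unif}[0,1]$, and let $F_X^{-1}(u):=\inf\{t: F_X(t)\ge u\}$ and $F_Y^{-1}(u):=\inf\{t:F_Y(t)\ge u\}$ be the generalized quantile functions. Define $X':=F_X^{-1}(U)$ and $Y':=F_Y^{-1}(U)$. By the standard inverse transform lemma, $X'\sim X$ and $Y'\sim Y$, so this is a valid coupling. The monotonicity $F_X\le F_Y$ pointwise translates, via the definition of the generalized inverse, into the reversed inequality on quantiles: $F_X^{-1}(u)\ge F_Y^{-1}(u)$ for every $u\in(0,1)$. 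Therefore $X'\ge Y'$ pointwise on the probability space, hence almost surely, giving a monotone coupling.

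The only subtle step is the quantile-reversal implication, which I would justify in one line: if $F_Y^{-1}(u)=s$, then $F_Y(s)\ge u$, so $F_X(s)\ge u$ as well, and thus $F_X^{-1}(u)\ge s$ by the definition of the generalized inverse. This avoids any measure-theoretic subtlety arising from jumps or flat pieces of the CDFs and is the main (minor) obstacle. Everything else is bookkeeping, and no results beyond the definition of stochastic dominance and of a monotone coupling (both given in the excerpt) are needed.
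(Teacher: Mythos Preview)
The paper does not prove this lemma at all: it simply cites it as Theorem~7.9 in \cite{Hollander12}. So there is nothing to compare against, and your quantile-coupling approach is the standard textbook argument and is the right idea.

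That said, your one-line justification of the quantile-reversal step is written backwards. You argue: ``if $F_Y^{-1}(u)=s$, then $F_Y(s)\ge u$, so $F_X(s)\ge u$ as well, and thus $F_X^{-1}(u)\ge s$.'' Both inferences here go the wrong way. From $F_X\le F_Y$ and $F_Y(s)\ge u$ you cannot conclude $F_X(s)\ge u$; and even if you could, $F_X(s)\ge u$ would give $F_X^{-1}(u)\le s$, not $\ge s$. The correct version is the containment argument on the level sets: since $F_X\le F_Y$, whenever $F_X(t)\ge u$ we also have $F_Y(t)\ge u$, hence $\{t:F_X(t)\ge u\}\subseteq\{t:F_Y(t)\ge u\}$, and taking infima gives $F_X^{-1}(u)\ge F_Y^{-1}(u)$. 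Equivalently, for any $t<s=F_Y^{-1}(u)$ we have $F_X(t)\le F_Y(t)<u$, so $F_X^{-1}(u)\ge s$. With this fix the argument is complete.
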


By Lemma~\ref{stod:coup}, there is a monotone coupling between $Z_1 = \max \{N_1, t-N_1\}$ and $Z_2 = \max \{N_2, t-N_2\}$. Suppose the coupling is $P^c_{Z_1,Z_2}$, we define the coupling between $X_1^t$ and $Y_1^t$ as following:

\begin{enumerate}
	\item Generate $X_1^t$ according to $P_1$ and count the number of one's in $X_1^t$ as $n_1$.
	\item Generate $n_2$ according to $P^c[Z_2| Z_1 = \max\{n_1, t- n_1\}]$.
	\item If $n_1 > t - n_1$, choose $n_2 - n_1$ of the zero's in $X_1^t$ uniformly at random and change them to one's to get $Y_1^t$. 
	\item If $n_1 < t - n_1$, choose $n_2 - (t - n_1)$ of the one's in $X_1^t$ uniformly at random and change them to zero's to get $Y_1^t$.
	\item If $n_1 = t - n_1$, break ties uniformly at random and do the corresponding action.
	\item Output $(X_1^t,Y_1^t)$.
\end{enumerate}

Since the coupling is monotone, and $\ham{X_1^t}{Y_1^t} = Z_2 - Z_1$ for every pair of $(X_1^t,Y_1^t)$, we get:
\[
	\expectation{\ham{X_1^t}{Y_1^t}} = \expectation{\max \{N_2, t-N_2\}} - \expectation{\max \{N_1, t-N_1\}}.
\]

Hence, to show lemma~\ref{lem:coup_p}, it suffices to show the following lemma:

\begin{lemma}
Suppose $N_1 \sim \bino{t}{\frac{1}{2}}, N_2 \sim \frac{1}{2}\bino{t}{\frac{1+\alpha}{2}} + \frac{1}{2}\bino{t}{\frac{1-\alpha}{2}} $.
\[
	\expectation{\max \{N_2, t-N_2\}} - \expectation{\max \{N_1, t-N_1\}} < C\cdot (\alpha^2t^{3/2} +  \alpha^4 t^{5/2} + \alpha^5 t^3)
\]
\end{lemma}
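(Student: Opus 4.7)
The plan is to first reduce, via symmetry, to the polynomial $h(p) := \EE_{Y\sim\bino{t}{p}}[|Y-t/2|]$. Using the identity $\max\{N,\,t-N\} = t/2 + |N-t/2|$, the quantity we must bound equals $\EE|N_2 - t/2| - \EE|N_1 - t/2|$. Since $N_2$ is an equal mixture of $Y_+ \sim \bino{t}{(1+\alpha)/2}$ and $Y_- \sim \bino{t}{(1-\alpha)/2}$, and since $Y_-$ and $t - Y_+$ are identically distributed (so $|Y_- - t/2|$ and $|Y_+ - t/2|$ are too), we obtain $\EE|N_2 - t/2| = \EE|Y_+ - t/2|$. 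The goal thus reduces to bounding $h((1+\alpha)/2) - h(1/2)$.

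Next, observe that $h$ is a polynomial in $p$ satisfying $h(p) = h(1-p)$ (by the index change $k \leftrightarrow t-k$ in the defining sum), so $h'(1/2) = 0$. Writing the difference as a second-order Taylor integral,
\[
h\!\left(\tfrac{1+\alpha}{2}\right) - h\!\left(\tfrac{1}{2}\right) \;=\; \int_{1/2}^{(1+\alpha)/2}\!\left(\tfrac{1+\alpha}{2} - r\right)\, h''(r)\,dr,
\]
reduces the problem to estimating $h''$. The finite-difference identity
\[
\frac{d^{2}}{dp^{2}}\EE_{Y\sim\bino{t}{p}}[f(Y)] \;=\; t(t-1)\,\EE_{Y'\sim\bino{t-2}{p}}[\Delta^{2} f(Y')],
\]
where $\Delta f(y) := f(y+1)-f(y)$, applied to $f(y) = |y - t/2|$, produces a discrete Laplacian supported on a single point near $t/2$ (for even $t$, at $y = t/2-1$ with value $2$; the odd case is analogous). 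Hence $h''(r) = 2\,t(t-1)\,\binom{t-2}{t/2-1}\,(r(1-r))^{t/2-1}$, and Stirling applied at $r=(1+s)/2$ gives $h''(r) \lesssim t^{3/2}\,(1-s^{2})^{t/2-1}$.

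After the change of variable $s = 2r-1$, the target becomes
\[
h\!\left(\tfrac{1+\alpha}{2}\right) - h\!\left(\tfrac{1}{2}\right) \;\lesssim\; \sqrt{t}\int_{0}^{\alpha}(\alpha-s)\,(1-s^{2})^{t/2-1}\,ds.
\]
Using $(1-s^2)^{t/2-1} \le e^{-(t/2-1)s^2}$ and explicit evaluation of $\int_{0}^{\alpha}(\alpha-s)\,e^{-ts^2/2}\,ds$ yields the leading $\alpha^{2}t^{3/2}$ term. Refining via $(1-s^{2})^{t/2-1} = e^{-(t/2-1)s^{2}}\bigl(1+O(ts^{4})\bigr)$ and integrating the correction produces the $\alpha^{4}t^{5/2}$ term, while the $\alpha^{5}t^{3}$ correction absorbs the contribution from the regime $\alpha\sqrt{t}\gtrsim 1$, where the Gaussian proxy is loose and one must rely on a crude envelope plus explicit computation of $\int_{0}^{\alpha}s^{k}e^{-ts^{2}/2}\,ds$ for small $k$.

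The main obstacle will be simultaneously controlling two approximation errors: (i) the Stirling error in estimating $\binom{t-2}{t/2-1}(1/2)^{t-2}$ up to the multiplicative accuracy $1+O(1/t)$, and (ii) the Gaussian-proxy error between $(1-s^{2})^{t/2-1}$ and $e^{-ts^{2}/2}$, which must be tracked uniformly over $s\in[0,\alpha]$. The $\alpha^{5}t^{3}$ term is essentially a symptom of the breakdown of the Gaussian approximation in the non-trivial regime, and is what forces the three-term structure of the stated bound.
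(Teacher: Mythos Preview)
Your approach is correct and shares the same high-level structure as the paper: both reduce to $h\!\left(\tfrac{1+\alpha}{2}\right)-h\!\left(\tfrac12\right)$ with $h(p)=\EE_{Y\sim\bino{t}{p}}|Y-t/2|$, observe $h'(1/2)=0$ by symmetry, and apply a second-order Taylor expansion. The difference lies in how the second derivative is estimated. The paper differentiates the summand $g(\alpha)$ explicitly and bounds $f''(\beta)$ via centered-moment inequalities $\EE|X-t/2|\le \sqrt t+t\beta$ and $\EE|X-t/2|^3\le 8(t^{3/2}+(t\beta/2)^3)$, which is precisely what produces the three terms $\alpha^2t^{3/2}$, $\alpha^4t^{5/2}$, $\alpha^5t^3$. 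Your finite-difference identity instead gives an \emph{exact} expression $h''(r)=2t(t-1)\binom{t-2}{t/2-1}(r(1-r))^{t/2-1}$ (even $t$), which is cleaner: since $(r(1-r))^{t/2-1}\le(1/4)^{t/2-1}$ and $\binom{t-2}{t/2-1}2^{-(t-2)}\lesssim t^{-1/2}$ by Stirling, you have $h''(r)\lesssim t^{3/2}$ \emph{uniformly} on $[1/2,1]$. Plugging this into the Taylor remainder immediately yields the single-term bound $O(\alpha^2 t^{3/2})$, which is stronger than the lemma as stated.

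This means your last two paragraphs are unnecessary: there is no need for the Gaussian proxy $(1-s^2)^{t/2-1}\approx e^{-ts^2/2}$, no refinement to extract $\alpha^4 t^{5/2}$, and no ``breakdown regime'' producing $\alpha^5 t^3$. Those extra terms are artifacts of the paper's coarser moment bounds, not features of the problem. (Incidentally, after the change of variables your displayed bound should carry a factor $t^{3/2}$, not $\sqrt t$; this is just an arithmetic slip since the $1/4$ from the Jacobian does not cancel a full power of $t$.) Your route is thus genuinely more elementary and gives a sharper conclusion.
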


\begin{proof}
	
\begin{align}
	\ &\expectation{\max \{N_2, t-N_2\}} \nonumber\\
	 =\ & \sum_{0\le\ell\le t/2}(t/2 + \ell){t\choose \frac t2-\ell}\Paren{\Paren{\frac{1-\alpha}{2}}^{\frac t2-\ell}\Paren{\frac{1+\alpha}{2}}^{\frac t2+\ell}+\Paren{\frac{1+\alpha}{2}}^{\frac t2-\ell}\Paren{\frac{1-\alpha}{2}}^{\frac t2+\ell}} \nonumber\\
	=\ & \frac t2 +  \sum_{0\le\ell\le t/2} \ell{t\choose \frac t2-\ell}\Paren{\Paren{\frac{1-\alpha}{2}}^{\frac t2-\ell}\Paren{\frac{1+\alpha}{2}}^{\frac t2+\ell}+\Paren{\frac{1+\alpha}{2}}^{\frac t2-\ell}\Paren{\frac{1-\alpha}{2}}^{\frac t2+\ell}}. \nonumber
\end{align}

Consider a fixed value of $t$. Let 
\[
f(\alpha) = \sum_{0\le\ell\le t/2}\ell{t\choose \frac t2-\ell}\Paren{\Paren{\frac{1-\alpha}{2}}^{\frac t2-\ell}\Paren{\frac{1+\alpha}{2}}^{\frac t2+\ell}+\Paren{\frac{1+\alpha}{2}}^{\frac t2-\ell}\Paren{\frac{1-\alpha}{2}}^{\frac t2+\ell}}.
\]

The first claim is that this expression is minimized at $\alpha=0$. This is because of the monotone coupling between $Z_1$ and $Z_2$, which makes $\expectation{Z_2} \ge \expectation{Z_1}$. This implies that $f'(0)=0$, and by intermediate value theorem, there is $\beta\in[0,\alpha]$, such that 
\begin{align}
f(\alpha) = f(0)+\frac12\alpha^2\cdot f''(\beta).\label{eq:double-derivative}
\end{align}
We will now bound this second derivative. 
To further simplify, let
\[
g(\alpha) = \Paren{\frac{1-\alpha}{2}}^{\frac t2-\ell}\Paren{\frac{1+\alpha}{2}}^{\frac t2+\ell}+\Paren{\frac{1+\alpha}{2}}^{\frac t2-\ell}\Paren{\frac{1-\alpha}{2}}^{\frac t2+\ell}.
\]
Differentiating $g(\alpha)$, twice with respect to $\alpha$, we obtain, %\todo{PLEASE VERIFY THE EXPRESSIONS.}
\begin{align}
g''(\alpha) = & ~\frac1{16} \cdot \Paren{\alpha^2(t^2-t)-4\alpha\ell(t-1)+4\ell^2-t}\Paren{\frac{1-\alpha}{2}}^{\frac t2-\ell-2}\Paren{\frac{1+\alpha}{2}}^{\frac t2+\ell-2}\nonumber\\
& + \frac1{16} \cdot \Paren{\alpha^2(t^2-t)+4\alpha\ell(t-1)+4\ell^2-t}\Paren{\frac{1+\alpha}{2}}^{\frac t2-\ell-2}\Paren{\frac{1-\alpha}{2}}^{\frac t2+\ell-2}.\nonumber
\end{align}

\iffalse
For any $\ell\ge0$,
\[
\Paren{\frac{1+\alpha}{2}}^{\frac t2-\ell-2}\Paren{\frac{1-\alpha}{2}}^{\frac t2+\ell-2}\le 
\Paren{\frac{1-\alpha}{2}}^{\frac t2-\ell-2}\Paren{\frac{1+\alpha}{2}}^{\frac t2+\ell-2}.
\]

Therefore $g''(\alpha)$ can be bound by,
\begin{align}
g''(\alpha) \le \frac1{16}\cdot\Paren{\alpha^2t^2+4\ell^2}\Paren{\Paren{\frac{1-\alpha}{2}}^{\frac t2-\ell-2}\Paren{\frac{1+\alpha}{2}}^{\frac t2+\ell-2}+\Paren{\frac{1-\alpha}{2}}^{\frac t2-\ell-2}\Paren{\frac{1+\alpha}{2}}^{\frac t2+\ell-2}}.\nonumber
\end{align}

When $\alpha<\frac14$, $(1-\alpha^2)^2>\frac12$, and we can further bound the above expression by
\[
g''(\alpha) 
\le 2\cdot\Paren{\alpha^2t^2+4\ell^2}\Paren{\Paren{\frac{1-\alpha}{2}}^{\frac t2-\ell}\Paren{\frac{1+\alpha}{2}}^{\frac t2+\ell}+\Paren{\frac{1-\alpha}{2}}^{\frac t2-\ell}\Paren{\frac{1+\alpha}{2}}^{\frac t2+\ell}}.
\]
\fi

%{\color{red}
Then $g''(\alpha)$ can be bound by,
\begin{align}
g''(\alpha) \le \frac1{16}\cdot\Paren{\alpha^2t^2+4\ell^2}\Paren{\Paren{\frac{1-\alpha}{2}}^{\frac t2-\ell-2}\Paren{\frac{1+\alpha}{2}}^{\frac t2+\ell-2}+\Paren{\frac{1+\alpha}{2}}^{\frac t2-\ell-2}\Paren{\frac{1-\alpha}{2}}^{\frac t2+\ell-2}}.\nonumber
\end{align}

When $\alpha<\frac14$, $(1-\alpha^2)^2>\frac12$, and we can further bound the above expression by
\[
g''(\alpha) 
\le 2\cdot\Paren{\alpha^2t^2+4\ell^2}\Paren{\Paren{\frac{1-\alpha}{2}}^{\frac t2-\ell}\Paren{\frac{1+\alpha}{2}}^{\frac t2+\ell}+\Paren{\frac{1+\alpha}{2}}^{\frac t2-\ell}\Paren{\frac{1-\alpha}{2}}^{\frac t2+\ell}}.
\]
%}

Suppose $X$ is a $\bino{t}{ \frac{1+\beta}2}$ distribution. Then, for any $\ell>0$,
\[
\probof{\absv{X-\frac t2}=\ell} = {t\choose \frac t2-\ell}\Paren{\Paren{\frac{1-\beta}{2}}^{\frac t2-\ell}\Paren{\frac{1+\beta}{2}}^{\frac t2+\ell}+\Paren{\frac{1+\beta}{2}}^{\frac t2-\ell}\Paren{\frac{1-\beta}{2}}^{\frac t2+\ell}}.
\]
Therefore, we can bound~\eqref{eq:double-derivative}, by
\[
f''(\beta) \le 2\cdot\Paren{\beta^2t^2 \expectation{\absv{X-\frac t2}}+4\expectation{\absv{X-\frac t2}^3}}.
\] 
For $X\sim \bino{\ns}{r}$, 
\begin{align*}
\expectation{\Paren{X-\ns r}^2}&= \ns r(1-r)\le \frac{\ns}4, \text{ and }\\
\expectation{\Paren{X-\ns r}^4}&= \ns r(1-r)\Paren{3r(1-r) (\ns-2)+1}\le 3\frac{\ns^2}{4}.
\end{align*}
We bound each term using these moments, 
\begin{align}
\expectation{\absv{X-\frac t2}} \le \expectation{\Paren{X-\frac t2}^2}^{1/2}
=& \Paren{t\frac{(1-\beta^2)}4+ \Paren{\frac{t\beta}{2}}^2}^{1/2}\nonumber \le   \sqrt t + {t\beta}.
\end{align}
We similarly bound the next term, 
\begin{align}
\expectation{\absv{X-\frac t2}^3} &\le  \expectation{\Paren{X-\frac t2}^4}^{3/4}\nonumber\\
&\le  \expectation{\Paren{X- \frac{t(1+\beta)}{2}+\frac{t\beta}{2}}^4}^{3/4}\nonumber\\
&\le   8\Paren{\expectation{\Paren{X- \frac{t(1+\beta)}2}^4}^{3/4}+\Paren{\frac{t\beta}2}^3}\nonumber\\
&\le  8\Paren{t^{3/2} + \Paren{\frac{t\beta}2}^3},\nonumber
\end{align}
where we use $(a+b)^4\le 8(a^4+b^4)$. 

Therefore, 
\[
f''(\beta) \le 64\cdot\Paren{\beta^2t^{5/2} + {t^{3/2} + {(t\beta)}^3}} \le 64\cdot\Paren{\alpha^2t^{5/2} + {t^{3/2} + {(t\alpha)}^3}}.
\]

As a consequence,
\[
	\expectation{\max \{N_2, t-N_2\}} - \expectation{\max \{N_1, t-N_1\}} = \dist^2 f''(\beta) \le 64\cdot (\alpha^2t^{3/2} +  \alpha^4 t^{5/2} + \alpha^5 t^3).
\]
completing the proof.
\end{proof}

\chapter{Privately Estimating Distribution Properties}
\label{cha:ins}
\section{Introduction}

How can we infer distribution properties given samples?
If data is in abundance, the solution may be simple -- the empirical distribution will approximate the true distribution.
However, challenges arise when data is scarce in comparison to the size of the domain. %, and especially when we wish to quantify ``rare events.''
For example, it has recently been observed that there are several very rare genetic mutations which occur in humans, and we wish to know how many  such mutations exist~\cite{KeinanC12,TennessenBOFKGMDLJKJLGRAANBSBBABSN12,NelsonWEKSVSTBFWAZLZZLLLWTHNWACZWCNM12}. This is also a good example of performing statistical inference on a sensitive dataset.
%Many of these mutations have only been seen once, and we can infer that there are many which have not been seen at all.

Our focus in this chapter is to develop tools for privately estimating distribution properties, which is another important problem in statistical inference.
In particular, we study the tradeoff between statistical accuracy, privacy, and error rate in the sample size. 
Our model is that we are given sample access to some unknown discrete distribution $p$, over a domain of size $\ab$, which is possibly unknown in some tasks.
We wish to estimate the following properties:
\begin{itemize}
\item {\bf Support Coverage}: If we take $m$ samples from the distribution, what is the expected number of unique elements we expect to see?
\item {\bf Support Size}: How many elements of the support have non-zero probability?
\item {\bf Entropy}: What is the Shannon entropy of the distribution?
%\item {\bf Distance to Uniformity}: What is the $\ell_1$-distance between $p$ and the distribution which is uniform over the domain?
\end{itemize}
For more formal statements of these problems, see Section~\ref{sec:probs}.
We require that our output is $\dist$-accurate, satisfies $(\eps, 0)$-differential privacy, and is correct with probability $1 - \fp$.
The goal is to give an algorithm with minimal sample complexity $n$, while simultaneously being computationally efficient. 

\subsection{Results and Techniques}
{\bf \noindent Theoretical Results.} Our main results show that privacy can be achieved for all these problems at a very low cost.
For example, if one wishes to privately estimate entropy, this incurs an additional additive cost in the sample complexity which is very close to linear in $1/\dist\eps$.
We draw attention to two features of this bound.
First, this is independent of $\ab$.
All the problems we consider have complexity $\Theta(\ab/\log \ab)$, so in the primary regime of study where $\ab \gg 1/\dist\eps$, this small additive cost is dwarfed by the inherent sample complexity of the non-private problem.
Second, the bound is almost linear in $1/\dist\eps$.
We note that performing even the most basic statistical task privately, estimating the bias of a coin, incurs this linear dependence.
Surprisingly, we show that much more sophisticated inference tasks can be privatized at almost no cost.
In particular, these properties imply that the additive cost of privacy is $o(1)$ in the most studied regime where the support size is large.
In general, this is not true -- for many other problems, including distribution estimation and hypothesis testing, the additional cost of privacy depends significantly on the support size or dimension~\cite{DiakonikolasHS15,CaiDK17,AcharyaSZ18,AliakbarpourDR18}.
We also provide lower bounds, showing that our upper bounds are almost tight.
A more formal statement of our results appears in Section~\ref{sec:ins_results}.

{\bf \noindent  Experimental Results.} We demonstrate the efficacy of our method with experimental evaluations.
As a baseline, we compare with the non-private algorithms of~\cite{OrlitskySW16} and~\cite{WuY18}.
Overall, we find that our algorithms' performance is nearly identical, showing that, in many cases, privacy comes (essentially) for free. 
We begin with an evaluation on synthetic data.
Then, inspired by~\cite{ValiantV13a,OrlitskySW16}, we analyze a text corpus consisting of words from Hamlet, in order to estimate the number of unique words which occur.
Finally, we investigate name frequencies in the US census data.
This setting has been previously considered by~\cite{OrlitskySW16}, but we emphasize that this is an application where private statistical analysis is critical.
This is proven by efforts of the US Census Bureau to incorporate differential privacy into the 2020 US census~\cite{DajaniLSKRMGDGKKLSSVA17}.

{\bf \noindent  Techniques.} Our approach works by choosing statistics for these tasks which possess bounded sensitivity, which is well-known to imply privacy under the Laplace or Gaussian mechanism.
We note that bounded sensitivity of statistics is not always something that can be taken for granted.
Indeed, for many fundamental tasks, optimal algorithms for the non-private setting may be highly sensitive, thus necessitating crucial modifications to obtain differential privacy~\cite{AcharyaDK15, CaiDK17}.
Thus, careful choice and design of statistics must be a priority when performing inference with privacy considerations.

To this end, we leverage recent results of~\cite{AcharyaDOS17}, which studies estimators for non-private versions of the problems we consider.
The main technical work in their paper exploits bounded sensitivity to show sharp cutoff-style concentration bounds for certain estimators, which operate using the principle of best-polynomial approximation.
They use these results to show that a single algorithm, the Profile Maximum Likelihood (PML), can estimate all these properties simultaneously.
On the other hand, we consider the sensitivity of these estimators for purposes of privacy -- the same property is utilized by both works for very different purposes, a connection which may be of independent interest.

We note that bounded sensitivity of a statistic may be exploited for purposes other than privacy.
For instance, by McDiarmid's inequality, any such statistic also enjoys very sharp concentration of measure, implying that one can boost the success probability of the test at an additive cost which is logarithmic in the inverse of the failure probability.
One may naturally conjecture that, if a statistical task is based on a primitive which concentrates in this sense, then it may also be privatized at a low cost.
However, this is not true -- estimating a discrete distribution in $\ell_1$ distance is such a task, but the cost of privatization depends significantly on the support size~\cite{DiakonikolasHS15}.

One can observe that, algorithmically, our method is quite simple: compute the non-private statistic, and add a relatively small amount of Laplace noise.
The non-private statistics have recently been demonstrated to be practical~\cite{OrlitskySW16, WuY18}, and the additional cost of the Laplace mechanism is minimal.
This is in contrast to several differentially private algorithms which invoke significant overhead in the quest for privacy.
Our algorithms attain almost-optimal rates (which are optimal up to constant factors for most parameter regimes of interest), while simultaneously operating effectively in practice, as demonstrated in our experimental results.

\subsection{Related Work}
Over the last decade, there have been a flurry of works on the problems we study by the computer science and information theory communities, including Shannon and R\'enyi entropy estimation~\cite{Paninski03, ValiantV17b, JiaoVHW17, AcharyaOST17, ObremskiS17, WuY18}, support coverage and support size estimation~\cite{ValiantV17b, ValiantV16, OrlitskySW16, RaghunathanVZ17, WuY18}. %, and estimating distance between discrete distributions~\cite{??}. 
A recent paper studies the general problem of estimating functionals of discrete distribution from samples in terms of the smoothness of the functional~\cite{FukuchiS17}.
These have culminated in a nearly-complete understanding of the sample complexity of these properties, with optimal sample complexities (up to constant factors) for most parameter regimes.

%A related line of work, on distributional property testing, rather than estimation is also very mature, see, i.e.,~\cite{BatuFFKRW01, BatuFRSW13, ChanDVV14, AcharyaDK15, BhattacharyaV15, CanonneDGR16, ValiantV17a, DiakonikolasGPP17, DaskalakisKW18} and~\cite{Canonne15} for a survey. These problems can generally be solved with a sample complexity which is strongly sublinear in the domain size, improving upon the naive empirical method by polynomial factors.

Recently, there has been significant interest in performing statistical tasks under differential privacy constraints.
Perhaps most relevant to this work are~\cite{CaiDK17, AcharyaSZ18, AliakbarpourDR18,Sheffet18,AcharyaCFT18}, which study the sample complexity of differentialy privately performing classical distribution testing problems, including identity and closeness testing.
Some recent work focuses on the testing of simple hypotheses: \cite{CanonneKMSU19} studies the sample complexity of this problem, while~\cite{AwanS18} provides a uniformly most powerful (UMP) test for binomial data (though~\cite{BrennerN14} shows that UMP tests can not exist in general).
Other works investigating private hypothesis testing include~\cite{WangLK15,GaboardiLRV16,KiferR17,KakizakiSF17,Rogers17,GaboardiR17}, which focus less on characterizing the finite-sample guarantees of such tests, and more on understanding their asymptotic properties and applications to computing p-values.
There has also been study on private distribution learning~\cite{DiakonikolasHS15,DuchiJW17,KarwaV18,AcharyaSZ18a,KamathLSU18}, in which we wish to estimate parameters of the distribution, rather than just a particular property of interest.
Similar to our work, \cite{Smith11} shows that the cost of privacy in statistical estimation can be a lower order term -- roughly, he shows that this is the case for any statistic which is asymptotically normal.
A number of other problems have been studied with privacy requirements, including clustering~\cite{WangWS15,BalcanDLMZ17}, principal component analysis~\cite{ChaudhuriSS13,KapralovT13,HardtP14b}, ordinary least squares~\cite{Sheffet17}, and much more.

%I don't think we really need an organization section, since this is a pretty straightforward organization.
\subsection{Organization}
We begin with notation and preliminaries in Section~\ref{sec:ins_prelim}.
Our theoretical results and analysis are described in Section~\ref{sec:ins_results} and~\ref{sec:ins_theory}.
Finally, our experimental investigations are in~\ref{sec:ins_exp} and~\ref{sec:supp-experiments}.

\section{Preliminaries}
\label{sec:ins_prelim}

%We will start with some definitions.
%Let $\Delta\ed \{(p(1),\ldots, p(\ab)):p(i)\ge0, \sum_{i=1}^\ab p(i) = 1, 1\le \ab\le\infty\}$ be the set of discrete distributions over a countable support. Let $\Dk$ be the set of distributions in $\Delta$ with at most $\ab$ non-zero probability values. A \emph{property} $f(p)$ is a mapping from $\Delta\to\RR$. 

We now describe the classical distribution property estimation problem, and then state the problem under differential privacy. 

\paragraph{Property Estimation.} Given $\dist,\beta$, $f$, and independent samples $\Xon$ from an unknown distribution $\p$, design an estimator $\hat f:\Xon\to\RR$ such that with probability at least $1-\beta$, $\absv{\hat{f}(\Xon)-f(\p)}<\dist$. The \emph{sample complexity} of $\hat f$,
$
S_{\tt PE}(f, \hat{f}, \dist, \beta) \ed \min\{n: \probof{\absv{\hat{f}(\Xon)-f(\p)}>\dist}<\beta\}
$ is the smallest number of samples to estimate $f$ to accuracy $\dist$, and error $\beta$.  
%\newhz{Because of the median trick, it is enough we study the problem for $\beta = 1/3$. The “median trick” is the following observation: suppose an algorithm is accurate with a constant probability strictly greater than $1/2$ (in our case, $2/3$). If we repeat the algorithm $\log(1/ \beta)$ times using independent samples and return the median, then this will be accurate with probability $1 - \beta$, at a multiplicative cost of $\log (1/\beta)$, which can be analyzed using a Chernoff bound argument. So we define $C_{\hat f}(f,\dist) \ed C_{\hat{f}}(f, \dist, 1/3)$.}
We study the problem for $\beta = 1/3$, and by the median trick, we can boost the success probability to $1 - \beta$ with an additional multiplicative $\log (1/\beta)$ more samples. Therefore, focusing on $\beta = 1/3$, we define $S_{\tt PE}(f,\hat{f}, \dist) \ed S_{\tt PE} (f,\hat{f}, \dist, 1/3)$.
The sample complexity of estimating a property $f(\p)$ is the minimum sample complexity over all estimators: $S_{\tt PE} (f,\dist) = \min_{\hat f} S_{\tt PE} (f, \hat f, \dist)$.

%An estimator $\hat{f}$ is $\eps$-differentially private (DP)~\cite{DworkMNS06} if for any $\Xon$ and $\Yon$, with $\ham{\Xon}{\Yon}\le 1$, $\frac{\probof{f(\Xon)\in S}}{\probof{f(\Yon)\in S}}\le e^\eps$, for all measurable $S$. 

\paragraph{Private Property Estimation.} Given $\dist, \eps, \beta$, $f$, and independent samples $\Xon$ from an unknown distribution $\p$, design an $\eps$-differentially private estimator $\hat f:\Xon\to\RR$ such that with probability at least $1-\beta$, $\absv{\hat{f}(\Xon)-f(\p)}<\dist$. Similar to the non-private setting, the \emph{sample complexity} of $\eps$-differentially private estimation problem is $S_{\tt PE}(f,\dist, \eps) = \min_{\hat f: \hat f \text{ is $\eps$-DP}}S_{\tt PE} (f, \hat f, \dist, 1/3)$, the smallest number of samples $\ns$ for which there exists such an $\eps$-DP $\pm\dist$ estimator with error probability at most 1/3. 

%In their original paper~\cite{DworkMNS06} provides a scheme for differential privacy, known as the Laplace mechanism.
%This method adds Laplace noise to a non-private scheme in order to make it private. 
%We first define the sensitivity of an estimator, and then state their result in our setting. 
The following lemma can be viewed as a direct corollary of the Laplace mechanism, which is introduced in Lemma~\ref{lem:laplace_mechanism}.
%\begin{definition}
%	\label{def:sensitivity}
%	The \emph{sensitivity} of an estimator $\hat{f}:[\ab]^{\ns}\to\RR$ is 
%	$\Delta_{n,\hat{f}}\ed \max_{\ham{\Xon}{\Yon}\le1} \absv{\hat{f}(\Xon)-\hat{f}(\Yon)}.$ Let $D_{\hat f}(\dist,\eps) = \min\{{\ns}: \Delta_{n,\hat{f}}\le \dist\eps\}$.\end{definition}
%
%\begin{lemma}
%\label{lem:main-sensitivity}
%\[
%C(f,\dist, \eps)  = O\Paren{\min_{\hat f} \left\{C_{\hat f}(f,\dist/2)+ D_{\hat f}\left(\frac{\dist}{4},\eps\right)\right\}}.
%\]
%\end{lemma}

\begin{lemma}
\label{lem:main-sensitivity}
Let the \emph{sensitivity} of an estimator $\hat{f}:[\ab]^{\ns}\to\RR$ be $\Delta_{n,\hat{f}}\ed \max_{\ham{\Xon}{\Yon}\le1} \absv{\hat{f}(\Xon)-\hat{f}(\Yon)}$, and $D_{\hat f}(\dist,\eps) = \min\{{\ns}: \Delta_{n,\hat{f}}\le \dist\eps\}$, then
\[
S_{\tt PE} (f,\dist, \eps)  = O\Paren{\min_{\hat f} \left\{S_{\tt PE}(f,\hat f, \dist/2)+ D_{\hat f}\left(\frac{\dist}{4},\eps\right)\right\}}.
\]
\end{lemma}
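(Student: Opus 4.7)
The strategy is to privatize any non-private estimator $\hat{f}$ via the Laplace mechanism (Lemma~\ref{lem:laplace_mechanism}). Specifically, I would define $\tilde{f}(X_1^n) := \hat{f}(X_1^n) + Z$ where $Z \sim \mathrm{Lap}(\Delta_{n,\hat{f}}/\eps)$; this is $\eps$-DP by construction. The triangle inequality gives
\[
|\tilde{f}(X_1^n) - f(p)| \le |\hat{f}(X_1^n) - f(p)| + |Z|,
\]
so it suffices to control each summand by $\dist/2$ with failure probability at most $1/6$, and then union bound.

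For the first (statistical) term, I would take enough samples that $\hat{f}$ achieves accuracy $\dist/2$ with failure probability at most $1/6$. Starting from the definition of $S_{\tt PE}(f,\hat{f},\dist/2)$ (which gives failure probability $1/3$), a standard median-of-$O(1)$-copies boosting on disjoint batches reduces the failure probability to $1/6$ at the cost of only a constant multiplicative factor in sample size. A useful side-effect of using disjoint batches is that the sensitivity of the boosted estimator is no larger than that of a single copy, so the Laplace calibration below carries over. For the second (noise) term, I would take $n \ge D_{\hat{f}}(\dist/4, \eps)$, which forces $\Delta_{n,\hat{f}} \le \dist\eps/4$ and hence the Laplace scale $\Delta_{n,\hat{f}}/\eps \le \dist/4$; a direct tail bound gives $\Pr[|Z| > \dist/2] \le e^{-2} < 1/6$.

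Choosing $n = O(S_{\tt PE}(f,\hat{f},\dist/2) + D_{\hat{f}}(\dist/4, \eps))$ satisfies both sample-size requirements simultaneously, and the union bound then produces a private estimator with failure probability at most $1/3$ and accuracy $\dist$. Minimizing over the free choice of $\hat{f}$ delivers the stated bound.

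The one subtlety — and the place I expect to spend the most care — is that the accuracy bound is phrased at sample size $S_{\tt PE}(f,\hat{f},\dist/2)$ while the sensitivity bound is phrased at sample size $D_{\hat{f}}(\dist/4, \eps)$, and the two need not coincide. When $D_{\hat{f}}(\dist/4, \eps) > S_{\tt PE}(f,\hat{f},\dist/2)$, one must either invoke monotonicity of accuracy in the sample size (which holds for the symmetric polynomial-type estimators used in the later sections of this chapter), or explicitly feed $\hat{f}$ only its intended number of inputs and pad the remainder with samples that leave both accuracy and the sensitivity bound undisturbed. In the reverse case, one similarly applies $\hat{f}$ to a uniformly chosen subset of the right size, which preserves the accuracy guarantee in distribution while keeping the sensitivity at most that of the underlying estimator (again up to a constant factor, absorbed into the $O(\cdot)$).
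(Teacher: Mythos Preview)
Your proposal is correct and follows essentially the same approach as the paper: privatize $\hat{f}$ via the Laplace mechanism, split the error by the triangle inequality, and bound the Laplace tail using $D_{\hat{f}}(\dist/4,\eps)$. The only minor difference is the ordering of the boosting step---the paper first union-bounds to failure probability $1/3 + e^{-2} < 1/2$ and then applies the median trick to the \emph{private} estimator, whereas you boost the non-private estimator first; both routes work, and your discussion of the sample-size-mismatch subtlety is actually more careful than the paper's (which glosses over it).
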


\begin{proof}
Lemma~\ref{lem:laplace_mechanism} showed that for a function with sensitivity $\Delta_{n,\hat{f}}$, adding Laplace noise $X \sim Lap(\Delta_{n,\hat{f}}/\eps)$ makes the output $\eps$-differentially private.
By the definition of $D_{\hat f}(\frac{\dist}{4},\eps)$, the Laplace noise we add has parameter at most $\frac{\dist}{4}$. Recall that the probability density function of $Lap(b)$ is $\frac1{2b} e^{-\frac{|x|}{b}}$, hence we have $\probof{|X| > \alpha/2} < \frac{1}{e^2}$. By the union bound, we get an additive error larger than $\alpha = \frac{\alpha}{2} + \frac{\alpha}{2}$ with probability at most $1/3 + \frac{1}{e^2} < 0.5$. Hence, with the median trick, we can boost the error probability to $1/3$, at the cost of a constant factor in the number of samples.
\end{proof}

%To prove sample complexity lower bounds for differentially private estimators, we observe that the estimator can be used to test between two distributions with distinct property values, hence is a harder problem. For lower bounds on differentially private testing, \cite{AcharyaSZ18} gives the following argument based on coupling:
%
%\begin{lemma}
%	\label{lem:coupling}
%	Suppose there is a coupling between distributions $\p$ and $\q$ over $\cX^\ns$, such that $\expectation{\ham{\Xon}{\Yon}} \le D$. Then, any $\eps$-differentially private algorithm that distinguishes between $\p$ and $\q$ with error probability at most $1/3$ must satisfy $D = \Omega\Paren{\frac1{\eps}}$.
%\end{lemma}

\subsection{Problems of Interest}
\label{sec:probs}

\paragraph{Support Size.} The support size of a distribution $\p$ is $S(\p) =\absv{\{x:\p(x)>0\}}$, the number of symbols with non-zero probability values. However, notice that estimating $S(\p)$ from samples can be hard due to the presence of symbols with negligible, yet non-zero probabilities. To circumvent this issue,~\cite{RaskhodnikovaRSS09} proposed to study the problem when the smallest probability is bounded. Let $\Dgk
  \ed\left\{\p\in\Delta:
  \p(x)\in\{0\}\cup\left[1/\ab,1\right]\right\}$
be the set of all distributions where all non-zero probabilities have value at least $1/\ab$. For $\p\in\Dgk$, our goal is to estimate $S(\p)$ up to $\pm\dist\ab$ with the least number of samples from $\p$.

\paragraph{Support Coverage.} For a distribution $\p$, and an integer $m$, let $S_m(\p) = \sum_{\smb} (1 - (1-{\p(\smb)})^m)$, be the expected number of symbols that appear when we obtain $m$ independent samples from the distribution $\p$. The objective is to find the least number of samples $\ns$ in order to estimate $S_m\Paren{\p}$ to an additive $\pm \dst m$.

Support coverage arises in many ecological and biological studies~\cite{ColwellCGLMCL12} to quantify the number of \emph{new} elements (gene mutations, species, words, etc) that can be expected to be seen in the future.  
%The goal is to estimate $S_m(p)$ to an  additive $\pm \dst m$ upon observing as few samples as possible.  
Good and Toulmin~\cite{GoodT56} proposed an estimator that  for any constant $\dst$, requires $m/2$ samples to estimate  $S_m(p)$.

\paragraph{Entropy.} 
The Shannon entropy of a distribution $\p$ is $H(p) =\sum_x p(x)\log\frac1{p(x)}$,
$H(p)$ is a central object in information theory~\cite{CoverT06}, and also arises in many fields such as machine learning~\cite{Nowozin12}, neuroscience~\cite{BerryWM97, NemenmanBRS04}, and others. Estimating $H(p)$ is hard with any finite number of samples due to the possibility of infinite support. To circumvent this,
    a natural approach is to consider distributions in $\Dk$.
 The goal is to estimate the entropy of a distribution in $\Dk$ to an additive $\pm \dist$, where $\Dk$ is all discrete distributions over at most $k$ symbols.
%  In a recent set of papers~\cite{Valiant11b, WuY14a,JiaoVHW15}, the min-max sample  complexity of estimating entropy to $\pm\dst$ was shown to be $\Theta\left(\frac{\absz}{\log \absz}\cdot \frac1{\dst}\right)$.

\section{Statement of Results}
\label{sec:ins_results}
Our theoretical results for estimating support coverage, support size, and entropy are given below.
Algorithms for these problems and proofs of these statements are provided in Section~\ref{sec:ins_theory}.
Our experimental results are described and discussed in Section~\ref{sec:ins_exp}.

%by G
% \begin{restatable}{theorem}{supportcoverage}
% \label{thm:supportcoverage}
% The sample complexity of support coverage estimation is 
% $$
% C(S_m, \dist, \eps)= O\Paren{\max\left\{\frac{m\log (1/\dist)}{\log m}+ \frac{m\log (1/\dist)}{\log (2 + \eps m)}, \frac{1}{\dist^2} + \frac{1}{\dist \eps}\right\}}.
% $$
% Furthermore, 
% $$
% C(S_m, \dist, \eps)= \Omega\Paren{\max\left\{\frac{m\log (1/\dist)}{\log m}, \frac{1}{\dist^2} \right\} + \frac{1}{\dist \eps}}.
% $$
% \end{restatable}

%by Huanyu
\begin{restatable}{theorem}{supportcoverage}
\label{thm:supportcoverage}
The sample complexity of support coverage estimation is 
%$$
%C(S_m, \dist, \eps)= O\Paren{\max\left\{\frac{m\log (1/\dist)}{\log m}+ \frac{m\log (1/\dist)}{\log (2 + \eps m)}, \frac{1}{\dist^2} + \frac{1}{\dist \eps}\right\}}.
%$$

\begin{equation*}
S_{\tt PE}(S_m, \dist, \eps)=
\begin{cases}
O\Paren{ \frac{m\log (1/\dist)}{\log m}+ \frac{m\log (1/\dist)}{\log (2 + \eps m)} }, & \text{when $m \ge \frac1{\dist \eps}$} \\
O\Paren{\frac1{\dist^2} + \frac1{\dist\eps}}, & \text{when $\frac1{\dist} \le m \le \frac1{\dist \eps}$} \\
O\Paren{m^2 + \frac{m}{\eps}}. & \text{when $ m \le \frac1{\dist}$} 
\end{cases}
\end{equation*}

Furthermore, 
$$
S_{\tt PE}(S_m, \dist, \eps)= \Omega\Paren{\frac{m\log (1/\dist)}{\log m} + \frac{1}{\dist \eps}  }.
$$
\end{restatable}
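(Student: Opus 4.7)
The upper bound follows from Lemma~\ref{lem:main-sensitivity}, which bounds the private sample complexity by the sum of a non-private estimator's sample complexity and the number of samples needed to drive its sensitivity below $\dist\eps$. A natural candidate is the Smoothed Good--Toulmin (SGT) family of~\cite{OrlitskySW16}, which writes $\hat S_m(\Xon) = \Phi_+(\Xon) + \sum_{i\ge 1} c_i(n,m)\,\Phi_i(\Xon)$, where $\Phi_i(\Xon)$ is the number of elements appearing exactly $i$ times. Since altering a single sample perturbs at most two prevalences by $1$, the sensitivity is $\Delta \le 2\max_i |c_i|$. Choosing the smoothing distribution (Binomial or Poisson) at a carefully tuned scale, one can show that at $n = \Theta(m\log(1/\dist)/\log m)$ the coefficients satisfy $\max_i |c_i| = O(m/\log(2+\eps m))$, which after Laplace noise yields the privacy term $m\log(1/\dist)/\log(2+\eps m)$ in the large-$m$ regime. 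For the middle regime $1/\dist \le m \le 1/(\dist\eps)$, the non-private complexity is already $\Theta(1/\dist^2)$ and a plug-in estimator of sensitivity $O(m/n)$ suffices, giving the $1/(\dist\eps)$ privacy term. For the small-$m$ regime, $\dist m \le 1$ forces exact recovery; a bounded-sensitivity estimator using $O(m^2)$ samples combined with Laplace noise of scale $O(1/\eps)$ works.

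For the lower bound, the non-private piece $\Omega(m\log(1/\dist)/\log m)$ is due to~\cite{WuY18, OrlitskySW16}. For the privacy piece $\Omega(1/(\dist\eps))$, the plan is to apply Theorem~\ref{thm:le_cam} to two Bernoulli-type distributions whose support coverages are separated by at least $2\dist m$. In the relevant regime $\dist m \le 1$, take $\p_1 = \Bern{1}$ (point mass at the symbol $1$, with $S_m(\p_1) = 1$) and $\p_2 = \Bern{1-c\dist}$ for a suitable constant $c$; the inequality $(1-c\dist)^m \le e^{-c\dist m}$ yields $S_m(\p_2) - S_m(\p_1) \ge 2\dist m$, so distinguishing the two coverages to additive error $\dist m$ requires distinguishing the distributions. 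Using the coupling from Example~\ref{exm:coin}, start with $\Xon \sim \p_1^n$ (all ones) and independently flip each coordinate to $0$ with probability $c\dist$ to obtain $\Yon \sim \p_2^n$ with $\expectation{\ham{\Xon}{\Yon}} = c\dist n$. Theorem~\ref{thm:le_cam} then forces $c\dist n \cdot \eps = \Omega(1)$, giving $n = \Omega(1/(\dist\eps))$. For $\dist m > 1$ the same bound is recovered by placing roughly $\dist m$ independent copies of the above construction on disjoint pairs of symbols and invoking the coupling on each block.

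The principal obstacle is the tight sensitivity analysis of the SGT estimator in the large-$m$ regime: the existing non-private analyses~\cite{OrlitskySW16, WuY18} control variance but not the worst-case magnitude of the coefficients $c_i$, and extracting the $\log(2+\eps m)$ denominator requires selecting the smoothing scale so that the $c_i$ decay just fast enough to balance the induced Laplace noise against the extrapolation bias. Matching up the three regime bounds at the boundaries $m \asymp 1/\dist$ and $m \asymp 1/(\dist\eps)$ also requires some careful bookkeeping to ensure that the correct estimator is deployed in each range.
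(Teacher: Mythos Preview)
Your upper-bound plan matches the paper's. The ``principal obstacle'' you flag is, however, much tamer than you suggest: for the Poisson-smoothed SGT estimator with parameter $r$ and $t=(m-n)/n$, the paper shows in one line that the $i$-th coefficient satisfies $|1+(-t)^i\Pr[Z\ge i]| \le 1+e^{r(t-1)}$ (Lemma~\ref{lem:sensitivity-coverage}), so the sensitivity of $\hat S_m/m$ is at most $\frac{2}{m}(1+e^{r(t-1)})$. With $r=\log(3/\dist)$ fixed, the sensitivity condition $2(1+e^{r(t-1)})\le \dist\eps m$ is a single inequality in $n$ that unwinds to $n\gtrsim m\log(1/\dist)/\log(2+\eps m)$; no tuning of smoothing scales against extrapolation bias is needed. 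For the two smaller-$m$ regimes the paper uses a single batched estimator (average the distinct-element count over $n/m$ disjoint batches of size $m$), which is unbiased with variance $O(m^2/n)$ and sensitivity $2m/n$; the case $m\le 1/\dist$ is just the observation that every $\dist<1/m$ is equivalent to $\dist=1/m$.

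Your lower-bound construction is different from the paper's and has a real gap. The two-point pair $p_1=\delta_1$ versus $p_2=\mathrm{Bern}(1-c\dist)$ has support size two, so $S_m(p_2)-S_m(p_1)\le 1$ always; the required separation $2\dist m$ is therefore unachievable once $\dist m>1/2$, and your ``$\dist m$ independent copies'' fix is too sketchy to evaluate (in particular, it is not ``invoking the coupling on each block'': you need a single coupling between the two product distributions and a single support-coverage computation). The paper instead uses one construction that works uniformly: $u_1$ uniform on $[m(1+\dist)]$ versus $u_2$ supported on $[m]\cup\{\triangle\}$ with $u_2(\triangle)=\dist/(1+\dist)$ and $u_2(i)=1/(m(1+\dist))$ on $[m]$. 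A direct calculation gives $|S_m(u_1)-S_m(u_2)|=\Omega(\dist m)$, and $d_{TV}(u_1,u_2)=\dist/(1+\dist)$, so Lemma~\ref{lem:min_coupling} yields a coupling with expected Hamming distance $\frac{\dist}{1+\dist}n$ and Theorem~\ref{thm:coupling} gives $n=\Omega(1/(\dist\eps))$ without a case split.
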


% \begin{restatable}{theorem}{ssize}
% \label{thm:ssize}
% The sample complexity of support size estimation is 
% $$
% C(S, \dist, \eps)= O\Paren{\max\left\{\frac{\ab\log^2 (1/\dist)}{\log \ab}+ \frac{\ab\log^2 (1/\dist)}{\log (1 + \eps \ab)}, \frac{1}{\dist^2} + \frac{1}{\dist \eps} \right\}}.
% $$
% Furthermore,
% $$
% C(S, \dist, \eps)= \Omega\Paren{\max\left\{\frac{\ab\log^2 (1/\dist)}{\log \ab}, \frac{1}{\dist^2}+ \frac{1}{\dist \eps}\right\}}.
% $$
% \end{restatable}

\begin{restatable}{theorem}{ssize}
\label{thm:ssize}
The sample complexity of support size estimation is 
\begin{equation*}
S_{\tt PE}(S, \dist, \eps)= 
\begin{cases}
O\Paren{ \frac{\ab\log^2 (1/\dist)}{\log \ab}+ \frac{\ab\log^2 (1/\dist)}{\log (2 + \eps \ab)} }, & \text{when $\ab \ge \frac1{\dist \eps}$} \\
O\Paren{\ab \log(1/\dist) + \frac1{\dist\eps}}, & \text{when $\frac1{\dist} \le \ab \le \frac1{\dist \eps}$} \\
O\Paren{\ab \log \ab + \frac{\ab}{\eps}}. & \text{when $ \ab \le \frac1{\dist}$} 
\end{cases}
\end{equation*}
Furthermore,
\begin{equation*}
S_{\tt PE}(S, \dist, \eps)= 
\begin{cases}
 \Omega\Paren{ \frac{\ab\log^2 (1/\dist)}{\log \ab} + \frac{1}{\dist \eps} }, & \text{when $\ab \ge \frac1{\dist}$} \\
 \Omega\Paren{\ab \log \ab + \frac{\ab}{\eps}}. & \text{when $ \ab \le \frac1{\dist}$} 
\end{cases}
\end{equation*}
\end{restatable}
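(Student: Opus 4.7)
The overall plan is to combine an existing non-private support-size estimator with the sensitivity framework of Lemma~\ref{lem:main-sensitivity}, closely mirroring the approach used for support coverage in Theorem~\ref{thm:supportcoverage}.

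For the upper bound, I would start with a Wu--Yang style polynomial-approximation estimator, which can be written as $\hat S(X_1^n) = \sum_{x \in [k]} g_L(N_x)$, where $N_x$ is the multiplicity of symbol $x$ and $g_L$ is obtained from a degree-$L$ best polynomial approximation to the indicator $\mathbf{1}\{p_x > 0\}$ on an appropriate interval. Changing a single sample alters exactly two multiplicities by $\pm 1$, so the sensitivity satisfies $\Delta_{n,\hat S} \le 2 \max_j |g_L(j+1) - g_L(j)|$, a quantity controlled by the maximum coefficient of $g_L$, which grows like $2^{O(L)}$ after normalization by $n$. The non-private analysis selects $L \asymp \log k$ to obtain the bound $O(k \log^2(1/\alpha)/\log k)$ at accuracy $\alpha k$. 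For the private version I would instead pick $L$ as small as possible while keeping the bias below $\alpha k$, namely $L \asymp \min\{\log k, \log(2 + \epsilon k)\}$; plugging this back into the non-private bound yields the first term of the upper bound, while the sensitivity condition $\Delta_{n,\hat S} \le \alpha k \epsilon / 4$ from Lemma~\ref{lem:main-sensitivity} matches the privacy term. The intermediate regime $1/\alpha \le k \le 1/(\alpha\epsilon)$ is handled by the plug-in estimator (whose sample complexity $O(k \log(1/\alpha))$ already dominates the best private rate) together with an additive $1/(\alpha\epsilon)$ privacy tax from the Laplace mechanism. The small-$k$ regime $k \le 1/\alpha$ corresponds to exact support recovery (since $\alpha k \le 1$), which needs $O(k \log k)$ samples via coupon collector non-privately and an additional $O(k/\epsilon)$ for privacy via the Laplace mechanism on the raw count of observed symbols.

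For the lower bound, the non-private term $\Omega(k \log^2(1/\alpha)/\log k)$ is Wu--Yang. The privacy term $\Omega(1/(\alpha\epsilon))$ I would obtain from DP Le Cam's method in the form of Theorem~\ref{thm:coupling}: take $p_1$ uniform on $[k]$ and $p_2$ uniform on a subset of size $(1-2\alpha)k$, both living in $\Dgk$, with $|S(p_1)-S(p_2)| = 2\alpha k$, so any $\pm \alpha k$-accurate estimator must distinguish them. Using the natural per-coordinate coupling (keep the sample if it lands in the common support, otherwise resample from $p_2$) gives $\EE[\ham{X_1^n}{Y_1^n}] \le 2\alpha n$, and Theorem~\ref{thm:coupling} then forces $\alpha n \epsilon = \Omega(1)$. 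In the small-$k$ regime the bound $\alpha k \le 1$ again reduces the task to exact support recovery; the $\Omega(k\log k)$ term is coupon collector, and the $\Omega(k/\epsilon)$ term follows from the same coupling technique applied to two distributions whose supports differ by one symbol and whose TV distance is $\Theta(1/k)$, yielding a coupling of expected Hamming distance $n/k$ and hence $n = \Omega(k/\epsilon)$.

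The main technical obstacle will be establishing a tight sensitivity bound for the polynomial estimator that recovers exactly the $\log(2 + \epsilon k)$ factor in the privacy term of the upper bound: shrinking $L$ reduces the coefficient magnitudes (hence the sensitivity) but increases the bias, and producing the claimed trade-off requires carefully tracking how the best-polynomial-approximation error on $[0, c \log k / n]$ depends on $L$ relative to $\log(\alpha^{-1})$, then optimizing $L$ jointly against both the bias constraint and the Laplace-noise constraint from Lemma~\ref{lem:main-sensitivity}.
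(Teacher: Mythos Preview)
Your lower-bound arguments match the paper's exactly: uniform on $[k]$ versus uniform on $[(1-\alpha)k]$ (or $[k-1]$ in the small-$k$ regime), coupled via Lemma~\ref{lem:min_coupling}, fed into Theorem~\ref{thm:coupling}.

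The upper bound, however, is organized quite differently. For the sparse regime $k \ge 1/(\alpha\eps)$ the paper does \emph{not} analyze a Wu--Yang polynomial estimator directly. Instead it reduces support size to support coverage: Lemma~\ref{lem:sssc} shows $|S_m(p)-S(p)|\le \alpha k/3$ once $m = k\log(3/\alpha)$, so it suffices to estimate $S_m(p)$ to accuracy $\pm\alpha k/3$. This lets the paper reuse the SGT-based estimator $\hat S_m = \sum_i \Phi_i(1-(-t)^i\Pr[Z\ge i])$ from Section~\ref{sec:coverage}, whose sensitivity is bounded cleanly by Lemma~\ref{lem:sensitivity-coverage} as $2(1+e^{r(t-1)})/m$. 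Optimizing $r$ and $t$ here directly gives the $\log(2+\eps k)$ denominator without the delicate degree-tuning you anticipate. Your polynomial-approximation route is plausible but, as you correctly flag, controlling the coefficient growth of $g_L$ to recover exactly that factor is the hard part; the paper avoids it entirely.

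For the intermediate regime $1/\alpha \le k \le 1/(\alpha\eps)$, your proposal has a gap. The plug-in estimator (count of observed distinct symbols) has sensitivity exactly $1$ regardless of $n$, so Lemma~\ref{lem:main-sensitivity} requires $1 \le \alpha k \eps$, i.e.\ $k \ge 1/(\alpha\eps)$, which is precisely the regime you are \emph{not} in; you cannot get a $1/(\alpha\eps)$ privacy tax out of it here. The paper instead uses the soft-thresholded estimator $\hat S(X_1^n)=\sum_{x}\min\{1, 3kN_x/n\}$, whose sensitivity is $3k/n$ and therefore decreases with $n$; this is what allows $D_{\hat f}(\alpha k,\eps)=O(1/(\alpha\eps))$ in this regime. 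The same thresholded estimator (with a Chernoff/union-bound analysis) also handles the small-$k$ regime, matching your coupon-collector intuition there.
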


\begin{restatable}{theorem}{entropy}
\label{thm:entropy}
Let $\lambda>0$ be \emph{any} small fixed constant. 
For instance, $\lambda$ can be chosen to be any constant between $0.01$ and $1$.
We have the following upper bounds on the sample complexity of entropy estimation:
$$
S_{\tt PE}(H, \dist, \eps)= O\Paren{\frac{\ab}{\dist}+\frac{\log^2(\min\{\ab,\ns\})}{\dist^2}+\frac{1}{\dist\eps}\log\Paren{\frac1{\dist\eps}}}
$$
and
$$
S_{\tt PE}(H, \dist, \eps)= O\Paren{\frac{\ab}{\lambda^2\dist\log\ab}+\frac{\log^2(\min\{\ab,\ns\})}{\dist^2} + \Paren{\frac1{\dist\eps}}^{1+\lambda}}.
$$
Furthermore,
$$
S_{\tt PE}(H, \dist, \eps) =\Omega\Paren{\frac{\ab}{\dist\log\ab}+\frac{\log^2(\min\{\ab,\ns\})}{\dist^2}+\frac{\log\ab}{\dist\eps}}.
$$
\end{restatable}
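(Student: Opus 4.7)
The theorem has three parts that I would attack independently: the plug-in upper bound, the polynomial-approximation upper bound, and the lower bound via the private Le Cam's method.

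\emph{Upper bound 1 (plug-in).} My plan is to apply Lemma~\ref{lem:main-sensitivity} to the empirical entropy estimator $\hat H(\Xon) \ed -\sum_{\smb\in[\ab]} (\Mltsmb{\smb}{\Xon}/\ns)\log(\Mltsmb{\smb}{\Xon}/\ns)$. The non-private sample complexity $S_{\tt PE}(H, \hat H, \dist/2) = O(\ab/\dist + \log^2(\min\{\ab,\ns\})/\dist^2)$ is classical: the first term bounds the bias $|\EE[\hat H] - H(\p)| = O(\ab/\ns)$, and the second comes from a McDiarmid concentration argument. For the sensitivity, note that changing one symbol in $\Xon$ alters two multiplicities by $\pm 1$; applying the mean-value theorem to $x\mapsto x\log x$ on $[0,\ns]$ shows each affected summand changes by at most $(\log\ns + 1)/\ns$, so $\Delta_{\ns,\hat H} = O(\log\ns/\ns)$. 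Solving $\log\ns/\ns \le \dist\eps$ yields $D_{\hat H}(\dist/4,\eps) = O((1/(\dist\eps))\log(1/(\dist\eps)))$, and plugging into Lemma~\ref{lem:main-sensitivity} gives the first bound.

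\emph{Upper bound 2 (polynomial approximation).} Here I plan to use the best-polynomial-approximation entropy estimator of~\cite{WuY18} (or equivalently~\cite{JiaoVHW15}), which on the ``small count'' regime replaces $x\log x$ by its degree-$L$ best polynomial approximation on $[0,c\log\ab/\ns]$ and uses a bias-corrected plug-in on the ``large count'' regime. With the degree $L$ tuned as a function of $\lambda$, this estimator is known to achieve $S_{\tt PE}(H,\hat H,\dist/2) = O(\ab/(\lambda^2\dist\log\ab) + \log^2(\min\{\ab,\ns\})/\dist^2)$; the $1/\lambda^2$ factor arises because the polynomial degree grows as $\Theta(\log\ab)$ but the error analysis loses a factor $1/\lambda^2$ when the approximation range is widened by a $1/\lambda$ factor. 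The key input I would invoke is the bounded-difference analysis from~\cite{AcharyaOST17}, which bounds the sensitivity of this estimator by $\tilde O(\ns^{-(1-\lambda)})$, and Lemma~\ref{lem:main-sensitivity} then yields privacy cost $(1/(\dist\eps))^{1+\lambda}$ up to polylogarithmic factors. Combining the non-private term with this privacy cost gives the second bound.

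\emph{Lower bound.} The non-private terms $\Omega(\ab/(\dist\log\ab))$ and $\Omega(\log^2(\min\{\ab,\ns\})/\dist^2)$ are known from~\cite{WuY18}. For the private term $\Omega(\log\ab/(\dist\eps))$, I would apply Theorem~\ref{thm:le_cam} to the pair $\p_1 = \unifab$ and $\p_2 = (1-\gamma)\unifab + \gamma\,\delta_1$, where $\delta_1$ denotes the point mass on symbol $1$ and $\gamma = c\dist/\log\ab$ for a suitable absolute constant $c$. A direct calculation gives $H(\p_1) - H(\p_2) = D_{KL}(\p_2\|\unifab) = \Theta(\gamma\log(\gamma\ab)) = \Theta(\dist)$, so any algorithm that estimates $H$ to within $\dist/4$ solves the hypothesis test between $\p_1$ and $\p_2$. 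I construct the coupling: given $\Xon\sim\p_1^\ns$, independently for each coordinate with probability $\gamma$ replace $X_i$ by $1$ to obtain $Y_i$; otherwise set $Y_i = X_i$. The resulting $\Yon\sim\p_2^\ns$ and $\expectation{\ham{\Xon}{\Yon}} \le \ns\gamma(1-1/\ab) \le \ns\gamma$. Theorem~\ref{thm:le_cam} forces $\eps\cdot\ns\gamma = \Omega(1)$ for any successful $\eps$-DP tester, yielding $\ns = \Omega(\log\ab/(\dist\eps))$.

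\emph{Main obstacle.} The delicate step is the sensitivity analysis underlying bound~2: controlling the bounded-difference constant of the polynomial-approximation estimator in terms of the polynomial degree (which itself is tuned by $\lambda$) and translating this into the precise $(1/(\dist\eps))^{1+\lambda}$ trade-off. This requires invoking the best-polynomial-approximation machinery developed in~\cite{AcharyaOST17} and careful bookkeeping of how the coefficients of the approximating polynomial scale with the degree.
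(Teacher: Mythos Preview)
Your proposal is correct and matches the paper's approach closely for both upper bounds: the plug-in analysis (sensitivity $O(\log\ns/\ns)$, bias $O(\ab/\ns)$, variance $O(\log^2(\min\{\ab,\ns\})/\ns)$) and the invocation of the low-sensitivity polynomial estimator from~\cite{AcharyaDOS17} are exactly what the paper does.

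For the lower bound, your construction differs from the paper's but is equally valid. The paper uses the pair $p(1)=2/3$, $q(1)=(2-\eta)/3$ with the remaining mass spread uniformly on $[\ab]\setminus\{1\}$ (taken from~\cite{WuY16}), and computes the entropy gap via the grouping property as $\Omega(\eta\log\ab)$; you instead perturb the uniform distribution toward a point mass and compute the gap as $D_{KL}(\p_2\Vert\unifab)$. Both give $d_{TV}=\Theta(\gamma)$ with $\gamma=\Theta(\dist/\log\ab)$ and entropy gap $\Theta(\dist)$, so the coupling-plus-Le-Cam step is identical. The paper's construction is marginally cleaner because the entropy difference factors immediately, whereas yours needs the side condition $\gamma\ab\gg 1$ for the $\Theta(\gamma\log(\gamma\ab))$ estimate to hold---this is harmless in the regime of interest but worth stating explicitly.
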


\noindent We provide some discussion of our results. 
At a high level, we wish to emphasize the following two points:
\begin{enumerate}
\item Our upper bounds show that the cost of privacy in these settings is often negligible compared to the sample complexity of the non-private statistical task, especially when we are dealing with distributions over a large support.
Furthermore, our upper bounds are almost tight in all parameters.
\item The algorithmic complexity introduced by the requirement of privacy is minimal, consisting only of a single step which noises the output of an estimator.
In other words, our methods are realizable in practice, and we demonstrate the effectiveness on several synthetic and real-data examples.
\end{enumerate}

Before we continue, we emphasize that, in Theorems~\ref{thm:supportcoverage} and~\ref{thm:ssize}, we consider the ``sublinear'' regime to be of primary interest (when $m \geq \frac1{\dist\eps}$ or $k \geq \frac1{\dist\eps}$, respectively), both technically, and in terms of parameter regimes which may be of greatest interest in practice.
We include results for other regimes mostly for completeness.

First, we examine our results on support coverage and support size estimation in the sublinear regime, when $m \geq \frac{1}{\dist\eps}$ (focusing on support coverage for simplicity, but support size is similar).
In this regime, if $\eps = \Omega(m^{\gamma}/m)$ for any constant $\gamma > 0$, then up to constant factors, our upper bound is within a constant factor of the optimal sample complexity without privacy constratints.
In other words, for most meaningful values of $\eps$, privacy comes for free. 
In the non-sublinear regime for these problems, we provide upper and lower bounds which match in a number of cases.
We note that in this regime, the cost of privacy may not be a lower order term -- however, this regime only occurs when one requires very high accuracy, or unreasonably large privacy, which we consider to be of somewhat lesser interest.

Next, we turn our attention to entropy estimation.
We note that the second upper bound in Theorem~\ref{thm:entropy} has a parameter $\lambda$ that indicates a tradeoff between the sample complexity incurred in the first and third term.
This parameter determines the degree of a polynomial to be used for entropy estimation.
As the degree becomes smaller (corresponding to a large $\lambda$), accuracy of the polynomial estimator decreases, however, at the same time, low-degree polynomials have a small sensitivity, allowing us to privatize the outcome. 

In terms of our theoretical results, one can think of $\lambda = 0.01$.
With this parameter setting, it can be observed that our upper bounds are almost tight.
For example, one can see that the upper and lower bounds match to either logarithmic factors (when looking at the first upper bound), or a very small polynomial factor in $1/\dist\eps$ (when looking at the second upper bound).
For our experimental results, we empirically determined an effective value for the parameter $\lambda$ on a single synthetic instance.
We then show that this choice of parameter generalizes, giving highly-accurate private estimation in other instances, on both synthetic and real-world data.

\section{Algorithms and Analysis}
\label{sec:ins_theory}
In this section, we prove our results for support coverage in Section~\ref{sec:coverage}, support size in Section~\ref{sec:ssize}, and entropy in Section~\ref{sec:entropy}. 
In each section, we first describe and analyze our algorithms for the relevant problem.
We then go on to describe and analyze a lower bound construction, showing that our upper bounds are almost tight.

All our algorithms fall into the following simple framework:
\begin{enumerate}
\item Compute a non-private estimate of the property;
\item Privatize this estimate by adding Laplace noise, where the parameter is determined through analysis of the estimator and potentially computation of the estimator's sensitivity.
\end{enumerate}

\subsection{Support Coverage Estimation}
\label{sec:coverage}
In this section, we prove Theorem~\ref{thm:supportcoverage}, about support coverage estimation: 
\supportcoverage*
Our upper bound is analyzed in Section~\ref{sec:coverage-ub}, while our lower bound is proved in Section~\ref{sec:coverage-lb}.
\subsubsection{Upper Bound for Support Coverage Estimation}
\label{sec:coverage-ub}
We split the analysis into two regimes.
First, we focus on the case where $m \leq \frac1{\dist\eps}$, and we prove the upper bound $O\left(\frac{1}{\dist^2} + \frac{1}{\dist \eps}\right)$.
Note that the problem is identical for any $\dist < \frac1m$, since this corresponds to estimating the support coverage exactly, and the above bound simplifies to $O\left(m^2 + \frac{m}{\eps}\right)$.
The algorithm in this case is simple: since $\ns = \Omega(m)$, we group the dataset into $\ns/m$ batches of size $m$.
Let $Y_j$ be the number of unique symbols observed in batch $j$.
Our estimator is 
\[
\hat{S}_m(\Xon) = \frac{m}{\ns} \sum_{j =1}^{\ns/m} Y_j.
\]

Observe that $\expectation{Y_j} = S_m(\p)$, and that $\Var[Y_j] \leq m$.
The latter can be seen by observing that $Y_j$ is the sum of $m$ negatively correlated indicator random variables, each one being the indicator of whether that sample in the batch is the first time the symbol is observed.
This gives that $\hat{S}_m(\Xon)$ is an unbiased estimator of $S_m(\p)$, with variance $O(m^2/\ns)$.
By Chebyshev's inequality, since we want an estimate which is accurate up to $\pm \dist m$, this gives us that $C_{\hat{S}_m}(S_m(\p),\dist/2) = O\left(\frac{1}{\dist^2}\right)$.
Furthermore, we can see that the sensitivity of $\hat{S}_m(\Xon)$ is at most $2m/\ns$.
By Lemma~\ref{lem:main-sensitivity}, there is a private algorithm for support coverage estimation as long as 
\[
\Delta\Paren{\frac{\hat{S}_m(\Xon)}m} \le \dist\eps.
\]
With the above bound on sensitivity, this is true with $n = O(1/\dist\eps)$, giving the desired upper bound.

Now, we turn our attention to the case where $m \geq \frac1{\dist\eps}$, and we prove the upper bound $O\left(\frac{m\log (1/\dist)}{\log m}+ \frac{m\log (1/\dist)}{\log (2 + \eps m)}\right)$.
Let $\prfi{i}$ be the number of symbols that appear $i$ times in $\Xon$. We will use the following non-private support coverage estimator from~\cite{OrlitskySW16}:
\[
\hat{S}_m(\Xon) = \sum_{i=1}^{\ns}\prfi{i}\Paren{1-(-t)^i\cdot \probof{Z\ge i}},
\] 
where $Z$ is a Poisson random variable with mean $r$ (which is a parameter to be instantiated later), and $t=(m-n)/n$. 

Our private estimator of support coverage is derived by adding Laplace noise to this non-private estimator with the appropriate noise parameter, and thus the performance of our private estimator, is analyzed by bounding the sensitivity and the bias of this non-private estimator according to Lemma~\ref{lem:main-sensitivity}.

The sensitivity and bias of this estimator is bounded in the following lemmas.  
\begin{lemma} \label{lem:sensitivity-coverage}
Suppose $m>2\ns$, then the maximum coefficient of $\prfi{i}$ in $\hat{S}_m(p)$ is at most $1+e^{r(t-1)}$. 
\end{lemma}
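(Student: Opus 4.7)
The coefficient of $\prfi{i}$ in $\hat{S}_m(\Xon)$ is $c_i \ed 1 - (-t)^i \cdot \probof{Z\ge i}$, so the task is to bound $|c_i|$ uniformly in $i$. The plan is to use the triangle inequality to decouple the constant ``$1$'' from the second term, and then to bound the second term by expanding the Poisson tail probability and recognising the resulting series as an exponential.

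First I would observe that since $m > 2\ns$, we have $t = (m-\ns)/\ns > 1$, which will be the one place we use the hypothesis $m > 2\ns$. By the triangle inequality,
\[
|c_i| \;=\; \bigl|\,1 - (-t)^i \cdot \probof{Z\ge i}\,\bigr| \;\le\; 1 + t^i \cdot \probof{Z\ge i},
\]
so it suffices to prove $t^i \cdot \probof{Z\ge i} \le e^{r(t-1)}$ for every $i\ge 1$.

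Next I would substitute the Poisson PMF:
\[
t^i \cdot \probof{Z\ge i} \;=\; e^{-r}\sum_{j=i}^{\infty} \frac{t^i\, r^j}{j!}.
\]
Because $t \ge 1$ and $j \ge i$, the factor $t^i$ satisfies $t^i \le t^j$, so each summand is bounded by $(tr)^j/j!$. Extending the sum to all $j\ge 0$ and recognising the Taylor series of the exponential gives
\[
t^i \cdot \probof{Z\ge i} \;\le\; e^{-r}\sum_{j=0}^{\infty} \frac{(tr)^j}{j!} \;=\; e^{-r}\cdot e^{tr} \;=\; e^{r(t-1)}.
\]
Combining this with the triangle-inequality step yields $|c_i| \le 1 + e^{r(t-1)}$, as claimed.

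There is no real obstacle here: the entire argument is a one-line triangle inequality followed by a crude but tight termwise comparison $t^i \le t^j$ enabled by $t>1$. The only point to be mindful of is that the hypothesis $m>2\ns$ is precisely what guarantees $t>1$, which is what allows us to pull $t^i$ inside and turn the Poisson tail into a full Poisson sum; without that, the argument would need a different (and weaker) bound.
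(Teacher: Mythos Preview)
Your proof is correct and follows essentially the same approach as the paper: triangle inequality, then the termwise bound $t^i r^j \le (tr)^j$ (valid since $t>1$ and $j\ge i$), then summing the full exponential series. You even make explicit the role of the hypothesis $m>2\ns$ (ensuring $t>1$), which the paper leaves implicit.
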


\begin{proof}
By the definition of $Z$, we know $\probof{Z\ge i} = \sum_{k = i}^{\infty} e^{-r}\frac{r^k}{k!}$, hence we have:
\begin{align}
|1+(-t)^i\cdot \probof{Z\ge i}| &\le  1+ t^i \sum_{k = i}^{\infty} e^{-r}\frac{r^k}{k!}   \nonumber \\
						&\le  1 + e^{-r} \sum_{k = i}^{\infty}\frac{(rt)^k}{k!}  \nonumber \\
					   	&\le 1 + e^{-r} \sum_{k = 0}^{\infty}\frac{(rt)^k}{k!}  \nonumber \\
						&= 1 + e^{r(t-1)} \nonumber
\end{align}
\end{proof}

\noindent The bias of the estimator is bounded in Lemma 4 of~\cite{AcharyaDOS17}:
\begin{lemma}\label{lem:bias-coverage}
Suppose $m>2\ns$, then
\[
\absv{\expectation{\hat{S}_m(\Xon)} - S_m(\p)} \le 2+2e^{r(t-1)}+\min(m, S(p))\cdot e^{-r}. 
\]
\end{lemma}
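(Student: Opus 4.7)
The proof plan proceeds in four steps, all relying on standard Poissonization plus careful series manipulation.

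First, I would \emph{Poissonize the sampling}: replace the fixed-$n$ multinomial with independent $\text{Poi}(np_x)$ multiplicities $\Mltsmb{x}{\Xon}$. This introduces an $O(1)$ bias in $\expectation{\hat{S}_m(\Xon)}$ (absorbed into the constant $2$ in the final bound), and makes the per-symbol analysis trivially decoupled. Simultaneously, I would note that $\sum_x (1 - e^{-mp_x})$ differs from $S_m(p)=\sum_x(1-(1-p_x)^m)$ by at most an $O(1)$ term, which is the standard Poisson-versus-binomial comparison.

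Second, I would \emph{compute the expected contribution per symbol}. Under the Poisson model, linearity and Fubini give, for each $x$,
\[
\expectation{[\text{contrib.\ of }x]} = (1-e^{-np_x}) \; - \; e^{-np_x}\sum_{i=1}^{\infty}\frac{(-np_x t)^i}{i!}\,\probof{Z\ge i}.
\]
Exchanging the sum over $i$ with the expectation $\probof{Z\ge i}=\expectation{\mathbb{1}[Z\ge i]}$, and using the Taylor identity $\sum_{i=0}^{\infty}\tfrac{(-np_x t)^i}{i!}=e^{-np_x t}$ together with $n(1+t)=m$, a short algebraic manipulation rewrites this as
\[
\expectation{[\text{contrib.\ of }x]}
=(1-e^{-mp_x})\;+\;e^{-np_x}\,\expectation{\sum_{i=Z+1}^{\infty}\frac{(-np_x t)^i}{i!}}.
\]
The first piece matches the Poissonized version of $S_m(p)$ (handled in Step 1); the second piece is the residual we must control.

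Third, I would \emph{bound the residual}, split according to the value of $Z$. On the event $\{Z\ge r\}$, the alternating series remainder bound gives $\bigl|\sum_{i>Z}(-y)^i/i!\bigr|\le y^{Z+1}/(Z+1)!$, and after multiplying by $e^{-np_x}$ and summing over $x$, the Poisson tail of $Z$ concentrated near $r$ yields a contribution of at most $2e^{r(t-1)}$ (this uses the identification $e^{-np_x}\cdot e^{np_x t}=e^{np_x(t-1)}$ together with the crude bound $\sum_x p_x\le 1$ to push the dependence on $p_x$ into the $r(t-1)$ exponent). On the complementary event $\{Z<r\}$, I would bound the residual per symbol by the trivial $|1-(-t)^i\probof{Z\ge i}|\le 1+e^{r(t-1)}$ from Lemma~\ref{lem:sensitivity-coverage} and note that only symbols with $p_x>0$ contribute, yielding the factor $\min(m,S(p))$; the $e^{-r}$ factor comes from the Poisson tail probability $\probof{Z<r}$ being controlled by $e^{-r}$ in the relevant small-$Z$ regime. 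Combining the two regimes via triangle inequality produces $2e^{r(t-1)}+\min(m,S(p))e^{-r}$.

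Fourth, I would combine everything: adding the Poissonization error ($2$), the high-$Z$ residual ($2e^{r(t-1)}$), and the low-$Z$ residual ($\min(m,S(p))e^{-r}$) gives the stated bound. The \textbf{main obstacle} is Step 3: the residual $\sum_x e^{-np_x}\expectation{\sum_{i>Z}(-np_x t)^i/i!}$ has alternating signs, so naive absolute-value bounds like $\le e^{np_x t}$ blow up exponentially when $t>1$; the crux is to exploit that the Poisson-distributed $Z$ concentrates near $r$ so that the alternating-series remainder $y^{Z+1}/(Z+1)!$ is effectively controlled, trading off against the small probability that $Z$ is atypically small. Choosing the decomposition threshold at $Z\approx r$ (matching the Poisson mean) is what makes the two error terms balance and produces the claimed $r$-dependence.
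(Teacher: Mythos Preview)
The paper does not actually prove this lemma; it imports it verbatim as Lemma~4 of Acharya--Das--Orlitsky--Suresh (2017). So there is no in-paper argument to match.

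Your Steps~1 and~2 are fine: Poissonization costs $O(1)$ (and so does replacing $(1-p_x)^m$ by $e^{-mp_x}$, via $e^{-mp_x}-(1-p_x)^m\le e^{-mp_x}\cdot\tfrac{mp_x^2}{1-p_x}\le p_x\cdot e^{-1}$ summing to $\le 1$), and the per-symbol identity that isolates the residual $e^{-np_x}\,\E_Z\bigl[\sum_{i>Z}(-np_xt)^i/i!\bigr]$ is correct.

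Step~3, however, has two concrete errors that break the argument. First, you claim the $e^{-r}$ factor on the $\min(m,S(p))$ term comes from ``the Poisson tail probability $\Pr[Z<r]$ being controlled by $e^{-r}$.'' This is false: for $Z\sim\mathrm{Poi}(r)$ the median lies within $1$ of the mean, so $\Pr[Z<r]\approx\tfrac12$, not $e^{-r}$. No decomposition at the threshold $Z=r$ can produce an $e^{-r}$ factor this way. Second, for the $\{Z\ge r\}$ contribution you write ``$e^{-np_x}\cdot e^{np_xt}=e^{np_x(t-1)}$ together with the crude bound $\sum_x p_x\le1$ to push the dependence on $p_x$ into the $r(t-1)$ exponent,'' but $\sum_x e^{np_x(t-1)}$ is a sum of exponentials, not $e^{n(t-1)\sum_x p_x}$; it can be made arbitrarily large by taking many symbols with small positive $p_x$. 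Neither branch of your split yields the term you attribute to it.

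The $\{Z\lessgtr r\}$ decomposition is simply the wrong one. The $e^{-r}$ in the final bound is the $e^{-r}$ sitting inside \emph{every} Poisson pmf value $\Pr[Z=j]=e^{-r}r^j/j!$; it survives because, after applying the Lagrange remainder $\bigl|\sum_{i>j}(-y)^i/i!\bigr|\le y^{j+1}/(j+1)!$ uniformly in $j$, the remaining sum $e^{-np_x}\sum_{j\ge0}\tfrac{r^j}{j!}\tfrac{(np_xt)^{j+1}}{(j+1)!}$ can be controlled directly. The separation into the $2e^{r(t-1)}$ and $\min(m,S(p))e^{-r}$ pieces comes from splitting over \emph{symbols} (heavy versus light $p_x$), not over the value of $Z$.
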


Using these results, letting $r = \log (1/\alpha)$,~\cite{OrlitskySW16} showed that there is a constant $C$, such that with $n = C\frac{m}{\log m}\log(1/\dist)$ samples, with probability at least 0.9, 
\[
\absv{\frac{\hat{S}_m(\Xon)}m- \frac{S_m(\p)}m} \le \dist.
\]

%We now analyze the sensitivity of $\frac{\hat{S}_m(\Xon)}m$ to obtain private algorithms. We can assume without loss of generality that $\frac1\dist<m/2$. 

%In this case, suppose that $r = \log(3/\dist)$, and $\ns = C\frac{m}{\log m}\log(3/\dist)$. Note that in this case, 
%\[
%r(t-1) \le rt = r\frac{m}{n} = \frac{\log m}C.
%\]

Our upper bound in Theorem~\ref{thm:supportcoverage} is derived by the following analysis of the sensitivity of $\frac{\hat{S}_m(\Xon)}m$.
	
If we change one sample in $\Xon$, at most two of the $\prfi{j}$'s change. Hence by Lemma~\ref{lem:sensitivity-coverage}, the sensitivity of the estimator satisfies
\begin{align}
	\Delta\Paren{\frac{\hat{S}_m(\Xon)}m} \le & \frac2m\cdot\Paren{1+e^{r(t-1)}}.\label{eqn:bound-sen-cov}
\end{align}

By Lemma~\ref{lem:main-sensitivity}, there is a private algorithm for support coverage estimation as long as 
\[
\Delta\Paren{\frac{\hat{S}_m(\Xon)}m} \le \dist\eps,
\]
which by~\eqref{eqn:bound-sen-cov} holds if
$$
2(1 + \exp(r(t-1))) \le \dist\eps m.
$$
Let $r = \log (3/\alpha)$, note that $t-1 = \frac m{\ns} -2$. Suppose $\dist\eps m>2$, then, the condition above reduces to 
\[
\log \Paren{\frac{3}{\dist}} \cdot \Paren{\frac mn-2} \le \log \Paren{\frac12\dist\eps m -1}.
\]

This is equivalent to
\begin{align}
 n & \ge \frac{m \log (3/\dist)}{\log (\frac12\dist \eps m - 1) + 2 \log (3/\dist)}  \nonumber \\
    &= \frac{m \log (3/\dist)}{\log ( \frac32 \eps m - 3 / \dist) +  \log  (3/\dist)}  \nonumber
\end{align}

Suppose $\dist\eps m >2$, then the condition above reduces to the requirement that
\[
n = \Omega\Paren{\frac{m\log (1/\dist)}{\log (2 + \eps m)}}.
\]

\subsubsection{Lower Bound for Support Coverage Estimation}
\label{sec:coverage-lb}

We now prove the lower bound described in Theorem~\ref{thm:supportcoverage}.
Note that the first term in the lower bound is the sample complexity of non-private support coverage estimation, shown in~\cite{OrlitskySW16}.
Therefore, we turn our attention to prove the last term in the sample complexity.

%Without loss of generality, we can assume $\dist m = \Omega(1)$. Suppose $c$ is a constant and $c\dist m > 2$. 
%Consider the following two distributions. $u_1$ is uniform over $[m]$, and $u_2$ is distributed over $[m(1 + \dist)]$. Then, 
%
%
%\[
%S_m(u_1) = m\cdot \Paren{ 1 - \Paren{1-\frac1m}^m},
%\]
%and 
%\[
%S_m(u_2) = m(1+\dist)\cdot \Paren {1-  \Paren{1-\frac1{m(1+\dist)}}^m}.
%\]
%Then, 
%\begin{align}
%S_m(u_2) - S_m(u_1) \ge \dist m + m(1-\frac{1}{m})^m - m(1 + \dist) (1-\frac{1}{m(1+\dist)})^m
%\end{align}
%for all $m\ge11$.

Consider the following two distributions. $u_1$ is uniform over $[m(1+\dist)]$. $u_2$ is distributed over $m+1$ elements $[m] \cup \{\triangle\}$ where $u_2[i] = \frac{1}{m(1+\dist)} \forall i \in [m]$ and $u_2[\triangle] = \frac{\dist}{1+\dist}$. Moreover, $\triangle \notin [m(1+\dist)]$. Then,

\[
S_m(u_1) = m(1+\dist)\cdot \Paren {1-  \Paren{1-\frac1{m(1+\dist)}}^m},
\]

and 
\[
S_m(u_2) = m\cdot \Paren {1-  \Paren{1-\frac1{m(1+\dist)}}^m} + \Paren {1-  \Paren{1-\frac\dist{1+\dist}}^m} \nonumber 
\]

hence,
\begin{align}
&S_m(u_2) - S_m(u_1) \nonumber \\
& = m\dist \cdot \Paren {1-  \Paren{1-\frac1{m(1+\dist)}}^m} -  \Paren {1-  \Paren{1-\frac\dist{1+\dist}}^m} \nonumber \\
& = \Omega(\dist m) \nonumber 
\end{align}

Hence we know there support coverage differs by $\Omega(\dist m)$. Moreover, their total variation distance is $\frac{\dist}{1+\dist}$.
The following lemma is folklore, based on the coupling interpretation of total variation distance, and the fact that total variation distance is subadditive for product measures.

\begin{lemma} \label{lem:min_coupling}
For any two distributions $\p$, and $\q$, there is a coupling between $\ns$ i.i.d.\ samples from the two distributions with an expected Hamming distance of $\dtv{\p}{\q}\cdot\ns$.  
\end{lemma}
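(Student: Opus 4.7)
The plan is to build the desired coupling on $\cX^\ns$ as a product of independent per-coordinate maximal couplings between $\p$ and $\q$, and then apply linearity of expectation.

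First, I would recall (or construct) the standard maximal coupling of two distributions. Given $\p,\q$ on $\cX$, let $m(\smb) = \min\{\p(\smb),\q(\smb)\}$, whose total mass equals $1-\dtv{\p}{\q}$. Define a joint distribution on $(X,Y)$ as follows: with probability $1-\dtv{\p}{\q}$, draw $Z$ from the distribution $m/(1-\dtv{\p}{\q})$ and set $X=Y=Z$; with probability $\dtv{\p}{\q}$, independently draw $X$ from $(\p-m)/\dtv{\p}{\q}$ and $Y$ from $(\q-m)/\dtv{\p}{\q}$. A routine marginal check shows $X\sim\p$ and $Y\sim\q$, and by construction $\probof{X\ne Y}\le \dtv{\p}{\q}$ (in fact equality holds when $\p\ne\q$, but the inequality is all we need).

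Next, I would lift this to $\ns$ samples. Let $(X_i,Y_i)$ for $i=1,\ldots,\ns$ be i.i.d.\ draws from the per-coordinate coupling above, and set $\Xon=(X_1,\ldots,X_\ns)$, $\Yon=(Y_1,\ldots,Y_\ns)$. Since the $(X_i,Y_i)$ are independent copies of the per-coordinate coupling, the marginals of $\Xon$ and $\Yon$ are the $\ns$-fold product distributions $\p^\ns$ and $\q^\ns$ respectively, i.e.\ i.i.d.\ samples from $\p$ and $\q$.

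Finally, by linearity of expectation,
\[
\expectation{\ham{\Xon}{\Yon}} \;=\; \sum_{i=1}^\ns \probof{X_i\ne Y_i} \;\le\; \ns\cdot\dtv{\p}{\q},
\]
which is the claimed bound. There is no real obstacle here; the only delicate point is verifying the marginals of the single-coordinate maximal coupling, which is a standard calculation, after which the tensorization by independence is immediate.
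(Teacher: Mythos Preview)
Your proof is correct and is precisely the standard folklore argument the paper alludes to: the paper does not give a detailed proof but simply states the lemma is ``folklore, based on the coupling interpretation of total variation distance,'' which is exactly your per-coordinate maximal coupling followed by tensorization and linearity of expectation.
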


%Suppose $X \sim u_1, Y \sum u_2$ and $P(x, y)$ is the coupling satisfying Lemma~\ref{lem:min_coupling}. For a coupling between $X^n \sim u_1^n$ and $Y^n \sim u_2^n$, we just use $(X^n, Y^n) \sim P^n$, which couples each corresponding entry pair independently.  Then we know the expected hamming distance of this coupling will be 
%\[
%	 \expectation{\ham{X^n}{Y^n} } = \ns \expectation{X, Y} = \ns \dtv (u_1, u_2) =\ns  \frac\dist{1+\dist}
%\]

Using Lemma~\ref{lem:min_coupling} and $\dtv{u_1}{u_2} = \frac{\dist}{1+\dist}$, we have
\begin{lemma} \label{lem:coupling_coverage}
Suppose $u_1$ and $u_2$ are as defined before, there is a coupling between $u_1^\ns$ and $u_2^\ns$ with expected Hamming distance equal to $\frac{\dist}{1+\dist} \ns$.
\end{lemma}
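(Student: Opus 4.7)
The plan is to invoke Lemma~\ref{lem:min_coupling} directly, since it is stated in precisely the generality we need. Specifically, I would instantiate $p = u_1$ and $q = u_2$ in Lemma~\ref{lem:min_coupling} and substitute the total variation distance $\dtv{u_1}{u_2} = \frac{\alpha}{1+\alpha}$ computed in the discussion immediately preceding the lemma.

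For completeness, I would briefly recall why Lemma~\ref{lem:min_coupling} holds, since the proof is short and the construction will be referenced later. The idea is the standard maximal (a.k.a. $\gamma$-coupling) construction applied coordinatewise. For a single coordinate, one can build a joint distribution $(X,Y)$ with $X \sim u_1$, $Y \sim u_2$ such that $\Pr[X \neq Y] = \dtv{u_1}{u_2}$: with probability $1 - \dtv{u_1}{u_2}$ sample $X = Y$ from the common-mass distribution proportional to $\min\{u_1(\cdot), u_2(\cdot)\}$, and otherwise sample $X$ and $Y$ independently from the residual mass proportional to $(u_1 - \min\{u_1, u_2\})$ and $(u_2 - \min\{u_1, u_2\})$ respectively. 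Then I would couple the $n$ coordinates independently to obtain a joint distribution on $[m+1]^n \times [m+1]^n$ whose marginals are $u_1^n$ and $u_2^n$.

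To finish, I would apply linearity of expectation: writing $\ham{\Xon}{\Yon} = \sum_{i=1}^n \II\{X_i \neq Y_i\}$, we get
\begin{equation*}
\expectation{\ham{\Xon}{\Yon}} = \sum_{i=1}^n \Pr[X_i \neq Y_i] = n \cdot \dtv{u_1}{u_2} = \frac{\alpha}{1+\alpha}\, n,
\end{equation*}
as claimed. There is no real obstacle here; the lemma is essentially a restatement of Lemma~\ref{lem:min_coupling} specialized to the two distributions constructed in the lower bound. The only thing worth verifying carefully is that the total variation distance between $u_1$ and $u_2$ is indeed $\frac{\alpha}{1+\alpha}$, which follows by noting that $u_2$ redistributes all the mass that $u_1$ places on the extra $\alpha m$ symbols outside $[m]$ onto the single symbol $\triangle$, yielding $\tfrac{1}{2}\|u_1 - u_2\|_1 = \tfrac{\alpha}{1+\alpha}$.
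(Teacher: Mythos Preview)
Your proposal is correct and matches the paper's approach exactly: the paper simply states this lemma as an immediate consequence of Lemma~\ref{lem:min_coupling} together with the already-computed value $\dtv{u_1}{u_2} = \frac{\alpha}{1+\alpha}$, without writing out a separate proof. Your additional recollection of the maximal-coupling construction and the verification of the TV distance are correct extra detail but not strictly needed.
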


Moreover, given $\ns$ samples, we must be able to privately distinguish between $u_1$ and $u_2$ given an $\dist$ accurate estimator of support coverage with privacy considerations. Thus, according to Theorem~\ref{thm:coupling} and Lemma~\ref{lem:coupling_coverage}, we have:
\[
\frac\dist{1+\dist} \ns\ge \frac1{\eps}\Rightarrow n = \Omega\Paren{\frac1{\eps\dist}}.
\]

%We can also come up with another lower bound. 
%
%Consider the uniform distribution $u_1$ over $m$ elements, and the distribution $q$ that is one of the Paninski distributions.
%
%Then, 
%\begin{align}
%\expectation{S_m(u_1)}-\expectation{S_m(q)} 
%& = m\cdot \Paren{1-\frac1m}^m - \frac m2 \Paren{1-\frac1{m(1+\dist)}}^{m}
%- \frac m2 \Paren{1-\frac1{m(1-\dist)}}^{m}\ge c\dist^2 m.\nonumber
%\end{align}
%
%Now suppose we obtain $\ns$ samples from one of the $2^{m/2}$ distributions. Then, when we sample $\ns<m$ samples, by~\cite{AcharyaSZ17}, there is a coupling with Hamming distance at most $O(\dist^2\ns^2/m)$. By changing $\dist$ to $\sqrt{\dist}$, we must have
%\[
%C\frac{\dist\ns^2}{m}\ge \frac1{\eps}\Rightarrow \ns=\Omega\Paren{\frac{\sqrt m}{\sqrt{\dist\eps}}}.
%\]
%\textcolor{red}{We need to find a better coupling. We can perhaps find a coupling better than this. Perhaps some coupling that has a Hamming distance $\alpha\ns/\log \ns$. Can we try to think what kind of coupling happens for the lower bound constructions that they have for support size, and the pnas paper. I think this is hard though!!}

\subsection{Support Size Estimation}
In this section, we prove our main theorem about support size estimation, Theorem~\ref{thm:ssize}:% about support size estimation:

\ssize*

\noindent Our upper bound is described and analyzed in Section~\ref{sec:ssize-ub}, while our lower bound appears in Section~\ref{sec:ssize-lb}.
\label{sec:ssize}

\subsubsection{Upper Bound for Support Size Estimation}
\label{sec:ssize-ub}

% \begin{theorem}
% The sample complexity of support size estimation is 

% \begin{equation}
% C(S, \dist, \eps)= 
% \begin{cases}
% O\Paren{ \frac{\ab\log^2 (1/\dist)}{\log \ab}+ \frac{\ab\log^2 (1/\dist)}{\log (1 + \eps \ab)} }, & \text{when $\ab \ge \frac1{\dist \eps}$} \\
% O\Paren{\ab \log(1/\dist) + \frac1{\dist\eps}}, & \text{when $\frac1{\dist} \le \ab \le \frac1{\dist \eps}$} \\
% O\Paren{\ab \log \ab + \frac{\ab}{\eps}}. & \text{when $ \ab \le \frac1{\dist}$} 
% \end{cases}
% \end{equation}
% %\todo{To be changed}
% %Furthermore,
% %$$
% %C(S, \dist, \eps)= \Omega\Paren{\max\left\{\frac{\ab\log^2 (1/\dist)}{\log \ab}, \frac{1}{\dist^2}+ \frac{1}{\dist \eps}\right\}}.
% %$$
% Furthermore,
% \begin{equation}
% C(S, \dist, \eps)= 
% \begin{cases}
%  \Omega\Paren{ \frac{\ab\log^2 (1/\dist)}{\log \ab} + \frac{1}{\dist \eps} }, & \text{when $\ab \ge \frac1{\dist}$} \\
%  \Omega\Paren{\ab \log \ab + \frac{\ab}{\eps}}. & \text{when $ \ab \le \frac1{\dist}$} 
% \end{cases}
% \end{equation}
% \end{theorem}
We split the analysis into two regimes. First we consider the ``sparse'' case, where the amount of data is relatively small.
In particular, $n < \frac{k \log \frac{3}{\alpha}}{2}$. In this case we show a bound of $O\Paren{ \frac{\ab\log^2 (1/\dist)}{\log \ab}+ \frac{\ab\log^2 (1/\dist)}{\log (2 + \eps \ab)} }$. This upper bound is less than $\frac{k \log \frac{3}{\alpha}}{2}$ only when $\ab = \Omega \Paren{\frac1{\dist \eps} }$, which is the condition for the sparse case.

\paragraph{Sparse case} In \cite{OrlitskySW16}, it is shown that the support coverage estimator can be used to obtain optimal results for estimating the support size of a distribution.
In this fashion, taking $m=\ab\log(3/\dist)$, we we may use an estimator of the support coverage $S_m(\p)$ as an estimator of $S(\p)$.
In particular, their result is based on the following observation. 

\begin{lemma}
	\label{lem:sssc}
	Suppose $m\ge \ab\log (3/\dist)$, then for any $\p\in\Dgk$, 
	\[
	\absv{S_m(\p) - S(\p)}\le \frac{\dist\ab}3.
	\]
\end{lemma}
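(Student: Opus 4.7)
The plan is to directly compute the difference $S(p) - S_m(p)$ termwise and bound each contribution using the minimum-probability assumption $p \in \Delta_{\geq 1/k}$.

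First I would observe that for any symbol $x$ with $p(x) = 0$, both $S(p)$ and $S_m(p)$ receive zero contribution, so only the support matters. For any $x$ in the support, the contribution to $S(p)$ is $1$ and the contribution to $S_m(p)$ is $1 - (1-p(x))^m$. Hence
\[
S(p) - S_m(p) \;=\; \sum_{x : p(x) > 0} (1 - p(x))^m,
\]
which is manifestly nonnegative, so $S_m(p) \le S(p)$ and it suffices to upper-bound the right-hand side.

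Next I would use the constraint $p \in \Delta_{\geq 1/k}$: every $x$ in the support satisfies $p(x) \geq 1/k$, so $(1 - p(x))^m \leq (1 - 1/k)^m \leq e^{-m/k}$. Combined with the trivial bound $|\{x : p(x) > 0\}| \leq k$, this gives
\[
S(p) - S_m(p) \;\leq\; k \cdot e^{-m/k}.
\]
Finally, plugging in $m \geq k \log(3/\alpha)$ yields $e^{-m/k} \leq \alpha/3$, so $S(p) - S_m(p) \leq \alpha k/3$, completing the proof.

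There is no real obstacle here; the lemma is essentially an exercise in estimating the ``missing mass'' contribution $(1-p(x))^m$ using the minimum-probability lower bound. The only thing worth being careful about is the direction of the inequality, i.e., noting that $S_m(p) \leq S(p)$ so the absolute value can be dropped.
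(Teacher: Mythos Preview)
Your proposal is correct and essentially identical to the paper's proof: both write $S(p)-S_m(p)=\sum_{x:p(x)>0}(1-p(x))^m$, bound each term by $e^{-m/k}$ via $p(x)\ge 1/k$, use that the support has at most $k$ elements, and plug in $m\ge k\log(3/\alpha)$.
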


\begin{proof}
	From the definition of $S_m(\p)$, we have $S_m(\p)\le S(\p)$. For the other side, 
	\begin{align}
	S(p) -S_m(p) &= \sum_{\smb} \Paren{1-\p(x)}^m \le \sum_{\smb} e^{-m\p(x)}  \nonumber \\
	&\le \ab\cdot e^{-\log (3/\dist)}  ~~~~~ = \frac{\ab\dist}{3}. \qedhere
	\end{align}
\end{proof}

Therefore, estimating $S_m(\p)$ for $m = k\log (3/\dist)$, up to $\pm \dist\ab/3$.. Therefore, the goal is to determine the smallest value of $\ns$ to solve the support coverage problem for $m = k\log (3/\dist)$.

%We again break the analysis into two cases.
%In the first case, we consider $\ns = \Omega\left(k \log(1/\alpha)\right)$.

Suppose $r = \log (3/\dist)$, and $m = \ab\log (3/\dist) = \ab \cdot r$ in the support coverage problem. Then, we have 
\begin{align}
t= \frac mn-1 = \frac{\ab\log (3/\dist)}{n}-1.\label{eq:exp-t}
\end{align}
Then, by Lemma~\ref{lem:bias-coverage} in the previous section, we have 
\begin{align}
&~~~\absv{\expectation{\hat{S}_m(\Xon)} - S(\p)} \nonumber\\ 
&\le \absv{\expectation{\hat{S}_m(\Xon)} - S_m(\p)}  + \absv{S_m(\p) - S(\p)}\nonumber\\ 
&\le 2+2e^{r(t-1)}+\min\{m, \ab\}\cdot e^{-r} + \frac{k\dist}{3}\nonumber\\
&\le 2+2e^{r(t-1)}+\ab\cdot e^{-\log (3/\dist)} + \frac{k\dist}{3}\nonumber\\
&\le  2 + 2e^{r(t-1)}+ 2\frac{k\dist}{3}.\nonumber
\end{align}

We will find conditions on $\ns$ such that the middle term above is at most $\ab\dist$. Toward this end, note that $2e^{r(t-1)} \le \dist\ab$ holds if and only if ${r(t-1)} \le \log\Paren{\frac{\dist\ab}{2}}$. Plugging in~\eqref{eq:exp-t}, this holds when
\begin{align}
\log (3/\dist)\cdot \Paren{\frac{\ab\log (3/\dist)}{n}-2} \le \log\Paren{\frac{\dist\ab}{2}},\nonumber
\end{align}
which is equivalent to 
\iffalse
\begin{align}
\ns\ge \frac{\ab\log^2(3/\dist)}{\log{\frac{\dist\ab}{2}}+2\log \frac 3\dist } =  \frac{\ab\log^2(3/\dist)}{\log{\frac{\ab}{2} }}\nonumber
\end{align}
\fi

\begin{align}
\ns\ge \frac{\ab\log^2(3/\dist)}{\log{\frac{\dist\ab}{2}}+2\log \frac 3\dist } = O\Paren{ \frac{\ab\log^2(1/\dist)}{\log{\ab }} }\nonumber
\end{align}
where we have assumed without loss of generality that $\dist>\frac1\ab$. 

The computations for sensitivity are very similar. From Lemma~\ref{lem:main-sensitivity}, we need to find the value of $\ns$ such that 
\[
2+2e^{r(t-1)}\le \dist\eps\ab, 
\]
where we assume that $\ns \le \frac12\ab\log (3/\dist)$, else we just add noise to the true number of observed distinct elements.
By computations similar to the previous case, this reduces to 
\begin{align}
\ns \ge \frac{\ab\log^2(3/\dist)}{\log{\frac{\dist\eps\ab}{2}}+\log \frac 3\dist}.\nonumber
\end{align}
Therefore, this gives us a sample complexity of
\begin{align*}
\ns  = O\Paren{\frac{\ab\log^2(1/\dist)}{\log\Paren{2 + \eps\ab}}}
\end{align*}
for the sensitivity result to hold.

\paragraph{Dense case}
Then let us consider the dense case when $k \le \frac{1}{\alpha \eps}$. The algorithm under this case will be the following. Let $W(\Xon)$ denote the set of symbols which appear in $\Xon$  and let $N_x$ denote the number of times $x$ appears, then our non-private estimator is 
\[
\hat{S}(\Xon) = \sum_{x \in W(\Xon)} \min \left \{ 1,\frac{N_x}{\frac{\ns}{3\ab}} \right\}.
\]

To analyze the performance of the algorithm, we consider two cases, the case when $ \ab \le \frac1{\dist}$ and the case when $\frac1{\dist} \le \ab \le \frac1{\dist \eps}$.

 When $ \ab \le \frac1{\dist}$, we have $\ab \dist <1$, which means we need to know the exact support size. Our algorithm gives correct answer when all the symbols appearing at least $\frac{\ns}{3\ab}$ times. For any symbol $x$ with $\p(x) \ge \frac1{\ab}$, according to the Chernoff bound, $\probof{N_x < \frac{\ns}{3\ab}} \le \exp(- \frac{\frac{2\ns^2}{9\ab^2}}{ \ns \cdot \frac1{\ab}}) = \exp(-\frac{2\ns}{9\ab})$. Let $\ns \ge 18\ab \log \ab$, we have $\probof{N_x < \frac{\ns}{3\ab}} \le \frac1{\ab^4}$. Then according to the union bound, the probability of all the symbols appearing at least $\frac{\ns}{3\ab}$ is greater than $1-\frac1{\ab^3}$. When $\ab \ge 2$, this is larger than $2/3$, which means our algorithm gives correct answer with probability more than $\frac{2}{3}$.

Furthermore, we can see that the sensitivity of $\hat{S}(\Xon)$ is at most $3\ab/\ns$.
By Lemma~\ref{lem:main-sensitivity}, there is a private algorithm for support size estimation as long as 
\[
\Delta\Paren{\hat{S}(\Xon) } \le \eps.
\]
With the above bound on sensitivity, this is true with $n = O(\ab/\eps)$, giving the desired upper bound.

Next we consider the case when $\frac1{\dist} \le \ab \le \frac1{\dist \eps}$. For any symbol $x$ with $\p(x) \ge \frac1{\ab}$, according to the  same argument, $\probof{N_x < \frac{\ns}{3\ab}} \le \exp(- \frac{\frac{2\ns^2}{9\ab^2}}{ \ns \cdot \frac1{\ab}}) = \exp(-\frac{2\ns}{9\ab})$. When $\ns \ge 9\ab \log(1/\dist)$, we have $\probof{N_x < \frac{\ns}{3\ab}} \le \dist^2 \le 0.5\dist$ if we suppose $\dist<0.5$. Let $Y(\Xon) \ed \sum_{x \in S(\p)} \mathbf{1}\{ N_x \ge \frac{\ns}{3\ab}\}$, which is the number of symbols appearing more than $\frac{\ns}{3\ab}$ times. We know that $\expectation{Y(\Xon)} > S(\p)(1-0.5\dist)$ by linearity of expectations. Moreover, $\variance{Y(\Xon)} < 0.5 \dist \cdot S(\p)$ since it is the sum of $S(\p)$ negatively related Bernoulli random variables with bias less than $0.5 \alpha$.  According to Chebyshev's inequality, 

\[ \probof{ (1-0.5\dist)S(\p)<Y(\Xon) < S(\p) + \dist S(\p)} \ge 1-\frac{1}{4.5 \dist S(\p)} \ge 1-\frac{1}{ 4.5\ab \dist} \ge \frac{2}{3},\] 

where the last inequality comes from the fact $\ab\dist \ge 1$. Therefore,  

\noindent\[\probof{ (S(\p) - \dist \ab<Y(\Xon) < S(\p) + \dist \ab} \ge \probof{ (1-0.5\dist)S(\p)<Y(\Xon) < S(\p) + \dist S(\p)} \ge \frac{2}{3}.\]

Furthermore, we can see that the sensitivity of $\hat{S}(\Xon)$ is the same, which is at most $3\ab/\ns$.
By Lemma~\ref{lem:main-sensitivity}, there is a private algorithm for support coverage estimation as long as 
\[
\Delta\Paren{\hat{S}(\Xon) } \le \ab \dist \eps.
\]
With the above bound on sensitivity, this is true with $n = O(\frac1{\dist\eps})$, giving the desired upper bound.

\subsubsection{Lower Bound for Support Size Estimation}
\label{sec:ssize-lb}
In this section, we prove a lower bound for support size estimation, as described in Theorem~\ref{thm:ssize}.
The techniques are similar to those for support coverage in Section~\ref{sec:coverage-lb}.

First let us focus on the case when $\ab \ge \frac1{\dist}$, 
The first term of the complexity is the lower bounds for the non-private setting,
which follows by combining the lower bound of~\cite{OrlitskySW16} for support coverage, with the equivalence between estimation of support size and coverage as implied by Lemma~\ref{lem:sssc}.
We focus on the final term in the sequel.

Consider the following two distributions: $u_1$ is a uniform distribution over $[k]$ and $u_2$ is a uniform distribution over $[(1-\dist)k]$. Then the support size of these two distribution differs by $\dist k$, and $\dtv{u_1}{u_2} = \dist$. 

Hence by Lemma~\ref{lem:min_coupling}, we know the following:

\begin{lemma} \label{lem:coupling_ssize}
	Suppose $u_1 \sim U[k]$ and $u_2 \sim U[(1-\dist)k]$, there is a coupling between $u_1^\ns$ and $u_2^\ns$ with expected Hamming distance equal to $ \dist \ns $.
\end{lemma}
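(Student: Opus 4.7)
The proof is essentially a direct application of Lemma~\ref{lem:min_coupling}, so the plan is to (i) compute the total variation distance between $u_1$ and $u_2$, then (ii) invoke the lemma on the $n$-fold product distributions.

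For step (i), I would unfold the definition. Since $u_1(i)=1/k$ for all $i\in[k]$ and $u_2(i)=1/((1-\alpha)k)$ for $i\in[(1-\alpha)k]$ (and zero otherwise), a short computation gives
\[
\dtv{u_1}{u_2} = \tfrac{1}{2}\Paren{\sum_{i=1}^{(1-\alpha)k}\absv{\tfrac1k - \tfrac{1}{(1-\alpha)k}} + \sum_{i=(1-\alpha)k+1}^{k}\tfrac{1}{k}} = \tfrac{1}{2}(\alpha+\alpha) = \alpha,
\]
where the first sum equals $(1-\alpha)k\cdot\frac{\alpha}{(1-\alpha)k}=\alpha$ and the second sum equals $\alpha$.

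For step (ii), I would directly apply Lemma~\ref{lem:min_coupling}, which guarantees a coupling between $n$ i.i.d.\ samples from $u_1$ and $n$ i.i.d.\ samples from $u_2$ whose expected Hamming distance equals $\dtv{u_1}{u_2}\cdot n$. Combining this with step (i) yields expected Hamming distance exactly $\alpha n$, as required.

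There is no real obstacle here; the only subtlety is the TV-distance arithmetic, which is elementary. (For completeness one could note that the coupling from Lemma~\ref{lem:min_coupling} is constructed coordinatewise by the standard maximal coupling between $u_1$ and $u_2$, which matches $X_i=Y_i$ with probability $1-\alpha$ and samples the two marginals independently from the residual mass otherwise; independence across coordinates then gives the claimed expected Hamming distance by linearity.)
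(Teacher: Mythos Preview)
Your proposal is correct and matches the paper's approach exactly: the paper computes $\dtv{u_1}{u_2}=\alpha$ just before stating the lemma and then derives the lemma as an immediate consequence of Lemma~\ref{lem:min_coupling}. Your explicit TV-distance arithmetic is a bit more detailed than what the paper writes, but the argument is the same.
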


Moreover, given $\ns$ samples, we must be able to privately distinguish between $u_1$ and $u_2$ given an $\dist$ accurate estimator of entropy with privacy considerations. 
Thus, according to Theorem~\ref{thm:coupling}
and Lemma~\ref{lem:coupling_ssize}, we have:
\[
\dist \ns\ge \frac1{\eps}\Rightarrow n = \Omega\Paren{\frac1{\eps\dist}}.
\]

Then we move to the second case when $\ab \le \frac1{\dist}$. Because $\ab \dist<1$, we need to recover the support size exactly. 
The first term of the complexity is the lower bound for the non-private setting which can be proved using a coupon collector style argument, so here we focus on the second term.

We consider the following two distributions: $u_1$ is a uniform distribution over $[k]$ and $u_2$ is a uniform distribution over $[k-1]$. We must distinguish between these two distributions, for which $\dtv{u_1}{u_2} = \frac1{k}$. Hence, by Lemma~\ref{lem:min_coupling}, we have 
\[
\frac{\ns}{\ab}\ge \frac1{\eps}\Rightarrow n = \Omega\Paren{\frac{\ab}{\eps}}.
\]

\subsection{Entropy Estimation}
\label{sec:entropy}
In this section, we prove our main theorem about entropy estimation, Theorem~\ref{thm:entropy}:%, about entropy estimation:

\entropy*

\noindent We describe and analyze two upper bounds.
The first is based on the empirical entropy estimator, and is described and analyzed in Section~\ref{sec:entropy-ub-empirical}.
The second is based on the method of best-polynomial approximation, and appears in Section~\ref{sec:entropy-ub-poly}.
Finally, our lower bound is in Section~\ref{sec:entropy-lb}.
\subsubsection{Upper Bound for Entropy Estimation: The Empirical Estimator}
\label{sec:entropy-ub-empirical}
Our first private entropy estimator is derived by adding Laplace noise into the empirical estimator. The parameter of the Laplace distribution is $\frac{ \Delta(\entemp)}{\eps}$, where $\Delta(\entemp)$ denotes the sensitivity of the empirical estimator. By analyzing its sensitivity and bias, we prove an upper bound on the sample complexity for private entropy estimation and get the first upper bound in Theorem~\ref{thm:entropy}.
% We prove an upper bound on the sample complexity for private entropy estimation if one uses the empirical estimator.
% This will give us the first upper bound in Theorem~\ref{thm:entropy}.

Let $\emp$ be the empirical distribution, and let $\entemp$ be the entropy of the empirical distribution.
The theorem is based on the following three facts:
\begin{align}
&\Delta(\entemp) = O\Paren{\frac{\log\ns}{\ns}}.\label{eqn:sensitivity-emp}\\
&\absv{\entp{\p}-\expectation{\entemp}} =O\Paren{\frac{\ab}{\ns}},\label{eq:bias-emp}\\
&\variance{\entemp} =O\Paren{\frac{\log^2(\min\{\ab,\ns\})}{\ns}},\label{eq:variance-emp}
\end{align}
With these three facts in hand, the sample complexity of the empirical estimator can be bounded as follows.
 By Lemma~\ref{lem:main-sensitivity}, 
we need $\Delta(\entemp) \le \dist \eps$, which gives $\ns=O\Paren{\frac1{\dist \eps} \log(\frac1{\dist \eps})}$. We also need $\absv{\entp{\p}-\expectation{\entemp}} = O\Paren{\dist}$ and $\variance{\entemp} = O\Paren{\dist^2}$, which gives $\ns = O\Paren{\frac{\ab}{\dist}+   \frac{ \log^2(\min\{\ab,\ns\})}{\dist^2}}$.

\paragraph{Proof of~\eqref{eqn:sensitivity-emp}.}
The largest change in any $\nsmb$ when we change one symbol is one. Moreover, at most two $\nsmb$ change. Therefore, 
\begin{align}
\Delta(\entemp) &\leq  2\cdot \max_{j=1\ldots \ns-1} \absv{ \frac{j+1}{n}\log\frac{n}{j+1}-\frac{j}{n}\log\frac{n}{j}}\nonumber\\
&=  2\cdot \max_{j=1\ldots \ns-1}\absv{\frac{j}{\ns}\log \frac{j}{j+1}+\frac1{\ns}\log \frac{\ns}{j+1}}\label{eqn:sum-one}\\
&\leq  2 \cdot \max_{j=1\ldots \ns-1}\max\left\{\absv{\frac{j}{\ns}\log \frac{j}{j+1}}, \absv{ \frac1{\ns}\log \frac{\ns}{j+1}}\right\}\label{eqn:sum-two}\\
&\leq  2\cdot \max\left\{\frac 1\ns, \frac{\log \ns}{\ns}\right\},\nonumber\\
& = 2\cdot \frac{\log \ns}{\ns}.
\end{align}

\paragraph{Proof of~\eqref{eq:bias-emp}.} By the concavity of entropy function, we know that 
\[
\expectation{\entp{\emp}} \le \entp{\p}.
\]
Therefore, 
\begin{align}
&\expectation{\absv{\entp{\p}-\entp{\emp}}} =  \entp{\p}-\expectation{{\entp{\emp}}}\nonumber\\
&=  \expectation{\sum_{\smb}\Paren{\pemps\log \pemps-\p(\smb)\log\p(\smb)}} \nonumber\\
&=  \expectation{\sum_{\smb}\pemps\log \frac{\pemps}{\p(\smb)}}+\expectation{\sum_{\smb}(\pemps-\p(\smb))\log\p(\smb)}\nonumber\\
&=  \expectation{\kldist{\emp}{\p}}\label{eqn:kl-exp}\\
&\leq \expectation{\chisquare{\emp}{\p}}\label{eqn:kl-chisq}\\
&=  \expectation{\sum_{\smb}\frac{(\pemps-\p(\smb))^2}{\p(\smb)}}\nonumber\\
&\le  {\sum_{\smb}\frac{(\p(\smb)/\ns)}{\p(\smb)}}\label{eqn:var-bin}\\
&= \frac\ab\ns.
\end{align}

\paragraph{Proof of~\eqref{eq:variance-emp}.}The variance bound of $\frac{\log^2 k}{n}$ is given precisely in Lemma 15 of~\cite{JiaoVHW17}. 
To obtain the other half of the bound of, we apply the bounded differences inequality in the form stated in Corollary 3.2 of~\cite{BoucheronLM13}.
\begin{lemma}
	\label{lem:bdd-diff-var}
	Let $f:\Omega^{\ns}\to\RR$ be a function. Suppose further that 
	\[
	\max_{z_1,\ldots, z_n, z_{i}^{'}} \absv{f(z_1,\ldots,z_n) - f(z_1,\ldots, z_{i-1}, z_i^{'},\ldots,z_n)}\le c_i.
	\]
	Then for independent variables $Z_1,\ldots,Z_{\ns}$, 
	\[
	\Var\Paren{f(Z_1,\ldots,Z_{\ns})} \le \frac14\sum_{i=1}^{\ns} c_i^2.
	\]
\end{lemma}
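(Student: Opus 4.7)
The plan is to derive the bound in two steps: first, tensorize the variance via a Doob-martingale decomposition to express $\Var(f)$ as a sum of conditional variances; second, use the bounded-differences hypothesis together with Popoviciu's inequality to bound each conditional variance by $c_i^2/4$.

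For the tensorization step, I would consider the Doob martingale $M_i = \EE[f \mid Z_1, \ldots, Z_i]$ with $M_0 = \EE[f]$ and $M_n = f$. Writing $f - \EE[f] = \sum_{i=1}^n D_i$ with $D_i = M_i - M_{i-1}$, orthogonality of martingale differences yields $\Var(f) = \sum_i \EE[D_i^2]$. The independence of $Z_i$ from $Z_{-i} \ed (Z_1, \ldots, Z_{i-1}, Z_{i+1}, \ldots, Z_n)$ together with conditional Jensen's inequality then implies $\EE[D_i^2 \mid Z_{-i}] \le \Var(f \mid Z_{-i})$, giving the tensorization-of-variance bound $\Var(f) \le \sum_{i=1}^n \EE[\Var(f \mid Z_{-i})]$.

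Next I would bound each conditional variance using the hypothesis. Fixing $Z_{-i}$ and viewing $f$ as a function of the single remaining variable $Z_i$, the bounded-differences assumption guarantees that this one-variable restriction takes values in an interval of length at most $c_i$. Popoviciu's inequality, which states that any random variable supported on an interval of length $L$ has variance at most $L^2/4$, then gives $\Var(f \mid Z_{-i}) \le c_i^2/4$. Taking expectations and summing over $i$ yields $\Var(f) \le \tfrac14 \sum_{i=1}^n c_i^2$, as claimed.

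The main subtlety is obtaining the constant $1/4$ rather than $1/2$. The more elementary symmetrization argument using independent copies $Z_i'$ of $Z_i$ -- the usual Efron--Stein statement -- only yields $\Var(f) \le \tfrac12 \sum_i \EE[(f - f^{(i)})^2] \le \tfrac12 \sum_i c_i^2$, which is loose by a factor of $2$. Routing through conditional variances and Popoviciu's inequality is what secures the tight constant stated in the lemma; this is also what makes the bound usefully comparable to the $O(\log^2 n/n)$ target when later applied with $c_i = O(\log n / n)$ from the sensitivity bound~\eqref{eqn:sensitivity-emp}.
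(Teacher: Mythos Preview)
Your argument is correct and is precisely the standard proof: the Efron--Stein tensorization $\Var(f)\le\sum_i\EE[\Var(f\mid Z_{-i})]$ followed by Popoviciu's inequality on each one-variable restriction. The paper itself does not prove this lemma; it simply quotes it as Corollary~3.2 of~\cite{BoucheronLM13}, whose proof is exactly the route you describe, so there is nothing to compare.

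One small wording issue: the pointwise inequality $\EE[D_i^2\mid Z_{-i}]\le \Var(f\mid Z_{-i})$ as you stated it is not quite the right conditioning (since $D_i$ depends only on $Z_{\le i}$, not on $Z_{>i}$). The clean statement is $D_i=\EE[\,f-\EE_i f\mid Z_{\le i}\,]$, whence conditional Jensen gives $\EE[D_i^2]\le \EE[(f-\EE_i f)^2]=\EE[\Var(f\mid Z_{-i})]$ after taking expectations. Your stated conclusion $\Var(f)\le\sum_i\EE[\Var(f\mid Z_{-i})]$ is correct, and the rest of the argument goes through unchanged.
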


Therefore, using Lemma~\ref{lem:bdd-diff-var} and Equation~\eqref{eqn:sensitivity-emp}
\[
\variance{\entemp}\le \ns\cdot\Paren{\frac{4\log^2\ns}{\ns^2}} = \frac{4\log^2\ns}{\ns}.
\]

%\begin{theorem}\label{lb:all}
%Any entropy estimator must take 
%\[
%\Omega\Paren{\frac{\ab}{\dist\log\ab}+\frac{\log^2(\min\{\ab,\ns\})}{\dist^2}+\frac{\log\ab}{\dist\eps}}
%\]
%samples to estimate $\entp{\p}$ up to $\pm\eps$. Therefore, 
%\[
%\variance{\entemp}\le \ns\cdot\Paren{\frac{4\log^2\ns}{\ns^2}} = \frac{4\log^2\ns}{\ns}.
%\]
%\end{theorem}
%
%As a corollary, note that when $\eps$, and $\dist$ are at most inverse polynomial in $\ab$, then the dependence of the sample complexity is optimal in terms of the privacy parameter.

\subsubsection{Upper Bound for Entropy Estimation: Best-Polynomial Approximation}
\label{sec:entropy-ub-poly}
We prove an upper bound on the sample complexity for private entropy estimation if one  adds Laplace noise into best-polynomial estimator.%uses the empirical estimator.
This will give us the second upper bound in Theorem~\ref{thm:entropy}.

In the non-private setting the optimal sample complexity of estimating $H(\p)$ over $\Delta_k$ is given by Theorem 1 of~\cite{WuY16}
\[
\Theta\Paren{\frac{\ab}{\dist\log\ab}+\frac{\log^2(\min\{\ab,\ns\})}{\dist^2}}.
\]
However, this estimator can have a large sensitivity.~\cite{AcharyaDOS17} designed an estimator that has the same sample complexity but a smaller sensitivity. We restate Lemma 6 of~\cite{AcharyaDOS17} here:
\begin{lemma}
Let $\lambda>0$ be a fixed small constant, which may be taken to be any value between $0.01$ and $1$. Then there is an entropy estimator with sample complexity 
\[
\Theta\Paren{\frac1{\lambda^2}\cdot\frac{\ab}{\dist\log\ab}+\frac{\log^2(\min\{\ab,\ns\})}{\dist^2}},
\]
and has sensitivity $\ns^\lambda/\ns$. 
\end{lemma}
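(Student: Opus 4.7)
The plan is to adapt the standard best-polynomial-approximation entropy estimator of Wu--Yang / Jiao--Venkat--Han--Weissman, modified so that the polynomial degree is tuned to give us the promised sensitivity bound $\ns^{\lambda}/\ns$. I work in the Poissonized setting (translating back at the end costs only constants). For each symbol $x$ with count $N_x$, I estimate its contribution $-\p(x)\log \p(x)$ by a piecewise rule controlled by a threshold $\tau \asymp \frac{\log \ns}{\ns}$: if $N_x \le c_1 \log \ns$ I use a ``polynomial estimator'' $g_L(N_x)/\ns$ where $g_L$ is the scaled best degree-$L$ polynomial approximation of $-x\log x$ on $[0,\tau]$; if $N_x > c_1 \log \ns$ I use the plug-in entropy contribution $-\widehat{\p}(x)\log\widehat{\p}(x)$ together with the Miller--Madow bias correction $1/(2\ns)$. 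Finally I sum over $x$.

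The accuracy analysis is the textbook one: in the small-count regime the $L^{\infty}$ polynomial approximation error on $[0,\tau]$ is $\Theta(\tau / L^2)$, giving a bias of order $\ab \tau/L^2 = \ab/(\ns \log \ab)$ once we pick $L \asymp \log \ns$; the variance in both regimes is bounded by $\log^2 \min\{\ab,\ns\} / \ns$ via standard concentration, so Chebyshev yields the stated sample complexity, with the $1/\lambda^2$ blow-up appearing precisely because we will be forced to take $L$ smaller than the ``optimal'' $\log \ns$ in order to control sensitivity. The bounds on bias and variance are routine modifications of Lemmas~2--4 of~\cite{WuY16}; I would invoke those essentially verbatim after checking that the truncated polynomial still approximates $-x\log x$ up to the stated $\tau/L^2$ error.

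The main technical content, and the step I expect to be the real obstacle, is the sensitivity bound. Changing one sample alters at most two counts $N_x, N_{x'}$ by one, so by the triangle inequality the total change is at most $2 \max_j |g_L(j+1)-g_L(j)|/\ns$ in the polynomial part, plus $O(\log \ns / \ns)$ from the plug-in part (exactly as in~\eqref{eqn:sensitivity-emp}). For $g_L$, I write it in the Chebyshev basis on $[0,\tau]$; the best-approximation coefficients satisfy $|a_\ell| \lesssim \|x\log(1/x)\|_\infty$, but the finite-difference of a degree-$L$ Chebyshev polynomial at integer arguments grows like $L \cdot (2L)^{L}$ in the worst case, and one must instead use the Markov brothers inequality: the derivative of a degree-$L$ polynomial bounded by $M$ on $[0,\tau]$ has supremum at most $M L^2/\tau$, hence $|g_L(j+1) - g_L(j)| \lesssim L^2 \cdot \tau \cdot (j/\tau)^{O(L)}$ on the relevant range $j\le c_1 \log \ns$, which is at most $\ns^{O(1/\log\log \ns)}$ crudely. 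To actually achieve sensitivity $\ns^\lambda/\ns$ I would take $L = \lambda' \log \ns$ for a sufficiently small $\lambda'$ (depending on $\lambda$), so that the exponential-in-$L$ factor becomes $\ns^{\lambda}$; this is exactly where the $1/\lambda^2$ overhead in the sample complexity enters, since halving $L$ quadruples the approximation error. The delicate point is to get a clean coefficient bound sharp enough that the resulting exponent is indeed any desired $\lambda\in(0,1)$ rather than some fixed constant; I would use the Chebyshev expansion of $-x \log x$ together with the Markov inequality applied on $[0, c_1\log \ns/\ns]$ after rescaling, tracking constants carefully.

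Putting the pieces together, choosing $L = \lambda' \log \ns$ with $\lambda'$ small enough and $\tau = L^2/\ns$ gives bias $O(\ab/(\ns \log \ab))/\lambda'^2$, variance $O(\log^2 \min\{\ab,\ns\}/\ns)$, and sensitivity $\ns^\lambda/\ns$. Translating the variance bound into a sample-complexity bound via Chebyshev, and depoissonizing in the standard way, yields the lemma with $\lambda$ as stated.
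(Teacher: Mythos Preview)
The paper does not prove this lemma; it simply restates Lemma~6 of~\cite{AcharyaDOS17} and then invokes it. So there is no in-paper proof to compare against, only the cited source. Your plan follows the same architecture as that source: take the Wu--Yang polynomial-approximation estimator, reduce the polynomial degree from $\Theta(\log\ns)$ to $\lambda'\log\ns$, and trade the resulting $1/\lambda^2$ blow-up in approximation error for an exponentially-in-$L$ smaller sensitivity. That high-level strategy is exactly right, and your accounting for the bias and variance terms is standard.

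The place where your sketch is shaky is the sensitivity step. The Markov brothers inequality bounds $\sup_{[0,\tau]}|P_L'|$ for the approximating polynomial $P_L$ \emph{in the probability variable}. But the object whose finite differences you need is the \emph{unbiased Poisson estimator} $g_L(N)=\sum_{j\le L} a_j N^{(j)}/\ns^j$, a polynomial in the integer count $N$, not $P_L(N/\ns)$. The passage from $P_L$ to $g_L$ goes through the monomial (or falling-factorial) coefficients $a_j$, and these are what blow up: since $P_L$ is built from Chebyshev polynomials rescaled to $[0,\tau]$ with $\tau\asymp(\log\ns)/\ns$, one has $|a_j|\tau^j\lesssim 2^{O(L)}\cdot\sup_{[0,\tau]}|P_L|$, so each term $a_j N^{(j)}/\ns^j$ with $N\le c_1\log\ns$ contributes a factor $2^{O(L)}$. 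Taking $L=\lambda'\log\ns$ turns this into $\ns^{O(\lambda')}$, and choosing $\lambda'$ small enough gives the claimed $\ns^\lambda/\ns$. Your ``hence $|g_L(j+1)-g_L(j)|\lesssim L^2\tau (j/\tau)^{O(L)}$'' does not follow from Markov brothers; the $(j/\tau)^{O(L)}$ factor is inserted without justification. You should replace the Markov brothers route by a direct coefficient bound for Chebyshev expansions (this is what~\cite{AcharyaDOS17} and the related analyses in~\cite{WuY16} do), after which the rest of your argument goes through.
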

We can now invoke Lemma~\ref{lem:main-sensitivity}
on the estimator in this lemma to obtain the upper bound on private entropy estimation. 

%\begin{itemize}
%\item
%The sample complexity of the estimator is at most a factor $1/\lambda^2$ of the optimal, which is sample order-optimal for any constant $\lambda$. 
%\item
%The sensitivity of $H^{\lambda}$ is at most $O(\ns^{\lambda}/\ns)$. 
%\end{itemize}
%
%\todo{After the previous properties are formally stated, say how precisely the upper bound follows, probably referencing Lemma~\ref{lem:main-sensitivity}. Jayadev}

\subsubsection{Lower Bound for Entropy Estimation}
\label{sec:entropy-lb}
We now prove the lower bound for entropy estimation. 
Note that any lower bound on privately testing two distributions $\p$, and $\q$ such that $H(\p)-H(\q)= \Theta(\alpha)$ is a lower bound on estimating entropy. 
%We now prove a lower bound on the sample complexity of private entropy estimation. Note that the first two terms in the lower bound are the sample complexity of non-private entropy estimation, shown in~\cite{???}, so we turn our attention the final term.\todo{Where is the lower bound proved, WuY16?}

We analyze the following construction for Proposition 2 of~\cite{WuY16}. The two distributions $\p$, and $\q$ over $[k]$ are defined as:
\begin{align}
p(1) = \frac23,& p(i) = \frac{1-p(1)}{k-1}, \text{for $i=2,\ldots,k$},\\
q(1) = \frac{2-\eta}3,& q(i) = \frac{1-q(1)}{k-1}, \text{for $i=2,\ldots,k$}.
\end{align}
Then, by the grouping property of entropy, 
\[
H(\p) = h(2/3) + \frac13\cdot \log (k-1), \text{ and } H(\q) = h((2-\eta)/3) + \frac{1+\eta}3\cdot \log (k-1),
\]
which gives
\[
H(\p)-H(\q) = \Omega(\eta\log k). 
\]
For $\eta = \dist/\log\ab$, the entropy difference becomes $\Theta(\dist)$. 

The total variation distance between $\p$ and $\q$ is $\eta/3$. By Lemma 5, %~\ref{lem:min_coupling}, 
there is a coupling over $\Xon$, and $\Yon$ generated from $\p$ and $\q$ with expected Hamming distance at most $\dtv{\p}{\q}\cdot \ns$. 
This along with Lemma 2 %~\ref{lem:coupling} 
gives a lower bound of $\Omega\Paren{\log k/\dist\eps}$ on the sample complexity.  
%The first distribution assigns probability $2/3$ to the first element, and is uniform over the remaining $\ab-1$ elements.
%The second distribution assigns probability $(2-\eta)/3$ to the first element, and is uniform over the rest. 
%A straightforward calculation shows that the entropy of the two distributions differs by $\Theta(\eta\log\ab)$.
%There is a coupling between the two distributions with expected Hamming distance $\expectation{D} = {\eta\ns}/3$\todo{Why is this? Cite the lemma from the coverage section, and also state the TV distance between the two distributions. Jayadev}. Therefore\todo{Why does the next statement follow? Add the DPLB argument from your other paper to the preliminaries and cite it here. Don't forget the $O$ and $\Omega$ statements, since we have the factor 3 which disappears. Jayadev}, 
%\[
%\eta\ns\ge\frac1\eps \Rightarrow \ns\ge\frac1{\eta\eps}.
%\]
%Substituting $\eta = \dist/\log\ab$ gives
%\[
%\ns\ge \frac{\log\ab}{\dist\eps},
%\]
%as desired.

\section{Experiments}
\label{sec:ins_exp}
%In this section, we describe and discuss our experimental results. (The title says that, no need to write again(?))
We evaluated our methods for entropy estimation and support coverage on both synthetic and real data.
Overall, we found that privacy is quite cheap: private estimators achieve accuracy which is comparable or near-indistinguishable to non-private estimators in many settings.
Our results on entropy estimation and support coverage appear in Sections~\ref{sec:exp-entropy} and~\ref{sec:exp-coverage}, respectively.
Code of our implementation is available at \url{https://github.com/HuanyuZhang/INSPECTRE}.

\subsection{Entropy}
\label{sec:exp-entropy}
We compare the performance of our entropy estimator with a number of alternatives, both private and non-private.
Non-private algorithms considered include the plug-in estimator (\plugin),  the Miller-Madow Estimator (\MM)~\cite{Miller55}, the sample optimal polynomial approximation estimator (\polyn) of~\cite{WuY16}.
We analyze the privatized versions of plug-in, and \polyn\ in Sections~\ref{sec:entropy-ub-empirical} and~\ref{sec:entropy-ub-poly}, respectively.
The implementation of the latter is based on code from the authors of~\cite{WuY16}\footnote{See \url{https://github.com/Albuso0/entropy} for their code for entropy estimation.}.
We compare performance on different distributions including uniform, a distribution with two steps, Zipf(1/2), a distribution with Dirichlet-1 prior, and a distribution with Dirichlet-$1/2$ prior, and over varying support sizes.

While \plugin, and \MM\ are parameter free,  \polyn\ (and its private counterpart) have to choose the degree $L$ of the polynomial  to use, which manifests in the parameter $\lambda$ in the statement of Theorem~\ref{thm:entropy}.~\cite{WuY16} suggests the value of $L=1.6\log k$ in their experiments. However, since we add further noise, we choose a single $L$ as follows: (i) Run  privatized \polyn\ for different $L$ values and distributions for $k=2000$, $\eps=1$, (b) Choose the value of $L$ that performs well across different distributions (See Figure~\ref{fig:Lcomparison}). We choose $L=1.2\cdot \log k$ from this, and  use it for all other experiments. 
To evaluate the sensitivity of \polyn, we computed the estimator's value at all possible input values, computed the sensitivity, (namely, $\Delta = \max_{\ham{\Xon}{\Yon}\le 1}|\polyn(\Xon)-\polyn(\Yon)|$), and added noise distributed as $\text{Lap}\left(0,\frac{\Delta}{\eps}\right)$.

%%%%%%%%%%%%%%%%
%\begin{figure*}
%\centering
%
%\subfigure[]{
%\begin{minipage}[b]{0.3\textwidth}
%\includegraphics[width=1\textwidth]{plot2/entropy_eps1k2000Ldistribution0.pdf}
%\includegraphics[width=1\textwidth]{}
%\end{minipage}
%}
%\subfigure[]{
%\begin{minipage}[b]{0.3\textwidth}
%\includegraphics[width=1\textwidth]{plot2/entropy_eps1k2000Ldistribution1.pdf}
%\includegraphics[width=1\textwidth]{plot2/entropy_eps1k2000Ldistribution4.pdf}
%\end{minipage}
%}
%\subfigure[]{
%\begin{minipage}[b]{0.3\textwidth}
%\includegraphics[width=1\textwidth]{plot2/entropy_eps1k2000Ldistribution2.pdf}
%\includegraphics[width=1\textwidth]{plot2/entropy_eps1k2000Ldistribution5.pdf}
%\end{minipage}
%}
%\caption{The RMSE comparison for the private Polynomial Approximation Estimator when degree $L$ takes different value, $k=2000$, $\eps=1$} 
%\label{fig:Lcomparison} 
%
%\end{figure*}
%%%%%%%%%%%%%%%%
\begin{figure*}[h]
\centering
\captionsetup[subfigure]{labelformat=empty}
\subfigure{
\includegraphics[width=0.15\textwidth]{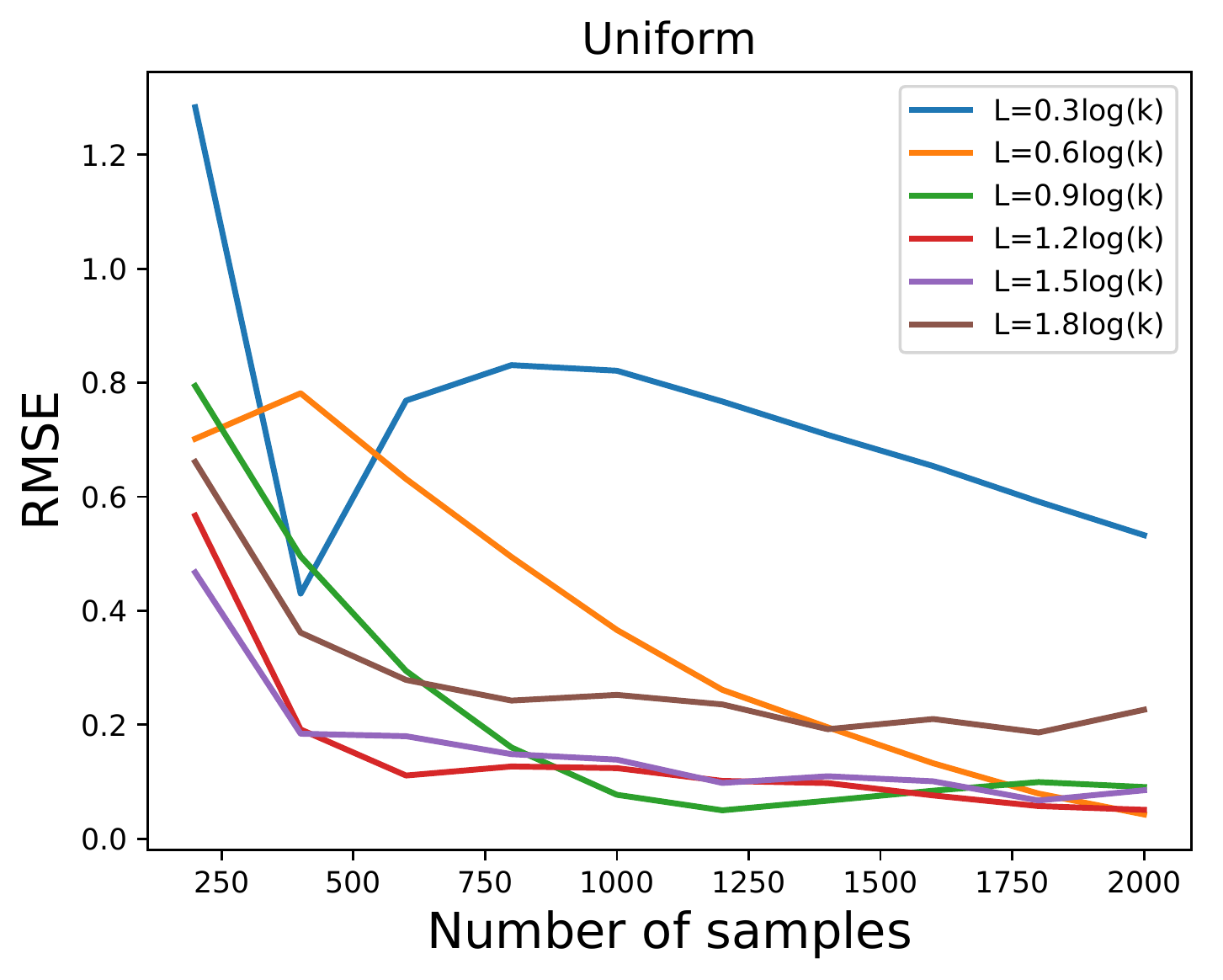}
}
\subfigure{
\includegraphics[width=0.15\textwidth]{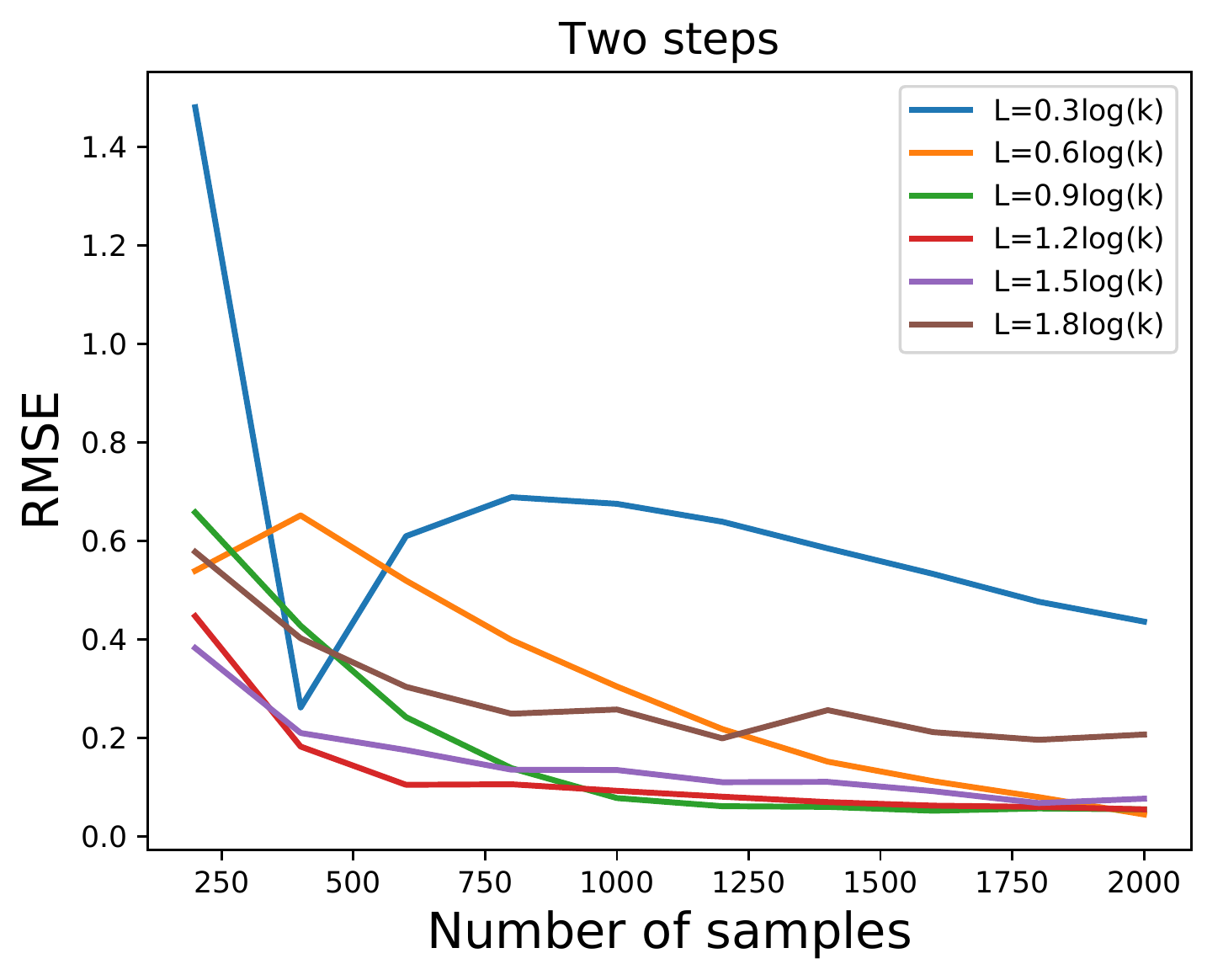}
}
\subfigure{
\includegraphics[width=0.15\textwidth]{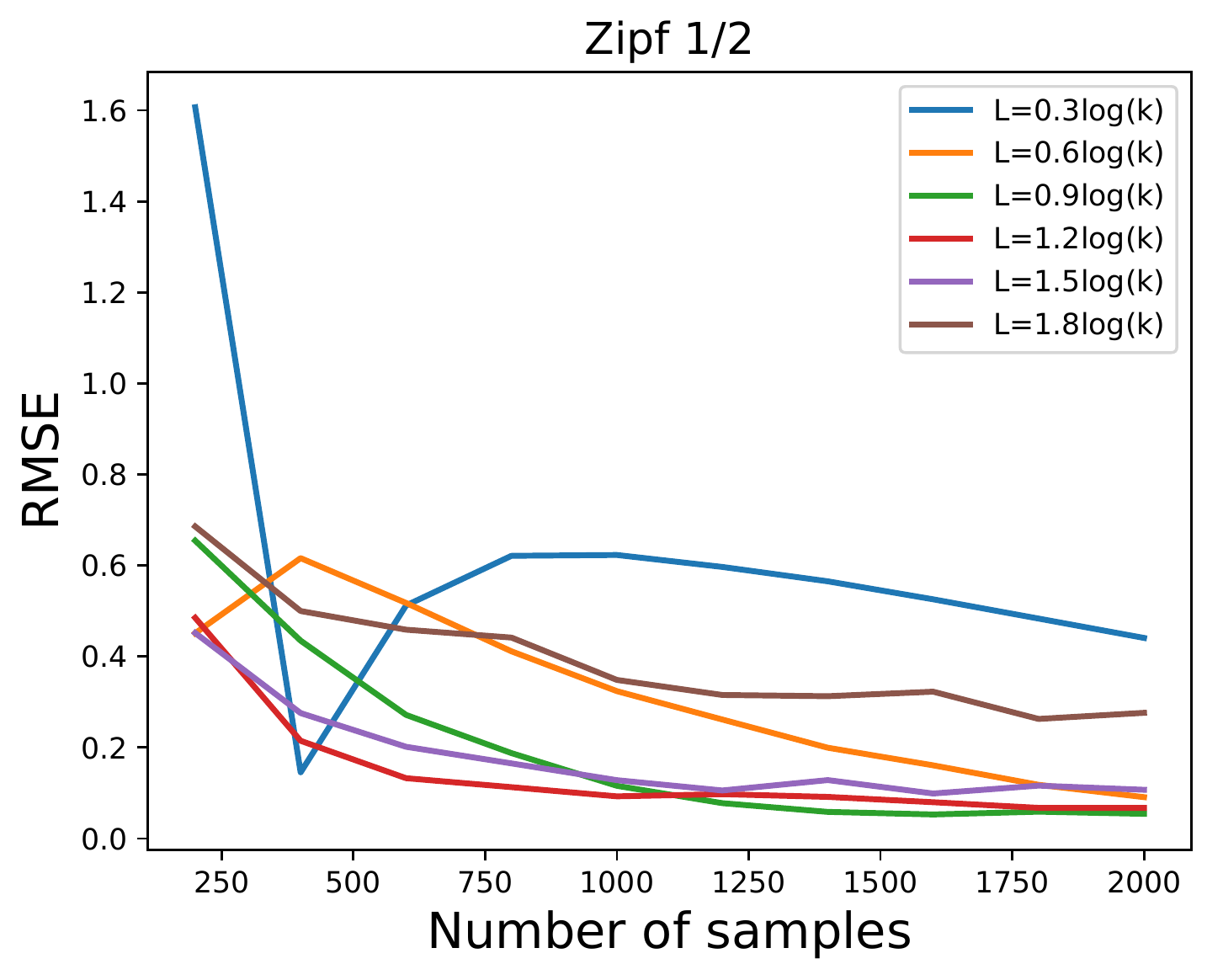}
}
\subfigure{
\includegraphics[width=0.15\textwidth]{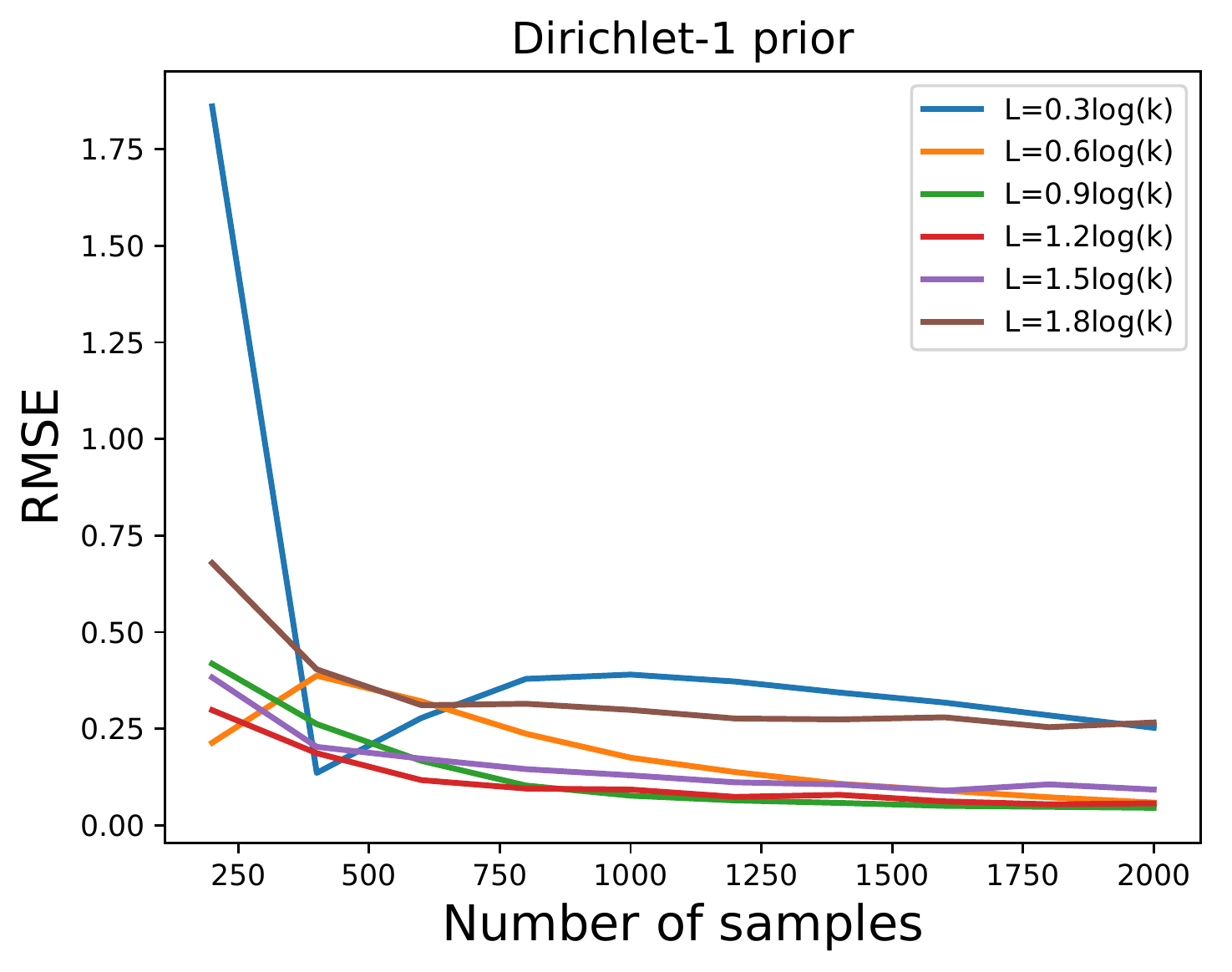}
}
\subfigure{
\includegraphics[width=0.15\textwidth]{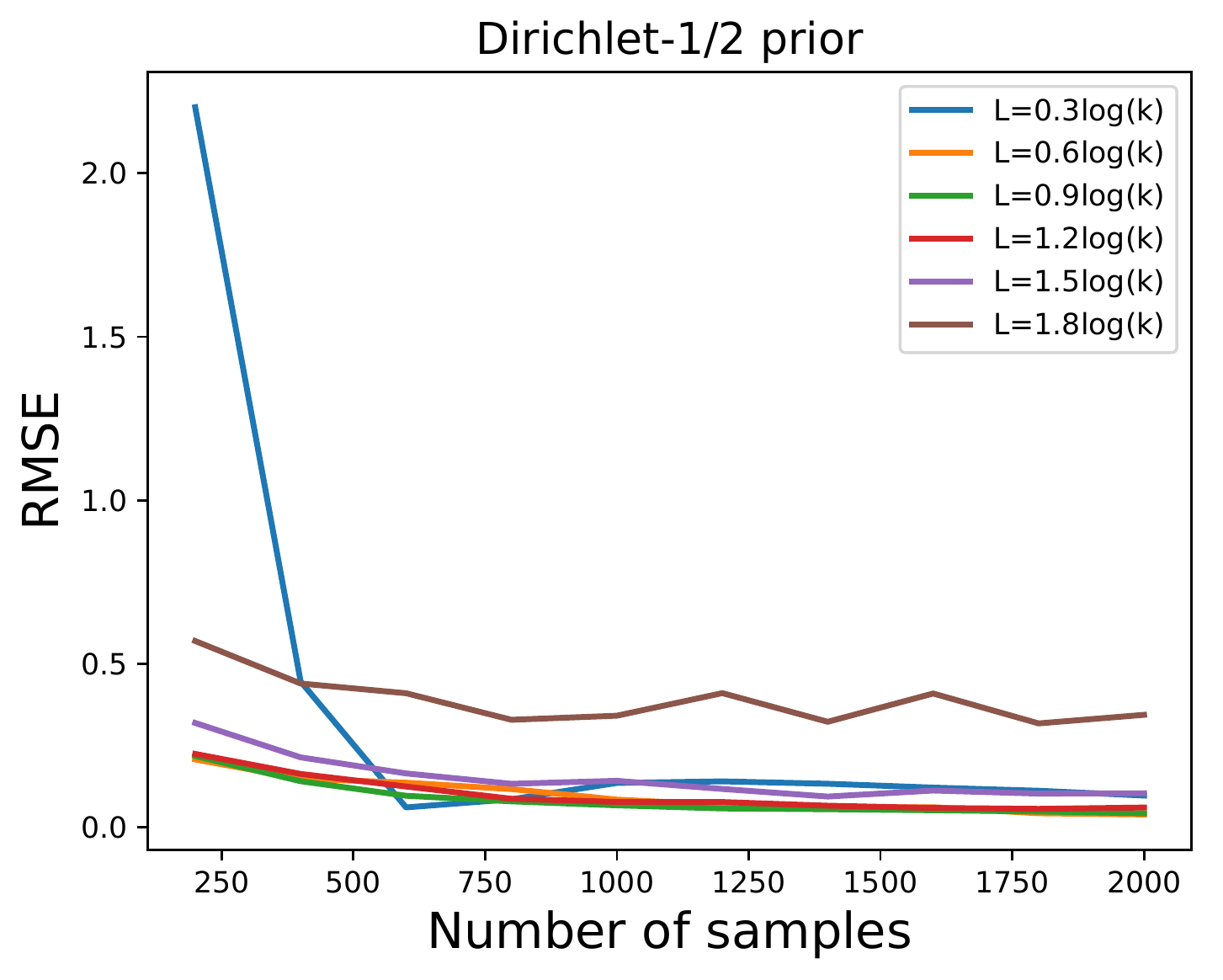}
}
\caption{RMSE comparison between private Polynomial Approximation Estimators for entropy with various values for degree $L$, $k=2000$, $\eps=1$. The degree $L$ represents a bias-variance tradeoff: a larger degree decreases the bias but increases the sensitivity, necessitating the addition of Laplace noise with a larger variance.}
\label{fig:Lcomparison} 

\end{figure*}
%%%%%%%%%%%%%%%%example

%The parameter for private polynomial approximation estimator must be carefully selected. 
%When the degree $L$ is large, we add a great amount of noise to the non-private estimator.
%But when $L$ is small, the non-private estimator is not at its best performance. 
%We tuned the parameter in the following way. 
%We set $k$ and $\eps$ fixed. 
%Then we run the private estimator for different degree $L$ and see which $L$ gains the best performance. Figure~\ref{fig:Lcomparison} illustrates the case when $k=2000$ and $\eps=1$. After the parameter was set, we run all the experiment in the same parameter, which was as follows. The degree is $1.2*\log(k)$, the right-end of approximation interval is $\frac{2.0*\log(k)}{n}$ and the threshold to apply polynomial estimator is $1.6*\log(k)$. 

The RMSE of various estimators for $k=1000$, and $\eps=1$ for various distributions  are illustrated in Figure~\ref{fig:entropy_k1000}.
The RMSE is averaged over 100 iterations in the plots.
%We set $k = 1000$ and $\eps = 1$ for this experiment, and observed the performance for various settings of the number of samples $n$, ranging from $k/10$ to $k$.

%\begin{figure*}
%\centering
%
%\subfigure[]{
%\begin{minipage}[b]{0.3\textwidth}
%\includegraphics[width=1\textwidth]{entropy_eps1k1000distribution0.pdf}
%\includegraphics[width=1\textwidth]{entropy_eps1k1000distribution3.pdf}
%\end{minipage}
%}
%\subfigure[]{
%\begin{minipage}[b]{0.3\textwidth}
%\includegraphics[width=1\textwidth]{entropy_eps1k1000distribution1.pdf}
%\includegraphics[width=1\textwidth]{entropy_eps1k1000distribution4.pdf}
%\end{minipage}
%}
%\subfigure[]{
%\begin{minipage}[b]{0.3\textwidth}
%\includegraphics[width=1\textwidth]{entropy_eps1k1000distribution2.pdf}
%\includegraphics[width=1\textwidth]{entropy_eps1k1000distribution5.pdf}
%\end{minipage}
%}
%\caption{Comparison of various estimators for the entropy, $k=1000$, $\eps =1$.} 
%\label{fig:entropy_k1000} 
%
%\end{figure*}
%%%%%%%%%%%%%%%%%%%%%%%%
\begin{figure*}[h]
\centering
\captionsetup[subfigure]{labelformat=empty}
\subfigure{
\includegraphics[width=0.15\textwidth]{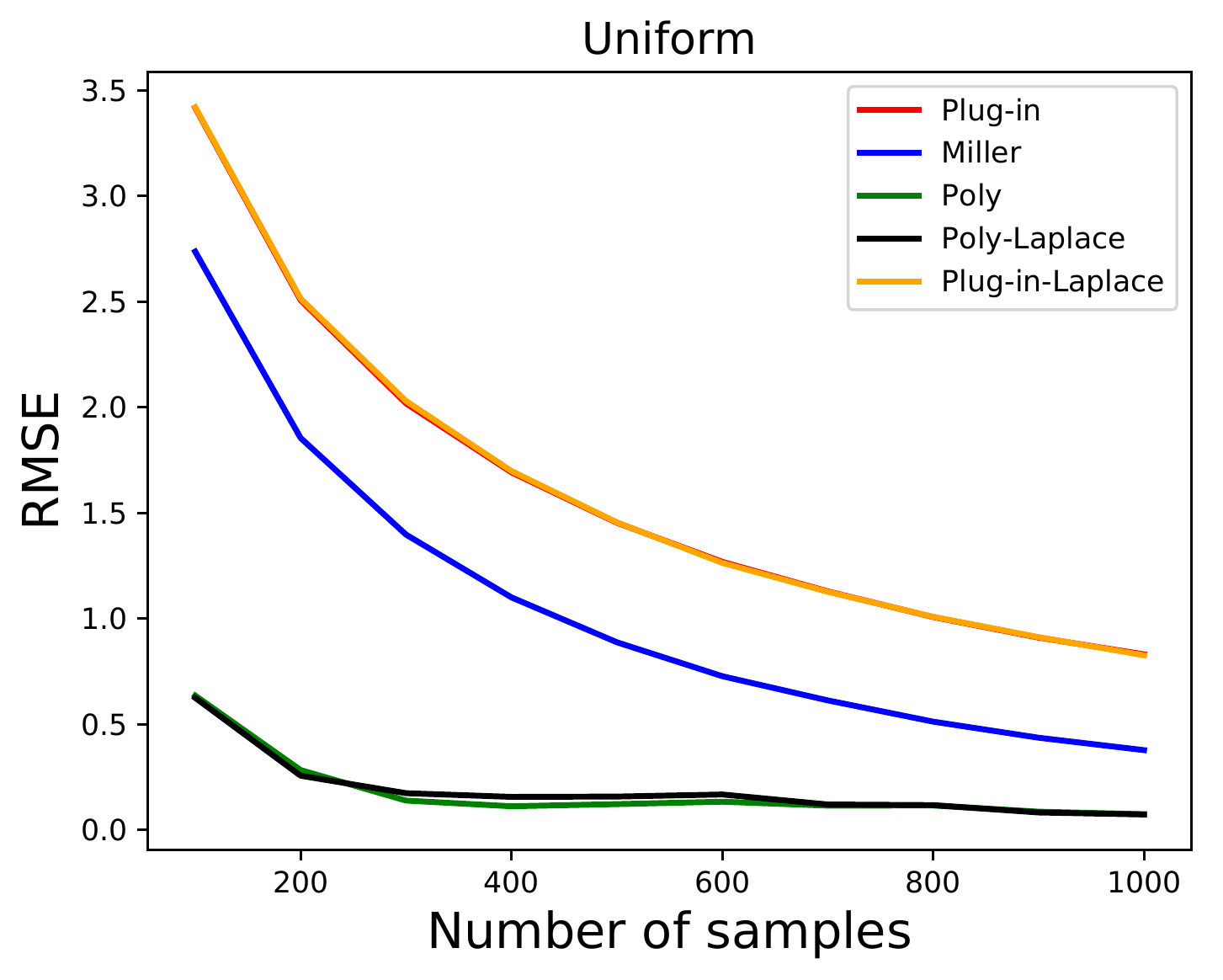}
}
\subfigure{
\includegraphics[width=0.15\textwidth]{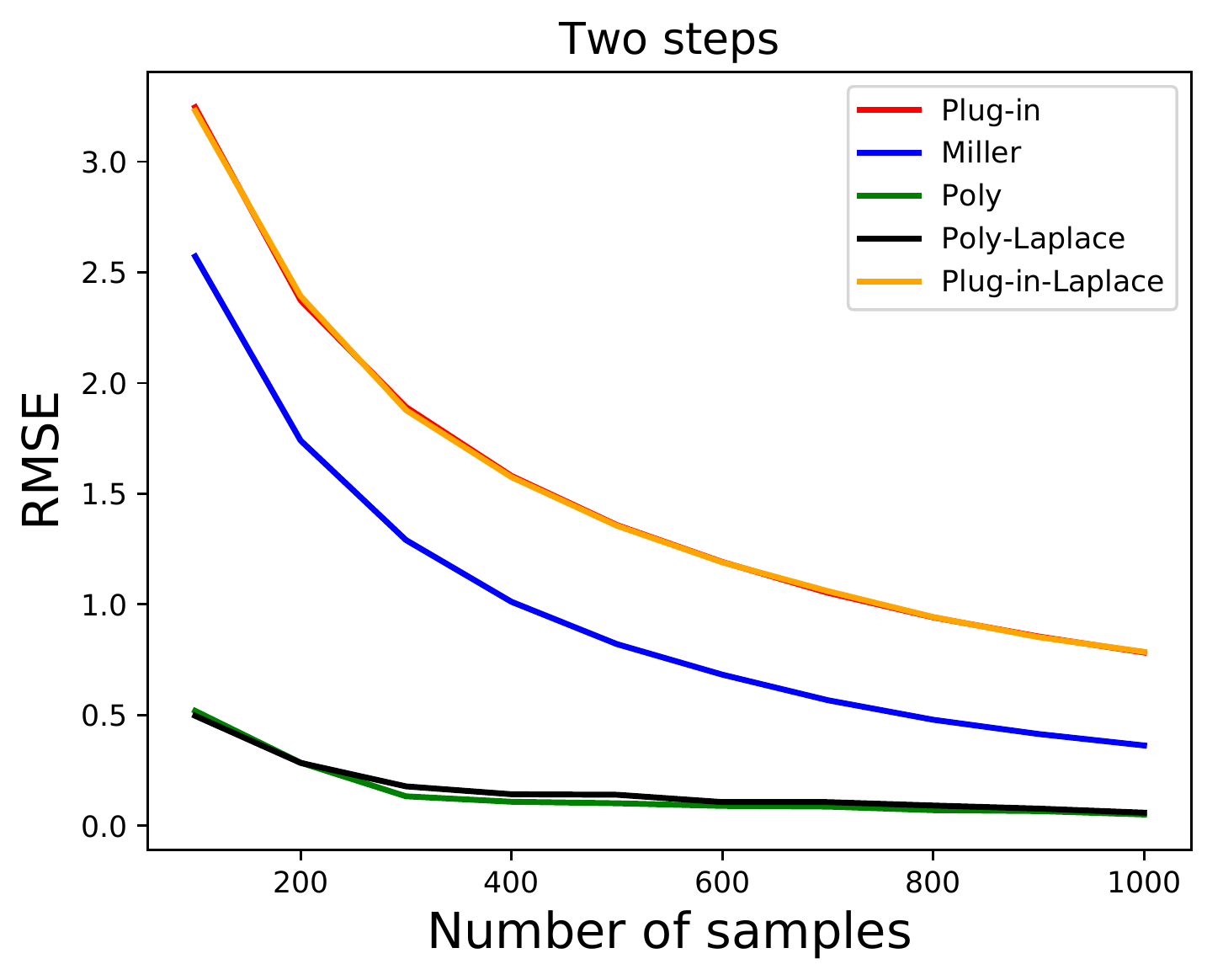}
}
\subfigure{
\includegraphics[width=0.15\textwidth]{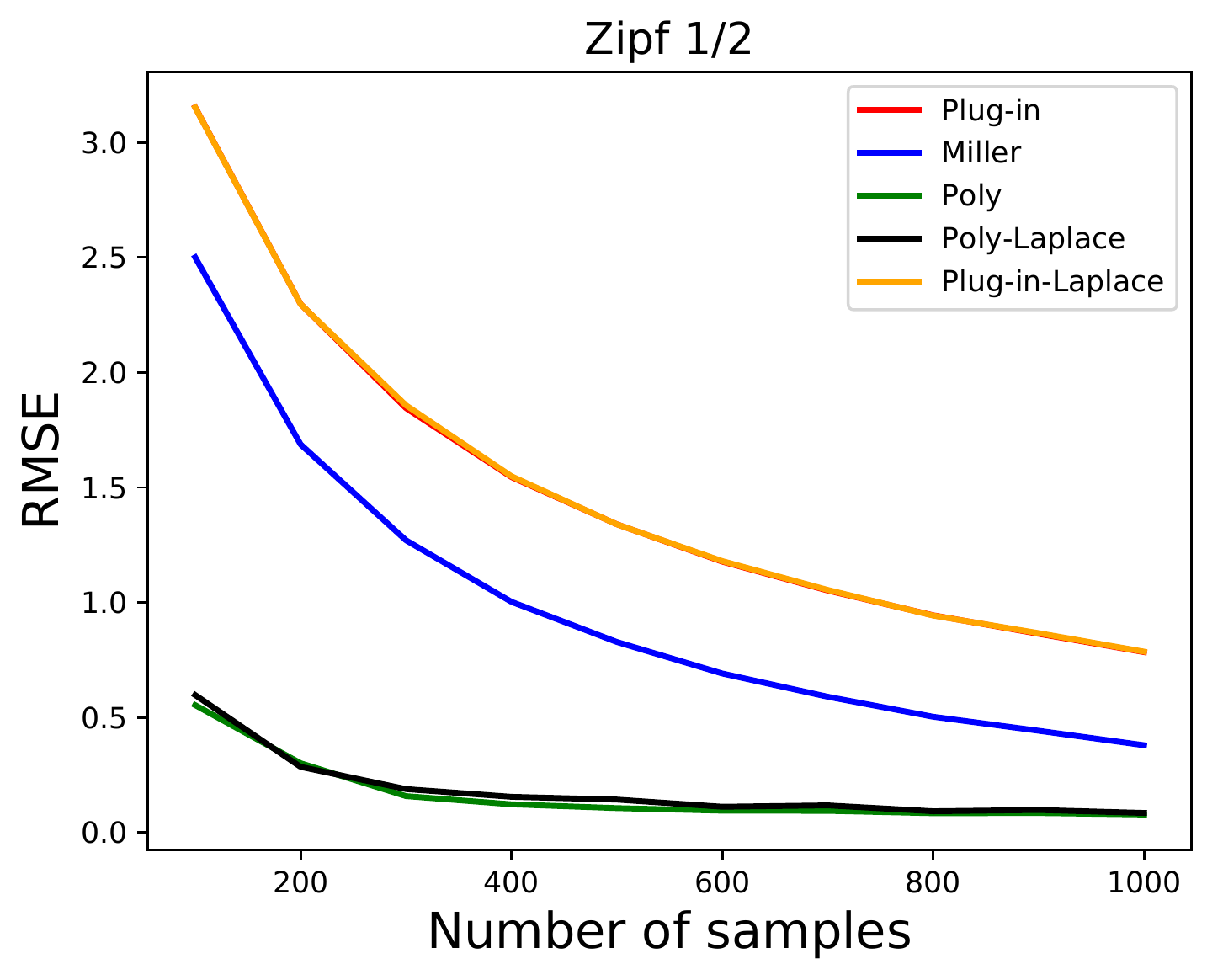}
}
\subfigure{
\includegraphics[width=0.15\textwidth]{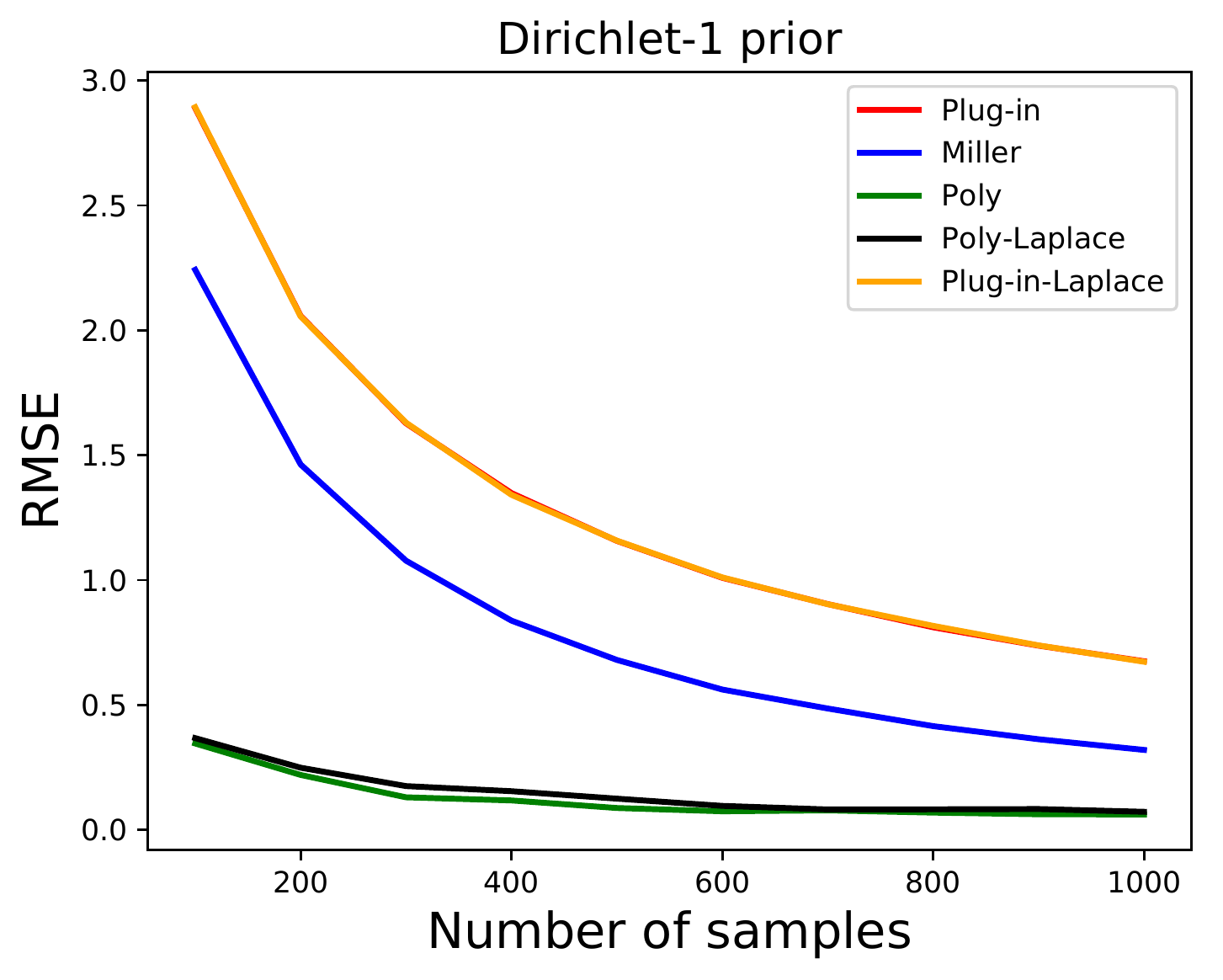}
}
\subfigure{
\includegraphics[width=0.15\textwidth]{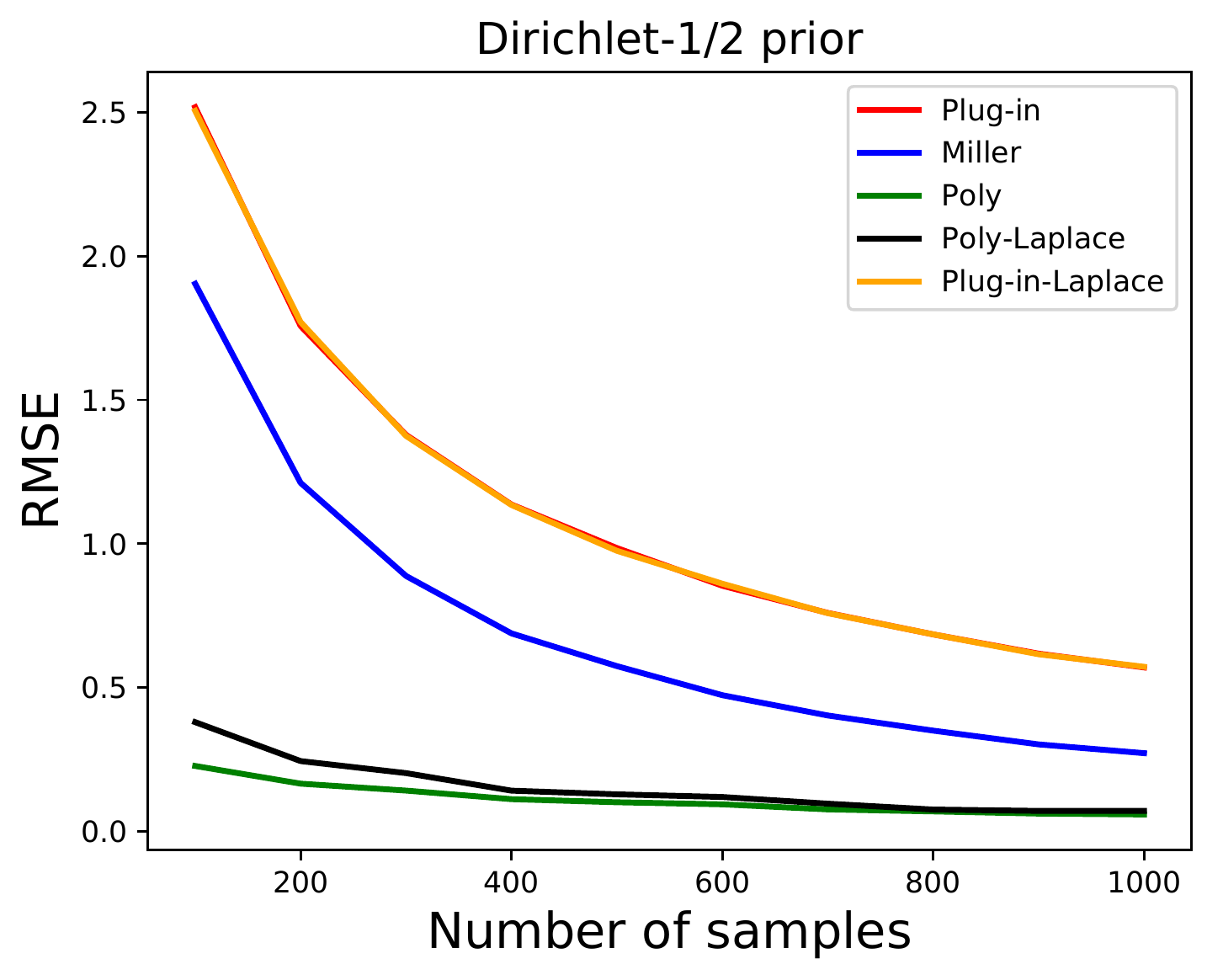}
}
\caption{Comparison of various estimators for entropy, $k=1000$, $\eps =1$.} 
\label{fig:entropy_k1000} 

\end{figure*}

We observe that the performance of our private-\polyn\ is near-indistinguishable from the non-private \polyn, particularly as the number of samples increases.
It also performs significantly better than all other alternatives, including the non-private Miller-Madow and the plug-in estimator.
The cost of privacy is minimal for several other settings of $k$ and $\varepsilon$, for which results appear in Section~\ref{sec:supp-experiments}.

\subsection{Support Coverage}
\label{sec:exp-coverage}

We investigate the cost of privacy for the problem of support coverage.
We provide a comparison between the Smoothed Good-Toulmin  estimator (SGT) of~\cite{OrlitskySW16} and our algorithm, which is a privatized version of their statistic (see Section~\ref{sec:coverage-ub}).
Our implementation is based on code provided by the authors of~\cite{OrlitskySW16}.
As shown in our theoretical results, the sensitivity of SGT  is at most $2 (1+e^r(t-1))$, necessitating the addition of Laplace noise with parameter  $2 (1+e^{r(t-1)})/\eps$.
Note that while the theory suggests we select the parameter $r = \log(1/\dist)$, $\dist$ is unknown.
We instead set $r = \frac{1}{2t} \log_{e} \frac{n(t+1)^2}{t-1}$, as previously done in~\cite{OrlitskySW16}.%\todo{Why is this the best value of $t$ to be chosen? Why not run a set of experiments where we find what value to use, like the previous part?}

\subsubsection{Evaluation on Synthetic Data}

In our synthetic experiments, we consider different distributions over different support sizes $\ab$. We generate $\ns = \ab/2$ samples, and then estimate the support coverage at $m=\ns \cdot t$. For large $t$, estimation is harder. 
Some results of our evaluation on synthetic are displayed in Figure~\ref{fig:synthetic_k20000}.
We compare the performance of SGT, and privatized versions of SGT with parameters $\eps = 1, 2,$ and $10$.
For this instance, we fixed the domain size $k = 20000$.
We ran the methods described above with $n = k/2$ samples, and estimated the support coverage at $m = nt$, for $t$ ranging from $1$ to $10$.
%We measured accuracy on the same set of distributions over the support as for entropy estimation.
The performance of the estimators is measured in terms of RMSE over 1000 iterations.
\begin{figure*}[h]
\centering

\subfigure{
\includegraphics[width=0.15\textwidth]{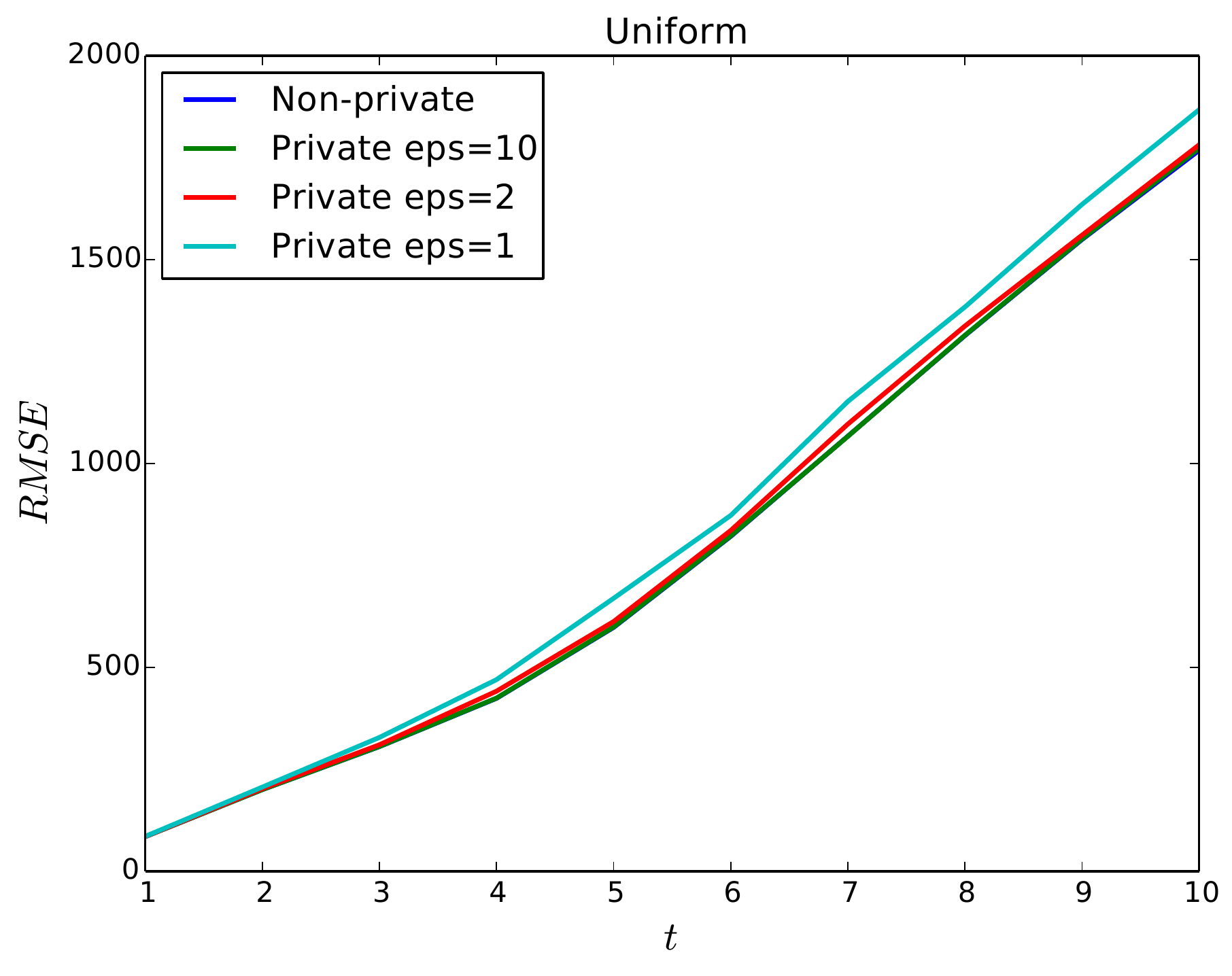}
}
\subfigure{
\includegraphics[width=0.15\textwidth]{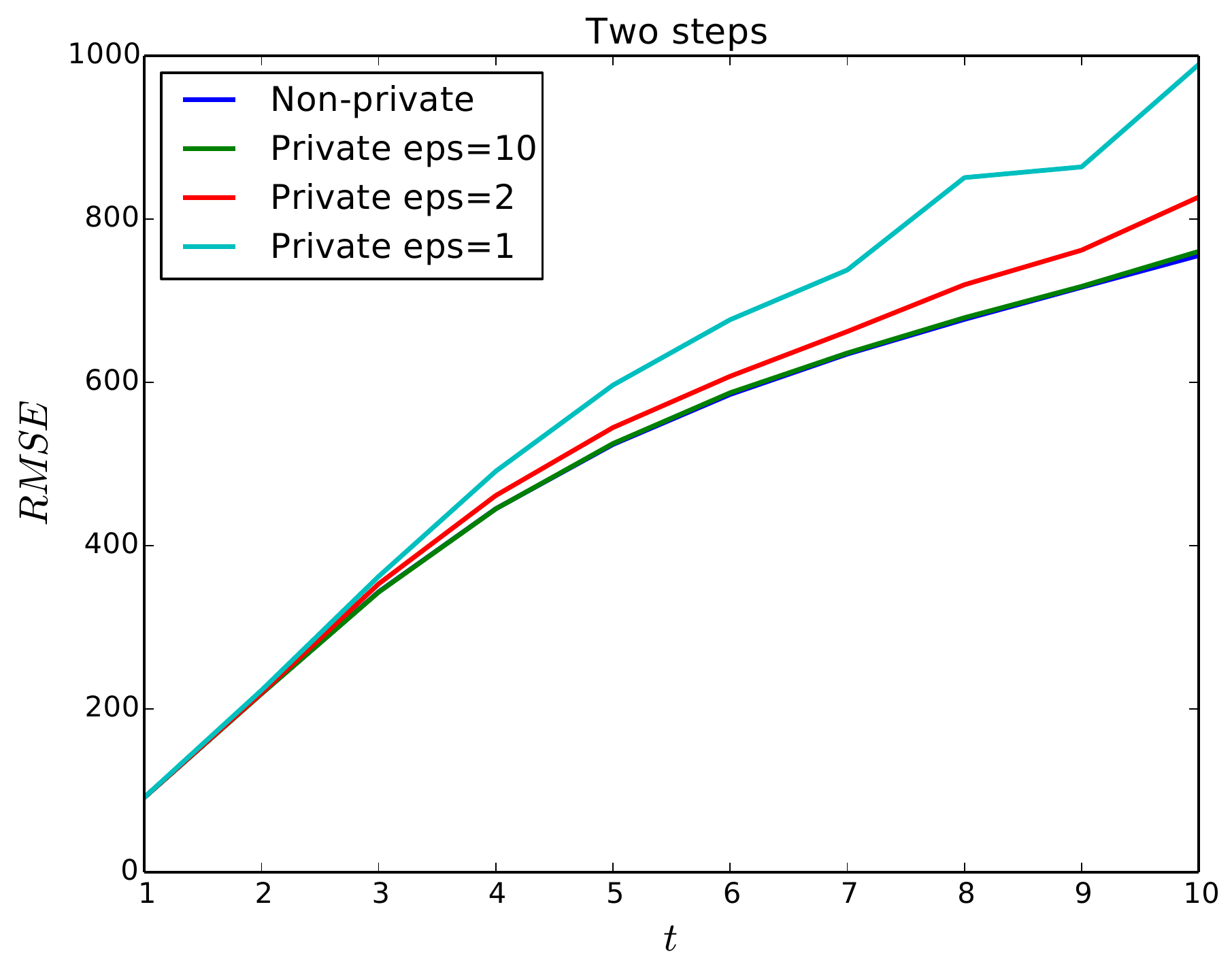}
}
\subfigure{
\includegraphics[width=0.15\textwidth]{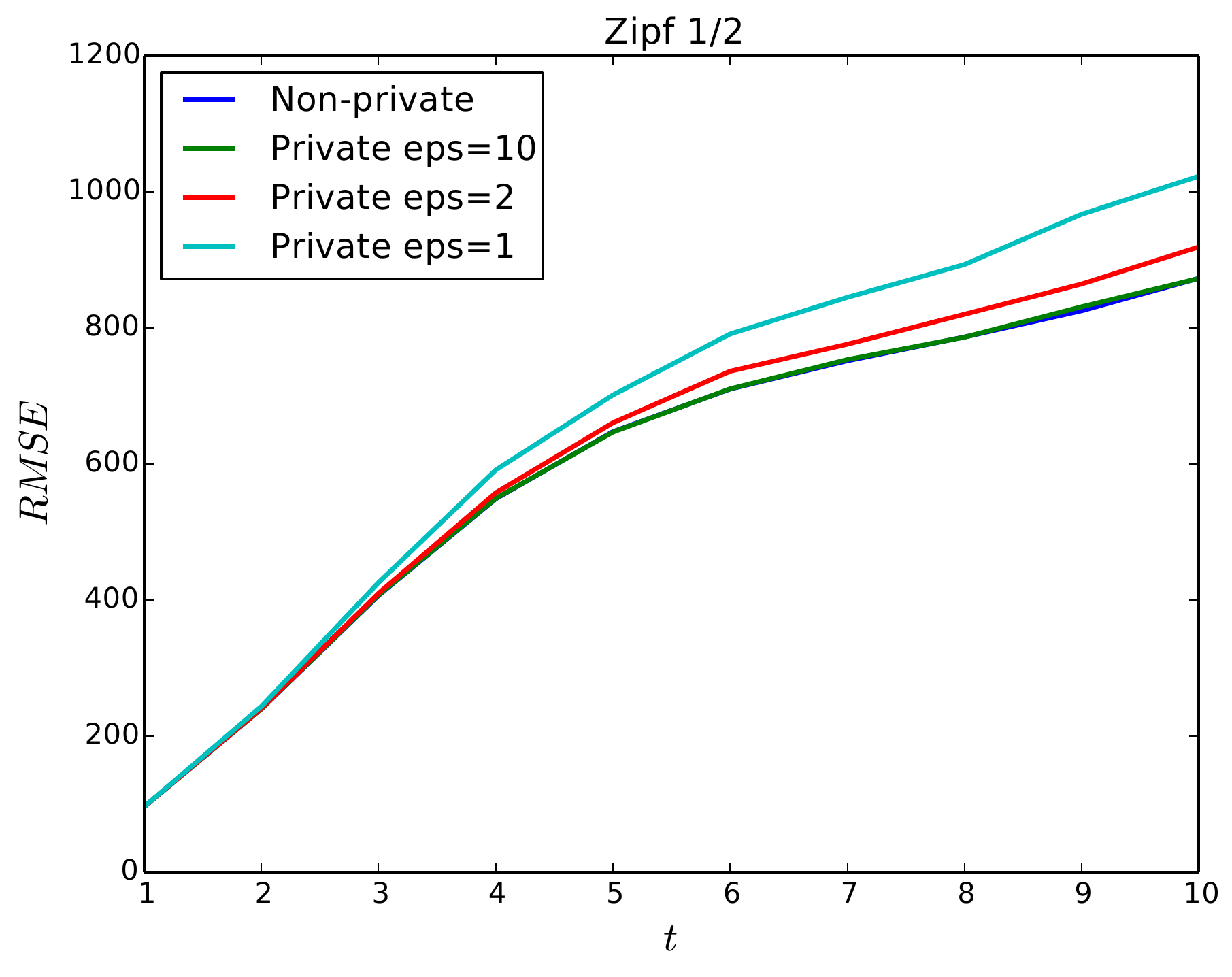}
}
\subfigure{
\includegraphics[width=0.15\textwidth]{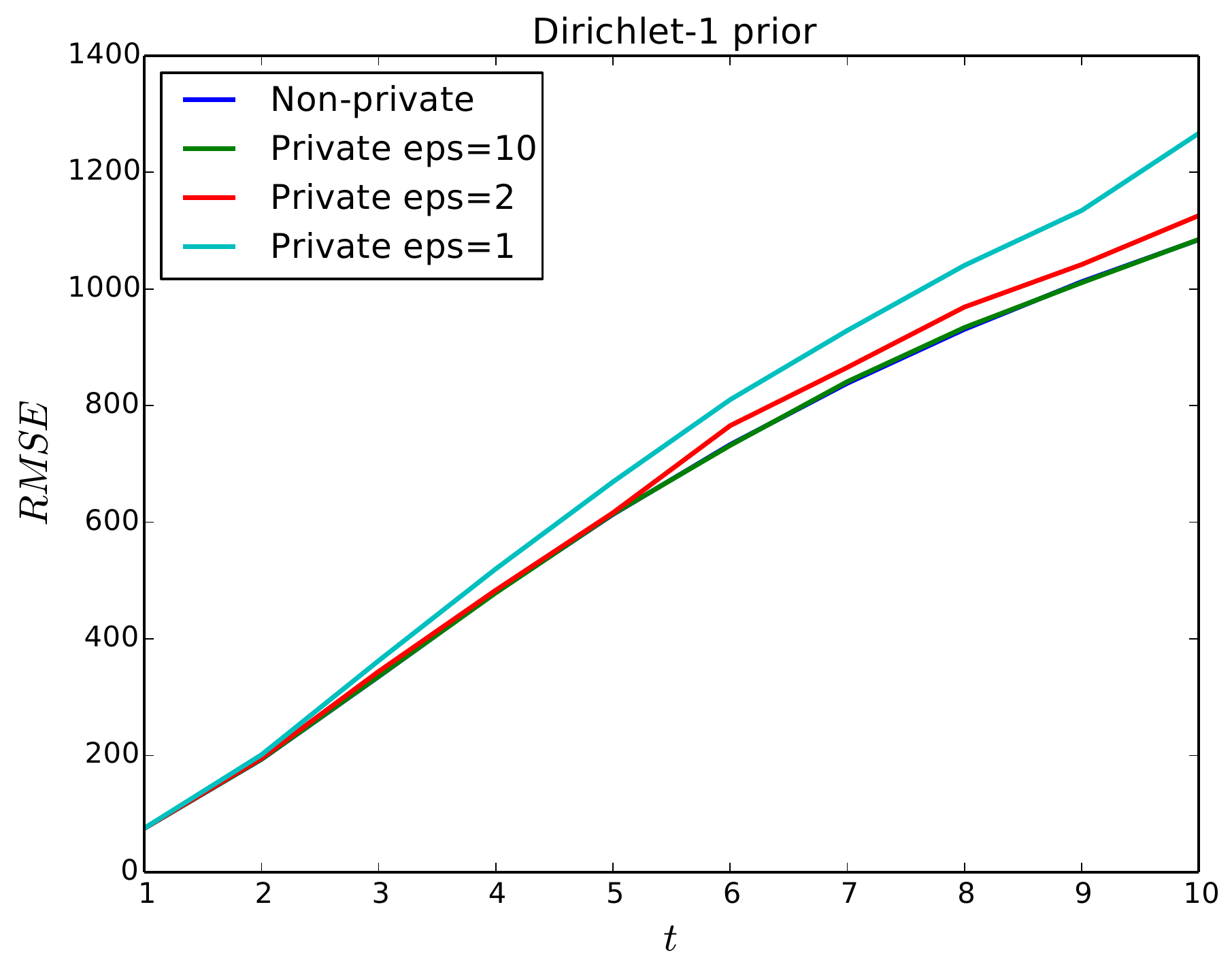}
}
\subfigure{
\includegraphics[width=0.15\textwidth]{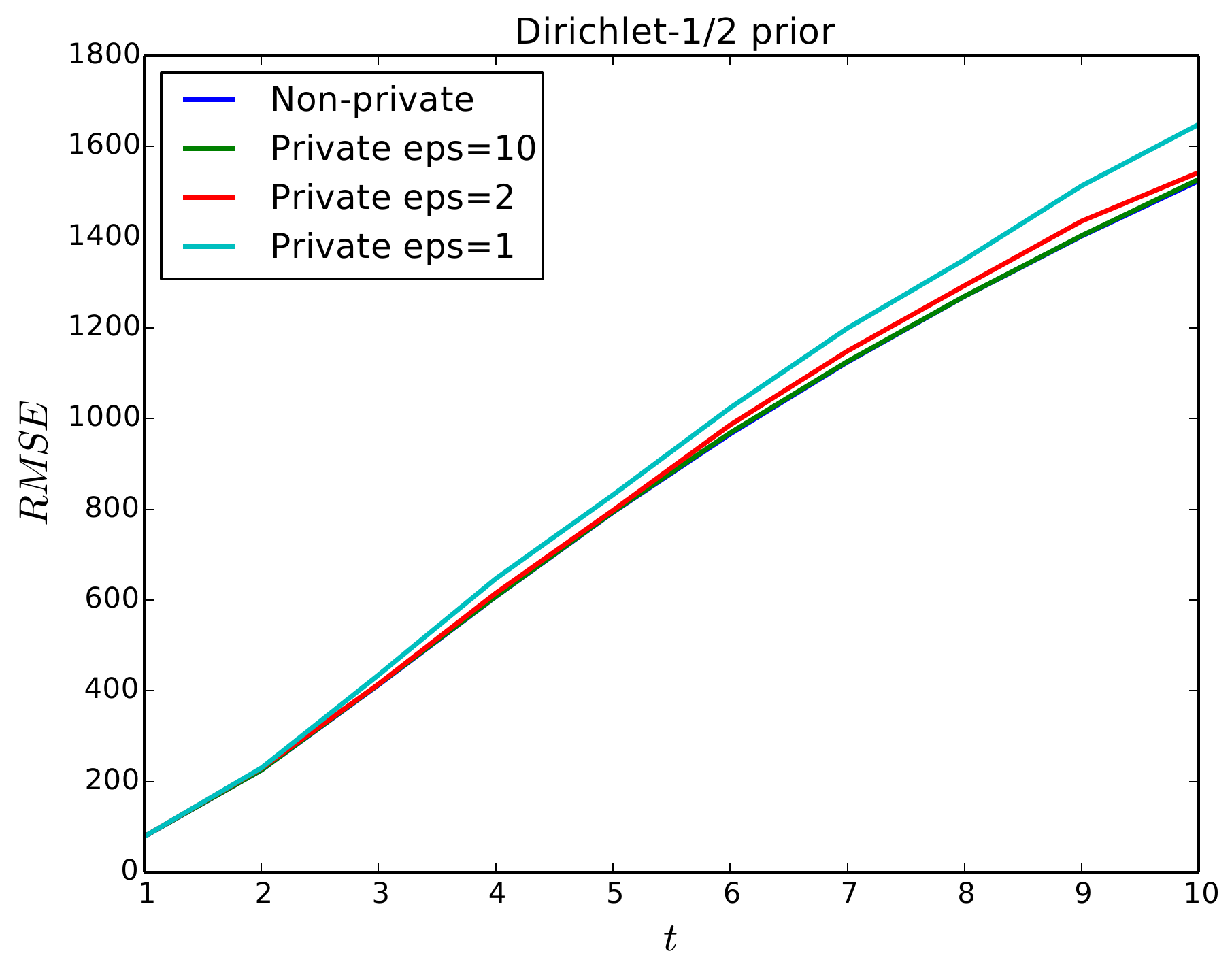}
}

\caption{Comparison between our private support coverage estimator with non-private SGT when $k=20000$} 
\label{fig:synthetic_k20000} 

\end{figure*}

We observe that, in this setting, the cost of privacy is relatively small for reasonable values of $\eps$.
This is as predicted by our theoretical results, where unless $\eps$ is extremely small (less than $1/k$) the non-private sample complexity dominates the privacy requirement.
However, we found that for smaller support sizes (as shown in Section~\ref{sec:supp-exp-coverage}), the cost of privacy can be  significant.
We provide an intuitive explanation for why no private estimator can perform well on such instances.
To minimize the number of parameters, we instead argue about the related problem of support-size estimation.
Suppose we are trying to distinguish between distributions which are uniform over supports of size $100$ and $200$.
We note that, if we draw $n = 50$ samples, the ``profile'' of the samples (i.e., the histogram of the histogram) will be very similar for the two distributions.
In particular, if one modifies only a few samples (say, five or six), one could convert one profile into the other.
In other words, these two profiles are almost-neighboring datasets, but simultaneously correspond to very different support sizes.
This pits the two goals of privacy and accuracy at odds with each other, thus resulting in a degradation in accuracy.

\subsubsection{Evaluation on Census Data and Hamlet}
We conclude with experiments for support coverage on two real-world datasets, 
the 2000 US Census data and the text of Shakespeare's play Hamlet, inspired by investigations in~\cite{OrlitskySW16} and~\cite{ValiantV17b}.
Our investigation on US Census data is also inspired by the fact that this is a setting where privacy is of practical importance, evidenced by the proposed adoption of differential privacy in the 2020 US Census~\cite{DajaniLSKRMGDGKKLSSVA17}.

The Census dataset contains a list of last names that appear at least 100 times.
Since the dataset is so oversampled, even a small fraction of the data is likely to contain almost all the names.
As such, we make the task non-trivial by subsampling $m_{total} = 86080$ individuals from the data, obtaining $20412$ distinct last names.
We then sample $n$ of the $m_{total}$ individuals without replacement and attempt to estimate the total number of last names.
Figure~\ref{fig:name} displays the RMSE over 100 iterations of this process. We observe that even with an exceptionally stringent privacy budget of $\eps = 0.5$, the performance is almost indistinguishable from the non-private SGT estimator.

%Then we move on to the experiments on the real data. We compare the performance of the estimators for the following two data sets. The first data set considers the 2000 United States Census ~\cite{}, which lists all US last names corresponding to at least 100 individuals. Because the data set is too large, even a small fraction of data contains all the names. So firstly we subsample the data with $mtotal = 86080$ and obtain a list of 20412 distinct last names. In this experiment, the iteration time is 100. In each iteration, we randomly sample $n$ of the $mtotal$ names without replacement, predict the number of unseen names in the remaining $mtotal-n$ ones, and add it to those observed. Fig~\ref{fig:name} shows the normalized RMSE of the estimation as a function of fraction of seen names, where the normalized RMSE is gained using the same way as the previous synthetic experiment. The plot shows the comparison between the private and non-private estimators under different $\eps$. We can see that our estimator is highly competitive. Even when $\eps=0.5$, which is quite a stringent privacy level, the private estimator performs quite similarly with the SGT estimator.

\begin{figure}[h]
\centering 
\includegraphics [scale=0.3]{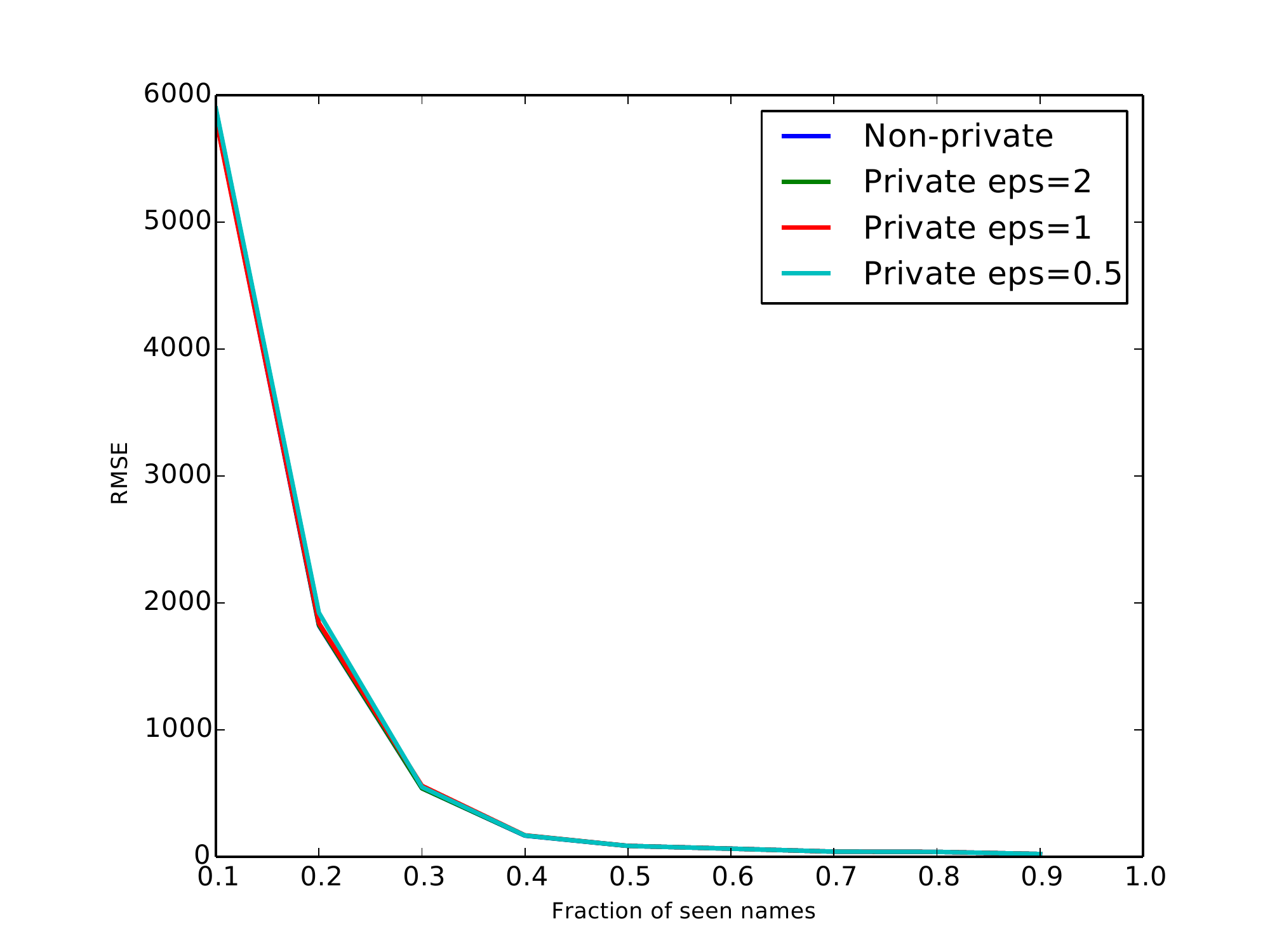}
\caption{Comparison between our private support coverage  estimator with the SGT on Census Data.}
\label{fig:name} 
\end{figure}

The Hamlet dataset has $m_{total} = 31,999$ words, of which 4804 are distinct.
Since the distribution is not as oversampled as the Census data, we do not need to subsample the data.
Besides this difference, the experimental setup is identical to that of the Census dataset.
Once again, as we can see in Figure~\ref{fig:hamlet}, we get near-indistinguishable performance between the non-private and private estimators, even for very small values of $\eps$.
Our experimental results demonstrate that privacy is realizable in practice, with particularly accurate performance on real-world datasets. 

\begin{figure}[h]
\centering 
\includegraphics [scale=0.3]{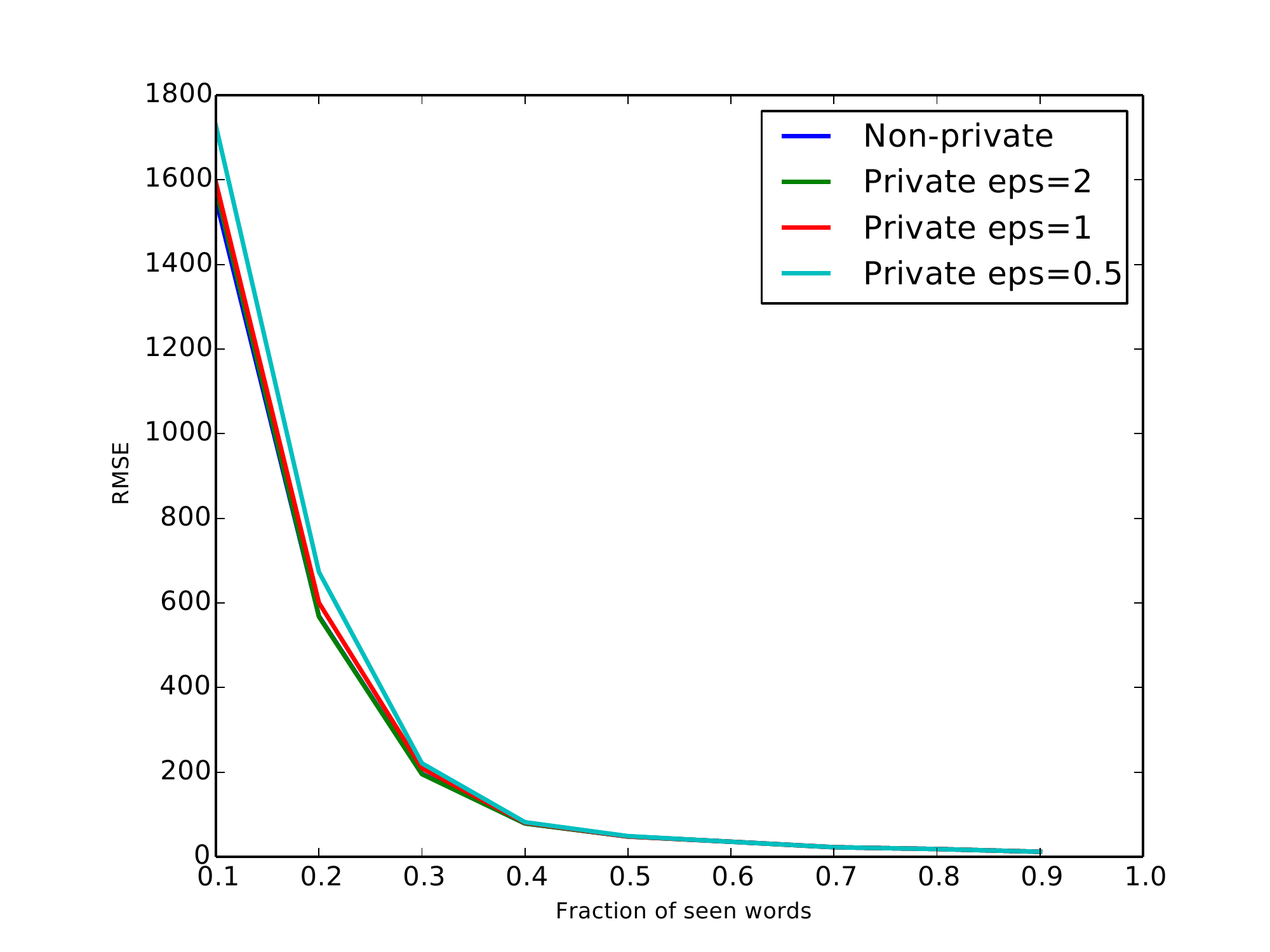}
\caption{Comparison between our private support coverage estimator with the SGT on Hamlet.}
\label{fig:hamlet} 
\end{figure}

\section{Additional Experimental Results}
\label{sec:supp-experiments}
This section contains additional plots of our synthetic experimental results.
Section~\ref{sec:supp-exp-entropy} contains experiments on entropy estimation, while Section~\ref{sec:supp-exp-coverage} contains experiments on estimation of support coverage.
\subsection{Entropy Estimation}
\label{sec:supp-exp-entropy}
We present four more plots of our synthetic experimental results for entropy estimation.
Figures~\ref{fig:entropy-k100-eps1} and~\ref{fig:entropy-k100-eps2} are on a smaller support of $k =100$, with $\eps = 1$ and $2$, respectively.
Figures~\ref{fig:entropy-k1000-eps05} and~\ref{fig:entropy-k1000-eps2} are on a support of $k=1000$, with $\eps = 0.5$ and $2$.
%%%%%%%%%%%%%%%
\begin{figure*}
\centering
\subfigure[]{
\begin{minipage}[b]{0.3\textwidth}
\includegraphics[width=1\textwidth]{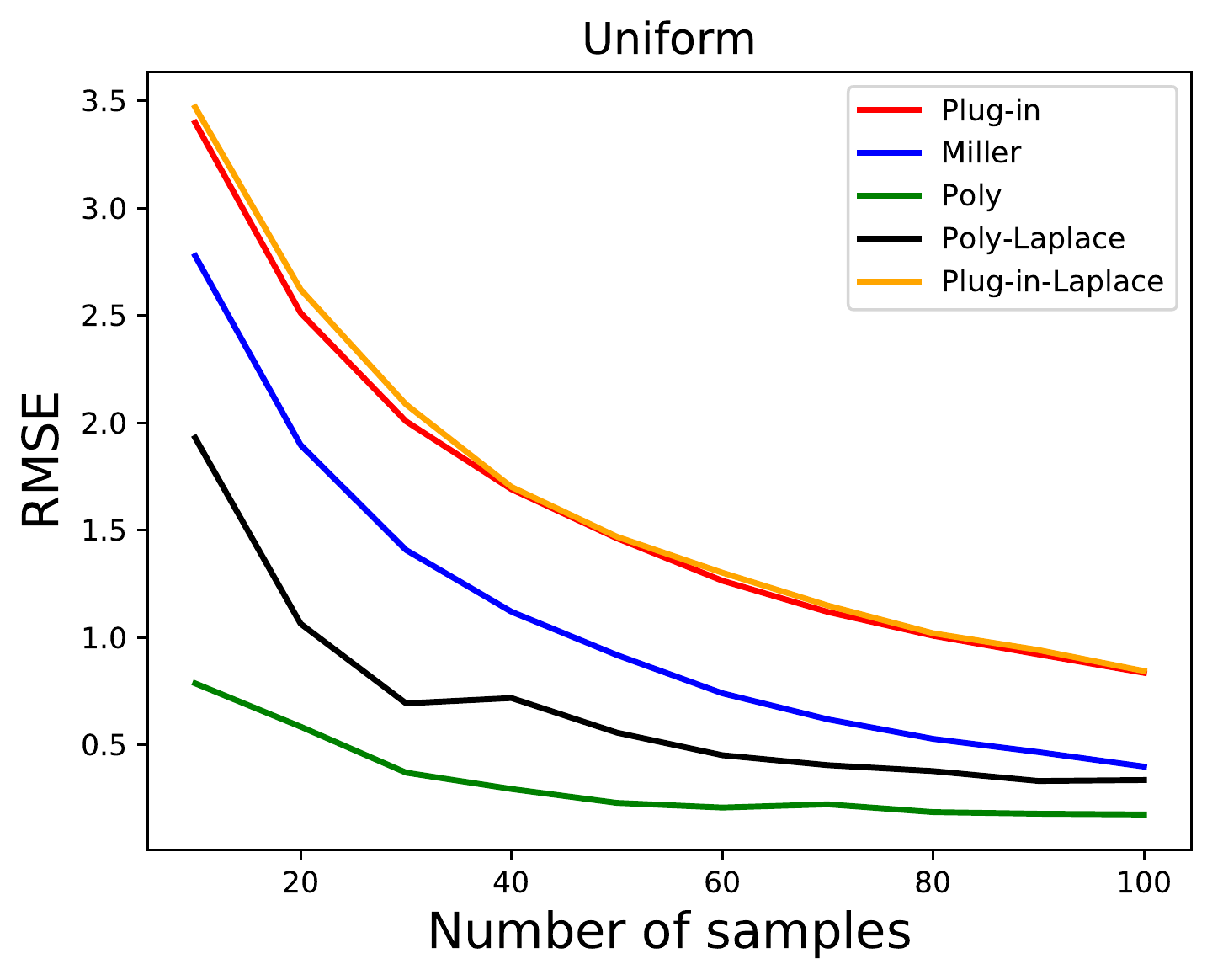}
\includegraphics[width=1\textwidth]{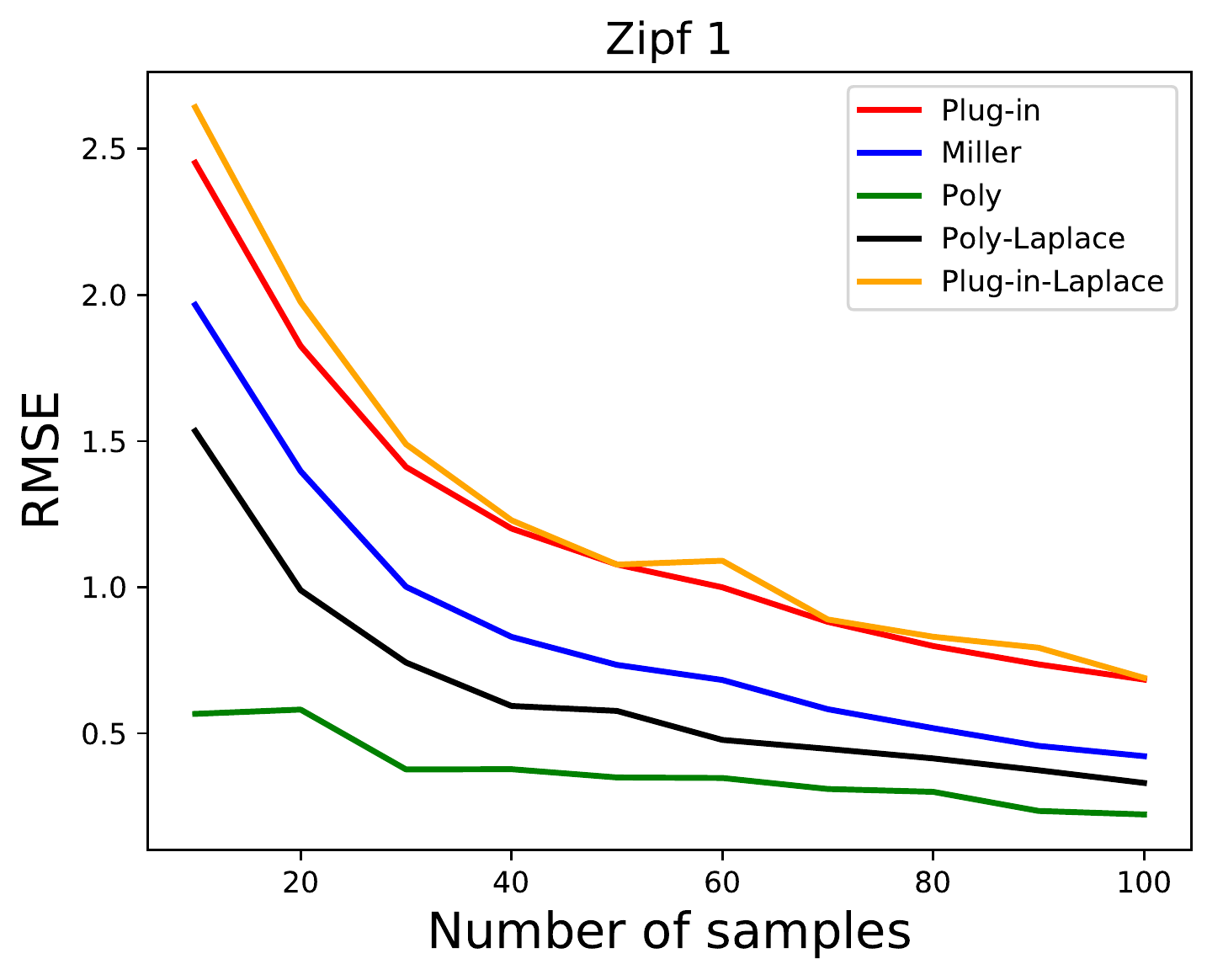}
\end{minipage}
}
\subfigure[]{
\begin{minipage}[b]{0.3\textwidth}
\includegraphics[width=1\textwidth]{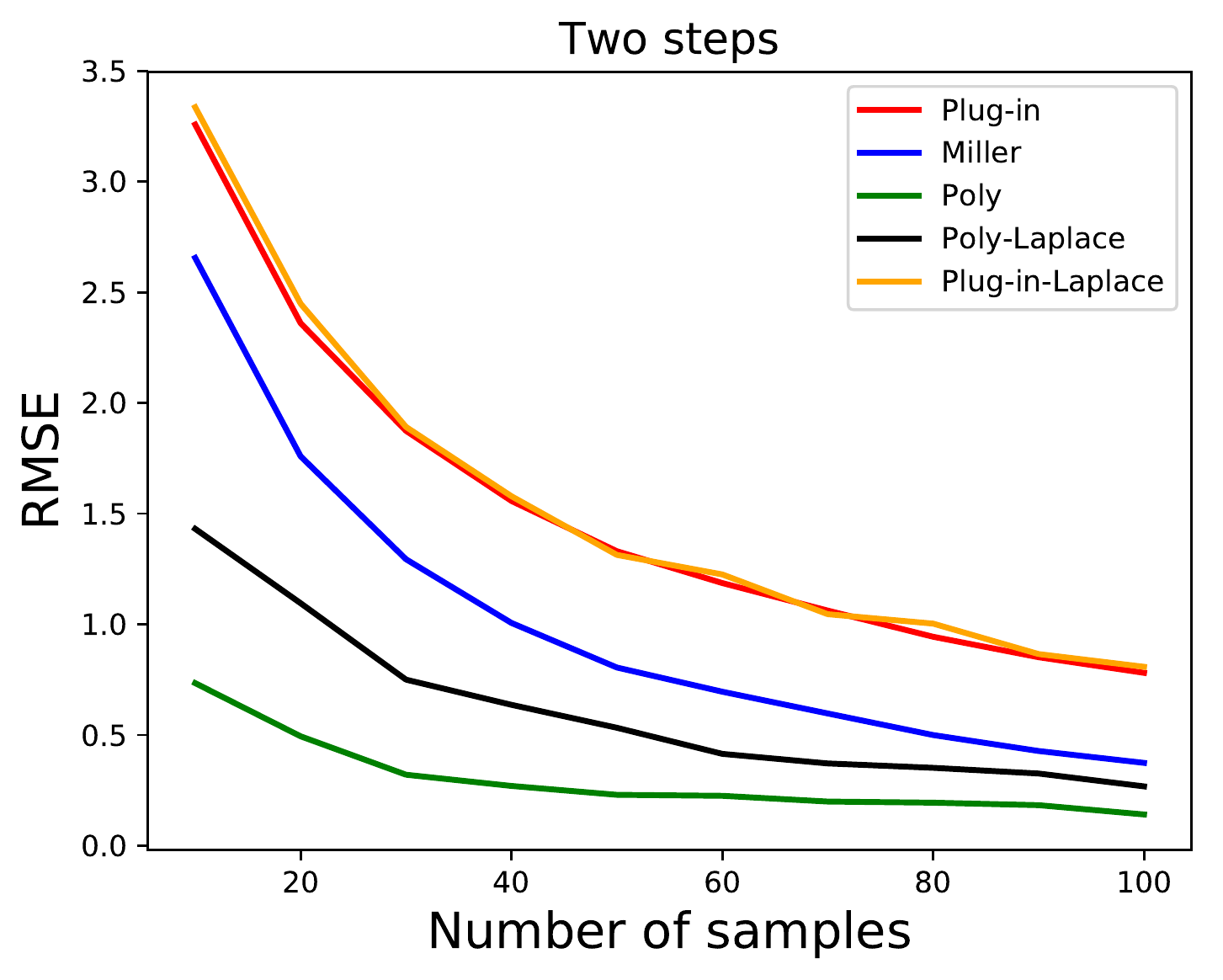}
\includegraphics[width=1\textwidth]{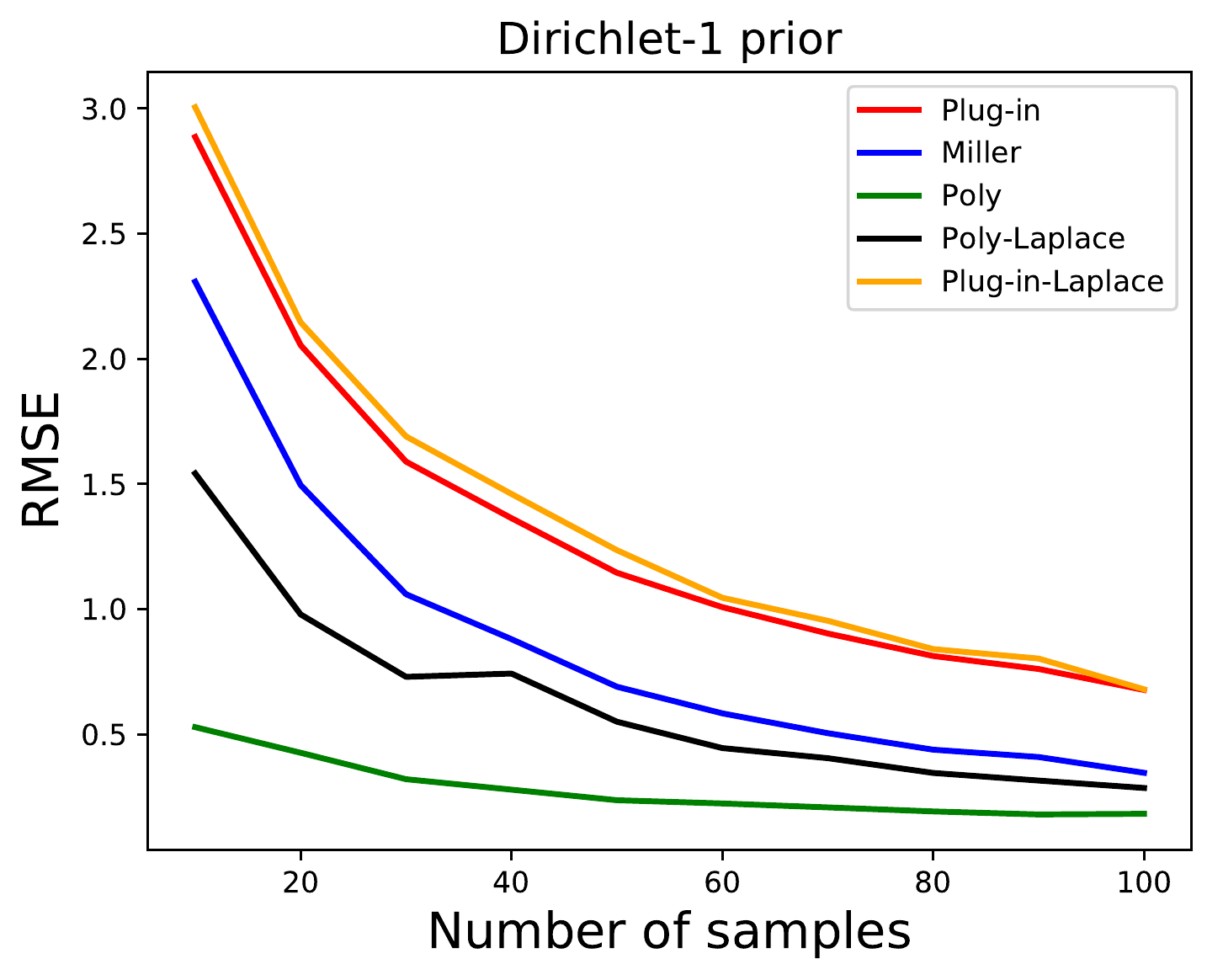}
\end{minipage}
}
\subfigure[]{
\begin{minipage}[b]{0.3\textwidth}
\includegraphics[width=1\textwidth]{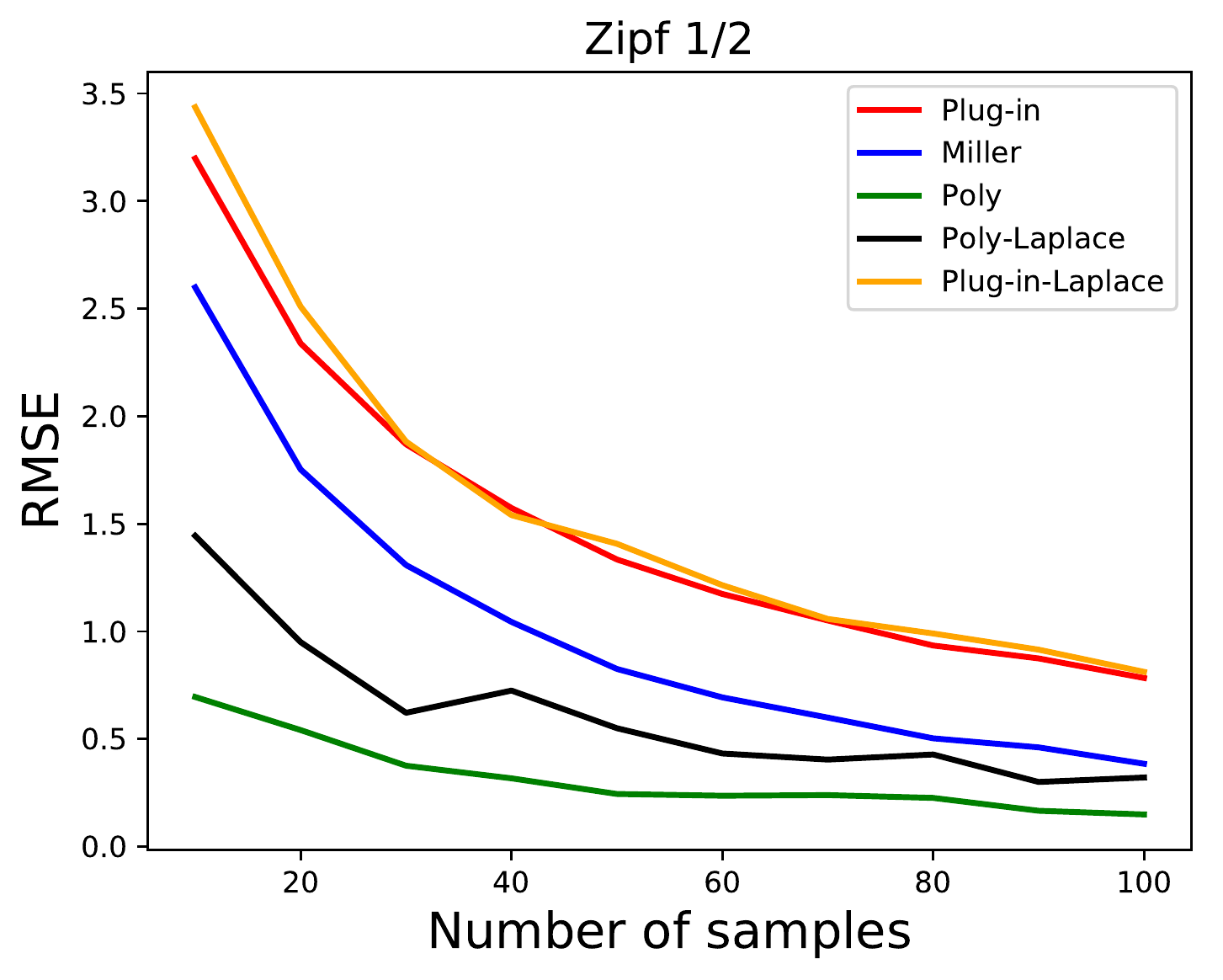}
\includegraphics[width=1\textwidth]{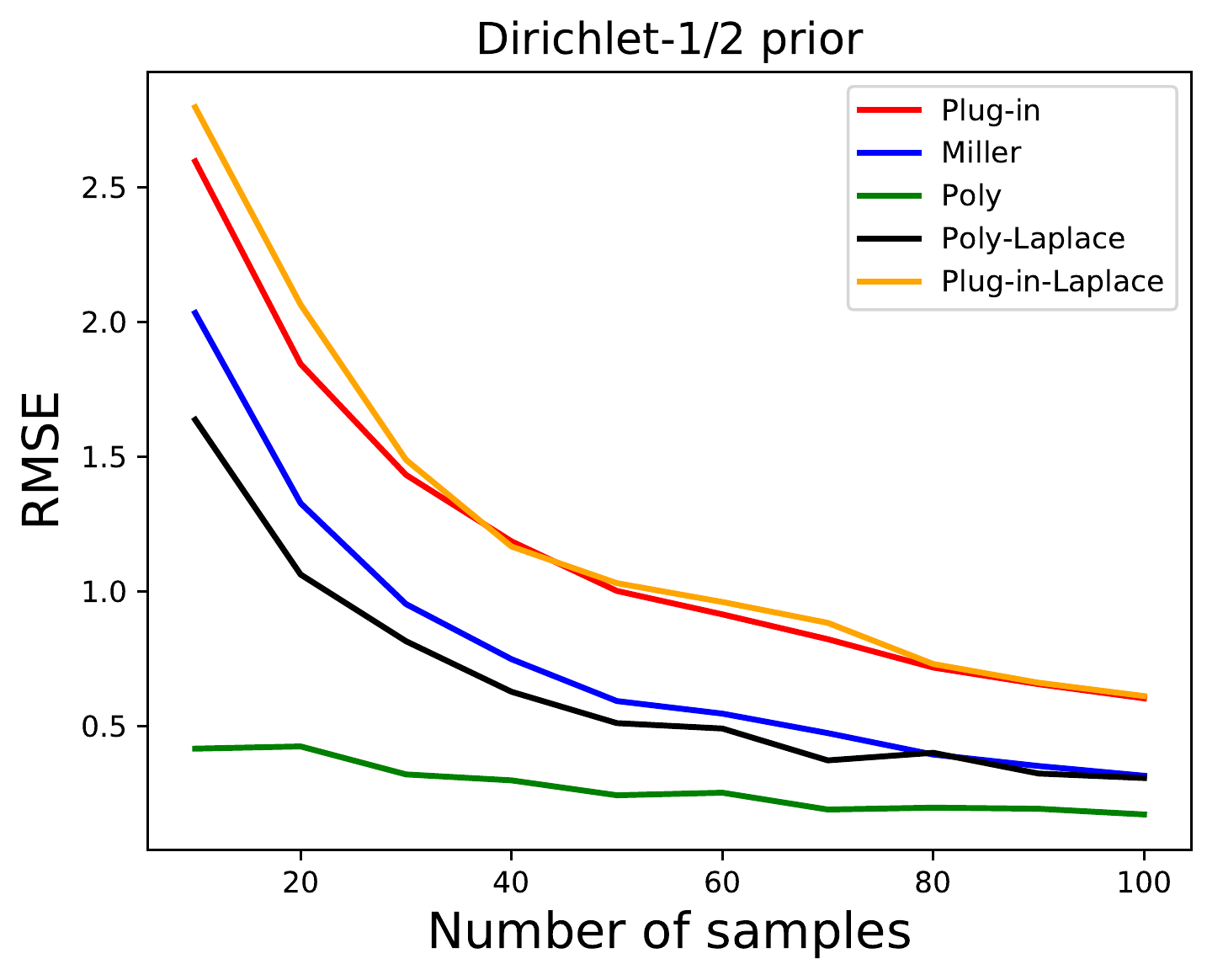}
\end{minipage}
}
\caption{Comparison of various estimators for the entropy, $k=100$, $\eps =1$.} 
\label{fig:entropy-k100-eps1}
\end{figure*}
%%%%%%%%%%%%%%%
\begin{figure*}
\centering
\subfigure[]{
\begin{minipage}[b]{0.3\textwidth}
\includegraphics[width=1\textwidth]{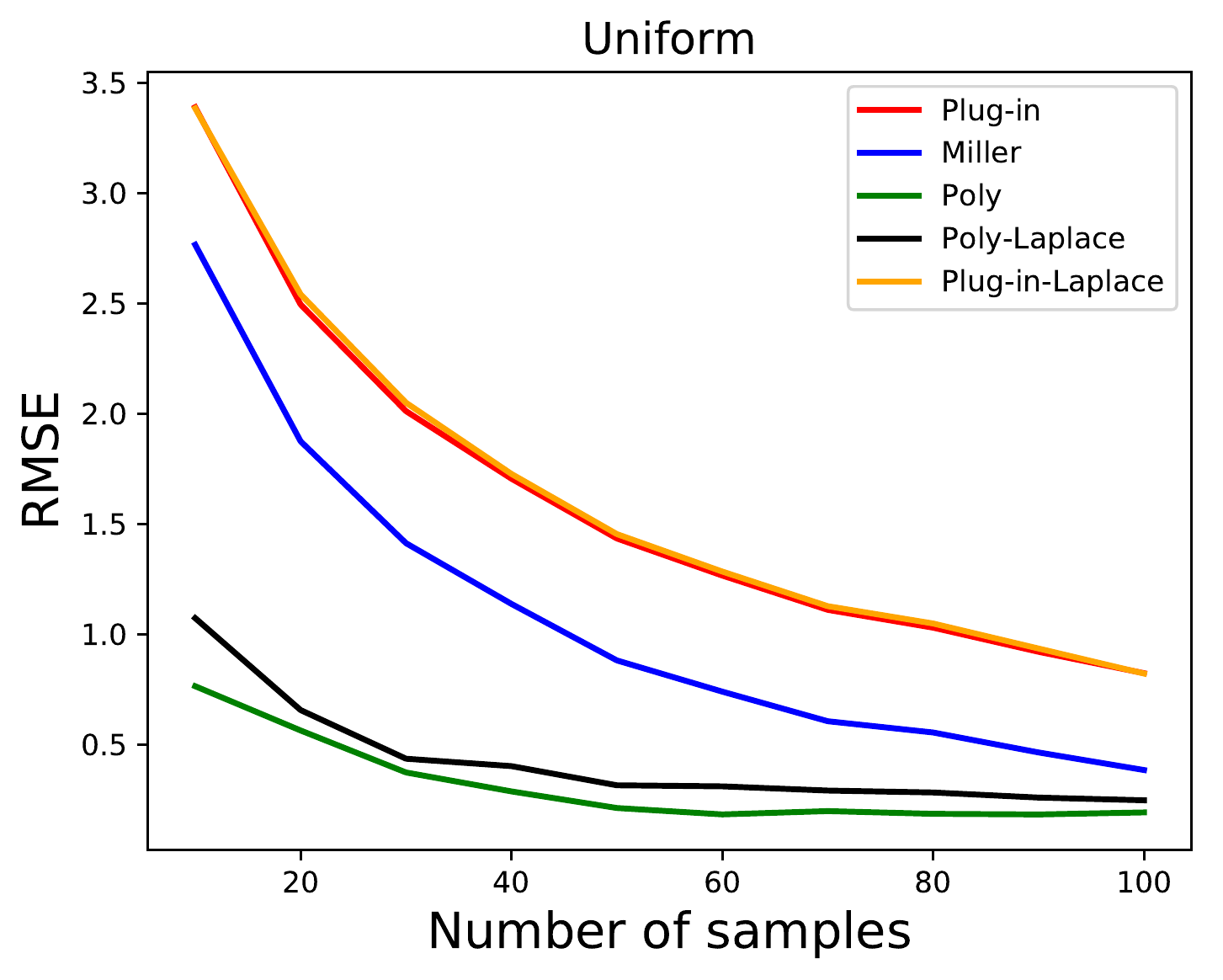}
\includegraphics[width=1\textwidth]{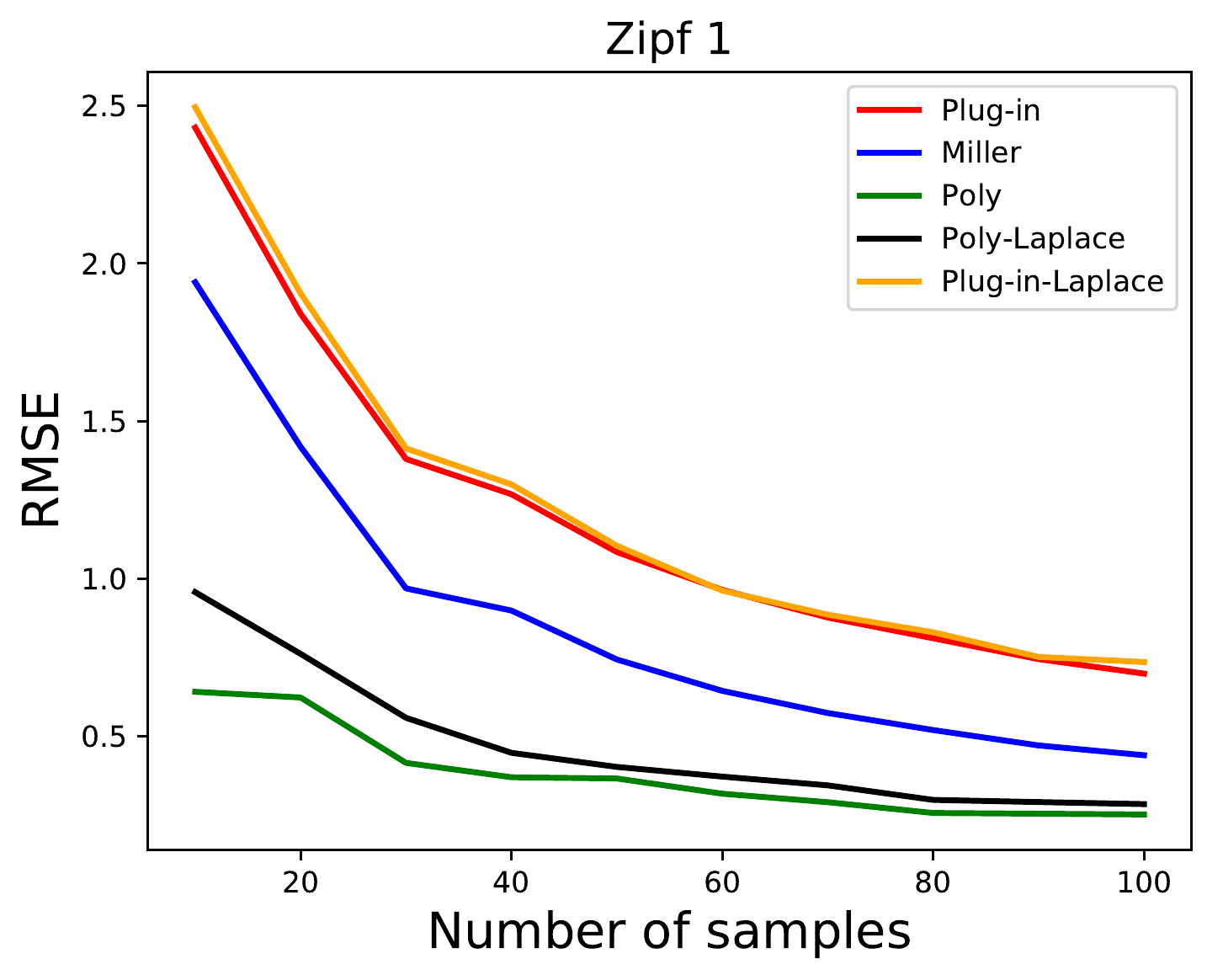}
\end{minipage}
}
\subfigure[]{
\begin{minipage}[b]{0.3\textwidth}
\includegraphics[width=1\textwidth]{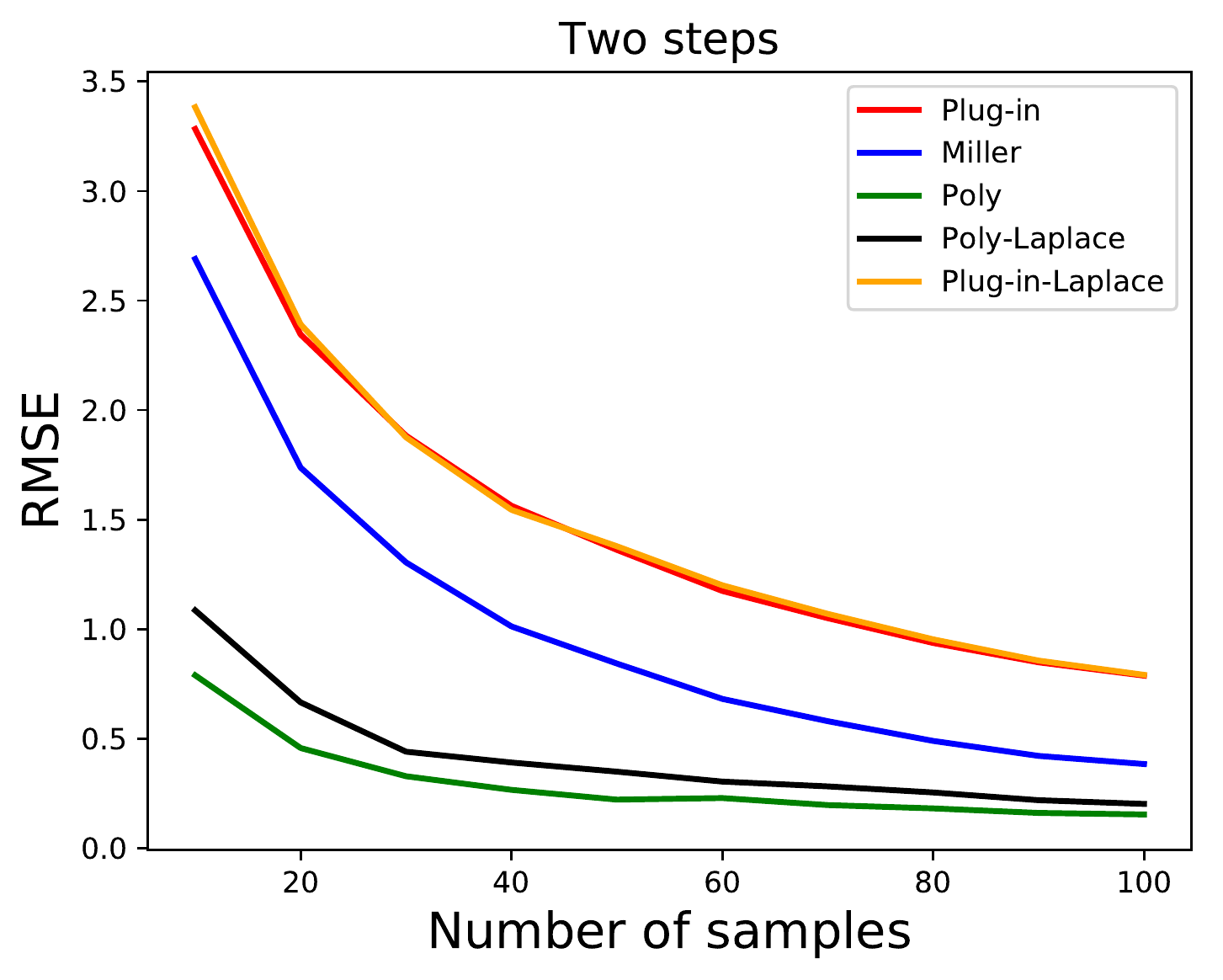}
\includegraphics[width=1\textwidth]{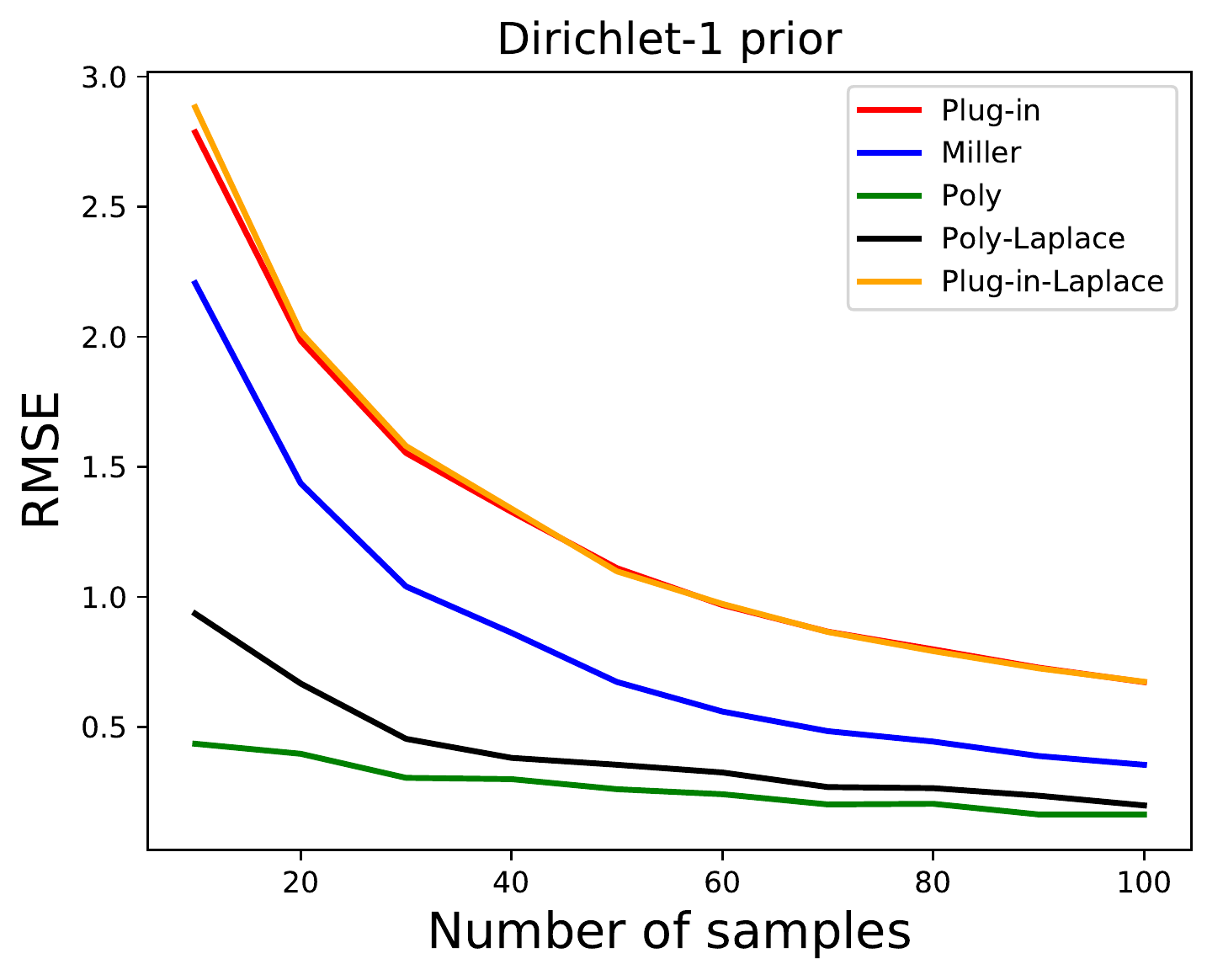}
\end{minipage}
}
\subfigure[]{
\begin{minipage}[b]{0.3\textwidth}
\includegraphics[width=1\textwidth]{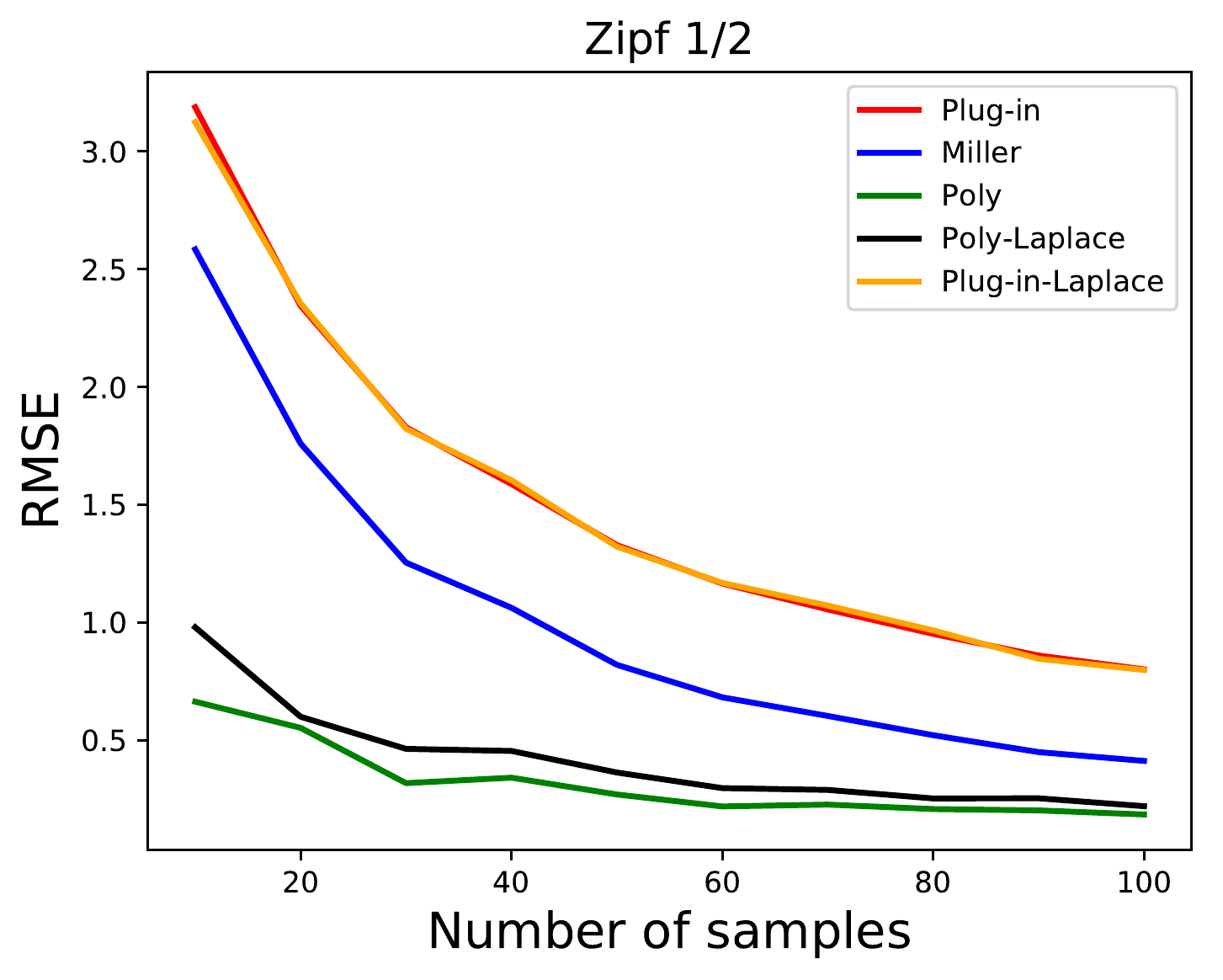}
\includegraphics[width=1\textwidth]{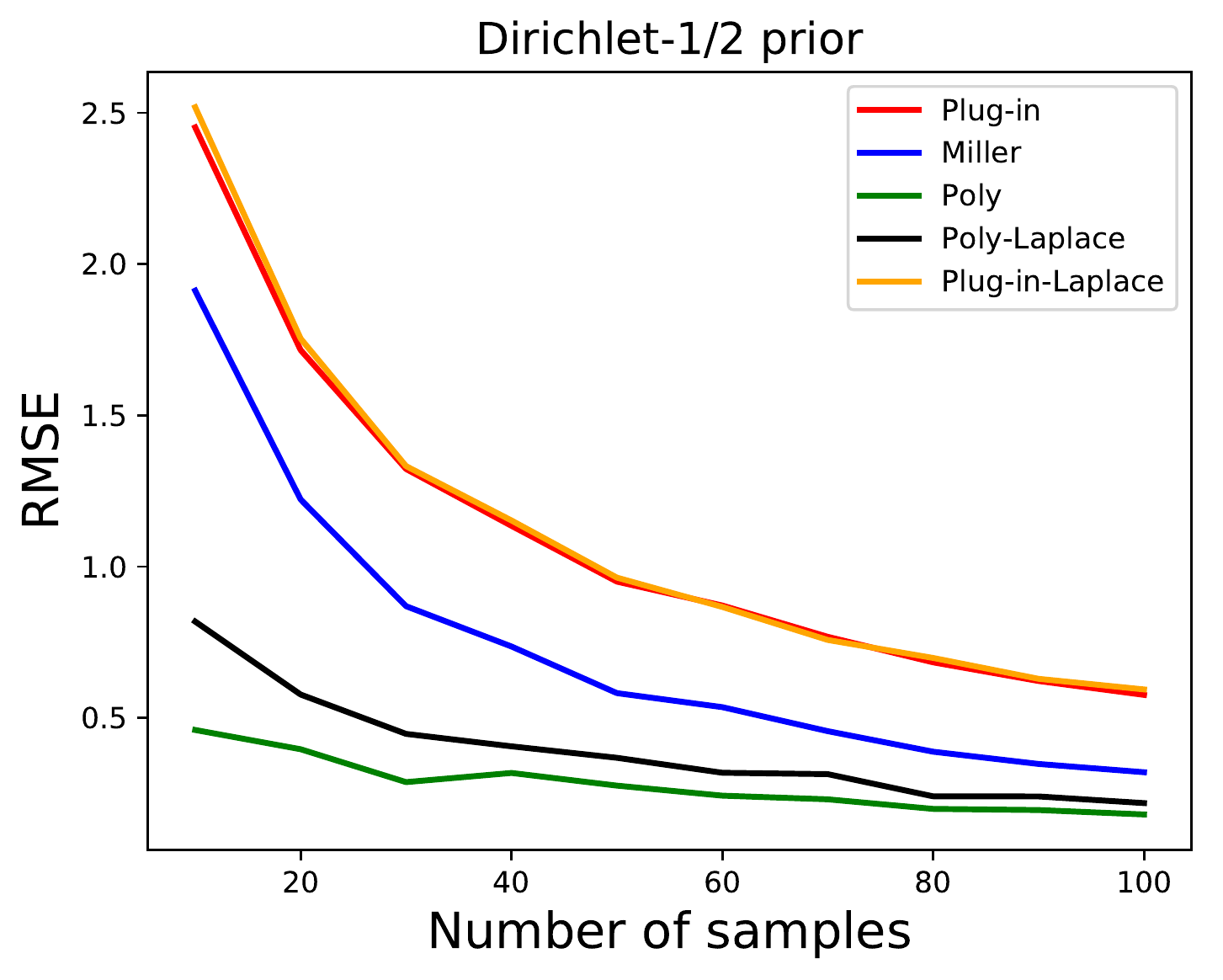}
\end{minipage}
}
\caption{Comparison of various estimators for the entropy, $k=100$, $\eps =2$.} 
\label{fig:entropy-k100-eps2}
\end{figure*}
%%%%%%%%%%%%%%%
\begin{figure*}
\centering
\subfigure[]{
\begin{minipage}[b]{0.3\textwidth}
\includegraphics[width=1\textwidth]{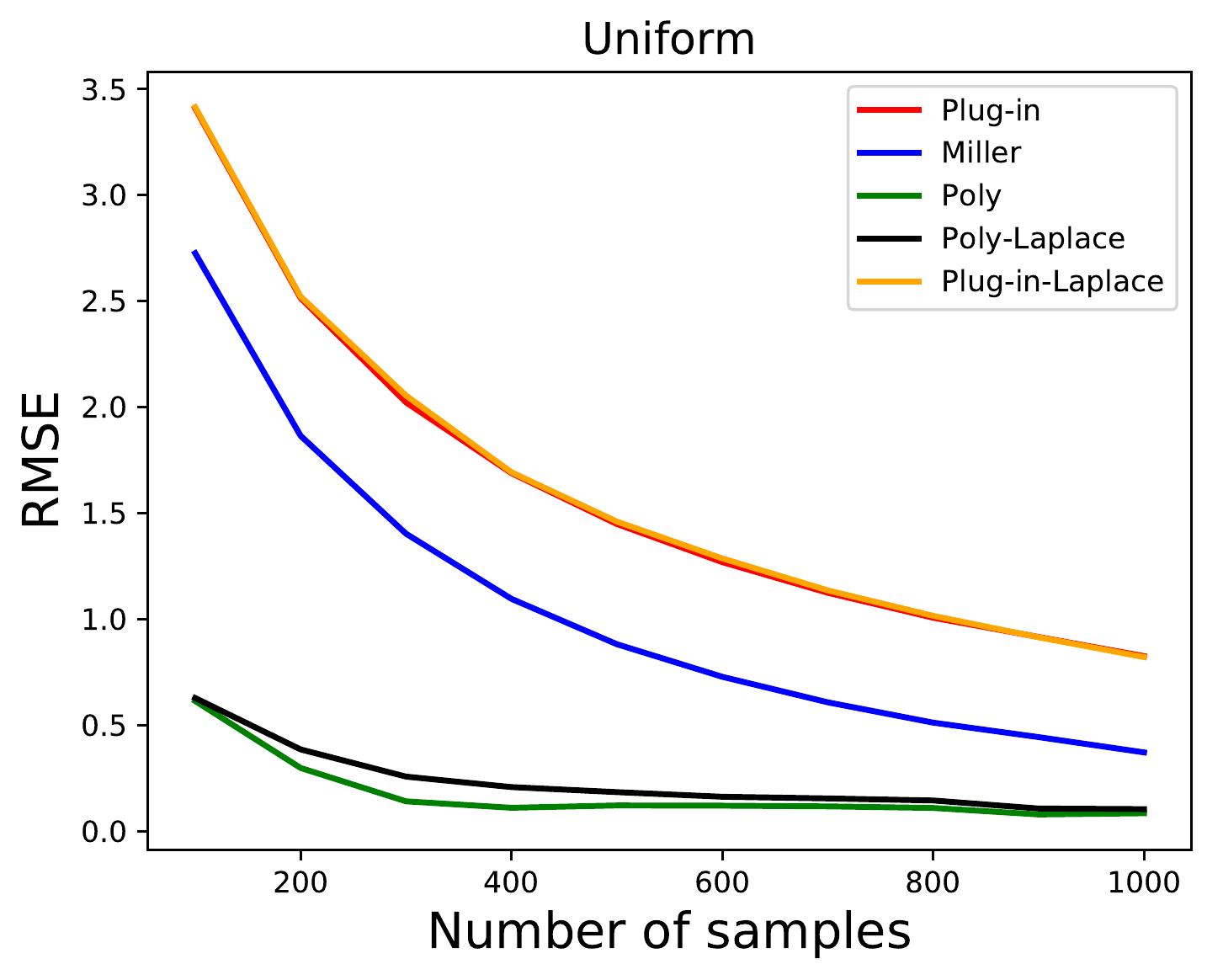}
\includegraphics[width=1\textwidth]{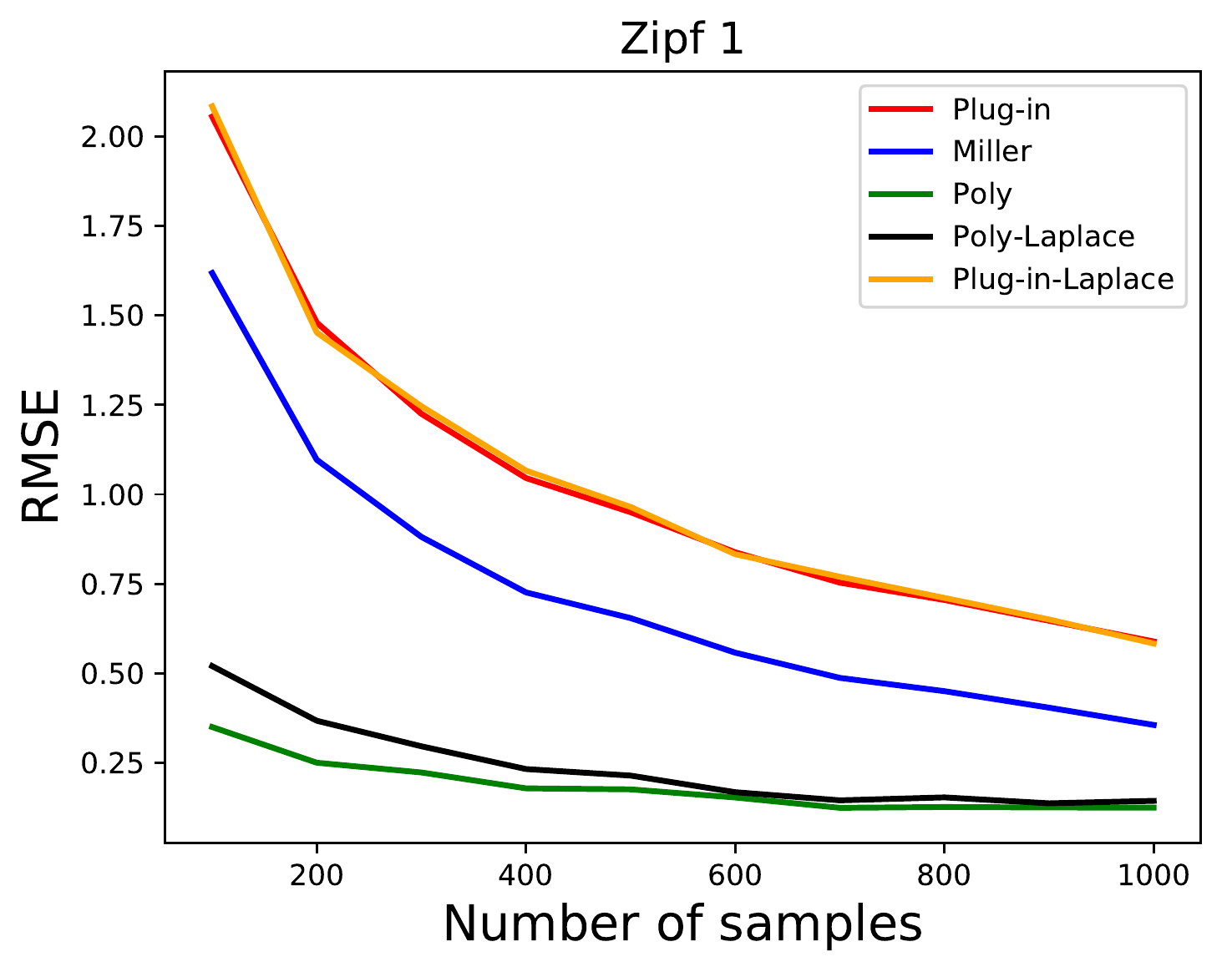}
\end{minipage}
}
\subfigure[]{
\begin{minipage}[b]{0.3\textwidth}
\includegraphics[width=1\textwidth]{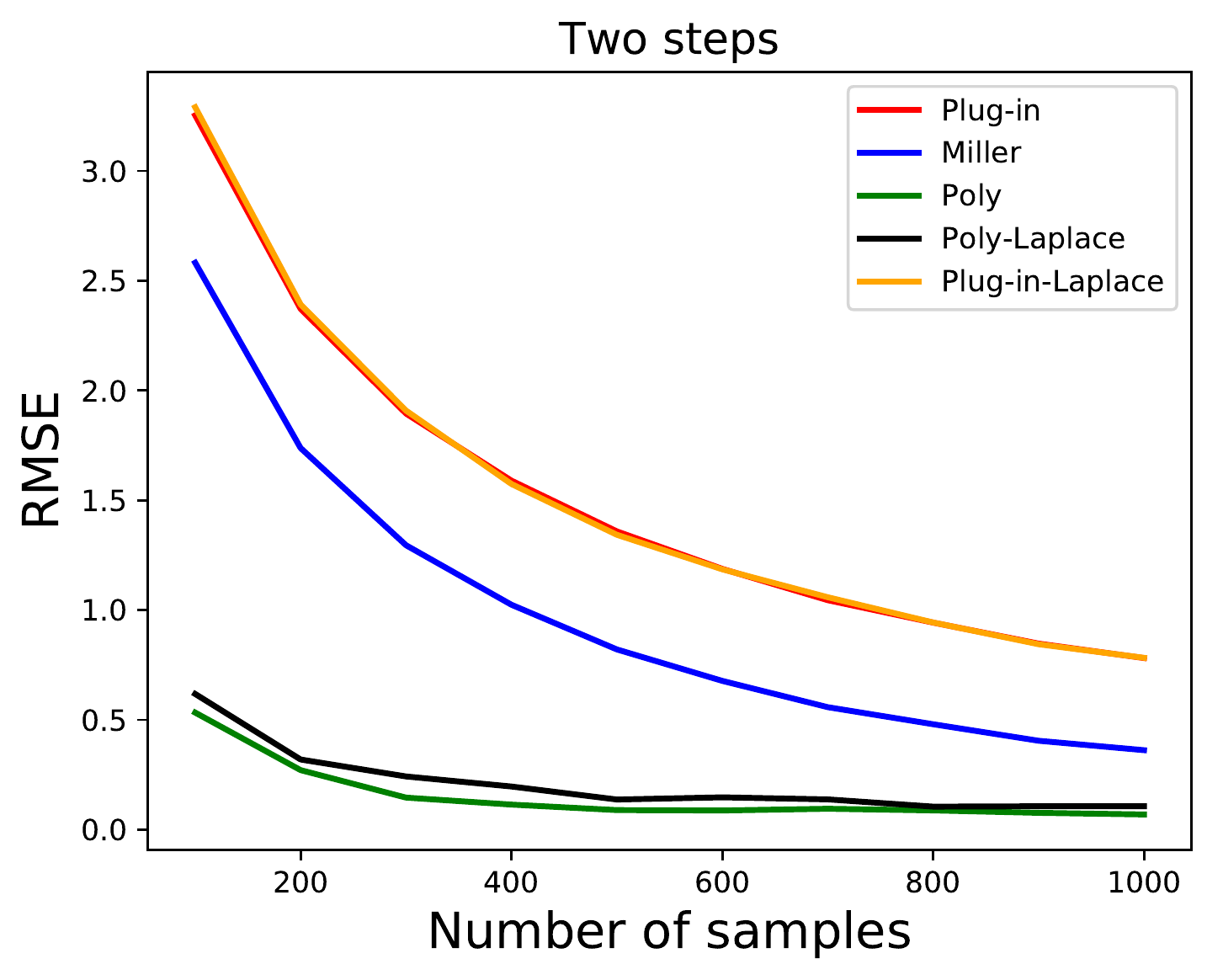}
\includegraphics[width=1\textwidth]{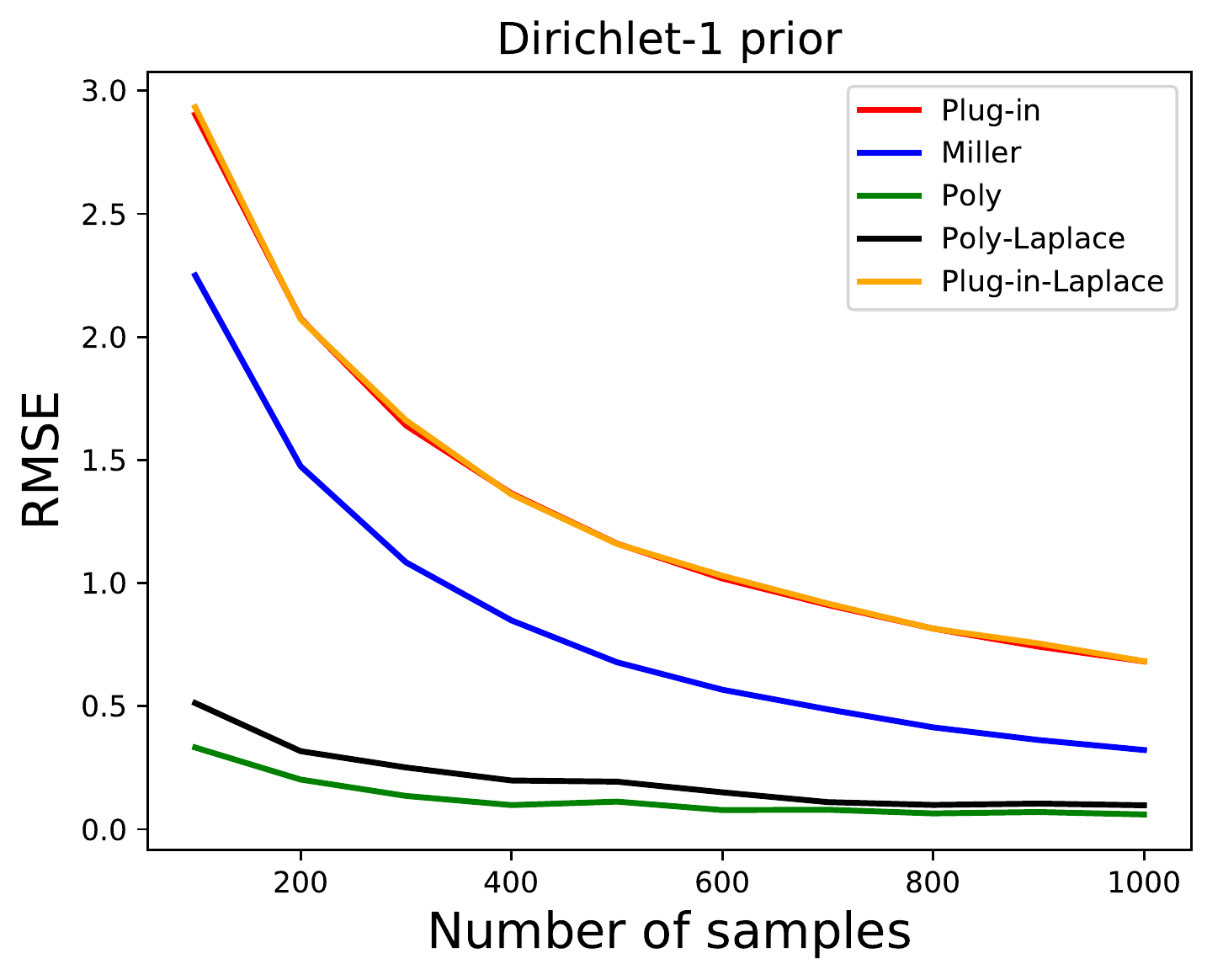}
\end{minipage}
}
\subfigure[]{
\begin{minipage}[b]{0.3\textwidth}
\includegraphics[width=1\textwidth]{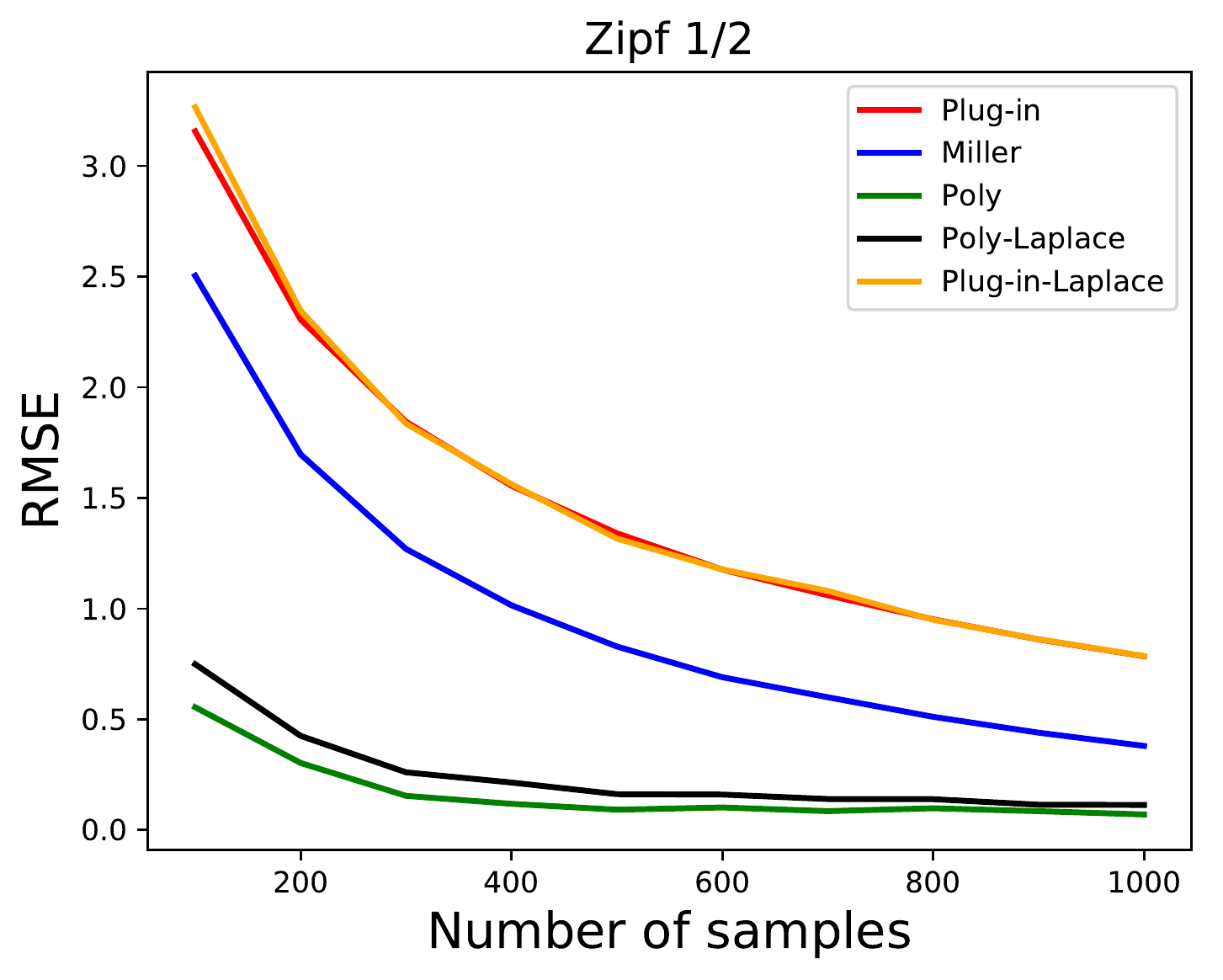}
\includegraphics[width=1\textwidth]{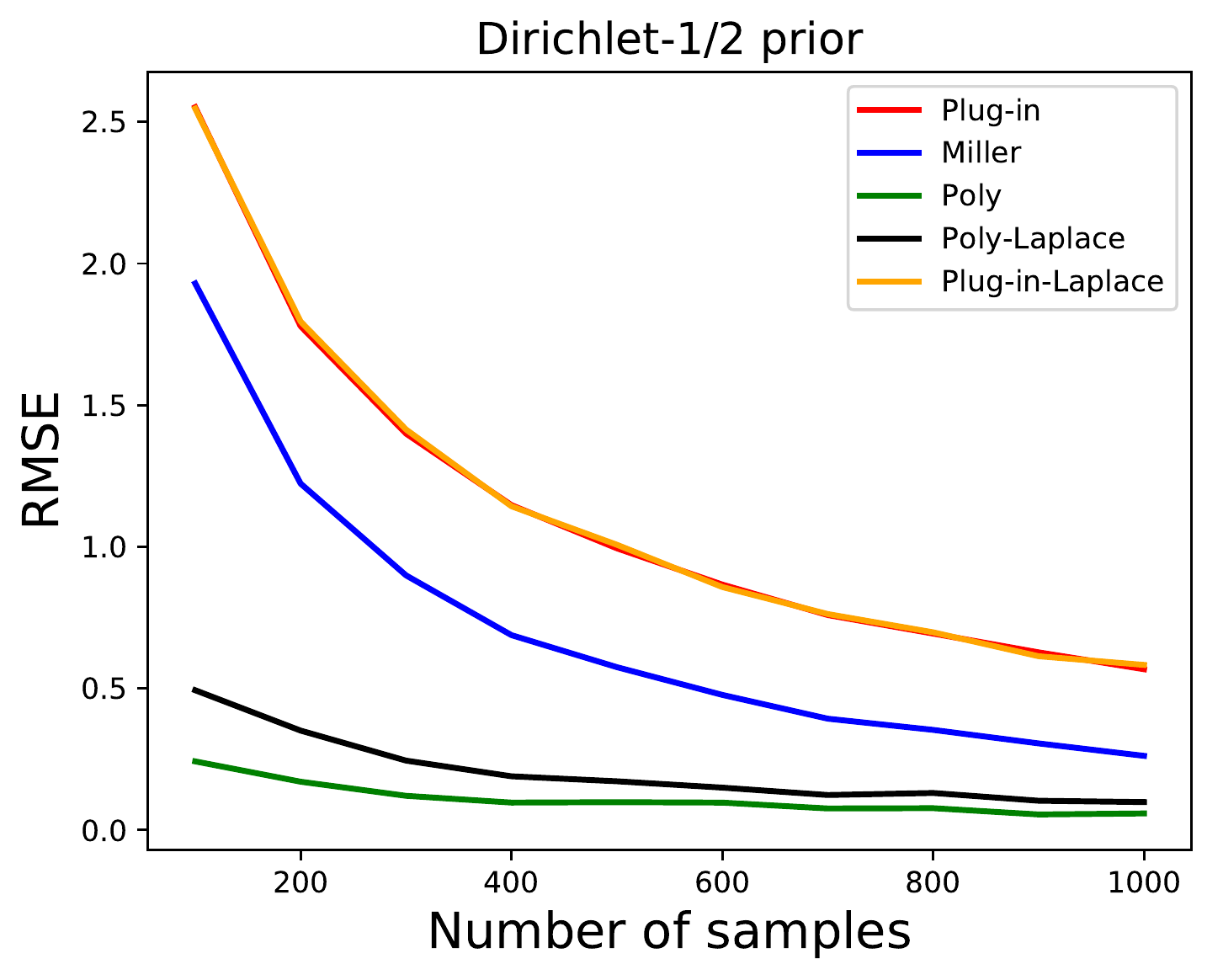}
\end{minipage}
}
\caption{Comparison of various estimators for the entropy, $k=1000$, $\eps =0.5$.} 
\label{fig:entropy-k1000-eps05}
\end{figure*}
%%%%%%%%%%%%%%%
\begin{figure*}
\centering
\subfigure[]{
\begin{minipage}[b]{0.3\textwidth}
\includegraphics[width=1\textwidth]{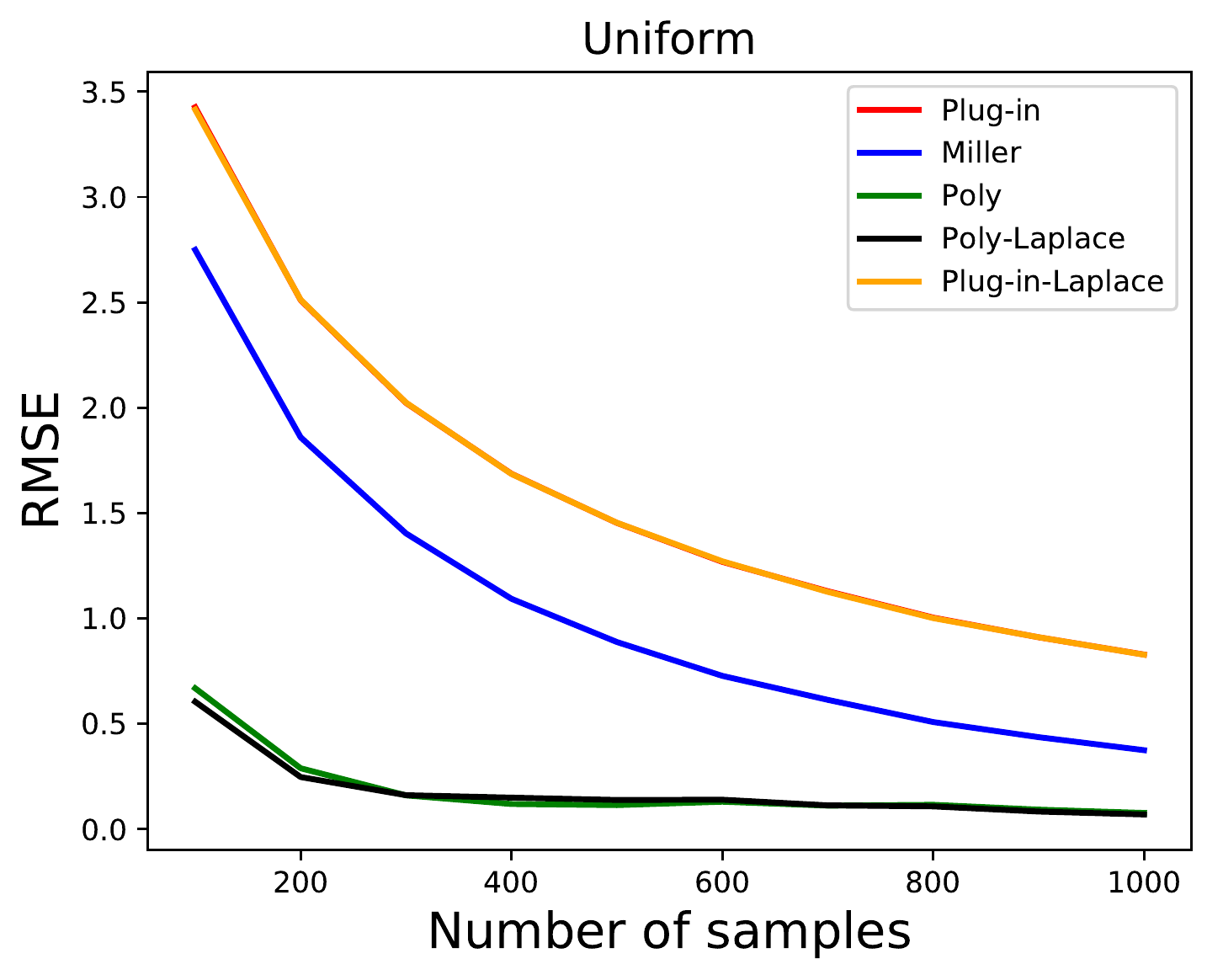}
\includegraphics[width=1\textwidth]{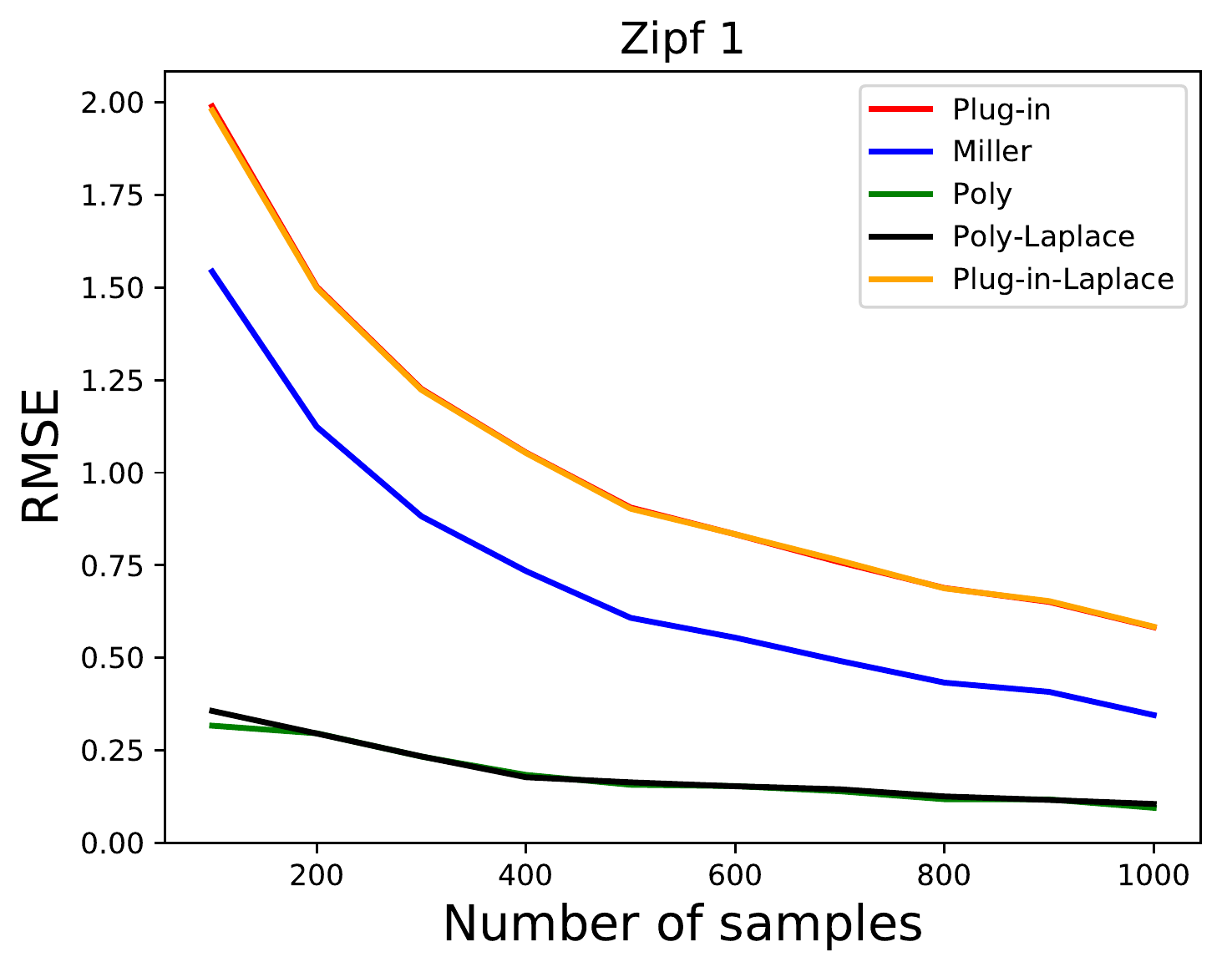}
\end{minipage}
}
\subfigure[]{
\begin{minipage}[b]{0.3\textwidth}
\includegraphics[width=1\textwidth]{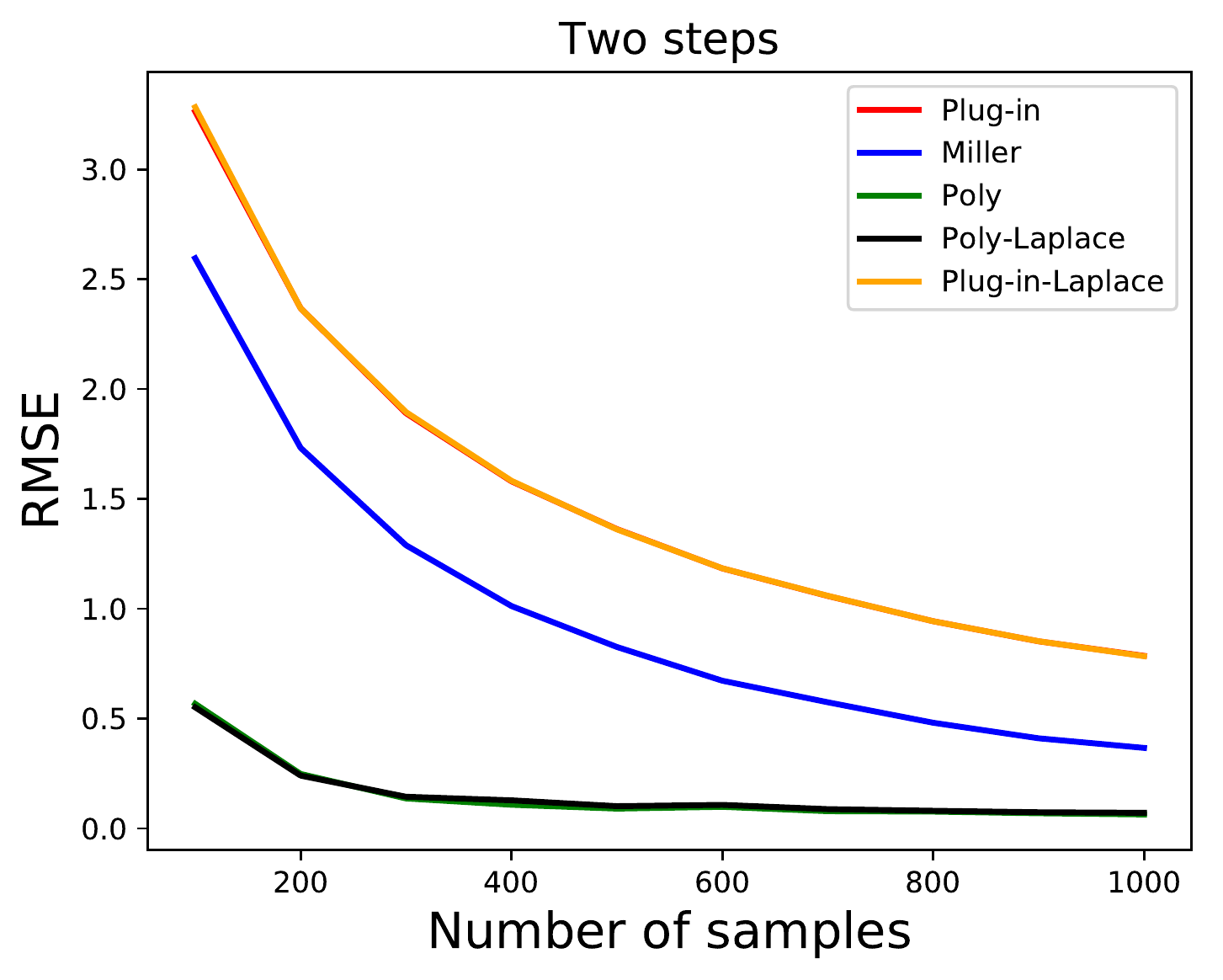}
\includegraphics[width=1\textwidth]{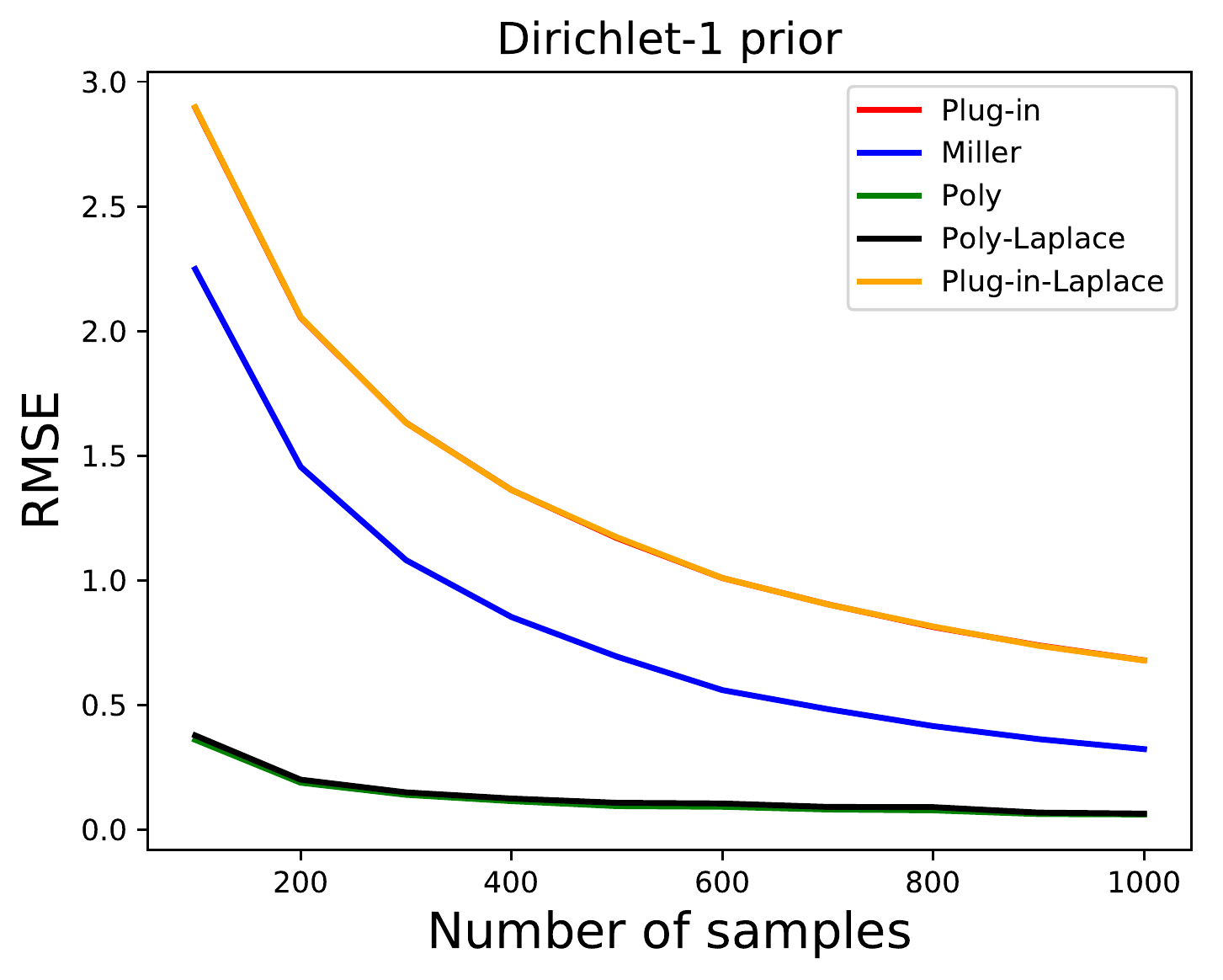}
\end{minipage}
}
\subfigure[]{
\begin{minipage}[b]{0.3\textwidth}
\includegraphics[width=1\textwidth]{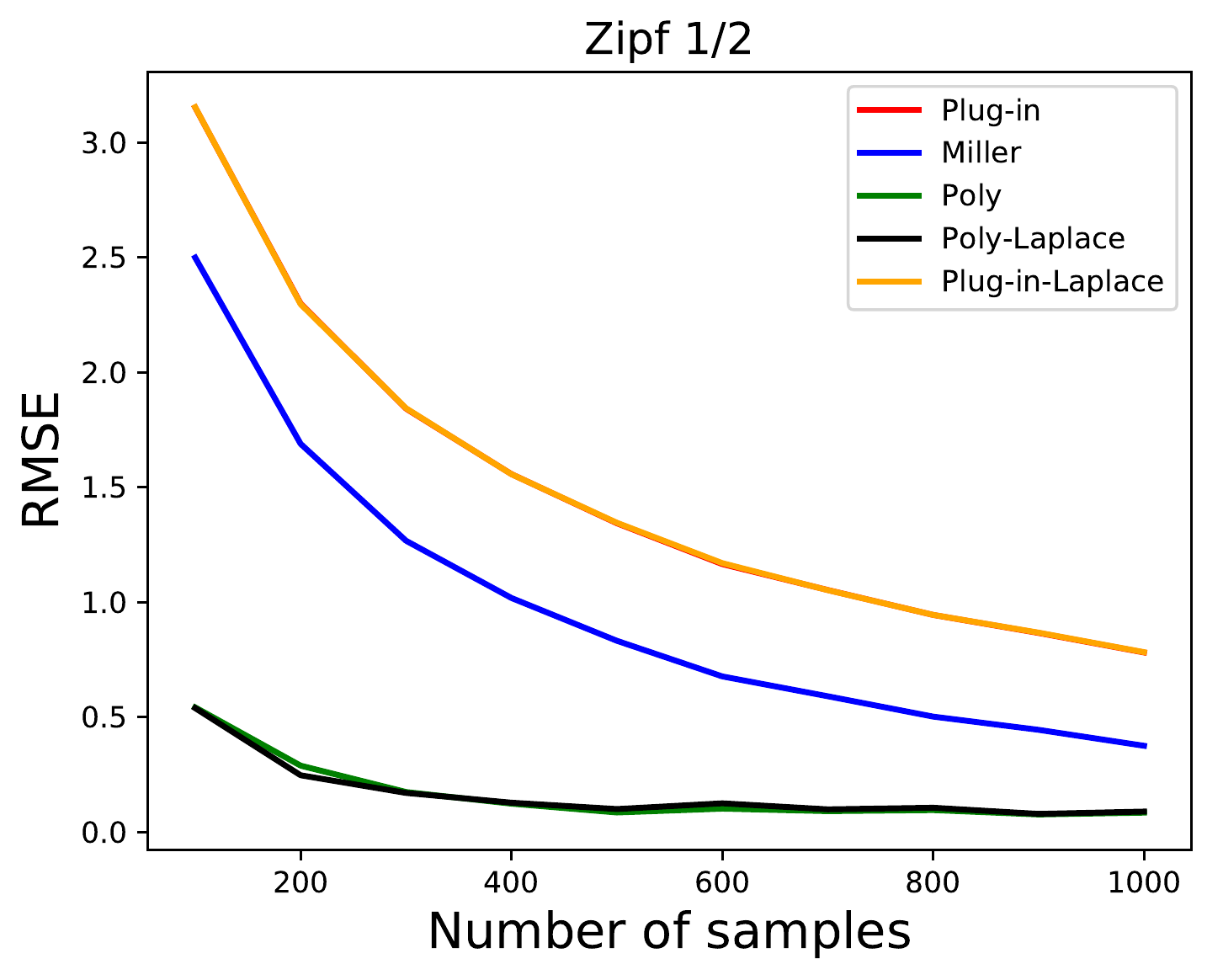}
\includegraphics[width=1\textwidth]{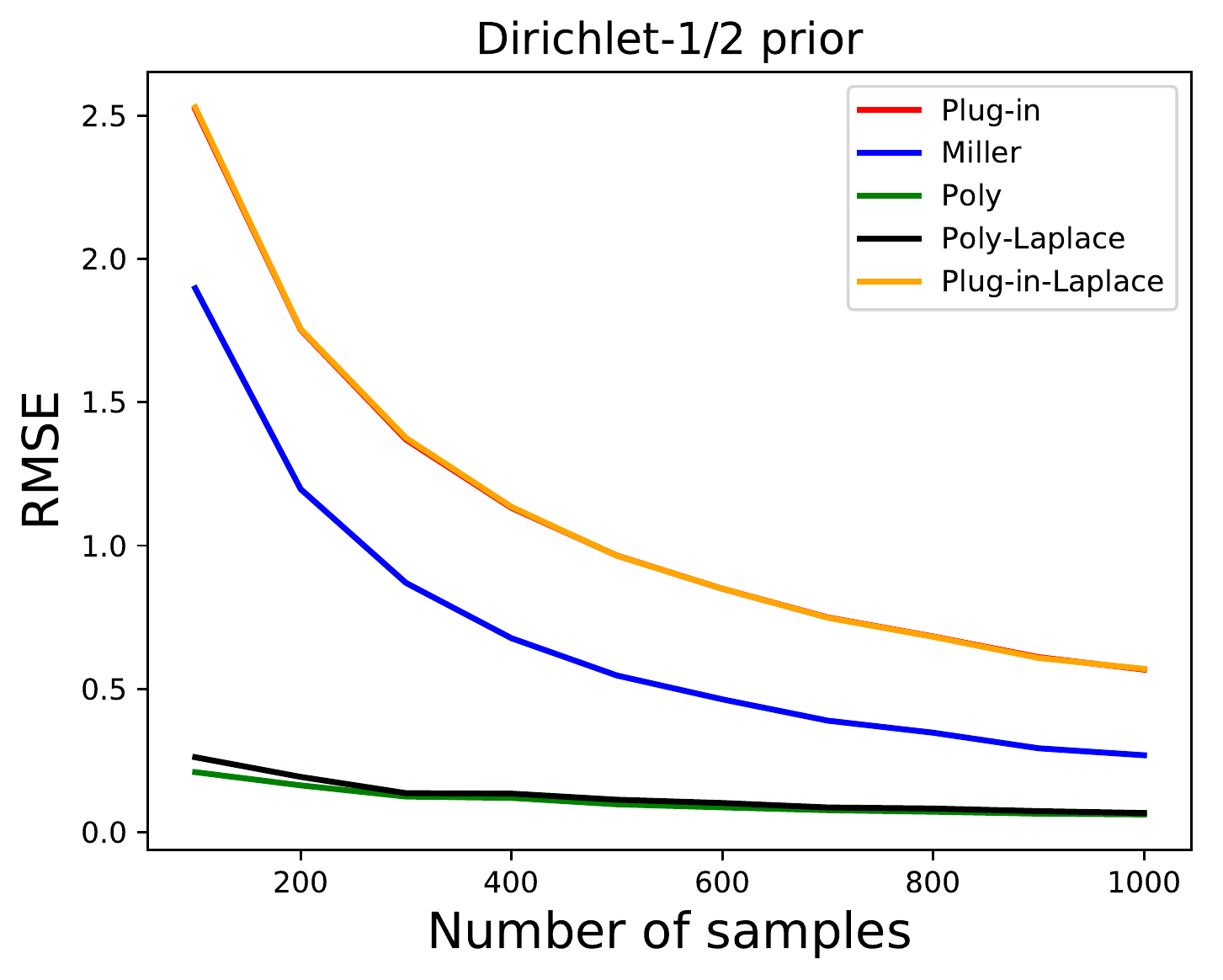}
\end{minipage}
}
\caption{Comparison of various estimators for the entropy, $k=1000$, $\eps =2$.} 
\label{fig:entropy-k1000-eps2}
\end{figure*}

\subsection{Support Coverage}
\label{sec:supp-exp-coverage}
We present three additional plots of our synthetic experimental results for support coverage estimation.
In particular, Figures~\ref{fig:coverage-k1000}, \ref{fig:coverage-k5000}, and~\ref{fig:coverage-k100000} show support coverage for $k$ = 1000, 5000, 100000.
%%%%%%%%%%%%%%%
\begin{figure*}
\centering
\subfigure[]{
\begin{minipage}[b]{0.3\textwidth}
\includegraphics[width=1\textwidth]{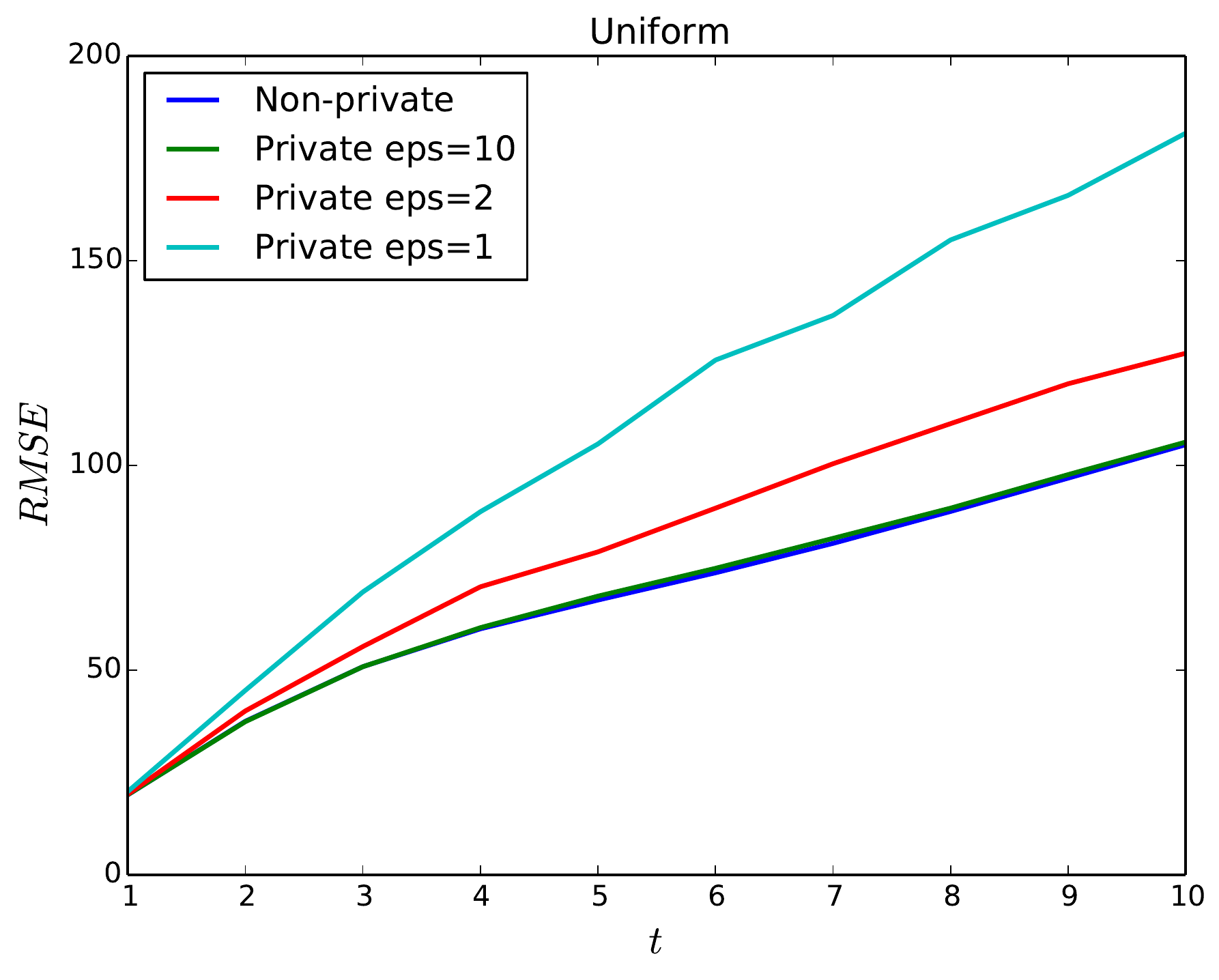}
\includegraphics[width=1\textwidth]{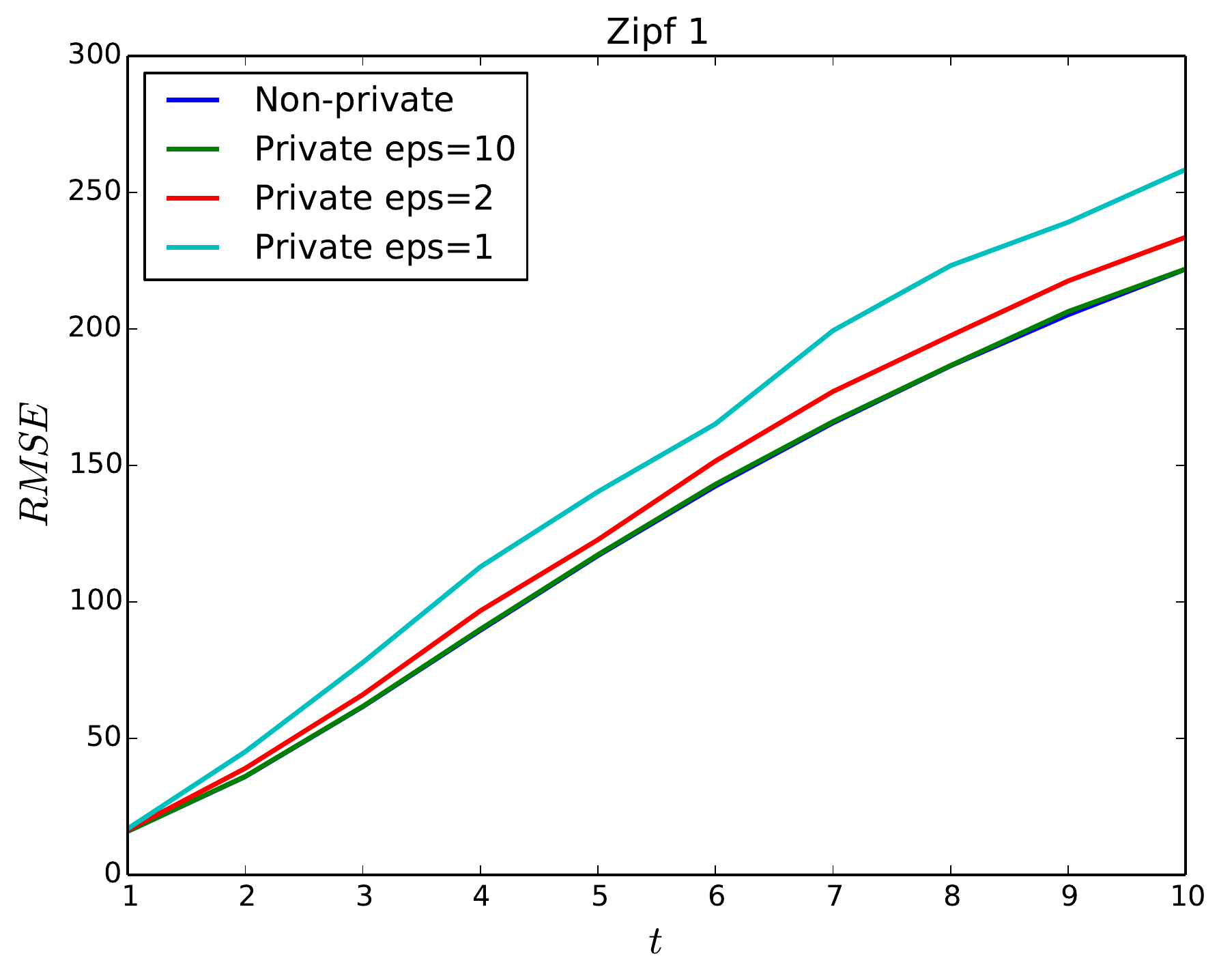}
\end{minipage}
}
\subfigure[]{
\begin{minipage}[b]{0.3\textwidth}
\includegraphics[width=1\textwidth]{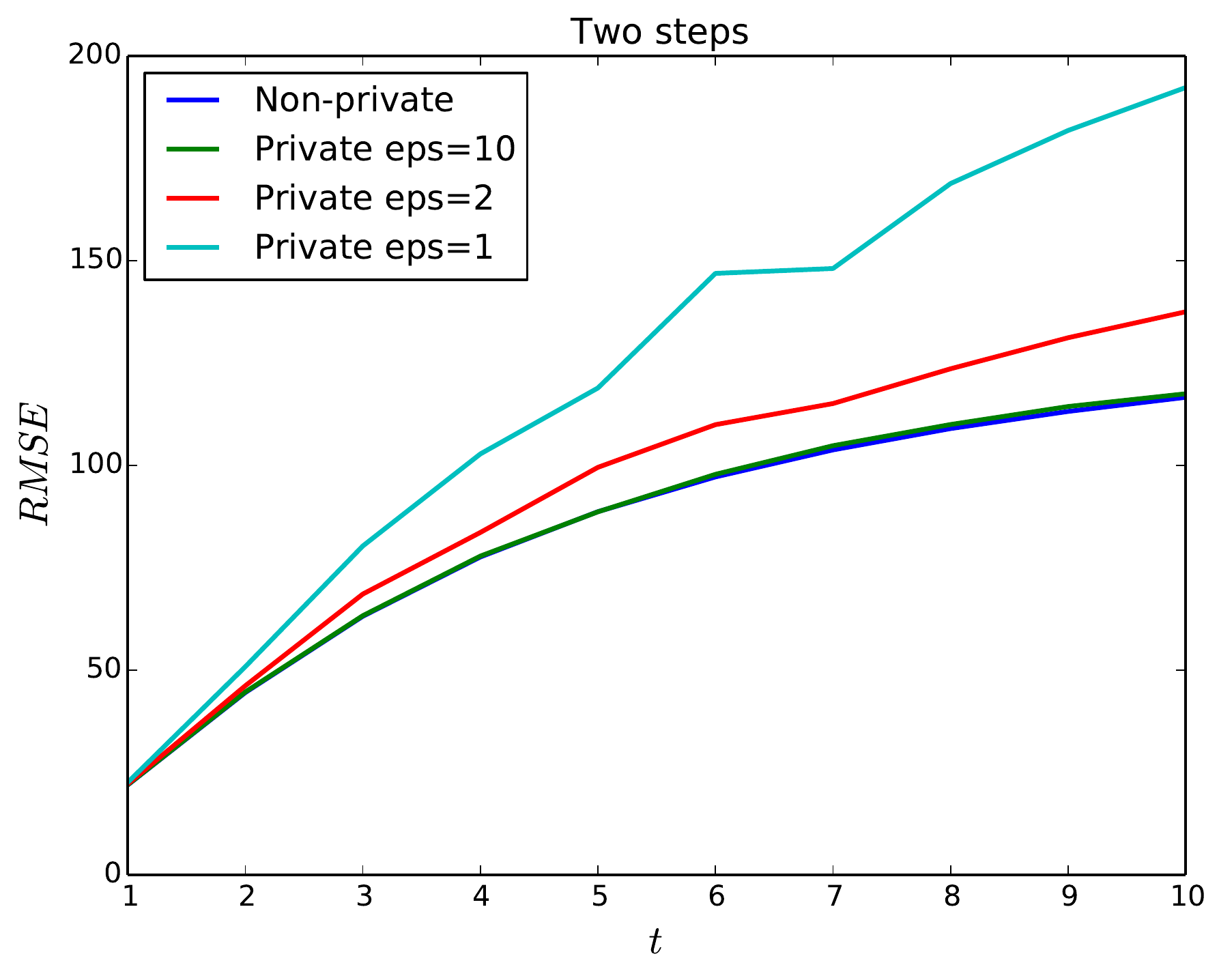}
\includegraphics[width=1\textwidth]{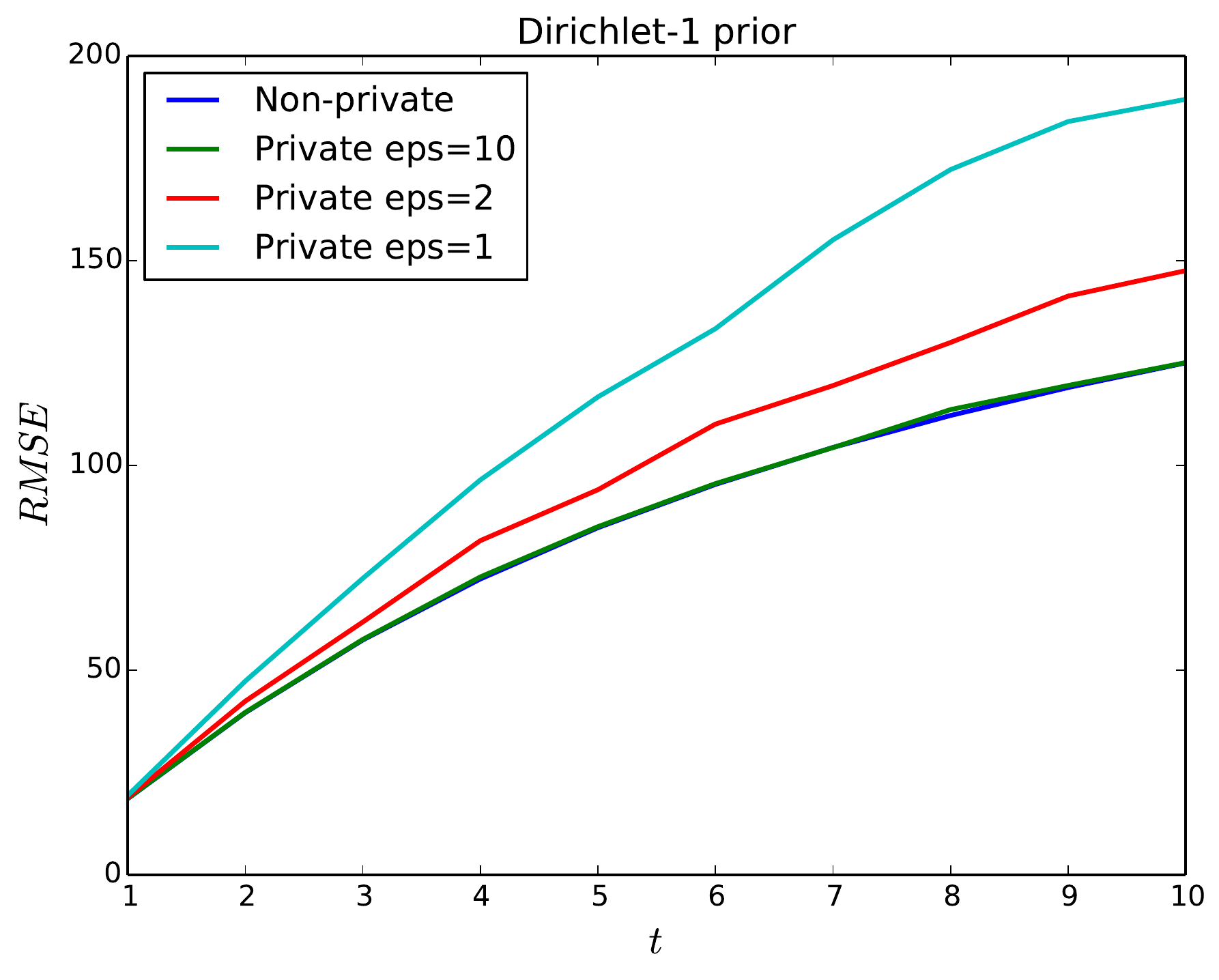}
\end{minipage}
}
\subfigure[]{
\begin{minipage}[b]{0.3\textwidth}
\includegraphics[width=1\textwidth]{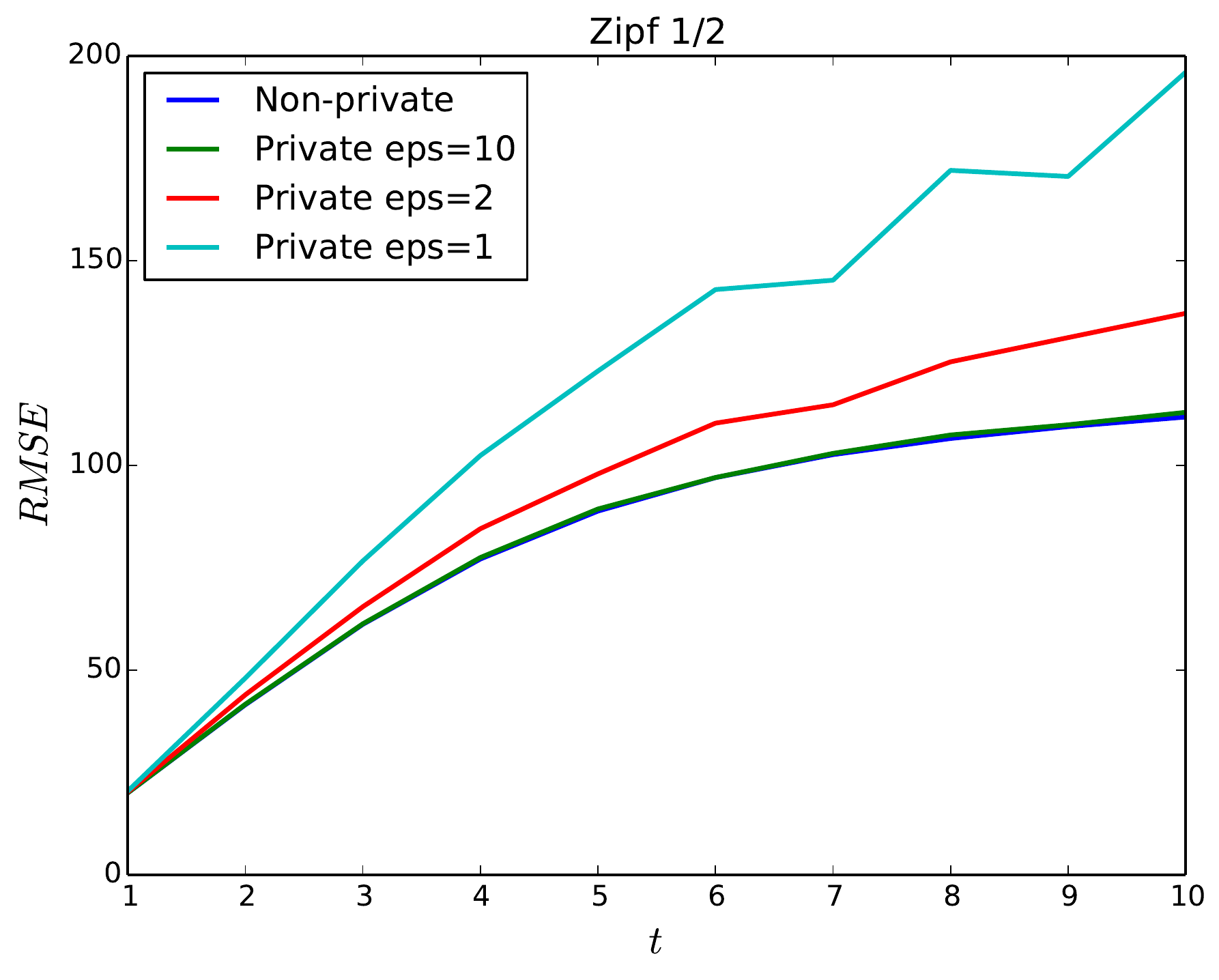}
\includegraphics[width=1\textwidth]{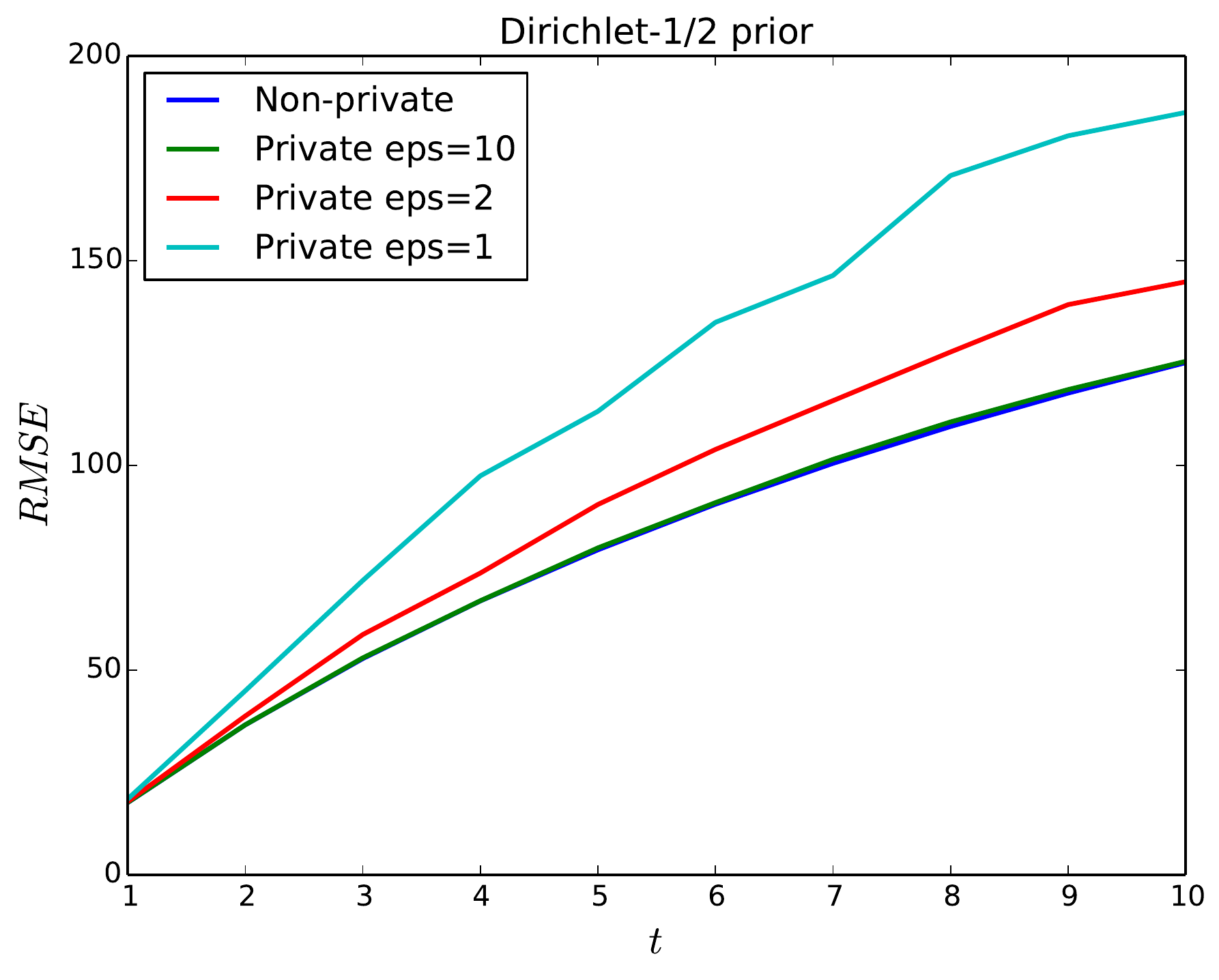}
\end{minipage}
}
\caption{Comparison between the private estimator with the non-private SGT when $k=1000$.} 
\label{fig:coverage-k1000}
\end{figure*}
%%%%%%%%%%%%%%%
\begin{figure*}
\centering
\subfigure[]{
\begin{minipage}[b]{0.3\textwidth}
\includegraphics[width=1\textwidth]{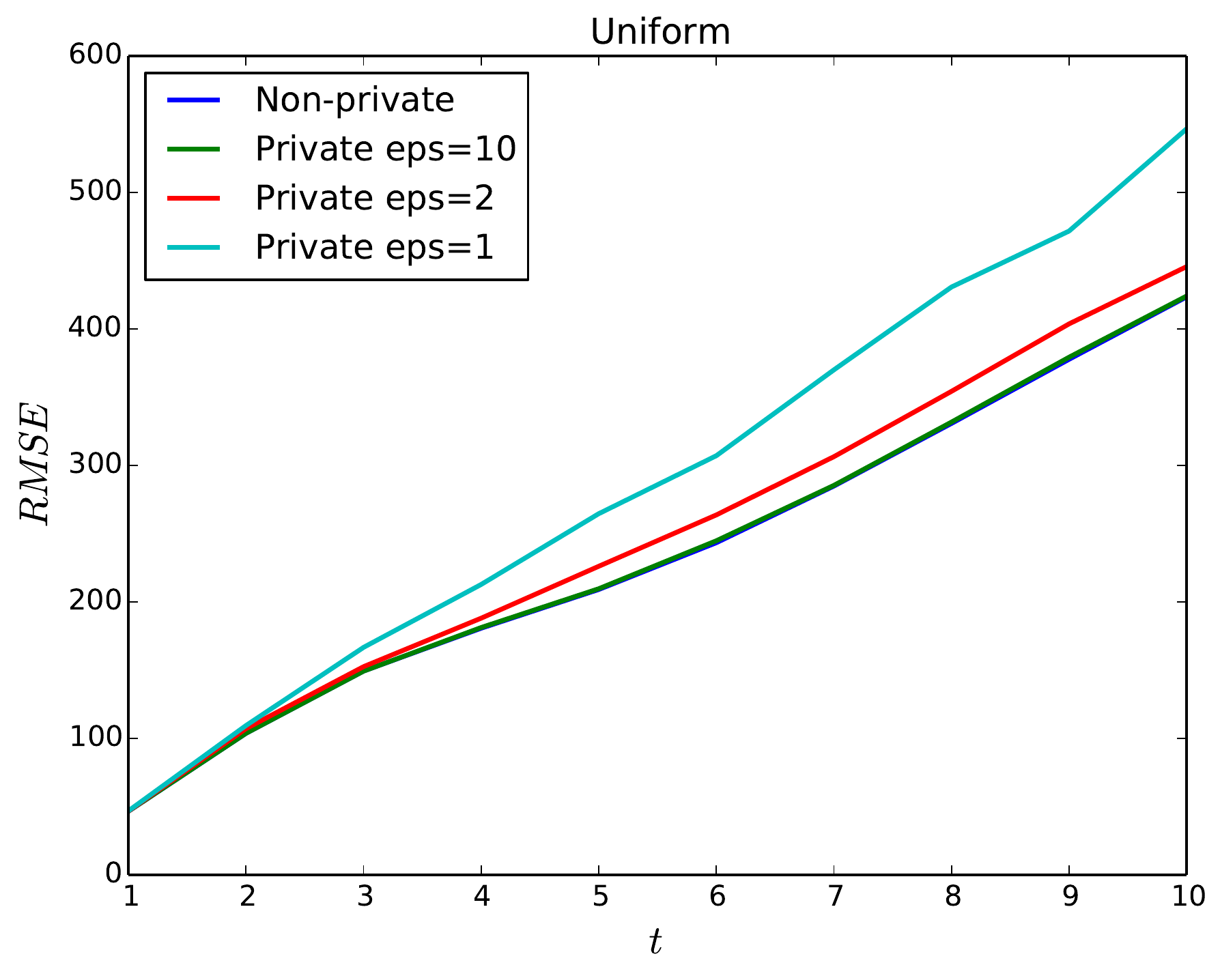}
\includegraphics[width=1\textwidth]{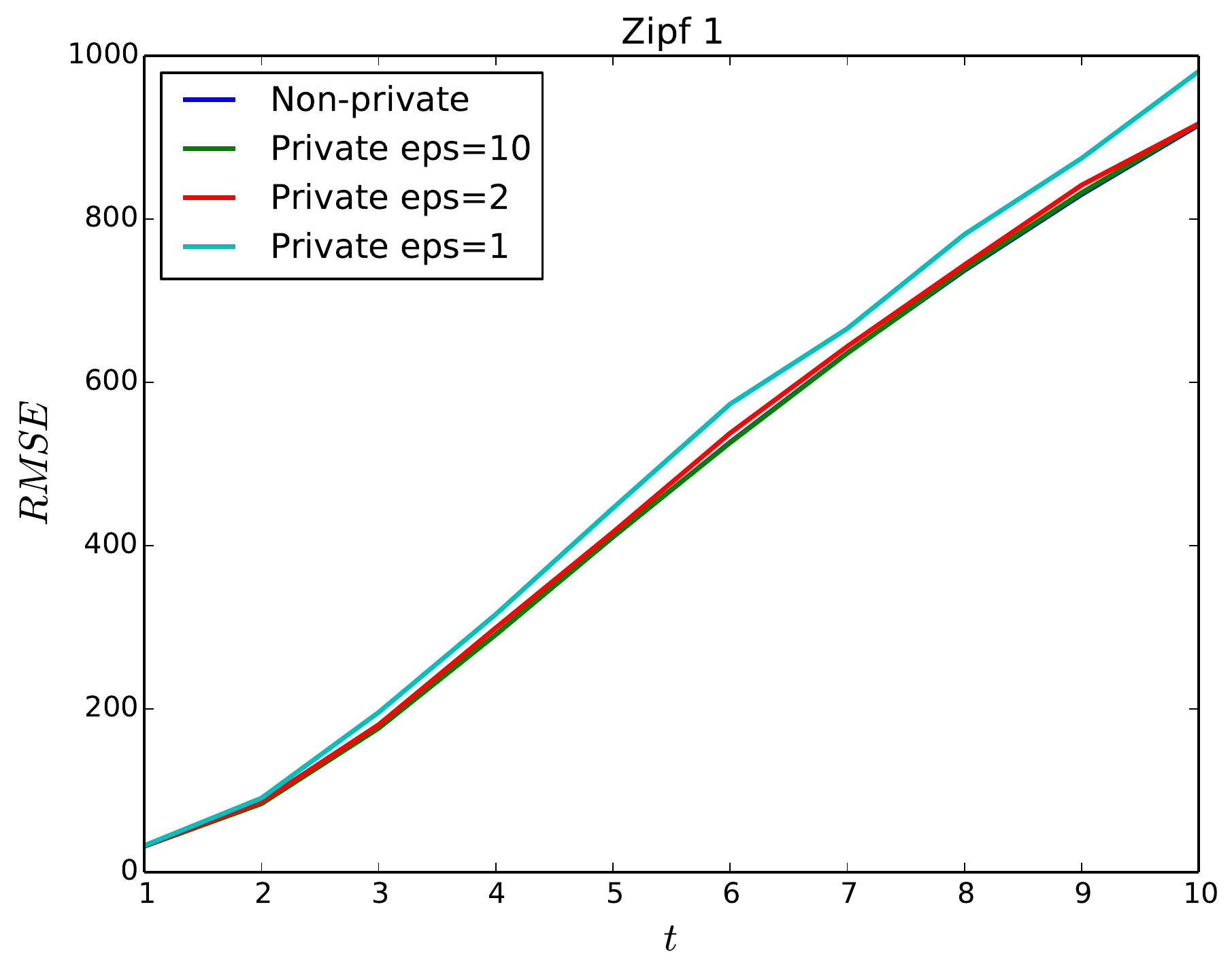}
\end{minipage}
}
\subfigure[]{
\begin{minipage}[b]{0.3\textwidth}
\includegraphics[width=1\textwidth]{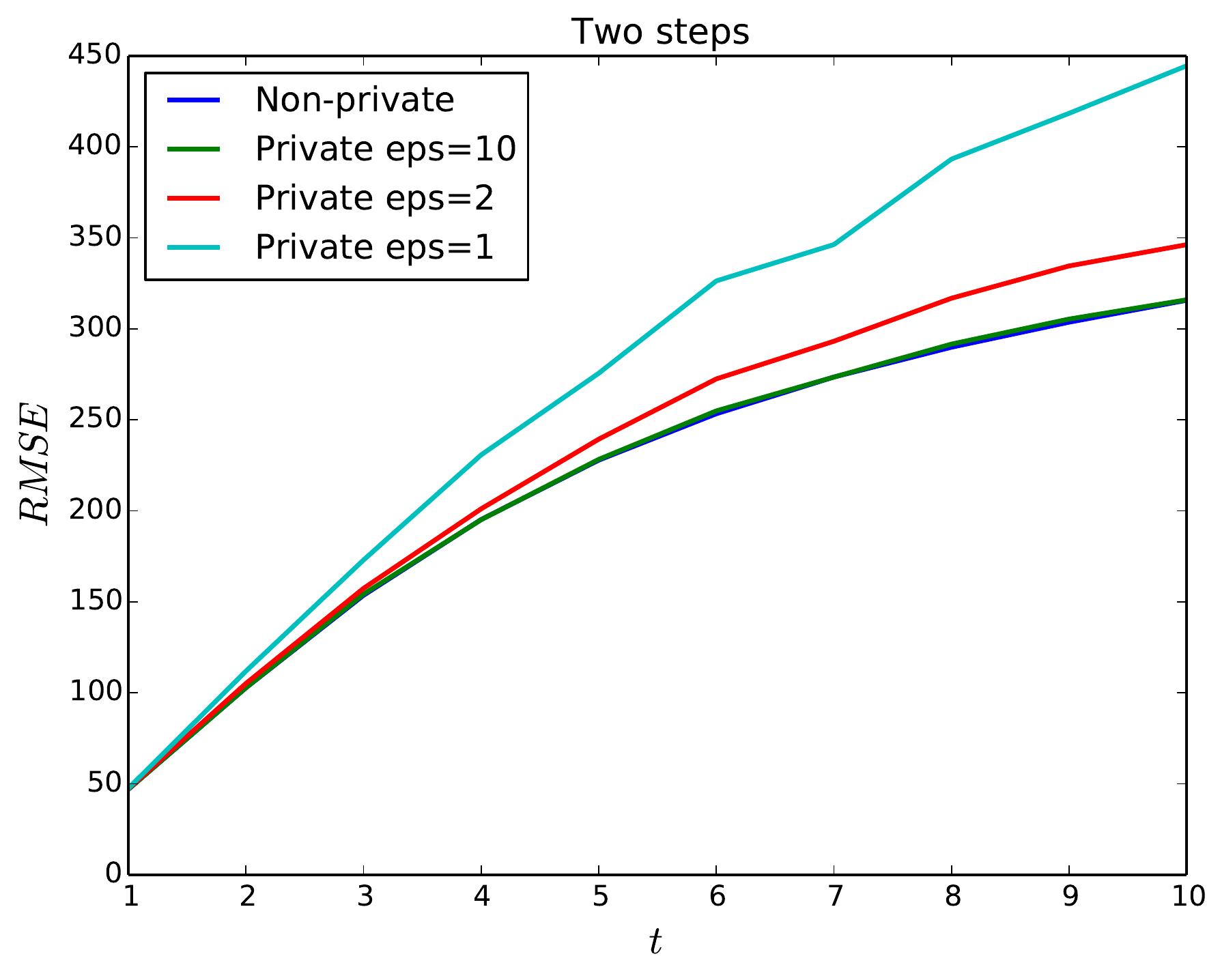}
\includegraphics[width=1\textwidth]{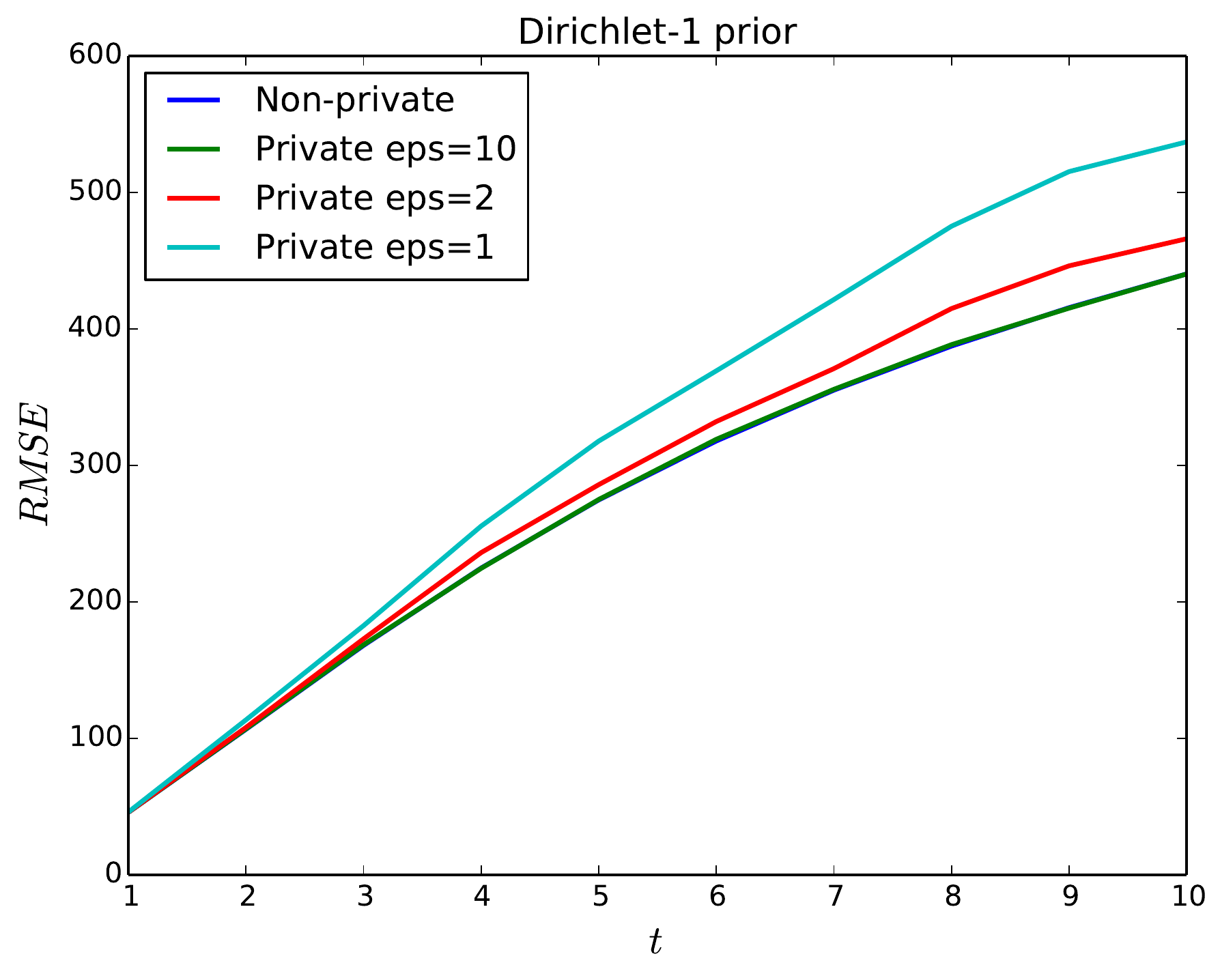}
\end{minipage}
}
\subfigure[]{
\begin{minipage}[b]{0.3\textwidth}
\includegraphics[width=1\textwidth]{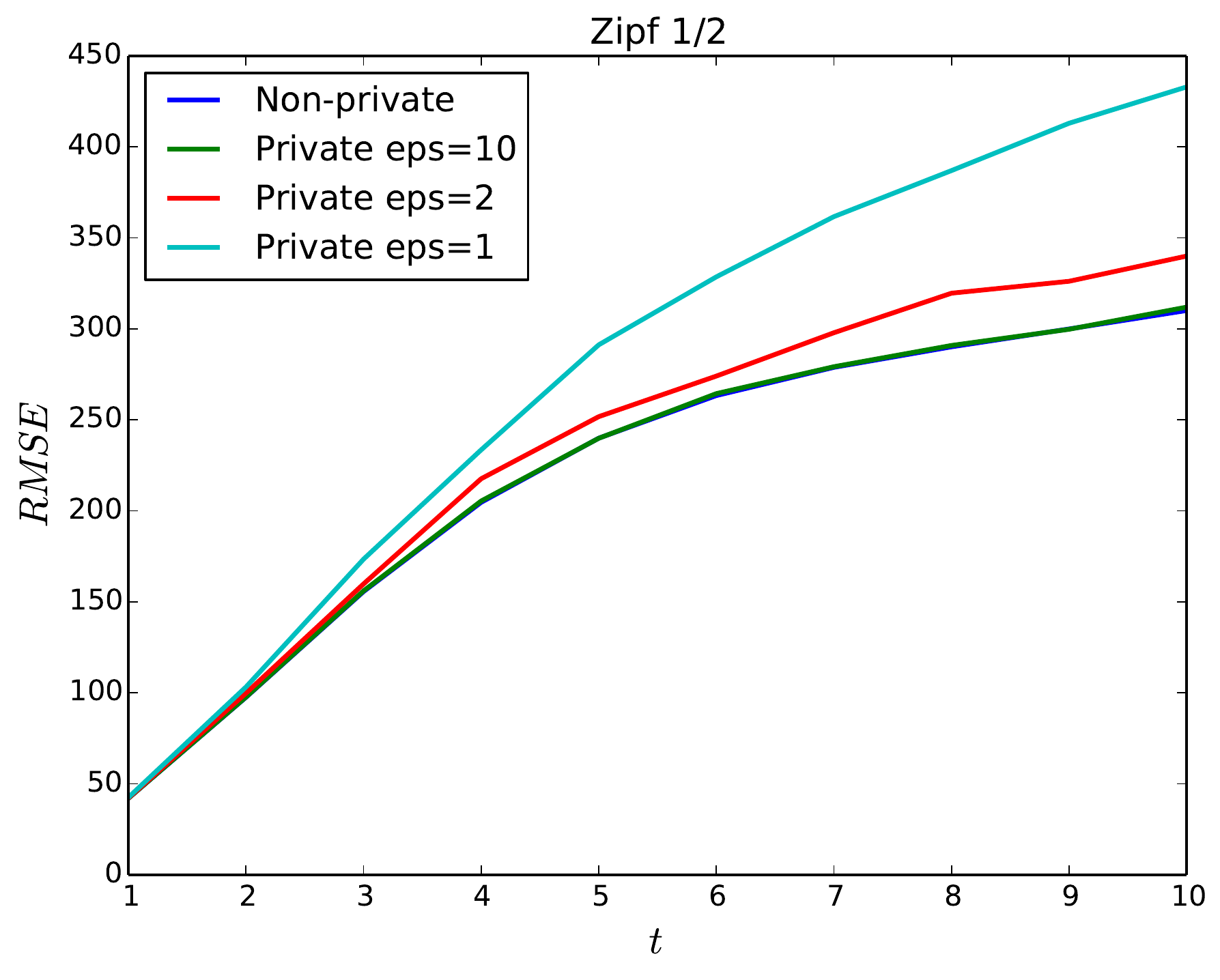}
\includegraphics[width=1\textwidth]{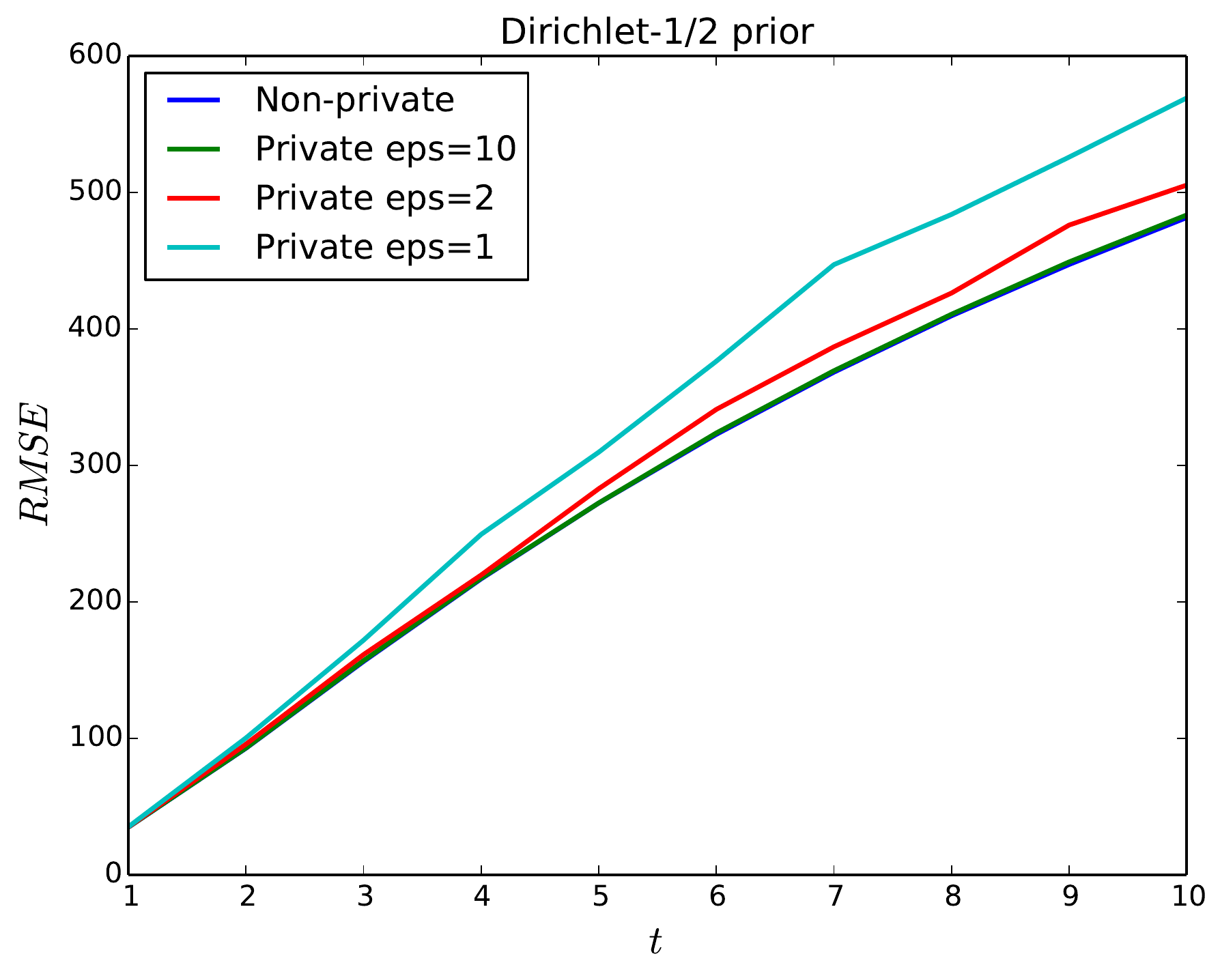}
\end{minipage}
}
\caption{Comparison between the private estimator with the non-private SGT when $k=5000$.} 
\label{fig:coverage-k5000}
\end{figure*}
%%%%%%%%%%%%%%%
\begin{figure*}
\centering
\subfigure[]{
\begin{minipage}[b]{0.3\textwidth}
\includegraphics[width=1\textwidth]{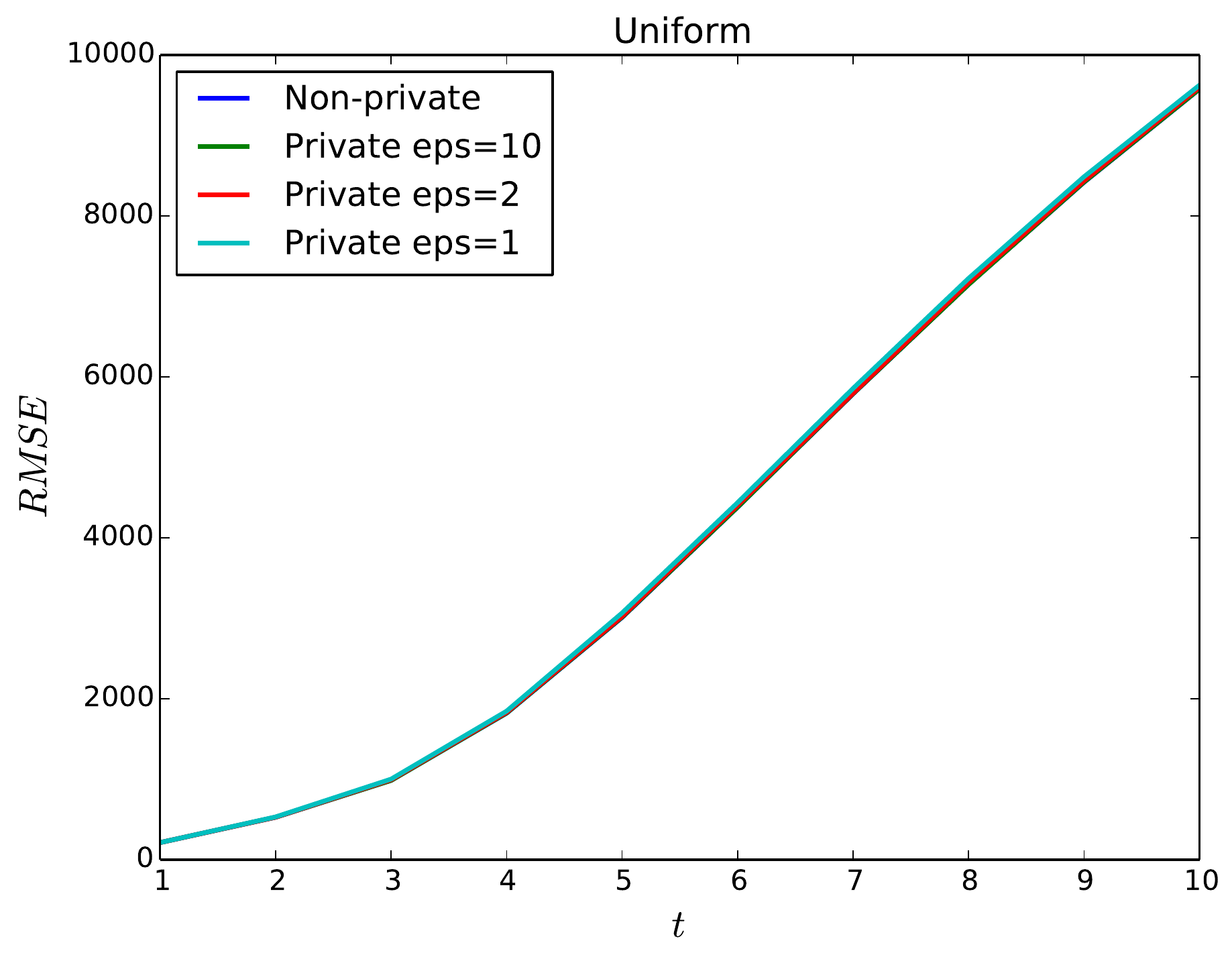}
\includegraphics[width=1\textwidth]{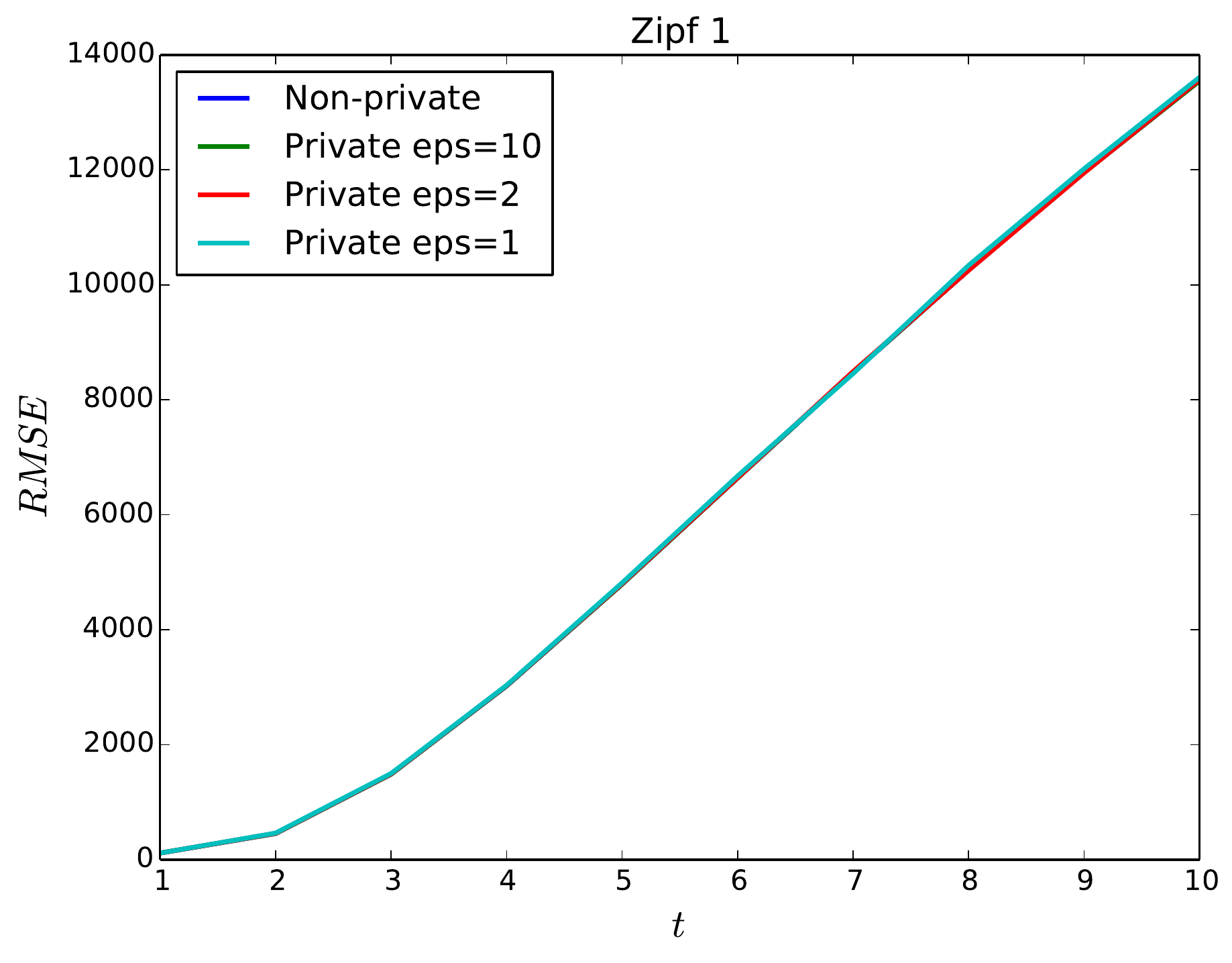}
\end{minipage}
}
\subfigure[]{
\begin{minipage}[b]{0.3\textwidth}
\includegraphics[width=1\textwidth]{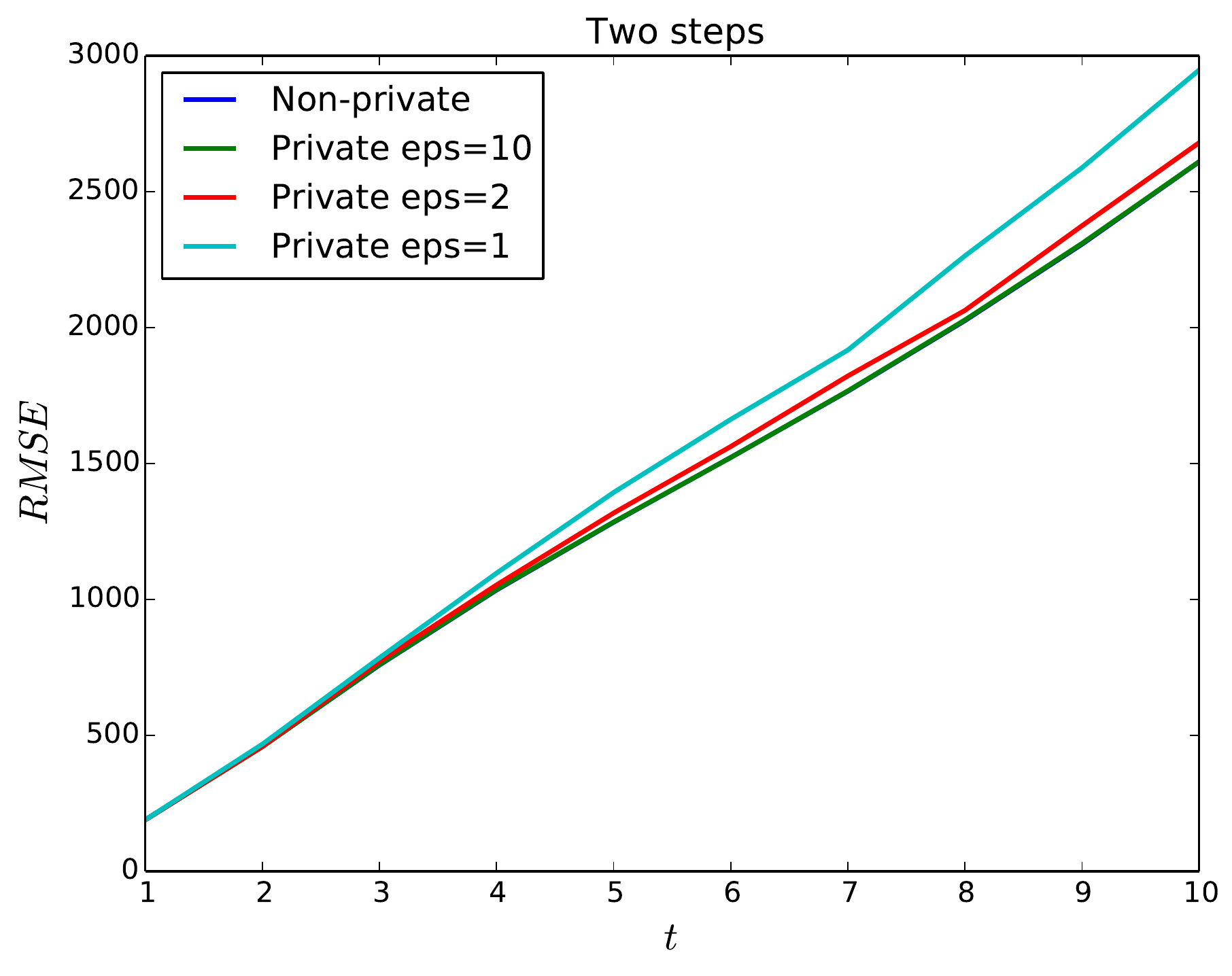}
\includegraphics[width=1\textwidth]{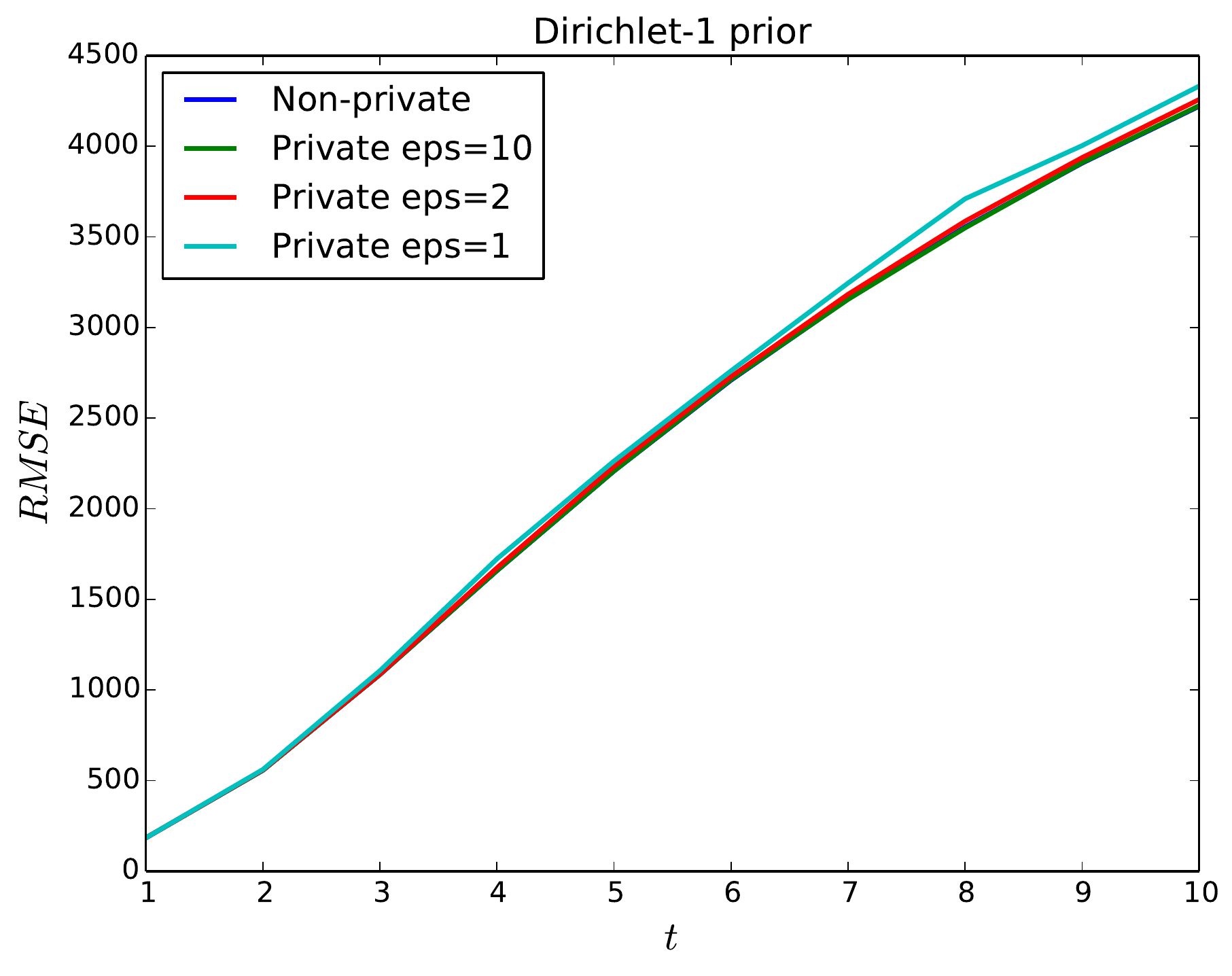}
\end{minipage}
}
\subfigure[]{
\begin{minipage}[b]{0.3\textwidth}
\includegraphics[width=1\textwidth]{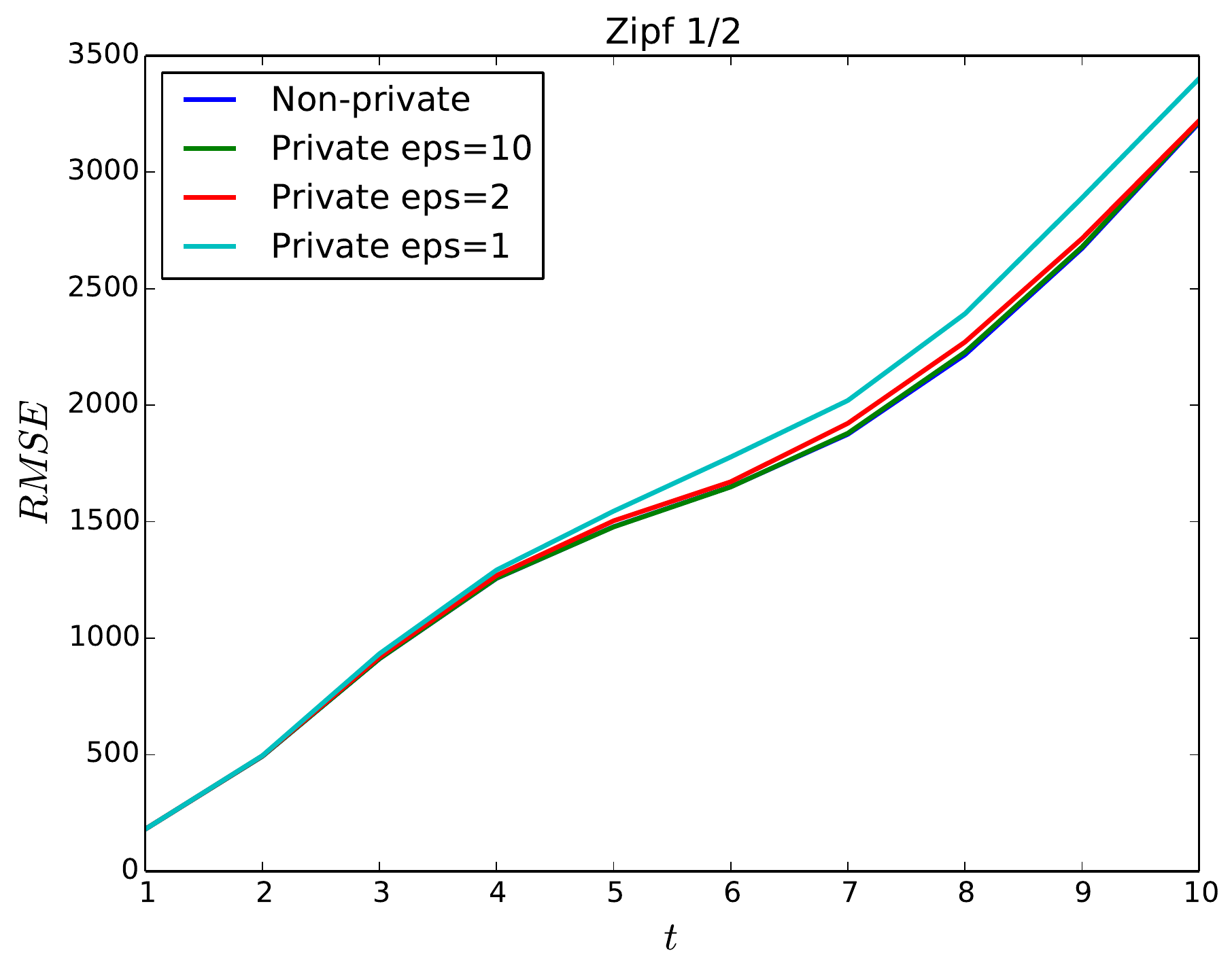}
\includegraphics[width=1\textwidth]{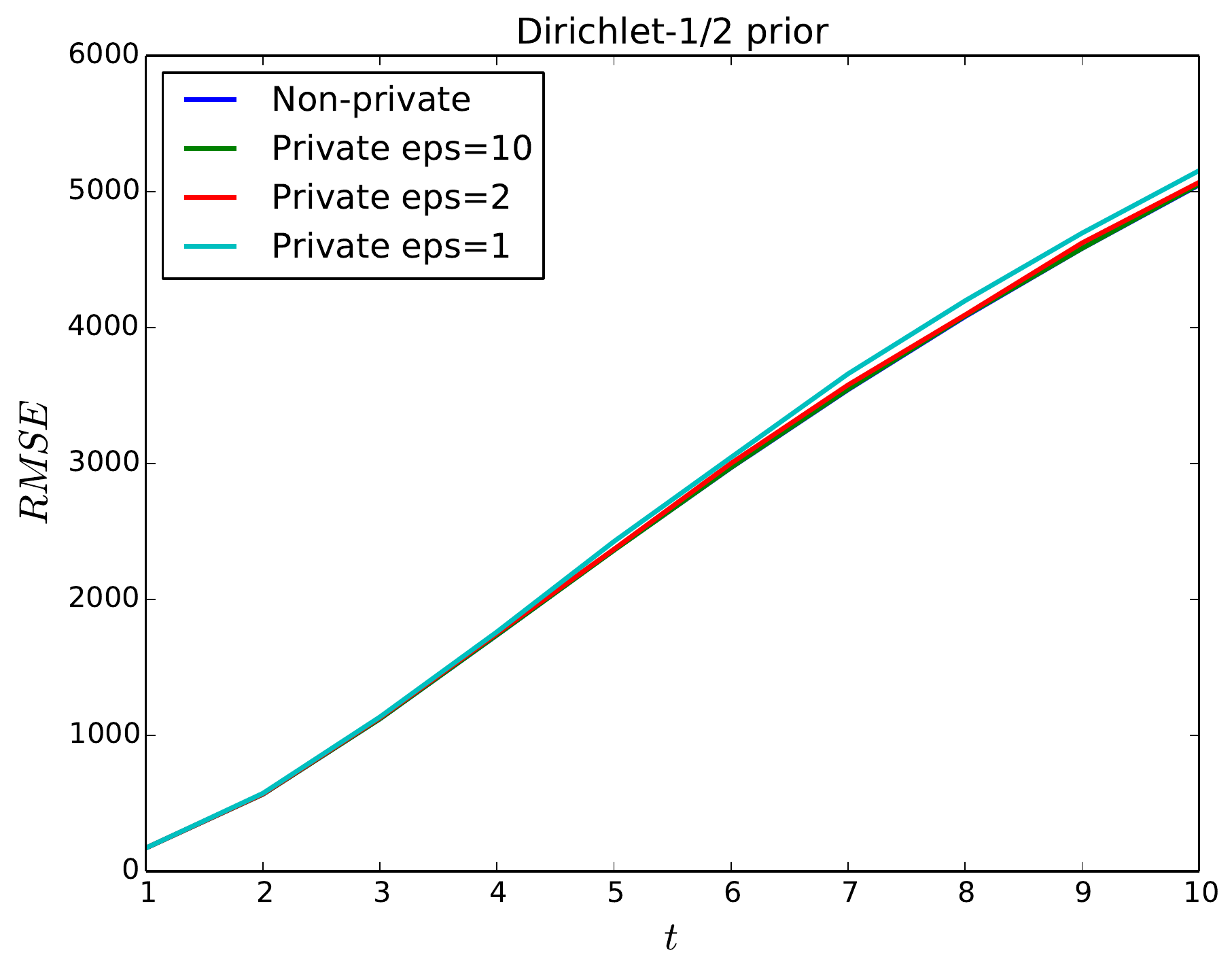}
\end{minipage}
}
\caption{Comparison between the private estimator with the non-private SGT when $k=100000$.} 
\label{fig:coverage-k100000}
\end{figure*}

%%%
%%
\chapter{Private Distribution Estimation}
\label{cha:estimation}
 
\section{Introduction}
\label{sec:de_applications}
In this chapter, we explore the problem of private distribution estimation (defined in Section~\ref{sec:preliminary_problem_formulation}), which is one of the most fundamental problem in statistical inference. Given i.i.d. samples, what we want to estimate is the underlying distribution itself. 

In this chapter,
we will apply Theorem~\ref{thm:dp_fano} (private Fano's inequality) and Theorem~\ref{thm:assouad} (private Assouad's lemma) to some classic distribution estimation problems. Our results are summarized in Table~\ref{tab:pure} and Table~\ref{tab:approx}. Before presenting the results, we  firstly introduce the following theorem, which can be used to prove lower bounds on the sample complexity in this distribution estimation framework.  We remark that it can be viewed as a corollary of Theorem~\ref{thm:dp_fano}.

%\medskip
%\noindent\textit{Distribution estimation framework.} Let $\cQ$ be a collection of distributions over $\cX$, and for this $\cQ$, let $\cP=\cQ^{\ns}:=\{q^\ns:q\in\cQ\}$ be the collection of $n$-fold distributions over $\cX^\ns$ induced by i.i.d. draws from a distribution over $\cQ$. The parameter space is $\Theta=\cQ$, where $\theta(\q^n)=\q$, and $\ell$ is a distance measure between distributions in $\cQ$. 
%Let $\alpha>0$ be a fixed parameter. The sample complexity,  $S(\cQ, \ell, \alpha, \eps, \delta)$ is the smallest number of samples $n$ to make $R(\cQ^\ns, \ell, \eps, \delta)\le \alpha$, \ie
%\[
%S(\cQ, \ell, \alpha, \eps, \delta)= \min \{n: R(\cQ^\ns, \ell, \eps, \delta)\le \alpha\}.
%\]
%When $\delta = 0$, we denote the sample complexity by $S(\cQ, \ell, \alpha, \eps)$. We will state our results in terms of sample complexity. 

\begin{restatable}[$\eps$-DP distribution estimation]{theorem}{distest}
	\label{coro:fano}
	Given $\eps>0$, let $\cV=\{\q_1, \q_2,...,\q_M\} \subseteq \cQ$ be a set distributions over $\cX$ with size $M$, such that for all $i\ne j$, 	\begin{enumerate}[label=(\alph*)]
		\item $\ell \Paren{\q_i, \q_j} \ge 3 \tau$,
		\item $D_{KL} \Paren{\q_i,\q_j} \le \beta$,
		\item $d_{TV} \Paren{\q_i,\q_j} \le \gamma$,
	\end{enumerate}
	then
\[
		S(\cQ, \ell, \tau, \eps) = \Omega\Paren{\frac{\log M}{\beta}+ \frac{\log M}{\gamma \eps}}.
\]
\end{restatable}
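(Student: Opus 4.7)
The plan is to derive Theorem~\ref{coro:fano} as a direct application of Theorem~\ref{thm:dp_fano} ($\eps$-DP Fano's inequality) to the family of product distributions $\{\q_i^{\,\ns}\}_{i=1}^M$ over $\cX^\ns$. The three conditions (a)--(c) in the statement above are single-sample analogs of the three conditions in Theorem~\ref{thm:dp_fano}; the main work is to lift them to the $\ns$-fold product distributions that correspond to observing $\ns$ i.i.d. samples.

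Concretely, I would instantiate Theorem~\ref{thm:dp_fano} with $\cP = \cQ^\ns$, $\p_i = \q_i^{\,\ns}$, and $\theta(\q^\ns) = \q$. Condition (a) of Theorem~\ref{thm:dp_fano} then follows immediately from condition (a) of our statement with parameter $\alpha = 3\tau$. For condition (b), I would use tensorization of KL divergence:
\[
D_{KL}(\q_i^{\,\ns},\q_j^{\,\ns}) = \ns\cdot D_{KL}(\q_i,\q_j) \le \ns\beta.
\]
For condition (c), I would apply Lemma~\ref{lem:min_coupling} to couple $\q_i^{\,\ns}$ and $\q_j^{\,\ns}$ by independently coupling each coordinate via the optimal TV coupling, obtaining a coupling $(\Xon,\Yon)$ with $\expectation{\ham{\Xon}{\Yon}} \le \dtv{\q_i}{\q_j}\cdot \ns \le \gamma\ns$, so we may take $D = \gamma\ns$.

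Plugging $\alpha = 3\tau$, the KL parameter $\ns\beta$, and $D = \ns\gamma$ into the bound of Theorem~\ref{thm:dp_fano} yields
\[
R(\cQ^\ns, \ell, \eps) \ge \max\Bigl\{\tfrac{3\tau}{2}\bigl(1 - \tfrac{\ns\beta + \log 2}{\log M}\bigr),\ 1.2\,\tau\cdot \min\bigl\{1,\, M\,e^{-10\eps\ns\gamma}\bigr\}\Bigr\}.
\]
For this risk to be at most $\tau$ (the defining property of $S(\cQ,\ell,\tau,\eps)$), both arguments of the $\max$ must be at most $\tau$. The first constraint requires $\ns\beta \gtrsim \log M$, forcing $\ns = \Omega(\log M/\beta)$; the second requires $e^{10\eps\ns\gamma} \gtrsim M$, forcing $\ns = \Omega(\log M/(\eps\gamma))$. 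Adding the two lower bounds gives the claimed sample complexity.

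There is no conceptual obstacle: the argument is essentially a bookkeeping exercise. The only points that require care are (i) the factor of $3$ in the separation hypothesis $\ell(\q_i,\q_j)\ge 3\tau$, which absorbs the triangle-inequality slack in the estimation-to-testing reduction inside Theorem~\ref{thm:dp_fano}, and (ii) extracting clean $\Omega(\cdot)$ sample complexity bounds from the two risk lower bounds, which is straightforward algebra provided $M$ exceeds a suitable absolute constant.
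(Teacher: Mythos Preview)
Your proposal is correct and follows essentially the same approach as the paper: instantiate Theorem~\ref{thm:dp_fano} on the product distributions $\q_i^{\,\ns}$, use tensorization of KL for condition (b), use the maximal-coupling lemma (the paper states this inline as Lemma~\ref{lem:couplingdistance}, equivalent to your Lemma~\ref{lem:min_coupling}) to get $D\le n\gamma$ for condition (c), plug in, and read off the two sample-complexity lower bounds from the two terms of the $\max$.
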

\noindent\textbf{Remark.} With only conditions $(a)$ and $(b)$, we obtain 
the first term of the sample complexity lower bound which is the original 
Fano's bound for sample complexity. \newnewzs{By Pinsker's inequality, 
a bound on the KL divergence (Condition $(c)$) would imply a bound on TV 
distance (Condition $(b)$), i.e., $\gamma \le \sqrt{\beta/2}$. Hence 
Conditions $(a)$ and $(b)$ can also imply a lower bound on the sample 
complexity. We include all three conditions here since it is possible that in 
certain applications $\gamma \ll \sqrt{\beta/2}$, and hence a better bound 
can be obtained.}

\begin{proof}
Recall that $\cQ^{\ns} := \{ \q^\ns | \q \in \cQ \}$ is the set of induced distributions over $\cX^\ns$ and $\q^\ns \in \cQ^\ns, \theta(\q^\ns) = \q$. Then,
	$\forall i \neq j \in [M]$,
$\ell \Paren{\theta(\q_i^n),\theta(\q_j^n)} \ge 3 \tau$, and $D_{KL} \Paren{\q_i^n,\q_j^n} = n D_{KL} \Paren{\q_i,\q_j}  \le n \beta.$

{The following lemma is a corollary of maximal coupling~\cite{Hollander12}, which states that for two distributions there is a coupling of their $n$ fold product distributions with an expected Hamming distance $n$ times their total variation distance.}
	\begin{lemma}
		\label{lem:couplingdistance}
		Given distributions $\q_1,\q_2$ over $\cX$, there exists a coupling $(X,Y)$ between $\q_1^n$ and $\q_2^n$ such that  
		\[\expectation{\ham{X}{Y}} = \ns \cdot d_{TV} \Paren{\q_1,\q_2},
		\]
		where $X \sim \q_1^n$ and $Y \sim \q_2^n.$
	\end{lemma}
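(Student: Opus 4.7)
The plan is to construct the coupling explicitly, first at the single-sample level (the so-called \emph{maximal coupling}), and then extend it to $\ns$-fold product distributions by independent repetition. The expected Hamming distance will then fall out by linearity of expectation.

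First I would build the maximal coupling between $\q_1$ and $\q_2$. Let $r \eqdef d_{TV}(\q_1,\q_2)$, and for each $x\in\cX$ define the pointwise minimum $m(x) \eqdef \min\{\q_1(x), \q_2(x)\}$. A standard identity gives $\sum_{x\in\cX} m(x) = 1 - r$, so $m/(1-r)$ is a probability distribution (assuming $r<1$; the $r=1$ case is trivial since $\q_1,\q_2$ are mutually singular and we simply take $X,Y$ independent). The coupling $(X_1, Y_1)$ is then defined by flipping a biased coin $B$: with probability $1-r$, draw $Z$ from $m/(1-r)$ and set $X_1 = Y_1 = Z$; with probability $r$, independently draw $X_1$ from $(\q_1 - m)/r$ and $Y_1$ from $(\q_2 - m)/r$. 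A direct marginal check confirms $X_1\sim\q_1$, $Y_1\sim\q_2$, and by construction $\probof{X_1 \ne Y_1} \le r$; since TV distance is always a lower bound on any coupling's disagreement probability, the inequality is in fact an equality, $\probof{X_1 \ne Y_1} = r = d_{TV}(\q_1,\q_2)$.

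Next I would take $\ns$ independent copies of this coupling to produce $(X,Y) = ((X_1,\ldots,X_\ns),(Y_1,\ldots,Y_\ns))$. Independence across coordinates immediately gives $X \sim \q_1^\ns$ and $Y \sim \q_2^\ns$, so $(X,Y)$ is a valid coupling of the product distributions. Then by linearity of expectation,
\[
\expectation{\ham{X}{Y}} = \sum_{i=1}^{\ns} \probof{X_i \ne Y_i} = \ns \cdot d_{TV}(\q_1,\q_2),
\]
which is exactly the claim of the lemma.

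The only non-routine step is verifying the marginal property and the disagreement probability of the single-sample maximal coupling; both are standard but need the identity $\sum_x m(x) = 1-r$, which follows from $\|\q_1-\q_2\|_1 = \sum_x (\q_1(x)+\q_2(x) - 2m(x)) = 2 - 2\sum_x m(x)$ together with $d_{TV}(\q_1,\q_2) = \tfrac12\|\q_1-\q_2\|_1$. I do not foresee any substantive obstacle: the construction is classical, and the extension to product distributions is immediate once the single-sample coupling is in hand.
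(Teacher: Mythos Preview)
Your proposal is correct and follows exactly the standard maximal-coupling construction that the paper invokes; indeed, the paper does not spell out a proof of this lemma but simply states it as ``a corollary of maximal coupling~\cite{Hollander12}'' (and elsewhere calls the same fact ``folklore, based on the coupling interpretation of total variation distance''). Your write-up supplies precisely the details the paper omits.
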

\noindent By Lemma~\ref{lem:couplingdistance}, $\forall i,j \in [M]$, there exists a coupling $(X, Y)$ between $\q_i^n$ and $\q_j^n$ such that $\expectation{\ham{X}{Y}} \le n\gamma. $
	Now by Theorem~\ref{thm:dp_fano},
	\begin{align}
	R(\cQ^*, \ell, \eps) \ge \max \Bigg\{\frac{3\tau}{2} \left(1 - \frac{n \beta + \log 2}{\log M}\right), 1.2\tau \min\left\{1, \frac{M}{e^{10\eps n \gamma}}\right\} \Bigg\}.
	\end{align}
Therefore, for $R(\cQ^\ns, \ell, \eps) \le \tau$, 
	\[
	S(\cQ, \ell, \tau, \eps) = \Omega\Paren{\frac{\log M}{\beta}+ \frac{\log M}{\gamma \eps}}.
	\]
\end{proof}

\noindent We now present examples of distribution classes we consider.

\medskip
\noindent\textit{$k$-ary discrete distribution estimation.} {Suppose $\cX=[k]:=\{1, \ldots, \ab\}$, and $\cQ:=\Delta_k$ is the simplex of $k$-ary distributions over $[k]$.  We consider estimation in both total variation and $\ell_2$ distance.} 

\medskip
\noindent\textit{$(k,d)$-product distributions.} Consider $\cX=[k]^d$, and let $\cQ:=\Delta_k^d$ be the set of product distributions over $[k]^d$, where the marginal distribution on each coordinate is over $[k]$ and independent of the other coordinates. We study estimation under total variation distance. A special case of this is Bernoulli product distributions ($k=2$), where each of the $d$ coordinates is an independent Bernoulli random variable.  

\medskip
\noindent\textit{$d$-dimensional Gaussian mixtures.} Suppose $\cX=\RR^{\dims}$, and $\cG_{\dims} := \{ \cN (\mu, I_d): \norm{\mu}\le R\}$ is the set of all Gaussian distributions in $\RR^d$ with bounded mean and identity covariance matrix. The bounded mean assumption is unavoidable, since by~\cite{BunKSW2019}, it is not possible to learn a single Gaussian distribution under pure DP without this assumption. {We consider 
\[
\cQ = \cG_{\ab, \dims} :=\left\{\sum_{j=1}^{\ab} w_j \p_j: \forall j \in [k], w_j \ge 0, \p_j\in\cG_\dims, w_1+\ldots+w_k =1\right\},
\] the collection of mixtures of $k$ distributions from $\cG_\dims$.} 

\medskip

\subsection{Results}
\label{sec:de_results}

\begin{table*}[htb]
\centering
      \begin{tabular}{| c | c | c |}
      \hline
      %&\multicolumn{2}{c|} {\bf Sample Complexity Bounds}  \\ \hline
      {\bf Problem} & {\bf Upper Bounds} & {\bf Lower Bounds} \\ \hline
      {\bf $\ab$-ary} & \multicolumn{2}{c|} {\parbox[c][1.2cm]{10cm}{\centering$ \Theta \Paren{\frac{\ab}{\alpha^2}+\frac{\ab}{\alpha\eps}}$~(\cite{DiakonikolasHS15}, Theorem~\ref{thm:pure-dv})} }\\\hline
 
       {\bf $\ab$-ary, $\ell_2$ distance} &\parbox[c][1.5cm]{5cm}{\centering $O\Paren{\frac{1}{\dist^2}+\min\Paren{\frac{\sqrt{\ab}}{\dist\eps},\frac{\log\ab}{\dist^2\eps}  } }$ \\\small{(Theorem~\ref{thm:pure-l2}) }}
       & \parbox[c]{5cm}{\centering $\Omega \Paren{\frac{1}{\dist^2}+\min\Paren{\frac{\sqrt{\ab}}{\dist\eps},\frac{\log (\ab \dist^2)}{\dist^2\eps}  } }$ \\\small{(Theorem~\ref{thm:pure-l2}) } } \\ \hline
       
       {\bf product distribution} & \parbox[c][1.5cm]{5cm}{\centering $O\Paren{\ab\dims \log \Paren{ \frac{\ab\dims}{\dist}} \Paren{\frac1{\dist^2}+\frac1{\dist\eps}}}$ \\\small{\cite{BunKSW2019} }}
       &\parbox[c]{5cm}{\centering $\Omega \Paren{{\ab\dims}\Paren{\frac1{\alpha^2}+\frac1{\alpha \eps}}}$ \\\small{(Theorem~\ref{thm:main_product}) }}\\\hline
       
              {\bf Gaussian mixtures} & \parbox[c][1.5cm]{5cm}{\centering ${O}\Paren{kd\log(\frac{dR}{\alpha})(\frac1{\alpha^2}+ \frac1{\alpha \eps})}$ \\\small{\cite{BunKSW2019} }}
       &\parbox[c]{5cm}{\centering $\Omega \Paren{{\ab\dims}\Paren{\frac1{\alpha^2}+\frac1{\alpha \eps}}}$ \\\small{(Theorem~\ref{thm:main_Gaussian}) }}\\\hline
       
      \end{tabular}
    \caption{\label{tab:pure} Summary of the sample complexity bounds for $\eps$-DP discrete distribution estimation. Unless mentioned, the bounds are all for estimation under total variation distance.}
\end{table*}

%\begin{table}[htbp]
%\centering
%      \begin{tabular}{| c | c |}
%      \hline
%      {\bf Problem} & {\bf Sample Complexity Bounds}    \\ \hline
%      
%      {\bf $\ab$-ary , $d_{TV}$ distance} & {\bf Non-private algorithms:  $\Theta \Paren{\frac{\ab}{\alpha^2}}$ } \\
%      &{\bf $(\eps,\delta)$-DP algorithms:} \\
%      &$\Omega \Paren{\frac{k}{\alpha^2}+\frac{k}{\alpha(\eps+\delta)}}\le S(\Delta_k, d_{TV}, \alpha, \eps, \delta) \le O \Paren{\frac{k}{\alpha^2}+\frac{k}{\alpha\eps}}$~[Theorem~\ref{thm:ApproximateDiscreteTotalVariation}] \\\hline
%        {\bf $\ab$-ary, $\ell_2$ distance} & {\bf Non-private algorithms:  $\Theta \Paren{\frac{1}{\alpha^2}}$ } \\
%       &{\bf $(\eps,\delta)$-DP algorithms:} \\
%              & \multicolumn{1}{|l|}{{\bf IF} $\dist \le \ab^{-\frac1{2}}$,}   \\  
%             & $ \Omega \Paren {\frac{1}{\alpha^2}+\frac{\sqrt{\ab}}{\alpha (\eps+\delta)}} \le S(\Delta_k, \ell_2, \alpha, \eps, \delta) \le O \Paren{\frac{1}{\alpha^2}+\frac{\sqrt{\ab}}{\alpha\eps}}$ \\ 
%                    & \multicolumn{1}{|l|}{{\bf ELSE}} \\
%       & $\Omega\Paren{\frac{1}{\dist^2}+\frac{1}{\dist^2(\eps+\delta)}}\le  S(\Delta_k, \ell_2, \alpha, \eps, \delta) \le O\Paren{\frac{1}{\dist^2}+\frac{\log\ab}{\dist^2\eps}}$~[Theorem~\ref{thm:ApproximateDiscreteL2}]\\ \hline
%      \end{tabular}
%    \caption{\label{table2} \newhz{Summary of the sample complexity bounds for $(\eps,\delta)$-DP discrete distribution learning.}}
%\end{table}

\begin{table*}[htb]
\centering
      \begin{tabular}{| c | c | c |}
      \hline
      %&\multicolumn{2}{c|} {\bf Sample Complexity Bounds}  \\ \hline
      {\bf Problem} & {\bf Upper Bounds} & {\bf Lower Bounds} \\ \hline
      {\bf $\ab$-ary} &  \parbox[c][1.5cm]{5cm}{\centering $O\Paren{\frac{\ab}{\dist^2}+\frac{\ab} {\dist\eps}}$ \\\small{(\cite{DiakonikolasHS15}, Theorem~\ref{thm:ApproximateDiscreteTotalVariation})} }
       &\parbox[c]{5cm}{\centering $\Omega \Paren{\frac{\ab}{\alpha^2}+\frac{\ab}{\alpha (\eps+\delta)}}$ \\\small{(Theorem~\ref{thm:ApproximateDiscreteTotalVariation})} }\\\hline
 
       {\bf $\ab$-ary, $\ell_2$ distance} &\parbox[c][1.5cm]{5cm}{\centering 
       $O 
       \Paren{\frac{1}{\dist^2}+\min\Paren{\frac{\sqrt{\ab}}{\dist\eps},\frac{\log\ab}{\dist^2\eps}
         } }$ \\\small{(Theorem~\ref{thm:ApproximateDiscreteL2}) }}
       & \parbox[c]{5cm}{\centering $\Omega \Paren{\frac{1}{\dist^2}+\min\Paren{\frac{\sqrt{\ab}}{\dist(\eps+\delta)},\frac{1}{\dist^2(\eps+\delta)}  } }$ \\\small{(Theorem~\ref{thm:ApproximateDiscreteL2}) } } \\ \hline
       
              {\makecell { \bf product distribution \\ ($k = 2$)}} & \parbox[c][1.5cm]{5cm}{\centering $O\Paren{\dims \log \Paren{ \frac{\dims}{\dist}} \Paren{\frac1{\dist^2}+\frac1{\dist\eps}}}$ \\\small{\cite{KamathLSU18, BunKSW2019}}}
       &\parbox[c]{5cm}{\centering $\Omega \Paren{\frac{\dims}{\alpha^2}+\frac{\dims}{\alpha (\eps + \delta)}}$ \\\small{(Theorem~\ref{thm:ApproximateProductTotalVariation},~\cite{KamathLSU18}) }}\\ \hline
      \end{tabular}
    \caption{\label{tab:approx} Summary of the sample complexity bounds for $(\eps,\delta)$-DP discrete distribution estimation. Unless mentioned, the bounds are all for estimation under total variation distance.}
\end{table*}

\medskip

\noindent\textbf{Applications of Theorem~\ref{thm:dp_fano}.}
We apply Corollary~\ref{coro:fano} and obtain sample complexity lower bounds for the tasks mentioned above under pure  differential privacy.

\medskip
{\noindent\textit{$\ab$-ary distribution estimation.} Without privacy constraints, the sample complexity of $\ab$-ary discrete distributions under total variation, and $\ell_2$ distance is $\Theta(k/\alpha^2)$ and $\Theta(1/\alpha^2)$ respectively, achieved by the empirical estimator. Under $\eps$-DP constraint, an upper bound of $O \Paren{\ab/\alpha^2+\ab/\alpha\eps}$
%$O \Paren{\frac{\ab}{\alpha^2}+\frac{\ab}{\alpha\eps}}$
samples for total variation distance is known using Laplace 
mechanism~\cite{DworkMNS06} (e.g.~\cite{DiakonikolasHS15}). In 
Theorem~\ref{thm:pure-dv}, we establish the sample complexity of this 
problem by providing a lower bound that matches this upper 
bound. \newnewzs{The bound shows that when $\eps \ll \alpha$, the cost 
due 
to privacy dominates the statistical error and when $\eps \ge \alpha$, the 
privacy cost is almost negligible. The same break point (up to logarithmic 
factors) has also been observed for product distributions and mixtures of 
Gaussian distributions, as listed below.}} 

Under $\ell_2$ distance, in Theorem~\ref{thm:pure-l2} we design estimators and establish their optimality whenever $\dist< \ab^{-1/2}$ or $\dist \ge \ab^{-0.499}$, which contains almost all the parameter range. Note that under $\ell_2$ distance, estimation without privacy has sample complexity independent of $k$, whereas {an unavoidable} logarithmic dependence on $\ab$ is introduced due to privacy requirements. The results are presented in Section~\ref{sec:pdp_kary}. 
 
\medskip
\noindent\textit{$(\ab,\dims)$-product distribution estimation.} For $(\ab,\dims)$-product distribution estimation under $\eps$-DP,~\cite{BunKSW2019} proposed an algorithm that uses
%$O\Paren{\ab\dims \log \Paren{ \frac{\ab\dims}{\dist}} \Paren{\frac1{\dist^2}+\frac1{\dist\eps}}}$ 
$O\Paren{\ab\dims \log \Paren{ \ab\dims/\dist} \Paren{1/\dist^2+1/\dist\eps}}$ samples. 
%For the lower bound, they argue that given the existence of an $\dist$-packing set of size M, there exists a lower bound of $\Omega\Paren{\frac{M}{\eps}}$.\amargin{I dont understand the previous sentence about packing. Could we clarify a bit?} However, no packing set is explicitly constructed in the paper. 
In this chapter, we present a lower bound of $\Omega\Paren{{\ab\dims}/{\dist^2}+{\ab\dims}/{\dist\eps}}$, which matches their upper bound up to logarithmic factors. For Bernoulli product distributions, \cite{KamathLSU18} proved a lower bound of $\Omega\Paren{{d}/{\alpha \eps}}$ under $(\eps,{3}/{64\ns})$-DP, which is naturally a lower bound for pure DP. %Their proof is based on fingerprinting lemma~\cite{SteinkeU15}, \newhz{ which is defined for Boolean hypercube. It remains unclear whether the proof can be extended to $k$-ary product distributions.}  
The details are presented in Section~\ref{sec:product}.

\medskip
\noindent\textit{Estimating Gaussian mixtures.}{~\cite{BunKSW2019} provided an upper bound of $\widetilde{O}\Paren{{\ab\dims}/{\dist^2}+{\ab\dims}/{\dist\eps}}$ samples.} Without privacy, a tight bounds of $\Omega({\ab\dims}/{\alpha^2})$ was shown in~\cite{SureshOAJ14, DaskalakisK14, AshtianiBHLMP18}. {In this chapter, we prove a lower bound of $\Omega\Paren{{\ab\dims}/{\dist^2}+{\ab\dims}/{\dist\eps}}$, which matches the upper bound up to logarithmic factors.} \newzs{For the special case of estimating a single Gaussian ($k=1$), a lower bound of $\ns = {\Omega}\Paren{{\dims}/{(\alpha\eps\log \dims)}}$ was given in~\cite{KamathLSU18} for $(\eps,{3}/{64\ns})$-DP, which implies a lower bound that is $\log d$ factor weaker than our result under pure DP.}
	%Our lower bound for $k = 1$ improves their result under pure DP.} The details are given in Section~\ref{sec:gaussian}.

%\newhzn{
%\noindent\textbf{Remark.} Using the classic packing lower bound instead of Corollary~\ref{coro:fano} leads to loose lower bounds for these estimation problems. For example, in the $\ab$-ary distribution estimation problem, the packing lemma can only give a lower bound of $\ns = \Omega\Paren{\frac{\ab \log \Paren{1/\dist}}{\eps}}$ instead of the optimal $\ns = \Omega\Paren{\frac{\ab}{\dist \eps}}$ lower bound, where there exists an exponential gap in the parameter $\frac1\dist$. This is non-trivial when $\dist$ is small, e.g., $\dist = \frac1{\ab}$.}

\medskip
\noindent\textbf{Applications of Theorem~\ref{thm:assouad}.} As remarked earlier, Theorem~\ref{thm:dp_fano} only works for pure DP (or approximate DP with very small $\delta$). 
%However, it was easy to use as Corollary~\ref{coro:fano}. 
Assouad's lemma can be used to obtain lower bounds for distribution estimation under $(\eps,\delta)$-DP. For $k$-ary distribution estimation under $TV$ distance, we get a lower bound of $\Omega\Paren{{k}/{\alpha^2} + {k}/{\alpha(\eps + \delta)}}$. This shows that even up to $\delta=O(\eps)$, the sample complexity for $(\eps,\delta)$-DP is the same as that under $\eps$-DP.

\newzs{For Bernoulli ($k = 2$) product distributions, \cite{KamathLSU18} provides an efficient $(\eps, \delta)$-DP algorithm that achieves an upper bound of $O\Paren{\dims \log \Paren{ \dims/\dist} \Paren{1/\dist^2+1/\dist\eps}}$.\footnote{\newzs{The algorithm in~\cite{BunKSW2019} works for $\eps$-DP and general $k$ but it is not computationally efficient.}}} The lower bound $\Omega ({d}/{\alpha^2} + {d}/{\alpha \eps})$ obtained in~\cite{KamathLSU18} by fingerprinting holds for small values of $\delta = O(1/\ns)$. Note by the definition of DP, if $\delta>1/\ns$, a DP algorithm can blatantly disregard the privacy of $\delta\ns$ users. Therefore in most of the literature, $\delta$ is assumed to be $O(1/\ns)$. We want to make a complimentary remark that we
%\footnote{\newhzn{In fact, their lower bound only holds given the loss is in the form of expectation. When the loss is in the form of high probability, they derive a worse lower bound of $\Omega\Paren{\frac{d}{\dist\eps\log d}}$. However, our lower bound holds for both cases.}} \snote{I think we agreed on not putting this footnote since it is not well justified in the current write up. Am I right?}
% We 
can obtain the same lower bound all the way up to $\delta = O(\eps)$. This shows that there is no gain even if we compromise the privacy of a $\delta$ fraction of users. Therefore, there is no incentive to do it. We describe the details about these applications in Section~\ref{sec:adp_applications}. 
\subsection{Related and Prior Work}
\label{sec:de_related}

Protecting privacy generally comes at the cost of performance degradation. Previous literature has studied various problems and established utility privacy trade-off bounds, including distribution estimation, hypothesis testing, property estimation, empirical risk minimization, etc~\cite{ChaudhuriMS11, Lei11, BassilyST14, DiakonikolasHS15, CaiDK17, AcharyaSZ18, KamathLSU18, AliakbarpourDR18, AcharyaKSZ18}. 

There has been significant recent interest in differentially private distribution estimation.~\cite{DiakonikolasHS15} gives upper bounds for privately learning $\ab$-ary distributions under total variation distance.~\cite{KamathLSU18, BunKSW2019, KarwaV18} focus on high-dimensional distributions, including product distributions and Gaussian distributions. As discussed in the previous section, our proposed lower bounds improve upon their lower bounds in various settings. \cite{BunNSV15} studies the problem of privately estimating a distribution in Kolmogorov distance, which is weaker than total variation distance. Upper and lower bounds for private estimation of the mean of product distributions in $\ell_\infty$ distance, heavy tailed distributions, and Markov Random fields are studied in~\cite{BlumDMN05, DworkMNS06, SteinkeU17a, BunUV18, KamathSU20, ZhangKKW20}.

Several estimation tasks including distribution estimation and hypothesis testing have also been considered under the distributed notion of local differential privacy, e.g.,~\cite{Warner65, KasiviswanathanLNRS11, ErlingssonPK14, DuchiJW13, KairouzBR16, WangHWNXYLQ16, Sheffet17, YeB18, GaboardiR17, AcharyaSZ18a, AcharyaS19, AcharyaCFT18}.

\section{$\eps$-DP Distribution Estimation} \label{sec:pdp_estimation}
In this section, we use Corollary~\ref{coro:fano} to prove sample complexity lower bounds for various $\eps$-DP distribution estimation problems. The general idea is to construct a subset of distributions in $\cQ$ such that they are close in both $TV$ distance and $KL$ divergence while being separated {in the loss function $\ell$}. The larger the subsets we construct, {the better} the lower bounds we can get. In Section~\ref{sec:pdp_kary}, we {derive} sample complexity lower bounds for $k$-ary distribution estimation under both \emph{TV} and $\ell_2$ distance that are tight up to constant factors. Tight sample complexity lower bounds up to logarithmic factors for $(k,d)$-product distributions and $d$-dimensional Gaussian mixtures are derived in Section~\ref{sec:product} and~\ref{sec:gaussian} respectively. 

{Corollary~\ref{coro:fano} requires a \emph{packing} of distributions with pairwise distance at least $3\tau$ apart in $\ell$. A standard method to construct such distributions is using results from coding theory.}

\newzs{We start with some definitions. An $h$-ary code of length $k$ is a set $\cC\subseteq \{0,1, \ldots, h-1\}^k$, and each $c\in\cC$ is a \emph{codeword}. The \emph{minimum distance} of a code $\cC$ is the smallest Hamming distance between two codewords in $\cC$. The code is called binary when $h = 2$. The weight of a binary codeword $c\in\cC$ is $wt(c)= |\{i:c_i=1\}|$, the number of 1's in $c$. A binary code $\cC$ is a \emph{constant weight code} if each $c\in\cC$ has the same weight.  We now present some useful variants of the classic Giblert Varshamov bounds on the existence of codes with certain properties. We prove these in Section~\ref{sec:codes}.
}
{
\begin{lemma}
\label{lem:GV}
Let $l$ be an integer at most $k/2$ and at least $20$. There exists a constant weight binary code $\cC$ which has code length $k$, weight $l$, minimum distance $l/4$ with $|\cC|\ge \Paren{\frac{k}{2^{7/8}l}}^{7l/8}$.
\end{lemma}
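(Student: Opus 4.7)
The approach is a Gilbert--Varshamov greedy argument restricted to constant-weight vectors. Let $V$ denote the set of all weight-$l$ binary vectors of length $k$, so $|V|=\binom{k}{l}$. Iteratively pick an arbitrary $c\in V$, add it to $\cC$, and delete from $V$ every vector at Hamming distance strictly less than $l/4$ from $c$. When $V$ is exhausted, $\cC$ is a constant-weight code of weight $l$ with minimum distance at least $l/4$, and $|\cC|\ge \binom{k}{l}/N$, where $N$ is the largest ``bad ball,'' namely the maximum, over weight-$l$ vectors $c$, of the number of weight-$l$ vectors within Hamming distance strictly less than $l/4$ of $c$.

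The next step is to bound $N$. Two weight-$l$ vectors are always at an even Hamming distance, and a vector at distance $2j$ from $c$ is obtained by flipping $j$ of the $l$ ones in $c$ and $j$ of the $k-l$ zeros, so
\[
N=\sum_{j=0}^{\lceil l/8\rceil-1}\binom{l}{j}\binom{k-l}{j}.
\]
Since $l\le k/2$, the ratio of consecutive summands $(l-j)(k-l-j)/(j+1)^2$ is at least $49$ throughout the summation range, so the sum is dominated by its last term and $N\le (49/48)\binom{l}{\lfloor l/8\rfloor}\binom{k-l}{\lfloor l/8\rfloor}$. The final step is to combine this with sharp Stirling-style estimates: $\binom{l}{l/8}\binom{k-l}{l/8}\le \mathrm{poly}(l)\cdot(64e^2(k-l)/l)^{l/8}$, together with the sharp lower bound $\binom{k}{l}\ge (\mathrm{const}/\sqrt{l})\cdot(k/l)^l(k/(k-l))^{k-l}$. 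The factor $(k/(k-l))^{k-l}\ge e^{l(1-l/k)}\ge e^{l/2}$ (where the last inequality uses $l\le k/2$) combines with the $e^{l/4}$ inside $(64e^2)^{l/8}=2^{3l/4}e^{l/4}$ to leave the residual $e^{l/4}/2^{3l/4}=2^{-0.389\,l}$; pairing this residual with $(k/l)^l$ and rewriting $(k/l)^l=(k/l)^{7l/8}(k/l)^{l/8}$ absorbs the stray $(k/l)^{l/8}$ into $(k/(k-l))^{l/8}\ge 1$, producing a lower bound of the form $(k/l)^{7l/8}\cdot 2^{-0.389\,l}$, which is sharper than the target $(k/l)^{7l/8}\cdot 2^{-0.766\,l}=(k/(2^{7/8}l))^{7l/8}$. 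The hypothesis $l\ge 20$ is then used to absorb the sub-exponential Stirling prefactors into this exponential slack.

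The main obstacle is the careful bookkeeping of constants. A naive application of $\binom{n}{j}\le(en/j)^j$ together with $\binom{k}{l}\ge(k/l)^l$ yields only a constant of roughly $(8e)^{2/7}\approx 2.41$ in the denominator of the base, which is weaker than the advertised $2^{7/8}\approx 1.83$; recovering the sharper constant requires retaining the $(k/(k-l))^{k-l}$ factor in Stirling's lower bound for $\binom{k}{l}$, which is super-polynomially large in $k$ when $l$ is a constant fraction of $k$ and is exactly what produces the extra $e^{l/2}$ needed to shave the constant down. Once this factor is kept, the remaining calculation is a matter of routine manipulation of exponents.
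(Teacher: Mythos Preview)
Your proposal is correct and follows the same core strategy as the paper: apply the Gilbert--Varshamov bound for constant-weight codes (the paper cites this as a lemma of Graham--Sloane), then control the ratio $\binom{k}{l}/\sum_{j\le l/8}\binom{l}{j}\binom{k-l}{j}$ with Stirling-type estimates, absorbing polynomial prefactors via $l\ge 20$.

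The only notable difference is in the algebra used to extract the constant $2^{7/8}$. You keep Stirling's full lower bound $\binom{k}{l}\gtrsim (k/l)^l(k/(k-l))^{k-l}$ and argue that the factor $(k/(k-l))^{k-l}\ge e^{l/2}$ is exactly what beats the $(8e)^{l/4}$ coming from the ball-size upper bound. The paper instead loosens $\binom{k-l}{l/8}$ to $\binom{k}{l/8}$ and then uses the exact identity
\[
\frac{\binom{k}{l}}{\binom{k}{l/8}}=\prod_{i=0}^{7l/8-1}\frac{k-l/8-i}{l-i}\ge\Bigl(\frac{k-l/8}{l}\Bigr)^{7l/8},
\]
since each factor in the product is increasing in $i$. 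This telescoping trick sidesteps the need to track $(k/(k-l))^{k-l}$ altogether and reduces the remaining work to a single Stirling estimate on $\binom{l}{l/8}$. Your route requires slightly more bookkeeping but makes explicit why the naive bound $\binom{k}{l}\ge(k/l)^l$ would fail to reach the stated constant, which is a useful observation.
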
} 

%\noindent{An $h$-ary code of length-$k$ is a subset of $\{0,1,\ldots, h-1\}^k$. 
\begin{lemma}
\label{lem:constantGV2}
There exists an $h$-ary code $\cH$ with code length $\dims$ and minimum Hamming distance $\frac{\dims}{2}$, which satisfies that $\absv{\cH} \ge (\frac{h}{16})^{\frac{\dims}{2}}$.
\end{lemma}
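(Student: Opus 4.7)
The plan is to prove this via the classical Gilbert--Varshamov greedy construction, combined with a direct upper bound on the volume of a Hamming ball. Since the lemma loses a factor of $16$ in the base, the argument does not need to be sharp at any step, which makes things easy.

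First, I would invoke (by a short self-contained greedy argument) the Gilbert--Varshamov bound in its standard form: there exists an $h$-ary code $\cH$ of length $\dims$ with minimum distance $\dims/2$ such that
\[
|\cH| \;\ge\; \frac{h^{\dims}}{V_h(\dims, \dims/2 - 1)},
\]
where $V_h(\dims, r) := \sum_{i=0}^{r}\binom{\dims}{i}(h-1)^i$ is the number of strings in $\{0,1,\ldots,h-1\}^{\dims}$ within Hamming distance $r$ of a fixed string. The argument is the usual greedy one: pick codewords one by one; as long as the union of Hamming balls of radius $\dims/2-1$ around the chosen codewords does not cover $\{0,\ldots,h-1\}^{\dims}$, we may add one more codeword at distance $\ge \dims/2$ from all previous ones.

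Next I would upper bound the ball volume by pulling out the largest factor $(h-1)^i$ and using the standard inequality $\sum_{i=0}^{\dims/2}\binom{\dims}{i} \le 2^{\dims}$:
\[
V_h(\dims, \dims/2 - 1) \;\le\; (h-1)^{\dims/2}\sum_{i=0}^{\dims/2}\binom{\dims}{i} \;\le\; 2^{\dims}(h-1)^{\dims/2}.
\]
Combining with the GV bound yields
\[
|\cH| \;\ge\; \frac{h^{\dims}}{2^{\dims}(h-1)^{\dims/2}} \;=\; \Paren{\frac{h^2}{4(h-1)}}^{\dims/2}.
\]
Since $h^2 \ge h(h-1)$ for all $h \ge 1$, we have $h^2/(4(h-1)) \ge h/4 \ge h/16$, so $|\cH| \ge (h/16)^{\dims/2}$, as desired. (In fact, this proof gives the stronger bound $(h/4)^{\dims/2}$; the slack to $16$ is presumably left in the statement so that the lemma can absorb minor tweaks when invoked.)

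There is no real obstacle here: the only thing to watch is the straightforward step of bounding the volume, and making sure the off-by-one between radius $\dims/2-1$ and $\dims/2$ is handled (which it is, since we simply enlarge the ball, making the denominator larger and the bound weaker, which is the direction we want). The whole argument fits in a short paragraph and does not require any structural assumption on $h$ or $\dims$ beyond $\dims$ being even.
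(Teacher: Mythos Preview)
Your proposal is correct and follows essentially the same approach as the paper: invoke the standard $h$-ary Gilbert--Varshamov bound and then crudely upper bound the Hamming-ball volume. The paper bounds the sum by $\tfrac{\dims}{2}\binom{\dims}{\dims/2}h^{\dims/2}$ (number of terms times the largest term) and then uses $\dims\cdot 2^{\dims}\le 4^{\dims}$ to reach $(h/16)^{\dims/2}$, whereas your version of the volume bound is slightly tighter and already gives $(h/4)^{\dims/2}$; either way the argument is the same one-paragraph GV computation.
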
}

%We start with some definitions. A binary code of length $k$ is a set $\cC\subseteq \{0,1\}^k$, and each $c\in\cC$ is a \emph{codeword}. The weight of a binary codeword $c\in\cC$ is $wt(c)= |\{i:c_i=1\}|$, the number of 1's in $c$. $\cC$ is a \emph{constant weight code} if each $c\in\cC$ has the same weight. The \emph{minimum distance} of a code $\cC$ is the smallest Hamming distance between two codewords in $\cC$. We now present some useful variants of the classic Giblert Varshamov bounds on the existence of codes with certain properties. We prove these in Section~\ref{sec:codes}.
%
%{
%	\begin{lemma}
%		\label{lem:GV}
%		Let $l$ be an integer at most $k/2$ and at least $20$. There exists a constant weight binary code $\cC$ which has code length $k$, weight $l$, minimum distance $l/4$ with $|\cC|\ge \Paren{\frac{k}{2^{7/8}l}}^{7l/8}$.
%\end{lemma}} 
%
%\noindent{An $h$-ary code of length-$k$ is a subset of $\{0,1,\ldots, h-1\}^k$. 
%	\begin{lemma}
%		\label{lem:constantGV2}
%		There exists an $h$-ary code $\cH$ with code length $\dims$ and minimum Hamming distance $\frac{\dims}{2}$, which satisfies that $\absv{\cH} \ge (\frac{h}{16})^{\frac{\dims}{2}}$.
%\end{lemma}}

\subsection{$\ab$-ary Distribution Estimation} \label{sec:pdp_kary}
We establish the sample complexity of $\eps$-DP $\ab$-ary distribution estimation under $TV$ and $\ell_2$ distance. 
%\puredtv*
\begin{theorem}
\label{thm:pure-dv}
The sample complexity of $\eps$-DP $\ab$-ary distribution estimation under $TV$ distance  is
\begin{align}
S_{\tt DE}(\Delta_{\ab}, d_{TV}, \alpha, \eps)=\Theta \Paren{\frac{\ab}{\alpha^2}+\frac{\ab}{\alpha\eps}}.
\end{align}
\end{theorem}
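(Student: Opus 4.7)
The upper bound follows from a standard application of the Laplace mechanism to the empirical estimator: compute the histogram of counts $(N_1,\ldots,N_k)$, which has $\ell_1$-sensitivity $2$, add independent $\mathrm{Lap}(2/\eps)$ noise to each coordinate, and renormalize (or project back to the simplex). The empirical distribution contributes $\ell_1$ error $O(\sqrt{k/n})$ by standard VC/Bretagnolle--Huber arguments, while the Laplace noise contributes an $\ell_1$ error of $O(k/(n\eps))$ in expectation. Setting both below $2\alpha$ gives the upper bound $n = O(k/\alpha^2 + k/(\alpha\eps))$.

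For the lower bound, the $\Omega(k/\alpha^2)$ term is the classical non-private minimax rate and can be recovered from the first term of Corollary~\ref{coro:fano} (or cited from existing work). The main task is the privacy term $\Omega(k/(\alpha\eps))$, which I plan to obtain by instantiating Corollary~\ref{coro:fano} with a packing constructed from the constant-weight binary code in Lemma~\ref{lem:GV}. Concretely, take $l = k/2$ so the lemma supplies a set $\mathcal{C} \subseteq \{0,1\}^k$ of weight-$l$ codewords with pairwise Hamming distance at least $l/4 = k/8$ and $|\mathcal{C}| \ge (2^{1/8})^{7k/16}$, hence $\log M = \Omega(k)$. For a parameter $\gamma' = \Theta(\alpha)$ to be tuned, associate to each $c\in\mathcal{C}$ the distribution $p_c$ on $[k]$ defined by $p_c(i) = (1+\gamma')/k$ if $c_i=1$ and $p_c(i) = (1-\gamma')/k$ if $c_i=0$; since $wt(c)=k/2$, this is a probability distribution.

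The three conditions of Corollary~\ref{coro:fano} are then straightforward to verify: (a) for $c \ne c'$ with Hamming distance $d_H$, one computes $d_{TV}(p_c,p_{c'}) = d_H \gamma'/k \ge \gamma'/8$, so choosing $\gamma' = 24\alpha$ gives $d_{TV}(p_c,p_{c'}) \ge 3\alpha$, i.e.\ $\tau = \alpha$; (c) the same identity combined with $d_H \le k$ gives $d_{TV}(p_c,p_{c'}) \le \gamma' = O(\alpha)$, so we may take $\gamma = O(\alpha)$; (b) for the KL bound, use $D_{KL}(p_c\|p_{c'}) \le \chi^2(p_c,p_{c'}) = \sum_i (p_c(i)-p_{c'}(i))^2/p_{c'}(i)$, which evaluates to $4 d_H \gamma'^2/(k(1-\gamma')) \le O(\gamma'^2) = O(\alpha^2)$, so $\beta = O(\alpha^2)$. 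Plugging $\log M = \Omega(k)$, $\beta = O(\alpha^2)$, and $\gamma = O(\alpha)$ into Corollary~\ref{coro:fano} yields
\[
S_{\tt DE}(\Delta_k, d_{TV}, \alpha, \eps) = \Omega\!\left(\frac{\log M}{\beta} + \frac{\log M}{\gamma\eps}\right) = \Omega\!\left(\frac{k}{\alpha^2} + \frac{k}{\alpha\eps}\right),
\]
which combined with the upper bound proves the theorem.

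The main obstacle I anticipate is cosmetic rather than conceptual: one must pick $\gamma'$ small enough that the $\chi^2 \le O(\gamma'^2)$ bound is valid (e.g.\ $\gamma' < 1/2$), which is automatic for $\alpha$ smaller than a constant, but requires a separate (trivial) argument for $\alpha$ close to the trivial threshold. A secondary technicality is ensuring $l = k/2 \ge 20$ so that Lemma~\ref{lem:GV} applies, which only excludes a constant range of $k$ that can be absorbed into the $\Omega(\cdot)$. Neither issue threatens the plan.
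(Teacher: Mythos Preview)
Your proposal is correct and follows essentially the same route as the paper: the upper bound via the Laplace mechanism on the empirical histogram, and the lower bound via Corollary~\ref{coro:fano} instantiated with exactly the same constant-weight code from Lemma~\ref{lem:GV} (with $l=k/2$) and the same perturbed-uniform packing $p_c(i)=(1\pm 24\alpha)/k$. Your verification of conditions (a)--(c) and the handling of the boundary cases ($\alpha$ near a constant, small $k$) match the paper's treatment.
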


\begin{theorem}
\label{thm:pure-l2}
The sample complexity of $\eps$-DP $\ab$-ary distribution estimation under $\ell_2$ distance is
\begin{align}
S_{\tt DE}(\Delta_{\ab}, \ell_2, \alpha, \eps) =\Theta \Paren{\frac{1}{\dist^2}+\frac{\sqrt{\ab}}{\dist\eps}},\ \ \ \text{for $\dist < \frac1{\sqrt{\ab}}$, and}
\end{align}
\begin{align}
\Omega\Paren{\frac{1}{\dist^2}+\frac{\log(\ab\dist^2)}{\dist^2\eps}}\le S(\Delta_{\ab}, \ell_2, \alpha, \eps) \le O\Paren{\frac{1}{\dist^2}+\frac{\log\ab}{\dist^2\eps}} \ \ \ \text{for $\dist > \frac1{\sqrt{\ab}}$.} 
\end{align}
\end{theorem}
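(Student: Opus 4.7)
My plan is to split the proof into upper and lower bounds in each of the two stated regimes. The $1/\alpha^2$ component of every bound follows from the classical empirical estimator (whose expected squared $\ell_2$ error is $\le 1/n$) together with the non-private lower bound, so the substantive work is to handle the privacy-dependent terms.

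For the upper bounds, in the small-$\alpha$ regime ($\alpha < 1/\sqrt k$) I would simply privatize the empirical distribution by adding independent $\mathrm{Lap}(2/(n\eps))$ noise to each coordinate. Since the $\ell_1$-sensitivity of the empirical vector is $2/n$, this is $\eps$-DP, and the total squared noise is $O(k/(n\eps)^2)$, yielding $n = O(\sqrt k/(\alpha\eps))$. In the large-$\alpha$ regime I would add the same Laplace noise and then threshold: set $\bar p_i = \tilde p_i$ whenever $\tilde p_i \ge \tau = c\log k/(n\eps)$ and $\bar p_i = 0$ otherwise. A union bound over the $k$ independent Laplace variables shows that, with high probability, every coordinate with $p_i \le \tau/2$ is dropped and every coordinate with $p_i \ge 2\tau$ is kept. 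The surviving ``heavy'' coordinates are at most $O(1/\tau)$ in number and each contributes $O(1/(n\eps)^2)$ squared noise, while the dropped ``light'' coordinates contribute $\sum_{i:p_i\le 2\tau} p_i^2 \le 2\tau$. Balancing yields a total $\ell_2^2$ error of $O(\log k/(n\eps))$, giving $n = O(\log k/(\alpha^2\eps))$.

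For the lower bounds I would invoke Corollary~\ref{coro:fano} with carefully chosen packings. In the small-$\alpha$ regime I would use the Paninski-style family $p^{(c)}_{2i-1} = (1+c_i\delta)/k$, $p^{(c)}_{2i}=(1-c_i\delta)/k$ indexed by $c\in\{\pm 1\}^{k/2}$ drawn from a binary Gilbert--Varshamov code of length $k/2$ with minimum Hamming distance $k/8$. Setting $\delta = \Theta(\alpha\sqrt k)$ yields pairwise $\ell_2 \ge 3\alpha$, $\mathrm{TV} = O(\alpha\sqrt k)$, and $\mathrm{KL} = O(k\alpha^2)$ via the $\chi^2$-bound; combined with $|M|=2^{\Omega(k)}$, Corollary~\ref{coro:fano} returns exactly $n = \Omega(1/\alpha^2 + \sqrt k/(\alpha\eps))$. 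For the large-$\alpha$ regime I would instead use a sparse-spike construction. For subsets $S\subseteq[k]$ of cardinality $l = \Theta(1/\alpha^2)$ drawn from the constant-weight code of Lemma~\ref{lem:GV}, define $p_S(i) = 1/k + \delta$ for $i\in S$ and $p_S(i) = 1/k - \delta l/(k-l)$ otherwise, with $\delta = \Theta(\alpha/\sqrt l)$. This enforces positivity and pairwise $\ell_2\ge 3\alpha$. The packing size is $\log |M| = \Theta(l\log(k/l)) = \Theta((1/\alpha^2)\log(k\alpha^2))$, while the pairwise TV is $\Theta(1)$. Since TV is constant the KL term of the bound is vacuous, so I would instead lean on the coupling-based second term of DP Fano's inequality (the $M/e^{10\eps D}$ inside Theorem~\ref{thm:dp_fano}), which directly extracts $n = \Omega(\log(k\alpha^2)/(\alpha^2\eps))$.

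The main obstacle is the large-$\alpha$ lower bound, where three quantities must be simultaneously balanced: the spike height $\delta$ must be large enough for pairs of distributions to be $3\alpha$ apart in $\ell_2$ yet small enough that $1/k-\delta l/(k-l)$ stays positive; the subset size $l$ must both realize the right $\ell_2$ gap and fit inside $[k]$ (which is why this construction only works for $\alpha \gtrsim 1/\sqrt k$); and because the TV distance is forced to be $\Theta(1)$ in this regime, the KL-based Fano term gives no useful bound and everything must be extracted from the coupling/group-privacy term. A secondary technical point is the threshold choice $\tau = \Theta(\log k/(n\eps))$ in the upper bound: the $\log k$ is unavoidable because we must union bound over $k$ Laplace noises to guarantee that light coordinates are killed uniformly, and the delicate step is showing that the squared error contribution of the surviving heavy coordinates, the squared bias from dropped light coordinates, and the classical empirical variance all combine to exactly $O(1/n + \log k/(n\eps))$.
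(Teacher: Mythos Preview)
Your proposal is correct in spirit, but differs from the paper in several places, and your large-$\alpha$ upper-bound sketch has a small technical oversight.

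\textbf{Upper bound.} The paper uses a \emph{single} algorithm for all $\alpha$: add $\Lap(2/(n\eps))$ noise and then $\ell_2$-project onto $\Delta_k$. The $\sqrt k/(\alpha\eps)$ bound follows from $\E\|h-p\|_2^2 = O(1/n + k/(n\eps)^2)$ together with non-expansiveness of projection; the $\log k/(\alpha^2\eps)$ bound comes from applying a convex-body inequality of Bassily (Corollary~2.3 in~\cite{Bassily18}), which gives $\|\hat p - p^{\mathrm{emp}}\|_2^2 \le 4\max_j Z_j$ and hence $\E\|\hat p - p\|_2^2 = O(1/n + \log k/(n\eps))$. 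Your thresholding algorithm is a legitimate alternative and reaches the same bound, but your argument ``every coordinate with $p_i\le\tau/2$ is dropped and every coordinate with $p_i\ge 2\tau$ is kept'' does not follow from a union bound over the Laplace variables alone: the threshold is on $\tilde p_i = p^{\mathrm{emp}}_i + Z_i$, and you have not controlled the empirical fluctuation $p^{\mathrm{emp}}_i - p_i$. The fix is easy (argue via $p^{\mathrm{emp}}_i$ instead of $p_i$: kept coordinates have $p^{\mathrm{emp}}_i\ge 3\tau/4$ so there are $\le 4/(3\tau)$ of them; dropped coordinates have $p^{\mathrm{emp}}_i\le 5\tau/4$ so $\sum_{\mathrm{dropped}} (p^{\mathrm{emp}}_i)^2 \le 5\tau/4$, and then pass to $p_i$ via $\sum_i (p_i - p^{\mathrm{emp}}_i)^2 = O(1/n)$), but as written your sketch skips this step.

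\textbf{Lower bound, $\alpha<1/\sqrt k$.} The paper does not build a fresh packing; it simply observes $\|p-q\|_1\le \sqrt k\,\|p-q\|_2$, whence $S(\Delta_k,\ell_2,\alpha,\eps)\ge S(\Delta_k,d_{\mathrm{TV}},\sqrt k\,\alpha,\eps)$, and then quotes the already-proved TV lower bound. Your Paninski packing is of course the same packing underlying that TV bound, so you are reproving rather than reducing.

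\textbf{Lower bound, $\alpha\ge 1/\sqrt k$.} Both you and the paper use a constant-weight code with $l=\Theta(1/\alpha^2)$ and invoke the coupling term of DP-Fano with the trivial bound $d_{\mathrm{TV}}\le 1$. The paper's packing is simpler: $p_c(i)=c_i/l$, i.e.\ uniform on the support of the codeword, which automatically has $\|p_c-p_{c'}\|_2 \ge 1/(2\sqrt l)=\Omega(\alpha)$. Your spike-over-uniform construction also works, but requires balancing the spike height against the positivity constraint $1/k - \delta l/(k-l)>0$, which is unnecessary with the paper's choice.
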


For $\ell_2$ loss, our bounds are tight within constant factors when $\dist < \frac1{\sqrt{\ab}}$ or $\dist > \ab^{-(\frac12- 0.001)}$.
%With regarding to approximate DP, we have matching upper and lower bounds when $\eps \ge O(\delta)$, which is a weak condition (practically people $\eps = O(1)$ and $\delta = O\Paren{\frac1{\ns}}$). 

\subsubsection {Total variation distance}
\label{sec:puredp-tv}

In this section, we derive the sample complexity of $\eps$-DP $k$-ary distribution estimation under $TV$ distance, which is stated in Theorem~\ref{thm:pure-dv}.

\medskip

\noindent \textbf{Upper bound:}
\cite{DiakonikolasHS15} provides an upper bound based on Laplace mechanism~\cite{DworkMNS06}. {We state the algorithm and a proof for completeness and we will use it for estimation under $\ell_2$ distance.}

Given a $\Xon$ from an unknown distribution $\p$ over $[\ab]$. Let $M_x(\Xon)$ be the number of appearances of $x$ in $\Xon$. {Let $p^{\text{erm}}$ be the empirical estimator where $ p^{\text{erm}}(x) :=\frac{M_x(\Xon)}{\ns}$. We note that changing one $X_i$ in $X^n$ can change at most two coordinates of $p^{\text{erm}}$, each by at most $\frac1n$, and thus changing one $X_i$ changes the $p^{\text{erm}}$ by at most $2/n$ in $\ell_1$ distance. Therefore, by~\cite{DworkMNS06}, adding a Laplace noise of parameter $2/\ns\eps$ to each coordinate of $p^{\text{erm}}$ makes it $\eps$-DP. For $x\in[k]$, let
\[ 
h(x) = p^{\text{erm}}(x)+\text{Lap}\Paren{\frac{2}{n \eps}}, \]
where $Lap(\beta)$ is a Laplace random variable with parameter $\beta$.}
The final output $\hat{p}$ is the projection of $h$ on the simplex $\Delta_k$ {in $\ell_2$ distance}.
%We first give the proof of the sample complexity under $\ell_2$ loss. Then we can attain the result with respect to  total variation distance by Cauchy-Schwartz Inequality.
The expected $\ell_2$ loss between $h$ and $p$ can be upper bounded by
\begin{align}
\Paren{\expectation{\norm{h-p}}}^2 &\le \expectation{\norms{h-p}} \le \expectation{\norms{p^{\text{erm}}-p}}+\expectation{\norms{h - p^{\text{erm}}}}, \nonumber
\end{align}
where the first inequality comes from the Jensen's inequality and the second inequality comes from the triangle inequality.

The first term $\expectation{\norms{p^{\text{erm}} -p}}$ is upper bounded by $\frac{1}{n}$ by an elementary analysis of the empirical estimator. For the second term, note that $\expectation{\norms{h - p^{\text{erm}}}} = \sum_{i=1}^\ab \expectation{Z_i^2}$, where $\forall i, Z_i \sim \text{Lap}\Paren{\frac{2}{\ns\eps}}$. %\amargin{Which property of Laplace distribution?} 
{By the variance of Laplace distribution}, we have $\expectation{\norms{p^{\text{erm}} -h}} = O\Paren{\frac{k}{n^2 \eps^2}}$.  Therefore $\expectation{\norm{h-p}}\le O\Paren{\frac1{\sqrt{\ns}}+\frac{\sqrt{k}}{\ns\eps}}$.
%Therefore $\expectation{\norm{\hat{p}-p}} \le \frac1{10}\alpha$ when $n= O \Paren{\frac{1}{\alpha^2}+\frac{\sqrt{k}}{\alpha\eps}}$.
%By simply applying Markov inequality, we have $\norm{h-\p} \le \alpha$ with probability higher than $\frac{9}{10}$.

Note that since $\Delta_k$ is convex, $\norm{\hat{p}-p} \le\norm{h-p}$. Finally, by Cauchy-Schwarz Inequality, $\expectation{\absvn{\hat{p}- p}} \le \sqrt{\ab}\cdot \expectation{\norm{\hat{p}-\p}}  \le \sqrt{\ab}\cdot\expectation{\norm{h-\p}} = O\Paren{\sqrt{\frac{k}{n}}+ {\frac{k}{n \eps}}} $. Therefore $\expectation{\absvn{\hat{p}- p}} \le \alpha$ when $n= O \Paren{\frac{k}{\alpha^2}+\frac{k}{\alpha\eps}}$.

%As to the lower bounds, we first give a proof under total variation distance by a standard packing argument. Then we extend it to $\ell_2$ distance by using Cauchy–Schwarz inequality. We begin by constructing packing set.
\medskip

\noindent \textbf{Lower bound.}
{We will construct a large set of distributions such that the conditions of Corollary~\ref{coro:fano} hold.} 
\ignore{Our lower bound is through the following two steps. The first step is to construct a set of distributions $\cV_\ab \subset \Delta_k$, such that each pair of distributions in $\cV_\ab$ is $\Omega(\dist)$-separated in total variation distance and with $|\cV_\ab|$ as large as possible. 
In the second step, we bound the $KL$ divergence and $TV$ distance between each pair of distributions, and apply Corollary~\ref{coro:fano} to get the sample complexity lower bound.}
%The following claim is a corollary of Varshamov-Gilbert's lemma (Lemma 4.7 in~\cite{Massart07}). First we need the following binary code to construct packing set.
%\begin{restatable}{claim}{GVcode}
%\label{clm:GV}
%There exists binary code $\cC$ which satisfies the following conditions:
%\begin{enumerate}[label=(\alph*)]
%\item $\absv{\cC} > 2^{\frac{\ab}{8}}$,
%\item $\min_{a,b \in \cC} \ham{a}{b}>\frac{\ab}{8}$,
%\item $\forall c \in \cC, \absv{i:c(i)=1}=\frac{\ab}{2}$.
%\end{enumerate}
%\end{restatable}
%\amargin{We repeat the statement above as well ... I would rather we reduce the number of repeats.}
%WLOG assume $\alpha < 1/24$. Given $c \in \cC$, consider the following distribution  $\p_c$ in $\Delta_\ab$
%\[
%\p_c (i)  = \left\{
%\begin{array}{rcl}
%\frac{1+24\dist}{\ab} ,       &      & \text{when ~ }c(i)=1,\\
%\frac{1-24\dist}{\ab},       &      & \text{when ~ }c(i)=0.
%\end{array} \right.
%\]
%\begin{restatable}{claim}{GVcode}
%\label{clm:GV}
%There exists binary code $\cC$ which satisfies the following conditions:
%\begin{enumerate}[label=(\alph*)]
%\item $\absv{\cC} > 2^{\frac{\ab}{8}}$,
%\item $\min_{a,b \in \cC} \ham{a}{b}>\frac{\ab}{8}$,
%\item $\forall c \in \cC, \absv{i:c(i)=1}=\frac{\ab}{2}$.
%\end{enumerate}
%\end{restatable}
{Suppose $\alpha < 1/48$. Applying Lemma~\ref{lem:GV} with $l={\ab}/{2}$, there exists a constant weight binary code $\cC$ of weight $k/2$, and minimum distance $k/8$, and $|\cC|>2^{7k/128}$. For each codeword $c\in\cC$, a distribution $\p_c$ over $[k]$ is defined as follows:
\[
\p_c (i)  = \left\{
\begin{array}{rcl}
\frac{1+24\dist}{\ab} ,       &      & \text{if ~ }c_i=1,\\
\frac{1-24\dist}{\ab},       &      & \text{if ~ }c_i=0.
\end{array} \right.
\]}

{We choose $\cV = \{ \p_c : c \in \cC \}$ to apply Corollary~\ref{coro:fano}. By the minimum distance property, any two distributions in $\cV$ have a total variation distance of at least $24\alpha/k\cdot k/8=3\alpha$, and at most $24\alpha$. Furthermore, by using $\log (1+x) \le x$, we can bound the KL divergence between distributions by their $\chi^2$ distance, 
\[d_{KL} (\p, \q) \le  \chi^2 (\p,\q)  = \sum_{x=1}^k \frac{\Paren{\p(x) -\q(x)}^2 }{\q(x)} < {10000\alpha^2}.
\]
Setting $\tau=\alpha$, $\gamma=24\alpha$, and $\beta=10000\alpha^2$, and using $\log M > 7\ab/64$ in Corollary~\ref{coro:fano}, we obtain $S(\Delta_{\ab}, d_{TV}, \alpha, \eps)= \Omega \Paren{\frac{k}{\alpha^2}+\frac{k}{\alpha\eps}}.$}
%\amargin{I dont think we can take $\alpha<1/24$, that will break this above, since $q(x)$ can be very close to 0. I have changed it to $\alpha<1/48$. Also, could you check the constant 10000 above.}
\subsubsection{$\ell_2$ distance}
In this section, we derive the sample complexity of $\eps$-DP $\ab$-ary distribution estimation under $\ell_2$ distance, which is stated in Theorem~\ref{thm:pure-l2}.

\medskip

\noindent \textbf{Upper bound:}
We use the same algorithm as in Section~\ref{sec:puredp-tv}. Following the same argument as in Section~\ref{sec:puredp-tv}, the square of expected $\ell_2$ loss of $\hat{p}$ can be upper bounded by
\begin{align}
\Paren{\expectation{\norm{\hat{p}-p}}}^2 &\le \expectation{\norm{h-p}^2} \le \expectation{\norms{p^{\text{erm}}-p}}+\expectation{\norms{h - p^{\text{erm}}}} = O\Paren{\frac1{\ns}+\frac{\ab}{\ns^2\eps^2}}. \nonumber
\end{align}
%And $\expectation{\norm{h-p}} \le \frac1{10}\alpha$ when $n= O \Paren{\frac{1}{\alpha^2}+\frac{\sqrt{k}}{\alpha\eps}}$.
Since $\Delta_k$ is convex, we have $\norm{\hat{p}-p} \le\norm{h-p}$. Moreover, the following lemma gives another bound for $\norm{\hat{p}-p}$ (See Corollary 2.3 in~\cite{Bassily18}).
\begin{lemma}
Let $L \subset \RR^d$ be a symmetric convex body of $\ab$ vertices $\{a_j\}_{j=1}^{\ab}$, and let $y \in L$ and $\bar{y} = y+z$ for some $z \in \RR^d$. Let $\hat{y} = \arg \min_{w\in L} \norm{w-\bar{y}}^2$. Then we must have 
$$ \norm{y-\hat{y}}^2 \le 4\max_{j\in[k]} \{ \langle z, a_j \rangle\}. $$
\end{lemma}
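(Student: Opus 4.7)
The plan is to prove this projection lemma using the standard variational characterization of the Euclidean projection onto a convex set, combined with the symmetry and polytope structure of $L$.

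First I would invoke the first-order optimality condition for projection onto a convex set. Since $\hat{y} = \arg\min_{w \in L} \|w - \bar y\|^2$ and $L$ is convex, for every $w \in L$ we have $\langle \bar y - \hat y, w - \hat y \rangle \le 0$. Plugging in $w = y \in L$ and substituting $\bar y = y + z$ yields $\|y - \hat y\|^2 \le \langle z, \hat y - y \rangle$. This is the standard one-line reduction that converts the projection error into a linear functional of the noise $z$.

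Next I would exploit the structure of $L$ to bound $\langle z, \hat y - y \rangle$. Since $L$ is symmetric and both $y, \hat y \in L$, we have $-y \in L$ by symmetry, and hence the midpoint $\tfrac{1}{2}(\hat y - y) = \tfrac{1}{2}\hat y + \tfrac{1}{2}(-y)$ lies in $L$ by convexity. Because $L$ is a polytope with vertex set $\{a_j\}_{j=1}^{k}$ (closed under negation by symmetry), this midpoint can be written as a convex combination $\sum_{j=1}^{k}\lambda_j a_j$ with $\lambda_j \ge 0$, $\sum_j \lambda_j = 1$. Therefore
\begin{equation*}
\langle z, \hat y - y\rangle \;=\; 2 \sum_{j=1}^{k} \lambda_j \langle z, a_j\rangle \;\le\; 2 \max_{j \in [k]} \langle z, a_j\rangle.
\end{equation*}
Note that the right-hand side is nonnegative because symmetry of $L$ guarantees that for every $a_j$ the point $-a_j$ is also a vertex, so $\max_j \langle z, a_j \rangle \ge \tfrac{1}{2}(\langle z, a_j\rangle + \langle z, -a_j\rangle) = 0$, and in fact $\max_j \langle z, a_j\rangle = \max_j |\langle z, a_j\rangle|$. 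Combining the two displayed inequalities gives $\|y-\hat y\|^2 \le 2 \max_j \langle z, a_j\rangle$, which is stronger than the claimed factor of $4$; the looser constant is harmless.

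I do not anticipate a serious obstacle: the only subtlety is ensuring that $\hat y - y$ is truly in $2L$, and this is precisely where the symmetry assumption ($-y \in L$) is used. If the reader prefers to avoid even this elementary argument, one may instead note that $\hat y - y \in L - L = 2L$ for any centrally symmetric convex body, which immediately yields the required representation $\hat y - y = 2\sum_j \lambda_j a_j$. The rest is routine.
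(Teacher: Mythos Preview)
Your proof is correct and in fact yields the sharper constant $2$ rather than $4$. The paper does not prove this lemma at all: it simply cites it as Corollary~2.3 of~\cite{Bassily18}, so there is no in-paper argument to compare against. Your two ingredients --- the variational inequality $\langle \bar y - \hat y, y - \hat y\rangle \le 0$ for the Euclidean projection, and the observation that $\hat y - y \in L - L = 2L$ for a symmetric convex body --- are exactly the standard route to this bound, and both steps are stated and justified cleanly.
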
 

From the lemma, we have $\expectation{\norm{\hat{p}-h}^2}\le 4\cdot \expectation{ \max_{j\in[\ab]} \absv{Z_j}}$, where $\forall j \in [\ab]$, $Z_j \sim \text{Lap}(\frac{2}{\ns\eps})$. Note that $\expectation{ \max \absv{Z_j}} = O\Paren{\frac{\log\ab}{\ns\eps}}$ due to the tail bound of Laplace distribution. We have $\Paren{\expectation{\norm{\hat{p}-p}}}^2 = O\Paren{\frac1{\ns}+ \frac{\log\ab}{\ns\eps}}$. Combined with the previous analysis,  $\Paren{\expectation{\norm{\hat{p}-p}}}^2 = O\Paren{\frac1{\ns}+ \min\Paren{\frac{\ab}{\ns^2\eps^2} ,\frac{\log\ab}{\ns\eps}} }$. Therefore $\expectation{\norm{\hat{p}-p}} \le \frac1{10}\alpha$ when $n= O \Paren{\frac{1}{\alpha^2}+ \min \Paren {\frac{\sqrt{k}}{\alpha\eps}, \frac{\log\ab}{\dist^2\eps} }}$.

\medskip

\noindent \textbf{Lower bound:}
We first consider the case when $\dist < \frac1{\sqrt{\ab}}$, where we can 
derive the lower bound simply by a reduction. By Cauchy-Schwarz 
inequality, for any estimator $\hat{p}$, $ \expectation{\absvn{\hat{p}-p}} 
\le \sqrt{\ab} \cdot \expectation{\norm{\hat{p}-p}}$. Therefore 
$S(\Delta_k,  \ell_2, \alpha, \eps) \ge S(\Delta_k,  d_{TV}, \sqrt{k}\alpha, 
\eps)$, which gives us $S(\Delta_k,  \ell_2, \alpha, \eps)= 
\Omega\Paren{\frac{1}{\dist^2}+\frac{\sqrt{\ab}}{\dist\eps}}$.

Now we consider $\dist \ge \frac1{\sqrt{k}}$. Note that it is enough if we prove the lower bound of $\Omega\Paren{\frac{\log \Paren{\alpha^2\ab}}{\dist^2\eps}}$, since $\Omega\Paren{\frac1{\dist^2}}$ %\zsmargin{How to prove this lower bound for $\dist \ge \frac1{\sqrt{k}}$?} 
is the sample complexity of non-private estimation problem for all range of $\dist$. Similarly, we follow Corollary~\ref{coro:fano}, except that we need to construct a different set of distributions. 

Without loss of generality, we assume $\dist<0.1$. Now we use the codebook in Lemma~\ref{lem:GV} to construct our distribution set. {We fix weight $l = \lfloor \frac1{50\dist^2} \rfloor$. Note that for any $x>2$, $\lfloor x \rfloor > \frac{x}{2}$. Then we have  $\frac1{100\dist^2} < \lfloor l \rfloor  \le \frac1{50\dist^2} $ since $\frac1{50\dist^2}>2$. Therefore we get a codebook $\cC$ with $|\cC| \ge (k\alpha^2)^{\frac{1}{200 \alpha^2}}$}. Given $c \in \cC$, we construct the following distribution  $\p_c$ in $\Delta_\ab$:
%$$ \p_c (i)  = \left\{
%\begin{array}{cl}
%\frac1{l} ,             & when ~ c(i)=1,\\
%0,             & when ~c(i)=0.
%\end{array} \right. $$
$$ \p_c (i)  = \frac1{l} c_i. $$

We use $\cV_\ab = \{ \p_c : c \in \cC \}$ to denote the set of all these distributions. It is easy to check that $\forall \p \in \cV_\ab$ is a valid distribution. Moreover, for any pair of distributions $\p, \q \in \cV_\ab$, we have $\norm{\p - \q}> \frac1{2\sqrt{l}} = \Omega\Paren{\dist}$.

For any pair $\p,\q \in \cV_k$, $d_{TV} (\p, \q) \le 1$, which is a naive upper bound for $TV$ distance. Finally by setting $\ell$ in Corollary~\ref{coro:fano} to be $\ell_2$ distance, we have $S(\Delta_{\ab}, \ell_2, \alpha, \eps)  =  \Omega \Paren{\frac{\log \absv{\cC} }{\eps}} = \Omega \Paren{\frac{\log ( k\alpha^2)}{\alpha^2\eps}}$.

\subsection{Product Distribution Estimation} \label{sec:product}

%\begin{definition}
%A $(\ab,\dims)$-product distribution is a distribution over $[\ab]^{\dims}$, such that its marginal distribution on each dimension is independent.
%\end{definition}

Recall that $\Delta_{\ab, \dims}$ is the set of all $(\ab,\dims)$-product distributions.~\cite{BunKSW2019} proves an upper bound of $O\Paren{\ab\dims \log \Paren{ \frac{\ab\dims}{\dist}} \Paren{\frac1{\dist^2}+\frac1{\dist\eps}}}$. We prove a sample complexity lower bound for $\eps$-DP $(\ab,\dims)$-product distribution estimation in Theorem~\ref{thm:main_product}, which is optimal up to logarithmic factors.

%\product*
\begin{theorem}
\label{thm:main_product}
The sample complexity of $\eps$-DP $(\ab,\dims)$-product distribution estimation satisfies
\begin{align}
&S_{\tt DE}(\Delta_{\ab, \dims}, d_{TV}, \alpha, \eps) = \Omega \Paren{\frac{\ab\dims}{\dist^2}+\frac{\ab\dims}{\dist \eps}}. \nonumber
\end{align}
\end{theorem}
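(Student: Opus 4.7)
The plan is to apply Corollary~\ref{coro:fano} (the $\eps$-DP Fano's inequality for distribution estimation) with a carefully designed packing of product distributions. Specifically, I will construct a subset $\cV \subseteq \Delta_{\ab,\dims}$ of size $|\cV| = 2^{\Omega(\ab \dims)}$ such that every pair of distributions in $\cV$ is separated by at least $3\alpha$ in total variation distance, while being simultaneously close in KL divergence (at most $O(\alpha^2)$) and in TV distance (at most $O(\alpha)$). Plugging these three parameters into Corollary~\ref{coro:fano} with $\tau=\alpha$, $\beta=O(\alpha^2)$, $\gamma=O(\alpha)$, and $\log M=\Omega(\ab\dims)$ immediately yields the claimed lower bound $\Omega(\ab\dims/\alpha^2+\ab\dims/(\alpha\eps))$.

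The packing is built in two layers. For the \emph{inner} layer, I reuse the Paninski-style construction from Section~\ref{sec:puredp-tv}: a binary constant-weight code $\cC$ from Lemma~\ref{lem:GV} of length $\ab$, weight $\ab/2$, minimum distance $\ab/8$ and $|\cC| \ge 2^{\Omega(\ab)}$. Each codeword $c\in\cC$ induces a $\ab$-ary marginal $p_c$ with $p_c(i)=(1\pm 2\eta)/\ab$ according to $c_i$, where I set $\eta = C\alpha/\sqrt{\dims}$ for a constant $C$ to be chosen. For the \emph{outer} layer, I apply Lemma~\ref{lem:constantGV2} with alphabet size $|\cC|$ to get a code $\cH \subseteq \cC^\dims$ of length $\dims$ and minimum Hamming distance $\dims/2$, with $|\cH| \ge (|\cC|/16)^{\dims/2} = 2^{\Omega(\ab\dims)}$ (for $\ab$ larger than a constant; the case of small $\ab$ is subsumed by the $\ab=2$ Bernoulli product bound of~\cite{KamathLSU18}). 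For each $\vec{c}=(c^{(1)},\ldots,c^{(\dims)})\in\cH$, define $P_{\vec c}=\bigotimes_{j=1}^{\dims} p_{c^{(j)}}$ and set $\cV=\{P_{\vec c}:\vec c\in\cH\}$.

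The KL and TV upper bounds are routine. By additivity of KL for product distributions and the per-coordinate $\chi^2$ bound already worked out in Section~\ref{sec:puredp-tv}, one gets $\mathrm{KL}(P_{\vec c},P_{\vec c'})\le \dims\cdot O(\eta^2)=O(\alpha^2)$, so $\beta=O(\alpha^2)$; Pinsker's inequality then gives $d_{TV}(P_{\vec c},P_{\vec c'})\le O(\alpha)$, so $\gamma=O(\alpha)$.

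The main obstacle is the TV \emph{lower} bound $d_{TV}(P_{\vec c},P_{\vec c'})\ge 3\alpha$; a naive Hellinger bound $d_{TV}\ge H^2/2$ only yields $\Omega(\alpha^2)$, which is too weak. To overcome this, I reduce to a Bernoulli product by the data-processing inequality: apply the deterministic coordinate-wise map $\sigma(x)=(c^{(1)}_{x_1},\ldots,c^{(\dims)}_{x_\dims})$ to both distributions. Under $P_{\vec c}$, each $\sigma_j(X_j)\sim \mathrm{Bern}(1/2+\eta)$, while under $P_{\vec c'}$, $\sigma_j(X_j)\sim \mathrm{Bern}(1/2+\eta\delta_j)$ with $\delta_j=1-2\,d_H(c^{(j)},c'^{(j)})/\ab$. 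The outer minimum-distance property ensures that $c^{(j)}\ne c'^{(j)}$ on at least $\dims/2$ coordinates, and on those the inner minimum distance $\ab/8$ forces $|1-\delta_j|\ge 1/4$, i.e.\ a Bernoulli bias gap $g_j\ge \eta/4$. The TV distance between two Bernoulli products with $\ge \dims/2$ coordinates each having gap $\ge \eta/4$ is $\Omega(\eta\sqrt{\dims})=\Omega(\alpha)$, which I would establish rigorously either via a Berry--Esseen approximation applied to the sufficient statistic $\sum_j \sigma_j(X_j)$ (whose means shift by $\Theta(\eta\dims)$ against a standard deviation $\Theta(\sqrt\dims)$) or by direct binomial computation. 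Choosing $C$ large enough makes this at least $3\alpha$, completing the verification and hence the proof.
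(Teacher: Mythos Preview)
Your proposal is correct and follows essentially the same two-layer construction as the paper: an inner constant-weight binary code from Lemma~\ref{lem:GV} to perturb each $\ab$-ary marginal, composed with an outer $|\cC|$-ary code from Lemma~\ref{lem:constantGV2} to pick one marginal per coordinate, followed by the same data-processing reduction to a Bernoulli product (your map $\sigma_j(x_j)=c^{(j)}_{x_j}$ is exactly the paper's indicator of $S_j$). The only real difference is in the last step: to lower bound the TV distance between the resulting Bernoulli products, the paper directly invokes Lemma~6.4 of~\cite{KamathLSU18}, which states that $\|\mu-\mu'\|_2\ge\alpha$ (with bounded biases) implies $d_{TV}\ge C\alpha$, whereas you propose a Berry--Esseen argument on the sufficient statistic; both are valid, but citing the existing lemma is shorter and avoids reproving a known fact. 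As a minor note, your symmetric $\pm 2\eta/\ab$ marginals are actually cleaner than the paper's one-sided perturbation, which as written does not quite sum to~$1$.
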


\begin{proof}
%We start with the construction of the distribution set. First we utilize the same binary code as in Claim~\ref{clm:GV}, restated below:
%\amargin{A total repeat again ...!}
%\GVcode*
{
We start with the construction of the distribution set. First we use the same binary code as in Lemma~\ref{lem:GV} with weight $l = \frac{\ab}{2}$.
Let $h := \absv{\cC}$ denote the size of the codebook. Given $j \in [h]$, we construct the following $\ab$-ary distribution $\p_j$ based on $c_j \in \cC$:}

{$$ \p_{j} (i)  = \frac{1}{\ab} +\frac{\dist}{\ab\sqrt{\dims}} \cdot \mathbb{I}\Paren{ c_{j,i}=1},$$}
{ where $ c_{j,i}$ denotes the $i$-th coordinate of $c_j$.}
%
%$$ \p_{j} (i)  = \left\{
%\begin{array}{rcl}
%\frac{1}{\ab} + \frac{\dist}{\ab \sqrt{ \dims}} ,       &      & when ~ c_j (i)=1,\\
%\frac{1}{\ab} - \frac{\dist}{\ab\sqrt{\dims}},       &      & when ~c_j (i)=0.
%\end{array} \right. $$

Now we have designed a set of $\ab$-ary distributions of size $h = \Omega \Paren{2^{\frac{7\ab}{128}}}$. To construct a set of product distributions, {we use the codebook construction in Lemma~\ref{lem:constantGV2} to get an $h$-ary codebook $\cH$ with length $d$ and minimum hamming distance $d/2$. Moreover, $|\cH| \ge (\frac{h}{16})^{\frac{\dims}{2}}$. }

Now we can construct the distribution set of $(\ab,\dims)$-product distributions.
Given $b \in \cH$, define
$$\dP_b = \p_{b_1} \times  \p_{b_2}\times \cdots \times \p_{b_d}.$$

Let $\cV_{\ab,\dims}$ denote the set of distributions induced by $\cH$. We want to prove that $\forall \dP \neq \dQ \in \cV_{\ab,\dims},$
\begin{align}
	\dtv{\dP}{\dQ} \ge C\dist, \label{eqn:dtv_prod}\\
	D_{KL} (\dP,\dQ)  \le 4\dist^2, \label{eqn:dkl_prod}
\end{align}
for some constant $C$. Suppose these two inequalities hold, using~\eqref{eqn:dkl_prod}, by Pinsker's Inequality, we get $d_{TV}(\dP, \dQ) \le \sqrt{2D_{KL} (\dP, \dQ) } \le 2\sqrt2\alpha$. Then using Corollary~\ref{coro:fano}, we can get
\begin{align}
&S_{\tt DE}(\Delta_{\ab, \dims}, d_{TV}, \alpha, \eps) = \Omega \Paren{\frac{\ab\dims}{\dist^2}+\frac{\ab\dims}{\dist \eps}}. \nonumber
\end{align}
Now it remains to prove~\eqref{eqn:dtv_prod} and~\eqref{eqn:dkl_prod}. For~\eqref{eqn:dkl_prod}, note that for any distribution pair $\dP,\dQ \in \cV_{\ab,\dims}$,
\begin{align}
	D_{KL} (\dP,\dQ) \le \dims \cdot \max_{i, j \in [h]} d_{KL}\Paren{\p_i, \p_j} \le 4\dist^2, \nonumber
\end{align}
\newzs{
where the first inequality comes from the additivity of $KL$ divergence for independent distributions and $\forall i, j \in [h]$, 
\[
	d_{KL}\Paren{\p_i, \p_j} = \sum_{x \in [k]} \p_i(x) \log \frac{\p_i(x)}{\p_j(x)} \le \sum_{x \in [\dims]} \frac{(\p_i(x) - \p_j(x))^2}{\p_j(x)} \le \ab \Paren{\frac{\dist}{\ab\sqrt{\dims}}}^2/ \frac{1}{\ab} = \frac{\alpha^2}{d}.
\]
}
Next we prove~\eqref{eqn:dtv_prod}. For any $b \in \cH$ and $\forall i \in[k]$, define set
\[
	S_i  = \{j \in [\ab] : c_{ b_{i},j} = 1\},
\]
which contains the locations of $+1$'s in the code at the $i$th coordinate of $b$. Based on this, we define a product distribution
\[
	\dP'_{b} = \prod_{i = 1}^{d} \cB(\mu_{i} ),
\]
where $\mu_{i} = \sum_{j \in S_i} \p_{b_{i}} (j)$ and $\cB(t)$ is a Bernoulli distribution with mean $t$. For any $b' \neq b \in \cH$, we define
\[
	\dP'_{b'} = \prod_{i = 1}^{d} \cB(\mu'_{i} ),
\]
 where $\mu'_{i} = \sum_{j \in S_i} \p_{b'_{i}}(j)$. Then we have:
 \[
 	\dtv{\dP'_{b}}{\dP'_{b'}} \le \dtv{\dP_{b}}{\dP_{b'}},
 \]
since $\dP'_{b}$ and $\dP'_{b'}$ can be viewed as a post processing of $\dP_{b}$ and $\dP_{b'}$ by mapping elements in $S_i$ to 1 and others to 0 at the $i$-th coordinate. Moreover, we have $\ham{b}{b'} \ge \frac{d}{2}$, and $\forall i$, if $b_{i} \neq b'_{i}$, we have $d_{H}(c_{b_{i}}, c_{b'_{i}}) > \frac{k}{8}$. By the definition of $\p_i$'s, we have
\[
	\| \mu_1 - \mu_2 \|_2^2 \ge \frac{d}{2} \times \left( \frac{\ab}{8} \times \frac{\dist}{\ab \sqrt{d}} \right)^2 = \frac{\dist^2}{128}.
\] 
By Lemma 6.4 in~\cite{KamathLSU18}, there exists a constant $C$ such that $\dtv{\dP'_{b}}{\dP'_{b'}} \ge C\dist$, proving~\eqref{eqn:dtv_prod}.

%start by looking at the marginal distribution of $\dP$ and $\dQ$ on the first coordinate, which we denote as $\p$ and $\q$.  Let $S(\p) = \{j \in [\ab] : \p(j) = \frac{1}{\ab} + \frac{\dist}{\ab \sqrt{ \dims}} \}$, and we design the following mapping $f: \Delta_{\ab} \to \Delta_2$ such that given distribution $\p^{\prime}$, $f (\p^{\prime}) = \cB(\mu_{\p^{\prime}})$, with $\mu_{\p^{\prime}} = \sum_{j \in S(\p)} \p^{\prime}(j)$. By the properties of the code $\cC$, we note that $\mu_{\p} = \frac12+ \frac{\dist}{\sqrt{\dims}}$ and $\mu_p - \mu_q \ge \frac{\dist}{8\sqrt{\dims}}$. Now we apply similar mapping on all the dimensions and after the mapping, we get two $(2,\dims)$-product distributions $\dP^{\prime}$ and $\dQ^{\prime}$, such that $\norm{\mu(\dP^{\prime}) - \mu(\dQ^{\prime})} \ge \frac{\dist}{8}$, where $\mu(\dP^{\prime})$ is the bias vector of the $(2,\dims)$-product distribution $\dP^{\prime}$. By Lemma 6.4 in~\cite{KamathLSU18}, there exists some constant $C$ such that $\dtv{\dP^{\prime}}{\dQ^{\prime}}  \ge C\dist$.  Besides, by the definition of TV distance, we have $\dtv{\dP}{\dQ} \ge \dtv{\dP^{\prime}}{\dQ^{\prime}}\ge C\dist$. 

\end{proof}

\subsection{Gaussian Mixtures 	Estimation} \label{sec:gaussian}

%In this section, we show that our private Fano's inequality can be applied to derive the sample complexity lower bound of learning mixtures of distributions.   

%\begin{definition}
%	Let $\cG$ be some set of distributions. A $\ab$-mixture of $\cG$ is a distribution with density $\sum_{j=1}^{\ab} w_j \p_j$, where each $\p_j \in \cG$ and $\sum_{j=1}^{\ab} w_j =1$.
%\end{definition}

Recall $\cG_{\dims} = \{ \cN (\mu, I_d): \norm{\mu}\le R\}$ is the set of $\dims$-dimensional spherical Gaussians with unit variance and bounded mean and $\cG_{\ab,\dims} = \{ \p: \p \text{ is a $\ab$-mixture of } \cG_{\dims} \}$ consists of mixtures of $k$ distributions in $\cG_{\dims}$.~\cite{BunKSW2019} proves an upper bound of $\widetilde{O}\Paren{\frac{\ab\dims}{\dist^2}+\frac{\dims}{\dist\eps}}$ for estimating $k$-mixtures of Gaussians. We provide a sample complexity lower bound for estimating mixtures of Gaussians in Theorem~\ref{thm:main_Gaussian}, which matches the upper bound up to logarithmic factors. 

\begin{theorem}
\label{thm:main_Gaussian}
Given $\ab \le \dims$ and $R\ge \sqrt{64\log\Paren{\frac{8\ab}{\dist}}}$, {or $\ab \ge \dims$ and $R\ge (\ab)^{\frac1\dims} \cdot \sqrt{64\dims \log\Paren{\frac{8\ab}{\dist}}}$,}
\begin{align}
&S_{\tt DE}(\cG_{\ab, \dims}, d_{TV}, \alpha, \eps) = \Omega \Paren{\frac{\ab\dims}{\dist^2}+\frac{\ab\dims}{\dist \eps}}. \nonumber
\end{align}
\end{theorem}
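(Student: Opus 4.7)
The plan is to invoke Corollary~\ref{coro:fano} with a carefully constructed packing $\cV\subset\cG_{\ab,\dims}$ of Gaussian mixtures whose size is $\exp(\Omega(\ab\dims))$, pairwise TV distance $\Omega(\dist)$, and pairwise KL divergence $O(\dist^2)$. The role of the mean-bound $R$ in the hypotheses is to guarantee enough room in the ball of radius $R$ to place $\ab$ well-separated ``component centers'' $v_1,\dots,v_\ab$ with $\norm{v_i-v_j}\ge \sqrt{64\log(8\ab/\dist)}$. When $\ab\le\dims$, I would choose the $v_i$'s along $\ab$ orthogonal directions of length $\sqrt{64\log(8\ab/\dist)}\le R$; when $\ab\ge\dims$, a volume/packing argument in the $\dims$-ball of radius $R$ provides $\ab$ such centers provided $R\ge \ab^{1/\dims}\sqrt{64\dims\log(8\ab/\dist)}$. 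This separation ensures $\dtv{\cN(v_i,I)}{\cN(v_j,I)}\ge 1-\dist/(8\ab)$, i.e.\ the components are essentially ``identifiable.''

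Next, around each $v_i$ I would form a local packing of means via a binary Gilbert–Varshamov constant-weight code (Lemma~\ref{lem:GV}) of length $\dims$, weight $\dims/2$, minimum distance $\dims/8$, and size $M_0\ge 2^{7\dims/128}$. For each codeword $c$ define the perturbed mean $\mu_{i,c}=v_i+\frac{\dist}{\sqrt{\ab\dims}}c$, and for a tuple $C=(c_1,\dots,c_\ab)$ of codewords (one per component) define the uniform mixture
\[
\dP_C \;=\; \frac1\ab\sum_{i=1}^{\ab}\cN(\mu_{i,c_i},I_\dims).
\]
This yields $|\cV|=M_0^{\ab}$, so $\log|\cV|=\Omega(\ab\dims)$.

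The main technical step is showing the two distance properties for $\dP_C,\dP_{C'}$ with $C\ne C'$. For TV distance, I would use the near-orthogonality of components: define the ``oracle'' mixture $\widetilde{\dP}_C$ on $[\ab]\times\RR^\dims$ that first outputs $i\sim\Unif[\ab]$ and then $X\sim\cN(\mu_{i,c_i},I)$; its marginal on $\RR^\dims$ is $\dP_C$, so $\dtv{\dP_C}{\dP_{C'}}\le\dtv{\widetilde{\dP}_C}{\widetilde{\dP}_{C'}}$. Conversely, because the $v_i$'s are $\sqrt{64\log(8\ab/\dist)}$-separated, the Bayes classifier from $X$ back to $i$ succeeds with probability $\ge 1-\dist/8$, which upgrades the TV distance of the marginal to (up to an additive $\dist/4$ loss) the TV distance of the joint, giving $\dtv{\dP_C}{\dP_{C'}}\ge \tfrac1\ab\sum_i \dtv{\cN(\mu_{i,c_i},I)}{\cN(\mu_{i,c_i'},I)}-\dist/4$. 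A standard Gaussian TV bound plus the code's minimum distance makes the sum $\Omega(\dist)$. For KL, I would use convexity / the log-sum inequality together with the fact that at distances $\sqrt{64\log(8\ab/\dist)}$ each cross-component Gaussian density ratio is at most $\dist/(8\ab)$, reducing the mixture KL to $\tfrac1\ab\sum_i D_{KL}(\cN(\mu_{i,c_i},I)\Vert\cN(\mu_{i,c_i'},I))+O(\dist^2)$. Since $D_{KL}(\cN(\mu,I)\Vert\cN(\mu',I))=\tfrac12\norm{\mu-\mu'}^2\le \tfrac12(\dist^2/(\ab\dims))\cdot 2\dims=\dist^2/\ab$, the total KL is $O(\dist^2)$; Pinsker's inequality then gives $\gamma=O(\dist)$.

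Plugging $\log M=\Omega(\ab\dims)$, $\tau=\Theta(\dist)$, $\beta=O(\dist^2)$, $\gamma=O(\dist)$ into Corollary~\ref{coro:fano} yields the claimed $\Omega(\ab\dims/\dist^2+\ab\dims/(\dist\eps))$. The hard part will be the two-way reduction between the mixture TV/KL and the component-wise TV/KL; the convenient feature is that we have pre-built a separation of $\sqrt{64\log(8\ab/\dist)}$ precisely so that the ``posterior identifiability'' error contributes only an $O(\dist)$ additive slack, rather than swamping the $\Theta(\dist)$ signal induced by the codebook perturbations.
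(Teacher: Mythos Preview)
Your overall architecture is right—place $k$ well-separated centers, wiggle each by a GV codeword in $\RR^d$, and invoke Corollary~\ref{coro:fano}. But the packing you build does \emph{not} satisfy the TV separation condition, so the application of the corollary fails.

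The issue is that you take the \emph{full} Cartesian product of per-component codebooks: $|\cV|=M_0^{\ab}$. Two tuples $C,C'$ can then differ in a single coordinate $i$. Your own lower bound gives
\[
\dtv{\dP_C}{\dP_{C'}}\ \ge\ \tfrac1{\ab}\sum_{i}\dtv{\cN(\mu_{i,c_i},I)}{\cN(\mu_{i,c_i'},I)}-\tfrac{\dist}{4}.
\]
When only one term is nonzero, and with your scaling $\tfrac{\dist}{\sqrt{\ab\dims}}$, that term is $\Theta\bigl(\tfrac{\dist}{\sqrt{\ab\dims}}\cdot\sqrt{\dims/8}\bigr)=\Theta(\dist/\sqrt{\ab})$, so the right-hand side is $\Theta(\dist/\ab^{3/2})-\dist/4$, which is \emph{negative}. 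Even if all $k$ coordinates differed, the sum would only be $\Theta(\dist/\sqrt{\ab})$, still not $\Omega(\dist)$. So ``the code's minimum distance makes the sum $\Omega(\dist)$'' is simply false: the inner code controls the size of each nonzero term, not how many terms are nonzero.

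The paper fixes this with \emph{two} simultaneous changes. First, it scales the per-component perturbation by $\dist/\sqrt{\dims}$ (not $\dist/\sqrt{\ab\dims}$), so that each differing component contributes TV $\Theta(\dist)$ while per-component KL is still $O(\dist^2)$; the mixture KL bound $O(\dist^2)$ then follows by convexity, without the extra $1/\ab$ factor you introduced. Second—and this is the key missing idea—it does \emph{not} use the Cartesian product; instead it passes the $h=|\cC|$ inner Gaussians through an outer $h$-ary code $\cH$ of length $\ab$ with minimum Hamming distance $\ab/2$ (Lemma~\ref{lem:constantGV2}). This forces any two mixtures to differ on at least $\ab/2$ components, so
\[
\frac1{\ab}\sum_i \dtv{\cdot}{\cdot}\ \ge\ \frac{\ab/2}{\ab}\cdot\Omega(\dist)\ =\ \Omega(\dist),
\]
while $\log|\cH|\ge \tfrac{\ab}{2}\log(h/16)=\Omega(\ab\dims)$ is still large enough for the Fano step. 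In short: you need both the $\dist/\sqrt{\dims}$ scaling \emph{and} the outer code; neither repair alone suffices.
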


\begin{proof}
	
{We first consider the case when $\ab \le \dims$ and $R\ge \sqrt{64\log\Paren{\frac{8\ab}{\dist}}}$.} Let $\cC$ denote the codebook in Lemma~\ref{lem:GV} with weight $l=\frac{\dims}{2}$. Then we have $|\cC| \ge 2^{\frac{7\ab}{128}}$. Given $c_i$ in codebook $\cC$, we construct the following $\dims$-dimensional Gaussian distribution $\p_i$, with identity covariance matrix and mean $\mu_i$ satisfying %\amargin{This looks like the third time ... but this is ok this time I think}
%\[\mu_{i} (j)  = \left\{
%\begin{array}{rcl}
%+ \frac{\dist}{\sqrt{ \dims}} ,       &      & \text{for ~ } c_i (j)=1,\\
%- \frac{\dist}{\sqrt{\dims}},       &      & \text{for ~ } c_i (j)=0.
%\end{array} \right.
%\]

\[\mu_{i,j}  = \frac{\dist}{\sqrt{ \dims}} c_{i,j},\]
{ where   $\mu_{i,j}$ denotes the $j$-th coordinate of $\mu_i$.}

{Let $h = \absv{\cC}$. Similar to the product distribution case, using Lemma~\ref{lem:constantGV2}, we can get an $h$-ary codebook $\cH$ with length $d$ and minimum hamming distance $d/2$. Moreover, $|\cH| \ge (\frac{h}{16})^{\frac{\dims}{2}}$. }

$\forall i \in [h]$ and $j \in k$, define $\p_{i}^{(j)} = \cN(\mu_i + \frac{R}{2}e_j , I_d)$, where $e_j$ is the $j$th standard basis vector. It is easy to verify their means satisfy the norm bound. %By Claim~\ref{claim:constantGV2}, there exists an $h$-ary code $\cH \subset [h]^\ab$ with code length $\ab$ and minimum Hamming distance $\frac{\ab}{2}$, which satisfies $\absv{\cH} \ge (\frac{h}{16})^{\frac{\ab}{2}}$. 
For a codeword $b \in \cH$, let
\[
	\mP_b = \frac1k \Paren{ \p_{b_1}^{(1)} +  \p_{b_2}^{(2)}  +\ldots + \p_{b_k}^{(k)} }.
\]

Let $\cV_{\cG} = \{ \mP_b: b \in \cH\}$ be the set of the distributions defined above. Next we prove that $\forall \mP_b \neq \mP_{b'}\in \cV_{\cG}$, 
\begin{align}
	\dtv{\mP_b}{\mP_{b'}} \ge C \alpha, \label{eqn:dtv_gaussian} \\
	\dkl{\mP_b}{\mP_{b'}} \le 4 \alpha^2.\label{eqn:dkl_gaussian}
\end{align}
where $C$ is a constant. If these two inequalities hold, using~\eqref{eqn:dkl_gaussian}, by Pinsker's Inequality, we get $\dtv{\mP_b}{\mP_{b'}} \le \sqrt{2\dkl{\mP_b}{\mP_{b'}}  } \le 2\sqrt2\alpha$. Using Corollary~\ref{coro:fano}, we get
\[
S_{\tt DE}(\cG_{\ab, \dims}, d_{TV}, \alpha, \eps) =\Omega \Paren{\frac{\ab\dims}{\alpha^2}+\frac{\ab\dims}{\alpha \eps}}.
\]
It remains to prove~\eqref{eqn:dtv_gaussian} and~\eqref{eqn:dkl_gaussian}.

For~\eqref{eqn:dkl_gaussian}, note that for any distribution pair $\mP_b \neq \mP_{b'} \in \cV_{\cG}$,
\begin{align}
\dkl{\mP_b}{\mP_{b'}} &\le \frac1{k} \sum_{t=1}^{\ab} \dkl{\p^{(t)}_{b_t}}{\p^{(t)}_{b'_{t}}}\le \max_{i,j \in [h]} d_{KL}\Paren{\p_i, \p_j} \le 4\dist^2, \nonumber
\end{align}
{where the first inequality comes from the convexity of $KL$ divergence and the last inequality uses the fact that the KL divergence between two Gaussians with identity covariance is at most the $\ell_2^2$ distance between their means.} 

Next we prove~\eqref{eqn:dtv_gaussian}. Let $B_j = B_{j,1} \times \cdots \times B_{j,\dims}$, where

$$B_{j,i}=
\begin{cases}
[ \frac{R}{4}, \frac{3R}{4}], & \text{when $i=j$,}\\
[ -\frac{R}{4}, \frac{R}{4} ], & \text{when $i \neq j$ and $i \le \ab$,}\\
[-\infty, \infty], & \text{when $\ab<i \le \dims$.}
\end{cases}$$

Then by Gaussian tail bound and union bound, for any $\mP \in \cV_{\cG}$, the mass of the $j$-th Gaussian component outside $B_j$ is at most $2ke^{-\frac12 \cdot \Paren{\frac1{4} R}^2}$. And the mass of other Gaussian components inside $B_j$ is at most $e^{-\frac12 \cdot \Paren{\frac1{4} R}^2}$. Hence we have:
\begin{align}
	\dtv{\mP_b}{\mP_{b'}} &= \frac1{2\ab} \int_{z \in \RR^{\dims}} \absv{ \p_{b_1}^{(1)}(z) +\cdots+ \p_{b_{\ab}}^{(\ab)}(z)  - \p_{b'_{1}}^{(1)}(z)  - \cdots -  \p_{b'_{\ab}}^{(\ab)}(z) }  dz \nonumber\\
	&\ge \frac1{2\ab}  \sum_{j=1}^{\ab} \int_{z \in B_j} \absv{ \p_{b_1}^{(1)}(z) +\cdots+ \p_{b_{\ab}}^{(\ab)}(z)  - \p_{b'_{1}}^{(1)}(z)  - \cdots -  \p_{b'_{\ab}}^{(\ab)}(z) }  dz \nonumber\\
	& \ge \frac1{2\ab} \cdot \sum_{j=1}^{\ab} (\int_{z \in B_j}  \absv{  \p_{b_{j}}^{(j)}(z)  - \p_{b'_{j}}^{(j)}(z) } dz -(\ab - 1) \cdot e^{-\frac12 \cdot \Paren{\frac1{4} R}^2}) \nonumber \\
	& \ge \frac1{2\ab} \cdot \sum_{j=1}^{\ab} (\int_{z \in \RR^{\dims}}  \absv{  \p_{b_{j}}^{(j)}(z)  - \p_{b'_{j}}^{(j)}(z) } dz - 3\ab \cdot e^{-\frac12 \cdot \Paren{\frac1{4} R}^2}) \nonumber \\
	& = \frac1{2\ab} \cdot \sum_{j=1}^{\ab} \dtv{\p_{b_{j}}}{\p_{b'_{j}}} - \frac{3\dist^2}{64k}. \nonumber
\end{align}
By Fact 6.6 in~\cite{KamathLSU18}, there exists a constant $C_1$ such that for any pair $i \neq j \in [h]$, 
$$\dtv{\p_i}{\p_j} \ge C_1\dist.$$
Hence we have
\[
	\frac1{2\ab} \cdot \sum_{j=1}^{\ab} \dtv{\p_{b_{j}}}{\p_{b'_{j}}} \ge \frac{C_1\dist}{2\ab} \ham{b}{b'} \ge \frac{C_1\dist}{4},
\]
where the last inequality comes from the property of the codebook. WLOG, we can assume $
\frac{3\dist}{64k} < C_1/8$. Taking $C = \frac{C_1}{8}$ completes the proof of~\eqref{eqn:dtv_gaussian}.

{Now we considers the case when $\ab \ge \dims$ and $R\ge (\ab)^{\frac1\dims} \cdot \sqrt{64\dims \log\Paren{\frac{8\ab}{\dist}}}$. Let $r = \sqrt{16\dims \log\Paren{\frac{8\ab}{\dist}}}$, we note that there exists a packing set $S = \{v_1, v_2, ..., v_k\} \subset \mathbb{R}^{\dims}$ which satisfies $\forall u, v \in S$,
	\[
		\norm{u - v} > r,~~~~ \norm{u}\le R,~~~~\norm{v}\le R/3, 
	\]
	and $|S| = \ab$ since $R\ge 2 (\ab)^{\frac1\dims}  r$. Consider the set of mixture distributions as following: 
	For a codeword $b \in \cH$, let
	\[
		\mP'_b = \frac1k \Paren{ \p_{b_1}^{(1)'} +  \p_{b_{2}}^{(2)'}  +\ldots + \p_{b_{\ab}}^{(k)'} },
	\]
	where $\forall i \in [k], \p_{b_j}^{(j)'} = \cN(\mu_{b_j} + v_j, I_d)$.
	Let $B'_j$ denote the $\ell_2$ ball centering at the $v_j$ with radius $\frac{r}{2}$. We note that by similar analysis using the tail bound of the Gaussian distribution, the mass of the $j$-th Gaussian component outside $B_j'$ is at most $\frac{\dist^2}{64\ab^2}$. Meanwhile, the mass of other Gaussian components inside $B_j'$ is also at most $\frac{\dist^2}{64\ab^2}$. Hence the remaining analysis follows from the previous case.}
\end{proof}
\section{$(\eps,\delta)$-DP Distribution Estimation} \label{sec:adp_applications}
{In the previous section we used Theorem~\ref{thm:dp_fano} to obtain sample complexity lower bounds for pure differential privacy. We will now use Theorem~\ref{thm:assouad} to prove sample complexity lower bounds under $(\eps, \delta)$-DP.}

\subsection{$\ab$-ary Distribution Estimation} \label{sec:adp_kary}
%In this section, we consider $\ab$-ary distribution estimation under $(\eps, \delta)$-DP.

%\approxdtv*

\begin{theorem}
\label{thm:ApproximateDiscreteTotalVariation}
The sample complexity of $(\eps, \delta)$-DP $\ab$-ary distribution estimation under total variation distance is 
\[
S_{\tt DE}(\Delta_k,  d_{TV}, \alpha, \eps, \delta) =\Omega \Paren{\frac{k}{\alpha^2}+\frac{k}{\alpha(\eps+\delta)}}.
\]
\end{theorem}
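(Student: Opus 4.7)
The plan is to apply the DP Assouad's method (Theorem~\ref{thm:assouad}) to the Paninski-style construction already used for the pure-DP lower bound in Section~\ref{sec:puredp-tv}. The first term $\Omega(k/\alpha^2)$ is the non-private minimax lower bound, so the work is entirely in establishing the $\Omega(k/(\alpha(\eps+\delta)))$ term.

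First I would set up the hypercube family. For each $z \in \cE_{k/2} = \{\pm 1\}^{k/2}$, define
\[
p_z(2i-1) = \frac{1+z_i \cdot c\alpha}{k}, \qquad p_z(2i) = \frac{1-z_i \cdot c\alpha}{k},
\]
for a sufficiently small absolute constant $c>0$ (and $\alpha$ small enough that these are valid probabilities). A direct computation gives $d_{TV}(p_u,p_v) = (2c\alpha/k)\,d_H(u,v)$, so the Assouad loss condition~\eqref{eqn:assouad-loss} holds with $\tau = c\alpha/k$. This is the same construction used for the pure-DP bound, but Theorem~\ref{thm:assouad} will now let us extract the $(\eps+\delta)$ dependence.

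Next I would bound the expected Hamming distance of a coupling between $p_{+i}^n$ and $p_{-i}^n$. Averaging over $z$ with $z_i$ fixed, both $p_{+i}$ and $p_{-i}$ are uniform on coordinates $\{2j-1,2j\}$ for $j\ne i$, and they differ only on coordinates $2i-1$ and $2i$; a direct check gives $d_{TV}(p_{+i},p_{-i}) = 2c\alpha/k$. Applying Lemma~\ref{lem:min_coupling} coordinate-wise to $n$ i.i.d.\ samples produces a coupling $(\Xon,\Yon)$ with
\[
\EE[d_H(\Xon,\Yon)] \le n \cdot d_{TV}(p_{+i},p_{-i}) \le \frac{2c\alpha n}{k} =: D.
\]

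Plugging $\tau = c\alpha/k$ and this $D$ into the $(\eps,\delta)$-DP bound~\eqref{eqn:assouad} of Theorem~\ref{thm:assouad} and summing the $k/2$ coordinates gives
\[
R(\Delta_k, d_{TV}, \eps, \delta) \;\ge\; \frac{(k/2)\cdot (c\alpha/k)}{2}\Bigl(0.9\,e^{-10\eps D} - 10 D\delta\Bigr) \;=\; \Omega(\alpha)\Bigl(e^{-20c\alpha n\eps/k} - 20c\alpha n\delta/k\Bigr).
\]
Whenever $n \le C\,k/(\alpha(\eps+\delta))$ for a small enough absolute constant $C$, both $\eps D$ and $\delta D$ are bounded by small constants, so the right-hand side is $\Omega(\alpha)$. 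Contrapositively, achieving risk below $\alpha$ requires $n = \Omega(k/(\alpha(\eps+\delta)))$, completing the bound.

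The main technical point to get right is the constant tuning: $c$, $C$, and the constants $0.9,10$ in Theorem~\ref{thm:assouad} must combine so that the exponential term and the $D\delta$ term are each at most a fixed fraction below $0.9$, yielding a genuine $\Omega(\alpha)$ lower bound. Apart from this bookkeeping and the standard Paninski validity constraint $c\alpha < 1$, the argument is a direct application of our Assouad framework.
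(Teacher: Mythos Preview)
Your approach matches the paper's: DP Assouad (Theorem~\ref{thm:assouad}) on the Paninski hypercube, with a coupling bound between $p_{+i}$ and $p_{-i}$ plugged into~\eqref{eqn:assouad}. One step needs tightening, though: $p_{+i}$ and $p_{-i}$ are \emph{mixtures} over $z_{-i}$ of $n$-fold products, not $n$-fold products of their single-sample marginals, so writing ``$d_{TV}(p_{+i},p_{-i})=2c\alpha/k$'' and invoking Lemma~\ref{lem:min_coupling} on ``$n$ i.i.d.\ samples'' does not literally apply. The fix is to couple $z_{-i}$ identically on both sides and then apply Lemma~\ref{lem:min_coupling} to the i.i.d.\ samples from $p_{z^+}$ versus $p_{z^-}$ (whose TV distance is $2c\alpha/k$ for every $z_{-i}$); averaging over $z_{-i}$ gives exactly your bound $D\le 2c\alpha n/k$. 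The paper does the same thing via an explicit coupling (flip each occurrence of symbol $2i-1$ to $2i$ independently with probability $\tfrac{2c\alpha}{1+c\alpha}$). With that correction, the rest of your argument is correct and identical to the paper's.
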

{In practice, $\delta$ is chosen to be $\delta = O\Paren{\frac1\ns}$, and the privacy parameter is chosen as a small constant, $\eps = \Theta(1)$. In particular, when $\delta\le \eps$, the theorem above shows
\[
S_{\tt DE}(\Delta_k,  d_{TV}, \alpha, \eps, \delta) =\Omega \Paren{\frac{k}{\alpha^2}+\frac{k}{\alpha\eps}}.
\]
Since the sample complexity of $\eps$-DP is at most the sample complexity of $(\eps,\delta)$-DP, this shows that the bound above is tight for {$ \delta\le \eps$}. The lower bound part is proved using Theorem~\ref{thm:assouad} in Section~\ref{sec:approximate-tv}.}

\begin{theorem}
\label{thm:ApproximateDiscreteL2}
The sample complexity of $(\eps, \delta)$-DP discrete distribution estimation under $\ell_2$ distance,
\[
 \Omega \Paren {\frac{1}{\alpha^2}+\frac{\sqrt{\ab}}{\alpha (\eps+\delta)}} \le S_{\tt DE}(\Delta_k, \ell_2, \alpha, \eps, \delta) \le O \Paren{\frac{1}{\alpha^2}+\frac{\sqrt{\ab}}{\alpha\eps}},\ \ \ \ \text{for $\dist < \frac1{\sqrt{\ab}}$},
\] 
\begin{align}
\Omega\Paren{\frac{1}{\dist^2}+\frac{1}{\dist^2(\eps+\delta)}}\le S_{\tt DE}(\Delta_k, \ell_2, \alpha, \eps, \delta)  \le O\Paren{\frac{1}{\dist^2}+\frac{\log\ab}{\dist^2\eps}}, \ \ \ \text{for $\dist > \frac1{\sqrt{\ab}}$}. \nonumber
\end{align}
\end{theorem}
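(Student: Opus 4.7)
The plan is to combine a few pieces. First, the upper bounds are immediate from Theorem~\ref{thm:pure-l2}: any $\eps$-DP estimator is also $(\eps,\delta)$-DP, so the pure DP upper bounds transfer verbatim. The non-private lower bound $\Omega(1/\alpha^2)$ in each case is standard (Le~Cam on two $k$-ary distributions with a small $\ell_2$ perturbation, independent of privacy). So the real work is to obtain the privacy-dependent lower bounds using Theorem~\ref{thm:assouad} (DP Assouad's method), with two different hypercube constructions depending on whether $\alpha < 1/\sqrt{k}$ or $\alpha > 1/\sqrt{k}$.

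For Case~1 ($\alpha < 1/\sqrt{k}$, target $\Omega(\sqrt{k}/(\alpha(\eps+\delta)))$), I would use the perturbation construction from Section~\ref{sec:puredp-tv}: index $u\in\{\pm 1\}^{k/2}$, and set $p_u(2i-1)=(1+u_i\gamma)/k$, $p_u(2i)=(1-u_i\gamma)/k$ with $\gamma=\Theta(\alpha\sqrt k)$ (which is $\le 1$ exactly in this regime). A direct calculation gives $\|p_u-p_v\|_2 = \gamma\sqrt{2\,d_H(u,v)}/(k/2)$, so the Assouad condition $\|p_u-p_v\|_2 \ge 2\tau\cdot d_H(u,v)$ is tightest at $d_H=k/2$, forcing $\tau=\Theta(\gamma/k^{3/2})$. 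Plugging into Theorem~\ref{thm:assouad} gives the leading factor $(k/2)\tau/2 = \Theta(\alpha)$. For the coupling, $p_{+i}$ and $p_{-i}$ only differ at positions $2i-1,2i$ with $d_{TV}(p_{+i},p_{-i})=2\gamma/k$, and maximal coupling yields $D=\Theta(n\alpha/\sqrt k)$. Requiring both $\eps D$ and $\delta D$ to stay bounded (so that the bracketed term in~\eqref{eqn:assouad} is a positive constant) gives $n = \Omega(\sqrt k/(\alpha(\eps+\delta)))$.

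For Case~2 ($\alpha > 1/\sqrt k$, target $\Omega(1/(\alpha^2(\eps+\delta)))$), the parameter $\gamma$ in the previous construction would have to exceed $1$, so instead I would restrict the perturbation to a sub-support of size $K' = \Theta(1/\alpha^2) \le k$ and take $\gamma=1$ (i.e., the perturbed coordinates are put to $2/K'$ or $0$). Repeating the computation gives $\|p_u-p_v\|_2=2\sqrt{2\,d_H(u,v)}/K'$, binding $\tau=\Theta(1/K'^{3/2})$ at $d_H=K'/2$, and hence $(K'/2)\tau/2=\Theta(1/\sqrt{K'})=\Theta(\alpha)$. The coupling gives $d_{TV}(p_{+i},p_{-i})=2/K'$ and $D=\Theta(n\alpha^2)$, so the Assouad bound forces $n=\Omega(1/(\alpha^2(\eps+\delta)))$.

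The main obstacle will be verifying the Assouad condition $\ell(\theta(p_u),\theta(p_v))\ge 2\tau\sum_i \mathbb{I}(u_i\ne v_i)$ for $\ell=\ell_2$, because $\ell_2$ is not additive over coordinates: $\|p_u-p_v\|_2$ grows only like $\sqrt{d_H(u,v)}$, not linearly. A careless choice of $\tau$ that looks good for neighbouring $u,v$ will be violated at large $d_H$, and a choice that works for all pairs will be forced down by a factor $\sqrt{m}$. The computation above shows that this loss is \emph{exactly} compensated by the factor of $m$ in front of $\tau$ in Theorem~\ref{thm:assouad}, so the final bound is tight; checking this balance carefully, together with ensuring the resulting $\gamma$ (or the sub-support size $K'$) is feasible in each regime, is the technical crux of the lower bound argument.
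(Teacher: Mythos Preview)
Your approach is correct, but the paper takes a shorter and more modular route that completely sidesteps the obstacle you identify. For the regime $\alpha < 1/\sqrt{k}$, rather than applying Assouad directly with the non-additive $\ell_2$ loss, the paper simply invokes the Cauchy--Schwarz inequality $\|\cdot\|_1 \le \sqrt{k}\,\|\cdot\|_2$ to observe that any $\ell_2$ estimator with accuracy $\alpha$ is automatically a $d_{TV}$ estimator with accuracy $\sqrt{k}\alpha$, whence $S_{\tt DE}(\Delta_k,\ell_2,\alpha,\eps,\delta)\ge S_{\tt DE}(\Delta_k,d_{TV},\sqrt{k}\alpha,\eps,\delta)$, and then plugs in the already-proved TV bound (Theorem~\ref{thm:ApproximateDiscreteTotalVariation}). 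For $\alpha\ge 1/\sqrt{k}$, the paper again reduces rather than reconstructs: it picks $l=\lfloor 1/(16\alpha^2)\rfloor\le k$, notes $\Delta_l\subset\Delta_k$ and $\alpha<1/\sqrt{l}$, and applies the first-regime bound over $\Delta_l$ to get $\Omega(1/\alpha^2+1/(\alpha^2(\eps+\delta)))$. So your direct argument and the paper's reduction unwind to the same hypercube and the same arithmetic, but the paper never has to check the Assouad separation condition for $\ell_2$; the $\sqrt{d_H}$-versus-$d_H$ balancing that you flag as the crux is absorbed into the single Cauchy--Schwarz line. What your approach buys is self-containment (you do not need the TV theorem as a black box), at the cost of redoing work; the paper's approach buys brevity and reuse.
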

%{When $\delta = O(\eps)$, by Theorem~\ref{thm:pure-l2}, the bounds are tight up to logarithmic factor. We prove this result in Section~\ref{sec:approximate-l2}.}
{When $\delta = O(\eps)$, the bounds are tight when $\dist < 1/\sqrt{\ab}$ and differ by a factor of $\log k$ when $\dist \ge 1/\sqrt{\ab}$. We prove this result in Section~\ref{sec:approximate-l2}.}

\subsubsection{Proof of Theorem~\ref{thm:ApproximateDiscreteTotalVariation}.}
\label{sec:approximate-tv}

{The first term $k/\alpha^2$ is the tight sample complexity without privacy. We prove that $S_{\tt DE}(\Delta_{\ab}, d_{TV}, \alpha, \eps, \delta) = \Omega \Paren{\frac{\ab}{\alpha(\eps+\delta)}}$.} 

{Suppose $k$ is even and $\dist < 1/10$. Let $ \cE_{k/2} =\{-1,+1\}^{k/2}$, for  $e \in \cE_{k/2}$, we define $\p_e\in\Delta_k$ as follows. 
\begin{align}
\text{For $i=1, \ldots, k/2$}~~~~ \p_e(2i-1) = \frac{1+ 10 e_i \cdot \alpha}{\ab},  ~~ \p_e(2i) = \frac{1- 10e_i  \cdot \dist}{\ab}. \label{eqn:constr-approx-k-ary}
\end{align}}
{To apply Theorem~\ref{thm:assouad}, let $\cV_{\ab/2} = \{ \p_e^n, e \in \cE_{\ab/2}\}$. $\p_e^n$ is the distribution of $n$ i.i.d. samples from distribution $\p_e$, and $\theta(\p_e^n) = \p_e$. For $u, v \in \cE_{\ab/2}$,
\[
\ell( \theta(\p_u^n),  \theta(\p_v^n)) = d_{TV} (p_{u},p_{v}) = \frac{20\dist}{\ab} \cdot \sum_{i=1}^{\frac{k}{2}}\mathbb{I}\Paren{u_i \neq v_i}, 
\]
thus obeying~\eqref{eqn:assouad-loss} with $\tau=10\dist/k$.}
{Recall the mixture distributions $\p_{+i}$ and $\p_{-i}$,} 
\[
\dP_{+i}  =  \frac{2}{|\cE_{\ab/2}|} \sum_{e \in \cE_{\ab/2}: e_i= + 1} \p^{\ns}_e, ~~~ \dP_{-i}  =  \frac{2}{|\cE_{\ab/2}|} \sum_{e \in \cE_{\ab/2}: e_i= -1}\p^{\ns}_e.
\]
{To apply Theorem~\ref{thm:assouad}, we prove the following bound on the Hamming distance between a coupling between $\p_{+i}$ and $\p_{-i}$.
\begin{lemma} \label{lem:coupling_distance}
For any $i$, there is a coupling $(X,Y)$ between $\dP_{+i}$ and $\dP_{-i}$, such that
\[
\expectation{\ham{X}{Y}} \le \frac{20\dist \ns}{\ab}.
\]
\end{lemma}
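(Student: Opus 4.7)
The plan is to exploit the mixture structure of $\dP_{+i}$ and $\dP_{-i}$ by constructing a coupling that shares the random choice of $e_{-i} \in \{\pm 1\}^{k/2 - 1}$ across the two distributions, and then couples the $n$ iid samples position by position. The key observation is that once $e_{-i}$ is fixed, the conditional product distributions $\p_{e_{-i},+1}^n$ and $\p_{e_{-i},-1}^n$ differ only on the two coordinates $\{2i-1, 2i\}$; by construction in~\eqref{eqn:constr-approx-k-ary} the probabilities on all other symbols are determined by $e_{-i}$ alone and therefore coincide.

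First, I would sample $e_{-i}$ uniformly at random and use the same draw for both $X \sim \dP_{+i}$ and $Y \sim \dP_{-i}$. Then, for each position $j \in [n]$, I would draw a common ``bucket'' indicator: with probability $1 - 2/k$ the symbol lies outside $\{2i-1, 2i\}$ (both conditionals agree here), and we can couple $X_j = Y_j$ by drawing from the common conditional distribution. With probability $2/k$, both symbols lie in $\{2i-1, 2i\}$, and we must couple a $\Bern{(1+10\alpha)/2}$ random variable (for $X_j$, since $e_i=+1$) with a $\Bern{(1-10\alpha)/2}$ random variable (for $Y_j$, since $e_i=-1$). Use the maximal coupling between these two Bernoullis, which achieves $\Pr(X_j \ne Y_j) = 10\alpha$ (their TV distance).

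Combining, $\Pr(X_j \ne Y_j) \le (2/k) \cdot 10\alpha = 20\alpha/k$ for each $j$, and by linearity of expectation $\expectation{\ham{X}{Y}} \le n \cdot 20\alpha/k$. Finally, verify that the resulting joint distribution of $X$ indeed has the correct marginal $\dP_{+i}$ (and similarly for $Y$): this is immediate since marginalizing out $Y$ and the auxiliary randomness returns drawing $e_{-i}$ uniformly, then drawing $n$ iid samples from $\p_{(e_{-i},+1)}$, which by definition is $\dP_{+i}$.

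There is no real obstacle here; the only delicate point is checking that the position-wise coupling preserves both marginals, which follows because the bucket indicator and the in-bucket coupling are constructed to be consistent with the correct conditional probabilities in each distribution. The structure of the construction in~\eqref{eqn:constr-approx-k-ary}, where flipping $e_i$ affects only a single pair of symbols, is exactly what makes this shared-randomness coupling efficient and yields the desired $O(\alpha n /k)$ Hamming bound.
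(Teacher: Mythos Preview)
Your proposal is correct and is essentially the same coupling as the paper's. The paper phrases it as a post-processing of $X\sim\dP_{+i}$: independently flip each occurrence of the symbol $2i-1$ to $2i$ with probability $\frac{20\alpha}{1+10\alpha}$, which yields $Y\sim\dP_{-i}$ and expected Hamming cost $\frac{1+10\alpha}{\ab}\cdot\frac{20\alpha}{1+10\alpha}\cdot\ns=\frac{20\alpha\ns}{\ab}$; this is exactly the maximal coupling you describe inside the bucket $\{2i-1,2i\}$, so the two constructions coincide.
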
}
\begin{proof}
{By the construction in~\eqref{eqn:constr-approx-k-ary}, note that the distributions $\dP_{+i}$ and $\dP_{-i}$ only have a difference in the number of times $2i-1$ and $2i$ appear. To generate $Y\sim \p_{-i}$ from from $X\sim\p_{+i}$, we scan through $X$ and independently change the coordinates that have the symbol $2i-1$ to the symbol $2i$ with probability $\frac{20\alpha}{1+10\alpha}$. The expected Hamming distance is bounded by $\frac{20\alpha}{1+10\alpha} \cdot \frac{1+10\alpha}{\ab} \cdot \ns = \frac{20\alpha \ns}{\ab}$.}
\end{proof}

Note that $\cV\subset\cP := \{\p^n | \p \in \Delta_k \}$. By Theorem~\ref{thm:assouad}, using the bound on $D$ from Lemma~\ref{lem:coupling_distance}, and $\tau=10\dist/k$,
\[
R(\cP, d_{TV}, \eps, \delta) \ge \frac{5\alpha}{\ab} \cdot \ab \cdot \Paren{0.9 e^{-10 \eps D} - 10 D \delta}\ge  {5\alpha} \cdot \Paren{0.9 e^{-200 n\eps\alpha/k} - 200 \frac{n\eps\alpha\delta}k}.
\]
{To achieve $R(\cP, d_{TV}, \eps, \delta) \le \alpha$, either $n\eps\alpha/k=\Omega(1)$ or ${n\eps\alpha\delta}/k=\Omega(1)$, which implies that $n=\Omega( \frac{k}{\alpha(\eps+\delta)})$}.
%\amargin{Please check the constants above.}

\subsubsection{Proof of Theorem~\ref{thm:ApproximateDiscreteL2}} \label{sec:approximate-l2}

We first consider the case where $\dist < \frac1{\sqrt{\ab}}$.  By Cauchy-Schwarz inequality, $S_{\tt DE}(\Delta_k, \ell_2, \alpha, \eps, \delta) \ge S_{\tt DE}(\Delta_k,  d_{TV}, \sqrt{k}\alpha, \eps, \delta) $, and therefore $S_{\tt DE}(\Delta_k, \ell_2, \alpha, \eps, \delta) = \Omega\Paren{\frac{1}{\dist^2}+\frac{\sqrt{\ab}}{\dist (\eps+\delta)}}$ by Theorem~\ref{thm:ApproximateDiscreteTotalVariation}.

For $\alpha \ge \frac{1}{\sqrt{k}}$, we have $l = \lfloor \frac1{16\dist^2}\rfloor \le k$. Therefore, $\Delta_l \subset \Delta_k$ and $\dist < \frac1{\sqrt{l}}$. Hence,
\[
	S_{\tt DE}(\Delta_k, \ell_2, \alpha, \eps, \delta) \ge S_{\tt DE}(\Delta_l, \ell_2, \alpha, \eps, \delta) = \Omega\Paren{\frac{1}{\alpha^2} + \frac1{\dist^2(\eps+\delta)}}.
\]

\subsection{Binary Product Distribution Estimation} 
\label{sec:adp_product}

{We now consider estimation of Bernoulli product distributions under total variation distance. A Bernoulli product distribution in $d$ dimensions is a distribution over $\{0,1\}^\dims$ parameterized by $\mu\in [0,1]^\dims$, where the $i$th coordinate is distributed $\ber(\mu_i)$, where $\ber(\cdot)$ is a Bernoulli distribution. Let $\Delta_{2,\dims}$ be the class of Bernoulli product distributions in $\dims$ dimensions.} 

\begin{theorem}
\label{thm:ApproximateProductTotalVariation}
The sample complexity of $(\eps, \delta)$-DP  binary product  distribution estimation satisfies 
\[
S_{\tt DE}(\Delta_{2,\dims}, d_{TV}, \alpha, \eps, \delta)  =\Omega \Paren{\frac{\dims}{\alpha^2}+\frac{\dims}{\alpha(\eps+\delta)}}.
\]
%Further, when $\delta\le\eps$,
%\[
%S(\Delta_{2,\dims}, d_{TV}, \alpha, \eps, \delta)   =\Omega \Paren{\frac{\dims}{\alpha^2}+\frac{\dims}{\alpha\eps}}.
%\]
\end{theorem}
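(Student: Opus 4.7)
\noindent\emph{Proof plan.}
The plan is to apply Theorem~\ref{thm:assouad} (DP Assouad's method) to a hypercube-indexed family of Bernoulli product distributions, in direct analogy with the $\ab$-ary proof of Theorem~\ref{thm:ApproximateDiscreteTotalVariation}. The term $\Omega(\dims/\dist^2)$ is the classical non-private rate (supplied by the first, non-private half of Assouad's lemma applied to the same construction), so I focus on the privacy term $\Omega(\dims/(\dist(\eps+\delta)))$.

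For a small absolute constant $c>0$, I would construct for each $e\in\cE_\dims := \{\pm1\}^\dims$ the Bernoulli product distribution $\p_e$ on $\{0,1\}^\dims$ whose $i$-th marginal is $\Bern{1/2 + c\dist e_i/\sqrt{\dims}}$. The three ingredients of Theorem~\ref{thm:assouad} are then verified as follows. \emph{(a)} The loss condition $d_{TV}(\p_u,\p_v)\ge 2\tau\, \ham{u}{v}$ holds with $\tau=\Theta(\dist/\dims)$: the marginals stay in $[1/4,3/4]$ for small $\dist$, and the standard estimate (Fact~6.4 of~\cite{KamathLSU18}) gives $d_{TV}(\p_u,\p_v)=\Theta(c\dist\sqrt{\ham{u}{v}/\dims})$, from which the linear lower bound follows via $\sqrt{h/\dims}\ge h/\dims$ for $h\in[\dims]$. \emph{(b)} The mixtures $\p_{+i}$ and $\p_{-i}$ are themselves Bernoulli products that agree on every coordinate except the $i$-th, where the marginals are $\Bern{1/2\pm c\dist/\sqrt{\dims}}$; applying the maximal coupling on that single marginal independently to each of $n$ i.i.d.\ samples yields a coupling of $\p_{+i}^n$ and $\p_{-i}^n$ with expected Hamming distance $D=\Theta(n\dist/\sqrt{\dims})$.

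Plugging $\tau$ and $D$ into Theorem~\ref{thm:assouad} gives
\[
R(\cP,d_{TV},\eps,\delta)\ge \frac{\dims\tau}{2}\bigl(0.9\,e^{-10\eps D}-10 D\delta\bigr)=\Omega(\dist)\cdot\bigl(0.9\,e^{-10\eps D}-10 D\delta\bigr),
\]
so requiring $R\le\dist/C$ for a sufficiently large constant $C$ forces $\eps D=\Omega(1)$ or $D\delta=\Omega(1)$, i.e.\ $D=\Omega(1/(\eps+\delta))$, and substituting the expression for $D$ yields the required sample complexity lower bound.

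The main obstacle is the mismatch between the $\sqrt{h/\dims}$ scaling of total variation for Bernoulli products and the linear-in-$h$ scaling demanded by the Assouad condition: this mismatch forces $\tau$ to be as small as $\Theta(\dist/\dims)$ while the best achievable $D$ only scales as $n\dist/\sqrt{\dims}$, so the naive substitution yields a bound of order $\sqrt{\dims}/(\dist(\eps+\delta))$. Closing the remaining $\sqrt{\dims}$ gap to the claimed $\dims/(\dist(\eps+\delta))$ appears to require either a refined hypercube construction---perhaps pairing coordinates so that TV decomposes additively as in the $\ab$-ary proof of Theorem~\ref{thm:ApproximateDiscreteTotalVariation}---or a strengthened coupling argument exploiting independence across the $\dims$ product coordinates; this is the technical heart of the argument.
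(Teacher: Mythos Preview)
Your proposal correctly identifies both the tool (Theorem~\ref{thm:assouad}) and the gap, but the construction you chose cannot close that gap, and neither of your two suggested fixes (``pairing coordinates'' or ``a strengthened coupling argument'') is what is needed. The missing idea is purely a change of construction: put the Bernoulli means near $1/\dims$, not near $1/2$.

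Concretely, the paper takes $\mu_i^e = (1 + 20\dist e_i)/\dims$. This choice fixes both sides of your trade-off simultaneously. On the coupling side, the mixtures $\p_{+i}$ and $\p_{-i}$ are again products that agree on every coordinate except the $i$-th, but now the two $i$-th marginals differ by $40\dist/\dims$ in total variation (instead of $\Theta(\dist/\sqrt{\dims})$ in your construction). The per-sample maximal coupling therefore disagrees with probability $40\dist/\dims$, giving $D = 40\dist n/\dims$; that is the $\sqrt{\dims}$ you were missing. On the separation side, the linear Assouad condition $d_{TV}(\p_u,\p_v)\ge (C_1\dist/\dims)\cdot\ham{u}{v}$ still holds with $\tau=\Theta(\dist/\dims)$: the paper verifies it directly by comparing the single event $A=\{Z_i=0\ \forall i\in S'\}$ for a set $S'$ of $\Theta(\ham{u}{v})$ coordinates where $u_i=+1$, $v_i=-1$, and using $(1-(1-20\dist)/\dims)^{|S'|}-(1-(1+20\dist)/\dims)^{|S'|}\ge (40\dist/\dims)\,|S'|\,(1-(1+20\dist)/\dims)^{|S'|}$. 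The point is that with means of order $1/\dims$ the product TV grows roughly like an $\ell_1$-type sum over coordinates (since a typical sample is almost all zeros), not like the $\ell_2$ norm of mean differences as it does near $1/2$. Your square-root scaling $d_{TV}\asymp\sqrt{\ham{u}{v}/\dims}$ is a feature of the $1/2$-centered construction, not of Bernoulli products in general.

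With $\tau=\Theta(\dist/\dims)$ and $D=\Theta(\dist n/\dims)$, Theorem~\ref{thm:assouad} yields $R\ge\Theta(\dist)\cdot(0.9e^{-10\eps D}-10D\delta)$, forcing $D=\Omega(1/(\eps+\delta))$, i.e.\ $n=\Omega(\dims/(\dist(\eps+\delta)))$. The non-private $\Omega(\dims/\dist^2)$ is then just cited (or recovered from the non-private half of Assouad on the same family).
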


\newzs{Compared to the upper bound of $O\Paren{\dims \log \Paren{\dims/\dist} \Paren{1/\dist^2+1/\dist\eps}}$ in~\cite{BunKSW2019, KamathLSU18}, our bound is tight up to logarithmic factors when $\delta \le \eps$. \cite{KamathLSU18} also presents a lower bound of $\Omega \Paren{\frac{\dims}{\alpha^2}+\frac{\dims}{\alpha(\eps+\delta)}}$ under $(\eps, \delta)$-DP when $\delta = O(1/\ns)$. Although $\delta = O(1/\ns)$ is the more interesting regime in practice, our bound complements the result by stating that the utility will not improve even if $\delta$ can be as large as $\eps$.}
\begin{proof}	
{Since $\Theta(\dims/\eps^2)$ is an established tight bound for non-private estimation, we only prove the second term.}

{We start by constructing a set of Bernoulli product distributions indexed by $\cE_{\dims} = \{ \pm1\}^{\dims}$. For all $e \in \cE_{\dims}$, let $\p_e = \ber(\mu^e_1) \times \ber(\mu^e_2) \times \cdots \times \ber(\mu^e_\dims)$, where 
\[
\mu^e_i = \frac{1+ e_i\cdot 20\dist}{\dims}.
\]}
%\amargin{I am changing the definitions here, since we wrote $e_i$ to be in 0,1 earlier.}
{Let $\cV = \{ \p_e^{\ns}, e \in \cE_{\dims}\}$, the set of distributions of 
$n$ i.i.d. samples from $\p_e$, and $\theta\Paren{\p_e^{\ns}} = \p_e$. For 
$u, v\in \cE_{\dims},$, 	$\ell( \theta(\p_u^n),  \theta(\p_v^n)) = d_{TV} 
(p_{u},p_{v})$. We first prove that~\eqref{eqn:assouad-loss} holds under 
total variation distance for an appropriate $\tau$.}

%By Lemma 6.4 in~\cite{KamathLSU18}, 
%for any pair of distributions $\p_i, \p_j \in \cV_{\dims}$, $C_1 \dist \le \dtv{\p_i}{\p_j} \le C_2\dist$, where $C_1$ and $C_2$ are some constants.

%We need the following lemma.
\begin{lemma}
\label{lem:binaryproducttv}
There exists a constant $C_1 > 5 $ such that $ \forall u, v \in \cE_{\dims}$,
\[ 
d_{TV} (\p_{u},\p_{v}) \ge \frac{C_1 \dist}{\dims}\cdot  \sum_{i=1}^{\dims}\mathbb{I}\Paren{u_i \neq v_i}.
\]
\end{lemma}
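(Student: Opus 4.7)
The plan is to reduce the TV distance between the two product distributions on $\{0,1\}^d$ to a TV distance between two Binomial distributions with very small success probability, and then bound the latter directly using the point mass at zero.

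First I would exploit the product structure to strip away coordinates that cannot contribute. Let $S = \{i \in [d] : u_i \ne v_i\}$ and $H = |S|$, and let $\p_u^S, \p_v^S$ denote the marginals on $S$. For $i \notin S$ we have $\mu_i^u = \mu_i^v$, so $\p_u$ and $\p_v$ share an identical factor on $S^c$. A direct calculation with the factorization $\p_u = \p_u^S \otimes \p_u^{S^c}$, $\p_v = \p_v^S \otimes \p_u^{S^c}$ gives $d_{TV}(\p_u, \p_v) = d_{TV}(\p_u^S, \p_v^S)$, so it suffices to show $d_{TV}(\p_u^S, \p_v^S) \ge C_1 \alpha H/d$.

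Next, split $S = S_+ \sqcup S_-$ according to whether $u_i = +1$ or $u_i = -1$. Since $d_{TV}$ is symmetric in $(u,v)$, we may assume $|S_+| \ge H/2$. On the sub-marginal $S_+$, $\p_u$ is $\ber(p)^{\otimes |S_+|}$ and $\p_v$ is $\ber(q)^{\otimes |S_+|}$, where $p = (1+20\alpha)/d$ and $q = (1-20\alpha)/d$. The data-processing inequality (marginalization) then gives
\[
d_{TV}(\p_u^S, \p_v^S) \ge d_{TV}\!\left(\ber(p)^{\otimes |S_+|}, \ber(q)^{\otimes |S_+|}\right) = d_{TV}(\text{Bin}(m, p), \text{Bin}(m, q)),
\]
where $m := |S_+| \ge H/2$ and we used that the sum is a sufficient statistic.

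The remaining task is to lower bound $d_{TV}(\text{Bin}(m,p), \text{Bin}(m,q))$. I would use only the mass at $k=0$:
\[
d_{TV}(\text{Bin}(m,p), \text{Bin}(m,q)) \ge (1-q)^m - (1-p)^m = (1-q)^m\left(1 - \left(\tfrac{1-p}{1-q}\right)^m\right).
\]
Since $p,q = \Theta(1/d)$ and $m \le d$, the factor $(1-q)^m$ is bounded below by a constant (e.g. $\ge e^{-1-o(1)}$ via $(1-1/d)^d \to e^{-1}$), while $(1-p)/(1-q) = 1 - 40\alpha/(d-1+20\alpha)$ gives $1 - ((1-p)/(1-q))^m \ge 2\sinh(20\alpha m/d)\cdot e^{-O(\alpha m/d)} \ge (1-o(1))\cdot 40\alpha m/d$ for $\alpha$ bounded away from $1/20$. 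Combining these yields
\[
d_{TV}(\text{Bin}(m,p), \text{Bin}(m,q)) \ge \frac{40\alpha m}{de}(1-o(1)) \ge \frac{20\alpha H}{de}(1-o(1)),
\]
which is of the desired form with $C_1 > 5$ for all sufficiently small $\alpha$.

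The main obstacle is bookkeeping the constants: the Poisson-type approximations $(1-r)^m \approx e^{-mr}$ and $\sinh(x)\approx x$ each lose a small multiplicative factor, and with the additional factor of $1/2$ from $m \ge H/2$, one must verify that enough slack survives to ensure $C_1 > 5$. This is handled by using the tight inequalities $(1-1/d)^m \ge e^{-m/(d-1)}$ and $2\sinh(x) \ge 2x$, and by noting that the lemma only needs to hold for $\alpha$ sufficiently small (e.g.\ $\alpha < 1/40$, which is the only regime relevant to the lower bound proof in Section~\ref{sec:adp_product}). Alternatively, the conclusion can be obtained by directly invoking the product-Bernoulli TV bound of \cite{KamathLSU18} (analogous to the use of their Lemma 6.4 in the proof of Theorem~\ref{thm:main_product}), which yields $d_{TV}(\p_u,\p_v) \gtrsim \|\mu^u - \mu^v\|_2 \sqrt{d} = 40\alpha\sqrt{H/d}$; since $H \le d$, $\sqrt{H/d} \ge H/d$, and the bound follows.
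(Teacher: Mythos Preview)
Your main approach is correct and is essentially the paper's proof in different clothing: the paper directly considers the event $A=\{\forall i\in S',\,Z_i=0\}$ (with $S'$ playing the role of your $S_+$), which is exactly your ``mass at $k=0$'' after your chain of reductions, yielding the identical quantity $(1-q)^m-(1-p)^m$. The paper then bounds this via the one-line inequality $a^m-b^m\ge m(a-b)b^{m-1}$ together with $(1-(1+20\alpha)/d)^{|S'|}\ge e^{-(1+20\alpha)}$, which is cleaner than your factorization-plus-$\sinh$ route but leads to the same constant $C_1=20e^{-(1+20\alpha)}>5$ for $\alpha<0.01$.

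Your proposed alternative via the product-Bernoulli bound of \cite{KamathLSU18}, however, is wrong as written. That lemma (Fact~\ref{prodtv} in the paper) gives $d_{TV}(\p_u,\p_v)\gtrsim \|\mu^u-\mu^v\|_2$, with no additional $\sqrt{d}$ factor. Here $\|\mu^u-\mu^v\|_2=40\alpha\sqrt{H}/d$, not $40\alpha\sqrt{H/d}$, and $\sqrt{H}/d<H/d$ whenever $H>1$, so this route does not recover the required $C_1\alpha H/d$ lower bound.
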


\begin{proof}
Let $S = \{i \in [\dims] : u_i \neq v_i\}$, and $S^{\prime} = \{i \in S : u_i = 1\} $. WLOG, let $\absv{S^{\prime}} \ge \frac12 \absv{S}$ (or else we can define $S^{\prime} = \{i \in S : u_i = -1\}$). Given a random sample $Z \in \{ \pm1\}^{\dims}$, we define an event $A = \{\forall i \in S^{\prime}, Z_i=0\}$. Now we consider the difference between the following two probabilities, which is a lower bound of the total variation distance between $\p_u$ and $\p_v$.
\begin{align}
d_{TV} (\p_{u},\p_{v}) &\ge \absv{\probofsub{Z \sim \p_{u}} {A} - \probofsub{Z \sim \p_{v}} {A}} \nonumber\\
&= \Paren{1-\frac{1-20\dist}{\dims}}^{\absv{S^{\prime}}} - \Paren{1-\frac{1+20\dist}{\dims}}^{\absv{S^{\prime}}} \nonumber\\
&\ge \frac{40\dist}{\dims} \cdot  \absv{S^{\prime}} \cdot  \Paren{1-\frac{1+20\dist}{\dims}}^{\absv{S^{\prime}}} \nonumber \\
& \ge \frac{40 \dist}{\dims} \cdot \absv{S'} e^{-(1+20\dist)} \ge \frac{C_1\dist}{\dims} \cdot \ham{u}{v}, \nonumber
\end{align}
where in the last two inequalities, we assume $\dims \ge 1000$ and $\dist<0.01$.
\end{proof} 

{Let $D$ be an upper bound on the expected Hamming distance for a coupling between $\dP_{+i}$ and $\dP_{-i}$ over all $i$.  Since $\cV_{\dims} \subset \Delta_{2,\dims}$, applying Theorem~\ref{thm:assouad} with Lemma~\ref{lem:binaryproducttv} we have
\[
R(\cP, d_{TV}, \eps, \delta) \ge \frac{C_1\alpha}{2\dims} \cdot \dims \cdot \Paren{0.9 e^{-10 \eps D} - 10 D \delta} = \frac{C_1\alpha}{2} \cdot \Paren{0.9 e^{-10 \eps D} - 10 D \delta}. 
\]}

{Setting $R(\cP, d_{TV}, \eps, \delta) \le \alpha$, we get $D =\Omega\Paren{\frac1\eps}$ or $D =\Omega\Paren{\frac1\delta}$, or equivalently, $D =\Omega\Paren{\frac1{\eps+\delta}}$. Lemma~\ref{lem:coupling_distance} below shows that we can take $ D =\frac{40\dist\ns}{\dims}$, which proves the result.}
\end{proof}

\begin{lemma} \label{lem:coupling_distance}
There is a coupling between $(X,Y)$ between $\dP_{+i}$ and $\dP_{-i}$, such that
$\expectation{\ham{X}{Y}} \le \frac{40\dist \ns}{\dims}.$
\end{lemma}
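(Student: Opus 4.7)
\textbf{Proof proposal for Lemma~\ref{lem:coupling_distance}.}

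The plan is to exploit the fact that $\dP_{+i}$ and $\dP_{-i}$ share exactly the same joint distribution on all coordinates \emph{except} the $i$-th coordinate of each of the $n$ samples. Once I isolate this, I can couple just the $i$-th coordinates sample-by-sample using a trivial maximal coupling of two Bernoullis.

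First I would note that in the construction above, the coordinates of $e \in \cE_d$ are chosen independently, so conditioning on $e_i = +1$ (or $e_i = -1$) does not change the joint distribution of $(e_j)_{j \neq i}$. Consequently, under both $\dP_{+i}$ and $\dP_{-i}$, the joint distribution of the $n$ samples restricted to coordinates $\{1,\dots,d\} \setminus \{i\}$ is identical. Moreover, the $i$-th coordinates of the $n$ samples are i.i.d.\ $\ber((1+20\alpha)/d)$ under $\dP_{+i}$ and i.i.d.\ $\ber((1-20\alpha)/d)$ under $\dP_{-i}$, and they are independent of the remaining coordinates (once we condition on $e_i$, the $i$-th coordinates depend only on $e_i$).

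Next I would define the coupling explicitly: draw $X \sim \dP_{+i}$, let $Y$ agree with $X$ on every coordinate $j \neq i$ of every sample, and for each sample $t = 1,\dots,n$ independently sample $Y_t^{(i)}$ via the maximal coupling of $\ber((1+20\alpha)/d)$ with $\ber((1-20\alpha)/d)$ conditioned on $X_t^{(i)}$. By the preceding paragraph the marginal of $Y$ is exactly $\dP_{-i}$, so this is a valid coupling. Under this coupling, sample $t$ contributes to the Hamming distance only if $X_t^{(i)} \neq Y_t^{(i)}$, which has probability equal to the total variation distance between the two Bernoullis, namely $20\alpha/d - (-20\alpha/d) \cdot \tfrac{1}{2}\cdot 2 = 40\alpha/d$. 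Summing over the $n$ samples gives $\expectation{\ham{X}{Y}} \le 40 \alpha n /d$, as required.

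I do not anticipate any real obstacle: the construction is symmetric in $e_i$, so the equality of marginals on non-$i$ coordinates is essentially definitional, and the Bernoulli coupling is standard. The only thing to be careful about is keeping track of the fact that the $n$ samples are correlated across coordinates under the mixture (through the shared $e$), which is why the coupling must preserve \emph{all} non-$i$ coordinates verbatim rather than trying to couple each sample independently in its entirety.
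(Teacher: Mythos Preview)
Your proposal is correct and takes essentially the same approach as the paper: keep all non-$i$ coordinates of every sample identical and couple only the $i$-th coordinate of each sample via the optimal Bernoulli coupling. The paper phrases this coupling as an explicit one-sided flip (if the $i$-th coordinate is $1$, flip it to $0$ with probability $\tfrac{40\alpha}{1+20\alpha}$), which is exactly the maximal coupling you invoke, and both arrive at the same expected Hamming distance $\tfrac{40\alpha n}{d}$.
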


\begin{proof}
We generate $Y\sim \dP_{-i}$ from $X\sim\dP_{+i}$ as follows. If the $i$th coordinate of a sample $X$ is $+1$, we independently flip it to $-1$ with probability  $\frac{40\alpha}{1+20\alpha}$ to obtain a sample $Y$. The expected Hamming distance is bounded by $\frac{40\alpha}{1+20\alpha} \cdot \frac{1+20\alpha}{\dims} \cdot \ns = \frac{40\alpha \ns}{\dims}$. 
\end{proof}

\section{Proofs of Existence of Codes (Lemma~\ref{lem:GV} and Lemma~\ref{lem:constantGV2})}
\label{sec:codes}
%\amargin{This entire section is new, please read it thoroughly. I have changed the equations in Lemma~\ref{lem:GV}!}

\begin{proof}[Proof of Lemma~\ref{lem:GV}]
{This proof is a standard argument for Gilbert-Varshamov bound applied to constant weight codes. We use the following version (Theorem 7 in~\cite{GrahamS80}).}
\begin{lemma}
\label{lem:constantGV}
{There exists a length-$k$ constant weight binary code $\cC$ with weight $l$ and minimum Hamming distance $2\delta$, with $$\absv{\cC} \ge \frac{\binom{\ab}{l}}{\sum_{i=0}^{\delta} \binom{l}{i} \binom{\ab-l}{i}}.$$}
\end{lemma}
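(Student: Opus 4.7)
My plan is to prove Lemma~\ref{lem:constantGV} via a standard greedy Gilbert--Varshamov style argument. Let $W \subseteq \{0,1\}^k$ denote the set of all binary words of length $k$ with weight exactly $l$, so that $|W| = \binom{k}{l}$. Starting with $\cC = \emptyset$, I would repeatedly pick any $w \in W$ whose Hamming distance to every word already placed in $\cC$ is at least $2\delta$, and add it to $\cC$, stopping only when no such $w$ remains. By construction the resulting code consists entirely of weight-$l$ words and has minimum Hamming distance at least $2\delta$.

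The core combinatorial step is to bound how many words of $W$ a single codeword $c \in \cC$ can ``block'', meaning how many $w \in W$ satisfy $\ham{c}{w} < 2\delta$. Since both $c$ and $w$ have weight exactly $l$, their Hamming distance is necessarily even: if $w$ has $i$ ones in positions where $c$ is zero, it must also have $i$ zeros in positions where $c$ is one (to maintain total weight $l$), giving $\ham{c}{w} = 2i$. A direct count shows that the number of $w \in W$ with $\ham{c}{w} = 2i$ is $\binom{l}{i}\binom{k-l}{i}$, obtained by choosing $i$ of the $l$ ones of $c$ to flip off and $i$ of the $k-l$ zeros of $c$ to flip on. Summing over $i$, the number of $w \in W$ within Hamming distance strictly less than $2\delta$ of $c$ equals $\sum_{i=0}^{\delta-1}\binom{l}{i}\binom{k-l}{i}$, which is certainly upper bounded by $\sum_{i=0}^{\delta}\binom{l}{i}\binom{k-l}{i}$.

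When the greedy process terminates, every $w \in W$ lies at distance at most $2\delta - 1$ from some codeword of $\cC$; otherwise we could still extend $\cC$ by adding $w$. Hence the blocking neighborhoods of the $|\cC|$ chosen codewords cover $W$, and combining with the counting bound above yields
\[
|\cC| \cdot \sum_{i=0}^{\delta}\binom{l}{i}\binom{k-l}{i} \;\ge\; |W| \;=\; \binom{k}{l},
\]
from which the claimed inequality follows by dividing through.

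I do not anticipate any real obstacle here: the only nontrivial ingredient is the parity observation that Hamming distances between constant-weight words are always even, which directly produces the closed-form count $\binom{l}{i}\binom{k-l}{i}$ for words at distance exactly $2i$ from a fixed codeword. The transition from the sharper denominator $\sum_{i=0}^{\delta-1}$ to the stated $\sum_{i=0}^{\delta}$ is merely a harmless weakening of the bound, consistent with the form cited from~\cite{GrahamS80}.
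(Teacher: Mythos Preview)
Your proof is correct. The greedy Gilbert--Varshamov argument you outline works exactly as you describe: the parity observation for constant-weight words gives the ball size $\sum_{i=0}^{\delta-1}\binom{l}{i}\binom{k-l}{i}$, the covering argument at termination yields the bound, and weakening the denominator to $\sum_{i=0}^{\delta}$ is harmless.

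The paper, however, does not actually prove this lemma. It is stated inside the proof of Lemma~\ref{lem:GV} as a quoted result, attributed directly to Theorem~7 of~\cite{GrahamS80}, and then simply applied with $2\delta = l/4$. So there is no ``paper's own proof'' to compare against here: you have supplied the standard argument that the cited reference presumably contains, whereas the paper treats the statement as a black box.
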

Applying this Lemma with $2\delta=\frac{l}{4}$, we have
\begin{align}
\absv{\cC} &\ge \frac{\binom{\ab}{l}} {\sum_{j=0}^{l/8} \binom{l}{j} \cdot \binom{\ab-l}{j}}\nonumber  \ge \frac{\binom{\ab}{l}}{\frac{l}{8} \cdot \binom{l}{\frac{l}{8}} \cdot \binom{\ab}{\frac{l}{8}}}
  = \frac1 {\frac{l}{8} \cdot \binom{l}{\frac{l}{8}} } \cdot \prod_{i = 0}^{\frac{7l}{8}- 1} \frac{k - \frac{l}{8}-i}{l - i} \\
& \ge \frac{2\sqrt{7} \pi}{e} \cdot (0.59)^{\frac{7l}{8}} \cdot \Paren{ \frac{\ab-\frac{l}{8}}{l}}^{\frac{7l}{8}} \label{eqn:gv-step}\\
 &\ge \Paren{\frac{\ab}{2^{7/8}l}}^{\frac{7l}{8}}, \nonumber 
\end{align}
{In~\eqref{eqn:gv-step}, we note that $\frac{k - \frac{l}{8}-i}{l - i}$ is monotonically increasing as $i$ increases.} And the first part is obtained by the Stirling's approximation $\sqrt{2\pi}\cdot l^{l+\frac12} \cdot e^{-l} \le l! \le e \cdot l^{l+\frac12} \cdot e^{-l}$ and the fact that $1.1^l \ge \sqrt{l}$ when $l \ge 20$. The last inequality comes from $l \le k/2$ and $15/16 \times 0.59 > 1/2^{7/8}$.
\end{proof}

\begin{proof}[Proof of Lemma~\ref{lem:constantGV2}]
By the Gilbert-Varshamov bound (Lemma~\ref{lem:constantGV}),
\[
	\absv{\cH} \ge \frac{h^\dims}{\sum_{j=0}^{\frac{d}{2}-1} \binom{\dims}{j}(h-1)^j} \ge \frac{h^\dims}{\frac{\dims}{2} \cdot \binom{\dims}{\frac{\dims}{2}} \cdot h^{\frac{\dims}{2}}} \ge
	\frac{h^{\frac{\dims}{2}}} { \dims \cdot 2^{\dims} }  \ge \left(\frac{h}{16}\right)^{\frac{\dims}{2}}.
\]
\end{proof}
%%%
%%
%%%
\chapter{Privately Learning Markov Random Fields}
\label{cha:MRF}
\section{Introduction}
In this chapter, we continue to study the problem of private distribution estimation. However, we focus on a more complicated class of distributions -- random graphs.

Graphical models are a common structure used to model high-dimensional
data, which find a myriad of applications in diverse research
disciplines, including probability theory, Markov Chain Monte Carlo,
computer vision, theoretical computer science, social network
analysis, game theory, and computational
biology~\cite{LevinPW09,Chatterjee05,Felsenstein04,DaskalakisMR11,GemanG86,Ellison93,MontanariS10}.
While statistical tasks involving general distributions over $\dims$
variables often run into the curse of dimensionality (i.e., an
exponential sample complexity in $\dims$), Markov Random Fields (MRFs)
are a particular family of undirected graphical models which are
parameterized by the ``order'' $t$ of their interactions.  Restricting
the order of interactions allows us to capture most distributions
which may naturally arise, and also avoids this severe dependence on
the dimension (i.e., we often pay an exponential dependence on $t$
instead of $\dims$).  An MRF is defined as follows, see
Section~\ref{sec:preliminaries} for more precise definitions and
notations we will use in this chapter.
%In this work, we focus on the fundamental case of
%\emph{pairwise} graphical models (corresponding to $t = 2$), and binary $t$-wise MRFs (i.e., with alphabet size 2).
%Note that, as a special case, this includes the Ising model~\cite{Ising25}.
%More formally, we study the following objects.
\begin{definition}
  Let $\ab, t, \dims \in \mathbb{N}$, $G = (V,E)$ be a graph on $\dims$ nodes, and $C_t(G)$ be the set of cliques of size at most $t$ in $G$.
  A \emph{Markov Random Field} with alphabet size $\ab$ and $t$-order interactions is a distribution $\mathcal{D}$ over $[\ab]^\dims$ such that
  \[
    \Pr_{X \sim \mathcal{D}}[X = x] \propto \exp\left(\sum_{I \in C_t(G)} \psi_I(x) \right),
  \]
  where $\psi_I : [\ab]^{\dims} \rightarrow \mathbb{R}$ depends only on varables in $I$.
\end{definition}

The case when $\ab = t = 2$ corresponds to the prototypical example of an MRF, the Ising model~\cite{Ising25} (Definition~\ref{def:ising}).
More generally, if $t = 2$, we call the model \emph{pairwise} (Definition~\ref{def:pairwise}), and if $\ab = 2$ but $t$ is unrestricted, we call the model a \emph{binary MRF} (Definition~\ref{def:mrf}). In this chapter, we mainly look at these two special cases of MRFs.

Given the wide applicability of these graphical models, there has been
a great deal of work on the problem of graphical model
estimation~\cite{RavikumarWL10, SanthanamW12, Bresler15, VuffrayMLC16,
  KlivansM17, HamiltonKM17, RigolletH17, LokhovVMC18, WuSD19}.  That
is, given a dataset generated from a graphical model, can we infer
properties of the underlying distribution?  Most of the attention has
focused on two learning goals.

\begin{enumerate}
  \item \emph{Structure learning} (Definition~\ref{def:learn-struct}): Recover the set of non-zero edges in $G$.
  \item \emph{Parameter learning} (Definition~\ref{def:learn-params}): Recover the set of non-zero edges in $G$, as well as $\psi_I$ for all cliques $I$ of size at most $t$.
\end{enumerate}

It is clear that structure learning is easier than parameter learning.
Nonetheless, the sample complexity of both learning goals is known to be roughly equivalent.
That is, both can be performed using a number of samples which is only \emph{logarithmic} in the dimension $\dims$ (assuming a model of bounded ``width'' $\lambda$\footnote{This is a common parameterization of the problem, which roughly corresponds to the graph having bounded-degree, see Section~\ref{sec:preliminaries} for more details.}), thus facilitating estimation in very high-dimensional settings.

%However, in modern settings of data analysis, we may be running our algorithms on datasets which are sensitive in nature.
%For instance, graphical models are often used to model medical and genetic data~\cite{FriedmanLNP00,LagorAFH01} -- if our learning algorithm reveals too much information about individual datapoints used to train the model, this is tantamount to releasing medical records of individuals providing their data, thus violating their privacy.
%In order to assuage these concerns, we consider the problem of learning graphical models under the constraint of \emph{differential privacy} (DP)~\cite{DworkMNS06}, considered by many to be the gold standard of data privacy.
%Informally, an algorithm is said to be differentially private if its distribution over outputs is insensitive to the addition or removal of a single datapoint from the dataset (a more formal definition is provided in Section~\ref{sec:preliminaries}).
%Differential privacy has enjoyed widespread adoption, including deployment in Apple~\cite{AppleDP17}, Google~\cite{ErlingssonPK14}, Microsoft~\cite{DingKY17}, and the US Census Bureau for the 2020 Census~\cite{DajaniLSKRMGDGKKLSSVA17}.

Our goal is to design algorithms which guarantee both: 

\begin{itemize}
  \item Accuracy: With probability greater than $2/3$, the algorithm learns the underlying graphical model;

  \item Privacy: The algorithm satisfies differential privacy, even when the dataset is not drawn from a graphical model. 
\end{itemize}

Thematically, we investigate the following question: how much additional data is needed to learn Markov Random Fields under the constraint of differential privacy?
As mentioned before, absent privacy constraints, the sample complexity is logarithmic in $\dims$.
Can we guarantee privacy with comparable amounts of data?
Or if more data is needed, how much more?

\subsection{Results and Techniques}
\label{sec:results-techniques}
We proceed to describe our results on privately learning Markov Random Fields.
In this section, we will assume familiarity with some of the most common notions of differential privacy: pure $\ve$-differential privacy, $\rho$-zero-concentrated differential privacy, and approximate $(\ve, \delta)$-differential privacy.
In particular, one should know that these are in (strictly) decreasing order of strength (i.e., an algorithm which satisfies pure DP gives more privacy to the dataset than concentrated DP), formal definitions appear in Section~\ref{sec:preliminaries}.
Furthermore, in order to be precise, some of our theorem statements will use notation which is defined later (Section~\ref{sec:preliminaries}) -- these may be skipped on a first reading, as our prose will not require this knowledge.

\paragraph{Upper Bounds.}
Our first upper bounds are for parameter learning.
First, we have the following theorem, which gives an upper bound for parameter learning pairwise graphical models under concentrated differential privacy, showing that this learning goal can be achieved with $O(\sqrt{\dims})$ samples.
In particular, this includes the special case of the Ising model, which corresponds to an alphabet size $k = 2$.
Note that this implies the same result if one relaxes the learning goal to structure learning, or the privacy notion to approximate DP, as these modifications only make the problem easier. Further details are given in Section~\ref{sec:ub-pair}.
\begin{theorem}
  \label{thm:est-ub}
  There exists an efficient $\rho$-zCDP algorithm which learns the parameters of a pairwise graphical model to accuracy $\dist$ with probability at least $2/3$, which requires a sample complexity of
  $$\ns = O\Paren{\frac{ \lambda^2 k^5 \log(\dims k) e^{O(\lambda)}}{\dist^4}+ \frac{\sqrt{\dims} \lambda^2 k^{5.5} \log^2(\dims k)e^{O(\lambda)}}{\sqrt{\rho} \alpha^3}}$$
  %Let $\cD(A,\theta)$ be an unknown $\dims$-variable Ising model distribution. Suppose that $\cD(A,\theta)$ has width $\lambda(A,\theta)\le \lambda$. Then with probability greater than $2/3$, Algorithm~\ref{alg:DPIsing} outputs $\hat{A}$ that satisfies $\max_{i,j \in [\dims]} \absv{A_{i,j} - \hat{A}_{i,j}} \le \dist$ if the number of i.i.d. samples satisfy
%$$\ns = O\Paren{\frac{ \lambda^2 \log(\dims) e^{12\lambda}}{\dist^4}+ \frac{\sqrt{\dims} \lambda^2 \log^2(\dims)e^{9\lambda}}{\sqrt{\rho} \alpha^3}}.$$
\end{theorem}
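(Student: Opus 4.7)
The plan is to design a private analogue of the Klivans--Meka-style regression approach for learning pairwise graphical models. In the non-private algorithm, the parameters are recovered by solving, for each node $v\in[\dims]$, a convex program (an $\ell_1$-constrained multinomial logistic regression when $\ab>2$) whose minimizer encodes the interactions $\psi_{uv}$ between $v$ and every other node $u$. To privatize, I would replace each per-node optimization with a private convex optimization subroutine, such as a noisy Frank--Wolfe or noisy mirror-descent variant, calibrated so that it satisfies $(\rho/\dims)$-zCDP, and then compose across the $\dims$ nodes via the standard composition rule for zCDP to obtain $\rho$-zCDP overall.

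The accuracy analysis would proceed in three stages. First, bound the excess empirical risk of the private optimizer; with an $\ell_1$-constrained formulation and a Frank--Wolfe-type algorithm, the privacy-induced error contributes a term of order $\polylog(\dims \ab)/\sqrt{\ns\rho'}$, which depends on the ambient dimension only through a $\sqrt{\log \dims}$ factor rather than $\sqrt{\dims}$. Second, combine this with the standard generalization bound for bounded-width MRFs to control the total expected prediction error. Third, convert this prediction-error bound into an $\ell_\infty$ bound on the recovered coefficients via the ``marginal lower bound'' structural property of bounded-width pairwise MRFs, which is what introduces the $\lambda^2 e^{O(\lambda)}$ and polynomial-in-$\ab$ factors. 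Finally, each edge is recovered twice (once per endpoint), and symmetrizing or averaging only improves accuracy.

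The main obstacle will be obtaining only $\sqrt{\dims}$ overall in the privacy term rather than $\dims$. This requires two ingredients acting in tandem: (i) zCDP composition, which pays only $\sqrt{\dims}$ (not $\dims$) when combining $\dims$ equal-budget subproblems, and (ii) a per-subproblem private optimizer whose error does not reintroduce a $\sqrt{\dims}$ factor from the ambient dimension of the regression, which is why an $\ell_1$-constrained Frank--Wolfe-type scheme is essential rather than noisy SGD on the full parameter vector. A secondary technical challenge is to track the $\ab$-dependence through the multinomial softmax setting: the Lipschitz constant of the loss, the conversion from excess prediction risk to $\ell_\infty$ parameter recovery, and the domain radius each contribute polynomial factors in $\ab$ and $e^{O(\lambda)}$, and these must be composed carefully to match the claimed $\ab^{5.5}$ and $e^{O(\lambda)}$ dependencies in the bound.
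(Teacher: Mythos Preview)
Your high-level strategy matches the paper's: privatize a per-node $\ell_1$-constrained regression with a Frank--Wolfe-type optimizer, compose across nodes under zCDP, and convert excess risk to $\ell_\infty$ parameter error via a $\delta$-unbiasedness lemma. The key observation that Frank--Wolfe avoids an ambient-dimension factor in the per-subproblem error, so that only the $\sqrt{\dims}$ from zCDP composition survives, is exactly what the paper exploits.

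There is one structural difference worth flagging. You propose a single multinomial logistic regression per node, giving $\dims$ subproblems with budget $\rho/\dims$ each. The paper instead follows the \cite{WuSD19} reduction: for each node $i$ and each \emph{pair} of label values $(u,v)\in[\ab]^2$, it conditions on $Z_i\in\{u,v\}$ and solves a \emph{binary} $\ell_1$-constrained logistic regression, then centers and averages over $v$ to recover each row of $W_{i,j}$. This yields $\dims\ab^2$ subproblems at budget $\rho/(\dims\ab^2)$, and it is this decomposition, together with the sample-thinning from conditioning (each subproblem sees only a $\delta$-fraction of samples), that produces the particular $\ab^{5.5}$ exponent. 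Your multinomial route is plausible in principle, but you would need a multinomial analogue of the excess-risk-to-$\ell_\infty$ conversion (the paper uses a binary lemma from \cite{WuSD19}); getting the exact $\ab$-dependence to line up with the stated bound would require redoing that analysis.

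A minor correction: the private Frank--Wolfe excess-risk rate used in the paper is $\tilde O\bigl((\ns\sqrt{\rho'})^{-2/3}\bigr)$, not $\tilde O\bigl((\ns\rho')^{-1/2}\bigr)$; this is what, after the $\delta$-unbiasedness conversion and setting $\rho'=\rho/(\dims\ab^2)$, produces the $\sqrt{\dims}/(\sqrt{\rho}\,\dist^3)$ scaling in the second term.
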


This result can be seen as a private adaptation of the elegant work of~\cite{WuSD19} (which in turn builds on the structural results of~\cite{KlivansM17}).
Wu, Sanghavi, and Dimakis~\cite{WuSD19} show that $\ell_1$-constrained logistic regression suffices to learn the parameters of all pairwise graphical models. 
We first develop a private analog of this method, based on the private Franke-Wolfe method of Talwar, Thakurta, and Zhang~\cite{TalwarTZ14,TalwarTZ15}, which is of independent interest. This method is studied in Section~\ref{sec:PFW}.
\begin{theorem}
  \label{thm:log-reg}
If we consider the problem of private sparse logistic regression,
there exists an efficient $\rho$-zCDP algorithm that produces a parameter vector $w^{priv}$, such that with probability at least $1-\beta$, the empirical risk
\[ 
\cL(w^{priv}; D) - \cL(w^{erm}; D) = O\Paren{ \frac{\lambda^{\frac{4}{3}}\log(\frac{ \ns\dims}{\beta}) } {(\ns \sqrt{\rho})^{\frac{2}{3}}}}.
\]
\end{theorem}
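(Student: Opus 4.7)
The plan is to reduce private sparse logistic regression to constrained convex optimization over the $\ell_1$-ball of radius $\lambda$, then apply a zCDP adaptation of the private Frank-Wolfe method of Talwar--Thakurta--Zhang. Concretely, write the empirical logistic loss as $\cL(w;D) = \frac{1}{\ns} \sum_{i=1}^{\ns} \log(1+\exp(-y_i \langle w, x_i\rangle))$, and set the feasible region to $\cK = \{w\in\RR^{\dims} : \|w\|_1 \le \lambda\}$, so that $w^{erm} = \argmin_{w\in\cK} \cL(w;D)$. Two structural facts drive the analysis: (i) for data with bounded coordinates, a single-sample logistic loss is $O(1)$-Lipschitz in $\ell_\infty$, so its gradient has $\ell_\infty$-norm $O(1)$ on $\cK$; (ii) the curvature constant of $\cL$ over $\cK$ is $O(\lambda^2)$, which is all that Frank-Wolfe-type analysis requires.

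Next, I would instantiate a $\rho$-zCDP Frank-Wolfe: at iteration $t$, compute the gradient $g_t = \nabla \cL(w_t;D)$, and use a report-noisy-max / exponential mechanism to privately select a vertex $\hat s_t \in \{\pm \lambda e_j\}_{j=1}^{\dims}$ that approximately minimizes $\langle s, g_t\rangle$. The sensitivity of each coordinate of $g_t$ is $O(1/\ns)$, so under zCDP a single selection step costs $\rho/T$ zCDP budget for $T$ total iterations, introducing per-step selection error $O\bigl(\tfrac{\lambda \sqrt{T\log(\dims T/\beta)}}{\ns\sqrt{\rho}}\bigr)$ with probability $1-\beta/T$ by standard exponential mechanism utility, together with zCDP-to-DP conversion / concentration bounds. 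The update is then $w_{t+1} = (1-\mu_t) w_t + \mu_t \hat s_t$ for the usual stepsize $\mu_t = 2/(t+2)$.

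The utility analysis combines the classical Frank-Wolfe recursion, where the suboptimality gap $h_{t+1} \le (1-\mu_t) h_t + \mu_t \cdot (\text{linearization error}) + \mu_t^2 \cdot (\text{curvature})$, with the per-step private error above. Telescoping gives $\cL(w_T;D) - \cL(w^{erm};D) = O\bigl(\tfrac{\lambda^2}{T} + \tfrac{\lambda \sqrt{T\log(\ns\dims/\beta)}}{\ns\sqrt{\rho}}\bigr)$. Optimizing $T$, the balance occurs at $T = \Theta\bigl((\ns\sqrt{\rho}/\sqrt{\log(\ns\dims/\beta)})^{2/3}\lambda^{2/3}\bigr)$, yielding the claimed excess empirical risk $O\bigl(\lambda^{4/3} \log(\ns\dims/\beta)/(\ns\sqrt{\rho})^{2/3}\bigr)$. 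Privacy of the composed algorithm follows from zCDP composition: $T$ selection steps each of budget $\rho/T$ yield total budget $\rho$.

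The main obstacle is the careful zCDP adaptation of the private Frank-Wolfe utility guarantee from the $(\eps,0)$-DP setting of~\cite{TalwarTZ14,TalwarTZ15}. One must either replace the exponential mechanism with a Gaussian-noise-based selection (report-noisy-max with Gaussian noise, analyzed via zCDP) and re-prove the per-iteration utility bound, or invoke the equivalence $\eps$-DP $\Rightarrow$ $\eps^2/2$-zCDP and set $\eps = \sqrt{2\rho/T}$ per iteration, which loses only constants. A secondary technical point is ensuring the high-probability bound: since the union bound is taken over $T$ selection steps, the per-step failure probability must be $\beta/T$, which is what produces the $\log(\ns\dims/\beta)$ factor (absorbing polynomial factors of $T$ into the logarithm). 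With these pieces in place, Theorem~\ref{thm:log-reg} follows, and it will be applied inside the parameter-learning algorithm of Theorem~\ref{thm:est-ub} via node-wise sparse logistic regression as in~\cite{WuSD19}.
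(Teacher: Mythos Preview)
Your proposal is correct and follows essentially the same approach as the paper. The paper instantiates the private Frank--Wolfe algorithm of Talwar--Thakurta--Zhang over the $\ell_1$-ball $\{w:\|w\|_1\le 2\lambda\}$, bounds the Lipschitz constant by $L_1\le 2$ and the curvature constant by $\Gamma_\ell\le\lambda^2$ (the latter via the Hessian bound from~\cite{TalwarTZ14}), and obtains zCDP exactly by your second route: each Laplace-noised selection step is $\sqrt{\rho/T}$-pure DP, hence $\rho/T$-zCDP, and composition over $T$ steps yields $\rho$-zCDP. The only cosmetic differences are that the paper uses Laplace rather than Gaussian noise in report-noisy-max (giving a $\log$ rather than $\sqrt{\log}$ per-step error, which explains the full $\log$ factor in the final bound) and black-boxes the Frank--Wolfe recursion by citing~\cite{TalwarTZ14}, whereas you spell it out.
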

We note that Theorem~\ref{thm:log-reg} avoids a polynomial dependence on the dimension $\dims$ in favor of a polynomial dependence on the ``sparsity'' parameter $\lambda$.
The greater dependence on $\dims$ which arises in Theorem~\ref{thm:est-ub} is from applying Theorem~\ref{thm:log-reg} and then using composition properties of concentrated DP.

We go on to generalize the results of~\cite{WuSD19}, showing that $\ell_1$-constrained logistic regression can also learn the parameters of binary $t$-wise MRFs.
This result is novel even in the non-private setting. Further details are presented in Section~\ref{sec:bin-mrf}.

%The following theorem shows that we can learn the $t$-order parameters of the multilinear polynomial corresponding to the model with $\tilde O(\sqrt{\dims})$ samples.
%\begin{theorem}
%  \label{thm:est-ub-MRF}
%There exists an efficient $\rho$-zCDP algorithm which finds a multilinear polynomial $u$ such that that with probability greater than $2/3$,
%  for all $I \subset [\dims]$ with $\absv{I}=t$, $\absv{\bar h(I) -\bar u(I)} \le \dist$,
%given $\ns$ i.i.d.\ samples $Z^1,\cdots, Z^{\ns} \sim \cD$, where
%$$\ns = O\Paren{ \frac{ (2t)^{O(t)} e^{O(\lambda t)} \cdot \log (\dims) }  {\dist^4} + \frac{ \sqrt{\dims} \cdot (2t)^{O(t)} e^{O(\lambda t)} \cdot \log^2 (\dims)} { \sqrt{\rho}\dist^3}}. $$
%Moreover, we have $\norm{h-u} \le O(\dist \cdot \dims^t)$, where $\norm{h} = \sum_{I} \absv{\bar{h}(I)}$.
%\end{theorem}

The following theorem shows that we can learn the parameters of binary $t$-wise MRFs with $\tilde O(\sqrt{\dims})$ samples.
\begin{theorem}
Let $\cD$ be an unknown binary $t$-wise MRF with associated polynomial $h$. Then there exists an $\rho$-zCDP algorithm which, with probability at least $2/3$, learns the maximal monomials of $h$ to accuracy $\dist$, given $\ns$ i.i.d.\ samples $Z^1,\cdots, Z^{\ns} \sim \cD$, where
$$\ns =O\Paren{ \frac{ e^{5\lambda t} \sqrt{\dims} \log^2(\dims) }{\sqrt{\rho} \dist^{\frac{9}{2}}}+ \frac{  t \lambda^2 \sqrt{\dims} \log{\dims}}{\sqrt{\rho}\dist^2} +  \frac{e^{6\lambda t} \log(\dims)}{\dist^6} } . $$
\end{theorem}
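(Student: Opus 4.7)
\medskip

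\noindent\textbf{Proof plan.} The plan is to mirror the pairwise strategy of Theorem~\ref{thm:est-ub}, but lifted to the $t$-wise setting via the Klivans--Meka--Wu--Sanghavi--Dimakis style of reducing MRF learning to a collection of sparse logistic regressions. For a binary $t$-wise MRF, the conditional law of $Z_i$ given $Z_{-i}$ is a sigmoid of a multilinear polynomial in $Z_{-i}$ of degree at most $t-1$, whose coefficients are determined by the monomials of $h$ that contain variable $i$. So for each node $i\in[\dims]$, we form the feature vector $\Phi(Z_{-i})\in\{\pm1\}^{N}$ consisting of all monomials in $Z_{-i}$ of degree $\le t-1$ (where $N=O(\dims^{t-1})$), and regress $Z_i$ on $\Phi(Z_{-i})$ using $\ell_1$-constrained logistic regression with radius $\lambda$. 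By the $t$-wise extension of the Wu--Sanghavi--Dimakis guarantee (which is the novel non-private structural piece I would first establish), an $\ell_1$-constrained logistic regression estimate whose excess empirical risk is at most $\eta$ yields an estimate of the node-$i$ coefficient vector that is accurate to $O(e^{O(\lambda t)}\sqrt{\eta})$ in a suitable norm, and in particular lets one read off all maximal monomials of $h$ touching $i$ to $\ell_\infty$ accuracy $\dist$ provided $\eta=O(e^{-O(\lambda t)}\dist^{2})$, after controlling the statistical deviation by a standard Rademacher/Chernoff argument with the third term $e^{6\lambda t}\log(\dims)/\dist^6$ worth of samples.

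\medskip

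The private piece is to run the Private Frank--Wolfe procedure of Theorem~\ref{thm:log-reg} on each of the $\dims$ regressions, splitting the zCDP budget evenly so that each regression is run with parameter $\rho'=\rho/\dims$. Theorem~\ref{thm:log-reg} gives excess empirical risk
\[
\cL(w^{priv};D)-\cL(w^{erm};D)=O\!\left(\frac{\lambda^{4/3}\log(\ns\dims/\beta)}{(\ns\sqrt{\rho'})^{2/3}}\right),
\]
and setting this equal to the target $\eta = \Theta(e^{-O(\lambda t)}\dist^{2})$ and solving for $\ns$ while substituting $\sqrt{\rho'}=\sqrt{\rho/\dims}$ yields the first term $e^{5\lambda t}\sqrt{\dims}\log^2(\dims)/(\sqrt{\rho}\,\dist^{9/2})$ of the claimed sample complexity. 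The second term $t\lambda^2\sqrt{\dims}\log\dims/(\sqrt{\rho}\dist^{2})$ arises from the Frank--Wolfe iteration count / gradient-sensitivity overhead required for the noise-addition step to have sensitivity $O(\lambda)$ per coordinate, aggregated across the $\dims$ node regressions under composition; I would track this explicitly when instantiating Theorem~\ref{thm:log-reg} with the logistic loss gradient bounds.

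\medskip

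Privacy follows immediately from composition of zCDP: $\dims$ independent $\rho/\dims$-zCDP mechanisms compose to $\rho$-zCDP, and the final post-processing of reading off maximal monomials is free. For accuracy, I would take a union bound over the $\dims$ node regressions (absorbing the $\log\dims$ factor into the logs already in Theorem~\ref{thm:log-reg}), combine the per-node $\ell_\infty$ guarantee into a recovery guarantee for all maximal monomials of $h$ (each monomial of degree $s\le t$ appears in exactly $s$ of the node regressions, so cross-checking gives consistent estimates), and pick the maximum of the three $\ns$-requirements.

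\medskip

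The main obstacle is the structural $t$-wise extension of Wu--Sanghavi--Dimakis: showing that an $\ell_1$-constrained logistic regression with degree-$(t-1)$ monomial features, when its excess empirical risk is small, actually identifies the correct set of maximal monomials of $h$ and gives $\ell_\infty$-accurate coefficients. This requires an $e^{-O(\lambda t)}$ restricted-strong-convexity type lower bound on the logistic loss around the true parameter, together with a bound on the product-feature Rademacher complexity under the binary MRF distribution; both degrade exponentially in $\lambda t$, which is precisely what produces the $e^{O(\lambda t)}$ prefactors in the final bound. Once that structural lemma is in place, the remainder of the proof is a careful bookkeeping exercise combining it with Theorem~\ref{thm:log-reg} and zCDP composition.
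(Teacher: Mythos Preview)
Your plan has a genuine gap: the ``structural lemma'' you flag as the main obstacle---that small excess logistic risk directly yields $\ell_\infty$-accurate monomial coefficients with only an $e^{O(\lambda t)}$ loss---is precisely what the paper does \emph{not} prove and instead circumvents. The Klivans--Meka bound for the $t$-wise setting (Lemma~\ref{lem:parameter-error-mrf}) controls $\|u_1-u_2\|_1$ and carries a $\binom{\dims}{t}$ factor; pushing that down to per-coefficient accuracy $\dist$ forces the excess risk target $\eta$ to be polynomially small in $\dims^{-t}$, which lands you at the $\dims^{O(t)}$ sample complexity of Section~\ref{sec:MRFl_1}, not the $\sqrt{\dims}$ dependence claimed here. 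So simply reading off $\bar v_i(I)$ from the per-node regression output, as you propose, does not work within the stated budget, and your explanation of the second term as ``Frank--Wolfe gradient-sensitivity overhead'' is not where it comes from.

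The paper's route is a two-stage algorithm. After the per-node private logistic regression produces the polynomial $v_i$, the relevant Klivans--Meka guarantee is probabilistic: for a fresh $X\sim\cD$ and any maximal monomial $I$, the random variable $\partial_I v_i(X)$ is close to the true coefficient $\overline{\partial_i h}(I)$ with high probability. The coefficient is then estimated as $\sum_{I'}\overline{\partial_I v_i}(I')\,\hat Q(I')$, an inner product of the regression coefficients with private estimates $\hat Q(I')\approx\E\bigl[\prod_{j\in I'}Z_j\bigr]$ of all parity queries of order at most $t$. These $\hat Q$ are obtained by running the Private Multiplicative Weights mechanism on a separate batch of samples with half the zCDP budget, and \emph{this} query-release step is the source of the second term $t\lambda^2\sqrt{\dims}\log\dims/(\sqrt{\rho}\,\dist^2)$. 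Without it there is no path from the regression output to $\ell_\infty$-accurate maximal monomials at the claimed rate.
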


To obtain the rate above, our algorithm uses the Private
  Multiplicative Weights (PMW) method by \cite{HardtR10} to estimate
  all parity queries of all orders no more than $t$. The PMW method
  runs in time exponential in $p$, since it maintains a distribution
  over the data domain. We can also obtain an \emph{oracle-efficient}
  algorithm that runs in polynomial time when given access to an
  empirical risk minimization oracle over the class of parities. By
  replacing PMW with such an oracle-efficient algorithm \textsc{sepFEM} in
  \cite{VietriTBSW20}, we obtain a slightly worse sample complexity
$$\ns =O\Paren{ \frac{ e^{5\lambda t} \sqrt{\dims} \log^2(\dims) }{\sqrt{\rho} \dist^{\frac{9}{2}}}+ \frac{  t \lambda^2 {\dims^{5/4}} \log{\dims}}{\sqrt{\rho}\dist^2} +  \frac{e^{6\lambda t} \log(\dims)}{\dist^6} } . $$

For the special case of structure learning under approximate differential privacy, we provide a significantly better algorithm.
In particular, we can achieve an $O(\log \dims)$ sample complexity, which improves exponentially on the above algorithm's sample complexity of $O(\sqrt{\dims})$.
The following is a representative theorem statement for pairwise graphical models, though we derive similar statements for binary MRFs of higher order.
\begin{theorem}
  \label{thm:struct-ub}
    There exists an efficient $(\varepsilon, \delta)$-differentially private algorithm which, with probability at least $2/3$, learns the structure of a pairwise graphical model, which requires a sample complexity of $$n = O\left(\frac{\lambda^2 \ab^4 \exp(14\lambda) \log(\dims \ab)\log(1/\delta)}{\varepsilon\eta^4}\right).$$
\end{theorem}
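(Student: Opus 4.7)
The plan is to avoid private parameter learning altogether (which would cost $\sqrt{\dims}$ samples, as in Theorem~\ref{thm:est-ub}) and instead reduce structure learning to thresholding a collection of low-sensitivity empirical statistics, privatized by a single application of the Gaussian mechanism. The intuition is that structure learning is a selection problem with a promised gap of width $\eta$, so we need only estimate each candidate-edge statistic to accuracy $\eta/4$ — not its exact value — and the cost of privacy enters only through a logarithmic union bound over the $O((\dims k)^2)$ statistics rather than through the $\ell_2$ geometry of an $\dims$-dimensional parameter vector.

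First, I would construct, for each candidate edge $\{u,v\}$ and each pair of alphabet symbols $(a,b)\in[k]^2$, a non-private witness statistic $T_{u,v,a,b}(D)$ built from centered empirical conditional probabilities in the spirit of Klivans--Meka or Wu--Sanghavi--Dimakis (Theorem~\ref{thm:est-ub}). The goal of the construction is a structural lemma: assuming the width bound $\lambda$ and the edge-magnitude separation $\eta$, $\max_{a,b}|\EE[T_{u,v,a,b}]| \geq 3\eta/4$ when $\{u,v\}\in E$, and $\leq \eta/4$ otherwise, with the gap controlled by a factor of $e^{O(\lambda)}$ and a polynomial in $k$ (which is where the $k^4 e^{14\lambda}/\eta^4$ factors in the sample complexity originate). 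Crucially, each $T_{u,v,a,b}$ is an average of quantities in $[-1,1]$ over the $n$ samples, so changing one user's datum changes each statistic by at most $O(1/n)$, and the $\ell_2$ sensitivity of the full $O((\dims k)^2)$-dimensional vector of statistics is $O(1/n)$ as well.

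Second, I would release the entire vector of statistics under $(\varepsilon,\delta)$-DP by the Gaussian mechanism, with per-coordinate noise of standard deviation $\sigma = O(\sqrt{\log(1/\delta)}/(n\varepsilon))$, and output as edges exactly those pairs $\{u,v\}$ for which some thresholded coordinate exceeds $\eta/2$. Correctness follows from two concentration bounds, each holding with probability $\geq 5/6$: (i) by Hoeffding plus a union bound over $(\dims k)^2$ coordinates, each empirical statistic is within $\eta/8$ of its expectation provided $n = \Omega(\lambda^2 \log(\dims k)/\eta^2)$ (with the $\lambda^2 k^4 e^{14\lambda}$ factor absorbed into the gap); (ii) by a Gaussian tail plus union bound over $(\dims k)^2$ coordinates, the injected noise is at most $\sigma\sqrt{\log(\dims k)} \leq \eta/8$ provided $n = \Omega(\sqrt{\log(1/\delta)\log(\dims k)}/(\varepsilon\eta))$. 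Combining both bounds and the structural gap yields the claimed $n = O(\lambda^2 k^4 e^{14\lambda}\log(\dims k)\log(1/\delta)/(\varepsilon\eta^4))$.

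The main obstacle is Step~1: the statistic $T_{u,v,a,b}$ must simultaneously (a) have $O(1/n)$ per-sample sensitivity, (b) exhibit a clean $\Omega(\eta)$ separation between edges and non-edges that degrades only polynomially in $k$ and exponentially in $\lambda$, and (c) concentrate at rate $\log(\dims k)/\eta^2$. A per-node regression in the style of Theorem~\ref{thm:est-ub} satisfies (b) and (c) but outputs a joint optimizer whose sensitivity is governed by the curvature of the loss and not $1/n$; to circumvent this I would instead use the closed-form Klivans--Meka centered-influence statistic $\EE[\mathbf{1}\{X_v=a\}(\mathbf{1}\{X_u=b\}-\EE[\mathbf{1}\{X_u=b\}])]$ evaluated on the empirical distribution, which is manifestly an average of $[-1,1]$-bounded functions of a single sample, and then port the KM/WSD influence lower bound to show the required gap. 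The remaining steps (Gaussian mechanism, union bound, thresholding) are then routine.
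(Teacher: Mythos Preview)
Your proposal has two genuine gaps, and the paper's approach is quite different.

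\textbf{Gap 1: the $\ell_2$-sensitivity claim is false.} You assert that the full $O((\dims k)^2)$-dimensional vector of statistics has $\ell_2$ sensitivity $O(1/n)$. But each sample $X^i\in[k]^{\dims}$ contributes to \emph{every} coordinate $T_{u,v,a,b}$, and changing one sample can perturb each coordinate by $\Theta(1/n)$. Hence the $\ell_2$ sensitivity is $\Theta(\dims k/n)$, not $O(1/n)$. With the Gaussian mechanism this forces per-coordinate noise of order $\dims k\sqrt{\log(1/\delta)}/(n\varepsilon)$, and requiring this to be below $\eta/8$ gives $n=\Omega(\dims k/(\varepsilon\eta))$---a polynomial, not logarithmic, dependence on $\dims$. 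This is exactly the $\sqrt{\dims}$ (or worse) barrier that the theorem is supposed to beat.

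\textbf{Gap 2: the proposed statistic does not separate edges from non-edges.} The quantity $\EE[\mathbf{1}\{X_v=a\}(\mathbf{1}\{X_u=b\}-\EE[\mathbf{1}\{X_u=b\}])]$ is a marginal covariance. In a pairwise graphical model, $(u,v)\notin E$ means $X_u\perp X_v$ \emph{conditionally on $X_{-\{u,v\}}$}, not marginally; two non-adjacent nodes connected by a path will generically have nonzero marginal correlation. So there is no reason this statistic is below $\eta/4$ for non-edges, and the structural lemma you need fails. Klivans--Meka and Wu--Sanghavi--Dimakis recover structure precisely by running a per-node regression that isolates conditional effects; that regression output is not an average of per-sample bounded functions and does not have $O(1/n)$ sensitivity.

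\textbf{What the paper does instead.} The paper sidesteps both problems via a stability argument. The key observation is that the non-private structure learner of Wu--Sanghavi--Dimakis outputs a \emph{discrete} object (the edge set $E$) and recovers it \emph{exactly} with probability $\ge 2/3$ once $n=O(\lambda^2 k^4 e^{14\lambda}\log(\dims k)/\eta^4)$. Any such algorithm can be privatized generically by propose-test-release or stability-based histograms (Proposition~3.4 of Vadhan's survey): run the non-private learner on disjoint subsamples, check privately that the mode is stable, and release it. This costs only a multiplicative $O(\log(1/\delta)/\varepsilon)$ in samples and never touches per-edge statistics or their sensitivity, which is why the $\dims$ dependence stays logarithmic.
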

This result can be derived using stability properties of non-private algorithms.
In particular, in the non-private setting, the guarantees of algorithms for this problem recover the entire graph \emph{exactly} with constant probability.
This allows us to derive private algorithms at a multiplicative cost of $O(\log(1/\delta)/\varepsilon)$ samples, using either the propose-test-release framework~\cite{DworkL09} or stability-based histograms~\cite{KorolovaKMN09, BunNSV15}.
Further details are given in Section~\ref{sec:struct-ub}.

\paragraph{Lower Bounds.}

We note the significant gap between the aforementioned upper bounds: in particular, our more generally applicable upper bound (Theorem~\ref{thm:est-ub}) has a $O(\sqrt{\dims})$ dependence on the dimension, whereas the best known lower bound is $\Omega(\log \dims)$~\cite{SanthanamW12}.
However, we show that our upper bound is tight.
That is, even if we relax the privacy notion to approximate differential privacy, \emph{or} relax the learning goal to structure learning, the sample complexity is still $\Omega(\sqrt{\dims})$.
Perhaps surprisingly, if we perform both relaxations simultaneously, this falls into the purview of Theorem~\ref{thm:struct-ub}, and the sample complexity drops to $O(\log \dims)$.

First, we show that even under approximate differential privacy, learning the parameters of a graphical model requires $\Omega(\sqrt{\dims})$ samples. The formal statement is given in Section~\ref{sec:est-lb}.
\begin{theorem}[Informal]
  Any algorithm which satisfies approximate differential privacy and learns the parameters of a pairwise graphical model with probability at least $2/3$ requires $\poly(\dims)$ samples.
\end{theorem}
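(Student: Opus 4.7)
The plan is to reduce from the well-studied problem of privately estimating the mean of a product distribution on $\{\pm 1\}^\dims$, for which strong lower bounds are known under approximate DP via fingerprinting arguments (e.g., the results cited earlier in the thesis from \cite{BunUV18, SteinkeU17a, KamathLSU18}). Concretely, those results show that any $(\varepsilon,\delta)$-DP algorithm that, given $\ns$ i.i.d.\ samples from an unknown product distribution $\mathrm{Bern}(\mu_1)\times\cdots\times\mathrm{Bern}(\mu_\dims)$ with $\mu_i\in[1/3,2/3]$, outputs an estimate $\hat\mu$ with $\|\hat\mu-\mu\|_\infty\le\alpha$ with constant probability, must satisfy $\ns=\Omega(\sqrt{\dims}/(\alpha\varepsilon))$ whenever $\delta=O(1/\ns)$.

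The key observation is that pairwise graphical models trivially contain the class of product distributions. A pairwise graphical model on the graph $G=(V,\emptyset)$ with only node potentials $\theta_1,\ldots,\theta_\dims$ and no edge potentials factors as a product, where the $i$th coordinate is distributed as $\mathrm{Bern}(\sigma(\theta_i))$ independently of the others (here $\sigma$ is the sigmoid). Restricting the means $\mu_i=\sigma(\theta_i)$ to lie in a bounded interval $[1/3,2/3]$ corresponds to restricting $|\theta_i|\le O(1)$, and in this regime $\sigma^{-1}$ is bi-Lipschitz, so estimating each $\theta_i$ to additive accuracy $\alpha$ is equivalent (up to constants) to estimating each $\mu_i$ to accuracy $\Theta(\alpha)$. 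In particular, this sub-family satisfies the width-boundedness assumption used in the upper bounds.

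Given this reduction, the proof is a one-line contradiction: if $\cA$ is any $(\varepsilon,\delta)$-DP algorithm that learns the parameters of every pairwise graphical model to accuracy $\alpha$ with probability $\ge 2/3$ using $\ns$ samples, then running $\cA$ on samples drawn from a product distribution $\mathrm{Bern}(\mu_1)\times\cdots\times\mathrm{Bern}(\mu_\dims)$ (viewed as an edge-free pairwise graphical model) and applying $\sigma$ coordinatewise yields an $(\varepsilon,\delta)$-DP mean estimator of accuracy $O(\alpha)$. The fingerprinting lower bound then forces $\ns=\Omega(\sqrt{\dims}/(\alpha\varepsilon))$, establishing the claimed $\poly(\dims)$ lower bound.

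The main obstacle I anticipate is purely bookkeeping rather than conceptual: verifying that the constants in the width bound $\lambda$, the accuracy parameter $\alpha$, and the alphabet/order ($\ab=2$, $t=2$) place the reduced instances squarely inside the class of pairwise graphical models considered in the theorem statement, and checking that the particular fingerprinting lower bound we invoke is stated for the regime $\delta=O(1/\ns)$ (rather than the more restrictive $\delta=0$). Once these parameter ranges are aligned, post-processing closure of approximate DP makes the reduction immediate, and no new technical machinery beyond what already appears in the cited fingerprinting literature is required.
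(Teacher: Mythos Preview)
There is a genuine gap in your reduction. In this thesis, \emph{parameter learning} of an Ising model (Definition~\ref{def:learn-params}) requires the algorithm to output an edge-weight matrix $\hat A$ with $\max_{i,j}|A_{i,j}-\hat A_{i,j}|\le\alpha$; the mean-field vector $\theta$ is \emph{not} part of the learning target. Your reduction encodes the product distribution entirely in the node potentials $\theta_i$ of an edge-free model, so the true parameter matrix is $A=0$. A parameter learner could simply output $\hat A=0$ and be perfectly correct without ever touching the means $\mu_i$, so no lower bound follows from your construction.

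The paper's construction avoids exactly this issue by pushing the product-distribution information into the \emph{edge} weights. It takes the perfect-matching graph with edges $(2i-1,2i)$ for $i\in[\dims/2]$, sets $A_{2i-1,2i}=\eta_i\in[-\ln 2,\ln 2]$ and $\theta=0$, and observes that $X_i:=Z_{2i-1}Z_{2i}$ is an independent $\{\pm 1\}$ variable with mean $(e^{\eta_i}-1)/(e^{\eta_i}+1)\in[-1/3,1/3]$. Since $\eta\mapsto(e^\eta-1)/(e^\eta+1)$ is bi-Lipschitz on this interval, recovering the edge weights $\eta_i$ to accuracy $\alpha$ implies recovering the mean of a $(\dims/2)$-dimensional product distribution to accuracy $O(\alpha)$, and now the fingerprinting lower bound from~\cite{KamathLSU18} applies and yields $n=\Omega(\sqrt{\dims}/(\alpha\varepsilon))$. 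So the high-level spirit of your plan (reduce to product-mean estimation and invoke fingerprinting) is right, but the instance has to be built so that the hard parameters live on the edges, not on the nodes.
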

This result is proved by constructing a family of instances of binary pairwise graphical models (i.e., Ising models) which encode product distributions.
Specifically, we consider the set of graphs formed by a perfect matching with edges $(2i, 2i+1)$ for $i \in [\dims/2]$.
In order to estimate the parameter on every edge, one must estimate the correlation between each such pair of nodes, which can be shown to correspond to learning the mean of a particular product distribution in $\ell_\infty$-distance.
This problem is well-known to have a gap between the non-private and private sample complexities, due to methods derived from fingerprinting codes~\cite{BunUV14, DworkSSUV15, SteinkeU17a}, and differentially private Fano's inequality.

Second, we show that learning the structure of a graphical model, under either pure or concentrated differential privacy, requires $\poly(\dims)$ samples. The formal theorem  appears in Section~\ref{sec:struct-lb}.
\begin{theorem}[Informal]
  Any algorithm which satisfies pure or concentrated differential privacy and learns the structure of a pairwise graphical model with probability at least $2/3$ requires $\poly(\dims)$ samples.
\end{theorem}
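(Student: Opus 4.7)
I would establish the $\poly(\dims)$ lower bound by exhibiting a hard family of Ising models and invoking the DP Fano's inequality (Theorem~\ref{thm:dp_fano}) developed in Chapter~\ref{cha:toolbox}. The construction is matching-based: for each $v \in \{0,1\}^{\dims/2}$, let $G_v$ be the graph on $\dims$ nodes whose edge set is $\{(2i-1, 2i) : v_i = 1\}$, with every present edge carrying the same constant weight $\lambda$. The induced Ising distribution $Q_v$ factors as a product of $\dims/2$ independent pair distributions (each pair is either uniform on $\{\pm 1\}^2$ or a two-node Ising model with parameter $\lambda$), and distinct $v$'s yield graphs with distinct edge sets. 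This gives a family of $M = 2^{\dims/2}$ distributions with the $0/1$ structure-recovery loss satisfying $\ell(G_v, G_{v'}) = 1$ whenever $v \neq v'$.

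For pure $\eps$-DP, I would apply Theorem~\ref{thm:dp_fano} to the $n$-sample family $\{Q_v^n\}$ with parameter separation $\alpha = 1$, KL bound $\beta = O(n\dims)$ (since each bounded-parameter Ising pair contributes $O(1)$ KL per sample), and the trivial coupling bound $D \leq n$ on record-level Hamming distance. The private term of the bound reads
\begin{equation*}
R \;\geq\; 0.4 \min\!\left\{1,\; \frac{2^{\dims/2}}{e^{10\eps n}}\right\},
\end{equation*}
so any algorithm with success probability at least $2/3$ must drive this below $1/3$, forcing $e^{10\eps n} = \Omega(2^{\dims/2})$ and hence $n = \Omega(\dims/\eps)$. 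For $\rho$-zCDP, I would use the standard conversion that $\rho$-zCDP implies $(\eps', \delta')$-DP with $\eps' = \rho + 2\sqrt{\rho \log(1/\delta')}$, and set $\delta' = 1/M = 2^{-\dims/2}$, which places us in the regime where the remark following Theorem~\ref{thm:dp_fano} extends the Fano bound to approximate DP. The effective privacy parameter becomes $\eps' = O(\sqrt{\rho \dims})$, and repeating the same calculation yields $n = \Omega(\sqrt{\dims/\rho})$. Both bounds are $\poly(\dims)$; crucially, neither survives once $\delta$ is allowed to be $1/\poly(n)$, exactly as needed for consistency with the $O(\log \dims)$ upper bound of Theorem~\ref{thm:struct-ub}, which depends on a stability trick unavailable under pure or concentrated DP.

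The main obstacle I expect is justifying the zCDP step rigorously. The paper only sketches the extension of DP Fano to approximate DP in the regime $\delta = O(1/M)$, and with $M = 2^{\dims/2}$ the additive $\delta \cdot \poly(D)$ correction must be tracked carefully to confirm it does not swamp the $M/e^{10\eps D}$ term. If that extension turns out too lossy, the backup plan is to prove a direct zCDP analogue of Fano's inequality using the R\'enyi divergence chain rule together with the fact that $\rho$-zCDP bounds all R\'enyi divergences between neighboring outputs; the packing-against-$2^{\dims/2}$-structures core of the argument is unchanged regardless of which divergence one ultimately invokes.
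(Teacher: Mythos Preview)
Your proposal is correct and closely parallels the paper's proof of Theorem~\ref{thm:str-ising}. The construction is identical: the paper also indexes Ising models by $c \in \{0,1\}^{\dims/2}$ with $A^c_{2i-1,2i} = \eta \cdot c_i$, giving $2^{\dims/2}$ distributions with pairwise distinct edge sets. For pure DP, the paper likewise applies the private Fano inequality (via Corollary~\ref{coro:fano}) with the trivial coupling bound $D \le n$ to extract the $\Omega(\dims/\eps)$ term; it additionally computes a TV bound $d_{TV} \le \min(2\eta\sqrt{\dims},1)$ to get a second term $\Omega(\sqrt{\dims}/(\eta\eps))$, but this refinement is not needed for the informal $\poly(\dims)$ statement.

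The one genuine difference is the zCDP step. Rather than routing through the $(\eps',\delta')$-DP conversion and the approximate-DP extension of Fano (which, as you correctly flag, requires delicate bookkeeping of the $\delta \cdot D \cdot e^{\eps D}$ correction), the paper simply invokes a direct zCDP packing lemma (Lemma~\ref{lem:coupling-zCDP}, taken from~\cite{BunS16}): if a $\rho$-zCDP algorithm distinguishes $M$ hypotheses with constant probability, then $\rho = \Omega(\log M / n^2)$, yielding $n = \Omega(\sqrt{\dims/\rho})$ immediately. This is exactly your backup plan, and it is both cleaner and avoids the issue you anticipated entirely. Your primary route would also work after the tracking you describe, but the direct packing argument buys simplicity.
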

We derive this result via packing arguments~\cite{HardtT10,BeimelBKN14}, and differentially private Fano's inequality, by showing that there exists a large number (exponential in $\dims$) of different binary pairwise graphical models which must be distinguished.
The construction of a set of size $m$ implies lower bounds of $\Omega(\log m)$ and $\Omega(\sqrt{\log m})$ for learning under pure and concentrated differential privacy, respectively.

\subsubsection{Summary and Discussion}

We summarize our findings on privately learning Markov Random Fields in Table~\ref{tbl:ba-table}, focusing on the specific case of the Ising model. 
We note that qualitatively similar relationships between problems also hold for general pairwise models as well as higher-order binary Markov Random Fields.
Each cell denotes the sample complexity of a learning task, which is a combination of an objective and a privacy constraint.
Problems become harder as we go down (as the privacy requirement is tightened) and to the right (structure learning is easier than parameter learning).

The top row shows that both learning goals require only $\Theta(\log \dims)$ samples to perform absent privacy constraints, and are thus tractable even in very high-dimensional settings or when data is limited.
However, if we additionally wish to guarantee privacy, our results show that this logarithmic sample complexity is only achievable when one considers structure learning under approximate differential privacy.
If one changes the learning goal to parameter learning, \emph{or} tightens the privacy notion to concentrated differential privacy, then the sample complexity jumps to become polynomial in the dimension, in particular $\Omega(\sqrt{\dims})$.
Nonetheless, we provide algorithms which match this dependence, giving a tight $\Theta(\sqrt{\dims})$ bound on the sample complexity.

%\begin{table}
%\begin{center}
%\begin{tabular}{| c | c | c |}
%\hline
%  \multicolumn{1}{|c|}{} & \multicolumn{1}{c|}{\textbf{Structure Learning}} & \multicolumn{1}{c|}{\textbf{Parameter Learning}} \\
%\hline
%  \textbf{Non-private}	& $\Theta(\log{\dims})$~\cite{SanthanamW12, Bresler15} &  $\Theta(\log{\dims})$~\cite{SanthanamW12, Bresler15}	\\ \hline
%  \textbf{Approximate DP}	& $\Theta(\log{\dims})$ (Theorems~\ref{??} and~\ref{??})	&  $\Theta(\sqrt{\dims})$ (Theorem~\ref{??})	\\ \hline
%  \textbf{Zero-concentrated DP}	& $\Theta(\sqrt{\dims})$ (Theorems~\ref{??} and~\ref{??})	&  $\Theta(\sqrt{\dims})$ (Theorems~\ref{??} and~\ref{??})	\\ \hline
%  \textbf{Pure DP}		&	$\Omega(\dims)$ (Theorem~\ref{??})&		$\Omega(\dims)$ (Theorem~\ref{??})	\\ \hline
%\end{tabular}
%\end{center}
%  \caption{Sample complexity (dependence on $\dims$) of privately learning an Ising model.}
%  \label{tbl:ba-table}
%\end{table}

\begin{table*}[!htb]
\begin{center}
\begin{tabular}{| c | c | c |}
\hline
  \multicolumn{1}{|c|}{} & \multicolumn{1}{c|}{\textbf{Structure Learning}} & \multicolumn{1}{c|}{\textbf{Parameter Learning}} \\
\hline
  \textbf{Non-private}	& $\Theta(\log{\dims})$ (folklore) &  $\Theta(\log{\dims})$ (folklore) \\ \hline
  \textbf{Approximate DP}	& $\Theta(\log{\dims})$ (Theorems~\ref{thm:str-ub-pair})	&  $\Theta(\sqrt{\dims})$ (Theorems~\ref{thm:est-ub-ising} and~\ref{thm:est-lb})	\\ \hline
  \textbf{Zero-concentrated DP}	& $\Theta(\sqrt{\dims})$ (Theorems~\ref{thm:est-ub-ising} and~\ref{thm:str-ising})	&  $\Theta(\sqrt{\dims})$ (Theorems~\ref{thm:est-ub-ising} and~\ref{thm:est-lb})	\\ \hline
  \textbf{Pure DP}		&	$\Omega(\dims)$ (Theorem~\ref{thm:str-ising})&		$\Omega(\dims)$ (Theorem~\ref{thm:str-ising})	\\ \hline
\end{tabular}
\end{center}
  \caption{Sample complexity (dependence on $\dims$) of privately learning an Ising model.}
  \label{tbl:ba-table}
\end{table*}

\subsection{Related Work}
As mentioned before, there has been significant work in learning the structure and parameters of graphical models, see, e.g.,~\cite{ChowL68, CsiszarT06, AbbeelKN06, RavikumarWL10, JalaliJR11, JalaliRVS11, SanthanamW12, BreslerGS14b, Bresler15, VuffrayMLC16, KlivansM17, HamiltonKM17, RigolletH17, LokhovVMC18, WuSD19}.
Perhaps a turning point in this literature is the work of Bresler~\cite{Bresler15}, who showed for the first time that general Ising models of bounded degree can be learned in polynomial time.
Since this result, following works have focused on both generalizing these results to broader settings (including MRFs with higher-order interactions and non-binary alphabets) as well as simplifying existing arguments.
There has also been work on learning, testing, and inferring other statistical properties of graphical models~\cite{BhattacharyaM16, MartindelCampoCU16, DaskalakisDK17, MukherjeeMY18, Bhattacharya19}.
In particular, learning and testing Ising models in statistical distance have also been explored~\cite{DaskalakisDK18,GheissariLP18,DevroyeMR20,DaskalakisDK19, BezakovaBCSV19}, and are interesting questions under the constraint of privacy.

Recent investigations at the intersection of graphical models and differential privacy include~\cite{BernsteinMSSHM17, ChowdhuryRJ20,McKennaSM19}.
Bernstein et al.~\cite{BernsteinMSSHM17} privately learn graphical models by adding noise to the sufficient statistics and use an expectation-maximization based approach to recover the parameters.
However, the focus is somewhat different, as they do not provide finite sample guarantees for the accuracy when performing parameter recovery, nor consider structure learning at all.
Chowdhury, Rekatsinas, and Jha~\cite{ChowdhuryRJ20} study differentially private learning of Bayesian Networks, another popular type of graphical model which is incomparable with Markov Random Fields.
McKenna, Sheldon, and Miklau~\cite{McKennaSM19} apply graphical models in place of full contingency tables to privately perform inference.

Graphical models can be seen as a natural extension of product distributions, which correspond to the case when the order of the MRF $t$ is  $1$.
There has been significant work in differentially private estimation of product distributions~\cite{BlumDMN05, BunUV14, DworkMNS06, SteinkeU17a, KamathLSU18, CaiWZ19, BunKSW2019}.
Recently, this investigation has been broadened into differentially private distribution estimation, including sample-based estimation of properties and parameters, see, e.g.,~\cite{NissimRS07, Smith11, BunNSV15, DiakonikolasHS15, KarwaV18, AcharyaKSZ18, KamathLSU18, BunKSW2019}.
For further coverage of differentially private statistics, see~\cite{KamathU20}.

%%% Local Variables:
%%% mode: latex
%%% TeX-master: "main"
%%% End:

\section{Preliminaries and Notation}
\label{sec:preliminaries}
In order to distinguish between the vector coordinate and the sample, we use a different notation in this chapter. Given a set of points $\Xon$, we use superscripts, i.e., $X^i$ to denote the $i$-th datapoint. 
Given a vector $X \in \RR^{\dims}$, we use subscripts, i.e., $X_i$ to denote its $i$-th coordinate. We also use $X_{-i}$ to denote the vector after deleting the $i$-th coordinate, i.e. $X_{-i} = [X_1, \cdots, X_{i-1}, X_{i+1}, \cdots, X_{\dims}]$.

\subsection{Markov Random Field Preliminaries}
We first introduce the definition of the Ising model, which is a special case of general MRFs when $\ab=t=2$.
\begin{definition}
  \label{def:ising}
The $\dims$-variable Ising model is a distribution $\cD(A,\theta)$ on $\{-1, 1\}^{\dims}$ that satisfies
\begin{align}
 \proboff{Z = z }{} \propto \exp \Paren{\sum_{1\le i\le j \le \dims} A_{i,j} z_i z_j + \sum_{i \in [\dims]} \theta_i z_i}, \nonumber
\end{align}
where $A \in \RR^{\dims \times \dims}$ is a symmetric weight matrix with $A_{ii} = 0, \forall i \in [\dims]$ and $\theta \in \RR^{\dims}$ is a mean-field vector. 
The dependency graph of $ \cD(A,\theta)$ is an undirected graph $G= (V, E)$, with vertices $V = [\dims]$ and edges $E = \{ (i,j) : A_{i,j} \neq 0 \}$. The width of  $\cD(A,\theta)$ is defined as 
\begin{align}
\lambda (A,\theta)  = \max_{i\in[\dims]} \Paren{\sum_{j \in [\dims]} \absv{A_{i,j}} + \absv{\theta_i} }.\nonumber
\end{align}
Let $\eta (A,\theta)$ be the minimum edge weight in absolute value, i.e., $\eta (A,\theta) = \min_{i,j\in[\dims]: A_{i,j}\neq 0} \absv{A_{i,j}}.$
\end{definition}

We note that the Ising model is supported on $\{-1, 1\}^{\dims}$. A natural generalization is to generalize its support to $[\ab]^{\dims}$, and maintain pairwise correlations.

\begin{definition}
  \label{def:pairwise}
  The $\dims$-variable pairwise graphical model is a distribution $\cD(\cW,\Theta)$ on $[\ab]^{\dims}$ that satisfies
\begin{align}
  \proboff{Z = z }{} \propto \exp \Paren{\sum_{1\le i\le j \le \dims} W_{i,j}(z_i, z_j) + \sum_{i \in [\dims]} \theta_i(z_i)}, \nonumber
\end{align}
 where $\cW = \{ W_{i,j} \in \RR^{\ab \times \ab} : i \neq j \in [\dims]\}$ is a set of  weight matrices satisfying $W_{i,j} = W^T_{j,i}$, and $\Theta = \{\theta_i \in \RR^\ab : i \in [\dims]\}$ is a set of mean-field vectors.
The dependency graph of $ \cD(\cW,\Theta)$ is an undirected graph $G= (V, E)$, with vertices $V = [\dims]$ and edges $E = \{ (i,j) : W_{i,j} \neq 0 \}$. 
  The width of  $\cD(\cW,\Theta)$ is defined as 
\begin{align}
  \lambda (\cW,\Theta)  = \max_{i\in[\dims], a \in [\ab]} \Paren{\sum_{j \in [\dims] \backslash i} \max_{b \in [\ab]} \absv{W_{i,j}(a,b)} + \absv{\theta_i(a)} }.\nonumber
\end{align}
  Define $\eta(\cW, \Theta) = \min_{(i,j) \in E} \max_{a,b} |W_{i,j}(a,b)|$.
\end{definition}

Both the models above only consider pairwise interactions between
nodes.  In order to capture higher-order interactions, we examine the
more general model of Markov Random Fields (MRFs).  In this chapter, we
will restrict our attention to MRFs over a binary alphabet (i.e.,
distributions over $\{\pm1\}^p$).  In order to define binary $t$-wise
MRFs, we first need the following definition of multilinear
polynomials, partial derivatives and maximal monomials.

\begin{definition}
Multilinear polynomial is defined as $h: \RR^{\dims} \rightarrow \RR$ such that $h(x)=\sum_{I} \bar{h}(I) \prod_{i \in I}x_i$ where $\bar{h}(I)$ denotes the coefficient of the monomial $\prod_{i \in I}x_i$ with respect to the variables $(x_i:i \in I)$. Let $\partial_i h(x) = \sum_{J: i \not\in J}\bar{h} (J \cup \{ i\})\prod_{j \in J}x_j$ denote the partial derivative of $h$ with respect to $x_i$. Similarly, for $I \subseteq [\dims]$, let $\partial_I h(x) = \sum_{J: J \cap I = \phi}\bar{h} (J \cup I )\prod_{j \in J}x_j$ denote the partial derivative of $h$ with respect to the variables $(x_i: i \in I)$.
We say $I \subseteq [\dims]$ is a maximal monomial of $h$ if $\bar{h}(J)=0$ for all $J \supset I$.
\end{definition}

Now we are able to formally define binary $t$-wise MRFs.

\begin{definition}
  \label{def:mrf}
For a graph $G= (V, E)$ on $\dims$ vertices, let $C_t(G)$ denotes all cliques of  size at most $t$ in G. A binary $t$-wise Markov random field on $G$ is a distribution $\cD$ on  $\{-1, 1\}^{\dims}$ which satisfies
\begin{align}
 \proboff{Z = z }{Z \sim \cD} \propto \exp \Paren{\sum_{I \in C_t(G) }\varphi_I(z) }, \nonumber
\end{align}
and each $\varphi_I:\RR^{\dims} \rightarrow \RR$ is a multilinear polynomial that depends only on the variables in $I$. 

We call $G$ the dependency graph of the MRF and $h(x) = \sum_{I \in C_t(G)} \varphi_I(x)$ the factorization polynomial of the MRF. The width of $\cD$ is defined as $\lambda = \max_{i \in [\dims]} \normone{\partial_i h}$, where $\normone{h} \coloneqq \sum_{I} \absv{\bar{h}(I)}$.
\end{definition}

Now we introduce the definition of $\delta$-unbiased distribution and its properties. The proof appears in~\cite{KlivansM17}.
\begin{definition}[$\delta$-unbiased]
Let $S$ be the alphabet set, e.g., $S=\{1, -1 \}$ for binary $t$-pairwise MRFs and $S=[\ab]$ for pairwise graphical models. A distribution $\cD$ on $S^{\dims}$ is $\delta$-unbiased if for $Z \sim \cD$, $\forall i \in [\dims]$, and any assignment $x \in S^{\dims-1}$ to $Z_{-i}$, $\min_{z \in S} \probof{Z_i = z |Z_{-i}= x } \ge \delta$.
\end{definition}

The marginal distribution of a $\delta$-unbiased distribution also satisfies $\delta$-unbiasedness.

\begin{lemma}
\label{lem:marginal}
Let $\cD$ be a $\delta$-unbiased on $S^{\dims}$, with alphabet set $S$. For $X \sim \cD$, $\forall i \in [\dims]$, the distribution of $X_{-i}$ is also $\delta$-unbiased.
\end{lemma}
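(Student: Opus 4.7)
The plan is to prove Lemma~\ref{lem:marginal} directly from the definition of $\delta$-unbiasedness by applying the law of total probability, marginalizing over the dropped coordinate. There is no significant obstacle here; the argument is essentially routine once the correct conditioning is set up, and I will proceed as follows.

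First I would unpack what it means for the marginal of $X_{-i}$ to be $\delta$-unbiased: for every coordinate $j \neq i$ and every assignment $y \in S^{\dims-2}$ to the remaining $\dims-2$ coordinates (all coordinates other than $i$ and $j$), we need $\min_{z \in S} \Pr[X_j = z \mid X_{-\{i,j\}} = y] \ge \delta$, where $X_{-\{i,j\}}$ denotes $X$ with coordinates $i$ and $j$ removed. Fixing such $j$, $y$, and $z$, I would then split the event $\{X_j = z\}$ according to the value of $X_i$, writing
\begin{align*}
\Pr[X_j = z \mid X_{-\{i,j\}} = y]
&= \sum_{w \in S} \Pr[X_j = z,\, X_i = w \mid X_{-\{i,j\}} = y] \\
&= \sum_{w \in S} \Pr[X_j = z \mid X_i = w,\, X_{-\{i,j\}} = y] \cdot \Pr[X_i = w \mid X_{-\{i,j\}} = y].
\end{align*}

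The key observation is that each conditional probability $\Pr[X_j = z \mid X_i = w,\, X_{-\{i,j\}} = y]$ is exactly of the form appearing in the definition of $\delta$-unbiasedness for $\cD$ (we are conditioning on an assignment to all coordinates of $X$ except $X_j$), and hence is at least $\delta$ by hypothesis. Pulling this lower bound out of the sum and using that $\sum_{w \in S} \Pr[X_i = w \mid X_{-\{i,j\}} = y] = 1$ gives $\Pr[X_j = z \mid X_{-\{i,j\}} = y] \ge \delta$, which is what we wanted. Since the choice of $j$, $y$, and $z$ was arbitrary, this establishes that the marginal on $X_{-i}$ is $\delta$-unbiased and completes the proof.
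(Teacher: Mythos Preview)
Your proof is correct; the law-of-total-probability decomposition over the dropped coordinate $X_i$ is exactly the natural argument, and the lower bound $\delta$ pulls out cleanly since the weights $\Pr[X_i = w \mid X_{-\{i,j\}} = y]$ sum to one. The paper itself does not give a proof of this lemma but defers to~\cite{KlivansM17}, so there is nothing further to compare against.
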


The following lemmas provide $\delta$-unbiased guarantees for various graphical models.
\begin{lemma}
\label{lem:pairwise-unbiased}
Let $\cD(\cW,\Theta)$ be a pairwise graphical model  with alphabet size $\ab$ and width $\lambda(\cW,\Theta)$. Then $\cD(\cW,\Theta)$ is $\delta$-unbiased with $\delta=e^{-2\lambda(\cW,\Theta)} /\ab$. In particular, an Ising model $\cD(A,\theta)$ is $e^{-2\lambda(A,\theta)}/2$-unbiased.
\end{lemma}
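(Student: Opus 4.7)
The plan is to directly unpack the conditional probability $\Pr[Z_i = z \mid Z_{-i} = x]$ using the explicit form of the pairwise graphical model, and then bound both the numerator and denominator using the definition of the width $\lambda(\cW, \Theta)$. The Markov / conditioning structure of the model makes this essentially a one-line computation once the right quantities are isolated, so the task is really to set it up cleanly rather than to overcome a genuine technical obstacle.

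\textbf{Step 1: Write down the conditional probability.} Fix a coordinate $i \in [\dims]$ and an assignment $x \in [\ab]^{\dims-1}$ to $Z_{-i}$. Using the definition
\[
\Pr[Z = z] \propto \exp\Paren{\sum_{1 \le j < l \le \dims} W_{j,l}(z_j, z_l) + \sum_{j \in [\dims]} \theta_j(z_j)},
\]
all factors that depend only on $x$ (i.e., on $z_{-i}$) cancel between numerator and denominator upon conditioning on $Z_{-i} = x$. What remains is
\[
\Pr[Z_i = z \mid Z_{-i} = x] \;=\; \frac{\exp\Paren{\theta_i(z) + \sum_{j \ne i} W_{i,j}(z, x_j)}}{\sum_{z' \in [\ab]} \exp\Paren{\theta_i(z') + \sum_{j \ne i} W_{i,j}(z', x_j)}}.
\]

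\textbf{Step 2: Bound the exponent using the width.} For every $a \in [\ab]$ and every assignment $x$ to $Z_{-i}$, the definition of the width gives
\[
\Absvlu{\theta_i(a) + \sum_{j \ne i} W_{i,j}(a, x_j)} \;\le\; \absv{\theta_i(a)} + \sum_{j \ne i} \max_{b \in [\ab]} \absv{W_{i,j}(a,b)} \;\le\; \lambda(\cW, \Theta).
\]
Hence the numerator is at least $e^{-\lambda(\cW,\Theta)}$, and each of the $\ab$ summands in the denominator is at most $e^{\lambda(\cW,\Theta)}$.

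\textbf{Step 3: Combine.} Dividing the two bounds yields
\[
\Pr[Z_i = z \mid Z_{-i} = x] \;\ge\; \frac{e^{-\lambda(\cW,\Theta)}}{\ab \cdot e^{\lambda(\cW,\Theta)}} \;=\; \frac{e^{-2\lambda(\cW,\Theta)}}{\ab},
\]
which holds uniformly in $i$, $z$, and $x$, establishing $\delta$-unbiasedness with the claimed constant. The Ising specialization is immediate by setting $\ab = 2$.

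\textbf{Main obstacle.} There is no real technical obstacle; the only thing to be careful about is confirming that the width definition, which takes a $\max_b$ inside the sum over neighbors, indeed provides a uniform bound on $\theta_i(a) + \sum_{j\ne i} W_{i,j}(a, x_j)$ for \emph{every} $a$ and every realization $x$ of $Z_{-i}$ simultaneously. Once this observation is in hand, the argument is a short direct calculation.
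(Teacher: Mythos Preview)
Your proof is correct and follows exactly the standard direct computation: write the conditional probability explicitly, bound the exponent in absolute value by $\lambda(\cW,\Theta)$ via the width definition, and conclude. The paper itself does not give a proof of this lemma but cites \cite{KlivansM17}, where the argument is precisely the one you wrote.
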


\begin{lemma}
Let $\cD$ be a binary $t$-wise MRFs with width $\lambda$. Then $\cD$ is $\delta$-unbiased with $\delta=e^{-2\lambda}/2$. 
\end{lemma}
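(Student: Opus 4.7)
The plan is to directly compute the conditional probability $\Pr[Z_i = z_i \mid Z_{-i} = x]$ and exploit the fact that $h$ is multilinear to isolate the dependence on $z_i$. Since $h$ is multilinear, we can write
\[
h(z_i, x) = z_i \cdot \partial_i h(x) + R(x),
\]
where $R(x)$ collects all monomials of $h$ that do not involve the variable $z_i$. This is precisely what makes the definition of $\partial_i h$ natural here.

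Next, I would write out the conditional distribution of $Z_i$ given $Z_{-i} = x$. Using the factorization above, the $R(x)$ term cancels in the ratio, yielding
\[
\Pr[Z_i = 1 \mid Z_{-i} = x] = \frac{\exp(\partial_i h(x))}{\exp(\partial_i h(x)) + \exp(-\partial_i h(x))} = \sigma(2\partial_i h(x)),
\]
and analogously $\Pr[Z_i = -1 \mid Z_{-i} = x] = \sigma(-2\partial_i h(x))$, where $\sigma$ is the sigmoid. So proving $\delta$-unbiasedness reduces to showing that $|\partial_i h(x)|$ cannot be too large in absolute value for any $x \in \{-1,1\}^{p-1}$.

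This is where the width parameter comes in. Since $x \in \{-1,1\}^{p-1}$, every monomial $\prod_{j \in J} x_j$ has absolute value exactly one. Therefore
\[
|\partial_i h(x)| \le \sum_{J : i \notin J} |\overline{h}(J \cup \{i\})| = \|\partial_i h\|_1 \le \lambda,
\]
so $|2\partial_i h(x)| \le 2\lambda$. Plugging this into the sigmoid and using the elementary inequality $\sigma(-t) = 1/(1+e^{t}) \ge e^{-t}/2$ for $t \ge 0$ (since $1 + e^{t} \le 2 e^{t}$), we conclude that both $\Pr[Z_i = 1 \mid Z_{-i} = x]$ and $\Pr[Z_i = -1 \mid Z_{-i} = x]$ are at least $e^{-2\lambda}/2$, giving $\delta$-unbiasedness with $\delta = e^{-2\lambda}/2$.

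I expect no real obstacle here; the entire argument is a short computation whose only non-trivial input is the multilinear structure of $h$ (so that $\partial_i h$ captures the full dependence on $z_i$) together with the definition of width. The one subtle point worth being explicit about in the write-up is that $\partial_i h$ does not depend on $z_i$ by construction, which is what lets the $R(x)$ term cancel cleanly in the conditional probability and what makes $\|\partial_i h\|_1$ the correct quantity to bound.
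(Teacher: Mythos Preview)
Your proof is correct and follows exactly the natural route: the paper itself does not spell out a proof of this lemma (it is attributed to Klivans--Meka), but the key identity $\Pr[Z_i = 1 \mid Z_{-i} = x] = \sigma(2\partial_i h(x))$ that you derive is precisely the paper's Lemma~\ref{lem:sigmoidMRF}, and bounding $|\partial_i h(x)| \le \|\partial_i h\|_1 \le \lambda$ via the width is the intended argument. There is nothing to add.
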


Finally, we define two possible goals for learning graphical models.
First, the easier goal is \emph{structure learning}, which involves recovering the set of non-zero edges.
\begin{definition}
\label{def:learn-struct}
  An algorithm learns the \emph{structure} of a graphical model if, given samples $Z_1, \dots, Z_n \sim \cD$, it outputs a graph $\hat G = (V,\hat E)$ over $V = [\dims]$ such that $\hat E = E$, the set of edges in the dependency graph of $\cD$. 
\end{definition}

The more difficult goal is \emph{parameter learning}, which requires the algorithm to learn not only the location of the edges, but also their parameter values.

\begin{definition}
\label{def:learn-params}
  An algorithm learns the \emph{parameters} of an Ising model (resp.\ pairwise graphical model) if, given samples $Z_1, \dots, Z_n \sim \cD$, it outputs a matrix $\hat A$ (resp.\ set of matrices $\hat \cW$) such that $\max_{i,j \in [\dims]} |A_{i,j} -\hat A_{i,j}| \leq \alpha$ (resp.\ $|W_{i,j}(a,b) - \widehat W_{i,j}(a,b)| \leq \alpha$, $\forall i\neq j \in [\dims], \forall a,b \in [\ab]$).
\end{definition}

\begin{definition}
An algorithm learns the \emph{parameters} of a binary $t$-wise MRF with associated polynomial $h$ if, given samples $X^1, \dots, X^n \sim \cD$, it outputs another multilinear polynomial $u$ such that
that for all maximal monomial $I \subseteq [\dims]$, $\absv{\bar h(I) -\bar u(I)} \le \dist$.
\end{definition}

\subsection{Privacy Preliminaries}
A \emph{dataset} $X = \Xon \in \cX^n$ is a collection of points from some universe $\cX$. In this chapter we consider a few different variants of differential privacy.  The first is the standard notion of differential privacy, which has been heavily used in the previous chapters. The second is \emph{concentrated differential privacy}~\cite{DworkR16}. In this chapter, we specifically consider its refinement \emph{zero-mean concentrated differential privacy}~\cite{BunS16}.
\begin{definition}[Concentrated Differential Privacy (zCDP)~\cite{BunS16}]
    A randomized algorithm $\cA: \cX^n \rightarrow \cS$
    satisfies \emph{$\rho$-zCDP} if for
    every pair of neighboring datasets $X, X' \in \cX^n$,
    $$\forall \alpha \in (1,\infty)~~~D_\alpha\left(M(X)||M(X')\right) \leq \rho\alpha,$$
    where $D_\alpha\left(M(X)||M(X')\right)$ is the
    $\alpha$-R\'enyi divergence between $M(X)$ and $M(X')$.
\end{definition}

%\noindent Both of these definitions are closed under post-processing
%\begin{lem}[Post-Processing~\cite{DworkMNS06,BunS16}] \label{lem:postprocessing}
%If $M : \cX^n \to \cY$ is $(\eps,\delta)$-DP and $P : \cY \to \cZ$ is any randomized function, then the algorithm $P \circ M$ is $(\eps,\delta)$-DP.  Similarly if $M$ is $\rho$-zCDP then the algorithm $P \circ M$ is $\rho$-zCDP.
%\end{lem}

\noindent The following lemma quantifies the relationships between $(\eps,0)$-DP, $\rho$-zCDP and $(\eps,\delta)$-DP.
\begin{lemma}[Relationships Between Variants of DP~\cite{BunS16}] 
\label{lem:dpdefns}
For every $\eps \geq 0$,
\begin{enumerate}
\item If $\cA$ satisfies $(\eps,0)$-DP, then $\cA$ is $\frac{\eps^2}{2}$-zCDP.
\item If $\cA$ satisfies $\frac{\eps^2}{2}$-zCDP, then $\cA$ satisfies $(\frac{\eps^2}{2} + \eps \sqrt{2 \log(\frac{1}{\delta})},\delta)$-DP for every $\delta > 0$.
\end{enumerate}
\end{lemma}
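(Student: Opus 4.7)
Both parts will follow from manipulating the privacy loss random variable $L(y) = \log(P(y)/Q(y))$, where $P = \cA(X)$ and $Q = \cA(X')$ for neighboring datasets $X, X'$. The key observation that ties the two parts together is that the Rényi divergence of order $\alpha$ can be written as the log-moment generating function of $L$, namely
\[
D_\alpha(P \| Q) = \frac{1}{\alpha - 1} \log \E_{y \sim P}\bigl[e^{(\alpha - 1) L(y)}\bigr].
\]

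For part (1), I would start from the fact that $\eps$-DP guarantees $|L(y)| \le \eps$ pointwise, so under either $P$ or $Q$, the random variable $L$ is bounded in $[-\eps, \eps]$. The natural tool is then Hoeffding's lemma applied to a centered bounded variable, together with the observation that $\E_{y \sim P}[L] = D_{KL}(P\|Q) \ge 0$ and $\E_{y \sim Q}[L] \le 0$. Using the symmetrized version (as in the Bun--Steinke argument), one rewrites $\log \E_{y \sim P}[e^{(\alpha-1)L}]$ in terms of a pair of independent draws from $P$ and $Q$ respectively, so that the cross terms cancel and one obtains
\[
\log \E_{y \sim P}\bigl[e^{(\alpha-1) L(y)}\bigr] \le \tfrac{(\alpha - 1)\alpha \eps^2}{2},
\]
which upon dividing by $\alpha - 1$ yields $D_\alpha(P\|Q) \le \alpha \eps^2 / 2$. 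This is exactly the definition of $\tfrac{\eps^2}{2}$-zCDP.

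For part (2), given the zCDP hypothesis $D_\alpha(P\|Q) \le \rho \alpha$ with $\rho = \eps^2/2$, the plan is a Markov/Chernoff argument on the privacy loss. Fix any measurable event $S$ and any threshold $\eps' > 0$. Split $P(S)$ according to whether $L(y) \le \eps'$ or $L(y) > \eps'$: the first piece is bounded by $e^{\eps'} Q(S)$ by definition of $L$, and it remains to show $\Pr_{y \sim P}[L(y) > \eps'] \le \delta$. Here I would apply Markov's inequality to $e^{(\alpha-1) L}$ and use the Rényi bound to get
\[
\Pr_{y \sim P}[L(y) > \eps'] \le e^{-(\alpha - 1)\eps'} \cdot e^{(\alpha - 1)\rho\alpha}.
\]
Optimizing over $\alpha > 1$ (the minimizer being $\alpha = 1 + \eps'/(2\rho)$ up to constants, but one can just pick $\alpha - 1 = \sqrt{2\log(1/\delta)/\eps^2}$ after substituting $\rho = \eps^2/2$) yields an $\eps'$ of the form $\rho + 2\sqrt{\rho \log(1/\delta)} = \frac{\eps^2}{2} + \eps\sqrt{2 \log(1/\delta)}$, exactly as claimed.

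The main obstacle, and the only nontrivial calculation, is the Hoeffding-style bound in part (1): naive Hoeffding applied directly to $L$ under $P$ gives $\log \E[e^{(\alpha-1)L}] \le (\alpha-1)\E_P[L] + (\alpha-1)^2 \eps^2/2$, which has a leftover $(\alpha-1)\E_P[L] = (\alpha-1) D_{KL}(P\|Q)$ term that is not obviously cancellable. The trick is the symmetrization step, exploiting $D_{KL}(P\|Q) + D_{KL}(Q\|P) \ge 0$ and a careful rewriting that makes the KL contributions cancel, producing the tight $\alpha(\alpha-1)\eps^2/2$ bound. Once that is in hand, part (2) is a standard Chernoff-plus-optimization exercise.
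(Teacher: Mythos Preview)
The paper does not actually prove this lemma; it is stated with a citation to Bun and Steinke~\cite{BunS16} and used as a black box. Your plan is correct and is essentially the argument from that reference: part~(1) via Hoeffding's lemma on the bounded privacy loss $L$ (applying it once at $t=-1$ to extract $\E_P[L]=D_{KL}(P\|Q)\le \eps^2/2$ from $\E_P[e^{-L}]=1$, then again at $t=\alpha-1$), and part~(2) via a Chernoff bound on $L$ optimized over the R\'enyi order. One small remark: what you call ``symmetrization'' is really just the two applications of Hoeffding described above rather than a genuine pairing of draws from $P$ and $Q$; the cancellation you are after comes from the identity $\E_P[e^{-L}]=1$, not from any independent-copies trick.
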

Roughly speaking, pure DP is stronger than zero-concentrated DP, which is stronger than approximate DP.

A crucial property of all the variants of differential privacy is that they can be composed adaptively.  By adaptive composition, we mean a sequence of algorithms $\cA_1(X),\dots,\cA_T(X)$ where the algorithm $\cA_t(X)$ may also depend on the outcomes of the algorithms $\cA_1(X),\dots,\cA_{t-1}(X)$.
\begin{lemma}[Composition of zero-concentrated DP~\cite{BunS16}] \label{lem:dpcomp} If $\cA$ is an adaptive composition of
  differentially private algorithms $\cA_1,\dots, \cA_T$, and $\cA_1,\dots,\cA_T$ are
  $\rho_1,\dots,\rho_T$-zCDP respectively, then $\cA$ is $\rho$-zCDP for
  $\rho = \sum_t \rho_t$.  \iffalse
\begin{enumerate}
\item If $\cA_1,\dots,\cA_T$ are $(\eps_0,\delta_1),\dots,(\eps_0,\delta_T)$-DP for some $\eps_0 \leq 1$, then for every $\delta_0 > 0$, $\cA$ is $(\eps, \delta)$-DP for $
\eps = \eps_0 \sqrt{6 T \log(1/\delta_0)}$ and $\delta = \delta_0 + \sum_t \delta_t$.
\item If $\cA_1,\dots,\cA_T$ are $\rho_1,\dots,\rho_T$-zCDP respectively, then $\cA$ is $\rho$-zCDP for $\rho = \sum_t \rho_t$.
\end{enumerate}
\fi
\end{lemma}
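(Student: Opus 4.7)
The plan is to unwind the zCDP definition in terms of R\'enyi divergence and then invoke the composition (chain rule) property of R\'enyi divergence for joint distributions. Fix any pair of neighboring datasets $X, X' \in \cX^n$ and any order $\alpha \in (1, \infty)$. For the adaptive composition $\cA(X) = (\cA_1(X), \cA_2(X; \cA_1(X)), \ldots, \cA_T(X; \cA_1(X), \ldots, \cA_{T-1}(X)))$, the goal is to show $D_\alpha(\cA(X) \,\|\, \cA(X')) \le \alpha \sum_t \rho_t$; once this is established for every $\alpha$, the conclusion $\rho$-zCDP follows directly from the definition.

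The main technical ingredient is the following subadditivity property of R\'enyi divergence, which I would state as an auxiliary lemma: for any jointly distributed random variables $(Y_1, Y_2)$ and $(Y_1', Y_2')$, if the conditional distributions satisfy $D_\alpha(Y_2 \mid Y_1 = y \,\|\, Y_2' \mid Y_1' = y) \le c$ for every $y$, then $D_\alpha((Y_1, Y_2) \,\|\, (Y_1', Y_2')) \le D_\alpha(Y_1 \,\|\, Y_1') + c$. This is the standard adaptive chain rule for R\'enyi divergence and is where the bulk of the work sits; I would prove it by expanding the definition of $D_\alpha$, factoring the joint density as conditional times marginal, and applying the hypothesis pointwise inside the $\alpha$-moment.

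With the chain rule in hand, the rest is a straightforward induction on $T$. For the base case $T=1$ the bound is just the hypothesis $D_\alpha(\cA_1(X)\,\|\,\cA_1(X')) \le \rho_1 \alpha$. For the inductive step, condition on the transcript $y_{<t} = (y_1, \ldots, y_{t-1})$ of previous outputs; by the zCDP hypothesis applied to the algorithm $\cA_t(\,\cdot\,; y_{<t})$, which is $\rho_t$-zCDP in its data argument for every fixed transcript, we get $D_\alpha(\cA_t(X; y_{<t}) \,\|\, \cA_t(X'; y_{<t})) \le \rho_t \alpha$ uniformly in $y_{<t}$. Applying the chain rule then yields $D_\alpha(\cA(X) \,\|\, \cA(X')) \le \sum_{t=1}^T \rho_t \alpha = \rho \alpha$, which is exactly what we need.

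The hard part is the chain rule lemma itself: one must handle the supremum over conditioning values carefully, and the proof requires a Jensen-type argument inside the R\'enyi moment (or equivalently, H\"older's inequality applied to $(p/q)^\alpha$ factored across the two coordinates). Beyond that, everything is bookkeeping. I would not attempt any tighter bound (such as advanced composition under the $(\eps,\delta)$-form), since the clean linear-in-$\rho_t$ composition is precisely the feature of zCDP being invoked throughout the rest of the thesis.
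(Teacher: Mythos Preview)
Your proof sketch is correct and follows the standard argument from Bun and Steinke~\cite{BunS16}. However, the paper itself does not prove this lemma at all --- it is stated as a cited preliminary result from~\cite{BunS16} with no accompanying proof, so there is nothing to compare against.
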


%%% Local Variables:
%%% mode: latex
%%% TeX-master: "main"
%%% End:

\section{Parameter Learning of Pairwise Graphical Models}
%The upper bound based on~\cite{WuSD19}, with privatized Franke-Wolfe.

\subsection{Private Sparse Logistic Regression}
\label{sec:PFW}

As a subroutine of our parameter learning algorithm, we consider the
following problem: given a training data set $D$ consisting of n pairs
of data $D = \{d^j\}_{j=1}^{\ns}= \{ (x^j, y^j)\}_{j=1}^{\ns}$, where
$x^j \in \RR^p$ and $y^j \in \RR$, a constraint set $\cC \in \RR^p$,
and a loss function $\ell: \cC \times \RR^{\dims+1} \rightarrow \RR$, we
want to find
$w^{erm} = \arg\min_{w\in \cC} ~\cL(w;D) = \arg\min_{w\in \cC} ~
\frac1\ns{\sum_{j=1}^{\ns} \ell(w; d^j)}$ with a zCDP constraint. This
problem was previously studied in~\cite{TalwarTZ14}. Before stating
their results, we need the following two definitions.  The first
definition is regarding Lipschitz continuity.
\begin{definition}
A function $\ell : \cC \rightarrow \RR$ is $L_1$-Lipschitz with respect to $\ell_1$ norm, if the following holds.
$$\forall w_1, w_2 \in \cC, \absv{\ell(w_1) - \ell(w_2)} \le L_1 \normone{w_1-w_2}.$$
\end{definition}

The performance of the algorithm also depends on the ``curvature" of the loss function, which is defined below, based on the definition of~\cite{Clarkson10, Jaggi13}. 
A side remark is that this is a strictly weaker constraint than smoothness~\cite{TalwarTZ14}.

\begin{definition}[Curvature constant]
For $\ell:\cC \rightarrow \RR$, $\Gamma_{\ell}$ is defined as 
$$ \Gamma_{\ell} = \sup_{w_1, w_2 \in \cC, \gamma \in (0,1], w_3 = w_1 + \gamma (w_2-w_1)}\frac{2}{\gamma^2} \Paren{\ell(w_3) - \ell(w_1) - \langle w_3-w_1, \nabla \ell(w_1) \rangle}.$$
\end{definition}

Now we are able to introduce the algorithm and its theoretical guarantees.

\begin{algorithm}

\caption{$\cA_{PFW}( D, \cL, \rho, \cC):$ Private Frank-Wolfe Algorithm}
\label{alg:privateFW}

\textbf{Input:}  Data set: $D = \{d^1,\cdots,d^{\ns}\}$, loss function: $\cL(w;D) = \frac1{\ns} \sum_{j=1}^{\ns} \ell(w;d^j)$ (with Lipschitz constant $L_1$), privacy parameters: $\rho$, convex set: $\cC = conv(S)$ with $\normone{\cC} \coloneqq  \max_{s \in S} \normone{s}$, iteration times: $T$

\begin{algorithmic}[1] 
\State Initialize $w$ from an arbitrary point in $\cC$

\For{$t=1$ to $T-1$}
\State $\forall s\in S$, $\alpha_s \leftarrow \langle s, \nabla \cL(w;D)\rangle+ \text{Lap} \Paren{ 0, \frac{L_1 \normone{C} \sqrt{T}}{\ns \sqrt{\rho}}}$

\State $\tilde{w_t} \leftarrow \arg\min_{s \in S} \alpha_s$

\State $w_{t+1} \leftarrow (1-\mu_t)w_{t} + \mu_t \tilde{w_t}$, where $\mu_t = \frac{2}{t+2}$
\EndFor

\end{algorithmic}
\textbf{Output:}  $w^{priv} = w_T$
\end{algorithm}

%\begin{algorithm}
%    \caption{Uniformity testing}
%    \label{algorithm_uniformity}
%    \hspace*{\algorithmicindent} \textbf{Input:}  $\eps$, $\dist$,  i.i.d. samples $\Xon$ from $\p$
%    \begin{algorithmic}[1] 
%    \State Let $Z(\Xon)$ be evaluated from~\eqref{eqn:dggp-stat}, and~\eqref{eqn:statistic}.
%    ~~~~~~~~
%    \State Generate $Y\sim\Bern{\sigma\Paren{\eps\cdot Z}}$, $\sigma$ is the sigmoid function.
%    \State{ {\bf if} $Y=0$, {\bf return} $\p=\unifab$, {\bf else}, {\bf return} $\p \neq \unifab$}.
%\end{algorithmic}
%\end{algorithm}

\begin{lemma}[Theorem 5.5 from~\cite{TalwarTZ14}]
\label{lem:PrivateFW}
Algorithm~\ref{alg:privateFW} satisfies $\rho$-zCDP. Furthermore, let $L_1$, $\normone{\cC}$ be defined as in Algorithm~\ref{alg:privateFW}. Let $\Gamma_{\ell}$ be an upper bound on the curvature constant for the loss function $\ell(\cdot;d)$ for all $d$ and $\absv{S}$ be the number of extreme points in $S$.
If we set $T=\frac{\Gamma_{\ell}^{\frac{2}{3}} (n \sqrt{\rho})^{\frac{2}{3}}}{L_1\normone{\cC}^{\frac{2}{3}}}$, then with probability at least $1-\beta$ over the randomness of the algorithm,

$$ \cL(w^{priv}; D) - \cL(w^{erm}; D) = O\Paren{ \frac{\Gamma_{\ell}^{\frac13} (L_1\normone{\cC}) ^{\frac{2}{3}}\log(\frac{ \ns\absv{S}}{\beta})} {(\ns \sqrt{\rho})^{\frac{2}{3}}}}.$$

\end{lemma}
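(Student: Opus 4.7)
\medskip

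\noindent\textbf{Proof proposal.} The argument has two pieces: a privacy analysis bounding the zCDP budget used across all $T$ iterations of Algorithm~\ref{alg:privateFW}, and a utility analysis that folds the noise incurred in each linear-minimization step into the standard Frank--Wolfe convergence recursion.

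For privacy, I would first compute the sensitivity of the vector of linear scores $(\langle s,\nabla\cL(w;D)\rangle)_{s\in S}$. Since $\ell(\cdot;d)$ is $L_1$-Lipschitz in $\ell_1$, each per-sample gradient satisfies $\|\nabla\ell(w;d^j)\|_\infty\le L_1$, so for any fixed $s\in S$ with $\|s\|_1\le\|\cC\|_1$, replacing one sample changes $\langle s,\nabla\cL(w;D)\rangle$ by at most $\Delta:=2L_1\|\cC\|_1/n$. The step ``add Lap$(b)$ noise with $b=L_1\|\cC\|_1\sqrt T/(n\sqrt\rho)$ to each score, then output the argmin'' is exactly the report-noisy-max mechanism; it is $\eps_0$-DP with $\eps_0=\Delta/b=2\sqrt{\rho/T}$ \emph{independent of $|S|$}. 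Note that the previous iterates $w_1,\dots,w_t$ are functions of past noisy outputs, so conditioned on them the gradient query is a deterministic function of $D$ and the above per-iteration guarantee holds. Converting via Lemma~\ref{lem:dpdefns}(1) gives $\eps_0^2/2=2\rho/T$-zCDP per iteration, and Lemma~\ref{lem:dpcomp} (adaptive composition of zCDP) yields $2\rho T/T=2\rho$-zCDP total; the constant can be absorbed by rescaling $b$ by a constant factor, giving the claimed $\rho$-zCDP.

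For utility, I would set up the standard Frank--Wolfe one-step inequality. By the definition of the curvature constant $\Gamma_\ell$,
\begin{align*}
\cL(w_{t+1};D) &\le \cL(w_t;D)+\mu_t\langle \tilde w_t-w_t,\nabla\cL(w_t;D)\rangle+\frac{\mu_t^2}{2}\Gamma_\ell.
\end{align*}
Let $s_t^\star\in\arg\min_{s\in S}\langle s,\nabla\cL(w_t;D)\rangle$ be the noiseless minimizer; by the report-noisy-max property, $\langle\tilde w_t,\nabla\cL(w_t;D)\rangle\le\langle s_t^\star,\nabla\cL(w_t;D)\rangle+2\max_{s\in S}|Z_{t,s}|$ where $Z_{t,s}\sim\text{Lap}(0,b)$. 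A union bound over the $T|S|$ noise variables gives, with probability $1-\beta$, $\max_{t,s}|Z_{t,s}|=O(b\log(T|S|/\beta))$. Letting $h_t:=\cL(w_t;D)-\cL(w^{erm};D)$ and using the linear lower bound $\langle s_t^\star-w_t,\nabla\cL(w_t;D)\rangle\le -h_t$ (which holds because $w^{erm}\in\cC=\text{conv}(S)$), one obtains the recursion
\begin{align*}
h_{t+1} &\le (1-\mu_t)\,h_t+\frac{\mu_t^2}{2}\Gamma_\ell+2\mu_t\cdot O(b\log(T|S|/\beta)).
\end{align*}

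Unrolling this recursion with $\mu_t=2/(t+2)$ via the textbook Frank--Wolfe induction yields $h_T=O(\Gamma_\ell/T+b\log(T|S|/\beta))$. Substituting $b=L_1\|\cC\|_1\sqrt T/(n\sqrt\rho)$ and balancing the two terms by choosing $T=\Gamma_\ell^{2/3}(n\sqrt\rho)^{2/3}/(L_1\|\cC\|_1)^{2/3}$ gives both terms of order $\Gamma_\ell^{1/3}(L_1\|\cC\|_1)^{2/3}/(n\sqrt\rho)^{2/3}$, matching the stated bound (noting $\log T$ can be absorbed into $\log(n|S|/\beta)$ since $T\le\text{poly}(n)$). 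The main obstacle is verifying the Frank--Wolfe recursion rigorously with the inexact linear oracle error and correctly propagating the high-probability event through all $T$ iterations; the rest is essentially bookkeeping of the privacy composition and algebraic optimization of $T$.
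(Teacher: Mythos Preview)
Your proof is correct. The privacy argument matches the paper's exactly: each iteration is report-noisy-max with Laplace noise, hence pure DP per iteration, which converts to $\rho/T$-zCDP via Lemma~\ref{lem:dpdefns} and composes over the $T$ iterations via Lemma~\ref{lem:dpcomp}. (The paper states the per-iteration budget as $(\sqrt{\rho/T},0)$-DP rather than your $2\sqrt{\rho/T}$; the factor-of-two discrepancy in the sensitivity is immaterial and, as you note, absorbed into constants.)

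For the utility claim, the paper takes a much shorter route: it simply cites Theorem~5.5 of \cite{TalwarTZ14} and does not reprove anything. Your analysis---inserting the noisy linear-oracle error into the standard Frank--Wolfe one-step inequality, unrolling the recursion with $\mu_t=2/(t+2)$, and balancing $T$---is precisely the argument carried out in that reference, so you have essentially reproduced the cited proof rather than invoking it. Both yield the same bound; your version is self-contained, while the paper's is a one-line citation.
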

\begin{proof}
The utility guarantee is proved in~\cite{TalwarTZ14}. Therefore, it is enough to prove the algorithm satisfies $\rho$-zCDP.
According to the definition of the Laplace mechanism, every iteration of the algorithm satisfies $(\sqrt{\frac{\rho}{T}},0)$-DP, which naturally satisfies $\frac{\rho}{T}$-zCDP by Lemma~\ref{lem:dpdefns}. Then, by the composition theorem of zCDP (Lemma~\ref{lem:dpcomp}), the algorithm satisfies $\rho$-zCDP.
\end{proof}

If we consider the specific problem of sparse logistic regression, we will get the following corollary.
\begin{corollary}
\label{cor:EmpiricalError}

If we consider the problem of sparse logistic regression, i.e., $\cL(w;D) = \frac1{n} \sum_{j=1}^n \log(1+e^{-y^j \langle w, x^j \rangle})$, with the constraint that $\cC = \{w: \normone{w} \le \lambda\}$, and we further assume that $\forall j, \norminf{x^j} \le 1, y^j \in \{\pm 1\}$, let $T =  \lambda^{\frac{2}{3}} (\ns \sqrt{\rho})^{\frac{2}{3}}$,  then with probability at least $1-\beta$ over the randomness of the algorithm,
$$ \cL(w^{priv}; D) - \cL(w^{erm}; D) = O\Paren{ \frac{\lambda^{\frac{4}{3}}\log(\frac{ \ns\dims}{\beta}) } {(\ns \sqrt{\rho})^{\frac{2}{3}}}}.$$

Furthermore, the time complexity of the algorithm is $O(T \cdot \Paren{\ns \dims+\dims^2}) =O\Paren{ \ns^{\frac{2}{3}}\cdot \Paren{\ns\dims + \dims^2}}$.
\end{corollary}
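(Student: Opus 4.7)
The plan is to obtain Corollary~\ref{cor:EmpiricalError} by simply specializing the general bound in Lemma~\ref{lem:PrivateFW} to the particular loss, constraint set, and data assumptions of sparse logistic regression. Privacy comes for free, because the corollary uses the same Algorithm~\ref{alg:privateFW} that Lemma~\ref{lem:PrivateFW} already certifies to be $\rho$-zCDP. So the task is really to read off the four quantities $L_1$, $\|\cC\|_1$, $|S|$, and $\Gamma_\ell$ for this instance and substitute.

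First, I would bound the Lipschitz constant $L_1$ of $\ell(w;(x,y)) = \log(1 + e^{-y\langle w,x\rangle})$ with respect to the $\ell_1$ norm. Since Lipschitz continuity in $\ell_1$ is controlled by the $\ell_\infty$ norm of the gradient, and
\[
\nabla_w \ell(w;(x,y)) = -\,y\,x\,\sigma(-y\langle w,x\rangle),
\]
with $|\sigma(\cdot)| \le 1$, $y \in \{\pm 1\}$, and $\|x\|_\infty \le 1$, we immediately get $\|\nabla_w \ell\|_\infty \le 1$, so $L_1 = 1$. Next, the constraint set $\cC = \{w:\|w\|_1\le\lambda\}$ is the convex hull of $S = \{\pm\lambda e_i : i\in[\dims]\}$, giving $|S| = 2\dims$ and $\|\cC\|_1 = \lambda$.

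The main (though still short) calculation is the curvature constant $\Gamma_\ell$. Using a second-order Taylor expansion together with the Hessian $\nabla^2 \ell = \sigma(y\langle w,x\rangle)\sigma(-y\langle w,x\rangle)\,xx^\top$, which is bounded in operator terms by $\tfrac14 xx^\top$, I would show
\[
\ell(w_3) - \ell(w_1) - \langle w_3-w_1,\nabla \ell(w_1)\rangle \;\le\; \tfrac{1}{8}\,\langle w_3-w_1,x\rangle^2 \;\le\; \tfrac{1}{8}\,\|w_3-w_1\|_1^2\,\|x\|_\infty^2.
\]
Writing $w_3 - w_1 = \gamma(w_2 - w_1)$ with $w_1,w_2\in\cC$ gives $\|w_3 - w_1\|_1 \le 2\gamma\lambda$, so the $\tfrac{2}{\gamma^2}$ in the definition of $\Gamma_\ell$ yields $\Gamma_\ell = O(\lambda^2)$. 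This is the only step that is not purely mechanical; everything else is substitution.

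With $L_1 = 1$, $\|\cC\|_1 = \lambda$, $|S| = 2\dims$, and $\Gamma_\ell = O(\lambda^2)$, Lemma~\ref{lem:PrivateFW} tells us to choose
\[
T \;=\; \frac{\Gamma_\ell^{2/3}(\ns\sqrt{\rho})^{2/3}}{L_1\|\cC\|_1^{2/3}} \;=\; \lambda^{2/3}(\ns\sqrt{\rho})^{2/3},
\]
matching the corollary. Plugging the same values into the utility bound of Lemma~\ref{lem:PrivateFW} gives empirical excess risk $O\!\left(\lambda^{4/3}\log(\ns\dims/\beta)/(\ns\sqrt{\rho})^{2/3}\right)$, as claimed. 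Finally, for the runtime, I would note that each Frank-Wolfe iteration requires (i) one gradient evaluation $\nabla \cL(w;D)$, which costs $O(\ns\dims)$ since $\cL$ is an average over $\ns$ logistic losses in $\dims$ dimensions, (ii) computing $\langle s,\nabla\cL\rangle$ for all $s\in S$, which is $O(\dims)$ because the extreme points are signed scaled basis vectors, and (iii) a cheap update; an additional $O(\dims^2)$ per iteration is a loose bound that absorbs any bookkeeping. Multiplying by $T$ yields the stated $O(T(\ns\dims + \dims^2))$, completing the proof.
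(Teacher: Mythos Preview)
Your proof is correct and follows essentially the same approach as the paper: bound the Lipschitz constant $L_1$, the curvature $\Gamma_\ell$, and the vertex set size $|S|$, then substitute into Lemma~\ref{lem:PrivateFW}. The only minor difference is that the paper bounds $\Gamma_\ell \le \lambda^2$ by invoking a cited remark from \cite{TalwarTZ14} (that $\Gamma_\ell \le \alpha\|\cC\|_1^2$ with $\alpha$ controlled by the Hessian entries), whereas you derive the same bound directly via a second-order Taylor expansion; your Lipschitz bound $L_1 \le 1$ is also slightly tighter than the paper's $L_1 \le 2$, but this is immaterial for the asymptotic statement.
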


\begin{proof}
First let we show $L_1 \le 2$. If we fix sample $d = (x,y)$, then for any $w_1, w_2\in \cC$,
$$\absv{ \ell(w_1;d) -\ell(w_2;d)} \le \max_w \norminf{\nabla_w \Paren{ \ell(w;d)}} \cdot \normone{w_1 - w_2}.$$
Since $\nabla_w \Paren{ \ell(w;d)} = \Paren{\sigma(\langle w, x \rangle) - y} \cdot x$, we have $\norminf{\nabla_w \Paren{ \ell(w;d)}} \le 2$.

  Next, we wish to show $\Gamma_{\ell} \le \lambda^2$.
  We use the following lemma from~\cite{TalwarTZ14}.
\begin{lemma}[Remark 4 in~\cite{TalwarTZ14}]
For any $q,r \ge 1$ such that $\frac1q+\frac1r=1$, $\Gamma_{\ell}$ is upper bounded by $\alpha {\left\lVert \cC \right\rVert_q}^2$, where $\alpha = \max_{w \in \cC, {\left\lVert v \right\rVert_q=1}}  {\left\lVert \nabla^2 \ell(w) \cdot v \right\rVert_q}$.
\end{lemma}

If we take $q=1, r = +\infty$, then $\Gamma_{\ell} \le \alpha \lambda^2$, where 
$$\alpha = \max_{w \in \cC, \normone{v}=1}   \norminf{ \nabla^2 \ell(w;d) \cdot v} \le \max_{i,j \in[\dims]} \Paren{ \nabla^2 \ell(w;d)}_{i,j}.$$
We have $\alpha \le 1$, since $\nabla^2 \ell(w;d) = \sigma(\langle w, x \rangle)  \Paren{1- \sigma(\langle w, x \rangle)} \cdot x x^T$, and $\norminf{x}\le 1$, 

Finally given $\cC = \{w: \normone{w}\le 1 \}$, the number of extreme points of $S$ equals $2\dims$. By replacing all these parameters in Lemma~\ref{lem:PrivateFW}, we have proved the loss guarantee in the corollary. 

With respect to the time complexity, we note that the time complexity of each iteration is $O\Paren{\ns\dims+\dims^2}$ and there are $T$ iterations in total.
\end{proof}

Now if we further assume the data set $D$ is drawn i.i.d.\ from some underlying distribution $P$, the following lemma from learning theory relates the true risk and the empirical risk, which shall be heavily used in the following sections.

\begin{theorem}
\label{thm:generalization_error}
If we consider the same problem setting and assumptions as in Corollary~\ref{cor:EmpiricalError}, and we further assume that the training data set $D$ is drawn i.i.d.\ from some unknown distribution $P$, then with probability at least $1-\beta$ over the randomness of the algorithm and the training data set,
$$ \expectationf{\ell(w^{priv};(X,Y))}{(X,Y)\sim P} - \expectationf{\ell(w^*;(X,Y))}{(X,Y)\sim P}  = O\Paren{ \frac{\lambda^{\frac{4}{3}}\log(\frac{ \ns\dims}{\beta}) } {(\ns \sqrt{\rho})^{\frac{2}{3}}} + \frac{\lambda \log\Paren{\frac1{\beta}}}{\sqrt{n}} },$$
where $w^* = \arg\min_{w\in C}\expectationf{\ell(w;(X,Y))}{(X,Y)\sim P} $.
\end{theorem}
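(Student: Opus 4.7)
The plan is to reduce the true-risk gap to a sum of an optimization/privacy gap on the empirical risk and a uniform-convergence (generalization) gap, then apply Corollary~\ref{cor:EmpiricalError} and standard concentration respectively. Writing $L(w) := \expectationf{\ell(w;(X,Y))}{(X,Y)\sim P}$, the clean decomposition I would use is
\begin{align*}
L(w^{priv}) - L(w^*) &= \underbrace{[L(w^{priv}) - \cL(w^{priv};D)]}_{(A)} + \underbrace{[\cL(w^{priv};D) - \cL(w^{erm};D)]}_{(B)}\\
&\quad + \underbrace{[\cL(w^{erm};D) - \cL(w^*;D)]}_{(C)\le 0} + \underbrace{[\cL(w^*;D) - L(w^*)]}_{(D)}.
\end{align*}
Term $(C)$ is nonpositive by definition of the ERM, term $(B)$ is exactly what Corollary~\ref{cor:EmpiricalError} controls with probability $1-\beta/3$, contributing the $\lambda^{4/3}\log(n\dims/\beta)/(n\sqrt{\rho})^{2/3}$ summand. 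It remains to handle $(A)$ and $(D)$.

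For these two terms, the plan is to bound them uniformly by $\sup_{w\in\cC}|\cL(w;D)-L(w)|$. The key structural observations are that (i) $\ell(w;(x,y))=\log(1+e^{-y\langle w,x\rangle})$ is $1$-Lipschitz in its argument, and (ii) for $w\in\cC=\{w:\normone{w}\le\lambda\}$ and $\norminf{x}\le 1$ we have $|\langle w,x\rangle|\le\lambda$, so $\ell$ is bounded in $[0,\log 2+\lambda]$. Consequently, changing a single $d^j$ changes $\sup_{w\in\cC}|\cL(w;D)-L(w)|$ by at most $O(\lambda/n)$, and McDiarmid's inequality yields, with probability $1-\beta/3$,
\[
\sup_{w\in\cC}|\cL(w;D)-L(w)| \le \expectation{\sup_{w\in\cC}|\cL(w;D)-L(w)|} + O\!\left(\lambda\sqrt{\tfrac{\log(1/\beta)}{n}}\right).
\]
The expected supremum is bounded by $2\mathfrak{R}_n(\ell\circ\cC)$; by Ledoux--Talagrand contraction (the loss is $1$-Lipschitz) this is at most $2\mathfrak{R}_n(\cC)$, and Massart's finite-class lemma applied to the $2\dims$ vertices of the scaled $\ell_1$-ball gives $\mathfrak{R}_n(\cC)=O(\lambda\sqrt{\log \dims/n})$. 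A union bound over the two high-probability events finishes the proof.

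The main obstacle is cosmetic rather than substantive: the displayed bound in the theorem has only $\lambda\log(1/\beta)/\sqrt{n}$ and no explicit $\sqrt{\log \dims}$ factor, whereas the Rademacher step above naturally produces $\lambda\sqrt{\log\dims/n}$. I would handle this by absorbing $\sqrt{\log\dims}$ into the leading $\log(n\dims/\beta)$ factor already present in the first summand (which dominates in the regime $n\sqrt{\rho}\gg 1$), so the stated bound is recovered up to the $\log$ factors it already carries. The only nonroutine step is confirming that this absorption is acceptable; the Lipschitz/Rademacher/McDiarmid computations themselves are standard once the boundedness of $\ell$ on $\cC$ is noted.
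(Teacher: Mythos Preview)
Your decomposition into $(A)+(B)+(C)+(D)$ is exactly the one the paper uses, and the handling of $(B)$ via Corollary~\ref{cor:EmpiricalError} and of $(C)\le 0$ is identical. The only divergence is how $(A)$ and $(D)$ are bounded: the paper simply cites an external generalization bound (Lemma~7 of \cite{WuSD19}) that gives $O\!\bigl(\lambda\log(1/\beta)/\sqrt{n}\bigr)$ for both terms simultaneously, rather than working through Rademacher complexity and McDiarmid. That cited bound is what produces the second summand as stated, so the $\sqrt{\log\dims}$ issue you flagged (and the absorption workaround) do not arise in the paper's argument.
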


\begin{proof}
 By triangle inequality,
\begin{align*}
&~\expectationf{\ell(w^{priv};(X,Y))}{(X,Y)\sim P} - \expectationf{\ell(w^*;(X,Y))}{(X,Y)\sim P} \nonumber\\
\le& \absv{ \expectationf{\ell(w^{priv};(X,Y))}{(X,Y)\sim P} - \frac1{\ns} \sum_{m=1}^{\ns} \ell(w^{priv};d^m) } + \absv{  \frac1{\ns}\sum_{m=1}^{\ns} \ell(w^{priv};d^m) - \frac1{\ns}\sum_{m=1}^{\ns} \ell(w^{erm};d^m) } \nonumber\\
+& \Paren{  \frac1{\ns}\sum_{m=1}^{\ns} \ell(w^{erm};d^m) - \frac1{\ns}\sum_{m=1}^{\ns} \ell(w^*;d^m) } + \absv{ \expectationf{\ell(w^{*};(X,Y))}{(X,Y)\sim P} - \frac1{\ns}\sum_{m=1}^{\ns} \ell(w^{*};d^m) } \nonumber
\end{align*}

Now we need to bound each term. We firstly bound the first and last term simultaneously. By the generalization error bound (Lemma 7 from~\cite{WuSD19}), they are bounded by $O\Paren{ \frac{\lambda \log\Paren{\frac1{\beta}}}{\sqrt{n}}}$ simultaneously, with probability greater than $1 - \frac{2}{3}\beta$. Then we turn to the second term, by Corollary~\ref{cor:EmpiricalError}, with probability greater than $1 - \frac{1}{3}\beta $, it is bounded by $ O\Paren{ \frac{\lambda^{\frac{4}{3}}\log(\frac{ \ns\dims}{\beta}) } {(\ns \sqrt{\rho})^{\frac{2}{3}}}}$. Finally we bound the third term. According to the definition of $w^{erm}$, the third term should be smaller than 0.
Therefore, by union bound, $ \expectationf{\ell(w^{priv};(X,Y))}{(X,Y)\sim P} - \expectationf{\ell(w^*;(X,Y))}{(X,Y)\sim P}  = O\Paren{ \frac{\lambda^{\frac{4}{3}}\log(\frac{ \ns\dims}{\beta}) } {(\ns \sqrt{\rho})^{\frac{2}{3}}} + \frac{\lambda \log\Paren{\frac1{\beta}}}{\sqrt{n}} }$, with probability greater than $1-\beta$.
\end{proof}

%%% Local Variables:
%%% mode: latex
%%% TeX-master: "main"
%%% End:

\subsection{Privately Learning Ising Models}

We first consider the problem of estimating the weight matrix of the Ising model. 
To be precise, given $\ns$ i.i.d.\ samples $ \{z^1, \cdots, z^{\ns}\}$ generated from an unknown distribution $\cD(A,\theta)$, our goal is to design an $\rho$-zCDP estimator $\hat{A}$ such that with probability at least $\frac{2}{3}$, $\max_{i,j\in[\dims]}\absv{A_{i,j} - \hat{A}_{i,j}} \le \dist$.

An observation of the Ising model is that for any node $Z_i$, the probability of $Z_i=1$ conditioned on the values of the remaining nodes $Z_{-i}$ follows from a sigmoid function. The next lemma comes from~\cite{KlivansM17}, which formalizes this observation.

\begin{lemma}
\label{lem:IsingToLR}
Let $Z \sim \cD(A,\theta)$ and $Z \in \{-1,1\}^{\dims}$, then $\forall i \in [\dims]$, $\forall x \in \{-1,1\}^{[\dims] \backslash \{i\}}$,
\begin{align}
\probof{Z_i=1|Z_{-i}=x} &= \sigma\Paren{ \sum_{j\neq i}2A_{i,j}x_j + 2\theta_i} = \sigma\Paren{\langle w, x^{\prime} \rangle}.\nonumber
\end{align}
where $w= 2[A_{i,1},\cdots, A_{i,i-1}, A_{i,i+1},\cdots, A_{i,\dims},\theta_i]  \in \RR^{\dims}$, and $x^{\prime} = [x,1] \in \RR^{\dims}$.
\end{lemma}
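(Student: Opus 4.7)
The plan is to prove the lemma by a direct computation of the conditional probability, using the Gibbs form of the Ising density and Bayes' rule. First I would write the unnormalized log-density $E(z) := \sum_{1 \le k \le l \le p} A_{k,l} z_k z_l + \sum_k \theta_k z_k$ and isolate the contribution that depends on coordinate $z_i$. Exploiting that $A$ is symmetric with $A_{i,i} = 0$, the terms involving $z_i$ collapse (after merging the $k = i$ and $l = i$ cases in the ordered sum) into $z_i \cdot \bigl(\sum_{j \neq i} A_{i,j} z_j + \theta_i\bigr)$. Hence once we condition on $Z_{-i} = x$, all $z_i$-free terms and the partition function cancel between the two events $\{Z_i = 1\}$ and $\{Z_i = -1\}$.

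Next I would form the odds ratio
\[
\frac{\Pr[Z_i = 1, Z_{-i} = x]}{\Pr[Z_i = -1, Z_{-i} = x]} \;=\; \exp\!\Bigl(2 \sum_{j \neq i} A_{i,j} x_j + 2\theta_i\Bigr),
\]
where the factor of $2$ arises because $z_i$ changes by $2$ between the two assignments. Using the elementary identity $\sigma(t) = 1/(1 + e^{-t})$ then yields $\Pr[Z_i = 1 \mid Z_{-i} = x] = \sigma\bigl(2 \sum_{j \neq i} A_{i,j} x_j + 2\theta_i\bigr)$, which is precisely $\sigma(\langle w, x' \rangle)$ for $w = 2[A_{i,1}, \ldots, A_{i,i-1}, A_{i,i+1}, \ldots, A_{i,p}, \theta_i]$ and $x' = [x, 1]$.

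There is no real obstacle in this proof: it is a standard textbook manipulation of the Ising model and does not rely on the width, accuracy, or privacy parameters used elsewhere in the chapter. The only point requiring a little care is the bookkeeping of the ordered sum $\sum_{k \le l}$ when collecting $z_i$-dependent terms, so as to avoid double-counting the symmetric entries of $A$ and to verify that the mean-field coefficient $\theta_i$ contributes exactly $2\theta_i$ (not $\theta_i$) to the log-odds.
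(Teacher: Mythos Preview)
Your proposal is correct and matches the paper's approach: the paper also computes the conditional probability directly from the Gibbs form, writing $\Pr[Z_i=1\mid Z_{-i}=x]$ as a ratio of two exponentials (for $z_i=1$ and $z_i=-1$), canceling the $z_i$-free terms, and recognizing the result as $\sigma\bigl(2\sum_{j\neq i}A_{i,j}x_j + 2\theta_i\bigr)$. Your odds-ratio phrasing is an equivalent way to present exactly the same computation.
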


\begin{proof}
The proof is from~\cite{KlivansM17}, and we include it here for completeness. 
 According to the definition of the Ising model,
\begin{align}
\probof{Z_i=1|Z_{-i}=x} &= \frac{\exp \Paren{\sum\limits_{j\neq i}A_{i,j}x_j +\sum\limits_{j\neq i} \theta_j + \theta_i} }{\exp \Paren{\sum\limits_{j\neq i} A_{i,j}x_j +\sum_{j\neq i} \theta_j + \theta_i}+\exp\Paren{\sum_{j\neq i}-A_{i,j}x_j +\sum_{j\neq i} \theta_j - \theta_i}}\nonumber\\
& =\sigma\Paren{ \sum_{j\neq i}2A_{i,j}x_j + 2\theta_i}.\nonumber
\end{align}
\end{proof}

By Lemma~\ref{lem:IsingToLR}, we can estimate the weight matrix by
solving a logistic regression for each node, which is utilized
in~\cite{WuSD19} to design non-private estimators. Our algorithm uses
the private Frank-Wolfe method to solve the per-node logistic regression
problem, achieving the following theoretical guarantee.

\begin{algorithm}

\caption{Privately Learning Ising Models}
\label{alg:DPIsing}

\textbf{Input:} $\ns$ samples $\{ z^1,\cdots, z^{\ns}\}$, where $z^m \in \{\pm1\}^{\dims}$ for $m\in [\ns]$; an upper bound on $\lambda(A,\theta) \le \lambda$, privacy parameter $\rho$

\begin{algorithmic}[1] 
\For{$i=1$ to $\dims$}

\State $\forall m \in [\ns]$, $x^m \leftarrow [z_{-i}^{m},1]$, $y^m \leftarrow z_i^m$

\State $w^{priv} \leftarrow \cA_{PFW}( D, \cL, \rho^{\prime}, \cC)$, where $\rho^{\prime} = \frac{\rho}{\dims}$,   $D=\{\Paren{x^m, y^m} \}_{m=1}^{\ns}$, $\cL(w;D) =  \frac1{\ns}\sum_{m=1}^{\ns}\log\Paren{1+e^{-y^m \langle w, x^m \rangle}}$, $\cC = \{ \normone{w}\le 2\lambda\}$

\State $\forall j \in \dims$, $\hat{A}_{i,j} \leftarrow \frac1{2} w^{priv}_{\tilde{j}}$, where $\tilde{j} =j $ when $j<i$ and $\widetilde{j}  = j-1$ if $j>i$
\EndFor
\end{algorithmic}
\textbf{Output:}$\hat{A} \in \RR^{\dims \times \dims}$
\end{algorithm}

%\begin{algorithm}
%
%\caption{$\cA_{PFW}( D, \cL, \rho, \cC):$ Private Frank-Wolfe Algorithm}
%\label{alg:privateFW}
%
%\hspace*{\algorithmicindent} \textbf{Input:}  Data set: $D = \{d^1,\cdots,d^{\ns}\}$, loss function: $\cL(w;D) = \frac1{\ns} \sum_{j=1}^{\ns} \ell(w;d^j)$ (with Lipschitz constant $L_1$), privacy parameters: $\rho$, convex set: $\cC = conv(S)$ with $\normone{\cC} \coloneqq  \max_{s \in S} \normone{s}$, iteration times: $T$
%
%\begin{algorithmic}[1] 
%\State Initialize $w$ from an arbitrary point in $\cC$
%
%\For{$t=1$ to $T-1$}
%\State $\forall s\in S$, $\alpha_s \leftarrow \langle s, \nabla \cL(w;D)\rangle+ \text{Lap} \Paren{ 0, \frac{L_1 \normone{C} \sqrt{T}}{\ns \sqrt{\rho}}}$
%
%\State $\tilde{w_t} \leftarrow \arg\min_{s \in S} \alpha_s$
%
%\State $w_{t+1} \leftarrow (1-\mu_t)w_{t} + \mu_t \tilde{w_t}$, where $\mu_t = \frac{2}{t+2}$
%\EndFor
%
%\end{algorithmic}
%\hspace*{\algorithmicindent} \textbf{Output:}  $w^{priv} = w_T$
%\end{algorithm}

\begin{theorem}
\label{thm:est-ub-ising}
Let $\cD(A,\theta)$ be an unknown $\dims$-variable Ising model with $\lambda(A,\theta)\le \lambda$.
There exists an efficient $\rho$-zCDP algorithm which outputs a weight matrix $\hat{A} \in \RR^{\dims \times \dims}$ such that 
with probability greater than $2/3$, $\max_{i,j \in [\dims]} \absv{A_{i,j} - \hat{A}_{i,j}} \le \dist$ if the number of i.i.d.\ samples satisfies
$$\ns = \Omega\Paren{\frac{ \lambda^2 \log(\dims) e^{12\lambda}}{\dist^4}+ \frac{\sqrt{\dims} \lambda^2 \log^2(\dims)e^{9\lambda}}{\sqrt{\rho} \alpha^3}}.$$
% The time complexity is $O\Paren{n^{\frac{5}{3} \dims^3}}$.
\end{theorem}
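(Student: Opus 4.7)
The plan is to reduce estimation of the Ising weight matrix to $\dims$ parallel instances of sparse logistic regression, one per node, solve each via the private Frank--Wolfe subroutine whose guarantees are collected in Theorem~\ref{thm:generalization_error}, and then combine across nodes by composition of zCDP.

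First I would fix a node $i \in [\dims]$ and invoke Lemma~\ref{lem:IsingToLR}, which writes the conditional $\Pr[Z_i = 1 \mid Z_{-i} = x]$ as a logistic link with unknown parameter $w^\star_i := 2\,[A_{i,1}, \ldots, A_{i,i-1}, A_{i,i+1}, \ldots, A_{i,\dims}, \theta_i]$. The width assumption $\lambda(A,\theta)\le\lambda$ gives $\|w^\star_i\|_1 \le 2\lambda$, so $w^\star_i$ lies in the constraint set $\cC=\{w:\|w\|_1\le 2\lambda\}$ used by Algorithm~\ref{alg:DPIsing}. The feature vectors $[z^m_{-i},1]$ satisfy $\|\cdot\|_\infty\le 1$ and labels lie in $\{\pm 1\}$, so the hypotheses of Corollary~\ref{cor:EmpiricalError} and Theorem~\ref{thm:generalization_error} are met with this radius and Lipschitz/curvature constants.

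Next I would run $\cA_{PFW}$ for each of the $\dims$ nodes with per-node budget $\rho' = \rho/\dims$. By Lemma~\ref{lem:dpcomp} the overall composition is $\rho$-zCDP, and rescaling $\hat A_{i,j} = w^{priv}_{\tilde j}/2$ is post-processing. For utility, Theorem~\ref{thm:generalization_error} (applied with radius $2\lambda$ and failure probability $\beta/\dims$) gives, with probability at least $1-\beta/\dims$,
\[
\cE_i \;:=\; \EE[\ell(w^{priv};(X,Y))] - \EE[\ell(w^\star_i;(X,Y))] \;=\; O\!\left(\frac{\lambda^{4/3}\log(\ns\dims/\beta)}{(\ns\sqrt{\rho/\dims})^{2/3}} + \frac{\lambda\log(\dims/\beta)}{\sqrt{\ns}}\right),
\]
and a union bound gives this simultaneously across all $i$.

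The core remaining step, and the main obstacle, is converting the population logistic-risk excess $\cE_i$ into an $\ell_\infty$ bound on $w^{priv}-w^\star_i$ (and hence on $\hat A - A$). Following the non-private reduction of Wu--Sanghavi--Dimakis, I would establish that the population logistic risk is locally strongly convex at $w^\star_i$ with a modulus polynomial in the unbiasedness parameter of the marginal law of $Z_{-i}$. Lemma~\ref{lem:pairwise-unbiased} gives $\delta$-unbiasedness with $\delta = e^{-2\lambda}/2$ for the Ising model, and Lemma~\ref{lem:marginal} transfers this unbiasedness to the marginal on $Z_{-i}$. This yields an inequality of the form
\[
\|w^{priv}-w^\star_i\|_\infty^2 \;\lesssim\; \cE_i / \delta^{\,c}
\]
for an absolute constant $c$, so that an $\ell_\infty$ error of $2\dist$ in $w$ (equivalently $\dist$ in $A$) requires $\cE_i = O(\dist^2 e^{-O(\lambda)})$. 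Plugging this back into the two terms of the risk bound and solving for $\ns$ produces the non-private term $\ns = \Omega(\lambda^2 \log(\dims)\, e^{O(\lambda)}/\dist^4)$ from $\lambda/\sqrt{\ns}$ and the private term $\ns = \Omega(\sqrt{\dims}\,\lambda^2\log^2(\dims)\,e^{O(\lambda)}/(\sqrt{\rho}\,\dist^3))$ from the Frank--Wolfe term, matching the claimed bound after a final union bound over the $\dims^2$ entries of $A$. The delicate part is tracking the exponential dependence on $\lambda$ through the strong-convexity-type lower bound: one has to verify that the Hessian of the population logistic loss in directions $v$ with $\|v\|_\infty=1$ is at least $\Omega(\delta^{c})$, which is what propagates the $e^{O(\lambda)}$ factors into the final sample complexity.
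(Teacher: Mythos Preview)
Your proposal is correct and follows essentially the same approach as the paper's proof: per-node logistic regression via Lemma~\ref{lem:IsingToLR}, private Frank--Wolfe with budget $\rho/\dims$ per node, zCDP composition for privacy, Theorem~\ref{thm:generalization_error} for the excess population risk, and finally the conversion from excess logistic risk to $\ell_\infty$ parameter error via the $\delta$-unbiasedness of the marginal (Lemmas~\ref{lem:marginal} and~\ref{lem:pairwise-unbiased}). The only difference is that the paper invokes the last step as a black-box lemma from \cite{WuSD19} (stated as Lemma~\ref{lem:parameter-error-ising}, giving $\|u_1-u_2\|_\infty = O(e^{\|u_1\|_1+|\theta_1|}\sqrt{\gamma/\delta})$ when $\gamma \le \delta e^{-2\|u_1\|_1-2|\theta_1|-6}$), whereas you sketch its derivation via local strong convexity; then the paper substitutes $\gamma = \alpha^2 e^{-6\lambda}$ and union-bounds over the $\dims$ nodes, exactly as you do.
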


\begin{proof}
We first prove that Algorithm~\ref{alg:DPIsing}  satisfies $\rho$-zCDP. Notice that in each iteration, the algorithm solves a private sparse logistic regression under $\frac{\rho}{\dims}$-zCDP. Therefore, Algorithm~\ref{alg:DPIsing} satisfies $\rho$-zCDP  by composition (Lemma~\ref{lem:dpcomp}).

For the accuracy analysis, we start by looking at the first iteration  ($i=1$) and showing that $\absv{A_{1,j} - \hat{A}_{1,j}} \le \dist$, $\forall j \in [\dims]$, with probability greater than $1-\frac1{10\dims}$. 
%We note that similar guarantee holds for the other iteration, as a consequence that non-overlapping part of the matrix is recovered in different iterations. 

Given a random sample $Z \sim \cD(A,\theta)$, we let  $X = [Z_{-1},1]$, $Y = Z_1$. From Lemma~\ref{lem:IsingToLR}, $\probof{Y=1|X=x} = \sigma\Paren{\langle w^*, x\rangle}$, where $w^* = 2[A_{1,2},\cdots,A_{1,\dims},\theta_1]$. We also note that $\normone{w^*}\le 2\lambda$, as a consequence of the width constraint of the Ising model.

For any $\ns$ i.i.d.\ samples $\{ z^{m}\}_{m=1}^{\ns}$ drawn from the Ising model, let $x^m = [z_{-1}^{m},1]$ and $y^m = z_1^m$, it is easy to check that each $(x^m,y^m)$ is the realization of $(X,Y)$. Let $w^{priv}$ be the output of  $\cA\Paren{ D, \cL, \frac{\rho}{\dims},  \{w: \normone{w}\le 2\lambda \}}$, where $D = \{(x^m,y^m)\}_{m=1}^{\ns}$.
% and $\cL(w;(x,y)) = \log\Paren{1+e^{-y \langle w, x \rangle}}$. 
By Lemma~\ref{thm:generalization_error}, when $\ns =O\Paren{ \frac{ \sqrt{\dims} \lambda^2 \log^2(\dims) }{\sqrt{\rho} \gamma^{\frac{3}{2}}}+ \frac{ \lambda^2 \log(\dims)}{\gamma^2} }$, with probability greater than $1-\frac{1}{10\dims}$, $ \expectationf{\ell(w^{priv};(X,Y))}{Z\sim \cD(A,\theta)} - \expectationf{\ell(w^*;(X,Y))}{Z\sim \cD(A,\theta)} \le \gamma.$

We will use  the following lemma from~\cite{WuSD19}. Roughly speaking, with the assumption that the samples are generated from an Ising model, any estimator $w^{priv}$ which achieves a small error in the loss $\cL$  guarantees an accurate parameter recovery in $\ell_{\infty}$ distance.

\begin{lemma}
\label{lem:parameter-error-ising}
Let $P$ be a distribution on $\{-1,1\}^{\dims-1} \times \{-1,1\}$. Given $u_1\in \RR^{\dims-1}, \theta_1\in \RR$, suppose $\probof{Y=1|X=x} = \sigma\Paren{\langle u_1,x \rangle+\theta_1}$ for $(X,Y) \sim P$. 
If the marginal distribution of $P$ on $X$ is $\delta$-unbiased, 
and $\expectationf{\log\Paren{1+e^{-Y \Paren{ \langle u_1, X \rangle+\theta_1}}} } {(X,Y)\sim P} - \expectationf{ \log\Paren{1+e^{-Y \Paren{ \langle u_2, X \rangle+\theta_2}}} }{(X,Y)\sim P} \le \gamma$ for some $u_2\in \RR^{\dims-1}, \theta_2\in \RR$, and $\gamma \le \delta e^{-2\normone{u_1} -2\normone{\theta_1}-6}$, then $\norminf{u_1-u_2} = O(e^{\normone{u_1} +\normone{\theta_1}} \cdot \sqrt{\gamma/\delta}).$
\end{lemma}

By Lemma~\ref{lem:marginal}, Lemma~\ref{lem:pairwise-unbiased} and Lemma~\ref{lem:parameter-error-ising}, if $ \expectationf{\ell(w^{priv};(X,Y))}{Z\sim \cD(A,\theta)} - \expectationf{\ell(w^*;(X,Y))}{Z\sim \cD(A,\theta)} \le O\Paren{\alpha^2 e^{-6\lambda}}$, we have $\norminf{w^{priv} - w^*} \le \alpha$. By replacing $\gamma = \alpha^2 e^{-6\lambda}$, we prove that $\norminf{A_{1,j} - \hat{A}_{1,j}} \le \dist$ with probability greater than $1-\frac1{10\dims}$. Noting that similar argument works for the other iterations and non-overlapping part of the matrix is recovered in different iterations. By union bound over $\dims$ iterations, we prove that with probability at least $\frac{2}{3}$, $\max_{i,j\in[\dims]}\absv{A_{i,j} - \hat{A}_{i,j}} \le \dist$.

Finally, we note that the time compexity of the algorithm is $poly(\ns, \dims)$ since the private Frank-Wolfe algorithm is time efficient by Corollary~\ref{cor:EmpiricalError}.
\end{proof}

\subsection{Privately Learning Pairwise Graphical Models}
\label{sec:ub-pair}

Next, we study parameter learning for pairwise graphical models over general alphabet. Given $\ns$ i.i.d.\ samples $ \{z^1, \cdots, z^{\ns}\}$ drawn from an unknown distribution $\cD(\cW,\Theta)$, we want to design an $\rho$-zCDP estimator $\hat{\cW}$ such that with probability at least $\frac{2}{3}$, $\forall i \neq j \in [\dims], \forall \posa, \posb \in [\ab], \absv{W_{i,j}(\posa, \posb) - \widehat{W}_{i,j}(\posa, \posb)} \le \dist$. To facilitate our presentation, we assume that $\forall i \neq j \in [\dims]$, every row (and column) vector of $W_{i,j}$ has zero mean.\footnote{The assumption that $W_{i,j}$ is centered is without loss of generality and widely used in the literature~\cite{KlivansM17, WuSD19}. We present the argument here for completeness. Suppose the $a$-th row of $W_{i,j}$ is not centered, i.e., $\sum_{b} W_{i,j} (a,b) \neq 0$, we can define $W^{\prime}_{i,j} (a,b) = W_{i,j} (a,b)  - \frac1k \sum_{b} W_{i,j} (a,b)$ and $\theta^{\prime}_{i}(a)  = \theta_i(a)+\frac1k \sum_{b} W_{i,j} (a,b)$, and the probability distribution remains unchanged.}

Analogous to Lemma~\ref{lem:IsingToLR} for the Ising model, a pairwise graphical model has the following property, which can be utilized to recover its parameters. 
\begin{lemma}[Fact 2 of~\cite{WuSD19}]
\label{lem:PairwiseToLR}
Let $Z \sim \cD(\cW,\Theta)$ and $Z \in [\ab]^\dims$. For any $i \in [\dims]$, any $\posa \neq \posb \in [\ab]$, and any $x \in [\ab]^{\dims-1}$,

$$\probof{Z_i  =\posa | Z_i \in \{\posa, \posb\}, Z_{-i}=x }  = \sigma \Paren{\sum_{j\neq i}\Paren{W_{i,j}(\posa, x_j) -W_{i,j}(\posb, x_j) }+\theta_i(\posa)-\theta_i(\posb)}.
$$
\end{lemma}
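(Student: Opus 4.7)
The plan is to prove this by direct computation of the conditional probability, analogous to the proof of Lemma~\ref{lem:IsingToLR}. The key observation is that when we condition on $Z_{-i} = x$ and on $Z_i \in \{\posa, \posb\}$, most of the energy terms in the Boltzmann weight cancel out in the ratio, leaving only those terms that involve coordinate $i$.

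First, I would expand the conditional probability using the definition of conditional probability:
\[
\probof{Z_i = \posa \mid Z_i \in \{\posa, \posb\}, Z_{-i} = x} = \frac{\probof{Z_i = \posa, Z_{-i} = x}}{\probof{Z_i = \posa, Z_{-i} = x} + \probof{Z_i = \posb, Z_{-i} = x}}.
\]
Using the definition of $\cD(\cW,\Theta)$, each of these joint probabilities is proportional to $\exp$ of the total energy. Splitting the energy into terms involving coordinate $i$ and terms not involving $i$, I would write
\[
\probof{Z_i = \posa, Z_{-i} = x} \propto \exp\!\Paren{\sum_{j \neq i} W_{i,j}(\posa, x_j) + \theta_i(\posa)} \cdot \exp(R(x)),
\]
where $R(x) = \sum_{j < k, j,k \neq i} W_{j,k}(x_j, x_k) + \sum_{j \neq i} \theta_j(x_j)$ collects all terms independent of $Z_i$, together with the common normalizing constant.

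Next, both the numerator and denominator share the factor $\exp(R(x))$ as well as the partition function, which cancel. Dividing through by the numerator $\exp\!\Paren{\sum_{j\neq i} W_{i,j}(\posa, x_j) + \theta_i(\posa)}$ then yields
\[
\frac{1}{1 + \exp\!\Paren{\sum_{j\neq i}\bigl(W_{i,j}(\posb, x_j) - W_{i,j}(\posa, x_j)\bigr) + \theta_i(\posb) - \theta_i(\posa)}},
\]
which is precisely $\sigma\!\Paren{\sum_{j\neq i}\bigl(W_{i,j}(\posa, x_j) - W_{i,j}(\posb, x_j)\bigr) + \theta_i(\posa) - \theta_i(\posb)}$.

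There is no significant technical obstacle here: the argument is essentially bookkeeping to verify which energy terms depend on $z_i$ and therefore survive the ratio, and the symmetry $W_{i,j} = W_{j,i}^T$ lets us write the interaction involving coordinate $i$ uniformly as $\sum_{j\neq i} W_{i,j}(z_i, x_j)$. The only care needed is to make sure that the convention that each unordered pair $(i,j)$ contributes exactly once (captured by the restriction $i \le j$ in Definition~\ref{def:pairwise}) is consistently applied when collecting the interaction terms, but this is a routine check.
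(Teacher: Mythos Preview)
Your proof is correct and is exactly the natural argument. The paper does not actually supply its own proof of this lemma---it is stated as Fact~2 of~\cite{WuSD19} and used without proof---but your computation mirrors the proof the paper gives for the analogous Ising statement (Lemma~\ref{lem:IsingToLR}), which is the intended route.
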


Now we introduce our algorithm. Without loss of generality, we consider estimating $W_{1,j}$ for all $j \in [\dims]$ as a running example. We fix a pair of values $(\posa, \posb)$, where $\posa,\posb \in [\ab]$ and $\posa \neq \posb$. Let $S_{\posa,\posb}$ be the samples where $Z_1  \in \{\posa,\posb \}$. In order to utilize Lemma~\ref{lem:PairwiseToLR}, we perform the following transformation on the samples in $S_{\posa,\posb}$: for the $m$-th sample $z^m$, let $y^m =1$ if $z_1^m = \posa$, else $y^m = -1$. And $x^m$ is the one-hot encoding of the vector $[z^m_{-1},1]$, where $\text{OneHotEncode}(s)$ is a mapping from $[\ab]^\dims$  to $\RR^{\dims \times \ab}$, and the $i$-th row is the $t$-th standard basis vector given $s_i = t$. Then we define $w^* \in \RR^{\dims \times \ab}$ as follows:
\begin{align} 
&w^*(j,\cdot) = W_{1,j+1} (\posa, \cdot) -W_{1,j+1}(\posb, \cdot), \forall j \in [\dims-1] ;  \nonumber\\
&w^*(\dims,\cdot) = [\theta_1(\posa)-\theta_1(\posb),0,\cdots,0].\nonumber
\end{align}
Lemma~\ref{lem:PairwiseToLR} implies that $\forall t$, $\probof{Y^t=1} = \sigma\Paren{\langle w^*, X^t \rangle}$, where $\langle \cdot ,\cdot \rangle$ is the element-wise multiplication of matrices. According to the definition of the width of $\cD(\cW,\Theta)$, $\normone{w^*}\le \lambda \ab$. Now we can apply the sparse logistic regression method of Algorithm~\ref{alg:DPPair} to the samples in $S_{\posa,\posb}$.

Suppose $w^{priv}_{\posa,\posb}$ is the output of the private Frank-Wolfe algorithm, we define $U_{\posa, \posb} \in \RR^{\dims \times \ab}$ as follows: $\forall b \in [\ab]$,
\begin{align}
\label{equ:centering}
&U_{\posa, \posb} (j,b) = w^{priv}_{\posa, \posb}(j,b) - \frac1{\ab} \sum_{a \in [\ab]} w^{priv}_{\posa, \posb}(j,a), \forall j \in [\dims-1]; \nonumber\\
& U_{\posa, \posb} (\dims,b) = w^{priv}_{\posa, \posb}(\dims,b) + \frac1{\ab} \sum_{j \in [\dims-1]}\sum_{a \in [\ab]} w^{priv}_{\posa, \posb}(j,a).
\end{align}

%\begin{align}
%\label{equ:centering}
%& \forall j \in [\dims-1], \forall b \in [\ab], \nonumber\\
%&~~~~U_{\posa, \posb} (j,b) = w^{priv}_{\posa, \posb}(j,b) - \frac1{\ab} \sum_{a \in [\ab]} w^{priv}_{\posa, \posb}(j,a);  \nonumber \\
%&\forall b \in [\ab],  \nonumber\\
%&~~~~ U_{\posa, \posb} (\dims,b) = w^{priv}_{\posa, \posb}(\dims,b) + \frac1{\ab} \sum_{j \in [\dims-1]}\sum_{a \in [\ab]} w^{priv}_{\posa, \posb}(j,a).
%\end{align}

$U_{\posa, \posb}$ can be seen as a ``centered" version of $w^{priv}_{\posa, \posb}$ (for the first $\dims-1$ rows).
It is not hard to see that $\langle U_{\posa, \posb}, x \rangle = \langle w^{priv}_{\posa, \posb}, x \rangle$, so $U_{\posa, \posb}$ is also a minimizer of the sparse logistic regression. 

For now, assume that $\forall  j \in [\dims-1], b \in [\ab]$, $U_{\posa, \posb}(j,b)$ is a  ``good'' approximation of $\Paren{ W_{1,j+1} (\posa, b) -W_{1,j+1}(\posb, b)}$, which we will show later. If we sum over $\posb \in [\ab]$, it can be shown that 
$\frac{1}{\ab}\sum_{\posb \in [\ab]}U_{\posa, \posb} ( j,b)$ is also a ``good'' approximation of $W_{1,j+1}(\posa, b)$, for all $ j \in [\dims-1]$, and $ \posa, b \in [\ab]$, because of the centering assumption of $\cW$, i.e., $\forall  j \in [\dims-1], b \in [\ab], \sum_{\posb \in [\ab]} W_{1,j+1}(\posb,b) = 0$. With these considerations in mind, we are able to introduce our algorithm.

\begin{algorithm}

\caption{Privately Learning Pairwise Graphical Model}
\label{alg:DPPair}

\textbf{Input:} alphabet size $k$, $\ns$ i.i.d.\ samples $\{ z^1,\cdots, z^{\ns}\}$, where $z^m \in [\ab]^{\dims}$ for $m\in [\ns]$; an upper bound on $\lambda (\cW,\Theta)  \le \lambda$, privacy parameter $\rho$

\begin{algorithmic}[1] 

\For{$i=1$ to $\dims$}
	\For{each pair $\posa \neq \posb \in [\ab]$}
        \State $S_{\posa, \posb} \leftarrow \{ z^m, m\in [\ns]:  z_i^m \in \{u,v \} \}$
        
        \State $\forall z^m \in S_{\posa, \posb}$, $x^m \leftarrow \text{OneHotEncode}([z^{m}_{-i},1])$, $y^m \leftarrow 1$ if $ z_i^m=\posa$; $y^t \leftarrow -1$ if $ z_i^m=\posb$
   
        \State $w^{priv}_{\posa, \posb} \leftarrow \cA_{PFW}( D, \cL, \rho^{\prime}, \cC)$, where $\rho^{\prime} = \frac{\rho}{\ab^2\dims}$, $D=\{ \Paren{x^m, y^m}:z^m \in S_{\posa, \posb} \}$, $\cL(w;D) = \frac1{|S_{\posa, \posb}|} \sum_{m=1}^{|S_{\posa, \posb}|}  \log\Paren{1+e^{-y^m \langle w, x^m \rangle}}$, $\cC = \{ \normone{w}\le 2\lambda \ab\}$
   
        \State Define $U_{\posa, \posb} \in \RR^{\dims \times \ab}$ by centering the first $\dims-1$ rows of $w^{priv}_{\posa, \posb}$, as in Equation~\ref{equ:centering}
        \EndFor
	
	\For{$j \in [\dims] \backslash i$  and $\posa \in [\ab]$}
 	\State $\widehat{W}_{i,j}(\posa,:) \leftarrow \frac{1}{\ab}\sum_{\posb\in [\ab]}U_{\posa, \posb} (\tilde{j},:)$, where $\tilde{j} = j$ when $j<i$ and $\tilde{j} = j-1$ when $j>i$
	 \EndFor
\EndFor
\end{algorithmic}
\textbf{Output:} $\widehat{W}_{i,j} \in \RR^{\ab \times \ab}$ for all $i \neq j \in[\dims]$
\end{algorithm}

%\begin{algorithm}
%
%\caption{$\cA_{PFW}( D, \cL, \rho, \cC):$ Private Frank-Wolfe Algorithm}
%\label{alg:privateFW}
%
%\hspace*{\algorithmicindent} \textbf{Input:}  Data set: $D = \{d^1,\cdots,d^{\ns}\}$, loss function: $\cL(w;D) = \frac1{\ns} \sum_{j=1}^{\ns} \ell(w;d^j)$ (with Lipschitz constant $L_1$), privacy parameters: $\rho$, convex set: $\cC = conv(S)$ with $\normone{\cC} \coloneqq  \max_{s \in S} \normone{s}$, iteration times: $T$
%
%\begin{algorithmic}[1] 
%\State Initialize $w$ from an arbitrary point in $\cC$
%
%\For{$t=1$ to $T-1$}
%\State $\forall s\in S$, $\alpha_s \leftarrow \langle s, \nabla \cL(w;D)\rangle+ \text{Lap} \Paren{ 0, \frac{L_1 \normone{C} \sqrt{T}}{\ns \sqrt{\rho}}}$
%
%\State $\tilde{w_t} \leftarrow \arg\min_{s \in S} \alpha_s$
%
%\State $w_{t+1} \leftarrow (1-\mu_t)w_{t} + \mu_t \tilde{w_t}$, where $\mu_t = \frac{2}{t+2}$
%\EndFor
%
%\end{algorithmic}
%\hspace*{\algorithmicindent} \textbf{Output:}  $w^{priv} = w_T$
%\end{algorithm}

The following theorem is the main result of this section.
Its proof is structurally similar to that of Theorem~\ref{thm:est-ub-ising}.

\begin{theorem}
\label{thm:est-ub-pair}
  
Let $\cD(\cW,\Theta)$ be an unknown $\dims$-variable pairwise graphical model distribution, and we suppose that $\cD(\cW,\Theta)$ has width $\lambda(\cW,\Theta)\le \lambda$. There exists an efficient $\rho$-zCDP algorithm which outputs $\widehat{W}$ such that with probability greater than $2/3$,
$\absv{ W_{i,j}(\posa, \posb)- \widehat{W}_{i,j}(\posa, \posb)} \le \dist$, $\forall i \neq j \in [\dims], \forall \posa, \posb \in [\ab]$ if the number of i.i.d.\ samples satisfy
$$\ns = \Omega\Paren{ \frac{ \lambda^2 k^5 \log(\dims k) e^{O(\lambda)}}{\dist^4} + \frac{\sqrt{\dims} \lambda^2 k^{5.5} \log^2(\dims k)e^{O(\lambda)}}{\sqrt{\rho} \alpha^3}}.$$
\end{theorem}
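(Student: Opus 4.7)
The structure will mirror the Ising case (Theorem~\ref{thm:est-ub-ising}), but with extra bookkeeping to handle the alphabet size $\ab$ and the data subsets $S_{\posa,\posb}$. For the privacy guarantee, I will observe that Algorithm~\ref{alg:DPPair} invokes the private Frank-Wolfe subroutine $\cA_{PFW}$ exactly $\dims \cdot \ab(\ab-1)$ times (once per $(i, \posa, \posb)$ with $\posa \neq \posb$), each call being $\rho' = \rho/(\ab^2\dims)$-zCDP. Composition of zCDP (Lemma~\ref{lem:dpcomp}), applied uniformly to a single-sample change (noting that altering one $z^m$ only re-shuffles which of the subsets $S_{\posa,\posb}$ it belongs to but never increases the total composition count), yields overall $\rho$-zCDP.

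For accuracy, I will fix the coordinate $i = 1$ and an arbitrary pair $\posa \neq \posb$; the result for other $i$'s follows by symmetry and the final union bound. By Lemma~\ref{lem:PairwiseToLR}, once we one-hot encode the conditioning vector, the samples in $S_{\posa,\posb}$ (after the transformation $(x^m, y^m)$) are i.i.d.\ from a logistic model whose true parameter is the matrix $w^*$ defined immediately before the algorithm, with $\normone{w^*} \le 2\lambda k$ by the width bound. The next step is to invoke Theorem~\ref{thm:generalization_error} (applied with privacy parameter $\rho' = \rho/(\ab^2\dims)$, sparsity radius $2\lambda k$, and ambient dimension $\dims \ab$) to obtain a bound on the excess population risk of $w^{priv}_{\posa,\posb}$. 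To convert this excess-risk bound into an $\ell_\infty$ parameter recovery guarantee via Lemma~\ref{lem:parameter-error-ising}, I must verify that the marginal law of $Z_{-1}$ given $Z_1 \in \{\posa, \posb\}$ is $\delta$-unbiased; I will establish this by noting that conditioning on a single-coordinate event in a pairwise graphical model yields another pairwise graphical model on $Z_{-1}$ whose width is inflated by at most an additive constant depending on $\lambda$, so by Lemma~\ref{lem:pairwise-unbiased} and Lemma~\ref{lem:marginal} the conditional distribution is $\delta'$-unbiased for $\delta' = \Omega(e^{-O(\lambda)}/\ab)$. Setting the excess-risk target to $\gamma = \Theta(\alpha^2 \ab^{-1} e^{-O(\lambda)})$ then yields $\norminf{U_{\posa,\posb} - w^*} \le \alpha/2$ after the centering step, since the centering in Equation~\eqref{equ:centering} only reshuffles the constant offset across rows.

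The third block of the argument passes from pairwise differences to the individual matrix entries. Because each row of every $W_{i,j}$ is assumed centered, averaging $U_{\posa,\posb}(j,b)$ over $\posb \in [\ab]$ recovers $W_{1,j+1}(\posa,b)$ up to error $\alpha/2$, yielding $|\widehat{W}_{i,j}(\posa,b) - W_{i,j}(\posa,b)| \le \alpha$. I will then control the random size of $S_{\posa,\posb}$: since the marginal of $Z_1$ satisfies $\Pr[Z_1 \in \{\posa,\posb\}] \ge 2\delta = 2e^{-2\lambda}/\ab$, a Chernoff bound gives $|S_{\posa,\posb}| = \Omega(n e^{-2\lambda}/\ab)$ with probability at least $1 - 1/\mathrm{poly}(\dims \ab)$, so the ``effective sample size'' fed into the private logistic regression is a $\Theta(\ab e^{2\lambda})$ factor smaller than $\ns$. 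Finally, union bounding the failure events over the $\dims \cdot \binom{\ab}{2}$ sub-problems and the $\ab$ entries per pair, and solving for $\ns$ so that each of the three error terms in Theorem~\ref{thm:generalization_error} (the zCDP-FW term, the generalization term, and the Chernoff term on $|S_{\posa,\posb}|$) is at most $\gamma = \Theta(\alpha^2 \ab^{-1} e^{-O(\lambda)})$ produces the two summands claimed in the theorem.

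\textbf{Main obstacle.} The step I expect to require the most care is the $\delta$-unbiasedness of the conditional law of $Z_{-1}$ given $Z_1 \in \{\posa,\posb\}$: naively conditioning on an event could destroy the unbiasedness constants, and the exponent on $\lambda$ and $\ab$ in the final sample complexity is determined by how tightly one controls this $\delta'$. A secondary bookkeeping obstacle is tracking the interplay between $\normone{w^*} \le 2\lambda k$ (which enters Theorem~\ref{thm:generalization_error} as the sparsity parameter) and the $k$-factor loss in the effective sample size, so that the polynomial factors $\ab^5$ (respectively $\ab^{5.5}$) in the two terms of the sample complexity emerge correctly.
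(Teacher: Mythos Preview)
Your high-level plan matches the paper's proof: fix $i=1$ and a pair $(\posa,\posb)$, invoke Theorem~\ref{thm:generalization_error} on the subsample $S_{\posa,\posb}$, convert the excess-risk bound into an $\ell_\infty$ parameter recovery bound, average over $\posb$ using the centering of $W_{i,j}$, control $|S_{\posa,\posb}|$ by concentration, and union bound. However, there is one genuine gap and one minor misconception.

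\textbf{The genuine gap: wrong conversion lemma.} You propose to turn the excess-risk bound into a parameter recovery bound via Lemma~\ref{lem:parameter-error-ising}. That lemma is stated for features in $\{-1,1\}^{\dims-1}$ with a $\delta$-unbiased marginal on the hypercube; it does not apply to the one-hot encoded features $X \in \{0,1\}^{(\dims-1)\times\ab}$ that arise here. More damagingly, even if you formally pushed it through, the exponent in its conclusion is $e^{\normone{u_1}}$, and you yourself note $\normone{w^*} \le 2\lambda\ab$. This would force $\gamma \le \delta e^{-O(\lambda\ab)}$ and hence $e^{O(\lambda\ab)}$ in the final sample complexity, not the claimed $e^{O(\lambda)}$. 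The paper instead invokes Lemma~\ref{lem:parameter-error-kalphabet}, a purpose-built analogue from \cite{WuSD19} for $[\ab]^{\dims-1}$-valued features in one-hot form. That lemma (i) requires the row-centering $\sum_a u(j,a)=0$, which is exactly why the centering step~\eqref{equ:centering} defining $U_{\posa,\posb}$ is performed, and (ii) has $e^{\norminfone{u_1}}$ rather than $e^{\normone{u_1}}$ in the exponent, which is $e^{O(\lambda)}$ here. With this correct lemma, the paper substitutes $\gamma = e^{-6\lambda}\alpha^2/\ab$ and the stated polynomial-in-$\ab$ factors emerge. Your remark that centering ``only reshuffles the constant offset'' understates its role: it is a hypothesis of the lemma you need.

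\textbf{The minor misconception.} Your ``main obstacle'' --- the $\delta$-unbiasedness of $Z_{-1}$ given $Z_1\in\{\posa,\posb\}$ --- is real but simpler than you suggest, and your proposed resolution is not quite right. Conditioning on the \emph{event} $Z_1\in\{\posa,\posb\}$ does not yield a pairwise model on $Z_{-1}$ (it yields a mixture of two), so your width-inflation argument does not go through as stated. The direct argument is that $\delta$-unbiasedness is a statement about full conditionals: for $i\neq 1$, $\Pr[Z_i=z\mid Z_{-\{1,i\}}=x,\,Z_1\in\{\posa,\posb\}]$ is a convex combination over $a\in\{\posa,\posb\}$ of $\Pr[Z_i=z\mid Z_{-i}=(a,x)]\ge\delta$, hence is itself $\ge\delta$. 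So the same $\delta=e^{-2\lambda}/\ab$ works, and this is why the paper simply cites Lemma~\ref{lem:marginal} and Lemma~\ref{lem:pairwise-unbiased} without further comment.
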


\begin{proof}
We consider estimating $W_{1,j}$ for all $j \in [\dims]$ as an example. 
Fixing one pair $(\posa,\posb)$, let $S_{\posa,\posb}$ be the samples whose first element is either $\posa$ or $\posb$, and $\ns^{\posa,\posb}$ be the number of samples in $S_{\posa,\posb}$. 
We perform the following transformation on the samples in $S_{\posa,\posb}$: for the sample $Z$, let $Y=1$ if $Z_1 = \posa$, else $Y = -1$, and let $X$ be the one-hot encoding of the vector $[Z_{-1},1]$. 

Suppose the underlying joint distribution of $X$ and $Y$ is $P$, i.e., $(X,Y) \sim P$, then
by Theorem~\ref{thm:generalization_error}, when $\ns^{\posa,\posb} = O \Paren{\frac{\lambda^2 \ab^2 \log^2(\dims\ab)}{\gamma^2}+\frac{\sqrt{d} \lambda^2 \ab^3 \log^2(\dims\ab)}{\gamma^{\frac{3}{2}} \sqrt{\rho} } }$, with probability greater than $1-\frac1{10 \dims \ab^2}$, 
$$ \expectationf{\ell(U_{\posa, \posb};(X,Y))}{(X,Y)\sim P} - \expectationf{\ell(w^*;(X,Y))}{(X,Y)\sim P} \le \gamma.$$ 
The following lemma appears in~\cite{WuSD19}, which is analogous to Lemma~\ref{lem:parameter-error-ising} for the Ising model.
% Fixing one pair $(\posa,\posb)$, let $\ns^{\posa,\posb}$ be the number of samples whose first element is either $\posa$ or $\posb$.  
\begin{lemma}
\label{lem:parameter-error-kalphabet}
Let $\cD$ be a $\delta$-unbiased distribution on $[\ab]^{\dims-1}$. For $Z \sim \cD$, $X$ denotes the one-hot encoding of $Z$. 
Let $u_1, u_2 \in \RR^{(\dims-1) \times \ab}$ be two matrices where $\sum_{a} u_1 (i,a) =0$ and  $\sum_{a} u_2(i,a) =0$ for all $i \in [\dims-1]$.
Let $P $ be a distribution such that given $u_1, \theta_1\in \RR$, $\probof{Y=1| X= X} = \sigma\Paren{\langle u_1,x \rangle+\theta_1}$ for $(X,Y) \sim P$. Suppose $\expectationf{\log\Paren{1+e^{-Y \Paren{ \langle u_1, X \rangle+\theta_1}}} } {(X,Y)\sim P} - \expectationf{ \log\Paren{1+e^{-Y \Paren{ \langle u_2, X \rangle+\theta_2}}} }{(x,Y)\sim P} \le \gamma$ for $u_2\in \RR^{(\dims-1) \times \ab}, \theta_2\in \RR$, and $\gamma \le \delta e^{-2\norminfone{u_1} -2\normone{\theta_1}-6}$, then $\norminf{u_1-u_2} = O(e^{\norminfone{u_1} +\normone{\theta_1}} \cdot \sqrt{\gamma/\delta}).$
\end{lemma}

By Lemma~\ref{lem:marginal}, Lemma~\ref{lem:pairwise-unbiased} and Lemma~\ref{lem:parameter-error-kalphabet}, if we substitute $\gamma = \frac{e^{-6\lambda} \alpha^2}{\ab}$, when $\ns^{\posa,\posb}  = O\Paren{\frac{ \lambda^2 k^4 \log(\dims k) e^{O(\lambda)}}{\dist^4}+\frac{\sqrt{\dims} \lambda^2 k^{4.5} \log^2(\dims k)e^{O(\lambda)}}{\sqrt{\rho} \alpha^3}}, $
\begin{align}
\label{equation:approxdifference}
\absv{W_{1,j}(\posa, b) - W_{1,j}(\posb, b) - U^{\posa ,\posb} (j ,b)} \le \alpha, \forall j \in [\dims-1], \forall b \in [\ab].
\end{align}
  By a union bound, Equation~(\ref{equation:approxdifference}) holds for all $(\posa, \posb)$ pairs simultaneously with probability greater than $1-\frac1{10\dims}$.
%By union bound, with probability greater than $1-\frac1{10\dims}$, this is true for all the pairs $(\alpha, \beta)$.
%$$\absv{W_{1,j}(\alpha, b) - W_{1,j}(\beta, b) - U^{\alpha ,\beta} (j ,b)} \le \gamma, \forall j \in [\dims-1], \forall \alpha, \beta, b \in [\ab].$$
If we sum over $\posb \in [\ab]$ and use the fact that $\forall j,b, \sum_{\posb \in [\ab]} W_{1,j}(\posb,b) = 0$, we have
$$\absv{W_{1,j}(\posa, b) -  \frac{1}{\ab}\sum_{\posb \in [\ab]}U_{\posa, \posb} ( j,b)} \le \alpha, \forall j \in [\dims-1], \forall \posa, b \in [\ab]. $$

Note that we need to guarantee that we obtain $\ns^{\posa, \posb}$ samples for each pair $(\posa, \posb)$. Since $\cD(\cW,\Theta)$ is $\delta$-unbiased, given $Z \sim \cD(\cW, \Theta)$, for all $\posa \neq \posb$,  $\probof{Z \in S_{\posa, \posb}} \ge 2\delta$. By Hoeffding's inequality, when $\ns = O\Paren{\frac{\ns^{\posa,\posb} }{\delta} + \frac{\log (\dims k^2)}{\delta^2}}$, with probability greater than $1 - \frac1{10\dims}$, we have enough samples for all $(\posa, \posb)$ pairs simultaneously. Substituting $\delta = \frac{e^{-6\lambda}}{k}$, we have
$$\ns = O\Paren{\frac{ \lambda^2 k^5 \log(\dims k) e^{O(\lambda)}}{\dist^4} +\frac{\sqrt{\dims} \lambda^2 k^{5.5} \log^2(\dims k)e^{O(\lambda)}}{\sqrt{\rho} \alpha^3}}.$$

%Similar arguments also work for the other iterations, and non-overlapping parts of the matrix are recovered in different iterations. 
The same argument holds for other entries of the matrix.
We conclude the proof by a union bound over $\dims$ iterations.

Finally, we note that the time compexity of the algorithm is $\poly(\ns, \dims)$ since the private Frank-Wolfe algorithm is time efficient by Corollary~\ref{cor:EmpiricalError}.
\end{proof}

%%% Local Variables:
%%% mode: latex
%%% TeX-master: "main.tex"
%%% End:

\section{Privately Learning Binary $t$-wise MRFs}
\label{sec:bin-mrf}

Let $\cD$ be a $t$-wise MRF on $\{1, -1 \}^{\dims}$ with underlying dependency graph $G$ and factorization polynomial $h(x) = \sum_{I \in C_t(G)} h_I(x)$. 
We assume that the width of $\cD$ is bounded by $\lambda$, i.e., $\max_{i \in [\dims]} \normone{\partial_i h} \le \lambda$, where $\normone{h}\coloneqq \sum_{I \in C_t(G)} \absv{\bar{h}(I)}$. 
Similar to~\cite{KlivansM17}, given $\ns$ i.i.d.\ samples $ \{z^1, \cdots, z^{\ns}\}$ generated from an unknown distribution $\cD$, we consider the following two related learning objectives, under the constraint of $\rho$-zCDP:
\begin{enumerate}
\item[1.] find a multilinear polynomial $u$ such that with probability greater than $\frac{2}{3}$, $\normone{h-u} \coloneqq \sum_{I \in C_t(G)} \absv{\bar{h}(I) -\bar{u}(I)} \le \dist$ ;
\item[2.] find a multilinear polynomial $u$ such that with probability greater than $\frac{2}{3}$, for every maximal monomial $I$ of $h$, $\absv{\bar{h}(I) -\bar{u}(I)} \le \dist$.
\end{enumerate}

%\medspace

We note that our first objective can be viewed as parameter estimation in $\ell_1$ distance, where only an average performance guarantee is provided. 
In the second objective, the algorithm recovers every maximal monomial, which can be viewed as parameter estimation in $\ell_\infty$ distance. 
These two objectives are addressed in Sections~\ref{sec:MRFl_1} and~\ref{sec:MRFl_infty}, respectively.

\subsection{Parameter Estimation in $\ell_1$ Distance}
\label{sec:MRFl_1}
The following property of MRFs, from~\cite{KlivansM17}, plays a critical role in our algorithm. 
The proof is similar to that of Lemma~\ref{lem:IsingToLR}.
\begin{lemma}[Lemma 7.6 of~\cite{KlivansM17}]
\label{lem:sigmoidMRF}
Let $\cD$ be a $t$-wise MRF on $\{1, -1 \}^{\dims}$ with underlying dependency graph $G$ and factorization polynomial $h(x) = \sum_{I \in C_t(G)} h_I(x)$, then
$$\probof{Z_i=1|Z_{-i}=x} = \sigma(2\partial_i h(x)), \forall i \in [\dims], \forall x \in \{1 ,-1\}^{[\dims] \backslash i}.$$
\end{lemma}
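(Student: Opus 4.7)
The plan is to mimic the proof of Lemma~\ref{lem:IsingToLR} but exploit multilinearity of $h$ to isolate the $z_i$-dependence cleanly. First I would write the conditional probability using the MRF definition $\probof{Z = z} \propto \exp(h(z))$:
\[
\probof{Z_i = 1 \mid Z_{-i} = x} = \frac{\exp(h(x, z_i = 1))}{\exp(h(x, z_i = 1)) + \exp(h(x, z_i = -1))}.
\]
Since the normalizing constant of $\cD$ cancels in this ratio, the only thing I need to track is how $h$ depends on the $i$-th coordinate.

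Next I would use multilinearity: writing $h(z) = \sum_{I \in C_t(G)} \bar h(I) \prod_{j \in I} z_j$ and splitting the sum according to whether $i \in I$, I get
\[
h(z) \;=\; z_i \cdot \Bigl(\sum_{J : i \notin J} \bar h(J \cup \{i\}) \prod_{j \in J} z_j \Bigr) \;+\; \sum_{I : i \notin I} \bar h(I) \prod_{j \in I} z_j \;=\; z_i \cdot \partial_i h(z_{-i}) + g(z_{-i}),
\]
where $g$ is a function only of $z_{-i}$ and I used the definition of $\partial_i h$ given earlier together with the fact that $\partial_i h$ does not depend on $z_i$. This is the key algebraic identity and the analog of the step used for the Ising model; here the Ising-specific coefficients $A_{i,j}$ are replaced by the partial derivative $\partial_i h$, which collects contributions from \emph{all} cliques containing $i$, including higher-order ones.

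Substituting the decomposition into the conditional probability, the $g(x)$ term appears in both the numerator and denominator and cancels, leaving
\[
\probof{Z_i = 1 \mid Z_{-i} = x} \;=\; \frac{\exp(\partial_i h(x))}{\exp(\partial_i h(x)) + \exp(-\partial_i h(x))} \;=\; \frac{1}{1 + \exp(-2\,\partial_i h(x))} \;=\; \sigma\bigl(2\,\partial_i h(x)\bigr),
\]
which is the claimed identity. There is no real obstacle here; the only subtlety is carefully justifying that $\partial_i h$ has no $z_i$-dependence (which follows immediately from the definition, since each term in $\partial_i h$ corresponds to a monomial $J$ with $i \notin J$). Everything else is just arithmetic with exponentials.
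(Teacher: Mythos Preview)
Your proposal is correct and follows exactly the approach the paper indicates: the paper does not write out a proof for this lemma but simply states that ``the proof is similar to that of Lemma~\ref{lem:IsingToLR},'' and your argument is precisely the natural generalization of that Ising proof, replacing the linear term $\sum_{j\neq i} A_{i,j} x_j + \theta_i$ by the multilinear partial derivative $\partial_i h(x)$ via the decomposition $h(z) = z_i\,\partial_i h(z_{-i}) + g(z_{-i})$.
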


 Lemma~\ref{lem:sigmoidMRF} shows that, similar to pairwise graphical models, it also suffices to learn the parameters of binary $t$-wise MRF using sparse logistic regression.

\begin{algorithm}

\caption{Private Learning binary $t$-wise MRF in $\ell_1$ distance}
\label{alg:DPMRF}
\textbf{Input:}  $\ns$ i.i.d.\ samples $\{ z^1,\cdots, z^{\ns}\}$, where $z^m \in \{\pm1\}^{\dims}$ for $m\in [\ns]$; an upper bound $\lambda$ on $\max_{i \in [\dims]} \normone{\partial_i h}$, privacy parameter $\rho$

\begin{algorithmic}[1] 
\For{$i=1$ to $\dims$}
\State $\forall m \in [\ns]$, $x_m \leftarrow \Brack{\prod_{j \in I} z_j^m: I \subset [\dims \backslash i], \absv{I} \le t-1}$, $y_m \leftarrow z_i^m$

\State $w^{priv} \leftarrow \cA( D, \cL, \rho^{\prime}, \cC)$
where $D=\{\Paren{x_m, y_m} \}_{m=1}^{\ns}$, $\ell(w;d) =  \log\Paren{1+e^{-y \langle w, x \rangle}}$, $\cC = \{ \normone{w}\le 2\lambda\}$, and $\rho^{\prime} = \frac{\rho}{\dims}$

\For{$I \subset [\dims \backslash i]$ with $\absv{I} \le t-1$ }
\State $\bar{u}(I \cup \{i\}) = \frac1{2} w^{priv}(I)$,  when $\arg\min (I \cup i) = i$
\EndFor
\EndFor
\end{algorithmic}
\textbf{Output:} $ \bar{u}(I): I \in C_t(K_{\dims})$,  where $K_{\dims}$ is the $\dims$-dimensional complete graph
\end{algorithm}

%\begin{algorithm}
%
%\caption{$\cA_{PFW}( D, \cL, \rho, \cC):$ Private Frank-Wolfe Algorithm}
%\label{alg:privateFW}
%
%\hspace*{\algorithmicindent} \textbf{Input:}  Data set: $D = \{d^1,\cdots,d^{\ns}\}$, loss function: $\cL(w;D) = \frac1{\ns} \sum_{j=1}^{\ns} \ell(w;d^j)$ (with Lipschitz constant $L_1$), privacy parameters: $\rho$, convex set: $\cC = conv(S)$ with $\normone{\cC} \coloneqq  \max_{s \in S} \normone{s}$, iteration times: $T$
%
%\begin{algorithmic}[1] 
%\State Initialize $w$ from an arbitrary point in $\cC$
%
%\For{$t=1$ to $T-1$}
%\State $\forall s\in S$, $\alpha_s \leftarrow \langle s, \nabla \cL(w;D)\rangle+ \text{Lap} \Paren{ 0, \frac{L_1 \normone{C} \sqrt{T}}{\ns \sqrt{\rho}}}$
%
%\State $\tilde{w_t} \leftarrow \arg\min_{s \in S} \alpha_s$
%
%\State $w_{t+1} \leftarrow (1-\mu_t)w_{t} + \mu_t \tilde{w_t}$, where $\mu_t = \frac{2}{t+2}$
%\EndFor
%
%\end{algorithmic}
%\hspace*{\algorithmicindent} \textbf{Output:}  $w^{priv} = w_T$
%\end{algorithm}

\begin{theorem}
% Let $\cD$ be a $t$-wise MRF on $\{-1,1\}^{\dims}$ with associated polynomial $h$, where the width of $\cD$ is upper bounded by $\lambda$. Then
There exists a $\rho$-zCDP algorithm which, with probability at least $2/3$, finds a multilinear polynomial $u$ such that 
$\normone{h-u} \le \dist,$
given $\ns$ i.i.d.\ samples $Z^1,\cdots, Z^{\ns} \sim \cD$, where
$$\ns = O\Paren{ \frac{ (2t)^{O(t)} e^{O(\lambda t)} \cdot \dims^{4t} \cdot \log (\dims) }  {\dist^4} + \frac{ (2t)^{O(t)} e^{O(\lambda t)}\cdot \dims^{3t+\frac12} \cdot \log^2 (\dims)} { \sqrt{\rho}\dist^3}}. $$
\end{theorem}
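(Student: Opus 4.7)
My overall strategy is to mirror the proof of Theorem~\ref{thm:est-ub-ising}, replacing the single-variable log-odds identity (Lemma~\ref{lem:IsingToLR}) by its $t$-wise analog (Lemma~\ref{lem:sigmoidMRF}). The key observation is that $\partial_i h(x)$ is a multilinear polynomial in $x_{-i}$ of degree at most $t-1$, so the conditional law of $Z_i$ given $Z_{-i}$ is a logistic model with feature vector $x = [\,\prod_{j\in I} z_j : I\subseteq [\dims]\setminus\{i\},\ |I|\le t-1\,]\in\{-1,1\}^{D}$ where $D=O(\dims^{t-1})$, and with weight vector $w^\star$ (the coefficient vector of $2\partial_i h$) satisfying $\normone{w^\star}\le 2\lambda$ by the width assumption. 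Running Algorithm~\ref{alg:DPMRF} therefore amounts to $\dims$ independent invocations of private $\ell_1$-constrained logistic regression (Corollary~\ref{cor:EmpiricalError}) on a $D$-dimensional feature vector with sparsity budget $2\lambda$ and per-call privacy budget $\rho/\dims$. Privacy is then immediate: each call is $(\rho/\dims)$-zCDP by Lemma~\ref{lem:PrivateFW}, and composition (Lemma~\ref{lem:dpcomp}) gives $\rho$-zCDP overall.

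\textbf{From excess risk to parameter recovery.} For each coordinate $i$, Theorem~\ref{thm:generalization_error} applied with dimension $D=O(\dims^{t-1})$, sparsity $2\lambda$, privacy budget $\rho/\dims$, and failure probability $1/(10\dims)$ yields, with high probability, an excess population logistic risk of order
\begin{equation*}
\gamma \;=\; O\!\left(\frac{\lambda^{4/3}\log(\ns\dims^{t}/\beta)}{(\ns\sqrt{\rho/\dims})^{2/3}} \;+\; \frac{\lambda\log(1/\beta)}{\sqrt{\ns}}\right).
\end{equation*}
The next step is to convert this excess risk into a bound on $\normone{w^{priv}-w^\star}$. The marginal on $X$ is $\delta$-unbiased on $\{-1,1\}^{\dims-1}$ with $\delta=e^{-2\lambda}/2$ (Lemma~\ref{lem:marginal}), and this unbiasedness passes to the low-degree monomial features. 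Mimicking the proof of Lemma~\ref{lem:parameter-error-ising} (which uses that the Hessian of the logistic loss is locally lower bounded by a constant multiple of the feature covariance, and that on a $\delta$-unbiased distribution the parity features form a well-conditioned system) gives an $\ell_2$ bound $\norms{w^{priv}-w^\star}\le e^{O(\lambda)}\gamma/\delta$. Passing to $\ell_1$ via Cauchy--Schwarz over the $O(\dims^{t-1})$ coordinates yields the per-node bound $\normone{w^{priv}-w^\star}\le O(\dims^{(t-1)/2}\cdot e^{O(\lambda)}\sqrt{\gamma/\delta})$.

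\textbf{Assembling the global $\ell_1$ bound.} The algorithm assigns each monomial $J$ with $|J|\le t$ to the unique coordinate $i=\arg\min J$, so the coefficients $\bar u(J)$ output by different iterations are disjoint and each $\bar h(J)-\bar u(J)$ is read off a single regression. Summing the per-node $\ell_1$ errors over $i\in[\dims]$ then yields
\begin{equation*}
\normone{h-u}\;\le\;\sum_{i=1}^{\dims}\normone{w^{priv,(i)}-w^{\star,(i)}}\;\le\;O\!\left(\dims^{(t+1)/2}e^{O(\lambda)}\sqrt{\gamma/\delta}\right).
\end{equation*}
Setting this at most $\dist$ and solving for $\gamma$ forces $\gamma = O(\delta\dist^2/\dims^{t+1})$; plugging back into the two terms of the excess-risk bound and simplifying produces the two stated summands, a non-private term scaling like $\dims^{O(t)}/\dist^4$ and a privacy term scaling like $\dims^{O(t)+1/2}/(\sqrt{\rho}\dist^3)$, with the claimed $e^{O(\lambda t)}$ and $(2t)^{O(t)}$ prefactors absorbing $\delta^{-1}=e^{O(\lambda)}$ and the enumeration of monomials. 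A union bound over the $\dims$ regressions, together with the probability-$1/10$ failure budget in each, gives overall success probability at least $2/3$.

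\textbf{Main obstacle.} The heart of the argument is the MRF analog of Lemma~\ref{lem:parameter-error-ising}: proving that small excess logistic risk on monomial features, under the $\delta$-unbiased marginal inherited from $\cD$, implies a quantitative $\ell_2$ (and hence $\ell_1$) bound on the coefficient recovery error with the correct dependence on $\lambda$ and $\delta$. The technical subtlety is that the feature dimension is now $D=O(\dims^{t-1})$ rather than $\dims$, so one must verify that the anti-concentration / well-conditioning argument used for pairwise models extends to the higher-order parity features without incurring extra polynomial factors in $\dims$ beyond those captured by passing from $\ell_2$ to $\ell_1$. Once that lemma is in hand, all remaining steps are routine applications of the results already stated in the chapter.
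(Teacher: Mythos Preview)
Your high-level structure is right (per-node private logistic regression on the monomial feature vector, composition for privacy, union bound over nodes), and the paper does exactly this. The gap is in the excess-risk-to-parameter step.

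You claim that mimicking Lemma~\ref{lem:parameter-error-ising} yields an $\ell_2$ bound $\norms{w^{priv}-w^\star}\le e^{O(\lambda)}\gamma/\delta$, and then you pass to $\ell_1$ by Cauchy--Schwarz. This is where the argument breaks. The Ising lemma relies on anti-concentration of \emph{linear} functions under a $\delta$-unbiased measure; for degree-$(t-1)$ multilinear polynomials the correct anti-concentration loses a factor of $\delta^{t-1}$ (not $\delta$) together with a $(2t)^{t}$ combinatorial factor. This is precisely the content of Lemma~\ref{lem:parameter-error-mrf} (Lemma~6.4 of~\cite{KlivansM17}), which the paper invokes directly and which gives the $\ell_1$ bound
\[
\normone{u_1-u_2}\;=\;O\!\left((2t)^{t}\,e^{\normone{u_1}}\,\sqrt{\gamma/\delta^{t}}\,\binom{\dims}{t}\right).
\]
Your route via ``unbiasedness passes to the monomial features'' and then $\sqrt{\gamma/\delta}$ is not correct: the parity features are not themselves $\delta$-unbiased coordinates, and the Hessian lower bound you need is governed by the polynomial anti-concentration, which is where the $\delta^{t}$ enters. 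Consequently your accounting ``$\delta^{-1}=e^{O(\lambda)}$ is absorbed into $e^{O(\lambda t)}$'' is backwards: the $e^{O(\lambda t)}$ in the theorem arises \emph{because} the dependence is $\delta^{-t}=(e^{-2\lambda}/2)^{-t}$, and the $(2t)^{O(t)}$ comes from the anti-concentration lemma itself, not from counting monomials.

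Once you replace your ad hoc step by Lemma~\ref{lem:parameter-error-mrf}, the rest of your argument goes through: set the per-node $\ell_1$ error to $\alpha/\dims$, solve for $\gamma$ (this is where the paper substitutes $\gamma=e^{-O(\lambda t)}(2t)^{-O(t)}\dims^{-3t}\alpha^2$), plug into the two terms of Theorem~\ref{thm:generalization_error} with privacy budget $\rho/\dims$, and union-bound over the $\dims$ nodes.
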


%Moreover, it produces a $t$-wise MRF $\cD^{\prime}$ such that
%$$\forall z, \proboff{Z=z}{Z\sim \cD} = (1\pm \dist) \proboff{Z=z}{Z\sim \cD^{\prime}}.$$  Algorithm~\ref{alg:DPMRF}

\begin{proof}
Similar to the previous proof, we start by fixing $i=1$. Given a random sample $Z \sim \cD$,   let $X = \Brack{\prod_{j \in I} Z^j: I \subset [\dims] \backslash 1, \absv{I} \le t-1}$ and $Y = Z_i $. According to Lemma~\ref{lem:sigmoidMRF}, we know that $\expectation{Y|X} = \sigma\Paren{\langle w^*,X \rangle}$, where $w^* = \Paren{ 2\cdot \overline{ \partial_{1}h} (I): I \subset [\dims] \backslash 1, \absv{I}\le t-1}$. Furthermore, $\normone{w^*} \le 2\lambda$ by the width constraint. Now, given $\ns$ i.i.d.\ samples $\{ z^{m}\}_{m=1}^{\ns}$ drawn from $\cD$, it is easy to check that for any given $z^m$, its corresponding $(x^m,y^m)$ is one realization of $(X,Y)$. Let $w^{priv}$ be the output of  $\cA\Paren{ D, \cL, \frac{\rho}{\dims},  \{w: \normone{w}\le 2\lambda \}}$, where $D = \{(x^m,y^m)\}_{m=1}^{\ns}$ and $\ell(w;(x,y)) = \log\Paren{1+e^{-y \langle w, x \rangle}}$. By Lemma~\ref{thm:generalization_error}, $ \expectationf{\ell(w^{priv};(X,Y))}{Z\sim \cD(A,\theta)} - \expectationf{\ell(w^*;(X,Y))}{Z\sim \cD(A,\theta)} \le \gamma$ with probability greater than $1-\frac{1}{10\dims}$, assuming $\ns =\Omega\Paren{ \frac{ \sqrt{\dims} \lambda^2 \log^2(\dims) }{\sqrt{\rho} \gamma^{\frac{3}{2}}}+ \frac{ \lambda^2 \log(\dims)}{\gamma^2} }$.

Now we need the following lemma from~\cite{KlivansM17}, which is analogous to Lemma~\ref{lem:IsingToLR} for the Ising model.

\begin{lemma}[Lemma 6.4 of~\cite{KlivansM17}]
\label{lem:parameter-error-mrf}
Let $P$ be a distribution on $\{-1,1\}^{\dims-1} \times \{-1,1\}$. Given multilinear polynomial $u_1\in \RR^{\dims-1}$, $\probof{Y=1|X=x} = \sigma\Paren{u_1(X)}$ for $(X,Y) \sim P$. 
%Let $d = (x,y)$ and $\cL\Paren{(w,\theta);d} = \log\Paren{1+e^{-y \Paren{ \langle w, x \rangle+\theta}}}$.
Suppose the marginal distribution of $P$ on $X$ is $\delta$-unbiased, 
and $\expectationf{\log\Paren{1+e^{-Y \Paren{ u_1(X) }}} } {(X,Y)\sim P} - \expectationf{ \log\Paren{1+e^{-Y \Paren{ u_2(X)}}} }{(X,Y)\sim P} \le \gamma$ for another multilinear polynomial $u_2$, where $\gamma \le \delta^t e^{-2\normone{u_1}-6}$, then $\normone{u_1-u_2} = O\Paren{ (2t)^{t} e^{\normone{u_1}} \cdot \sqrt{\gamma/\delta^t}\cdot {\dims \choose t} }.$
\end{lemma}

By substituting $\gamma = e^{ -O( \lambda t)} \cdot (2t)^{-O(t)} \cdot \dims^{-3t} \cdot \alpha^2 $, we have that with probability greater than $1-\frac1{10\dims}$, $\sum_{I:\arg\min I =1} \absv{\bar{u}(I) - h(I)} \le \frac{\alpha}{\dims}$. We note that the coefficients of different monomials are recovered in each iteration. Therefore, by a union bound over $\dims$ iterations, we prove the desired result.
\end{proof}

\subsection{Parameter Estimation in $\ell_{\infty}$ Distance}
\label{sec:MRFl_infty}
In this section, we introduce a slightly modified version of the algorithm in the last section.

\begin{algorithm}

\caption{Private Learning binary $t$-wise MRF in $\ell_\infty$ distance}
\label{alg:DPMRF-infty}

\textbf{Input:}  $\ns = \ns_1+\ns_2$ i.i.d.\ samples $\{ z^1,\cdots, z^{\ns}\}$, where $z^m \in \{\pm1\}^{\dims}$ for $m\in [\ns]$; an upper bound $\lambda$ on $\max_{i \in [\dims]} \normone{\partial_i h}$, privacy parameter $\rho$

\begin{algorithmic}[1] 
\For{each $I \subset [\dims]$ with $\absv{I} \le t$}
\State Let $Q(I) \coloneqq \frac1{n_2} \sum_{m=n_1}^{n_1+n_2} \prod_{j \in I} z_j^m$

\State Compute $\hat{Q}(I)$, an estimate of  $Q(I)$ through an $\rho/2$-zCDP query release algorithm (PMW~\cite{HardtR10} or sepFEM~\cite{VietriTBSW20})
\EndFor

\For{$i=1$ to $\dims$}
\State $\forall m \in [\ns_1]$, $x_m \leftarrow \Brack{\prod_{j \in I} z^m_j: I \subset [\dims \backslash i], \absv{I} \le t-1}$, $y_m \leftarrow z_i^m$

\State $w^{priv} \leftarrow \cA( D, \cL, \rho^{\prime}, \cC)$, where $D=\{\Paren{x_m, y_m} \}_{m=1}^{\ns}$, $\ell(w;d) =  \log\Paren{1+e^{-y \langle w, x \rangle}}$, $\cC = \{ \normone{w}\le 2\lambda\}$, and $\rho^{\prime} = \frac{\rho}{2\dims}$

\State Define a polynomial $v_i: \RR^{\dims-1} \rightarrow \RR$ by setting $\bar{v_i}(I) = \frac1{2} w^{priv}(I)$ for all $I \subset [\dims \backslash i]$

\For{each $I \subset [\dims \backslash i]$ with $\absv{I} \le t-1$}
\State $\bar{u}(I \cup \{i\}) = \sum_{I^{\prime} \subset [\dims]}  \overline{\partial_I v_i}(I^{\prime}) \cdot \hat{Q}(I^{\prime})$,when $\arg\min (I \cup i) = i$
\EndFor
\EndFor
\end{algorithmic}

\textbf{Output:} $ \bar{u}(I): I \in C_t(K_{\dims})$,  where $K_{\dims}$ is the $\dims$-dimensional complete graph
\end{algorithm}

We first show that if the estimates $\hat Q$ for the parity queries
$Q$ are sufficiently accurate, Algorithm~\ref{alg:DPMRF-infty} solves
the $\ell_\infty$ estimation problem, as long as the sample size $n_1$
is large enough.

\begin{lemma}\label{yo}
  Suppose that the estimates $\hat Q$ satisfies
  $|\hat Q(I) - Q(I)| \leq \alpha/(8\lambda)$ for all $I\subset [p]$
  such that $|I|\leq t$ and $n_2 = \Omega(\lambda^2 t\log(p)/\alpha^2)$. Then
  with probability at least $3/4$, Algorithm~\ref{alg:DPMRF-infty}
  outputs a multilinear polynomial $u$ such that for every maximal
  monomial $I$ of $h$, $\absv{\bar{h}(I) - \bar{u}(I)} \le \dist,$
  given $\ns$ i.i.d.\ samples $Z^1,\cdots, Z^{\ns} \sim \cD$, as long
  as
$$n_1 =\Omega\Paren{ \frac{ e^{5\lambda t} \cdot \sqrt{\dims} \log^2(\dims)
  }{\sqrt{\rho} \dist^{\frac{9}{2}}}+ \frac{ e^{6\lambda t} \cdot
    \log(\dims)}{\dist^6} }. $$
% \frac{  t \lambda^2 \cdot \sqrt{\dims} \log{\dims}}{\rho\dist^2}
\end{lemma}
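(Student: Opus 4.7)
The plan is to fix an arbitrary maximal monomial $M = I \cup \{i\}$ of $h$, bound $|\bar u(M) - \bar h(M)|$ with high probability by decomposing the error into three pieces, and then take a union bound over $i \in [p]$. The key observation is that the algorithm's output is $\bar u(I \cup \{i\}) = \sum_{I'} \overline{\partial_I v_i}(I')\, \hat Q(I')$, which is the plug-in estimator of $E_{Z \sim \cD}[\partial_I v_i(Z)]$ obtained by substituting the privately released parities $\hat Q(I')$ for the true expectations $E_{\cD}[\prod_{j \in I'} Z_j]$. Maximality of $M$ ensures $\bar h(J) = 0$ for every $J \supsetneq M$, so $\partial_M h$ is the constant polynomial $\bar h(M)$ and $E_{Z \sim \cD}[\partial_M h(Z)] = \bar h(M)$ exactly. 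Inserting the empirical mean $Q$ and the true expectation yields the three-term decomposition
\[
\bar u(M) - \bar h(M) = \sum_{I'} \overline{\partial_I v_i}(I') \bigl(\hat Q(I') - Q(I')\bigr) + \sum_{I'} \overline{\partial_I v_i}(I') \Bigl(Q(I') - E_{\cD}\bigl[\textstyle\prod_{j\in I'} Z_j\bigr]\Bigr) + E_{Z\sim\cD}\bigl[\partial_I(v_i - \partial_i h)(Z)\bigr].
\]

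I will bound the first (private release) term using $\|\partial_I v_i\|_1 \le \|v_i\|_1 \le 2\lambda$, which holds because $\cA_{PFW}$ is run over $\cC = \{w : \|w\|_1 \le 2\lambda\}$, together with the hypothesis $|\hat Q - Q| \le \alpha/(8\lambda)$, yielding at most $\alpha/4$. The second (concentration) term admits the same $\ell_1$-times-$\ell_\infty$ bound, and I will apply Hoeffding's inequality with a union bound over the at most $p^t$ parity queries to show that $n_2 = \Omega(\lambda^2 t \log(p)/\alpha^2)$ drives it below $\alpha/4$. For the third (regression) term I will invoke the private Frank-Wolfe sparse logistic regression guarantee (Theorem~\ref{thm:generalization_error}) with per-vertex privacy budget $\rho/(2p)$, combined with Lemma~\ref{lem:parameter-error-mrf} from~\cite{KlivansM17} translating logistic-loss suboptimality into coefficient-level closeness, and with the $\delta$-unbiasedness of $\cD$ (with $\delta = e^{-2\lambda}/2$), to guarantee that $|E[\partial_I(v_i-\partial_i h)(Z)]| \le \alpha/2$ holds per vertex with probability at least $1 - 1/(10p)$. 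Inverting the rate produces the two summands of the claimed bound on $n_1$: the $\sqrt{p}\, e^{5\lambda t}/(\sqrt{\rho}\, \alpha^{9/2})$ term from the private Frank-Wolfe rate and the $e^{6\lambda t} \log(p)/\alpha^6$ term from the statistical rate.

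Combining the three bounds yields $|\bar u(M) - \bar h(M)| \le \alpha$ per maximal $M$, and a final union bound over $i \in [p]$ together with the concentration step upgrades the overall success probability to at least $3/4$. The main obstacle will be the regression-error step: a naive route that first uses Lemma~\ref{lem:parameter-error-mrf} to bound $\|v_i - \partial_i h\|_1$ and then writes $|E[\partial_I g(Z)]| \le \|g\|_1$ forces a sample size scaling with $\binom{p}{t}$, far worse than the $\sqrt{p}$ dependence the lemma asserts. The correct analysis must exploit that $E[\partial_I g(Z)]$ is only a single linear functional of the coefficients of $g$, so a Parseval-type identity under the $\delta$-unbiased distribution $\cD$ together with the $L_2(\cD)$ accuracy naturally delivered by logistic regression should suffice to control this one functional without paying for the full $\ell_1$ norm of $v_i - \partial_i h$.
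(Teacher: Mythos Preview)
Your three-term decomposition is exactly the one the paper uses, and your treatment of the private-release term (A) and the concentration term (B) matches the paper's: both are bounded by $\|\partial_I v_i\|_1 \cdot \|\hat Q - Q\|_\infty$ (resp.\ $\|\partial_I v_i\|_1 \cdot \|Q - \E\|_\infty$), giving $\alpha/8$ and $\alpha/4$ respectively once $n_2 = \Omega(\lambda^2 t\log p/\alpha^2)$ via Hoeffding and a union bound over the $O(p^t)$ parities.

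The gap is in your handling of term (C). You correctly see that bounding $|\E_Z[\partial_I(v_i-\partial_i h)(Z)]|$ via $\|v_i-\partial_i h\|_1$ and Lemma~\ref{lem:parameter-error-mrf} is too lossy (it picks up the $\binom{p}{t}$ factor), and you propose instead a ``Parseval-type identity under $\cD$'' combined with $L_2(\cD)$ accuracy from logistic regression. This does not go through as stated: under a $\delta$-unbiased distribution the monomials $\prod_{j\in I'}z_j$ are \emph{not} orthogonal, so there is no Parseval identity expressing $\E_Z[\partial_I g(Z)]$ as a single Fourier coefficient you could read off from an $L_2$ bound. The paper does not use an $L_2$ route at all. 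Instead it invokes a different consequence of the Klivans--Meka analysis (also labeled ``Lemma~6.4'' in \cite{KlivansM17}, but distinct from the $\ell_1$ statement you cite): whenever the excess logistic risk is at most $\gamma$, then for a random $X\sim\cD$ and any maximal monomial $I$ of $\partial_i h$,
\[
\Pr_X\!\left[\,\bigl|\partial_I v_i(X)-\overline{\partial_i h}(I)\bigr|\ge \tfrac{\alpha}{4}\,\right] \;=\; O\!\left(\frac{\gamma\, e^{3\lambda t}}{\alpha^2}\right).
\]
Choosing $\gamma = e^{-3\lambda t}\alpha^3/(8\lambda)$ makes this probability at most $\alpha/(8\lambda)$, and since $|\partial_I v_i(X)-\overline{\partial_i h}(I)|\le 2\lambda$ almost surely, one gets
\[
\bigl|\E_X[\partial_I v_i(X)]-\bar h(M)\bigr|\;\le\;\E_X\bigl|\partial_I v_i(X)-\bar h(M)\bigr|\;\le\;\tfrac{\alpha}{4}+2\lambda\cdot\tfrac{\alpha}{8\lambda}=\tfrac{\alpha}{2}.
\]
It is this choice of $\gamma\asymp e^{-3\lambda t}\alpha^3$, plugged into the Frank--Wolfe rate $n_1=\Omega(\sqrt p\,\lambda^2\log^2 p/(\sqrt\rho\,\gamma^{3/2})+\lambda^2\log p/\gamma^2)$, that produces the two summands $e^{5\lambda t}\sqrt p/(\sqrt\rho\,\alpha^{9/2})$ and $e^{6\lambda t}\log p/\alpha^6$ in the statement. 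Without the tail bound (or an equivalent device), your proposal does not arrive at these exponents.
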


\begin{proof}
  We will condition on the event that $\hat{Q}$ is a ``good'' estimate
  of $Q$: $|\hat Q(I) - Q(I)| \leq \alpha/(8\lambda)$ for all
  $I\subset [p]$ such that $|I|\leq
  t$. % which happens with probability greater than $1-\frac1{20}$.
  Let us fix $i=1$. Let
  $X = \Brack{\prod_{j \in I} Z^j: I \subset [\dims] \backslash \{1
    \}, \absv{I} \le t-1}$, $Y = Z_i $, and we know that
  $\expectation{Y|X} = \sigma\Paren{\langle w^*,X \rangle}$, where
  $w^* = \Paren{ 2\cdot \overline{ \partial_{1}h} (I): I \subset
    [\dims] \backslash 1, \absv{I}\le t-1}$. Now given $\ns_1$
  i.i.d.\ samples $\{ z^{m}\}_{m=1}^{\ns_1}$ drawn from $\cD$, let
  $w^{priv}$ be the output of
  $\cA\Paren{ D, \cL, \frac{\rho}{\dims}, \{w: \normone{w}\le 2\lambda
    \}}$, where $D = \{(x^m,y^m)\}_{m=1}^{\ns_1}$ and
  $\ell(w;(x,y)) = \log\Paren{1+e^{-y \langle w, x
      \rangle}}$. Similarly, with probability at least
  $1-\frac{1}{10\dims}$,
  $$ \expectationf{\ell(w^{priv};(X,Y))}{Z\sim \cD(A,\theta)} -
  \expectationf{\ell(w^*;(X,Y))}{Z\sim \cD(A,\theta)} \le \gamma$$ as long as
  $\ns_1 =\Omega\Paren{ \frac{ \sqrt{\dims} \lambda^2 \log^2(\dims)
    }{\sqrt{\rho} \gamma^{\frac{3}{2}}}+ \frac{ \lambda^2
      \log(\dims)}{\gamma^2} }$.

Now we utilize Lemma 6.4 from~\cite{KlivansM17}, which states that if  $ \expectationf{\ell(w^{priv};(X,Y))}{Z\sim \cD(A,\theta)} - \expectationf{\ell(w^*;(X,Y))}{Z\sim \cD(A,\theta)} \le \gamma$, given a random sample $X$, for any maximal monomial $I \subset [\dims]\backslash \{1\}$ of $\partial_1 h$,
$$\probof{  \absv{\overline{\partial_1 h}(I) -  \partial_I v_1(X)} \ge \frac{\dist}{4} } < O\Paren{ \frac{\gamma \cdot e^{3\lambda t}}{\dist^2}}. $$

By replacing
$\gamma = \frac{ e^{-3\lambda t}\cdot \dist^3}{8\lambda}$, we have
$\probof{ \absv{\overline{\partial_1 h}(I) - \partial_I v_1(X)} \ge
  \frac{\dist}{4} } <\frac{\dist}{8\lambda}$, as long as
$n_1 = \Omega\Paren{ \frac{ \sqrt{\dims} e^{5\lambda t} \log^2(\dims)
  }{\sqrt{\rho} \dist^{\frac{9}{2}}}+ \frac{ e^{6\lambda t}
    \log(\dims)}{\dist^6} }$.  Accordingly, for any maximal monomial
$I$,
$ \absv{\expectation{\partial_I v_1(X)} - \overline{\partial_1 h} (I)}
\le \expectation { \absv{\partial_I v_1(X) - \overline{\partial_1
      h}(I) }} \le \frac{\dist}{4} + 2\lambda \cdot
\frac{\dist}{8\lambda} = \frac{\dist}{2}$.  By Hoeffding inequality,
given $n_2 = \Omega\Paren{\frac{\lambda^2t\log {\dims}}{\dist^2}}$, for each maximal
monomial $I$, with probability greater than $1-\frac{1}{\dims^t}$,
$\absv{ \frac1{n_2} \sum_{m=1}^{n_2}\partial_I v_1(X_m) -
  \expectation{\partial_I v_1(X)} }\le \frac{\dist}{4}$. Note that
$\absv{ Q(I) - \hat{Q}(I)} \le \frac{\dist}{8\lambda}$, then
$\absv{\frac1{n_2} \sum_{m=1}^{n_2}\partial_I v_1(X_m)
  -\sum_{I^{\prime} \subset [\dims]} \overline{\partial_I
    v_1}(I^{\prime}) \cdot \hat{Q}(I^{\prime}) } \le \frac{\dist}{8}
$. Therefore,
\begin{align}
&\absv{\sum_{I^{\prime} \subset [\dims]}  \overline{\partial_I v_1}(I^{\prime}) \cdot \hat{Q}(I^{\prime})   -   \overline{\partial_1 h}(I) } \nonumber\\
\le&\absv{ \sum_{I^{\prime} \subset [\dims]}  \overline{\partial_I v_1}(I^{\prime}) \cdot \hat{Q}(I^{\prime})   -\frac1{n_2} \sum_{m=1}^{n_2}\partial_I v_1(X_m) } + \absv{\frac1{n_2} \sum_{m=1}^{n_2}\partial_I v_1(X_m) -   \expectation{\partial_I v_1(X)}  } \nonumber\\
+& \absv{  \expectation{\partial_I v_1(X)} -  \overline{\partial_1 h}(I)   } \nonumber\\
\le&\frac{\dist}{8}+\frac{\dist}{4}+\frac{\dist}{2} = \frac{7\dist}{8}.\nonumber
\end{align}
%%%%%%%%%%%%%%%%%%%%%
Finally, by a union bound over $\dims$ iterations and all the maximal monomials, we prove the desired results.\end{proof}

We now consider two private algorithms for releasing the parity
queries. The first algorithm is called Private Multiplicative Weights
(PMW)~\cite{HardtR10}, which provides a better accuracy guarantee but
runs in time exponential in the dimension $\dims$. The following theorem can be viewed as a zCDP version of Theorem 4.3 in~\cite{Vadhan17}, 
by noting that during the analysis, every iteration satisfies $\eps_0$-DP, which naturally satisfies $\eps_0^2$-zCDP, and by replacing the strong composition theorem of $(\eps,\delta)$-DP by the composition theorem of zCDP (Lemma~\ref{lem:dpcomp}).

\begin{lemma}[Sample complexity of PMW, modification of Theorem 4.3 of~\cite{Vadhan17}]\label{hey}
  The PMW algorithm satisfies $\rho$-zCDP and releases $\hat{Q}$ such
  that with probability greater than $\frac{19}{20}$, for all
  $I \subset [\dims]$ with $\absv{I} \le t$,
  $\absv{\hat{Q}(I) -Q(I)} \le \frac{\dist}{8\lambda}$ as long as the
  size of the data set
  $$n_2 = \Omega\Paren{ \frac{ t \lambda^2 \cdot \sqrt{\dims}
      \log{\dims}}{\sqrt{\rho}\dist^2} }.$$
\end{lemma}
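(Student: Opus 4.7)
The plan is to adapt the proof of the standard $(\eps,\delta)$-DP guarantee for the Private Multiplicative Weights mechanism (e.g., Theorem 4.3 in \cite{Vadhan17}) to the $\rho$-zCDP setting, using the conversion and composition properties of zCDP stated in Lemma~\ref{lem:dpdefns} and Lemma~\ref{lem:dpcomp}. Recall that PMW maintains a distribution over the data domain $\{\pm1\}^{\dims}$ and, for each query in its online stream, either answers using the current distribution (a ``lazy'' round, which costs no privacy) or invokes the Laplace mechanism at some base scale $\eps_0$ to answer and then multiplicatively update the maintained distribution (a ``hard'' round). The number of queries to release is $|\cQ| = \sum_{i=0}^{t}\binom{\dims}{i} = O(\dims^t)$, and the target per-query accuracy is $\dist' := \dist/(8\lambda)$.

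First, I would handle privacy. Each hard round uses the Laplace mechanism at privacy level $\eps_0$, which is $(\eps_0,0)$-DP and hence $\tfrac{\eps_0^2}{2}$-zCDP by Lemma~\ref{lem:dpdefns}. If the total number of hard rounds across the entire run is bounded by $T$, then by the composition theorem for zCDP (Lemma~\ref{lem:dpcomp}), the overall algorithm is $\tfrac{T\eps_0^2}{2}$-zCDP. Choosing $\eps_0 = \sqrt{2\rho/T}$ makes this exactly $\rho$-zCDP. The lazy rounds contribute no additional privacy loss since they depend on the dataset only through quantities already released.

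Next, I would handle accuracy via the standard potential-function analysis of multiplicative weights. Following Vadhan's proof, one shows that the number of hard rounds satisfies $T = O(\dims/\dist'^2)$ (because the relative entropy from the uniform prior to the true empirical distribution is at most $p\log 2$, and each hard round decreases the potential by $\Omega(\dist'^2)$). For each hard round, the Laplace noise added has scale $1/(n_2\eps_0)$, so by a union bound over at most $|\cQ| = O(\dims^t)$ queries, all noise magnitudes are at most $O(\log(\dims^t)/(n_2\eps_0)) = O(t\log \dims/(n_2\eps_0))$ with probability at least $19/20$. Requiring this error to be at most $\dist'/2$ (so that combined with the MW approximation error of $\dist'/2$ per query the total is at most $\dist'$) yields the constraint
\[
n_2 = \Omega\!\left(\frac{t\log \dims}{\dist'\,\eps_0}\right) = \Omega\!\left(\frac{t\log \dims}{\dist'}\cdot\sqrt{\frac{T}{\rho}}\right)= \Omega\!\left(\frac{t\log \dims}{\dist'^2}\cdot\sqrt{\frac{\dims}{\rho}}\right).
\]
Substituting $\dist' = \dist/(8\lambda)$ gives $n_2 = \Omega\bigl(t\lambda^2\sqrt{\dims}\log \dims/(\sqrt{\rho}\,\dist^2)\bigr)$, as required.

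The main subtlety is the interaction between the adaptive bound on $T$ and the privacy budget: in the $(\eps,\delta)$-DP proof, advanced composition buys a $\sqrt{T\log(1/\delta)}$ factor over basic composition, and this $\sqrt{T}$ gain is precisely what lets PMW beat the naive Laplace bound. In the zCDP setting, composition is additive in $\rho = \eps_0^2/2$, which gives the same $\sqrt{T}$ improvement but without any $\log(1/\delta)$ loss, so the resulting sample complexity actually improves by a $\sqrt{\log(1/\delta)}$ factor over the quoted bound in \cite{Vadhan17}. The remaining work is purely bookkeeping: verifying that the MW potential argument is oblivious to whether the per-step noise comes from a zCDP or DP mechanism (it only uses that the noise is Laplace-distributed with known scale), and confirming that the high-probability bound on $T$ is consistent with the worst-case bound used in the privacy calculation (standard in PMW analyses, e.g., by noting that the algorithm can simply halt after $T$ hard rounds have occurred).
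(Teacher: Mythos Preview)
Your proposal is correct and takes essentially the same approach as the paper: the paper's ``proof'' is only a one-sentence remark that each iteration of PMW is $\eps_0$-DP hence $\eps_0^2/2$-zCDP (Lemma~\ref{lem:dpdefns}), and that replacing advanced composition by zCDP composition (Lemma~\ref{lem:dpcomp}) in Vadhan's Theorem~4.3 analysis yields the stated bound. You have carried out exactly this substitution and filled in the bookkeeping (the hard-round count $T=O(\dims/\dist'^2)$, the union bound over $O(\dims^t)$ queries, and the substitution $\dist'=\dist/(8\lambda)$), so your argument is a strictly more detailed version of what the paper sketches.
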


The second algorithm sepFEM (Separator-Follow-the-perturbed-leader
with exponential mechanism) has slightly worse sample complexity, but
runs in polynomial time when it has access to an optimization oracle
$\mathcal{O}$ that does the following: given as input a weighted
dataset
$(I_1, w_1), \ldots , (I_m, w_m)\in 2^{[p]} \times \mathbb{R}$, find $x\in \{\pm 1\}^\dims$,
\[
  \max_{x\in \{\pm 1\}^\dims} \sum_{i=1}^m w_i \prod_{j\in I_i} x_{j}.
\]
The oracle $\mathcal{O}$ essentially solves cost-sensitive
classification problems over the set of parity functions~\cite{ZLA03},
and it can be implemented with an integer program
solver~\cite{VietriTBSW20,GaboardiAHRW14}.

\begin{lemma}[Sample complexity of sepFEM,~\cite{VietriTBSW20}]\label{man}
  The sepFEM algorithm satisfies $\rho$-zCDP and releases $\hat{Q}$
  such that with probability greater than $\frac{19}{20}$, for all
  $I \subset [\dims]$ with $\absv{I} \le t$,
  $\absv{\hat{Q}(I) -Q(I)} \le \frac{\dist}{8\lambda}$ as long as the
  size of the data set
  $$n_2 = \Omega\Paren{ \frac{ t \lambda^2 \cdot {\dims^{5/4}}
      \log{\dims}}{\sqrt{\rho}\dist^2} }$$ The algorithm runs in
  polynomial time given access to the optimization oracle
  $\mathcal{O}$ defined above.
\end{lemma}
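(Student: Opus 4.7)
The plan is to invoke the accuracy and privacy guarantees of sepFEM from~\cite{VietriTBSW20} as a black box, specialized to the class $\mathcal{Q}$ of parity queries of order at most $t$ over $\{\pm 1\}^\dims$. First I would recast the task as follows: the target queries are $Q(I)(z) = \prod_{j \in I} z_j$ for $I \subseteq [\dims]$ with $|I| \le t$, yielding $|\mathcal{Q}| = \sum_{s=0}^{t} \binom{\dims}{s} = O(\dims^t)$ bounded linear statistical queries over the data domain $\{\pm 1\}^\dims$. The sepFEM algorithm directly takes a private query-release instance of this form and outputs estimates $\hat Q(I)$; privacy is immediate from the algorithm's design, since each internal Gaussian/exponential-mechanism call can be calibrated to a per-round zCDP budget and the total $\rho$-zCDP guarantee follows from the composition lemma (Lemma~\ref{lem:dpcomp}).

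Next, I would instantiate the accuracy bound of~\cite{VietriTBSW20}. Their guarantee, translated to zCDP, states that to achieve uniform error at most $\tau$ over all queries in $\mathcal{Q}$ with constant probability one needs
\[
n_2 \;=\; \tilde O\!\left(\frac{\sqrt{s(\mathcal{Q})}\,\log|\mathcal{Q}|}{\tau^{2}\sqrt{\rho}}\right),
\]
where $s(\mathcal{Q})$ is the size of a minimum ``separator set'' for $\mathcal{Q}$ (the structural quantity controlling the FTPL subroutine of sepFEM). For the class of parities of order up to $t$ over $\{\pm 1\}^\dims$ a standard construction yields $s(\mathcal{Q}) = O(\dims^{5/2})$, so that $\sqrt{s(\mathcal{Q})} = O(\dims^{5/4})$, and $\log|\mathcal{Q}| = O(t \log \dims)$. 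Setting $\tau = \alpha/(8\lambda)$ and solving gives
\[
n_2 \;=\; \Omega\!\left(\frac{t\lambda^{2}\dims^{5/4}\log \dims}{\sqrt{\rho}\,\alpha^{2}}\right),
\]
matching the bound in the statement.

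Finally I would verify computational efficiency. Each round of sepFEM needs two subroutines: (i) a Follow-the-Perturbed-Leader step that finds a worst-case data point $x \in \{\pm 1\}^\dims$ maximizing a noisy weighted combination $\sum_{i} w_{i} \prod_{j \in I_{i}} x_{j}$ of parities, and (ii) an exponential-mechanism step selecting a query in $\mathcal{Q}$. Step~(i) is exactly what the oracle $\mathcal{O}$ in the statement does, and step~(ii) can be implemented in time $O(|\mathcal{Q}|) = O(\dims^{t})$ by explicit enumeration, which is polynomial in $\dims$ for constant $t$. Hence the overall running time is polynomial in $\dims$ given access to $\mathcal{O}$.

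The main obstacles I expect are bookkeeping and a short structural lemma. First, the result in~\cite{VietriTBSW20} is usually stated under approximate differential privacy, so I would need to either re-derive it directly under zCDP (re-calibrating the internal Gaussian and exponential-mechanism noise levels and composing via Lemma~\ref{lem:dpcomp}) or go through the conversion in Lemma~\ref{lem:dpdefns} at the price of a $\log(1/\delta)$ factor that gets absorbed into the constants. Second, the separator-set bound $s(\mathcal{Q}) = O(\dims^{5/2})$ for bounded-order parities is what pins down the $\dims^{5/4}$ exponent; I would give a brief combinatorial argument (based on a gadget that distinguishes any two distinct monomials of size at most $t$) to justify this bound, as this is the step where the otherwise-abstract sepFEM guarantee becomes concrete for the parity class at hand.
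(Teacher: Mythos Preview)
The paper does not prove this lemma at all; it is stated purely as a citation to~\cite{VietriTBSW20}, so there is no ``paper's proof'' to compare against beyond the black-box invocation. Your overall strategy---take the sepFEM accuracy/privacy guarantee and instantiate it for the class of parity queries of order at most $t$---is exactly the right shape. The privacy argument and the oracle-efficiency argument are fine.

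The problem is in your parameter instantiation, which looks reverse-engineered. You quote the sepFEM sample-complexity bound as scaling with $\sqrt{s(\mathcal{Q})}$ and then assert $s(\mathcal{Q}) = O(\dims^{5/2})$ for bounded-order parities. Both are wrong individually, even though together they produce the right exponent. The actual sepFEM bound from~\cite{VietriTBSW20} scales with $s(\mathcal{Q})^{5/4}$ (not $\sqrt{s(\mathcal{Q})}$), and the separator-set size for parities over $\{\pm 1\}^{\dims}$ is $O(\dims)$, not $O(\dims^{5/2})$. Concretely, the $\dims$ points $x^{(1)},\ldots,x^{(\dims)}$ with $x^{(i)}_j = -1$ iff $j=i$ form a separator set: for any $I_1 \neq I_2$ pick $i \in I_1 \triangle I_2$, and then $\chi_{I_1}(x^{(i)}) \neq \chi_{I_2}(x^{(i)})$. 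This gives $s(\mathcal{Q})^{5/4} = \dims^{5/4}$, which together with $\log|\mathcal{Q}| = O(t\log\dims)$ and $\tau = \alpha/(8\lambda)$ yields the stated bound. So the ``brief combinatorial argument'' you plan for $s(\mathcal{Q}) = O(\dims^{5/2})$ does not exist, because that bound is an artifact of misquoting the dependence on $s$; the correct (and much easier) separator argument is the one-line construction above.
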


Now we can combine Lemmas~\ref{yo}, \ref{hey}, and \ref{man} to state
the formal guarantee of Algorithm~\ref{alg:DPMRF-infty}.

\begin{theorem}
  Algorithm~\ref{alg:DPMRF-infty} is a $\rho$-zCDP algorithm which, with probability at least $2/3$, 
  finds a multilinear polynomial $u$ such that for every maximal
  monomial $I$ of $h$, $\absv{\bar{h}(I) - \bar{u}(I)} \le \dist,$
  given $\ns$ i.i.d.\ samples $Z^1,\cdots, Z^{\ns} \sim \cD$, and
  \begin{enumerate}
  \item if it uses PMW for releasing $\hat Q$; it has a sample complexity of
$$\ns =  O\Paren{ \frac{ e^{5\lambda t} \cdot \sqrt{\dims} \log^2(\dims)
  }{\sqrt{\rho} \dist^{\frac{9}{2}}}+ \frac{ t \lambda^2 \cdot
    \sqrt{\dims} \log{\dims}}{\sqrt{\rho}\dist^2}+\frac{ e^{6\lambda t} \cdot
    \log(\dims)}{\dist^6} }$$ 
and a runtime complexity that is exponential in $\dims$;
\item if it uses sepFEM for releasing $\hat Q$, it has a sample
  complexity of
$$\ns =\tilde O\Paren{ \frac{ e^{5\lambda t} \cdot \sqrt{\dims} \log^2(\dims)
  }{\sqrt{\rho} \dist^{\frac{9}{2}}}+ \frac{ t \lambda^2 \cdot
    \dims^{5/4} \log{\dims}}{\sqrt{\rho} \dist^2}+\frac{ e^{6\lambda
      t} \cdot \log(\dims)}{\dist^6} }$$ and runs in polynomial time
whenever $t = O(1)$.
\end{enumerate}
\end{theorem}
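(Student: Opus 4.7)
The plan is to combine the three previous lemmas via privacy composition and a union bound on the utility events. Algorithm~\ref{alg:DPMRF-infty} has two components: a parity-query release subroutine (either PMW or sepFEM) run on the second sub-sample of size $n_2$ with privacy budget $\rho/2$, and $p$ invocations of the private Frank-Wolfe logistic-regression routine on the first sub-sample of size $n_1$, each with budget $\rho/(2p)$. For the privacy analysis, I would simply appeal to the zCDP composition theorem (Lemma~\ref{lem:dpcomp}): the query-release step contributes $\rho/2$, and the $p$ logistic regressions contribute a total of $p \cdot \rho/(2p) = \rho/2$, so the overall mechanism satisfies $\rho$-zCDP. Neither PMW nor sepFEM needs to interact with the Frank-Wolfe samples since they operate on a disjoint sub-sample, so no parallel-composition subtlety arises.

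For the utility analysis, the structure is essentially a conditional application of Lemma~\ref{yo}. First, invoke Lemma~\ref{hey} (respectively Lemma~\ref{man}) to conclude that, with probability at least $19/20$, the released parity estimates satisfy $|\hat Q(I) - Q(I)| \le \alpha/(8\lambda)$ for every $I \subseteq [p]$ with $|I| \le t$, provided $n_2$ is chosen as stated in that lemma. Condition on this event. Then apply Lemma~\ref{yo}: it guarantees that, with probability at least $3/4$ over the randomness of the Frank-Wolfe stage, the output polynomial $u$ satisfies $|\bar h(I) - \bar u(I)| \le \alpha$ for every maximal monomial $I$, as long as $n_1$ meets the bound stated there and $n_2 \ge C \lambda^2 t \log(p)/\alpha^2$ (which is absorbed by the bound from Lemma~\ref{hey} or Lemma~\ref{man}). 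By a union bound across the two stochastic events, the overall success probability is at least $3/4 - 1/20 = 7/10 \ge 2/3$, as required.

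Finally, the stated sample complexities follow by adding $n_1 + n_2$. The $n_1$ term,
\[
n_1 = \Omega\!\left(\frac{e^{5\lambda t}\sqrt{p}\log^2 p}{\sqrt{\rho}\alpha^{9/2}} + \frac{e^{6\lambda t}\log p}{\alpha^6}\right),
\]
comes directly from Lemma~\ref{yo}; the $n_2$ term contributes the $O(t\lambda^2 \sqrt{p}\log p/(\sqrt{\rho}\alpha^2))$ summand under PMW or the $O(t\lambda^2 p^{5/4}\log p/(\sqrt{\rho}\alpha^2))$ summand under sepFEM. The runtime claims are immediate: PMW maintains an explicit distribution over $\{\pm 1\}^p$ and so is exponential in $p$, while sepFEM runs in $\mathrm{poly}(p)$ time when $t = O(1)$ using only polynomially many calls to the oracle $\mathcal{O}$, and the Frank-Wolfe stage is polynomial-time by Corollary~\ref{cor:EmpiricalError}.

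The main obstacle here is essentially bookkeeping rather than any new technical difficulty: verifying that the $n_2$ requirement coming from Lemma~\ref{yo} is indeed dominated by the $n_2$ needed for the query-release guarantee, and keeping the failure probabilities in the two stages compatible with the $2/3$ target. A minor point worth double-checking is that the parity estimates $\hat Q$ are used in the definition of $\bar u(I \cup \{i\})$ for every $i$ simultaneously, so the same high-probability event from the query-release stage must suffice for all $p$ coordinates — which it does, because Lemmas~\ref{hey} and~\ref{man} provide a uniform bound over all $I$ with $|I|\le t$.
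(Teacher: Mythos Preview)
Your proposal is correct and takes essentially the same approach as the paper, which in fact provides no explicit proof and simply states that the theorem follows by combining Lemmas~\ref{yo}, \ref{hey}, and \ref{man}. Your privacy argument via zCDP composition, your union bound over the two utility events, and your additive accounting of $n_1 + n_2$ are exactly the intended bookkeeping.
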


%%% Local Variables:
%%% mode: latex
%%% TeX-master: "main"
%%% End:

\section{Lower Bounds for Parameter Learning}
\label{sec:est-lb}
The lower bound for parameter estimation is based on mean estimation in $\ell_\infty$ distance.

\begin{theorem}
\label{thm:est-lb}
  Suppose $\cA$ is an $(\eps, \delta)$-differentially private
  algorithm that takes $n$ i.i.d.\ samples $Z^1, \ldots , Z^n$ drawn
  from any unknown $\dims$-variable Ising model $\cD(A,\theta)$ and outputs $\hat A$ such that
$\expectation {  \max_{i,j \in [p]} |A_{i, j} - \hat A_{i, j}|} \leq \alpha \leq 1/50.$
  Then $\ns = \Omega\Paren{\frac{\sqrt{\dims}}{\alpha\eps}}$.
\end{theorem}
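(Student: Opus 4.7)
The plan is to prove the lower bound by a reduction from $\ell_\infty$ mean estimation of a product distribution, for which strong lower bounds are known via fingerprinting codes. The key observation is that if the dependency graph is a perfect matching, the Ising model factors into a product distribution over disjoint pairs of variables, so the parameter estimation problem inherits the difficulty of the product distribution mean estimation problem.

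\textbf{Step 1: Construct a hard family of Ising models.} Let $p$ be even and consider Ising models $\cD(A,\theta)$ with $\theta = 0$ and $A$ supported only on the perfect matching $M = \{(2i-1,2i): i \in [p/2]\}$, with edge weight $A_{2i-1,2i} = a_i$ for some parameter vector $a \in [-c,c]^{p/2}$ where $c$ is a small absolute constant. Because the dependency graph is a disjoint union of edges, $Z \sim \cD(A,0)$ factors as $Z = (Z_{2i-1},Z_{2i})_{i=1}^{p/2}$ where the pairs are mutually independent with $\Pr[Z_{2i-1}=z_1,Z_{2i}=z_2] \propto \exp(a_i z_1 z_2)$.

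\textbf{Step 2: Reduce to product mean estimation.} Define $Y_i = Z_{2i-1} Z_{2i} \in \{-1,+1\}$. A direct computation gives $\E[Y_i] = \tanh(a_i)$, and the $Y_i$ are mutually independent. Hence the induced distribution of $Y = (Y_1,\ldots,Y_{p/2})$ is a product distribution over $\{\pm 1\}^{p/2}$ with mean vector $\mu(a) = (\tanh(a_1),\ldots,\tanh(a_{p/2}))$. For $|a_i| \le c$ with $c$ a small constant, $\tanh$ is bi-Lipschitz, so $\|\mu(a)-\mu(a')\|_\infty$ and $\|a-a'\|_\infty$ are equivalent up to constant factors. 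Given any $(\eps,\delta)$-DP estimator $\cA$ for the Ising parameters with $\E[\max_{i,j}|A_{ij}-\hat A_{ij}|] \le \alpha$, the post-processed estimator $\hat \mu_i := \tanh(\hat A_{2i-1,2i})$ is $(\eps,\delta)$-DP (by post-processing, since the $Y_i$'s are deterministic functions of the $Z$ samples) and satisfies $\E[\|\hat \mu - \mu(a)\|_\infty] = O(\alpha)$.

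\textbf{Step 3: Invoke the fingerprinting lower bound.} It is known (via fingerprinting code arguments of Bun-Ullman-Vadhan, Dwork-Smith-Steinke-Ullman-Vadhan, Steinke-Ullman, and Kamath-Li-Singhal-Ullman) that for any $(\eps,\delta)$-DP algorithm with $\delta \le O(1/n)$ that estimates the mean of a product distribution over $\{\pm 1\}^d$ in $\ell_\infty$ error at most $\alpha$, the sample complexity must satisfy $n = \Omega(\sqrt{d}/(\alpha \eps))$. Applying this with $d = p/2$ yields $n = \Omega(\sqrt{p}/(\alpha\eps))$, proving the theorem.

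\textbf{Main obstacle.} The primary delicate point is to ensure that the reduction produces an instance exactly within the class of product distributions to which the fingerprinting lower bound applies, while simultaneously verifying that the constant $c$ can be chosen so that (i) the Ising models we construct are valid with bounded width $\lambda = O(c)$, (ii) the map $a \mapsto \tanh(a)$ on $[-c,c]$ has derivative bounded below by a constant (so that $\alpha$-accuracy in $a$ and in $\mu$ are equivalent up to constants), and (iii) the range of attainable means $\tanh(a_i) \in [-\tanh(c),\tanh(c)]$ still contains the hard instances from the fingerprinting construction. Handling the precise shape of the fingerprinting hard distributions (which are typically nearly-uniform biased coins) within this restricted range requires a rescaling argument but is standard.
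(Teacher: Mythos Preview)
Your proposal is correct and follows essentially the same route as the paper: construct Ising models supported on a perfect matching, map samples $Z$ to $Y_i = Z_{2i-1}Z_{2i}$ to obtain a product distribution with mean $\tanh(\cdot)$ of the edge weights, and invoke a fingerprinting lower bound for private product mean estimation. The only minor presentational difference is that the paper converts the $\ell_\infty$ parameter error into an $\ell_2$ error bound $\E[\sum_i |\tanh(\eta_i)-\tanh(\hat\eta_i)|^2] \le 32 p\alpha^2$ and then applies the $\ell_2$ version of the fingerprinting bound (Lemma~6.2 of Kamath--Li--Singhal--Ullman) with $\gamma^2 = 32p\alpha^2$, whereas you appeal directly to the $\ell_\infty$ formulation; these are equivalent up to constants.
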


\begin{proof}
  Consider a Ising model $\cD (A, 0)$ with
  $A \in \RR^{\dims \times \dims}$ defined as follows: for
  $i \in [\frac{\dims}{2}], A_{2i-1,2i} = A_{2i,2i-1} = \eta_i\in
  [-\ln(2), \ln(2)]$, and $A_{ll'} = 0$ for all other pairs of
  $(l, l')$. This construction divides the $\dims$ nodes
  into $\frac{\dims}{2}$ pairs, where there is no correlation between nodes belonging to different pairs.
 It follows that
\begin{align}
  \probof {Z_{2i-1}=1, Z_{2i} =1 } =\probof {Z_{2i-1}=-1,  Z_{2i} = -1} = \frac{1}{2} \frac{e^{\eta_i}}{e^{\eta_i}+1},\nonumber\\
  \probof {Z_{2i-1}=1, Z_{2i} =-1 } =\probof {Z_{2i-1}=-1,  Z_{2i} = 1} =  \frac{1}{2} \frac{1}{e^{\eta_i}+1}.\nonumber
\end{align}
For each observation $Z$, we obtain an observation
$X\in \{\pm 1\}^{\dims/2}$ such that $X_{i} = Z_{2i-1} Z_{2i}$.  Then
each observation $X$ is distributed according to a product
distribution in $\{\pm 1\}^{(\dims/2)}$ such that the mean of each
coordinate $j$ is $(e^{\eta_i} - 1)/(e^{\eta_i} + 1)\in [-1/3, 1/3]$.

Suppose that an $(\eps, \delta)$-differentially private algorithm
takes $n$ observations drawn from any such Ising model distribution
and output a matrix $\hat A$ such that
$\E \left[\max_{i,j\in [p]} |A_{i,j} - \hat A_{i, j}|\right] \leq \alpha$. Let
$\hat \eta_i = \min\{\max\{\hat A_{2i-1, 2i} , -\ln(2)\}, \ln(2)\}$ be
the value of $A_{2i-1, 2i}$ rounded into the range of
$[-\ln(2), \ln(2)]$, and so $|\eta_i - \hat \eta_i| \leq \alpha$. It
follows that
\begin{align*}
  \left| \frac{e^{\eta_i} - 1}{e^{\eta_i} + 1} -  \frac{e^{\hat \eta_i} - 1}{e^{\hat \eta_i} + 1}\right| % &= 2  \left| \frac{1}{e^{\eta_i} + 1} -  \frac{1}{e^{\hat \eta_i} + 1}\right|\\
                                                                                                         &= 2  \left| \frac{e^{\eta_i} - e^{\hat \eta_i}}{(e^{\eta_i} + 1)(e^{\hat \eta_i} + 1)}\right|\\
                                                                                                         &< 2 \left| {e^{\eta_i} - e^{\hat \eta_i}}\right| \leq 4 \left(e^{|\eta_i - \hat \eta_i|} - 1\right)\leq 8 |\eta_i - \hat \eta_i| 
\end{align*}
where the last step follows from the fact that $e^a \leq 1 + 2a$ for
any $a\in [0, 1]$. Thus, such private algorithm also can estimate the
mean of the product distribution accurately:
\[
  \E \left[\sum_{i=1}^{p/2} \left| \frac{e^{\eta_i} - 1}{e^{\eta_i} +
        1} - \frac{e^{\hat \eta_i} - 1}{e^{\hat \eta_i} + 1}\right|^2
  \right] \leq 32 p \alpha^2
\]
Now we will use the following sample complexity lower bound on private
mean estimation on product distributions.
\begin{lemma}[Lemma 6.2 of~\cite{KamathLSU18}]\label{prod-low}
  If $M\colon \{\pm 1\}^{n\times d} \rightarrow [-1/3, 1/3]^d$ is
  $(\eps, 3/(64n))$-differentially private, and for every product
  distribution $P$ over $\{\pm 1\}^d$ such that the mean of each
  coordinate $\mu_j$ satisfies $-1/3 \leq \mu_j \leq 1/3$,
  \[
    \E_{X \sim P^n} \left[ \| M(X) - \mu \|_2^2 \right] \leq \gamma^2
    \leq \frac{d}{54},
  \]
  then $n\geq d/(72 \gamma \eps)$.
\end{lemma}
Then our stated bound follows by instantiating $\gamma^2 = 32p\alpha^2$
and $d = p/2$ in Lemma~\ref{prod-low}.\end{proof}

%%% Local Variables:
%%% mode: latex
%%% TeX-master: "main.tex"
%%% End:

\section{Structure Learning of Graphical Models}
\label{sec:struct-ub}
In this section, we will give an $(\varepsilon,\delta)$-differentially private algorithm for learning the \emph{structure} of a Markov Random Field.
The dependence on the dimension $d$ will be only \emph{logarithmic}, in comparison to the complexity of privately learning the parameters.
As we have shown in Section~\ref{sec:est-lb}, this dependence is necessarily polynomial in $\dims$, even under approximate differential privacy.
Furthermore, as we will show in Section~\ref{sec:struct-lb}, if we wish to learn the structure of an MRF under more restrictive notions of privacy (such as pure or concentrated), the complexity also becomes polynomial in $\dims$.
Thus, in very high-dimensional settings, learning the structure of the MRF under approximate differential privacy is essentially the only notion of private learnability which is tractable.

The following lemma is immediate from stability-based mode arguments (see, e.g., Proposition 3.4 of~\cite{Vadhan17}).
\begin{lemma}
  Suppose there exists a (non-private) algorithm which takes $X = (X^1, \dots, X^n)$ sampled i.i.d.\ from some distribution $\cD$, and outputs some fixed value $Y$ (which may depend on $\cD$) with probability at least $2/3$.
  Then there exists an $(\varepsilon, \delta)$-differentially private algorithm which takes $O\left(\frac{n\log(1/\delta)}{\varepsilon}\right)$ samples and outputs $Y$ with probability at least $1 - \delta$.
\end{lemma}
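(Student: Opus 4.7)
The plan is to use a standard ``subsample-and-aggregate'' style argument combined with a stability-based histogram release. Concretely, I would set $k = C \log(1/\delta)/\varepsilon$ for a sufficiently large constant $C$, and collect a fresh sample of size $m = n \cdot k = O(n \log(1/\delta)/\varepsilon)$. Partition it into $k$ disjoint blocks $B_1,\ldots,B_k$, each of size $n$, and run the given non-private algorithm independently on each block to obtain candidate outputs $Y_1,\ldots,Y_k$. Since the blocks are i.i.d., each $Y_i$ equals the target value $Y$ independently with probability at least $2/3$.

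Next I would apply a Chernoff bound to argue that, with probability at least $1-\delta/2$, strictly more than $k/2 + \log(1/\delta)/\varepsilon$ of the $Y_i$ take the value $Y$; this is where the constant $C$ in the choice of $k$ comes in. The key structural observation is that the map from the input dataset to the multiset $\{Y_1,\ldots,Y_k\}$ has $\ell_1$-sensitivity $1$: changing a single sample in the original length-$m$ input can only affect the single block containing it, and hence only one of the $Y_i$. Therefore the histogram of votes is a sensitivity-$1$ function of the input.

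Finally, I would release the argmax of this histogram using a stability-based histogram mechanism (equivalently, a propose-test-release style routine) such as the one of Korolova--Kenthapadi--Mishra--Ntoulas or Bun--Nissim--Stemmer--Vadhan, which is $(\varepsilon,\delta)$-DP and, with probability at least $1-\delta/2$, outputs the true mode whenever its count exceeds the runner-up by $\Omega(\log(1/\delta)/\varepsilon)$. A union bound over the two failure events (insufficient margin; histogram release fails) gives total failure probability at most $\delta$, proving the claim with sample complexity $O(n\log(1/\delta)/\varepsilon)$.

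The main obstacle, and essentially the only nontrivial step, is verifying the margin condition in step two: one must choose $k$ large enough that the Chernoff concentration of $\mathrm{Bin}(k,2/3)$ guarantees not only a majority for $Y$, but a majority with gap at least the stability threshold $\Theta(\log(1/\delta)/\varepsilon)$ required by the histogram mechanism. This is exactly what forces the $\log(1/\delta)/\varepsilon$ blow-up in the number of blocks, and hence in the final sample complexity. Everything else (sensitivity, composition, post-processing) is routine.
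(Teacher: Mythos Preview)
Your proposal is correct and is exactly the argument the paper has in mind: the paper does not give a detailed proof but simply states that the lemma ``is immediate from stability-based mode arguments'' (citing Vadhan's Proposition~3.4), and earlier remarks that the $O(\log(1/\delta)/\varepsilon)$ blow-up comes from ``either the propose-test-release framework or stability-based histograms.'' Your subsample-and-aggregate with a stability-based histogram on the votes is precisely this standard argument.
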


We can now directly import the following theorem from~\cite{WuSD19}.
\begin{theorem}[\cite{WuSD19}]
There exists an algorithm which, with probability at least $2/3$, learns the structure of a pairwise graphical model.
  It requires $n = O\left(\frac{\lambda^2 \ab^4 e^{14\lambda} \log(\dims \ab)}{\eta^4}\right)$ samples.
\end{theorem}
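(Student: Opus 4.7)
The plan is to follow the $\ell_1$-constrained logistic regression approach of Wu, Sanghavi, and Dimakis, which is the same non-private template that underlies Theorem~\ref{thm:est-ub-pair}: first learn the edge weight matrices $W_{i,j}$ to $\ell_\infty$ accuracy $\eta/2$, and then threshold the estimated entries to recover $E$. Structure learning is strictly easier than parameter learning, so any (non-private) parameter estimator with accuracy $\eta/2$, followed by a single thresholding step, will suffice.

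Concretely, fix a vertex $i$ and a pair of alphabet symbols $a\neq b\in[k]$. By Lemma~\ref{lem:PairwiseToLR}, conditioned on $Z_i\in\{a,b\}$ and $Z_{-i}=x$, the probability $\Pr[Z_i=a]$ equals $\sigma(\langle w^\star_{a,b}, \widetilde x\rangle)$ for an explicit vector $w^\star_{a,b}\in\mathbb{R}^{(p-1)\times k}$ built from the row differences $W_{i,j}(a,\cdot)-W_{i,j}(b,\cdot)$ plus the bias $\theta_i(a)-\theta_i(b)$, and the width assumption gives $\|w^\star_{a,b}\|_1\le\lambda k$. I would therefore run (non-private) ERM for $\ell_1$-constrained logistic regression on the subsample $S_{a,b}=\{m:z_i^m\in\{a,b\}\}$, with one-hot encoding as in Algorithm~\ref{alg:DPPair} but \emph{without} the private Frank--Wolfe step. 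Standard Rademacher-complexity generalization bounds for $\ell_1$-constrained linear predictors with $\|x\|_\infty\le 1$ give excess population logistic loss $\gamma$ using $O(\lambda^2 k^2 \log(pk)/\gamma^2)$ samples from $S_{a,b}$.

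The bridge from excess loss to $\ell_\infty$ parameter recovery is already available. Lemma~\ref{lem:parameter-error-kalphabet}, together with the $\delta$-unbiasedness of the conditional marginals guaranteed by Lemma~\ref{lem:pairwise-unbiased} (which gives $\delta=e^{-2\lambda}/k$), shows that excess loss $\gamma$ translates to $\|\widehat w-w^\star\|_\infty = O(e^\lambda\sqrt{\gamma/\delta})$ after the centering step of equation~\eqref{equ:centering}. Demanding this be at most $\eta/2$ forces $\gamma \lesssim \eta^2 e^{-4\lambda}\delta = \eta^2 e^{-6\lambda}/k$, yielding per-pair sample complexity $O(\lambda^2 k^4 e^{12\lambda}\log(pk)/\eta^4)$ within $S_{a,b}$. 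Averaging the centered $U_{a,b}$ matrices over $b$ (again as in Algorithm~\ref{alg:DPPair}) and using that the rows of each $W_{i,j}$ are zero-mean recovers every entry of $W_{i,j}$ to accuracy $\eta/2$. Since $S_{a,b}$ contains at least a $2\delta$-fraction of the samples by $\delta$-unbiasedness, the number of \emph{total} samples needed to populate every $S_{a,b}$ is larger by a factor $1/\delta = k e^{2\lambda}$, which, after a union bound over the $p k^2$ regressions (absorbed into $\log(pk)$) produces the final $O(\lambda^2 k^4 e^{14\lambda}\log(pk)/\eta^4)$ bound stated in the theorem.

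The concluding step is simple: output edge $(i,j)$ iff $\max_{a,b}|\widehat W_{i,j}(a,b)|\ge \eta/2$. By definition of $\eta(\cW,\Theta)$, every true edge has some entry of magnitude $\ge\eta$, which stays $\ge\eta/2$ under the $\eta/2$ $\ell_\infty$ guarantee; every non-edge has $W_{i,j}\equiv 0$, which stays $<\eta/2$. A final union bound over $p$ nodes keeps the overall failure probability below $1/3$. The principal technical obstacle is Lemma~\ref{lem:parameter-error-kalphabet}: translating an excess logistic-loss bound into an $\ell_\infty$ parameter bound requires carefully exploiting strong convexity of the population logistic loss on $\delta$-unbiased inputs, and the resulting dependence on $1/\delta$ is exactly what produces the $k\,e^{2\lambda}$ blow-ups that give the final $k^4 e^{14\lambda}$ constants; everything else is ERM generalization and a union bound.
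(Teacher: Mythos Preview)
The paper does not prove this theorem at all: it is imported directly from \cite{WuSD19} as a black box, immediately after the stability-based lemma, and then combined with that lemma to obtain Corollary~\ref{thm:str-ub-pair}. So there is no ``paper's own proof'' to compare against. Your reconstruction is exactly the \cite{WuSD19} template and, indeed, is the non-private specialization of the paper's own proof of Theorem~\ref{thm:est-ub-pair}: run $\ell_1$-constrained logistic regression per node and per symbol pair, invoke Lemma~\ref{lem:parameter-error-kalphabet} with $\delta=e^{-2\lambda}/k$ to convert excess loss into $\ell_\infty$ parameter error, average and threshold at $\eta/2$.

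One arithmetic point: your own accounting gives a per-pair sample requirement of $O(\lambda^2 k^4 e^{12\lambda}\log(pk)/\eta^4)$ inside $S_{a,b}$, and then you multiply by $1/\delta = k e^{2\lambda}$ to pass to raw samples; that product is $O(\lambda^2 k^{5} e^{14\lambda}\log(pk)/\eta^4)$, not $k^4$. This matches the $k^5$ that appears in the non-private term of Theorem~\ref{thm:est-ub-pair} with $\alpha=\eta/2$. Getting the stated $k^4$ requires a slightly sharper accounting than the straight parameter-learning-then-threshold route you describe, so either tighten that step or simply cite \cite{WuSD19}, as the paper does.
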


This gives us the following private learning result as a corollary.
\begin{corollary}
\label{thm:str-ub-pair}
  There exists an $(\varepsilon, \delta)$-differentially private algorithm which, with probability at least $2/3$, learns the structure of a pairwise graphical model.
  It requires $n = O\left(\frac{\lambda^2 \ab^4 e^{14\lambda} \log(\dims \ab)\log(1/\delta)}{\varepsilon\eta^4}\right)$ samples.
\end{corollary}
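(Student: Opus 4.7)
The plan is to combine the two results stated immediately before the corollary, with essentially no additional work beyond recognizing that structure learning satisfies the hypothesis of the stability-based mode lemma. Specifically, I would first invoke the non-private theorem of Wu-Sanghavi-Dimakis: with $n_0 = O(\lambda^2 \ab^4 e^{14\lambda} \log(\dims \ab)/\eta^4)$ i.i.d.\ samples from $\cD(\cW,\Theta)$, there is an algorithm that with probability at least $2/3$ outputs the \emph{exact} edge set $E$ of the underlying dependency graph. The crucial point is that $E$ is a single deterministic function of $\cD$ (cf.\ Definition~\ref{def:learn-struct}), not an $\dist$-approximation, so the algorithm produces a fixed value $Y = E$ with probability $\ge 2/3$, which is exactly the hypothesis required by the preceding lemma.

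Second, I would apply that lemma (the stability-based mode argument, e.g.\ Proposition 3.4 of~\cite{Vadhan17}) to obtain an $(\varepsilon,\delta)$-DP algorithm that outputs $E$ with probability at least $1-\delta \ge 2/3$ using a multiplicative overhead of $O(\log(1/\delta)/\varepsilon)$ samples. Plugging in $n_0$ yields
\[
n \;=\; O\!\left(\frac{\lambda^2 \ab^4 e^{14\lambda} \log(\dims \ab)\log(1/\delta)}{\varepsilon\,\eta^4}\right),
\]
which is the stated bound.

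Since the argument is a two-step reduction to results already in hand, there is no real technical obstacle. The only point that requires care is making sure that the output of the non-private algorithm really is a fixed value (and not, say, a randomized estimate whose mode concentrates only weakly); this is guaranteed by exact structure recovery in the Wu-Sanghavi-Dimakis theorem, which, with constant probability, returns precisely the true edge set. If one prefers, the same bound can be obtained via propose-test-release or stability-based histograms~\cite{KorolovaKMN09,BunNSV15,DworkL09} in place of the generic mode lemma; all three routes give the same $O(\log(1/\delta)/\varepsilon)$ blowup.
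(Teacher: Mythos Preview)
Your proposal is correct and matches the paper's approach exactly: the corollary is stated as an immediate consequence of combining the stability-based mode lemma with the non-private structure-learning theorem of Wu--Sanghavi--Dimakis, and the paper offers no further argument beyond this.
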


For binary MRFs of higher-order, we instead import the following theorem from~\cite{KlivansM17}:

\begin{theorem}[\cite{KlivansM17}]
There exists an algorithm which, with probability at least $2/3$, learns the structure of a binary $t$-wise MRF.
It requires $n = O\left(\frac{ e^{O\Paren{ \lambda t}} \log(\frac{\dims}{\eta} )} {\eta^4}\right)$ samples.
\end{theorem}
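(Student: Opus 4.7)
The plan is to adapt the same $\ell_1$-constrained logistic regression approach used earlier in the chapter (e.g.\ in Algorithm~\ref{alg:DPMRF-infty}), but in a purely non-private setting where we can afford to use a sparsity-exploiting estimator (such as Sparsitron/Hedge-style multiplicative weights) whose sample complexity scales only logarithmically in the dimension. By Lemma~\ref{lem:sigmoidMRF}, for each node $i\in[\dims]$, the conditional distribution satisfies $\Pr[Z_i=1\mid Z_{-i}=x] = \sigma(2\partial_i h(x))$, and $\partial_i h$ is a multilinear polynomial with $\|\partial_i h\|_1\le \lambda$. If we encode the feature vector $X$ as the list of all monomials of $Z_{-i}$ of degree at most $t-1$ and set $Y=Z_i$, then $\Pr[Y=1\mid X] = \sigma(\langle w^\ast, X\rangle)$ with $\|w^\ast\|_1 \le 2\lambda$, so the problem reduces per node to $\ell_1$-constrained sparse logistic regression.

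The next step is to apply a sparsity-adaptive learner (rather than, say, projected gradient descent): using Sparsitron, one gets a candidate $\hat w$ whose population log-loss is within $\gamma$ of the Bayes-optimal $w^\ast$ using $O\!\left(\lambda^2 \log(\dims/\gamma\beta)/\gamma^2\right)$ samples, with failure probability $\beta$. Crucially, the dependence on the ambient dimension (which is on the order of $\binom{\dims}{t-1}$) is only logarithmic, so the $\log(\dims)$ target in the theorem statement is achievable. Invoking Lemma~\ref{lem:parameter-error-mrf} together with the $\delta$-unbiasedness of binary $t$-wise MRFs ($\delta = e^{-2\lambda}/2$), we convert the $\gamma$ log-loss gap into an $\ell_1$ bound on $\partial_i h - \partial_i \hat u$ of order $O\!\left((2t)^{t} e^{\lambda}\sqrt{\gamma/\delta^t}\binom{\dims}{t}\right)$; inverting this with target accuracy $\eta/2$ on each coefficient gives $\gamma$ of order $e^{-O(\lambda t)}\eta^4$ (with the $\binom{\dims}{t}$ and $(2t)^t$ factors absorbed into $e^{O(\lambda t)}$ after setting $t = O(\log(1/\eta))$ or noting they are $\eta$-independent constants when $t$ is treated as part of the exponent).

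Finally, to recover structure I would simply threshold: declare edge $(i,j)$ present whenever some coefficient of $\partial_i \hat u$ supported on a set containing $j$ exceeds $\eta/2$ in absolute value. By the width/minimum-weight assumption on the MRF, the true maximal monomials have coefficients at least $\eta$, so accurate coefficient recovery in $\ell_\infty$ suffices to identify the Markov blanket of $i$ exactly with probability $1-\beta$. A union bound over $i\in[\dims]$ (setting $\beta = 1/(3\dims)$, which costs only an extra $\log \dims$ inside the already-logarithmic factor) yields exact structure recovery with probability at least $2/3$, with the total sample complexity matching $O\!\left(e^{O(\lambda t)}\log(\dims/\eta)/\eta^4\right)$.

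The main technical obstacle is the reduction from log-loss to $\ell_\infty$-accurate coefficient recovery without paying a $\poly(\dims)$ penalty: a naive empirical-risk-minimization argument over a $\binom{\dims}{t}$-dimensional feature space would incur such a penalty, so one must explicitly invoke a sparsity-adaptive mirror-descent-style analysis (as in Klivans--Meka's Sparsitron) whose regret depends only on $\log$ of the number of features. All other steps are routine once that key ingredient is in place.
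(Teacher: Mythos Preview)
The paper does not prove this theorem; it is imported directly from \cite{KlivansM17} and used as a black box to derive the private corollary that follows. There is no proof in the paper to compare against.

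That said, your proposal has a real gap. You route the argument through Lemma~\ref{lem:parameter-error-mrf}, whose $\ell_1$ error bound carries a factor of $\binom{\dims}{t}$, and then assert this factor can be ``absorbed into $e^{O(\lambda t)}$''. That is false: $\binom{\dims}{t}$ is polynomial in the dimension $\dims$ and cannot be hidden inside a term depending only on $\lambda$ and $t$. Your parenthetical alternative, ``setting $t = O(\log(1/\eta))$'', does not help either, since $t$ is a fixed structural parameter of the MRF, not a free variable. Concretely, to force the $\ell_1$ error below $\eta/2$ via that lemma you would need $\gamma$ of order $\eta^2 e^{-O(\lambda t)}\binom{\dims}{t}^{-2}$; feeding this into the Sparsitron rate $O(\lambda^2 \log(\dims)/\gamma^2)$ gives sample complexity polynomial in $\dims$, directly contradicting the $O(\log \dims)$ target.

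Klivans and Meka avoid global $\ell_1$ recovery altogether. The relevant tool is the per-monomial probability bound (the version quoted inside the proof of Lemma~\ref{yo} in this chapter): once the log-loss gap is $\gamma$, for each fixed maximal monomial $I$ one has $\Pr\bigl[\,|\overline{\partial_i h}(I) - \partial_I v_i(X)| \ge \eta/4\,\bigr] = O(\gamma\, e^{3\lambda t}/\eta^2)$, with no $\binom{\dims}{t}$ factor. Averaging $\partial_I v_i(X)$ over fresh samples then recovers each coefficient to accuracy $\eta/2$, and the union bound over the $\dims^{O(t)}$ candidate monomials contributes only a multiplicative $t\log \dims$ inside the logarithm, which the theorem's bound accommodates.
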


This gives us the following private learning result as a corollary.
\begin{corollary}
  There exists an $(\varepsilon, \delta)$-differentially private algorithm which, with probability at least $2/3$,  learns the structure of a binary $t$-wise MRF.
  It requires $$n = O\left(\frac{ e^{O\Paren{ \lambda t}} \log(\frac{\dims}{\eta} )\log(1/\delta)} {\varepsilon \eta^4}\right)$$ samples.
\end{corollary}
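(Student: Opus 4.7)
The plan is to obtain the corollary as an immediate composition of the two results stated just before it: the non-private Klivans-Meka structure learning guarantee, and the stability-based mode lemma that converts any non-private algorithm outputting a fixed value with constant probability into an $(\varepsilon,\delta)$-DP one.

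First, I would invoke the Klivans-Meka theorem to get a (non-private) algorithm $\cA_{\mathrm{KM}}$ that takes $n_0 = O\!\left(\frac{e^{O(\lambda t)}\log(\dims/\eta)}{\eta^4}\right)$ i.i.d.\ samples from an unknown binary $t$-wise MRF $\cD$ and, with probability at least $2/3$, outputs the dependency graph $G=(V,E)$ of $\cD$. The crucial structural observation, which is what makes the stability reduction applicable, is that this output $G$ is a \emph{single fixed value} determined entirely by the unknown distribution $\cD$ and not by the random sample: there is a well-defined ``correct answer'' $Y := G$, and $\cA_{\mathrm{KM}}$ returns this particular $Y$ with probability at least $2/3$.

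Next, I would apply the stability-based mode lemma stated in this section: any non-private algorithm that outputs a fixed value $Y$ with probability at least $2/3$ on $n_0$ samples can be converted into an $(\varepsilon,\delta)$-DP algorithm that outputs the same $Y$ with probability at least $1-\delta$ using
\[
n = O\!\left(\frac{n_0 \log(1/\delta)}{\varepsilon}\right)
\]
samples. Substituting the Klivans-Meka bound for $n_0$ yields the claimed sample complexity
\[
n = O\!\left(\frac{e^{O(\lambda t)}\log(\dims/\eta)\log(1/\delta)}{\varepsilon\,\eta^4}\right),
\]
and privacy of the resulting algorithm follows immediately from the stability-based reduction (which is proved by a standard argument: run the non-private algorithm on disjoint subsamples, then release the mode via a stable histogram / propose-test-release mechanism).

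There is essentially no technical obstacle here, since both ingredients are already stated as black boxes earlier in the excerpt; the entire content of the corollary is the observation that the Klivans-Meka output qualifies as a ``fixed value $Y$'' in the sense required by the stability lemma. The only minor subtlety worth noting is that the success probability in the lemma is $1-\delta$ rather than $2/3$, which is stronger than needed for the corollary's $2/3$ guarantee; this is a free improvement and does not change the sample-complexity bound.
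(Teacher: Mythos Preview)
The proposal is correct and matches the paper's approach exactly: the paper presents the corollary as an immediate consequence of composing the stability-based mode lemma with the Klivans--Meka non-private structure learning theorem, and your write-up spells out precisely this composition, including the key observation that the dependency graph is a fixed value $Y$ depending only on $\cD$.
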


\section{Lower Bounds for Structure Learning of Graphical Models}
\label{sec:struct-lb}

%
%
%\begin{lemma}
%  Suppose there exists an $\alpha$-packing $\mathcal{P}_\alpha$ of a
%  set of distributions $H$.
%\end{lemma}
%

In this section, we will prove structure learning lower bounds under pure DP or zero-concentrated DP. 
The graphical models we consider are the Ising models and pairwise graphical model. 
However, we note that all the lower bounds for the Ising model also hold for binary $t$-wise MRFs, since the Ising model is a special case of binary $t$-wise MRFs corresponding to $t=2$. 
We will show that under $(\eps,0)$-DP or $\rho$-zCDP, a polynomial dependence on the dimension is unavoidable in the sample complexity.

In Section~\ref{sec:struct-lb-Ising}, we assume that our samples are generated from an Ising model. In Section~\ref{sec:struct-lb-pairwise}, we extend our lower bounds to pairwise graphical models.

\subsection{Lower Bounds for Structure Learning of Ising Models}
\label{sec:struct-lb-Ising}

\begin{theorem}
\label{thm:str-ising}
Any $(\eps,0)$-DP algorithm which learns the structure of an Ising model with minimum edge weight $\eta$ with probability at least $2/3$ requires
$\ns = \Omega\Paren{\frac{\sqrt{\dims}}{\eta\eps} + \frac{\dims}{\eps}}$  samples. 
Furthermore, at least $\ns =\Omega\Paren{ \sqrt {\frac{\dims} {\rho}} }$ samples are required for the same task under $\rho$-zCDP.
\end{theorem}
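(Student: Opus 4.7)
The plan is to apply Theorem~\ref{thm:dp_fano} (DP Fano) to a family of Ising models whose dependency graphs are sub-matchings of a fixed perfect matching on $\dims$ nodes, and then to derive the zCDP bound by an analogous argument in which the linear group-privacy cost of pure DP is replaced by the quadratic cost of $\rho$-zCDP. Fix the pairs $\{(2i-1,2i):i\in[\dims/2]\}$, and for each $c\in\{0,1\}^{\dims/2}$ let $\cD_c$ denote the Ising model in which the edge $(2i-1,2i)$ has weight $\eta$ iff $c_i=1$ and all other entries of the interaction matrix vanish. Distinct $c$ give distinct dependency graphs, so any structure-learning algorithm must recover $c$. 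Since $\cD_c$ factorises across the $\dims/2$ independent pairs, a per-pair $\chi^2$ computation gives $\dkl{\cD_c}{\cD_{c'}}=O(k\eta^2)$ and $\dtv{\cD_c}{\cD_{c'}}=O(\eta\sqrt{k})$ whenever $\ham{c}{c'}=k$; combined with Lemma~\ref{lem:couplingdistance}, this yields a coupling of $\ns$ i.i.d.\ samples with expected Hamming distance $D\le \ns\cdot\min(1,\,C\eta\sqrt{k})$ for an absolute constant $C$.

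For the $\Omega(\sqrt{\dims}/(\eta\eps))$ term, I would restrict $\cV$ to the codewords of a Gilbert--Varshamov binary code on $\{0,1\}^{\dims/2}$ of minimum distance $\dims/8$ and size $|\cV|=2^{\Omega(\dims)}$ (which exists by the construction behind Lemma~\ref{lem:GV}). Every pair in $\cV$ differs in at most $\dims/2$ coordinates, so $D\le O(\ns\eta\sqrt{\dims})$ while $\log|\cV|=\Omega(\dims)$; using the $0/1$ structure loss ($\alpha=1$) in Theorem~\ref{thm:dp_fano} forces $\eps D=\Omega(\log|\cV|)$, yielding $\ns=\Omega(\sqrt{\dims}/(\eta\eps))$. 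For the $\Omega(\dims/\eps)$ term, I would instead take $\cV$ to be the full family indexed by all $2^{\dims/2}$ matching patterns and invoke only the trivial coupling bound $D\le \ns$; Theorem~\ref{thm:dp_fano} then gives $\eps \ns=\Omega(\log|\cV|)=\Omega(\dims)$ and hence $\ns=\Omega(\dims/\eps)$.

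The zCDP bound $\Omega(\sqrt{\dims/\rho})$ requires proving a zCDP analogue of Theorem~\ref{thm:dp_fano}. The only place privacy enters the proof of Theorem~\ref{thm:dp_fano} is through the group-privacy lemma (Lemma~\ref{lem:group}), and under $\rho$-zCDP the R\'enyi-divergence cost of moving between datasets at Hamming distance $k$ scales like $\rho k^2$ rather than $\eps k$. Re-running the aggregation step with this replacement, and converting R\'enyi divergence to total variation via a Pinsker-type bound, gives an inequality of the form $\rho D^2=\Omega(\log M)$; instantiating with the full matching family yields $\ns=\Omega(\sqrt{\dims/\rho})$. The main obstacle will be precisely this re-derivation: the quadratic $k^2$ dependence must be integrated against the tail of the coupling produced by Markov's inequality, whereas in the pure-DP proof the linear $k$ interacts cleanly with the indicator event $\{\ham{\Xon}{\Yon}\le 10D\}$. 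Once this zCDP Fano inequality is in place, the rest of the argument is packing plus the same $\chi^2$ bounds on products of two-point distributions used for the pure-DP bounds.
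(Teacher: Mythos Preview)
Your pure-DP argument is essentially the paper's: the same matching construction indexed by $c\in\{0,1\}^{\dims/2}$, the same per-pair $\chi^2$ bound, and DP Fano (Corollary~\ref{coro:fano}). One simplification you are missing: the GV-code restriction is unnecessary. For structure learning the loss is $0/1$, so condition~(a) of Theorem~\ref{thm:dp_fano} holds for \emph{any} two distinct $c,c'$, and the TV bound $\dtv{\cD_c}{\cD_{c'}}\le \min(2\eta\sqrt{\dims},1)$ already holds uniformly over the full family via Pinsker plus additivity of KL. The paper therefore applies Fano once to the full family with $D\le \ns\cdot\min(2\eta\sqrt{\dims},1)$; the two pure-DP terms drop out of the $\min$ rather than from two separate packings.

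For zCDP your route genuinely differs. The paper does not re-derive a zCDP Fano at all: it simply invokes a known zCDP packing lemma (stated there as Lemma~\ref{lem:coupling-zCDP}, due to Bun--Steinke), which says that distinguishing $M$ distributions over $\cX^\ns$ with constant probability forces $\rho=\Omega(\log M/\ns^2)$. Applying this to the $2^{\dims/2}$ matching models gives $\ns=\Omega(\sqrt{\dims/\rho})$ in one line. Your plan to re-run the proof of Theorem~\ref{thm:dp_fano} with quadratic group-privacy cost would work, and would in principle yield the sharper form $\rho D^2=\Omega(\log M)$, but for this application only the trivial $D\le \ns$ is ever used, so the coupling refinement buys nothing here. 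The paper's approach is shorter; yours is more self-contained and highlights where the $\eps\to\rho D$ substitution happens.
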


\begin{proof}
Our lower bound argument is in two steps. 
The first step is to construct a set of distributions, consisting of $2^{\frac{\dims}{2}}$ different Ising models such that any feasible structure learning algorithm should output different answers for different distributions. In the second step, we utilize our Private Fano's inequality, or the packing argument for zCDP~\cite{BunS16} to get the desired lower bound.

To start, we would like to use the following binary code to construct the distribution set. Let $\cC = \{ 0,1\}^{\frac{\dims}{2}}$,
%\begin{claim}
%\label{clm:GV}
%There exists binary code $\cC \subset \{ 0,1\}^{\frac{\dims}{2}}$ which satisfies the following conditions:
%\begin{enumerate}
%\item $\absv{\cC} > 2^{\frac{\dims}{16}}$,
%\item $\min_{a,b \in \cC} \ham{a}{b}>\frac{\dims}{16}$,
%\item $\forall c \in \cC, \absv{i:c[i]=1}=\frac{\dims}{4}$.
%\end{enumerate}
%\end{claim}
given $c \in C $, we construct the corresponding distribution $\cD (A^c, 0)$ with $A^c \in \RR^{\dims \times \dims}$ defined as follows: for $i \in  [\frac{\dims}{2}], A^c_{2i-1,2i} = A^c_{2i,2i-1} = \eta \cdot c[i]$, and 0 elsewhere. By construction, we divide the $\dims$ nodes into $\frac{\dims}{2}$ different pairs, where there is no correlation between nodes belonging to different pairs. Furthermore, for pair $i$, if $c[i] = 0$, which means the value of node $2i-1$ is independent of node $2i$, it is not hard to show 
\begin{align}
\probof {Z_{2i-1}=1, Z_{2i} =1 } =\probof {Z_{2i-1}=-1,  Z_{2i} = -1} = \frac1{4},\nonumber\\
\probof {Z_{2i-1}=1, Z_{2i} =-1 } =\probof {Z_{2i-1}=-1,  Z_{2i} = 1} = \frac1{4}.\nonumber
\end{align}
On the other hand, if $c[i]=1$,
\begin{align}
\probof {Z_{2i-1}=1, Z_{2i} =1 } =\probof {Z_{2i-1}=-1,  Z_{2i} = -1} = \frac1{2} \cdot \frac{e^{\eta}}{e^{\eta}+1},\nonumber\\
\probof {Z_{2i-1}=1, Z_{2i} =-1 } =\probof {Z_{2i-1}=-1,  Z_{2i} = 1} = \frac1{2} \cdot \frac{1}{e^{\eta}+1}.\nonumber
\end{align}
The Chi-squared distance between these two distributions is 
$$8 \Brack{\Paren{\frac12\cdot\frac{e^\eta}{e^\eta+1} -\frac14}^2+ \Paren{\frac12\cdot\frac{1}{e^\eta+1} -\frac14}^2} = \Paren{1-\frac{2}{e^\eta+1}}^2 \le 4\eta^2.$$

Now we want to upper bound the total variation distance between $\cD (A^{c_1}, 0)$ and $\cD (A^{c_2}, 0)$ for any $c_1 \neq c_2 \in \cC$. Let $P_i$ and $Q_i$ denote the joint distribution of node $2i-1$ and node $2i$ corresponding to $\cD (A^{c_1}, 0)$ and $\cD (A^{c_2}, 0)$.
We have that
\begin{align*}
d_{TV} \Paren{\cD (A^{c_1}, 0),\cD (A^{c_2}, 0)} &\le \sqrt{ 2 d_{KL} \Paren{\cD (A^{c_1}, 0),\cD (A^{c_2}, 0)} }\\
&=  \sqrt{ 2 \sum_{i=1}^{\frac{\dims}{2}}d_{KL} \Paren{P_i, Q_i}} \le \min\Paren{2\eta\sqrt{\dims}, 1},
\end{align*}
where the first inequality is by Pinsker's inequality, and the last inequality comes from the fact that the KL divergence is always upper bounded by the Chi-squared distance.

In order to attain pure DP lower bounds, we utilize the corollary of DP Fano's inequality for estimation (Theorem~\ref{coro:fano}).

For any $c_1, c_2 \in C$, we have $d_{TV} \Paren{\cD (A^{c_1}, 0),\cD (A^{c_2}, 0)} \le \min\Paren{2\eta\sqrt{\dims}, 1}$. By the property of maximal coupling~\cite{Hollander12}, there must exist some coupling between $\cD^{\ns} (A^{c_1}, 0)$ and $\cD^{\ns} (A^{c_2}, 0)$  with expected Hamming distance smaller than $\min\Paren{2\ns\eta\sqrt{\dims} , \ns}$. Therefore, we have $\eps = \Omega \Paren{\frac{\log \absv{\cC}}{\min\Paren{\ns\eta\sqrt{\dims} , \ns}}}$, and  accordingly, $\ns = \Omega\Paren{\frac{\sqrt{\dims}}{\eta\eps} + \frac{\dims}{\eps}}$.

Now we move to zCDP lower bounds. We utilize a different version of the  packing argument~\cite{BunS16}, which works under zCDP.
\begin{lemma}
\label{lem:coupling-zCDP}
Let $\cV=\{P_1, P_2,...,P_M\}$ be a set of $M$ distributions over $\cX^{\ns}$. Let $ \{S_{i}\}_{i \in [M]}$ be a collection of disjoint subsets of $\cS$. If there exists an $\rho$-zCDP algorithm $\cA : \cX^{\ns} \to \cS$ such that for every $i \in [M]$, given $\Zon \sim P_i$, $\probof{\cA(\Zon) \in S_{i}}\ge \frac{9}{10}$, then
$$\rho = \Omega \Paren{\frac{\log M}{\ns^2}}.$$
\end{lemma}
By Lemma~\ref{lem:coupling-zCDP}, we derive $\rho = \Omega\Paren{\frac{\dims}{\ns^2}}$ and $\ns =\Omega\Paren{ \sqrt {\frac{\dims} {\rho}} }$ accordingly.
\end{proof}

\subsection{Lower Bounds for Structure Learning of Pairwise Graphical Models}
\label{sec:struct-lb-pairwise}

Similar techniques can be used to derive lower bounds for pairwise graphical models.

\begin{theorem}
Any $(\eps,0)$-DP algorithm which learns the structure of the $\dims$-variable pairwise graphical models with minimum edge weight $\eta$ with probability at least $2/3$ requires
$\ns = \Omega\Paren{\frac{\sqrt{\dims}}{\eta\eps} + \frac{\ab^2\dims}{\eps}}$ samples. Furthermore, at least $\ns =\Omega\Paren{  \sqrt {\frac{\ab^2\dims} {\rho}} }$ samples are required for the same task under $\rho$-zCDP.
\end{theorem}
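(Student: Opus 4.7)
The plan is to lift the two-step argument of Theorem~\ref{thm:str-ising} to alphabet size $\ab$: first, build a packing $\cV = \{\cD_c : c \in \cC\}$ of pairwise graphical models whose dependency graphs are pairwise distinct; second, translate the packing size $|\cC|$ into a sample complexity lower bound by invoking Corollary~\ref{coro:fano} for pure DP and Lemma~\ref{lem:coupling-zCDP} for zCDP.

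For the $\sqrt{\dims}/(\eta\eps)$ summand, I would reuse the pair-matching construction from the Ising proof inside the pairwise model. Partition the $\dims$ nodes into $\dims/2$ disjoint pairs $(2i-1,2i)$, and for each $c \in \{0,1\}^{\dims/2}$ define $W^c_{2i-1,2i}$ to be the zero matrix when $c_i=0$ and a fixed symmetric, row-sum-centered rank-one pattern scaled by $\eta$ when $c_i=1$; all other weight matrices and mean-field vectors are zero. Two distributions in this family differ only on the pairs where $c$ differs, and because the joint distribution factorizes across pairs, the chi-squared divergence bounds additively to give a TV distance of $\min(2\eta\sqrt{\dims},1)$ via Pinsker. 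Combining $|\cC| = 2^{\dims/2}$ with the maximal coupling of expected Hamming distance $\min(2\eta\sqrt{\dims},1)\cdot \ns$ in Corollary~\ref{coro:fano} yields $\ns = \Omega(\sqrt{\dims}/(\eta\eps))$, exactly as in the Ising case.

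For the $\ab^2 \dims/\eps$ summand, one needs an enlarged packing of size $2^{\Omega(\ab^2 \dims)}$. The natural construction exploits the extra degrees of freedom in the $\ab \times \ab$ weight matrices: group the nodes into blocks and, within each block, allow each of the $\Omega(\ab^2)$ entry locations of a designated edge matrix to be toggled independently between $0$ and $\eta$, arranging the blocks so that distinct toggle patterns yield distinct edge sets in the dependency graph (for instance, by introducing auxiliary ``selector'' nodes per block whose connection to the block is activated only when the corresponding entry is nonzero). Setting the edge weights to be $O(1/\ab)$ keeps the width $\lambda$ bounded and leaves the per-block distributions uniformly $\Omega(1/\ab)$-unbiased by Lemma~\ref{lem:pairwise-unbiased}, so the pairwise TV stays $O(1)$. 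With $|\cC| = 2^{\Omega(\ab^2 \dims)}$ and constant TV, Corollary~\ref{coro:fano} delivers $\ns = \Omega(\ab^2 \dims/\eps)$, and Lemma~\ref{lem:coupling-zCDP} delivers $\ns = \Omega(\sqrt{\ab^2 \dims/\rho})$ for zCDP.

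The main obstacle is the packing amplification: one needs $2^{\Omega(\ab^2)}$ genuinely different dependency graphs per block whose induced distributions remain close in total variation. The naive idea of toggling entries of a single $W_{2i-1,2i}$ fails because all nonzero patterns share the same structural edge; the construction must therefore insert auxiliary nodes that promote entry-level toggles into distinct structural edges, while still factorizing enough to keep the KL divergence additive across blocks. Once this combinatorial step is in place, the remaining KL bookkeeping follows from Pinsker's inequality and the $\delta$-unbiasedness guarantee, and the lower bound conclusions follow by the same applications of Corollary~\ref{coro:fano} and Lemma~\ref{lem:coupling-zCDP} as in the Ising case.
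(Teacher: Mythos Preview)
Your handling of the $\sqrt{\dims}/(\eta\eps)$ term is fine and matches the paper: it simply inherits this bound from the Ising case (Theorem~\ref{thm:str-ising}), since Ising models are pairwise graphical models with $\ab=2$.

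For the $\ab^2\dims/\eps$ and zCDP terms, the paper's construction is far simpler than what you propose. It keeps the perfect-matching pattern and lets each $W_{2l-1,2l}$ range over $\cC = \{W \in \{0,\eta\}^{\ab\times\ab} : W=W^T\}$, producing $2^{\Theta(\ab^2\dims)}$ models; it then invokes Corollary~\ref{coro:fano} with only the trivial coupling bound $D\le \ns$ (since total variation is at most $1$) to obtain $\ns=\Omega(\ab^2\dims/\eps)$, and Lemma~\ref{lem:coupling-zCDP} on the same packing for the zCDP bound. No auxiliary selector nodes, no KL or TV computation beyond the trivial one.

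You correctly flag a subtlety the paper does not address: in that construction, all nonzero choices of $W_{2l-1,2l}$ yield the \emph{same} structural edge $(2l-1,2l)$, so the packing contains only $2^{\dims/2}$ distinct dependency graphs rather than $2^{\Theta(\ab^2\dims)}$, and a Fano-style argument for \emph{structure} learning needs distinct structures. Your auxiliary-node repair is the right instinct but, as stated, is not yet a working construction: encoding $\Omega(\ab^2)$ structural bits per block via selector nodes requires $\Omega(\ab^2)$ extra nodes per block, which inflates the total node count well beyond $\dims$ unless $\ab^2 = O(\dims)$; you also have not specified how the selector edges respect the minimum-edge-weight parameter $\eta$ or keep the width bounded. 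To make your route rigorous you would need to spell out a concrete $\dims$-node family with $2^{\Omega(\ab^2\dims)}$ pairwise-distinct edge sets, which already forces $\ab^2 = O(\dims)$ just from the count of possible edge sets on $\dims$ nodes.
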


\begin{proof}
Similar to before, we start with constructing a distribution set consisting of $2^{O\Paren{\ab\dims}}$ different pairwise graphical models such that any accurate structure learning algorithm must output different answers for different distributions.

Let $\cC$ be the real symmetric matrix with each value constrained to either $0$ or $\eta$, i.e., $\cC = \{W \in \{0, \eta \}^{\ab \times \ab}: W = W^T\}$. 
Without loss of generality, we assume $\dims$ is even. 
Given $c = [ c_1, c_2,\cdots, c_{\dims}]$, where $ c_1, c_2,\cdots, c_{\dims} \in C $, we construct the corresponding distribution $\cD (\cW^c, 0)$ with $\cW^c$ defined as follows: for $l \in  [\frac{\dims}{2}], W^c_{2l-1,2l} = c_l$, and for other pairs $(i,j)$, $W^c_{i,j} =  0$. Similarly, by this construction we divide $\dims$ nodes into $\frac{\dims}{2}$ different pairs, and there is no correlation between nodes belonging to different pairs.

We first prove lower bounds under $(\eps,0)$-DP.  By Theorem~\ref{coro:fano}, $\eps = \Omega\Paren{\frac{\log \absv{\cC}}{\ns}}$, since for any two $\ns$-sample distributions, the expected coupling distance can be always upper bounded by $\ns$. We also note that $\absv{\cC} = \Paren{ 2^{\frac{\ab (\ab+1)}{2}}}^\dims$. Therefore, we have $\ns = \Omega\Paren{\frac{\ab^2\dims}{\eps}}$. At the same time, $\ns = \Omega\Paren{\frac{\sqrt{\dims}}{\eta\eps}}$ is another lower bound, inherited from the easier task of learning Ising models.

With respect to zCDP, we utilize Lemma~\ref{lem:coupling-zCDP} and obtain $\rho = \Omega\Paren{\frac{\ab^2 \dims}{\ns^2}}$. 
Therefore, we have $\ns =\Omega\Paren{  \sqrt {\frac{\ab^2\dims} {\rho}} }$.
\end{proof}
%
%\begin{theorem}
%  Any algorithm which satisfies pure or concentrated differential
%  privacy and learns the structure of a pairwise graphical model with
%  probability at least $2/3$ requires $\poly(\dims)$ samples.
%  
%  
%\end{theorem}

%%% Local Variables:
%%% mode: latex
%%% TeX-master: "main"
%%% End:

%

%\renewcommand{\dtv}{d_{\mathrm {TV}}}

\chapter{Private Hypothesis Selection}
\label{cha:phs}
\section{Introduction}
Perhaps the most fundamental question in statistics is that of simple hypothesis testing.
Given two known distributions $p$ and $q$, and a dataset generated according to one of these distributions, the goal is to determine which distribution the data came from.
%The optimal solution to this problem is the likelihood-ratio test, as shown by Neyman and Pearson~\cite{NeymanP33}.
This problem can be generalized in two ways that we consider in this paper.
First, rather than just two distributions, one can consider a setting where the goal is to select from a set of $k$ distributions.
We refer to this setting as \emph{$k$-wise simple hypothesis testing}.
Furthermore, the data may not have been generated according to \emph{any} distribution from the set of known distributions -- instead, the goal is to just select a distribution from the set which is competitive with the best possible (in an appropriate distance measure).
This problem is the core object of this chapter, and we denote it as \emph{hypothesis selection}.

The hypothesis selection problem appears naturally in a number of settings.
For instance, we may have a collection of distribution learning algorithms that are effective under different assumptions on the data, but it is unknown which ones hold in advance.
Hypothesis selection allows us to simply run all of these algorithms in parallel and pick a good output from these candidate distributions afterwards.
More generally, a learning algorithm may first ``guess'' various parameters of the unknown distribution during and for each guess produce a candidate output distribution. Hypothesis selection allows us to pick a final result from this set of candidates.
Finally, near-optimal sample complexity bounds can often be derived by enumerating all possibilities within some parametric class of distributions (i.e.,  a cover) and then applying hypothesis selection with this enumeration as the set of hypotheses~\cite{DevroyeL01}.

Classical work (e.g.,~\cite{Yatracos85, DevroyeL96, DevroyeL97, DevroyeL01}) on these problems has shown that, even in the most general setting of hypothesis selection, there are effective algorithms with sample complexity scaling only \emph{logarithmically} in the number of candidate hypotheses.
Building on this, there has been significant study into hypothesis selection with additional desiderata, including computational efficiency, robustness, weaker access to hypotheses, and more (e.g.,~\cite{MahalanabisS08, DaskalakisDS12b, DaskalakisK14, SureshOAJ14, AcharyaJOS14b, DiakonikolasKKLMS16, AcharyaFJOS18, BousquetKM19, BunKSW2019}).

One consideration which has not received significant attention in this setting is that of \emph{data privacy}, which we explore in this chapter.
%The dataset may be comprised of personally sensitive data, including medical records, location history, or salary information, and classical hypothesis selection algorithms may violate the privacy of individuals who provided the data. 
%Motivated by this issue, our goal is to perform our statistical analysis while ensuring that the output does not reveal significant information about any individual datapoint.
%We will be concerned with the formalization of this principle as \emph{differential privacy}~\cite{DworkMNS06}, which can be seen as the gold standard for modern data privacy.

We first distinguish between two common definitions of differential privacy.
The first is \emph{central differential privacy} (also known as the \emph{trusted curator} setting), which has been frequently used in previous chapters. In central differential privacy, users transmit their data to a central server without any obfuscation, and the algorithm operates on this dataset with the restriction that its final output must be appropriately privatized.

The second is \emph{local differential privacy} (LDP)~\cite{Warner65, EvfimievskiGS03, KasiviswanathanLNRS11}, in which users trust no one: each individual privatizes their own data before sending it to the central server.
In some sense, LDP places the privacy barrier closer to the users, and as a result, has seen adoption in practice by a number of companies that analyze sensitive user data, including Google~\cite{ErlingssonPK14}, Microsoft~\cite{DingKY17}, and Apple~\cite{AppleDP17}.

Recently, Bun, Kamath, Steinke, and Wu~\cite{BunKSW2019} showed that under the constraint of central differential privacy, one can still perform hypothesis selection with sample complexity which scales logarithmically in the number of hypotheses.
A priori, it was not clear that this would be possible.
Non-privately, one can apply methods which essentially ask ``Which of these two distributions fits the data better?'' for all $O(k^2)$ pairs of hypotheses.
Crucially, one can reuse the same set of $O(\log k)$ samples for all such comparisons (rather than drawing fresh samples for each one), and accuracy can be proved by a Chernoff and union bound style argument.
A naive privatization of this method would result in a polynomial dependence on $k$, due to issues arising from sample reuse and the composition of privacy losses.
\cite{BunKSW2019} avoid this issue by a careful application of tools from the differential privacy literature (i.e., the exponential mechanism~\cite{McSherryT07}), achieving an $O(\log k)$ sample complexity.
However, their method relies upon techniques which are not available in the local model of differential privacy.
Indeed, at first glance, it may not be clear how to improve upon an $\tilde O(k^2)$ sample complexity in the local model, achieved by simply using a fresh set of samples for each comparison, and using randomized response to privately perform the comparison.
This raises the question: what is the sample complexity of hypothesis selection under local differential privacy?
Can the problem be solved with a logarithmic dependence of the number of samples on the number of candidate hypotheses?
Or do we require a polynomial number of samples?

\subsection{Results, Techniques, and Discussion}
To describe our results, we more formally define the problems of $k$-wise simple hypothesis testing and hypothesis selection. 
%For a definition of the total variation distance $\dtv(p,q)$, we refer the reader to Section~\ref{sec:pre}.

\begin{definition}
  Suppose we are given a set of $n$ data points $\Xon$, which are sampled i.i.d.\ from some (unknown) distribution $p$, and a set of $k$ distributions $\mathcal{Q} = \{q_1, \dots, q_k\}$.
  The goal is to output a distribution $\hat q \in \mathcal{Q}$ such
  that $\dtv{p}{\hat q} \leq c \min_{q^* \in \mathcal{Q}} \dtv{p}{q^*}
  + \alpha$, for some $c =c(\alpha, k)$.
  
  We refer to the value of $c(\alpha,k)$ as the \emph{agnostic approximation factor}.
  If $c(\alpha,k)$ is an absolute constant, then we denote this problem as \emph{hypothesis selection}.
%  If we require that $p \in \mathcal{Q}$ and that we must correctly identify $p$ (equivalently, if $c(\alpha,k) = \infty$), then we denote this problem as \emph{$k$-wise simple hypothesis testing}.
  If $c(\alpha, k)$ grows with $k$ and $\frac1\alpha$, we refer to
  this problem as \emph{weak hypothesis selection}. If we require that
  $p \in \mathcal{Q}$, that $\min_{i\neq j} \dtv{q_i}{q_j} \ge \alpha$, and that the algorithm must correctly identify $p$, then we denote this problem as \emph{$k$-wise simple hypothesis testing}.
\end{definition}

We introduce a formal definition of $\ve$-local differential privacy ($\ve$-LDP) in Section~\ref{sec:pre}.

Our first result shows that $k$-wise simple hypothesis testing (and thus, hypothesis selection) requires $\Omega(k)$ samples.
\begin{restatable}{theorem}{lb}\label{thm:fully-interactive-lb}
    Let $\ve \in (0,1)$.
  Suppose $M$ is an $\ve$-LDP protocol that solves the $k$-wise
  simple hypothesis testing problem  with probability at least $1/3$
  when given $n$ samples from some distribution $p \in \mathcal{Q}$,
  for any set $\mathcal{Q} = \{q_1, \ldots, q_k\}$ such
  that $\min_{i \neq j} \dtv{q_i}{q_j} \ge \alpha$. Then
  $n = \Omega\left( \frac{k}{ \alpha^2 \ve^2}\right).$
\end{restatable}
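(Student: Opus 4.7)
The plan is to prove an information-theoretic lower bound for LDP protocols based on a carefully constructed hard instance exhibiting product structure. Let $q_0 = \mathrm{Bern}(1/2)^{\otimes k}$ be a reference distribution on $\{0,1\}^k$, and for each $i \in [k]$ define $q_i$ to be the product distribution identical to $q_0$ except that the $i$-th coordinate is replaced by $\mathrm{Bern}(1/2+\alpha)$. A short calculation on the marginal of coordinates $(i,j)$ shows $\dtv{q_i}{q_j} = \alpha$ for every $i \neq j$, so $\mathcal{Q} = \{q_1,\ldots,q_k\}$ is a valid instance of the $k$-wise simple hypothesis testing problem.

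The central observation is the \emph{orthogonality} of the score functions under the reference: the deviations $f_i(x) := q_i(x)/q_0(x) - 1 = 2\alpha(2x_i - 1)$ satisfy $\mathbb{E}_{q_0}[f_i f_j] = 0$ for $i \neq j$, since the coordinates of $x$ are independent under $q_0$. Placing a uniform prior $S \sim \mathrm{Unif}([k])$ and drawing $X^1,\ldots,X^n$ i.i.d.\ from $q_S$, this orthogonality collapses the Ingster second-moment calculation to a single diagonal contribution per sample, yielding $\chi^2\bigl(\tfrac{1}{k}\sum_i q_i^{\otimes n} \,\big\|\, q_0^{\otimes n}\bigr) = \tfrac{1}{k}[(1+4\alpha^2)^n - 1]$ in the non-private case. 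To port this to LDP I will invoke a $\chi^2$-contraction inequality in the spirit of Duchi--Jordan--Wainwright and Acharya--Canonne--Tyagi, carefully preserving the cross-term cancellations so that the transcript-level second moment satisfies $\chi^2 \leq \tfrac{1}{k}\bigl[(1 + C\varepsilon^2\alpha^2)^n - 1\bigr]$; sequential interactivity will be handled by a chain-rule induction over rounds, conditioning on previously released messages and applying the contraction round by round.

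From here, combining Le Cam's two-point method (testing the mixture against the reference $q_0$) with the LDP Fisher-information bound of Duchi--Jordan--Wainwright will yield the conclusion. The key quantitative input is that each $\varepsilon$-LDP view of a product-Bernoulli sample carries only an $O(\varepsilon^2/k)$ share of Fisher information about each of the $k$ coordinate biases on average, whereas identifying $S$ amounts to estimating the bias of each coordinate to precision $O(\alpha)$, requiring $\Omega(1/\alpha^2)$ Fisher information per coordinate; summed over coordinates this forces $n \cdot \varepsilon^2 \gtrsim k/\alpha^2$, i.e., $n = \Omega(k/(\alpha^2\varepsilon^2))$.

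The main obstacle is precisely the jump from off-the-shelf LDP bounds to the claimed $k$-factor. A KL-based contraction applied hypothesis-by-hypothesis only yields $\Omega(\log k / (\alpha^2\varepsilon^2))$, and a pairwise $\chi^2$ contraction that does not exploit orthogonality yields only the trivial two-point bound $\Omega(1/(\alpha^2\varepsilon^2))$. Extracting the full $k$ factor requires both (i) preserving the coordinate orthogonality through the LDP contraction so that only the $k$ diagonal terms $\mathbb{E}_{q_0}[f_S^2]$ contribute to the mixture $\chi^2$, and (ii) handling the full interactivity of $M$ via a round-by-round chain rule that does not destroy this orthogonal structure after conditioning on partial transcripts. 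These two steps together, applied to the product-Bernoulli construction above, constitute the core of the proof.
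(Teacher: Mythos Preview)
Your hard instance is essentially the paper's: $1$-sparse mean estimation, with product Bernoullis in place of the paper's $1$-sparse spherical Gaussians $\mathcal{N}(\theta,I_d)$, $\theta\in\{\pm\alpha e_j : j\in[d]\}$. The paper's proof, however, is three lines: it checks the pairwise TV separation and then cites the fully-interactive LDP lower bound for this very problem (Corollary~6 of~\cite{DuchiR19}, also~\cite{BravermenGMNW}) as a black box. So you are proposing to re-derive, for the Bernoulli variant, exactly the result the paper imports.

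That is a reasonable program, and your final paragraph (the DJW Fisher-information heuristic) is the correct mechanism. Two caveats. First, the ``mixture $\chi^2$ / Le Cam two-point'' framing in the middle is a detour: bounding $\chi^2\bigl(\tfrac1k\sum_i Q_i^T \,\big\|\, Q_0^T\bigr)$ only controls \emph{detection} of the mixture against the reference $q_0\notin\mathcal{Q}$, which is not the identification problem you need; moreover the transcript-level score functions $g_i$ are \emph{not} orthogonal under $Q_0^T$, so the cross-term cancellation you assert does not hold as stated. The argument that actually extracts the factor $k$ is the DJW operator-norm contraction $\sum_i D_{\mathrm{KL}}(Q_i^T\|Q_0^T)\lesssim n\varepsilon^2\,\bigl\|[\langle f_i,f_j\rangle_{q_0}]_{i,j}\bigr\|_{\mathrm{op}}=4n\varepsilon^2\alpha^2$, combined with $\Pr[\hat S=S]-\tfrac1k\le\sqrt{\tfrac1{2k}\sum_i D_{\mathrm{KL}}}$. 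Second, extending this from non-interactive channels to fully interactive protocols via a ``round-by-round chain rule'' is precisely the technical contribution of~\cite{DuchiR19}; your one-line sketch does not substitute for that argument, and once you invoke it you have recovered the paper's proof.
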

% \begin{theorem}[Theorem~\ref{??}]
%   \label{thm:informal-lb}
%   Suppose we are given $n$ samples from an unknown distribution $p$ and a set of descriptions of $k$ distributions $\mathcal{Q}$, such that $p \in \mathcal{Q}$ and all distributions in $\mathcal{Q}$ are $\Omega(\alpha)$-far from each other in total variation distance.
%   There exists a choice of $p$ and $\mathcal{Q}$ such that any $\varepsilon$-LDP algorithm which can correctly identify $p$ with probability at least $2/3$ requires $n = \Omega(k/\alpha^2\varepsilon^2)$ samples.
% \end{theorem}
%\gnote{Steven, double check the dependence on $\alpha$ and $\varepsilon$ please. It might be greater for some models (e.g., extra log factor for non-interactive), but since this is informal it's fine if this is just the lower bound for the weakest model (fully interactive).}
The theorem above shows that the cost of hypothesis testing is exponentially larger under local differential privacy than under central differential privacy (i.e., $\Omega(k)$ versus $O(\log k)$), and it holds even when the LDP protocol is allowed the power of full interactivity.
The construction used to prove this lower bound is the problem of $1$-sparse mean estimation, previously identified as a problem of interest by Duchi, Jordan, and Wainwright~\cite{DuchiJW13, DuchiJW17}. The lower bound follows from results in \cite{DuchiR19}.
Given the construction, our result can be seen as a translation of existing results. Further details are given in Section~\ref{sec:lb}.

With a lower bound of $\Omega(k)$ samples, and the aforementioned naive upper bound of $\tilde O(k^2)$ samples, the problem remains to identify the correct sample complexity.
We provide two different algorithms which require $\tilde O(k)$ samples, nearly matching this lower bound.
The first is for the special case of $k$-wise simple hypothesis testing, and is a non-interactive protocol -- each user only sends a message to the curator once, independently of the messages sent by other users. 
The second solves the more general problem of hypothesis selection, but requires sequential interactivity (albeit only $O(\log \log k)$ rounds of interaction): users still only send a message to the curator once, but the curator may request different types of messages from later users based on the messages sent by earlier users.
Less interaction in a protocol is generally preferred, and the role and power of interactivity in local differential privacy is one of the most significant questions in the area (see, e.g.~\cite{KasiviswanathanLNRS11,JosephMNR19,DanielyF19,DuchiR19,JosephMR20}).

%Jana%
\iffalse
Our first algorithmic result gives a non-interactive $\tilde O(k)$-sample algorithm for the special case of LDP $k$-wise simple hypothesis testing\footnote{We actually prove something slightly stronger -- if there exists a distribution in the set which is $O(\alpha^2/\log k)$-close to the unknown distribution (in total variation distance), then we output a distribution which is $O(\alpha)$-close. This gives a weak form of the hypothesis selection problem, but we would ideally like to handle the case where the best hypothesis is only $O(\alpha)$-close to the unknown distribution.}
(i.e., when the distribution generating samples is in the set of hypotheses).
\begin{theorem}[Informal version of Theorem~\ref{thm:1Round}]
  \label{thm:informal-ni}
  Suppose we are given $n$ samples from an unknown distribution $p$ and a set of descriptions of $k$ distributions $\mathcal{Q}$, such that $p \in \mathcal{Q}$ and all distributions in $\mathcal{Q}$ are $\Omega(\alpha)$-far from each other in total variation distance.
  There exists a non-interactive $\varepsilon$-LDP algorithm which identifies $p$ with high probability, given $n = O\left(\frac{k \log^3 k}{\alpha^4 \varepsilon^2}\right)$ samples.
\end{theorem}
Our algorithm is based on a noised log-likelihood test, though significant massaging and manipulation of the problem instance is required to achieve a desirable sample complexity.
\gnote{Jana and Gopi, can you add more salient points here to get people excited?}
A full description of our approach and results appears in Section~\ref{sec:non-adaptive}.
\fi

Our first algorithmic result gives a non-interactive mechanism with $\tilde O(k)$ sample complexity for sufficiently well separated instances. Define $\beta :=  \min_{q \in \mathcal{Q}} \dtv{p}{q}$.

\begin{theorem}
  \label{thm:1Round}
  %Suppose we are given a set of $n$ data points $X_1, \dots, X_n$,
  %which are sampled i.i.d.\ from some (unknown) distribution $p$, and
  %a set of $k$ distributions $\mathcal{Q} = \{q_1, \dots, q_k\}$.
  For every $\varepsilon \in [0,1)$, there is a non-interactive $\epsilon$-LDP algorithm that with
  probability at least $1-1/k^2$ outputs a distribution
  $\hat q \in \mathcal{Q}$ such that $\dtv{p}{\hat q} \leq \alpha$, if the number of samples
  $n \gg k (\log k)^3/(\alpha^4\epsilon^2)$ and $\beta \ll \alpha^2/\log k$.\footnote{We use $A\ll B$ to denote that $A\le c B$ for some sufficiently small constant $c>0$. Similarly we use $A \gg B$ to denote that $A \ge C B$ for some sufficiently large constant $C>0.$ $A \lesssim B$ is used interchangeably with $A=O(B)$. Similarly $A\gtrsim B$ is used interchangeably with $A=\Omega(B).$}
\end{theorem}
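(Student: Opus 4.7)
My plan is to design a non-interactive $\varepsilon$-LDP protocol based on a noised (clipped) log-likelihood test, and to obtain the sample complexity via a uniform random assignment of each user to a single hypothesis index. The condition $\beta \ll \alpha^2/\log k$ will be exactly what is needed to convert between TV-closeness and KL-closeness with an admissible logarithmic loss.

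\textbf{Step 1 (Massaging the instance).} I would first replace each $q_i \in \mathcal{Q}$ by a smoothed version $\tilde q_i := (1-\eta)q_i + \eta \bar q$, where $\bar q$ is an auxiliary reference distribution (e.g., the uniform mixture $\frac{1}{k}\sum_i q_i$), and $\eta = 1/\mathrm{poly}(k,1/\alpha)$ is chosen small enough that $d_{TV}(q_i,\tilde q_i) \ll \alpha$ for every $i$. The key gain from smoothing is that the clipped log-likelihoods $\tilde L_i(x) := \log \tilde q_i(x)$ satisfy $|\tilde L_i(x) - \tilde L_j(x)| \le L$ uniformly, for $L = O(\log k)$, so we will be able to release a bounded scalar per user.

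\textbf{Step 2 (One-round LDP protocol).} Each user $j$ independently draws $I_j \sim \mathrm{Unif}([k])$, computes $Y_j := \tilde L_{I_j}(X_j) - \tilde L_0(X_j) \in [-L,L]$ (subtracting a common ``pivot'' $\tilde L_0$ to ensure boundedness uniformly in $j$), releases $\widehat Y_j := Y_j + Z_j$ with $Z_j \sim \mathrm{Lap}(2L/\varepsilon)$, and reports $(I_j,\widehat Y_j)$. The Laplace mechanism makes this $\varepsilon$-LDP, and since each user sends a single message the protocol is non-interactive.

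\textbf{Step 3 (Estimation and selection).} For each $i \in [k]$, the curator forms $\hat\mu_i := \frac{1}{|\{j:I_j=i\}|}\sum_{j:I_j=i}\widehat Y_j$. A Chernoff bound gives $|\{j:I_j=i\}|=\Theta(n/k)$ for all $i$ with high probability, and the variance of $\hat\mu_i$ is $O(L^2 k/(n\varepsilon^2))$. A union bound then yields
\[
\max_{i\in[k]}\bigl|\hat\mu_i - \mathbb{E}_{X\sim p}[\tilde L_i(X)-\tilde L_0(X)]\bigr| \;\lesssim\; L\sqrt{\tfrac{k\log k}{n\varepsilon^2}}
\]
with probability at least $1-1/k^2$. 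The output is $\hat q := \tilde q_{\hat\imath}$ with $\hat\imath := \arg\max_i \hat\mu_i$.

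\textbf{Step 4 (Correctness).} Correctness reduces to a separation argument on $\mathbb{E}_p[\tilde L_i]$. For the best hypothesis $q^*$ with $d_{TV}(p,q^*)\le \beta$, a reverse-Pinsker inequality using the boundedness $|\tilde L_i(x)-\tilde L_j(x)|\le L$ gives
\[
D(p\,\|\,\tilde q^*) \;\lesssim\; L\cdot d_{TV}(p,\tilde q^*) \;=\; O(\beta \log k)\;\ll\;\alpha^2,
\]
using $\beta \ll \alpha^2/\log k$. For any ``bad'' $q_i$ with $d_{TV}(p,q_i)>\alpha$, standard Pinsker yields $D(p\,\|\,\tilde q_i) \ge 2\alpha^2 - O(\eta)$. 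Writing $\mathbb{E}_p[\tilde L_i] = -H(p) - D(p\,\|\,\tilde q_i)$ and taking differences, the gap between the best $q^*$ and any bad $q_i$ is $\Omega(\alpha^2)$. Provided the estimation error from Step 3 is $o(\alpha^2)$ — which is exactly the condition $n \gg kL^2\log k/\alpha^4\varepsilon^2 = k\log^3 k/(\alpha^4\varepsilon^2)$ — the argmax correctly eliminates every bad hypothesis, and $d_{TV}(p,\hat q) \le \alpha$.

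\textbf{Main obstacles.} The non-routine part is Step 1 together with the reverse-Pinsker step of Step 4: we must arrange uniform boundedness $L=O(\log k)$ without distorting the hypotheses by more than $o(\alpha)$ in TV, and then convert the TV-closeness $\beta$ to a KL bound paying only a factor $L$. This is precisely the reason for the hypothesis $\beta \ll \alpha^2/\log k$, and why achieving a better dependence on $\beta$ appears to require interactivity (which is addressed in subsequent sections of the paper).
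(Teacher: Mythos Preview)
Your overall architecture matches the paper's: partition users across the $k$ hypotheses, have each user release a Laplace-noised bounded log-likelihood, aggregate, and output the maximizer. The paper's preprocessing step is a \emph{flattening lemma} rather than your additive smoothing with $\bar q$: it builds a randomized map $\phi:[N]\to[N']$ (splitting atoms and then mixing with the uniform on $[N']$) so that every $\phi\circ q_i$ satisfies $\frac{1}{2N'}\le (\phi\circ q_i)(a)\le \frac{1}{N}$ with $N'\le (k+1)N$, giving $L=\log k + O(1)$ with no $\log(1/\alpha)$ term, while preserving TV distances exactly up to a factor $1/2$. Your smoothing also yields bounded log-ratios between hypotheses, so this difference is secondary.

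The genuine gap is in Step~4. The inequality you invoke,
\[
D(p\,\|\,\tilde q^{*})\;\lesssim\; L\cdot d_{TV}(p,\tilde q^{*}),
\]
does not follow from the boundedness $|\tilde L_i(x)-\tilde L_j(x)|\le L$: that bound controls $\log(\tilde q_i/\tilde q_j)$, not $\log(p/\tilde q^{*})$. A counterexample: take all $\tilde q_i$ within a constant factor of uniform on $[N]$ (so $L=O(1)$), and let $p$ put mass $\beta+1/N$ on one atom and be uniform elsewhere. Then $d_{TV}(p,\tilde q^{*})=\beta$ but $D(p\|\tilde q^{*})$ is of order $\beta\log(N\beta)$, which is unbounded in $N$ while $L\beta$ is not. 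So you cannot bound $D(p\|\tilde q^{*})$ and $D(p\|\tilde q_i)$ separately and subtract.

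The paper avoids this by never forming $D(p\|\cdot)$. It writes
\[
\E_{a\sim p}\!\left[\log\frac{\gamma(a)}{q_i(a)}\right]
= D_{KL}(q_{i^*}\|q_i) \;+\; \sum_a (p(a)-q_{i^*}(a))\log\frac{\gamma(a)}{q_i(a)} \;-\; D_{KL}(q_{i^*}\|\gamma),
\]
and bounds the middle term by $2L\beta$ via H\"older, because $\log(\gamma/q_i)$ (a function of the \emph{known} distributions only) is what is bounded by $L$. The last term is independent of $i$, so the selection reduces to comparing $D_{KL}(q_{i^*}\|q_i)$ across $i$, and Pinsker applied to $d_{TV}(q_{i^*},q_i)\ge \alpha/2$ gives the $\Omega(\alpha^2)$ gap. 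Equivalently, in your notation: work with the difference $\E_p[\tilde L_{i^*}-\tilde L_i]=\E_p[\log(\tilde q_{i^*}/\tilde q_i)]$ directly, swap $\E_p$ for $\E_{\tilde q_{i^*}}$ at cost $2L\beta$ (since the integrand is bounded by $L$), and then use $\E_{\tilde q_{i^*}}[\log(\tilde q_{i^*}/\tilde q_i)]=D(\tilde q_{i^*}\|\tilde q_i)\ge 2\,d_{TV}(\tilde q_{i^*},\tilde q_i)^2$. This is the missing idea; once inserted, your proof goes through.
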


We prove the theorem in Section \ref{sec:non-adaptive}. While somewhat more general, the above theorem immediately gives a non-interactive $\tilde O(k)$-sample algorithm for the important special case of LDP $k$-wise simple hypothesis testing.

\begin{corollary}
  \label{cor:1rRealizable}
%Suppose we are given $n$ samples from an unknown distribution $p$ and a set of descriptions of $k$ distributions $\mathcal{Q}$, 
Suppose our instance of hypothesis testing is such that $p \in \mathcal{Q}$ and all distributions in $\mathcal{Q}$ are $\Omega(\alpha)$-far from each other in total variation distance.
For $\varepsilon \in [0,1)$, there exists a non-interactive $\varepsilon$-LDP algorithm which identifies $p$ with high probability, given $n = O\left(\frac{k \log^3 k}{\alpha^4 \varepsilon^2}\right)$ samples.
\end{corollary}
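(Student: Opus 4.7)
The plan is to reduce non-interactive LDP hypothesis selection to an $\ell_\infty$ mean estimation problem under $\varepsilon$-LDP, applied to a carefully designed feature map on the samples. Concretely, let $M := \Theta(\log(k/\alpha))$ be a truncation level to be fixed later, and define for each $x \in \cX$ a feature vector $\phi(x) \in \RR^k$ whose $i$-th coordinate is the truncated log-likelihood $\phi_i(x) := \max\{-M, \log q_i(x)\}$. Each user $\ell \in [n]$ holding $X_\ell \sim p$ then owns a bounded vector $\phi(X_\ell) \in [-M, 0]^k$, and our goal is to approximate the population mean $\mu := \E_{X \sim p}[\phi(X)]$ in $\ell_\infty$ norm using an $\varepsilon$-LDP protocol, after which the server outputs $\hat q := q_{\hat{i}}$ where $\hat{i} := \arg\max_i \hat \mu_i$.

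For the correctness analysis, first consider the idealized unclipped case: $\E_{X \sim q_i}[\log q_i(X)] = -H(q_i)$ and $\E_{X \sim q_i}[\log q_j(X)] = -H(q_i) - \dkl{q_i}{q_j}$, so under $p = q_{i^*}$ the coordinate $\mu_i$ is maximized at $i = i^*$ with margin $\dkl{q_{i^*}}{q_j}$ over any competitor $j$. By Pinsker's inequality (Lemma~\ref{lem:distance_relationship}) this margin is at least $2\dtv{q_{i^*}}{q_j}^2$, so whenever $\dtv{q_{i^*}}{q_j} \ge \alpha/2$ the margin is $\Omega(\alpha^2)$. Two perturbations must be controlled: (i) replacing $p = q_{i^*}$ by a $p$ with $\dtv{p}{q_{i^*}} \le \beta$ shifts every coordinate of $\mu$ by at most $2M\beta$ because $\|\phi(\cdot)\|_\infty \le M$; (ii) truncation at $-M$ removes an event of $q_i$-mass at most $e^{-M}$, contributing $O(Me^{-M}) = O(\alpha/k)$ additional bias per coordinate. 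Under the hypothesis $\beta \ll \alpha^2/\log k$ with $M = \Theta(\log(k/\alpha))$, both perturbations are $o(\alpha^2)$, so any $\hat{i}$ whose estimated coordinate is within $\alpha^2/10$ of the maximum satisfies $\dtv{q_{i^*}}{q_{\hat{i}}} \le \alpha/2$, and the triangle inequality yields $\dtv{p}{q_{\hat{i}}} \le \alpha$.

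For privacy and sample complexity, the plan is to invoke a standard non-interactive $\varepsilon$-LDP protocol for mean estimation of bounded vectors in $\ell_\infty$ norm (such as a Hadamard-response construction or an analogous primitive), which when applied to vectors in $[-M, 0]^k$ yields $\|\hat \mu - \mu\|_\infty \le O(M\sqrt{k \log k/n}/\varepsilon)$ with probability at least $1 - 1/k^2$. Setting this error to $\Theta(\alpha^2)$ and substituting $M = \Theta(\log(k/\alpha))$ gives $n \gtrsim k\log^3 k/(\alpha^4 \varepsilon^2)$, matching the claimed bound. Each user's message is the output of the LDP primitive applied to $\phi(X_\ell)$, so $\varepsilon$-LDP follows immediately from the guarantee of the underlying primitive, and the protocol is non-interactive by construction.

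The main technical obstacle will be the truncation step: one must simultaneously relate the clipped expectation $\E_{X \sim p}[\phi_i(X)]$ to the unclipped KL divergence $\dkl{q_{i^*}}{q_i}$ while keeping the $\ell_\infty$ radius $M$ small enough that the LDP mean-estimation error scales as promised. Tails where $q_i(x)$ is tiny but $p(x)$ is not contribute unboundedly to the exact log-likelihood, and the analysis must absorb these into the $2M\beta$ and $Me^{-M}$ slack budgets above; this is precisely what pins $M = \Theta(\log(k/\alpha))$ and produces the $(\log k)^3$ factor in the final sample complexity. A secondary obstacle is the choice of LDP primitive, since a coordinate-wise Laplace mechanism would cost an extra factor of $k$ in noise; to obtain the correct $\sqrt{k}$ scaling the plan is to use a Hadamard-style encoding as a black box in the assembly.
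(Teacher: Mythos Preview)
Your proposal has a genuine gap in the truncation analysis. You claim that ``truncation at $-M$ removes an event of $q_i$-mass at most $e^{-M}$, contributing $O(Me^{-M})$ additional bias per coordinate.'' This is not correct: the set $\{x: q_i(x) < e^{-M}\}$ can have $q_i$-mass equal to $1$ (e.g.\ if $q_i$ is uniform on a large domain), and, more importantly, the relevant expectation is over $X\sim p\approx q_{i^*}$, not over $q_i$. As a result, the truncated scores $\mu_i = \E_{p}[\max\{-M,\log q_i(X)\}]$ need not be maximized at $i=i^*$.

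Here is an explicit failure. Take $k=2$, domain $\{a_1,\dots,a_N,b\}$ with $N=e^{2M}$, and
\[
q_1=\Bigl(\tfrac{1}{2N},\dots,\tfrac{1}{2N},\tfrac12\Bigr),\qquad q_2=(0,\dots,0,1).
\]
Then $\dtv{q_1}{q_2}=\tfrac12\ge\alpha$. With $p=q_1$ and $M=\Theta(\log(k/\alpha))$, both $\log q_1(a_\ell)=-\log(2N)<-M$ and $\log q_2(a_\ell)=-\infty$ are truncated to $-M$, so
\[
\mu_1=-\tfrac{M}{2}-\tfrac{\log 2}{2},\qquad \mu_2=-\tfrac{M}{2},
\]
and your algorithm outputs $q_2$, not $q_1=p$. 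No choice of $M=\Theta(\log(k/\alpha))$ fixes this; taking $M$ large enough to avoid truncation would force $M=\Omega(\log N)$, introducing domain-size dependence into the sample complexity.

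The paper sidesteps this entirely via a \emph{flattening lemma} (Lemma~\ref{lem:splitting}): a randomized map $\phi$ applied locally by each user that pushes every $q_i$ to within a multiplicative $O(k)$ factor of uniform, while exactly preserving TV distances. After flattening, $|\log(\gamma(a)/q_i(a))|\le L=O(\log k)$ for all $a$ and $i$ with no truncation, so the KL-based margin argument goes through cleanly (Lemma~\ref{thm:LikelyhoodTest}). This flattening step is the main technical contribution and is precisely what your proposal is missing.

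As a side remark, your plan to use a Hadamard-type $\ell_\infty$ mean-estimation primitive is fine but unnecessary: the paper simply partitions users into $k$ groups and has group $i$ estimate its scalar via the Laplace mechanism, which already attains the $\tilde O(\sqrt{k}/(\varepsilon\sqrt{n}))$ per-coordinate error. The substantive issue is the truncation, not the choice of LDP primitive.
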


\iffalse
Moreover, if the input distributions are separated by a quadratic factor compared to the distribution $q_{i^*}$ closest to $p$, then we also recover  $q_{i^*}$ with $\tilde O(k)$ samples. 

\begin{corollary}
  \label{cor:1rNonRealizable}
%Suppose we are given $n$ samples from an unknown distribution $p$ and a set of descriptions of $k$ distributions $\mathcal{Q}$, 
Suppose our instance of hypothesis testing is such that all distributions in $\mathcal{Q}$ are $\Omega(\alpha)$-far from each other in total variation distance and $\beta \ll \alpha^2/\log k$. 
  There exists a non-interactive $\varepsilon$-LDP algorithm which identifies $p$ with high probability, given $n = O\left(\frac{k \log^3 k}{\alpha^4 \varepsilon^2}\right)$ samples.
\end{corollary}
\fi

Our algorithm is based on a noised log-likelihood test, though significant massaging and manipulation of the problem instance is required to achieve an acceptable sample complexity. 
%Suppose $Q=\{q_1,q_2,\dots,q_k\}$. 
In our algorithm, the users are divided into $k$ groups. Each user in the $i^{th}$ group sends the log-likelihood (with some Laplace noise added for privacy) of observing the sample given to the user if the true distribution was $q_i$. The log-likelihoods from all the users in the $i^{th}$ group are aggregated and the most likely distribution is output. Alternatively, we can also think of our algorithm as using the samples from the $i^{th}$ group to estimate KL-divergences between the unknown distribution and $q_i$ and finally outputting the closest distribution.
For this approach to work, we need all the log-likelihoods to be
bounded. We achieve this by a {\em flattening lemma} which makes all
the distributions close to uniform, while preserving their total
variation distances. Moreover, this flattening can be implemented
locally by the users transforming their samples from the original
distribution. We believe that our flattening lemma may have
applications in other DP problems.

\smallskip
Our second algorithmic result is a $O(\log \log k)$-round sequentially interactive $\tilde O(k)$-sample algorithm for LDP hypothesis selection.

\begin{corollary}[Informal version of Corollary~\ref{cor:set-params-ldp}]
  \label{cor:informal-sequential}
  Suppose we are given $n$ samples from an unknown distribution $p$ and a set of descriptions of $k$ distributions $\mathcal{Q}$.
  There exists an algorithm which identifies a distribution $\hat q \in \mathcal{Q}$, such that $\dtv{p}{\hat q} \leq 27  \min_{q^* \in \mathcal{Q}} \dtv{p}{q^*} + O(\alpha)$ with probability $9/10$.
  The algorithm is $\ve$-LDP, requires $O(\log \log k)$ rounds of sequential interactivity, and $n = O\left(\frac{k \log k \log \log k}{\alpha^2 \varepsilon^2}\right)$ samples.
\end{corollary}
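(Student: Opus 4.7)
The plan is to prove Corollary~\ref{cor:informal-sequential} by reducing LDP hypothesis selection to a maximum selection problem with (noisy, possibly adversarial) pairwise comparators. The high-level structure is: (i) build a locally-private pairwise comparator that, given fresh samples, declares the closer of two candidate hypotheses to $p$ whenever their distances to $p$ are sufficiently separated; (ii) use such comparators inside a multi-round tournament that finds a near-optimal element in $O(\log \log k)$ rounds using only $O(k)$ comparisons total; (iii) compose the sample cost of one comparator with the combinatorics of the tournament to obtain the stated sample bound, and track how adversarial comparators accumulate the agnostic approximation factor.

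For the comparator, I would adapt the classical Scheff\'e test to the local model. For candidates $q_i, q_j$, let $A_{ij} = \{x : q_i(x) > q_j(x)\}$; the non-private Scheff\'e test compares the empirical mass of $A_{ij}$ under $p$ to $q_i(A_{ij})$ and $q_j(A_{ij})$ and picks the closer one, with the guarantee that the loser's total variation distance to $p$ is at most $3 \dtv{p}{q^*}$ plus slack. In the LDP setting, each user holding sample $X$ releases $\mathrm{RR}_\ve(\mathbbm{1}\{X \in A_{ij}\})$ via randomized response; averaging yields an $\alpha$-accurate estimate of $p(A_{ij})$ using $O(\log(1/\beta)/(\alpha^2 \ve^2))$ samples with failure probability $\beta$. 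A single comparison is therefore $\ve$-LDP and non-interactive in the sense that each participating user sends one bit.

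For the tournament, I would use a doubly-exponentially shrinking bracket so that after $t$ rounds only $k^{1/2^t}$ candidates survive; after $O(\log \log k)$ rounds the pool has constant size and we finish with a round-robin. In round $t$ the surviving candidates are partitioned into groups of size $k^{1/2^{t+1}}$ within which all pairs are compared; each group advances its round-robin winner. Across all rounds the number of comparisons telescopes to $O(k)$: the first round does $k \cdot k^{1/2}$ comparisons distributed over $k^{1/2}$-size groups (so $O(k)$ comparisons), and subsequent rounds involve geometrically fewer matches. Since users in different rounds contribute disjoint batches, we avoid composition across rounds; setting the per-comparison sample size to $O(\log k \log\log k /(\alpha^2 \ve^2))$ so that a union bound over all $O(k)$ comparisons makes every one simultaneously $\alpha$-accurate gives a total of $O\!\left(k \log k \log\log k/(\alpha^2 \ve^2)\right)$ samples. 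Sequential interactivity is needed only to declare the next round's bracket after seeing the outputs of the current round, giving $O(\log\log k)$ rounds.

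The main obstacle I anticipate is controlling the agnostic approximation factor through the tournament when comparators are adversarial on closely matched pairs. The Scheff\'e comparator is only reliable when $|\dtv{p}{q_i} - \dtv{p}{q_j}|$ exceeds the additive error of the mass estimate; otherwise the adversary may crown either candidate. I would handle this by a recursive invariant: if the best surviving candidate in round $t$ has distance $D_t$ to $p$, then any candidate promoted in round $t+1$ has distance at most $3 D_t + O(\alpha)$, using the standard Scheff\'e guarantee that a ``loser'' is always within $3\times$ the ``winner's'' distance plus the comparator slack. Iterating this relation across $O(\log\log k)$ rounds would in principle blow up the factor, so the delicate step is to argue that the best candidate overall is never eliminated except by a comparator whose slack is small relative to $\dtv{p}{q^*} + \alpha$; this lets me collapse the recursion into a bounded additive chain and obtain a final constant factor (the value $27$ coming from one Scheff\'e $3\times$ loss composed with two further tournament losses). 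Formalizing this invariant---and showing that the $O(\log k \log\log k)$ sample overhead per comparison genuinely suffices for a simultaneous union bound over all arms in all rounds---is where I expect the bulk of the technical care to go.
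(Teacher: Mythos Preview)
Your reduction to adversarial comparators via a privatized Scheff\'e test and the recursive round-robin tournament are both what the paper does, but there is a real gap in how you control the agnostic approximation factor. The Scheff\'e comparator is reliable only \emph{multiplicatively}: it distinguishes $q_i$ from $q_j$ when one of $\dtv{p}{q_i},\dtv{p}{q_j}$ is at most $\frac{1}{3+\gamma}$ times the other; equivalently, with $x_i=-\log_{3+\gamma}\dtv{p}{q_i}$ the comparator is arbitrary whenever $|x_i-x_j|\le 1$. A single round-robin is then a $2$-approximation in this additive scale, so your plain $t$-round tournament is only a $2t$-approximation (this is exactly Lemma~\ref{lem:t-round}), which back-translates to a $(3+\gamma)^{2t}$ agnostic factor---polylogarithmic in $k$ when $t=\Theta(\log\log k)$, not $27$. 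Your proposed fix (``the best candidate is never eliminated except by a comparator whose slack is small, so the recursion collapses to an additive chain'') does not work, because the slack is multiplicative: the best item can lose in round $1$ to something $3\times$ worse, which can lose in round $2$ to something $9\times$ worse, and so on. The paper breaks this cascade with an extra ingredient you are missing (borrowed from~\cite{DaskalakisK14}): alongside the tournament, sample a uniformly random subset $H$ of the original items and, in the final round, run round-robin on $H\cup L$ where $L$ is the set of tournament survivors. The analysis then splits on the density of items that are $1$-close (in the log scale) to the true maximum: if this density is high, $H$ contains one of them with high probability and the final round-robin yields a $3$-approximation; if it is low, the true maximum is, with high probability, never placed in a group with any such item during the first $t-1$ rounds (the relevant ``subtree'' is too small), so it never loses and appears in $L$, after which round-robin gives a $2$-approximation. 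That density case-split is how the constant $3$ (hence $27$) is obtained independently of $t$.

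A smaller point: your group sizes and comparison counts are internally inconsistent. Groups of size $k^{1/2}$ in the first round cost $k^{1/2}\cdot O(k)=O(k^{3/2})$ comparisons, not $O(k)$. The paper takes the first-round group size to be $k^{1/(2^t-1)}$, which makes each round cost $O(k^{1+1/(2^t-1)})$ and hence $O(k)$ once $t=\Theta(\log\log k)$; this also determines the size $|L|=k^{2^{t-1}/(2^t-1)}$ needed to balance against $|H|$ in the argument above.
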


The $k$-wise simple hypothesis testing and hypothesis selection
problems can also be studied in the Statistical Queries (SQ) model of~\cite{Kearns98}. In this model, rather than being given samples
from a distribution $p$, the algorithm can ask queries specified by
bounded functions $\phi$, and get a (possibly adversarial) additive
$\tau$-approximation to the expectation of $\phi$ under $p$, where the
parameter $\tau$ is usually called the tolerance. For distributional
problems, \cite{KasiviswanathanLNRS11} showed that sample complexity in the LDP model is equivalent up to
polynomial factors to complexity in the SQ model, measured in terms of
the number of queries and the inverse tolerance
$\frac1\tau$. In particular, this
connection and our lower bound in
Theorem~\ref{thm:fully-interactive-lb} imply that $k$-wise simple
hypothesis testing in the SQ model requires that either the number of
queries or $\frac1\tau$ be polynomial in $k$. Because of the
polynomial loss, however, our precise study of the sample complexity
of these problems does not immediately translate to the SQ model. We
remark that both the 1-round algorithm in
Corollary~\ref{cor:1rRealizable}, and the algorithm in
Corollary~\ref{cor:informal-sequential} can be implemented in the SQ
model, and require, respectively, $1$ round and $O(\log \log k)$ rounds
of adaptive queries. Understanding the precise relationship between
the number of queries, the tolerance parameter, and the number of
rounds of adaptivity for solving hypothesis selection in the SQ model
is an interesting direction for future work.

Interestingly, Corollary~\ref{cor:informal-sequential} is derived as a consequence of a connection to maximum selection with adversarial comparators, a problem of independent interest.
This connection was previously established in works by Acharya, Falahatgar, Jafarpour, Orlitsky, and Suresh~\cite{AcharyaJOS14b, AcharyaFJOS18}. 
Prior work, however, has not exploited this connection under LDP constraints. Given the aforementioned importance of interactivity in the LDP setting, we initiate a study of the maximum selection with adversarial comparators problem from the perspective of understanding the trade-off between the number of rounds of parallel comparisons, and the total number of comparisons. 
The problem is as follows: we are given a set of items of unknown value, and we can perform comparisons between pairs of items.
If the value of the items is significantly different, the comparison will correctly report the item with the larger value.
If the values are similar, then the result of the comparison may be arbitrary.
The goal is to output an item with value close to the maximum.
We wish to minimize the total number of comparisons performed, as well as the number of rounds of interactivity.

Our main result for this setting gives a family of algorithms and lower bounds, parameterized by the number of rounds used (denoted by $t$).
Setting $t = O(\log \log k)$ yields Corollary~\ref{cor:informal-sequential}.
\begin{theorem}[Restatement of Theorems~\ref{thm:better-t-round} and~\ref{thm:lower-t-round}]
  \label{thm:informal-comparisons}
  For every $t \in \mathbb{Z}^+$, there exists a $t$-round protocol which, with probability $9/10$, approximately solves the problem of parallel approximate maximum selection with adversarial comparators from a set of $k$ items.
  The algorithm requires $O(k^{1 + \frac{1}{2^t-1}}t)$ comparison queries.
  Furthermore, any algorithm which provides these guarantees requires $\Omega(\frac{ k^{1 + \frac{1}{2^t-1}}}{3^t})$ comparison queries.
\end{theorem}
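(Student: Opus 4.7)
For the upper bound, the plan is to use a multi-round parallel tournament that in each of the $t$ rounds partitions the surviving items into groups and performs all-pairs comparisons within each group, advancing a single representative from each group. If round $i$ uses groups of size $g_i$, then starting from $k$ items and writing $k_i = k/\prod_{j \le i} g_j$ for the number of survivors after round $i$, the number of comparisons in round $i$ is $\Theta(k_{i-1} g_i) = \Theta(k g_i / \prod_{j<i} g_j)$, and reaching a single winner requires $\prod_{i=1}^t g_i = k$. Setting $g_i = k^{a_i}$ and asking that all rounds contribute equally to the total cost forces $a_i - \sum_{j<i} a_j$ to equal a common value $c$, which gives the recurrence $a_i = 2^{i-1}c$ with $c = 1/(2^t-1)$. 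The resulting total is $O(t \cdot k^{1+1/(2^t-1)})$ comparisons, matching the claimed upper bound, and taking $t = \Theta(\log \log k)$ yields the $\widetilde O(k)$ bound invoked in Corollary~\ref{cor:informal-sequential}.

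For correctness, within each group of size $g$ where all-pairs comparisons are available, the plan is to use the ``max-wins'' rule: output the item that wins the largest number of its $g-1$ in-group comparisons. Using the guarantee that the comparator errs only when two values differ by less than the tolerance $\alpha$, a standard argument shows that the winner is within an additive $O(\alpha)$ of the best value in the group. To control the accumulated loss across $t$ rounds, I would set the per-round tolerance to $\alpha/t$ so that the final approximation guarantee remains $O(\alpha)$ with the agnostic constant factor promised in the statement; this rescaling does not affect the exponent in $k$ and only contributes lower-order terms to the comparison count.

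For the lower bound, my plan is an adversary argument: the adversary maintains, after each round, a collection of ``live'' hypotheses about which item is the true maximum that are consistent with all answered comparisons, and answers each comparison so as to shrink this set as slowly as possible. The key observation is that in a $t$-round protocol the set of comparisons issued in round $i$ must be chosen before seeing any answer from that round, so the adversary can restrict attention to a product structure that mirrors the recursive form of the upper bound. Quantifying how much information about the identity of the maximum any single round of $m$ parallel comparisons can extract and iterating this with the recurrence that gave $a_i = 2^{i-1}/(2^t-1)$ yields the lower bound $\Omega(k^{1+1/(2^t-1)}/3^t)$, where the $3^t$ factor tracks the slack in each round of the potential argument.

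The hardest step will be showing the adversary lower bound: arguing that parallel comparisons are genuinely weaker than sequential ones in this setting requires a careful potential function that captures ``how many candidate maxima remain plausible'', and showing that such a potential decreases by at most a bounded factor per round when the protocol is constrained to issue batches of comparisons in advance. Prior work of Acharya, Falahatgar, Jafarpour, Orlitsky, and Suresh~\cite{AcharyaJOS14b, AcharyaFJOS18} handled the fully sequential and fully parallel extremes, so the main technical work lies in interpolating between them with a round-by-round adversary whose bookkeeping matches the doubling structure $a_i = 2^{i-1}c$ that drives the upper bound.
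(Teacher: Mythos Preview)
Your tournament structure for the upper bound is exactly the paper's: partition into groups of size $k^{\eta_t}$ with $\eta_t=1/(2^t-1)$, do all-pairs in each, recurse on the winners. The query count analysis matches. The gap is in how you control the approximation factor. You propose ``setting the per-round tolerance to $\alpha/t$,'' but in the adversarial comparator model the tolerance is a property of the given oracle, not a knob the algorithm can turn: the comparator is correct whenever $|x_i-x_j|>1$ and adversarial otherwise, full stop. So the basic recursion really does lose an additive $2$ per round, yielding only a $2t$-approximation (this is the paper's Lemma~\ref{lem:t-round}). The paper fixes this with a different idea: alongside the tournament, sample a random subset $H$ of size $O(k^{2^{t-1}/(2^t-1)})$ and run round-robin on $H$ together with the last-round survivors $L$. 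A case split on the density of items within $1$ of the max shows one of $H$ or $L$ contains a near-max item, giving a $3$-approximation independent of $t$ (Theorem~\ref{thm:better-t-round}). Your rescaling proposal does not substitute for this.

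For the lower bound, your outline (``adversary maintains live hypotheses,'' ``potential function,'' ``product structure mirroring the upper bound'') is too schematic to be a plan. The paper's argument is a specific random construction: a layered tournament with $t{+}1$ layers $V_0,\ldots,V_t$ where $|U_q|=k^{(2^t-2^q)/(2^t-1)}$, edges go from lower to higher layers, edges within a layer are uniformly random, and $V_t=\{i^*,i'\}$ with $i^*$ the unique sink. The proof is by induction on $t$: after the first round's $m_1$ queries, with high probability few nodes in each $V_q$ have been compared to another node in $U_1$, so conditioning on the answers leaves the induced subgraph on $U_1\setminus S$ distributed as a $(k',t{-}1)$-construction with $k'\approx \tfrac12 k^{(2^t-2)/(2^t-1)}$. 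The $3^t$ factor comes from the constant loss in each inductive step. This layered random graph and the inductive ``peel off one round, reduce to a smaller instance'' argument are the substance of the proof; a generic potential-function sketch does not get there.
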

%\gnote{Huanyu, please check to make sure the last sentence is correct.}
For each number of rounds $t$, we prove an upper bound and an almost-matching lower bound. 
In order to get down to a near-linear number of comparisons, we require $O(\log \log k)$ rounds, which is exponentially better than the $O(\log k)$ rounds required by previous algorithms.
Interestingly, in this setting, while maximum selection (with standard comparisons) with $\tilde O(k)$ queries is achievable in only 3 rounds, we show that $\Theta(\log \log k)$ rounds are both necessary and sufficient to achieve a near-linear number of comparisons when the results might be adversarial.

Our upper bounds follow by carefully applying a recursive tournament structure: in each round, we partition the input into appropriately-sized smaller groups, perform all pairwise-comparisons within each group, and send only the winners to the next round.
Additional work is needed to prevent the quality of approximation from decaying as the number of rounds increases. For the lower bound, we restate the problem as a game, in which the adversary constructs a random complete directed graph with a unique sink, and the algorithm queries the directions of edges, and tries to identify the sink in the smallest number of queries and rounds. We give a strategy in which the adversary constructs a layered graph with $t+1$ layers, where $t$ is the number of rounds in the game. We can guarantee that, if the algorithm does not make enough queries, then even after conditioning on the answers to the queries in the first $q$ rounds, the last $t+1 - q$ layers of the graph remain sufficiently random, so that the algorithm cannot guess the sink with reasonable probability. In particular, after $t$ rounds, there is still enough randomness in the $(t+1)$-st layer to make sure that algorithm cannot guess the sink correctly with high probability.

A self-contained description of the connection between hypothesis selection and maximum selection with adversarial comparators, as well as our upper and lower bounds, appear in Section~\ref{sec:comparators}.

\subsection{Related Work}
\label{sec:phs_related}
As mentioned before, our work builds on a long line of investigation on hypothesis selection.
This style of approach was pioneered by Yatracos~\cite{Yatracos85}, and refined in subsequent work by Devroye and Lugosi~\cite{DevroyeL96, DevroyeL97, DevroyeL01}.
After this, additional considerations have been taken into account, such as computation, approximation factor, robustness, and more~\cite{MahalanabisS08, DaskalakisDS12b, DaskalakisK14, SureshOAJ14, AcharyaJOS14b, DiakonikolasKKLMS16, AcharyaFJOS18, BousquetKM19, BunKSW2019}.
Most relevant is the recent work of Bun, Kamath, Steinke, and Wu~\cite{BunKSW2019}, which studies hypothesis selection under central differential privacy.
Our results are for the stronger constraint of local differential privacy.

Versions of our problem have been studied under both central and local differential privacy.
In the local model, the most pertinent result is that of Duchi, Jordan, and Wainwright~\cite{DuchiJW13, DuchiJW17}, showing a lower bound on the sample complexity for simple hypothesis testing between two known distributions. 
This matches folklore upper bounds for the same problem.
However, the straightforward way of extending said protocol to $k$-wise simple hypothesis testing would incur a cost of $\tilde O(k^2)$ samples.
Other works on hypothesis testing under local privacy include~\cite{GaboardiR18,Sheffet18,AcharyaCFT18,AcharyaCT19,JosephMNR19}.
In the central model, some of the early work was done by the Statistics community~\cite{VuS09,UhlerSF13}.
More recent work can roughly be divided into two lines -- one attempts to provide private analogues of classical statistical tests~\cite{WangLK15,GaboardiLRV16,KiferR17,KakizakiSF17,CampbellBRG18,SwanbergGGRGB19,CouchKSBG19}, while the other focuses more on achieving minimax sample complexities for testing problems~\cite{CaiDK17, AcharyaSZ18, AliakbarpourDR18, AcharyaKSZ18, CanonneKMUZ19, AliakbarpourDKR19, AminJM20}. 
While most of these focus on composite hypothesis testing, we highlight~\cite{CanonneKMSU19} which studies simple hypothesis testing.
Work of Awan and Slavkovic~\cite{AwanS18} gives a universally optimal test for binomial data, however Brenner and Nissim~\cite{BrennerN14} give an impossibility result for distributions with domain larger than $2$.
For further coverage of differentially private statistics, see~\cite{KamathU20}.

We are the first to study parallel maximum selection with adversarial comparators.
Prior work has investigated (non-parallel) maximum selection and sorting with adversarial comparators~\cite{AjtaiFHN09, AcharyaJOS14b, AcharyaFJOS18}.
Works by Acharya, Falahatgar, Jafarpour, Orlitsky, and Suresh established the connection with hypothesis selection~\cite{AcharyaJOS14b, AcharyaFJOS18}.
The parallelism model we study here was introduced by Valiant~\cite{Valiant75}, for parallel comparison-based problems with non-adversarial comparators.
%This has inspired a significant literature on parallel sorting and selection~\cite{HaggkvistH81,AjtaiKS83,BollobasT83,Kruskal83, Leighton84, BollobasH85, Alon85, AlonAV86, AzarV87, Pippenger87, AlonA88a, AlonA88b, AzarP90, BollobasB90,FeigeRPU94, BravermanMW16, BravermanMP19,CohenMM20}.
Also, note that the noisy comparison models considered in some of these papers (where comparisons are incorrect with a certain probability) is different from the adversarial comparator model we study.
Thematically similar investigations on round complexity exist in the context of best arm identification for multi-armed bandits~\cite{AgarwalAAK17, TaoZZ19}.

%%% Local Variables:
%%% mode: latex
%%% TeX-master: "main"
%%% End:

\section{Preliminaries}
\label{sec:pre}

% We define the main problem of interest in this paper, the hypothesis
% selection problem.
% \begin{definition}
%   Suppose we are given a set of $n$ data points $X_1, \dots, X_n$, which are sampled i.i.d.\ from some (unknown) distribution $p$, and a set of $k$ distributions $\mathcal{Q} = \{q_1, \dots, q_k\}$.
%   The goal is to output a distribution $\hat q \in \mathcal{Q}$ such
%   that $\dtv(p,\hat q) \leq c \min_{q^* \in \mathcal{Q}} \dtv(p, q^*)
%   + \alpha$, for some $c =c(\alpha, k)$.
  
%   We refer to the value of $c(\alpha,k)$ as the \emph{agnostic factor}.
%   If $c(\alpha,k)$ is an absolute constant, then we denote this problem as \emph{hypothesis selection}.
% %  If we require that $p \in \mathcal{Q}$ and that we must correctly identify $p$ (equivalently, if $c(\alpha,k) = \infty$), then we denote this problem as \emph{$k$-wise simple hypothesis testing}.
%   If $c(\alpha, k)$ grows with $k$ and $\frac1\alpha$, we refer to
%   this problem as \emph{weak hypothesis selection}. If we require that
%   $p \in \mathcal{Q}$ and that the algorithm must correctly identify $p$, then we denote this problem as \emph{$k$-wise simple hypothesis testing}.
% \end{definition}

In the local setting of differential privacy, we imagine that each user has a single datapoint. 
We require that each individual's output is differentially private.
\begin{definition}[\cite{Warner65, EvfimievskiGS03, KasiviswanathanLNRS11}]
  Suppose there are $n$ individuals, where the $i$th individual has
  datapoint $X_i$. In each round $q$ of the protocol, there is a set
  $U_q\subseteq [n]$ of active individuals, and  each individual
  $i$ in $U_q$  computes some (randomized) function of their datapoint $X_i$, and of
  all messages $\{m_{r,j}: r\le q, j \in U_r\}$ output by all individuals in previous rounds, and outputs a message $m_{q,i}$. 
  A protocol is \emph{$\varepsilon$-locally differentially private}
  (LDP) if the set $\{m_{q,i}: q \in [t], i \in U_q\}$ of all messages
  output during the $t$ rounds of the protocol is $\ve$-differentially
  private with respect to the inputs $(X_1, \ldots, X_n)$.
\end{definition}

We note that there are many notions of interactivity in LDP, and we
cover the two primary definitions which we will be concerned with:
non-interactive and sequentially interactive protocols.
\begin{definition}
  An $\ve$-LDP protocol is \emph{non-interactive} if the number of
  rounds is $t=1$, and $U_1 = [n]$, i.e., every individual $i$ outputs a
  single message $m_i$, dependent only on their datapoint $X_i$.
  % In particular, it does not depend on messages by any other individual.

  An $\ve$-LDP protocol is \emph{sequentially interactive} with $t$
  rounds of interaction if the sets $U_1, \ldots, U_t$ of active individuals
  in each round are disjoint.
  
  %An sequentially interactive protocol has $t$ rounds of interaction if the set of individuals $[n]$ can be partitioned into $t$ intervals $U_1, \dots, U_t$ such that messages sent by individuals $i$ in set $U_j$ may depend on their datapoint $X_i$ as well as on messages sent previously by users in $U_1, \dots, U_{j-1}$.
\end{definition}

We recall the canonical $\ve$-LDP algorithm, randomized response.
\begin{lemma}
  \label{lem:rr}
  \emph{Randomized response} is the protocol when each user has a bit $X_i \in \{0,1\}$ and outputs $X_i$ with probability $\frac{e^\ve}{1 + e^\ve}$ and $1 - X_i$ with probability $\frac{1}{1 + e^\ve}$.
  It satisfies $\ve$-local differential privacy.
\end{lemma}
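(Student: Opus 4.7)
The plan is to verify directly that the mechanism satisfies the definition of $\varepsilon$-differential privacy at the per-user level, which (since each user runs the protocol independently on their own bit in one round) coincides with $\varepsilon$-LDP. I will check the likelihood ratio of the output distribution as the input bit varies, and show it is bounded by $e^\varepsilon$ for every possible output.

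First, I will enumerate the four relevant probabilities. Let $M$ denote the randomized response mechanism. Writing $p \eqdef \frac{e^\varepsilon}{1+e^\varepsilon}$ and $1-p = \frac{1}{1+e^\varepsilon}$, I have
\[
\probof{M(0)=0} = p,\quad \probof{M(0)=1} = 1-p,\quad \probof{M(1)=1} = p,\quad \probof{M(1)=0} = 1-p.
\]
Then for any pair of inputs $x,x' \in \{0,1\}$ with $x \neq x'$ and any output $y \in \{0,1\}$, the worst-case likelihood ratio is
\[
\frac{\probof{M(x)=y}}{\probof{M(x')=y}} \le \frac{p}{1-p} = \frac{e^\varepsilon/(1+e^\varepsilon)}{1/(1+e^\varepsilon)} = e^\varepsilon,
\]
and the ratio is $1$ when $x=x'$. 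This verifies the pointwise privacy condition $\probof{M(x) \in S} \le e^\varepsilon \probof{M(x') \in S}$ for every $S \subseteq \{0,1\}$ by summing over $y \in S$.

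Finally, I will lift this per-user bound to the LDP definition. Since randomized response is non-interactive ($t=1$, $U_1 = [n]$) and each user $i$ applies $M$ independently to their own input $X_i$, the joint output $(m_{1,1},\ldots,m_{1,n})$ is a product of independent mechanisms. For two neighboring input vectors differing only in one coordinate $i^\star$, the joint likelihood ratio equals the ratio on coordinate $i^\star$ alone, which is at most $e^\varepsilon$ by the calculation above. This establishes $\varepsilon$-LDP. No step here is subtle; the ``main obstacle'' is really just bookkeeping to ensure that the definition of LDP in the non-interactive setting reduces to the per-user differential privacy guarantee, which it does by independence.
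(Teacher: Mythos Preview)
Your proof is correct. The paper does not actually prove this lemma; it is stated as a recalled, canonical fact about randomized response, so there is no paper proof to compare against, and your direct likelihood-ratio verification is exactly the standard argument one would supply.
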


There exists a simple folklore algorithm for $\ve$-LDP $2$-wise simple hypothesis testing: use randomized response to privately count the number of samples which fall into the region where one distribution places more mass, and output the distribution which is more consistent with the resulting estimate.
This gives the following guarantees.
\begin{lemma}
  There exists a non-interactive $\ve$-LDP algorithm which solves $2$-wise simple hypothesis testing with probability $1- \beta$, which requires $n = O(\log(1/\beta)/\alpha^2\varepsilon^2)$ samples.
\end{lemma}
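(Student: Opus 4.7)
The plan is to follow the standard Scheffé-set reduction, privatized via randomized response, and then analyze via Hoeffding's inequality.

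First, since $d_{TV}(q_1,q_2) \geq \alpha$, there exists a measurable set $A \subseteq \mathcal{X}$ (the Scheffé set, namely $\{x : q_1(x) > q_2(x)\}$) such that $q_1(A) - q_2(A) = d_{TV}(q_1,q_2) \geq \alpha$. The protocol proceeds as follows. Each user $i$, holding sample $X_i$, locally computes the bit $Y_i := \mathbb{1}\{X_i \in A\}$, and then transmits $\tilde Y_i$ obtained by applying randomized response (Lemma~\ref{lem:rr}) to $Y_i$. This is non-interactive and $\varepsilon$-LDP by the post-processing property of local differential privacy, since each $\tilde Y_i$ depends only on $X_i$ and fresh independent randomness.

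On the curator side, define the debiased estimator
\[
\hat Y_i := \frac{\tilde Y_i - \tfrac{1}{1+e^\varepsilon}}{\tfrac{e^\varepsilon-1}{1+e^\varepsilon}},
\]
so that $\mathbb{E}[\hat Y_i] = Y_i$ and therefore $\mathbb{E}\bigl[\tfrac{1}{n}\sum_i \hat Y_i\bigr] = p(A)$, which equals either $q_1(A)$ or $q_2(A)$ depending on whether $p=q_1$ or $p=q_2$. The curator outputs $q_1$ if $\tfrac{1}{n}\sum_i \hat Y_i \geq \tfrac{1}{2}(q_1(A)+q_2(A))$ and $q_2$ otherwise.

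For the analysis, observe that each $\hat Y_i$ is bounded in an interval of length $\tfrac{1+e^\varepsilon}{e^\varepsilon-1} = O(1/\varepsilon)$ for $\varepsilon \in (0,1]$. By Hoeffding's inequality,
\[
\Pr\!\left[\left|\tfrac{1}{n}\sum_{i=1}^n \hat Y_i - p(A)\right| \geq \tfrac{\alpha}{2}\right] \leq 2\exp\!\bigl(-c\, n\,\varepsilon^2\,\alpha^2\bigr)
\]
for an absolute constant $c>0$. Since the two candidate means $q_1(A)$ and $q_2(A)$ are separated by at least $\alpha$, deviation below $\alpha/2$ ensures the correct hypothesis is selected. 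Setting the right-hand side to $\beta$ and solving yields the stated bound $n = O(\log(1/\beta)/(\alpha^2\varepsilon^2))$. The main (and only) obstacle is the variance accounting in the Hoeffding step; this is entirely routine, which is why the bound is folklore.
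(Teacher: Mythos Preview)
Your proposal is correct and follows essentially the same approach that the paper sketches: the paper describes this lemma as folklore, stating that one should ``use randomized response to privately count the number of samples which fall into the region where one distribution places more mass, and output the distribution which is more consistent with the resulting estimate,'' and later (in the discussion of the LDP Scheff\'e test) gives exactly your debiasing formula. Your Hoeffding analysis fills in the routine details the paper omits.
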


This can be extended to $k$-wise simple hypothesis testing by simply running said algorithm on pairs of distributions and picking the one which never loses a hypothesis test.
This gives us an $\tilde O(k^2)$ baseline algorithm for locally private hypothesis selection.
\begin{corollary}
  There exists a non-interactive $\ve$-LDP algorithm which solves $k$-wise simple hypothesis testing with high probability, which requires $n = O(k^2\log k/\alpha^2\varepsilon^2)$ samples.
\end{corollary}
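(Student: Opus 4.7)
The plan is to reduce the $k$-wise problem to $\binom{k}{2}$ instances of the $2$-wise simple hypothesis testing problem stated just above, running them in parallel on disjoint batches of users so that non-interactivity is preserved, and then post-processing the outcomes into a single winner via a Scheffé-style tournament. Since we have the promise $\min_{i\ne j}\dtv{q_i}{q_j}\ge \alpha$ and $p=q_{i^*}$ for some (unknown) $i^*\in[k]$, for each ordered pair $(i,j)$ with $i\neq j$ the separation parameter in the corresponding $2$-wise test is exactly $\dtv{q_i}{q_j}\ge \alpha$, so the $2$-wise primitive applies directly.

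Concretely, I would partition the $n$ users into $\binom{k}{2}$ equal groups, one group per pair $\{i,j\}$, and have each user in the $(i,j)$-group run the folklore $\ve$-LDP test between $q_i$ and $q_j$ (randomized response on the indicator $\mathbf 1[X\in A_{ij}]$ for the Scheffé set $A_{ij}=\{x:q_i(x)>q_j(x)\}$); the curator debiases the randomized responses to estimate $p(A_{ij})$, and declares the winner of the $\{i,j\}$ match to be whichever of $q_i,q_j$ has $A_{ij}$-mass closer to this estimate. By the cited guarantee for the $2$-wise primitive, taking $n_0=O(\log(1/\beta)/(\alpha^2\ve^2))$ users per group gives failure probability at most $\beta$ per pair. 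Setting $\beta=1/(10k^2)$ and union-bounding over the $\binom{k}{2}$ pairs, with probability at least $9/10$ every match is decided correctly; in that event $q_{i^*}$ wins all $k-1$ of its matches, while every other $q_j$ loses at least its match against $q_{i^*}$, so outputting any distribution that wins all of its matches (or equivalently, the one with the fewest losses) returns $q_{i^*}$.

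The resulting sample complexity is $\binom{k}{2}\cdot n_0 = O(k^2\log k/(\alpha^2\ve^2))$, matching the claim. Privacy is immediate: the user groups are disjoint, each user outputs exactly one $\ve$-DP randomized response depending only on their own datapoint, so the protocol is non-interactive and the full transcript is $\ve$-LDP by parallel composition on disjoint inputs.

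There is no substantive obstacle here—the argument is essentially just ``2-wise primitive $+$ union bound''; the only minor care required is (i) choosing the per-pair failure probability as $\Theta(1/k^2)$ so that the $\log(1/\beta)$ factor contributes only $\log k$ rather than something larger, and (ii) verifying that breaking ties in the tournament (users cannot simultaneously be in two groups in a non-interactive LDP protocol) does not harm either privacy or sample complexity, which is handled by the disjoint-group construction above.
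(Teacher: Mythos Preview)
Your proposal is correct and matches the paper's approach exactly: the paper simply states that one extends the $2$-wise primitive ``by simply running said algorithm on pairs of distributions and picking the one which never loses a hypothesis test,'' and your write-up supplies precisely the formalization (disjoint user groups per pair, per-pair failure probability $\Theta(1/k^2)$, union bound) that this sentence abbreviates.
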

We note that the same algorithm also solves the more general problem of $\ve$-LDP hypothesis selection, see Section~\ref{sec:comparators} and particularly Section~\ref{sec:comp:easy-k2}.

\section{Lower Bounds for Locally Private Hypothesis Selection}
\label{sec:lb}

In this section we state sample complexity lower bound results on
locally private hypothesis selection. We will first focus on the lower
bound for non-interactive protocols, and leverage a known lower
bound on locally private selection due to \cite{Ullman18} (a similar statement appears in~\cite{DuchiJW17}), which also
follows from the lower bound for sparse estimation in~\cite{DuchiJW17}. Let
$d\in \mathbb{N}$, $\alpha \in [0,1]$, and let $U_d$ be a uniform
distribution over $\{\pm 1\}^d$. For every $b \in \{\pm 1\}$ and
$j\in [d]$, we define distribution
$p_{b,j} = (1 - \alpha) U_d + \alpha \left( U_d \mid x_j = b\right)$,
that is, the distribution that is uniform over $\{\pm 1\}^d$ except
that $X_j = b$ with probability $1/2 + \alpha$.

\begin{theorem}[Theorem 3.1 of \cite{Ullman18}]\label{goodhombre}
  Let $\ve \in (0,1)$.  Let $d > 32$, $B$ be distbuted uniformly
  over $\{\pm 1\}$, and let $J$ be distributed uniformly over $[d]$.
  Suppose $M$ is an non-interactive $\ve$-LDP protocol and $n$ is such that
  \[
    \Pr_{B, J, X_1, \ldots , X_n \sim (p_{B,J}| B, J)} [ M(X_1, \ldots
    , X_n) = (B, J)] \geq 1/3.
  \]
  Then
  $$n = \Omega\left( \frac{d \log d}{ \alpha^2 \ve^2}\right).$$
\end{theorem}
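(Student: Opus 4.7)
The plan is to treat this as a $2d$-way selection problem and apply an information-theoretic lower bound adapted to the local model. Let $V = (B,J)$, so $V$ is uniform over $\{\pm 1\}\times[d]$ with $|V| = 2d$. Conditional on $V=(b,j)$, the user $i$ observes $X_i \sim p_{b,j}$ and emits a message $Z_i$ through an $\ve$-LDP channel, with all channels operating independently since the protocol is non-interactive. Since we require $\Pr[M(Z_1,\dots,Z_n) = V] \ge 1/3$, Fano's inequality (in the form $\Pr[\text{err}] \ge 1 - (I(V;Z_1^n)+\log 2)/\log|V|$) tells us that we must have $I(V; Z_1,\dots,Z_n) \gtrsim \log d$. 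It then suffices to upper bound $I(V; Z_1^n)$ as a function of $n$, $\ve$, and $\alpha$.

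The next step is to invoke the strong data processing inequality for $\ve$-LDP channels due to Duchi--Jordan--Wainwright, which says that for any $\ve$-LDP channel $Q$ and any family $\{p_v\}$ with mixture $\bar p$, the mutual information between the hidden parameter and the single-shot output satisfies $I(V;Z) \lesssim (e^\ve-1)^2 \,\Phi$, where $\Phi$ is an appropriate ``variability'' functional of the family. The relevant form of $\Phi$ for sparse families is the refined ``linearized'' quantity $\Phi = \sup_{\gamma: \|\gamma\|_\infty \le 1}\mathbb E_V[\langle \gamma, p_V - \bar p\rangle^2]$. By tensorization for non-interactive protocols one gets $I(V; Z_1^n) \lesssim n(e^\ve-1)^2 \Phi$. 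One can then directly compute $\Phi$ for our family: since $p_{b,j}(x) = (1+b\alpha x_j)/2^d$, the mean under $p_{b,j}$ of any bounded test function $\gamma(x)=\sum_S \hat\gamma(S)\prod_{i\in S} x_i$ equals $\hat\gamma(\emptyset) + b\alpha \hat\gamma(\{j\})$, and Parseval plus $\|\gamma\|_\infty\le 1$ yields $\Phi \lesssim \alpha^2/d$. Plugging into Fano gives $\log d \lesssim n\ve^2 \alpha^2/d$, i.e.\ $n = \Omega(d\log d/(\alpha^2\ve^2))$.

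The main obstacle is obtaining the crucial $1/d$ factor in the bound on $\Phi$. The naive DJW contraction $I(V;Z) \lesssim (e^\ve-1)^2 \mathbb E_V[\|p_V-\bar p\|_{TV}^2]$ only gives $\Phi = O(\alpha^2)$, since each $p_{b,j}$ has total variation distance $\alpha/2$ from $\bar p = U_d$; this yields only the weaker bound $n = \Omega(\log d/(\alpha^2\ve^2))$, missing a factor of $d$. The refinement exploits the fact that the ``directions'' in which the $p_{b,j}$ deviate from $U_d$ are nearly orthogonal in the Hadamard / Walsh basis: each LDP message $Z_i$ corresponds to a bounded ``score'' function on $\{\pm 1\}^d$ whose Fourier weight on the sparse coordinates is spread over all $d$ possibilities, so it can correlate with at most an $O(1/d)$ fraction of the signal on any given coordinate. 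Making this precise is the technical core of the argument; it is carried out in Duchi--Jordan--Wainwright's lower bound framework for sparse estimation and repackaged in Ullman's selection lower bound (Theorem~3.1 of \cite{Ullman18}), which is the form we directly invoke.
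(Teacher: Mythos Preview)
The paper does not supply its own proof of this theorem: it is stated as a cited result (Theorem~3.1 of \cite{Ullman18}, with a similar statement attributed to \cite{DuchiJW17}) and then used as a black box to derive the non-interactive lower bound for $k$-wise simple hypothesis testing. So there is no in-paper argument to compare against.

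That said, your outline is correct and is essentially how the cited result is established. The reduction to Fano with $V=(B,J)$ uniform on $2d$ points is standard, and your computation of the ``linearized'' variability functional is accurate: since $p_{b,j}(x) - U_d(x) = b\alpha x_j/2^d$, the inner product $\langle \gamma, p_{b,j}-\bar p\rangle$ picks out exactly the degree-one Fourier coefficient $b\alpha\hat\gamma(\{j\})$, and averaging $(b\alpha\hat\gamma(\{j\}))^2$ over $(b,j)$ gives $\frac{\alpha^2}{d}\sum_j \hat\gamma(\{j\})^2 \le \alpha^2/d$ by Parseval and $\|\gamma\|_\infty\le 1$. Combined with the Duchi--Jordan--Wainwright contraction for $\ve$-LDP channels and tensorization over the $n$ non-interactive users, this yields $I(V;Z_1^n) \lesssim n\ve^2\alpha^2/d$, and Fano forces $n = \Omega(d\log d/(\alpha^2\ve^2))$. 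You also correctly flag that the naive TV-based contraction loses the factor of $d$, and that the refined bound hinges on the orthogonality of the perturbation directions in the Walsh basis---this is indeed the heart of the argument in both \cite{DuchiJW17} and \cite{Ullman18}.
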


To obtain a lower bound on hypothesis selection, we will rely on the
following fact that bounds the total variation distance between the
distributions $p_{b,j}$ (see e.g., Lemma 6.4 in \cite{KamathLSU18}).

\begin{fact}\label{prodtv}
  Let $q$ and $q'$ be two product distributions over $\{\pm 1\}^d$
  with mean vectors $\mu$ and $\mu'$ respectively, such that
  $\mu_i \in [-1/3, 1/3]$ for all $j\in [d]$. Suppose that
  $\|\mu - \mu'\|_2 \geq \alpha$ for any $\alpha \leq \alpha_0$ with
  some absolute constant $0 < \alpha_0 \leq 1$. Then
  $\dtv{q}{ q'} \geq C \alpha$, for some absolute constant $C$.
\end{fact}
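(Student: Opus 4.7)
The plan is to prove the lower bound by exhibiting an explicit linear test that separates the two product distributions. Let $v = \mu - \mu'$, so $\|v\|_2 \ge \alpha$, and consider the statistic $T(x) = \langle v, x \rangle$. Under $q'$, $T$ has mean $m' = \langle v, \mu'\rangle$ and variance $\sigma^2 = \sum_i v_i^2(1 - (\mu_i')^2)$, and the bound $\mu_i' \in [-1/3, 1/3]$ forces $\sigma^2 \in [\tfrac{8}{9}\|v\|_2^2, \|v\|_2^2]$. Under $q$, the mean of $T$ shifts to $m = m' + \|v\|_2^2$, with variance of the same order. In standardized units, the two laws of $T$ therefore have means $\|v\|_2^2/\sigma = \Theta(\|v\|_2) \ge \Omega(\alpha)$ standard deviations apart.

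To turn this mean gap into a TV lower bound, I would split into two cases depending on the shape of $v$. In the concentrated case, if $\|v\|_\infty \ge c\|v\|_2$ for a small absolute constant $c$, then picking the coordinate $i^\star$ at which the supremum is attained and using monotonicity of TV under marginalization gives $\dtv{q}{q'} \ge \dtv{q_{i^\star}}{q'_{i^\star}} = |v_{i^\star}|/2 \ge c\alpha/2$, and we are done. In the spread case $\|v\|_\infty < c\|v\|_2$, I would apply the Berry--Esseen theorem to $T$: since the summands $v_i X_i$ have bounded third absolute central moments, the Berry--Esseen error for the standardized $T$ is $O\!\left(\sum_i |v_i|^3/\sigma^3\right) = O(\|v\|_\infty/\|v\|_2) = O(c)$ under the spread hypothesis. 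Choosing the event $A = \{T \ge m' + \|v\|_2^2/2\}$, the mean-shift argument for Gaussians then gives $\Pr_{q}[A] - \Pr_{q'}[A] \ge 2\Phi\!\left(\|v\|_2/2\right) - 1 - O(c) = \Theta(\alpha) - O(c)$; fixing $c$ small enough and (if needed) shrinking $\alpha_0$ so that the Gaussian CDF estimate $2\Phi(\alpha/2) - 1 \asymp \alpha/\sqrt{2\pi}$ dominates the Berry--Esseen error yields $\dtv{q}{q'} \ge C\alpha$ with an absolute constant $C>0$.

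The main technical obstacle will be controlling the Berry--Esseen error in the spread case so that it remains a lower-order term relative to the Gaussian gap $\Theta(\alpha)$; the spread hypothesis $\|v\|_\infty \le c\|v\|_2$ is precisely what gives the required uniform bound on $\sum_i |v_i|^3/\sigma^3$. Calibrating the absolute constants $c$ and $\alpha_0$ so that the two cases yield a common constant $C$ is routine but must be done carefully. As a much shorter alternative, one can simply appeal to Lemma 6.4 of~\cite{KamathLSU18}, which establishes exactly this linear TV lower bound by essentially the argument sketched above.
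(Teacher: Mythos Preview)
The paper does not actually prove this fact; it simply cites Lemma~6.4 of \cite{KamathLSU18}, exactly as you suggest in your closing sentence. So your ``shorter alternative'' is precisely what the paper does.

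Your sketched direct argument, however, has a genuine gap in the spread case. The Berry--Esseen correction you obtain is an \emph{absolute additive} error $O(\|v\|_\infty/\|v\|_2) \le O(c)$, while the Gaussian signal $2\Phi(\|v\|_2/2)-1$ is only $\Theta(\|v\|_2)$, which can be as small as $\Theta(\alpha)$. For the bound $\Theta(\alpha)-O(c)\ge C\alpha$ to hold you would need $c$ to scale with $\alpha$, but $c$ is the fixed threshold separating your two cases, so it cannot depend on $\alpha$. Shrinking $\alpha_0$ makes things worse, not better: it forces $\alpha$ smaller while the $O(c)$ term stays put. Concretely, take $v_i=\alpha/\sqrt d$ for all $i$ with $d$ a moderate constant (say $d=1/c^2+1$, just past your threshold). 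Then $\|v\|_2=\alpha$, $\|v\|_\infty/\|v\|_2=1/\sqrt d < c$, you are in the spread case, the Berry--Esseen error bound is $\Theta(1/\sqrt d)=\Theta(c)$, and for $\alpha\ll c$ your inequality $\alpha/\sqrt{2\pi}-O(c)\ge C\alpha$ is vacuous. The single-coordinate bound in this instance only gives $\|v\|_\infty/2=\alpha/(2\sqrt d)$, which does work here, but your case split has already routed this instance to the spread branch. More generally, moving the threshold to $\|v\|_\infty \ge c'\|v\|_2^2$ makes Berry--Esseen go through in the spread case but then the concentrated case only yields $\Omega(\alpha^2)$; no threshold on $\|v\|_\infty/\|v\|_2$ alone makes both branches give $\Omega(\alpha)$. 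Closing this requires a sharper argument in the small-$\|v\|_2$ spread regime than a black-box Berry--Esseen bound provides.
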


\begin{theorem}[Non-interactive lower bound]
  Let $\ve \in (0,1)$.
  Suppose $M$ is a non-interactive an $\ve$-LDP protocol that solves the $k$-wise
  simple hypothesis testing problem  with probability at least $1/3$
  when given $n$ samples from some distribution $p \in \mathcal{Q}$,
  where $\mathcal{Q} = \{q_1, \ldots, q_k\}$ are distributions such
  that $\min_{i \neq j} \dtv{q_i}{q_j} \ge \alpha$. Then
  $$n \geq \Omega\left( \frac{k \log k}{ \alpha^2 \ve^2}\right).$$
\end{theorem}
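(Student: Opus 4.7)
The plan is to reduce the selection problem of Theorem~\ref{goodhombre} directly to $k$-wise simple hypothesis testing, by taking the hypothesis class to be exactly the family $\{p_{b,j}\}_{b\in\{\pm1\},j\in[d]}$ of distributions considered there. Set $d=\lfloor k/2\rfloor$ so that this family has at most $k$ members, and let $\mathcal{Q}$ consist of these distributions (padded arbitrarily if $k$ is odd).

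First I would verify that $\mathcal{Q}$ satisfies the separation hypothesis of the theorem, i.e., $\min_{i\neq j}\dtv{q_i}{q_j}\ge \Omega(\alpha)$, so that Fact~\ref{prodtv} applies. A direct computation shows that the mean of $p_{b,j}$ is the vector with $\alpha b$ in coordinate $j$ and $0$ elsewhere. Consequently, for any two distinct pairs $(b,j)\ne (b',j')$, the $\ell_2$ distance between the corresponding mean vectors is either $2\alpha$ (when $j=j'$, $b\ne b'$) or $\sqrt{2}\,\alpha$ (when $j\ne j'$); in both cases it is $\Omega(\alpha)$. Since the coordinate means lie in $[-\alpha,\alpha]\subset[-1/3,1/3]$ for $\alpha$ small enough, Fact~\ref{prodtv} gives $\dtv{p_{b,j}}{p_{b',j'}}\ge C\alpha$ for an absolute constant $C$. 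After rescaling $\alpha$ by a constant, this satisfies the separation hypothesis.

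Next I would carry out the reduction: suppose $M$ is a non-interactive $\varepsilon$-LDP $k$-wise simple hypothesis testing protocol with success probability at least $1/3$ using $n$ samples on this class. Draw $(B,J)$ uniformly from $\{\pm1\}\times[d]$ and $n$ samples from $p_{B,J}$; apply $M$ to obtain some $\hat q\in\mathcal{Q}$; output the index $(\hat b,\hat\jmath)$ such that $\hat q=p_{\hat b,\hat\jmath}$. Since $p\in\mathcal{Q}$ and the elements of $\mathcal{Q}$ are pairwise $\Omega(\alpha)$-separated, correct testing is equivalent to correct identification of $(B,J)$, so this derived protocol solves the selection problem of Theorem~\ref{goodhombre} with probability at least $1/3$. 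The derived protocol is also non-interactive and $\varepsilon$-LDP. Invoking Theorem~\ref{goodhombre} yields $n=\Omega(d\log d/(\alpha^2\varepsilon^2))=\Omega(k\log k/(\alpha^2\varepsilon^2))$, as claimed.

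I do not anticipate any substantive obstacle; the work is essentially bookkeeping, with the only mildly delicate point being the verification that Fact~\ref{prodtv} applies with the right constants, which I addressed above by replacing $\alpha$ with a constant multiple if necessary. The reduction itself is information-theoretic, so it preserves both non-interactivity and the privacy parameter $\varepsilon$ exactly.
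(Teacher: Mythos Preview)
Your proposal is correct and follows essentially the same route as the paper: take $\mathcal{Q}=\{p_{b,j}\}_{b\in\{\pm1\},j\in[d]}$ with $k=2d$, invoke Fact~\ref{prodtv} to verify the $\Omega(\alpha)$ pairwise separation, and then apply Theorem~\ref{goodhombre}. The paper's proof is just a terser version of exactly this argument.
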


\begin{proof}
  Let $\mathcal{Q} = \{p_{b,j} \mid b\in \{\pm 1\}, j\in [d]\}$ be a
  set of $k=2d$ probability distributions. For any pair of
  distributions $q, q'\in \mathcal{Q}$, we know from Fact~\ref{prodtv}
  that $\dtv{q}{q'} \geq \alpha/C$ for some absolute constant $C$.
  %This means if the output hypothesis satisfies
  %$\dtv(p, \hat q) \leq \alpha/(2C)$, then it must be that
  %$\hat q = p$, which means the algorithm can identify the true
  %unknown distribution.
  Thus, our stated bound follows from
  Theorem~\ref{goodhombre}.
\end{proof}

Next we will derive a sample complexity lower bound for general
locally private protocols. We will build on a result due to
\cite{DuchiR19} and consider the set of 1-sparse Gaussian
distributions $\{\mathcal{N}(\theta, I_d) \mid \theta\in \Theta\}$,
where
$\Theta = \{\theta \in \mathbb{R}^d\mid \|\theta\|_2 = \alpha,
\|\theta\|_0 = 1\}$ is the set of vectors that have a single non-zero
coordinate, equal to $-\alpha$ or $+\alpha$.

Following the result of \cite{DuchiR19} (and the framework of
\cite{BravermenGMNW}), we can obtain a general lower bound analogous
to Theorem~\ref{goodhombre}.

\begin{theorem}[Corollary~6 of \cite{DuchiR19}, Theorem~4.5 of \cite{BravermenGMNW}] \label{goodhombre2}
  Let $\ve \in (0,1)$. Let $U$ be a uniform distbution over
  $\Theta$. Suppose $M$ is an $\ve$-LDP protocol, and $n$ is such that
  \[
    \Pr_{\theta \sim U, X_1, \ldots , X_n \sim \mathcal{N}(\theta, I)}
    [ M(X_1, \ldots , X_n) = \theta] \geq 1/3.
  \]
  Then
  $$n \geq \Omega\left( \frac{d}{ \alpha^2 \ve^2}\right).$$
\end{theorem}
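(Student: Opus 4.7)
The plan is to apply the information-theoretic lower bound machinery for locally private estimation developed by Duchi--Jordan--Wainwright and refined by Braverman--Garg--Ma--Nguyen--Woodruff and Duchi--Rogers, specialized to the $1$-sparse Gaussian mean problem. The argument has three components: (i) a reduction from estimation to multiple hypothesis testing over the uniform prior on $\Theta$; (ii) a Fano-type lower bound on the error probability in terms of the mutual information $I(\theta; Y)$ between the hidden parameter and the full transcript $Y = (Y_1,\ldots,Y_n)$ of the (possibly fully interactive) $\epsilon$-LDP protocol; and (iii) an LDP information-contraction inequality that upper bounds $I(\theta; Y)$ in terms of $n$, $\epsilon$, and a family-specific divergence.

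First I would observe that $|\Theta| = 2d$, so a protocol that correctly identifies $\theta$ with probability at least $1/3$ must, by Fano's inequality, satisfy
\[
I(\theta; Y) \;\ge\; c\,\log(2d) - O(1) \;=\; \Omega(\log d).
\]
Second, I would invoke the locally private strong data processing inequality: for $\epsilon \in (0,1)$, each user's single message leaks at most $O(\epsilon^2)$ times an appropriate $\chi^2$-type averaged divergence between the conditional data distributions it sees, and this contracts even under full interactivity. Aggregating over $n$ users gives a bound of the form
\[
I(\theta; Y) \;\le\; C\, n\, \epsilon^2\, \bar{D}(\Theta),
\]
where $\bar{D}(\Theta)$ is the relevant LDP-contracted divergence for the family $\{\mathcal{N}(\theta,I_d) : \theta \in \Theta\}$.

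The heart of the argument is computing $\bar{D}(\Theta)$ for the $1$-sparse Gaussian family, and this is where the sparsity structure interacts crucially with the LDP constraint. While any fixed pair of hypotheses has $\chi^2$ distance $O(\alpha^2)$, a uniformly random pair $(\theta,\theta')\in\Theta^2$ is supported on \emph{different} coordinates with probability $(d-1)/d$. Because an $\epsilon$-LDP user can effectively commit to extracting information from only a bounded number of coordinate directions before observing its sample, the averaged privatized divergence is reduced by a factor of $\Theta(1/d)$, yielding $\bar{D}(\Theta) = O(\alpha^2/d)$. Combining with the Fano bound produces
\[
\log d \;\lesssim\; C\, n\, \epsilon^2\, \alpha^2 / d,
\]
which rearranges to $n = \Omega(d/(\alpha^2\epsilon^2))$ (ignoring logarithmic factors in $d$) as claimed.

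The main obstacle is establishing the LDP contraction inequality with the sparsity-aware $1/d$ factor in $\bar{D}(\Theta)$; the naive $\chi^2$ bound gives only $\bar{D}(\Theta) = O(\alpha^2)$ and therefore the much weaker conclusion $n = \Omega(\log d/(\alpha^2\epsilon^2))$. The sharper contraction requires the technology developed in \cite{DuchiR19} and \cite{BravermenGMNW}, which analyses the worst-case orientation a single LDP user's output distribution can take against the sparse family and shows that no single orientation can be simultaneously informative about all $d$ candidate coordinates. A secondary subtlety is handling fully interactive protocols: one must show that adaptivity across rounds does not help the adversary, which is where the martingale/chain-rule decomposition of mutual information used in the Braverman et al.\ framework plays a central role.
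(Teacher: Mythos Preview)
The paper does not prove this theorem; it is imported directly from the cited references (Corollary~6 of Duchi--Rogers and Theorem~4.5 of Braverman--Garg--Ma--Nguyen--Woodruff) and used as a black box to derive Theorem~\ref{thm:fully-interactive-lb}. So there is no ``paper's own proof'' to compare against.

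Your sketch correctly captures the architecture of the proofs in those references: a Fano-type reduction over the uniform prior on the $2d$ sparse means, combined with an LDP strong-data-processing/information-contraction bound that accounts for sparsity. The crucial point you identify --- that the effective per-sample information leakage is $O(\epsilon^2\alpha^2/d)$ rather than $O(\epsilon^2\alpha^2)$, because a single LDP channel cannot be simultaneously aligned with all $d$ coordinate directions --- is exactly the content of the cited results. Your identification of the two technical obstacles (the sparsity-aware $1/d$ factor and the extension to fully interactive protocols via a chain-rule/martingale decomposition) is accurate.

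One minor remark: your Fano calculation yields $\log d \lesssim n\epsilon^2\alpha^2/d$, i.e., $n = \Omega(d\log d/(\alpha^2\epsilon^2))$, which would be stronger than the stated $\Omega(d/(\alpha^2\epsilon^2))$. In fact the paper's \emph{non-interactive} bound (Theorem~\ref{goodhombre}) does carry the extra $\log d$, while the fully interactive bound quoted here does not. The loss of the logarithm in the fully interactive case is a feature of the contraction machinery in \cite{DuchiR19}, not of the Fano step; your write-up glosses over this by saying ``ignoring logarithmic factors,'' which is harmless for the stated conclusion but worth being aware of.
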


% \begin{theorem}[Fully-interactive lower bound]
% \label{thm:fully-interactive-lb}
%   Let $\ve \in (0,1)$.
%   Suppose $M$ is an $\ve$-LDP protocol that solves the $k$-wise
%   simple hypothesis testing problem  with probability at least $1/3$
%   when given $n$ samples from some distribution $p \in \mathcal{Q}$,
%   where $\mathcal{Q} = \{q_1, \ldots, q_k\}$ are distributions such
%   that $\min_{i \neq j} \dtv(q_i, q_j) \ge \alpha$. Then
% % , and $n$ is such that for any set of $k$
% %   distributions $\mathcal{Q}$ and any unknown distribution
% %   $p\in \mathcal{Q}$ and a set of examples $X_1, \ldots , X_n$ drawn
% %   from $p$, $M(X_1, \ldots , X_n)$ outputs a distribution
% %   $\hat q\in \mathcal{Q}$ such that $\dtv(p, \hat q) \leq \alpha$ with
% %   probability at least $1/3$. Then
%   $$n \geq \Omega\left( \frac{k}{ \alpha^2 \ve^2}\right).$$
% \end{theorem}

\lb*
\begin{proof}
  For any two 1-sparse vectors $\theta, \theta'\in \Theta$ such that
  $\theta\neq \theta'$, the total variation distance between their
  Gaussian distributions is given by
  $\|\theta - \theta'\|_2 = \sqrt{2} \alpha$ (see, e.g.,
  \cite{DevroyeMR18b}). Thus, our stated bound follows from
  Theorem~\ref{goodhombre2}.
\end{proof}

%%% Local Variables:
%%% mode: latex
%%% TeX-master: "main.tex"
%%% End:

\section{Non-Interactive Locally Private Hypothesis Selection}
\label{sec:non-adaptive}

In this section, we prove Theorem \ref{thm:1Round}.  For simplicity of notation, we assume without loss of generality that $q_1,q_2,\dots,q_k$ and $p$ are discrete probability distributions 
on domain $[N]$, where $[N]:= \{1,2,\ldots, N\}.$ 
See the discussion in Remark~\ref{rmk:cont} on how to deal with continuous distributions.
Here we propose an algorithm which uses $n\lesssim k\,\polylog(k)/(\alpha^4\eps^2)$ samples, and outputs a distribution $\hat q \in \mathcal{Q}$ which has TV distance of at most $O(\alpha)$ with $p$, when $\beta \ll  \alpha^2/\log k.$ Recall that $\beta :=  \min_{q \in \mathcal{Q}} \dtv{p}{q}$.
In this mechanism, the users are divided into $k$ groups $G_1,G_2,\dots,G_k$ of size $n/k$ each. Let $X_{ij} \sim p$ denote the sample with the $j^{th}$ user in the group $G_i$. Our non-interactive mechanism is described in Algorithm \ref{alg:LikelyhoodTest}.

\begin{algorithm}
 \caption{Non-interactive $\eps$-DP mechanism for LPHS}
 \label{alg:LikelyhoodTest}
\hspace*{\algorithmicindent} \textbf{Input:} Distributions $\mathcal{Q} = \{q_1,\dots,q_k\}$, Samples $(X_{ij})_{i\in [k],j \in [n/k]}$ from unknown distribution $p$, sensitivity parameter for Laplace noise $L$, privacy parameter $\ve$, function $\gamma:[N]\to \R^+$ such that $|\log(\gamma(a)/q_i(a))|\le L$ for all $a\in [N],i\in [k].$\footnotemark \\
\hspace*{\algorithmicindent} \textbf{Output:} $\hat q \in \mathcal{Q}$ such that $\dtv{p}{\hat q}\le \alpha$ with high probability.
\begin{algorithmic}[1]
 \For {$i\in [k]$}
  \For {$j \in [n/k]$}
    \State The $j^{th}$ user in group $G_i$ sends $Z_{ij} := \log(\gamma(X_{ij})/q_i(X_{ij}))+ \Lap(L/\eps)$ to the central server
  \EndFor
  \State The central server computes $C_i= \frac{1}{(n/k)} \cdot \sum_{j\in [n/k]} Z_{ij}.$
 \EndFor
 \State \textbf{return} $\argmin{i} C_i$.
\end{algorithmic}
\end{algorithm}
\footnotetext{In other words, we require $\Dinfty(\gamma||q_i),\Dinfty(q_i||\gamma)\le L$ for all $i\in [k]$, i.e., all the distributions $q_1,q_2,\dots,q_k$ are close to some distribution $\gamma$. To prove Theorem~\ref{thm:1Round}, we will instantiate Algorithm~\ref{alg:LikelyhoodTest} with $\gamma$ being the uniform distribution on $[N]$, but we state Algorithm~\ref{alg:LikelyhoodTest} with arbitrary $\gamma$ for generality.}

\begin{lemma}
\label{thm:LikelyhoodTest}
Let $\eps\in (0,1]$ be some fixed privacy parameter. Suppose
$\beta\ll \alpha^2/L$ and $ n\gg \frac{ k(\log k) L^2} {\alpha^4 \eps^2}$. Then Algorithm~\ref{alg:LikelyhoodTest} is $\eps$-LDP and outputs
$\hat q \in \mathcal{Q}$ with probability at least $1-1/k^2$ such that
$\dtv{p}{\hat q} \leq \alpha$.
\end{lemma}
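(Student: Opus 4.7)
The plan is to first dispose of the privacy claim, then analyze utility through an expectation computation, a mean-gap lemma, and a concentration argument. Privacy is immediate: each user in group $G_i$ releases a single message $Z_{ij}$ that depends only on their own sample $X_{ij}$, obtained by adding $\mathrm{Lap}(L/\eps)$ noise to the bounded statistic $f_i(X_{ij}) := \log(\gamma(X_{ij})/q_i(X_{ij})) \in [-L, L]$. Since $f_i$ has sensitivity at most $2L$, the Laplace mechanism yields $\eps$-LDP (absorbing constants into the noise scale), and since the groups are disjoint and no user reads any other user's output, the protocol is non-interactive. For utility, I first compute the expectation of $C_i$ when $X_{ij} \sim p$:
\[
\mathbb{E}[C_i] \;=\; \mathbb{E}_{X\sim p}\!\left[\log\frac{\gamma(X)}{q_i(X)}\right] \;=\; D_{KL}(p,q_i) - D_{KL}(p,\gamma),
\]
so the $D_{KL}(p,\gamma)$ term cancels in every pairwise difference $C_i - C_j$, and the test is, in expectation, ranking candidates by $D_{KL}(p,q_i)$.

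The heart of the argument is a mean-gap statement. Let $i^* := \arg\min_i \dtv{p}{q_i}$, so $\dtv{p}{q_{i^*}} = \beta$. I claim that for every ``bad'' index $i$ with $\dtv{p}{q_i} > \alpha$, one has $\mathbb{E}[C_i - C_{i^*}] \gtrsim \alpha^2$. To show this, write
\[
\mathbb{E}[C_i - C_{i^*}] \;=\; \sum_{a} p(a)\log\frac{q_{i^*}(a)}{q_i(a)},
\]
whose integrand lies in $[-2L,2L]$. The standard change-of-measure inequality $|\mathbb{E}_p[h]-\mathbb{E}_{q_{i^*}}[h]| \le 2\|h\|_\infty \dtv{p}{q_{i^*}}$ lower-bounds this quantity by $D_{KL}(q_{i^*},q_i) - 4L\beta$. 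The triangle inequality yields $\dtv{q_{i^*}}{q_i} \ge \alpha - \beta \ge \alpha/2$, and Pinsker's inequality (Lemma~\ref{lem:distance_relationship}) then gives $D_{KL}(q_{i^*},q_i) \gtrsim \alpha^2$. The hypothesis $\beta \ll \alpha^2/L$ ensures the $4L\beta$ slack is dominated, so $\mathbb{E}[C_i - C_{i^*}] \gtrsim \alpha^2$ for every bad $i$.

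Finally, each $Z_{ij}$ is a bounded term of magnitude $\le L$ plus a $\mathrm{Lap}(L/\eps)$ variable, hence sub-exponential with variance $O(L^2/\eps^2)$. A Bernstein-type tail bound on the $n/k$ i.i.d.\ draws in group $G_i$ yields $|C_i - \mathbb{E}[C_i]| \ll \alpha^2$ with probability at least $1-1/k^3$ whenever $n \gg kL^2\log k/(\alpha^4\eps^2)$; a union bound over the $k$ groups makes all fluctuations small simultaneously with probability $\ge 1-1/k^2$. On this event, the mean gap dominates the empirical fluctuation and $\arg\min_i C_i$ cannot be a bad index, so $\dtv{p}{\hat q} \le \alpha$. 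The conceptual obstacle, and the source of the specific sample complexity, is the mismatch between the natural log-likelihood statistic (which estimates KL) and the target metric (TV): the $L$-boundedness assumption plays a dual role, keeping per-user statistics bounded so that $\mathrm{Lap}(L/\eps)$ suffices for both privacy and fast concentration, and letting Pinsker convert an $\alpha$-separation in TV into only a $\Theta(\alpha^2)$ separation in means. This quadratic loss is precisely why the sample complexity scales as $\alpha^{-4}$ rather than $\alpha^{-2}$ and why we must restrict to the regime $\beta \ll \alpha^2/L$; in the application to Theorem~\ref{thm:1Round}, the flattening step makes $L = O(\log k)$, giving the final bound.
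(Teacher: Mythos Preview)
Your proposal is correct and follows essentially the same approach as the paper: both argue privacy via the Laplace mechanism on the bounded log-likelihood ratio, use a change-of-measure from $p$ to $q_{i^*}$ (controlled by $L\beta$) together with Pinsker's inequality to establish an $\Omega(\alpha^2)$ gap in expected scores between $i^*$ and any bad $i$, and finish with Hoeffding/Laplace concentration plus a union bound. The only cosmetic difference is that you work directly with the pairwise differences $\E[C_i - C_{i^*}]$ (using $h=\log(q_{i^*}/q_i)$, $\|h\|_\infty\le 2L$), whereas the paper anchors each $\E[C_i]$ to a common shift $B=-D_{KL}(q_{i^*}\,\|\,\gamma)$ and uses $h=\log(\gamma/q_i)$ with $\|h\|_\infty\le L$, yielding a slightly tighter constant ($2L\beta$ versus your $4L\beta$); this has no effect on the asymptotic statement.
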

\begin{proof}
  From our assumption, $|\log(\gamma(a)/q_i(a))|\le L$ for
  $a\in [N],i\in [k]$. The algorithm adds noise sampled from
  $\Lap(L/\eps)$, hence $\eps$-LDP guarantee follows easily from the
  properties of the Laplace mechanism \cite{DworkR14}. We will now
  prove correctness.  Let $i^*\in [k]$ be such that
  $\dtv{p}{q_{i^*}} = \beta$.  Fix a group $G_i$ and consider,

  \begin{eqnarray*}
  \E_{X_{ij}\sim p}[C_i] &=& \frac{1}{(n/k)} \cdot \E \left[\sum_{j\in [n/k]} Z_{ij} \right]\\
  &=& \E_{a\sim p}\left[\log\left(\frac{\gamma(a)}{q_i(a)}\right)\right] \quad ({\text{By the linearity of expectation and}} \hspace{1mm} \E[\text{Lap}(L/\epsilon)] = 0)\\
  &=&\sum_{a\in [N]} p(a)\log\left(\frac{\gamma(a)}{q_i(a)}\right) \\
  &=& \sum_{a \in [N]} q_{i^*}(a)\log\left(\frac{q_{i^*}(a)}{q_i(a)}\right) + \sum_{a \in [N]} (p(a)-q_{i^*}(a))\log\left(\frac{\gamma(a)}{q_i(a)}\right) \\
  &+& \sum_{a \in [N]} q_{i^*}(a)\log\left(\frac{\gamma(a)}{q_{i^*}(a)}\right) \\
  &=& \DKL(q_{i^*}||q_i) + \sum_{a \in [N]} (p(a)-q_{i^*}(a))\log\left(\frac{\gamma(a)}{q_i(a)}\right) - \DKL(q_{i^*}||\gamma).
  \end{eqnarray*}

Let $B=-\DKL(q_{i^*}||\gamma)$. By re-arranging the above term we get 
\begin{eqnarray*}
\left|\E_{X_{ij}\sim q}[C_i]-\DKL(q_{i^*}||q_i)-B \right| &\le&  \sum_{a \in [N]} |p(a)-q_{i^*}(a)| \cdot \left|\log\left({\gamma(a)}/{q_i(a)}\right)\right| \\
&\le&  \sup_{a \in [N]}\left|\log\left(\frac{\gamma(a)}{q_i(a)}\right)\right| \cdot \left( \sum_{a \in [N]} |p(a)-q_{i^*}(a)|\right) \\
&\le& L\cdot 2 \dtv{p}{q_{i^*}} \\
&\le& 2L\beta \le 0.1\alpha^2.
\end{eqnarray*}
Now observe that each $Z_{ij}$ can be expressed as $W_{ij} + Y_{ij}$, where $Y_{ij} \sim \text{Lap}(L/\epsilon)$, and the support of random variable $W_{ij}$ is in the interval $[-L, L]$ from our assumption.
Therefore, we can apply the standard Hoeffding's inequality and concentration of Laplace random variables (see \cite{Hoeffding94, ChanSS11} for example) to obtain
$\Pr\left[|C_i-\E[C_i]|\ge 0.1 \alpha^2\right]\le \exp\left(-\Omega(1)\cdot \frac{(n/k)\alpha^4}{(L/\eps)^2}\right) \le \frac{1}{k^3}$.

By taking the union bound, with probability at least $1-1/k^2$, $|C_i-\DKL(q_{i^*}||q_i)-B|\le 0.2\alpha^2$  for all $i\in [k]$.
In particular, $C_{i^*}\le B+0.2\alpha^2$. This implies that if $i'=\argmin{i} C_i$, then $C_{i'}\le B+0.2\alpha^2$. It remains to argue that $\dtv{p}{q_{i'}} < \alpha$. Suppose not. Consider any $q_i$ such that $\dtv{p}{q_i} > \alpha$. This implies that $\dtv{q_{i*}}{q_i} > \alpha/2$ based on our assumption. Now consider
$C_i \ge B+\DKL(q_{i^*}||q_i) - 0.2 \alpha^2 \ge B+ 2\dtv{q_{i^*}}{ q_i}^2 - 0.2\alpha^2 \ge B+0.3\alpha^2$, where we used Pinsker's inequality.
\end{proof}

We will now prove that we can take $L=O(\log k)$ in Algorithm~\ref{alg:LikelyhoodTest} and Lemma~\ref{thm:LikelyhoodTest}. For this we will need the following lemma. Given a randomized map\footnote{i.e., $\phi(a)$ has a distribution over $[N']$ for each $a\in [N]$.} $\phi:[N]\to [N']$ and a distribution $q$ on $[N]$, the distribution $\phi \circ q$ on $[N']$ is defined as the distribution of $\phi(a)$ when $a$ is sampled from $q$. (In other words, $\phi \circ q$ is the pushforward of $q$.) For the remaining part of this section, let $U_{N'}$ denote the uniform distribution on $[N'].$

\begin{lemma}[Flattening Lemma]
\label{lem:splitting}
Let $q_1,q_2,\dots,q_k$ be distributions over $[N]$. There exists a randomized map $\phi:[N]\to [N']$ (depending on $q_1,\dots,q_k$) for some $N\le N'\le (k+1)N$ s.t.
\begin{enumerate}
    \item for every $a\in [N'], i\in [k],$ $\frac{1}{2N'}\le (\phi \circ q_i)(a) \le \frac{1}{N}$ and
    \item $\dtv{\phi\circ q_i}{ \phi \circ q_{i'}} =\frac{1}{2} \cdot \dtv{q_i}{q_{i'}}$ for any two distributions $q_i, q_{i'}$.
\end{enumerate}
\end{lemma}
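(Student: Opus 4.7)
The plan is to construct $\phi$ by splitting every symbol $a\in[N]$ into $m_a$ copies in the target alphabet (to flatten peaks), and then mixing this split map with a map to the uniform distribution on $[N']$ (to create the lower bound on the pushforwards). The key calculation will show that mixing in the uniform component scales all TV distances by exactly $\tfrac12$, which is precisely the factor claimed in the lemma.

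Concretely, I would set $m_a := \max\{1,\lceil N \cdot q_{\max}(a)\rceil\}$, where $q_{\max}(a) := \max_{i\in[k]} q_i(a)$, and let $N' := \sum_{a\in[N]} m_a$. The size bound is immediate: $N'\ge N$ since every $m_a\ge 1$, and
\[
N' \le \sum_{a\in[N]}\bigl(N q_{\max}(a)+1\bigr) = N + N\sum_{a\in[N]} q_{\max}(a) \le N + N\sum_{a\in[N]}\sum_{i\in[k]} q_i(a) = N(k+1),
\]
using $\sum_i q_i(a)\ge q_{\max}(a)$ and $\sum_a q_i(a)=1$ for each $i$. Identify $[N']$ with $\{(a,j): a\in[N],\ j\in[m_a]\}$. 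Define the randomized map $\phi:[N]\to[N']$ as follows: on input $b\in[N]$, with probability $\tfrac12$ output a uniformly random element of $\{(b,j):j\in[m_b]\}$, and with probability $\tfrac12$ output a uniformly random element of $[N']$.

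A direct computation then gives, for every $(a,j)\in[N']$ and every $i\in[k]$,
\[
(\phi\circ q_i)\bigl((a,j)\bigr) \;=\; \sum_{b\in[N]} q_i(b)\cdot\Pr[\phi(b)=(a,j)] \;=\; \frac{q_i(a)}{2m_a} + \frac{1}{2N'}.
\]
The lower bound $\tfrac{1}{2N'}$ is immediate from the second term. For the upper bound, note that $m_a\ge N q_{\max}(a)\ge N q_i(a)$, so $\tfrac{q_i(a)}{2m_a}\le \tfrac{1}{2N}$; combined with $\tfrac{1}{2N'}\le\tfrac{1}{2N}$, this yields $(\phi\circ q_i)((a,j))\le \tfrac1N$, proving item~1. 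For item~2, the uniform term cancels in the difference, giving
\[
\dtv{\phi\circ q_i}{\phi\circ q_{i'}} \;=\; \tfrac12\sum_{a\in[N]}\sum_{j=1}^{m_a}\left|\frac{q_i(a)-q_{i'}(a)}{2m_a}\right| \;=\; \tfrac14\sum_{a\in[N]} |q_i(a)-q_{i'}(a)| \;=\; \tfrac12\,\dtv{q_i}{q_{i'}}.
\]

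I do not foresee a true obstacle here, since the construction decouples cleanly into ``splitting to cap the maximum'' and ``mixing with uniform to lift the minimum.'' The one conceptual point worth double-checking is that the $\tfrac12$ factor in item~2 is consistent with the way this lemma will be invoked in Theorem~\ref{thm:1Round}: the flattening shrinks TV distances by a factor of $2$, which is a constant and therefore harmless for Algorithm~\ref{alg:LikelyhoodTest} provided the application uses the flattened distributions consistently with a reference $\gamma := U_{N'}$, for which $|\log(\gamma(a)/(\phi\circ q_i)(a))| = O(\log k)$ by item~1, giving the $L=O(\log k)$ claim needed after this lemma.
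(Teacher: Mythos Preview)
Your proposal is correct and is essentially identical to the paper's proof: the paper also sets $m_a=\lceil N\cdot \max_i q_i(a)\rceil$, splits each symbol uniformly over $m_a$ copies, and then mixes with the uniform distribution on $[N']$ with probability $\tfrac12$, deriving the same bounds and the same $\tfrac12$ scaling of TV distances. The only cosmetic difference is your explicit $\max\{1,\cdot\}$ to handle symbols with $q_{\max}(a)=0$, which the paper leaves implicit.
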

\begin{proof}
Let $M(a)=\max_{i\in [k]} q_i(a)$ for $a\in [N]$. Let $N'=\sum_{a\in [N]} \lceil M(a)\cdot N\rceil$ and let $[N']=\cup_{a\in [N']} S_a$ be a partition of $[N']$ with $|S_a|=\lceil M(a) \cdot N\rceil$. Define $\phi':[N]\to [N']$ as follows: $\phi'(a)$ is uniformly distributed over $S_a$. Now it is clear that for every for every $b\in [N']$, $(\phi'\circ q_i)(b)\le \frac{1}{N}$. It is also clear that $\ellone{\phi'\circ q_{i}- \phi'\circ q_{i'}}=\ellone{q_{i}-q_{i'}}$ for any two distributions $q_{i},q_{i'}$.
We now mix in the uniform distribution $U_{N'}$ into $\phi'$, i.e., we define $\phi:[N]\to [N']$ as follows: $\phi(a)$ is distributed as $\phi'(a)$ with probability $1/2$ and distributed as $U_{N'}$ with probability $1/2$. Now for every $b\in [N'],$ $\frac{1}{2N'}\le (\phi\circ p_i)(b) \le \frac{1}{N}$. And $\ellone{\phi'\circ q_i- \phi'\circ q_{i'}}=\frac{1}{2}\ellone{q_i-q_{i'}}$ for any two distributions $q_i,q_{i'}$. 
We are now left with showing the upper bound on $N.$
\begin{eqnarray*}
N' &=& \sum_{a\in [N]}\lceil M(a) \cdot N\rceil \\
&\le& \sum_{a\in [N]} (M(a) \cdot N +1) \\
&=& N +\sum_{a\in [N]} \left(\max_{i\in [k]} q_i(a)\right) N \\
&\le& N+\sum_{a\in [N]} \left(\sum_{i\in [k]} q_i(a)\right)N =(k+1)N.
\end{eqnarray*}
\end{proof}
Now we have all the ingredients to finish the proof of Theorem \ref{thm:1Round}.
\begin{proof}
By using the randomized map $\phi$ as constructed in Lemma~\ref{lem:splitting}, the users first map their sample $a\sim p$ to a sample $\phi(a)$. Note that $\phi(a) \sim \phi\circ p$. Next we run the Algorithm~\ref{alg:LikelyhoodTest} with distributions $\phi\circ q_1,\dots, \phi\circ q_k$ and $\gamma:[N']\to \R^+$ given by $\gamma(b)=1/N'$ for all $b \in [N']$. From the first property mentioned in Lemma~\ref{lem:splitting}, we get $L=\log(k)+O(1).$ From the second property in Lemma~\ref{lem:splitting}, we know the TV distances are preserved by $\phi$.
\end{proof}

\begin{remark} 
If we are able to get $L=O(\alpha)$, then we get the nearly optimal sample complexity of $n=O\left(\frac{k\polylog(k)}{\alpha^2\eps^2}\right)$, formalized in the following question.
\end{remark}

\begin{question}
Given distributions $p_1,p_2,\dots,p_k$ which are $\alpha$-far to each other in $\ell_1$-distance, is there a randomized map $\phi:[n]\to [N]$ (which can depend on $p_1,\dots,p_k$) s.t.
\begin{enumerate}
    \item For all $i\in [n], a\in [N]$,  $\frac{1-\alpha}{N}\le (\phi \circ p_i)(a) \le \frac{1+\alpha}{N}$ and
    \item $\ellone{\phi\circ p- \phi \circ q} = \Theta(\ellone{p-q})$ for any two distributions $p,q$ with high probability.
\end{enumerate}
Note that $N$ can be arbitrarily large.
\end{question}

\begin{remark}
\label{rmk:cont}
The arguments in our proof can be easily generalized to continuous probability distributions. However, as our results do not depend on the domain size, it is intuitive to think of the following simple mapping from continuous distributions to discrete distributions on the domain $[N]$. First, we can approximate (to any precision) a set of continuous distributions by a set of discrete distributions on a finite support such that TV distances are preserved. We can then map any set of discrete distributions on a finite support to a set of discrete distributions on the domain $[N]$, where $N$ will depend on the desired precision. 
\end{remark}
 
\section{Hypothesis Selection via Adversarial Comparators}
\label{sec:comparators}
In this section, we give upper bounds for locally private hypothesis selection via a reduction to adversarial comparators, as introduced by~\cite{AcharyaJOS14b,AcharyaFJOS18}.
We begin by describing the reduction and how it can be implemented in the LDP setting in Section~\ref{sec:comp:reduction}.
This allows us to immediately obtain a non-interactive private algorithm which takes $\tilde O(k^2)$ samples and a sequentially-interactive algorithm which takes $\tilde O(k)$ samples (Section~\ref{sec:comp:easy-k2}).
However, this sequentially-interactive algorithm requires $O(\log k)$ rounds -- we give an algorithm which improves upon this round-complexity by an exponential factor.
We start in Section~\ref{sec:comp:2-rounds} by giving a simple $\tilde O(k^{4/3})$-sample algorithm which takes $2$ rounds: with the addition of only a single additional round, the sample complexity becomes significantly subquadratic.
This illustrates one of the main ideas behind our full upper bound, an $\tilde O(k)$-sample algorithm which takes only $O(\log \log k)$ rounds.
This is acheived by generalizing our $2$ round algorithm to general $t$: we give $t$-round algorithms for $1 \leq t \leq O(\log \log k)$, with sample complexities which interpolate between $\tilde O(k^2)$ and $\tilde O(k)$.
Other ideas are required to achieve an approximation which does not increase with $t$, which are described in Section~\ref{sec:comp:ub}.
We complement these upper bounds with lower bounds which show that these algorithms in the adversarial comparator setting are essentially tight (for \emph{every} choice of $t$) (Section~\ref{sec:comp:lb}).

\subsection{Adversarial Comparators and Connections to Locally Private Hypothesis Selection}
\label{sec:comp:reduction}
We describe the adversarial comparator setting of~\cite{AcharyaJOS14b,AcharyaFJOS18}, as well as their reduction to this model for the hypothesis selection problem.
The input is a set of $k$ items, with unknown values $x_1, \dots, x_k \in \mathbb{R}$.
An adversarial comparator is a function $C$, which takes two items $x_i$ and $x_j$,\footnote{In a slight abuse of notation, we use $x_i$ to refer to the item as well as its value.}, and outputs $\max\{x_i, x_j\}$ if $|x_i - x_j| > 1$ and $x_i$ or $x_j$ (adversarially) if $|x_i - x_j| \leq 1$.

We note that such a comparator can be either non-adaptive or adaptive.
In the former case, the results of all comparisons must be fixed ahead of time, whereas in the latter case, results of comparisons may depend on previous comparisons.
All of the mentioned algorithms will work in the (harder) adaptive case, and our lower bounds are for the (easier) non-adaptive case, and thus both have the same implications in the alternate setting for adaptivity.

We sometimes denote a comparison as a \emph{query}.
The goal is to output an item with value as close to the maximum as possible, with probability at least $2/3$.\footnote{Usual arguments allow us to boost this success probability to $1-\beta$ at a cost of $O(\log (1/\beta))$ repetitions, which can be done in parallel.}
More precisely, let $x^* = \max \{x_1, \dots, x_k\}$.
A number $x$ is a $\tau$-approximation of $x^*$ if $x \geq x^* - \tau$.
Simple examples (e.g., Lemma 2 of~\cite{AcharyaFJOS18}) show that it is impossible to output a $\tau$-approximation with probability $\geq 2/3$ for any $\tau < 2$ when we have $k \geq 3$ items.

We initiate study of \emph{parallel} approximate maximum selection under adversarial comparators.
Parallel maximum selection has recently been studied in other settings (including the standard comparison setting and with noisy (but not adversarial) comparisons, see, e.g.,~\cite{BravermanMW16}).
In this setting, the algorithm has $t$ rounds: in round $i$, the algorithm simultaneously submits $m_i$ pairs of items, and then simultaneously receives the results of the adversarial comparator applied to all $m_i$ pairs.
The total query complexity is $\sum_{i = 1}^t m_i$.

We now discuss the connection between this problem and hypothesis selection, as presented in Section 6 of~\cite{AcharyaFJOS18}.
We will then show how this connection still applies when considering the same problem under LDP.
First, we recall the Scheff\'e test of Devroye and Lugosi~\cite{DevroyeL01}, as described in Algorithm~\ref{alg:scheffe}.\footnote{We comment that this can be implemented in near-linear time, and $q_1(S)$ and $q_2(S)$ can be estimated to sufficient accuracy using Monte Carlo techniques.}
Given $n$ samples from $p$, with probability at least $1 - \beta$, it will output a distribution $\hat q$ such that $\dtv{p}{\hat q} \leq 3 \min\{\dtv{p}{q_1}, \dtv{p}{q_2}\} + \sqrt{\frac{2.5\log (1/\beta)}{n}}$.
In other words, if $ \min \Paren{ \dtv{p}{q_1},  \dtv{p}{q_2}} \leq \alpha$, then $n = O\left(\frac{\log(1/\beta)}{\alpha^2}\right)$ samples suffice to output a $\hat q \in \{q_1, q_2\}$ such that $\dtv{p}{\hat q} \leq (3 + \gamma)\alpha$, where $\gamma$ can be taken to be an arbitrarily small constant. Another way to phrase this is that the test returns $q_1$ if $\dtv{p}{q_1} < \frac{1}{3 + \gamma}\dtv{p}{q_2}$, it returns $q_2$ if $\dtv{p}{q_2} < \frac{1}{3 + \gamma}\dtv{p}{q_1}$, and it may return arbitrarily otherwise.
If we let $x_i = -\log_{3+\gamma}\dtv{p}{q_i}$, then the test will output $\max\{x_i, x_j\}$ if $|x_i - x_j| > 1$, or arbitrarily otherwise.
Note that this is precisely an implementation of the adversarial comparator function $C$ as described above, and thus the hypothesis selection problem can be reduced to (approximate) maximum selection with adversarial comparators.
In particular, a $\tau$-approximation for the maximum selection problem becomes a $(3+\gamma)^\tau$ agnostic approximation factor for hypothesis selection, which becomes $3^\tau + \gamma'$ if $\tau$ is a constant, for some other constant $\gamma' >0$ which can be taken to be arbitrarily small.
Each comparison is implemented using $O\left(\frac{\log (1/\beta)}{\alpha^2}\right)$ samples from $p$-- in fact, by a union bound argument, if we wish to perform $m$ comparisons and require the total failure probability under $1/3$, all of them can be done with the same set of $O\left(\frac{\log m}{\alpha^2}\right)$ samples.

\begin{algorithm}
    \caption{Scheff\'e Test}
    \label{alg:scheffe}
    \hspace*{\algorithmicindent} \textbf{Input:} $n$ samples $X_1, \dots, X_n$ from \emph{unknown} $p$, distributions $q_1$ and $q_2$ \\
    \hspace*{\algorithmicindent} \textbf{Output:} Distribution $q_1$ or $q_2$
    \begin{algorithmic}[1] % The number tells where the line numbering should start
        \Procedure{Scheff\'e}{$X, q_1, q_2$} 
        \State Let $S = \{x\ :\ q_1(x) > q_2(x)\}$.
        \State Let $q_1(S)$ and $q_2(S)$ be the probability mass that $q_1$ and $q_2$ assign to $S$ .
        \State Let $\hat p(S) = \frac{1}{n}\sum_{i=1}^n \mathbbm{1}_{X_i \in S} $ be the empirical mass assigned by $X_1, \dots, X_n$ to $S$. \label{ln:emp-mass}
        \If {$|q_1(S) - \hat p(S)| < |q_2(S) - \hat p(S)|$}
        \State \textbf{return} $q_1$.
        \Else
        \State \textbf{return} $q_2$.
        \EndIf
        \EndProcedure
    \end{algorithmic}
\end{algorithm}

It remains to justify that a similar reduction still holds under LDP constraints.
Recall that each individual possesses a single $X_i$, and they wish for their messages sent to the curator to be $\ve$-DP.
Only Line~\ref{ln:emp-mass} of Algorithm~\ref{alg:scheffe} depends on the private data, which is a statistical query, easily implemented under LDP.
More precisely, rather than sending the bit $\mathbbm{1}_{X_i \in S}$ to the curator, the user can send $Y_i$, which is a version of it privatized by Randomized response (Lemma~\ref{lem:rr}).
The curator can then form an $\ve$-LDP estimate of $p(S)$ by computing $\hat p(S) = \frac{e^\ve + 1}{e^\ve - 1}\left(\frac{1}{n}\sum Y_i - \frac1{e^\ve+1}\right)$.
Plugging this estimate into Line~\ref{ln:emp-mass}, it is not hard to show the modified procedure satisfies the following accuracy guarantee:
if $\min \Paren{ \dtv{p}{q_1},  \dtv{p}{q_2}} \leq \alpha$, then $n = O\left(\frac{\log(1/\beta)}{\ve^2\alpha^2}\right)$ samples suffice to output an $\ve$-LDP $\hat q \in \{q_1, q_2\}$ such that $\dtv{p}{\hat q} \leq (3 + \gamma)\alpha$, where $\gamma$ can be taken to be an arbitrarily small constant.

The above addresses the case of a single comparison.
If we wish to make $m$ comparisons (which are all correct with high probability), we partition users into $m$ sets of size $O\left(\frac{\log m}{\ve^2\alpha^2}\right)$ and use the data from each part to privately perform the appropriate comparison.
This takes a total of $O\left(\frac{m \log m}{\ve^2\alpha^2}\right)$ samples.
In particular, we can not reuse the same set of $O(\log m)$ samples for all comparisons (as in the non-private case), since it violate the privacy constraint, and doing so would give rise to algorithms which violate our main lower bound for locally private hypothesis selection (Theorem~\ref{thm:fully-interactive-lb}).
Finally, we note that a $t$-round algorithm in the maximum selection setting corresponds to a $t$-round sequentially interactive $\ve$-LDP algorithm for hypothesis selection, as we never query the same individual twice.

To conclude this section, we state the guarantees of the (trivial) algorithm which performs maximum selection from a set of $2$ elements, and the corollary for LDP hypothesis selection implied by the above reduction.

\begin{claim}
  There exists a $1$-round algorithm which achieves a $1$-approximation in the problem of parallel approximate maximum selection with adversarial comparators, in the special case where $k= 2$.
  The algorithm requires $1$ query.
\end{claim}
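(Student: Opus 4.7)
The plan is to exhibit the algorithm explicitly and verify the approximation guarantee by case analysis on whether the two items are within distance $1$ of each other. Since $k=2$, there is really only one possible comparison to make, so the algorithm is forced: use the single round to submit the pair $(x_1, x_2)$ to the adversarial comparator $C$, and output whatever $C$ returns. This uses exactly $1$ query and $1$ round, so the resource bounds are immediate; what remains is to verify the $1$-approximation guarantee.

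To verify correctness, I would split into two cases based on the definition of the adversarial comparator. Let $x^* = \max\{x_1, x_2\}$ and let $\hat x \in \{x_1, x_2\}$ be the algorithm's output. In the first case, if $|x_1 - x_2| > 1$, then by the definition of $C$ we have $\hat x = C(x_1, x_2) = x^*$, which is trivially a $1$-approximation (in fact a $0$-approximation). In the second case, if $|x_1 - x_2| \le 1$, then $\hat x$ may be either $x_1$ or $x_2$ adversarially. Without loss of generality assume $x_1 \ge x_2$, so $x^* = x_1$; then either $\hat x = x_1 = x^*$, or $\hat x = x_2 \ge x_1 - 1 = x^* - 1$. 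In either subcase $\hat x \ge x^* - 1$, giving the desired $1$-approximation.

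The proof is essentially a definitional check, so there is no real obstacle: the only subtlety is noticing that the adversarial behavior is actually harmless here because the only way the adversary can ``fool'' us is by returning the smaller of the two items, but this can only happen when the two items are within distance $1$ of each other, which is precisely the slack allowed by the $1$-approximation guarantee. This also explains why the impossibility result cited for $k \ge 3$ does not kick in at $k = 2$: with only two items, one query uses up all the comparator's adversarial freedom on the single relevant pair.
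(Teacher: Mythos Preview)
Your proposal is correct. The paper does not actually prove this claim at all---it simply states it as the guarantee of ``the (trivial) algorithm which performs maximum selection from a set of $2$ elements''---and your case analysis is exactly the natural verification the paper leaves implicit.
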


\begin{corollary}
  There exists a $1$-round algorithm which achieves a $(3+\gamma)$-agnostic approximation factor for locally private hypothesis selection with probability $1 -\beta$, in the special case where $k=2$, where $\gamma > 0$ is an arbitrarily small constant.
  The sample complexity of the algorithm is $O\left(\frac{\log (1/\beta)}{\ve^2 \alpha^2}\right)$.
\end{corollary}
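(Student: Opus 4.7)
The plan is to essentially package up the LDP Scheffé construction described in the paragraphs immediately preceding the corollary into a formal algorithm-and-analysis statement. First, I would describe the algorithm explicitly: given the single set $\mathcal{Q} = \{q_1, q_2\}$, compute the Scheffé set $S = \{x : q_1(x) > q_2(x)\}$, and have each of the $n$ users hold an independent sample $X_i \sim p$. Each user then computes the bit $B_i = \mathbbm{1}_{X_i \in S}$ and releases $Y_i$, the output of $\varepsilon$-randomized response applied to $B_i$ (Lemma~\ref{lem:rr}), so privacy follows immediately and the protocol is non-interactive (1-round). The curator forms the unbiased debiased estimator $\hat p(S) = \tfrac{e^\varepsilon+1}{e^\varepsilon-1}\bigl(\tfrac{1}{n}\sum_i Y_i - \tfrac{1}{e^\varepsilon+1}\bigr)$, and outputs whichever of $q_1, q_2$ is closer to $\hat p(S)$ on $S$ in the sense of Algorithm~\ref{alg:scheffe}.

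Next, I would establish the accuracy guarantee via the standard Scheffé analysis combined with a Hoeffding-style tail bound. The key quantitative step is to show that with $n = O(\log(1/\beta)/(\varepsilon^2\alpha^2))$ samples, $|\hat p(S) - p(S)| \le \tfrac{\gamma'}{2}\alpha$ with probability at least $1-\beta$, for a suitably small $\gamma'$ chosen to yield the desired constant $\gamma$ in the conclusion. Since each $Y_i \in \{0,1\}$ and $\mathbb{E}[Y_i] = \tfrac{e^\varepsilon}{e^\varepsilon+1}B_i + \tfrac{1}{e^\varepsilon+1}(1-B_i)$, after debiasing $\hat p(S)$ is an unbiased estimator of $p(S)$, and the rescaling factor $\tfrac{e^\varepsilon+1}{e^\varepsilon-1} = \Theta(1/\varepsilon)$ for $\varepsilon \in (0,1]$ is exactly what drives the extra $1/\varepsilon^2$ factor through Hoeffding's inequality.

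Given this concentration, the classical Scheffé analysis of Devroye--Lugosi then gives the $(3+\gamma)$ factor as stated in the text: by the triangle inequality and the fact that $|q_1(S) - q_2(S)| = d_{\mathrm{TV}}(q_1,q_2)$, if the algorithm outputs $q_j$ then $d_{\mathrm{TV}}(p, q_j) \le 3\min\{d_{\mathrm{TV}}(p,q_1), d_{\mathrm{TV}}(p,q_2)\} + 2|\hat p(S) - p(S)|$, and plugging in the above concentration bound yields the claim with the stated sample complexity. I do not anticipate any real obstacle here, since both ingredients, randomized response plus the Scheffé inequality, are completely standard and the excerpt has already sketched the reduction; the only care needed is to pick the internal accuracy parameter so that the multiplicative slack absorbs into $\gamma$.
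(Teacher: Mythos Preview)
Your proposal is correct and follows exactly the approach the paper sketches in the paragraphs leading up to the corollary: the paper does not give a separate proof, but simply records the corollary as an immediate consequence of the LDP Scheff\'e construction (randomized response on the indicator $\mathbbm{1}_{X_i\in S}$, debiasing, Hoeffding concentration, and the standard Devroye--Lugosi $3$-approximation inequality). Your write-up is a faithful fleshing-out of that sketch.
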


For the following subsections, we will focus on the problem of parallel approximate maximum selection with adversarial comparators, stating corollaries to locally private hypothesis selection as appropriate.
Our primary concerns will be to simultaneously minimize the query/sample complexity and the round complexity, while minimizing the approximation/agnostic approximation factor is a secondary concern.
Nevertheless, our new algorithms for maximum selection will have an approximation constant of at most $3$, very close to the information-theoretic optimum of $2$.

\subsection{Baseline Algorithms}
\label{sec:comp:easy-k2}
In this section, we state some baseline results in this model, based on previously known algorithms.
This includes a $O(k^2)$-query non-interactive algorithm, and a $O(k)$-query $O(\log k)$-round algorithm.

The first method is a ``round-robin'' tournament method, which, in a single round, performs all pairwise comparisons and outputs the item which is declared to be the maximum the largest number of times (Algorithm~\ref{alg:round-robin}). 
This straightforward method is stated and analyzed in~\cite{AcharyaJOS14b,AcharyaFJOS18}, and the equivalent procedure for hypothesis selection (absent privacy constraints) was known prior~\cite{DevroyeL01}.
\begin{algorithm}
    \caption{1-Round Algorithm for Maximum Selection}
    \label{alg:round-robin}
    \hspace*{\algorithmicindent} \textbf{Input:} $k$ items $x_1, \dots, x_k$ \\
    \hspace*{\algorithmicindent} \textbf{Output:} Approximate maximum $x_i$ 
    \begin{algorithmic}[1] % The number tells where the line numbering should start
        \Procedure{Round-Robin}{$x_1, \dots, x_k$} 
      \For {all pairs $x_i, x_j$}
        \State Compare $x_i$ and $x_j$, record which one is reported to be the winner.
      \EndFor
      \State \textbf{return} the $x_i$ which is reported to be the winner the most times.
        \EndProcedure
    \end{algorithmic}
\end{algorithm}

\begin{claim}
  \label{clm:round-robin}
  There exists a $1$-round algorithm which achieves a $2$-approximation in the problem of parallel approximate maximum selection with adversarial comparators.
  The algorithm requires $O(k^2)$ queries.
\end{claim}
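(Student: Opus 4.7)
The plan is to analyze the Round-Robin algorithm directly. Since every pair of items is queried in parallel, the round complexity is clearly $1$ and the query complexity is $\binom{k}{2} = O(k^2)$, so the only substantive issue is the $2$-approximation guarantee.

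Let $x^* = \max\{x_1,\dots,x_k\}$ and let $\hat x$ denote the algorithm's output. I plan to argue by contradiction: suppose $\hat x < x^* - 2$, and show that $\hat x$ cannot have the largest number of reported wins. Define $S = \{i : x_i \geq x^* - 1\}$ (the set of items within distance $1$ of $x^*$, including $x^*$ itself) and let $m = |S|$. The argument splits into two simple win-counts:

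\begin{itemize}
\item Lower bound on $x^*$'s wins: for every $x_j$ with $x_j < x^* - 1$ we have $|x^* - x_j| > 1$, so by definition of the comparator $C$ the result must be $x^*$. Hence $x^*$ wins at least $k - m$ comparisons.
\item Upper bound on $\hat x$'s wins: under the assumption $\hat x < x^* - 2$, for every $x_i \in S$ we have $x_i - \hat x > 1$, so $|x_i - \hat x| > 1$ and the comparator must return $x_i$. Thus $\hat x$ loses to all $m$ items of $S$ (note $\hat x \notin S$), leaving it at most $k - 1 - m$ wins.
\end{itemize}

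Combining these, $\hat x$'s win count is strictly less than $x^*$'s win count, contradicting the fact that $\hat x$ was chosen as the item with the most reported wins. Therefore $\hat x \geq x^* - 2$, which is exactly the $2$-approximation claim.

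No step looks like a real obstacle; the entire argument is a short counting lemma using only the definition of the adversarial comparator. The mild subtlety to keep straight is the strict-versus-nonstrict inequalities around the threshold $1$ (i.e., the comparator is forced to return the maximum only when $|x_i-x_j| > 1$), which is what dictates taking $S = \{i : x_i \geq x^* - 1\}$ rather than a strict inequality. This also explains the factor of $2$ rather than $1$: items within distance $1$ of $x^*$ may adversarially ``steal'' wins from $x^*$, but they cannot pull the output more than an additional $1$ below themselves.
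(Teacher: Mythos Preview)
Your proof is correct. The paper itself does not prove this claim but defers to \cite{AcharyaJOS14b,AcharyaFJOS18}; your win-count argument (comparing the at-least-$(k-m)$ wins of $x^*$ against the at-most-$(k-1-m)$ wins of any $\hat x < x^*-2$) is precisely the standard analysis found there.
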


\begin{corollary}
  There exists a $1$-round algorithm which achieves a $(9+\gamma)$-agnostic approximation factor for locally private hypothesis selection with high probability, where $\gamma > 0$ is an arbitrarily small constant.
  The sample complexity of the algorithm is $O\left(\frac{k^2 \log k}{\ve^2 \alpha^2}\right)$.
\end{corollary}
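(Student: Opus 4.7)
My plan is to prove both the query complexity and the approximation guarantee of the Round-Robin algorithm, which performs all $\binom{k}{2}$ pairwise comparisons in a single round and returns the item reported as the winner most often. The query/round complexity is immediate: exactly $\binom{k}{2} = O(k^2)$ comparisons are performed, all simultaneously, giving 1 round. The main task is to prove the 2-approximation.

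Let $x^* = \max_i x_i$ and define the set of ``near-maximum'' items $L = \{j : x_j \geq x^* - 1\}$; note $x^* \in L$. The proof rests on two counting bounds: a lower bound on the number of wins of $x^*$, and an upper bound on the number of wins of any ``bad'' item $x_i$ with $x_i < x^* - 2$. For the lower bound, whenever $x_j < x^* - 1$ we have $|x^* - x_j| > 1$, so the comparator is forced to output $x^*$; hence $x^*$ wins at least $|\{j : x_j < x^* - 1\}| = k - |L|$ of its $k-1$ comparisons. For the upper bound, pick any $x_i$ with $x_i < x^* - 2$; for every $x_j \in L$ we have $x_j - x_i > (x^* - 1) - (x^* - 2) = 1$, so the comparator is forced to declare $x_j$ the winner over $x_i$. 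Thus $x_i$ loses all $|L|$ comparisons against items of $L$ (note $x_i \notin L$ since $x_i < x^* - 2 < x^* - 1$), and so $x_i$ has at most $(k-1) - |L|$ wins.

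Combining, $x^*$'s win count strictly exceeds that of any $x_i < x^* - 2$, since $k - |L| > (k-1) - |L|$. Therefore the item returned by the algorithm, which by construction has the most wins, cannot satisfy $x_i < x^* - 2$, giving the desired 2-approximation.

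There is essentially no obstacle; the argument is a clean double-counting. The only subtlety is being precise at the boundary $x_j = x^* - 1$ (where the comparator may act arbitrarily); this is handled by taking the inclusive definition $L = \{j : x_j \geq x^* - 1\}$, which makes the strict inequality $x_j - x_i > 1$ hold for $x_i < x^* - 2$ and $x_j \in L$. The corollary for locally private hypothesis selection then follows immediately from the reduction in Section~\ref{sec:comp:reduction}: a 2-approximation for maximum selection yields a $(3+\gamma)^2 = 9 + \gamma'$ agnostic approximation factor, and the sample complexity is $O(k^2)$ comparisons each implemented with $O(\log k/(\varepsilon^2 \alpha^2))$ fresh samples (partitioning users across comparisons to preserve privacy), giving the stated $O(k^2 \log k / (\varepsilon^2 \alpha^2))$ bound.
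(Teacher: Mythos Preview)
Your proposal is correct and follows exactly the approach implicit in the paper: the corollary is an immediate consequence of Claim~\ref{clm:round-robin} (the Round-Robin 2-approximation) combined with the reduction in Section~\ref{sec:comp:reduction}. The paper defers the proof of Claim~\ref{clm:round-robin} to~\cite{AcharyaJOS14b,AcharyaFJOS18}, so your counting argument (lower-bounding the wins of $x^*$ by $k-|L|$ and upper-bounding the wins of any $x_i < x^*-2$ by $k-1-|L|$) correctly spells out what the paper cites as known; the translation to the $(9+\gamma)$-agnostic factor and the $O(k^2\log k/(\varepsilon^2\alpha^2))$ sample bound via the LDP reduction is exactly as the paper does it.
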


The clear drawback of this method is that the complexity of the resulting algorithms is quadratic in $k$.
Unfortunately, a simple argument shows that this is tight for any $1$-round protocol: roughly, if we do not compare the smallest and second smallest items, we do not know which is smaller, and thus any algorithm which doesn't perform all $\binom{k}{2}$ comparisons in its $1$ round will be wrong with probability $1/2$ (more formal lower bounds for more general settings appear in Section~\ref{sec:comp:lb}).
The natural questions are, if we expend more rounds, can we reduce the sample complexity? 
And how many rounds are needed to achieve the information-theoretic optimum of a linear query complexity?
Many recent works have focused on this question without concern for the number of rounds expended~\cite{AjtaiFHN09, DaskalakisK14, SureshOAJ14, AcharyaJOS14b, AcharyaFJOS18}, culminating in algorithms with linear complexity.
When the round complexity is analyzed, it can be shown that all these methods take $O(\log k)$ rounds.
We state the implied results for our setting in the following claim and corollary, omitting details as we will shortly improve on the round complexity to be $O(\log \log k)$.

\begin{claim}[\cite{AcharyaJOS14b, AcharyaFJOS18}]
  There exists an $O(\log k)$-round algorithm which achieves a $2$-approximation in the problem of parallel approximate maximum selection with adversarial comparators.
  The algorithm requires $O(k)$ queries.
\end{claim}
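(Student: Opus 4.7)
The plan is to implement a parallel knockout-style tournament whose every level constitutes a single round. Organize the $k$ items at the leaves of a balanced binary tree of depth $\lceil \log_2 k \rceil$. In round $r$, every internal node at depth $\lceil \log_2 k\rceil - r$ from the leaves performs the comparison between the two survivors of its two child subtrees and passes the winner to its parent; all such comparisons happen in parallel, completing the round. After $\lceil \log_2 k\rceil$ rounds a single item $W$ remains at the root, which we output.

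The query and round counts are immediate: the tree has $k-1$ internal nodes, giving $k-1 = O(k)$ total comparisons, and a single round suffices for each of its $\lceil \log_2 k \rceil$ levels. The delicate part is the $2$-approximation guarantee. A naive ``chain of eliminators'' argument along the root-to-leaf path through $x^*$ only gives $W \geq x^* - O(\log k)$, because each time an ancestor of $x^*$ loses to an adversarial comparator the value can drop by an additive $1$. To close this gap I would augment each internal-node comparison with a bounded number of ``re-check'' comparisons against a few items that were recently eliminated inside the two child subtrees, in the spirit of the \textsc{Knockout} procedure of \cite{AcharyaJOS14b, AcharyaFJOS18}. The extra comparisons only multiply the total cost by a constant, so they still fit in $O(k)$ queries and $O(\log k)$ rounds.

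The correctness argument is an induction up the tree on the invariant
\[
    w_v \;\geq\; \max(S_v) - 2,
\]
where $w_v$ is the survivor emerging from the subtree rooted at $v$ and $S_v$ is the set of leaves of that subtree. The base case at the leaves is trivial. At an internal node $v$ with children $u, u'$ we assume $w_u \geq \max(S_u) - 2$ and $w_{u'}\geq \max(S_{u'})-2$ and then combine the parent-child comparison with the re-check comparisons in the same round; the role of the re-checks is to ensure that even when the adversarial comparator may return the smaller of $w_u, w_{u'}$ (provided they are within $1$), the additive error at $v$ does not accumulate beyond the constant $2$.

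The hard part will be specifying precisely which re-check comparisons to carry out at each node and proving that a constant number per node is enough to prevent the approximation loss from growing with the depth. This is exactly the content of the \textsc{Knockout} analysis in \cite{AcharyaJOS14b, AcharyaFJOS18}, to which I would defer for the full technical details. Since the subsequent subsections obtain an exponentially better round complexity of $O(\log \log k)$ with essentially the same approximation, I would only sketch the above argument here, as the excerpt itself indicates by omitting the details.
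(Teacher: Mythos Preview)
The paper does not prove this claim; it simply cites \cite{AcharyaJOS14b, AcharyaFJOS18} and explicitly states that it is ``omitting details as we will shortly improve on the round complexity to be $O(\log\log k)$.'' There is therefore no proof in the paper to compare your sketch against, and your outline is already more detailed than the paper's own treatment of this baseline result.

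That said, the step you yourself flag as ``the hard part'' is a genuine gap, not just a missing detail. The bare binary knockout does not maintain the invariant $w_v \ge \max(S_v)-2$: if $w_u \ge \max(S_u)-2$ and the adversarial comparator returns the other child's survivor $w_{u'}$ (permitted whenever $|w_u-w_{u'}|\le 1$), one only obtains $w_v \ge \max(S_v)-3$, and the loss grows by one per level. Your proposed fix --- comparing against ``a few items that were recently eliminated inside the two child subtrees'' --- does not obviously help, because those eliminated items carry no lower-bound guarantee of their own, so winning or losing against them conveys no useful information about $w_v$ versus $\max(S_v)$. The linear-query $2$-approximation constructions in the cited works are not, as far as I can tell, binary knockouts augmented with local re-checks of this kind; they use other devices (for instance, propagating a small \emph{set} of candidates rather than a single survivor, or a sequential sweep that can then be observed to parallelize into $O(\log k)$ rounds). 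Since the paper itself defers entirely to the citations here, your deferral is consistent with its stance, but you should not attribute the specific ``re-check'' mechanism you sketch to those works without verifying that this is actually what they do.
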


\begin{corollary}
  There exists an $O(\log k)$-round algorithm which achieves a $(9+\gamma)$-agnostic approximation factor for locally private hypothesis selection with high probability, where $\gamma > 0$ is an arbitrarily small constant.
  The sample complexity of the algorithm is $O\left(\frac{k \log k}{\ve^2\alpha^2}\right)$.
\end{corollary}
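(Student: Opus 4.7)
The plan is to derive this corollary directly from the preceding claim about parallel approximate maximum selection with adversarial comparators, using the reduction from locally private hypothesis selection developed in Section~\ref{sec:comp:reduction}. First I would recall that any algorithm for the adversarial comparator problem that uses $m$ comparisons arranged in $t$ rounds yields an LDP hypothesis selection algorithm with $t$ rounds of sequential interactivity: each comparison in round $i$ is replaced by an LDP Scheff\'e test executed on a fresh subset of users, and because each user's data is touched only once, the $\ve$-LDP guarantee follows from the single-comparison analysis.

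Next I would track the approximation factor through the reduction. Identifying item $x_i$ with $-\log_{3+\gamma_0}\dtv{p}{q_i}$ (for an arbitrarily small constant $\gamma_0>0$), the LDP Scheff\'e test implements the adversarial comparator function: if the two candidates are separated by more than a factor $3+\gamma_0$ in total variation from $p$, the test returns the closer one with high probability, otherwise the outcome may be arbitrary. Invoking the claim, there is an $O(\log k)$-round algorithm that outputs a $2$-approximation to the maximum of the $x_i$'s with $O(k)$ total queries. A $2$-approximation in the log-scale translates to a multiplicative factor of $(3+\gamma_0)^2 = 9 + O(\gamma_0)$ in total variation distance, which can be written as $9+\gamma$ for arbitrarily small $\gamma>0$ by choosing $\gamma_0$ small enough.

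For the sample complexity, let $m = O(k)$ be the total number of comparisons performed over the $O(\log k)$ rounds. To make all $m$ comparisons correct simultaneously with the desired probability, I would apply a union bound, which requires each individual Scheff\'e test to succeed with probability at least $1 - \Theta(1/m)$. By the LDP Scheff\'e guarantee stated in Section~\ref{sec:comp:reduction}, each such test uses $O(\log m / (\ve^2 \alpha^2))= O(\log k / (\ve^2\alpha^2))$ samples (drawn from a disjoint group of users to preserve the privacy budget). Summing over the $m = O(k)$ comparisons gives a total sample complexity of $O(k \log k / (\ve^2 \alpha^2))$, completing the proof.

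The argument is essentially a bookkeeping exercise, so I do not expect a serious obstacle; the only subtle point is ensuring that the union bound is set up consistently with the partition of users across rounds, so that the per-comparison failure probability is taken over a fresh, disjoint sample and the final probability of outputting a valid $(9+\gamma)$-approximate hypothesis is at least $2/3$ (or $1-\beta$ with an additional $\log(1/\beta)$ factor). One should also verify that the $O(\log k)$-round, $O(k)$-query guarantee from the cited claim can be arranged so that the rounds respect the sequentially interactive model, which follows because the algorithm only uses the outcomes of round $i$ to decide the queries of round $i+1$.
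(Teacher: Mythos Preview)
Your proposal is correct and matches the paper's intended approach: the paper explicitly omits the proof of this corollary, stating only that it is implied by the preceding claim together with the reduction of Section~\ref{sec:comp:reduction}, and your argument fills in exactly that bookkeeping (the $2$-approximation becoming $(3+\gamma_0)^2 = 9+\gamma$, and the $O(k)$ comparisons each costing $O(\log k/(\ve^2\alpha^2))$ fresh samples via the union bound).
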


\subsection{A Sub-Quadratic Algorithm with $2$ Rounds}
\label{sec:comp:2-rounds}
In this section, we give a simple $2$-round algorithm which results in a significantly better query complexity of $O(k^{4/3})$.
In Section~\ref{sec:comp:ub}, we generalize this to $t$-round protocols, but provide this as a warm-up and to convey one of the main ideas.

\begin{algorithm}
    \caption{2-Round Algorithm for Maximum Selection}
    \label{alg:2-round}
    \hspace*{\algorithmicindent} \textbf{Input:} $k$ items $x_1, \dots, x_k$ \\
    \hspace*{\algorithmicindent} \textbf{Output:} Approximate maximum $x_i$ 
    \begin{algorithmic}[1] % The number tells where the line numbering should start
        \Procedure{2-Round}{$x_1, \dots, x_k$} 
        \State Partition $x_1$ through $x_k$ into $k^{2/3}$ sets of size $k^{1/3}$. \label{ln:2-round-1}
        \State Run \textsc{Round-Robin} on each set to obtain $k^{2/3}$ winners. \label{ln:2-round-2}
        \State \textbf{return} the winner of \textsc{Round-Robin} on the set of $k^{2/3}$ winners. \label{ln:2-round-3}
        %\State Let $S = \{x\ :\ q_1(x) > q_2(x)\}$.
        %\State Let $q_1(S)$ and $q_2(S)$ be the probability mass that $q_1$ and $q_2$ assign to $S$ .
        %\State Let $\hat p(S) = \frac{1}{n}\sum_{i=1}^n \mathbbm{1}_{X_i \in S} $ be the empirical mass assigned by $X_1, \dots, X_n$ to $S$. \label{ln:emp-mass}
        %\If {$|q_1(S) - \hat p(S)| < |q_2(S) - \hat p(S)|$}
        %\State \textbf{return} $q_1$.
        %\Else
        %\State \textbf{return} $q_2$.
        %\EndIf
        \EndProcedure
    \end{algorithmic}
\end{algorithm}

\begin{figure}[h!]
  \includegraphics[scale=0.5]{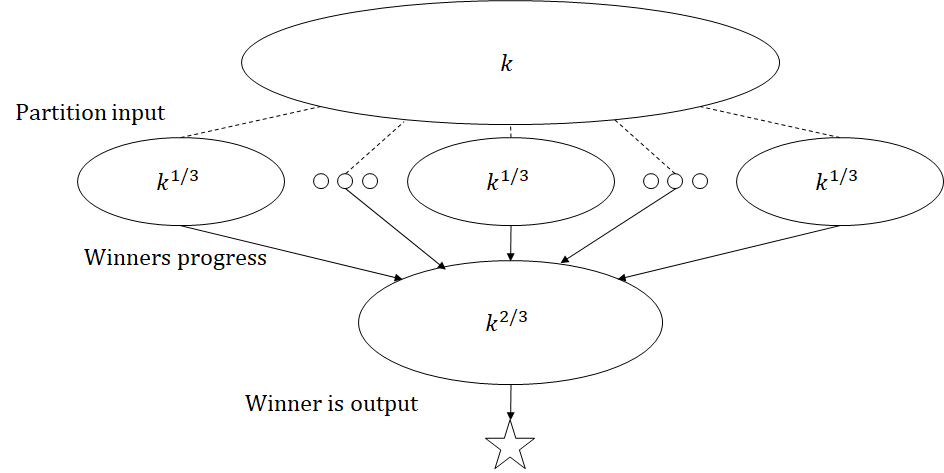}
  \centering
  \caption{An illustration of Algorithm~\ref{alg:2-round}. In the first round, the input is partitioned into sets of size $k^{1/3}$ and a round-robin tournament is performed on each.
  In the second round, a single round-robin tournament is performed on the winners from the previous round.}
\end{figure}

Algorithm~\ref{alg:2-round} describes the procedure, whose guarantees are summarized in the following theorem.
\begin{theorem}
  \label{thm:2-round}
  There exists a $2$-round algorithm which achieves a $4$-approximation in the problem of parallel approximate maximum selection with adversarial comparators.
  The algorithm requires $O(k^{4/3})$ queries.
\end{theorem}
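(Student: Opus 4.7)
The plan is to directly analyze Algorithm~\ref{alg:2-round} by separately verifying the query complexity, the round complexity, and the approximation guarantee. All three follow essentially from the guarantees of \textsc{Round-Robin} established in Claim~\ref{clm:round-robin}, so this will be a short proof.

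For the query complexity, I would count the comparisons made in each of the two rounds. In Line~\ref{ln:2-round-2}, we run \textsc{Round-Robin} on $k^{2/3}$ sets of size $k^{1/3}$, each of which costs $\binom{k^{1/3}}{2} = O(k^{2/3})$ queries, for a total of $O(k^{2/3} \cdot k^{2/3}) = O(k^{4/3})$ queries. In Line~\ref{ln:2-round-3}, a single \textsc{Round-Robin} on the $k^{2/3}$ winners contributes $\binom{k^{2/3}}{2} = O(k^{4/3})$ queries. Summing gives $O(k^{4/3})$. For the round complexity, note that all comparisons in Line~\ref{ln:2-round-2} can be submitted in parallel (they form a single use of the round-robin primitive across disjoint sets), constituting one round, and the final tournament in Line~\ref{ln:2-round-3} forms a second round.

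The approximation guarantee is the only conceptual step, and it follows by a two-step triangle-inequality-style argument. Let $x^\ast = \max_i x_i$, and let $S_{i^\ast}$ be the set in the partition of Line~\ref{ln:2-round-1} that contains the item attaining $x^\ast$. By Claim~\ref{clm:round-robin}, the \textsc{Round-Robin} call on $S_{i^\ast}$ returns a winner $w_{i^\ast}$ with value at least $x^\ast - 2$. Let $w_1, \ldots, w_{k^{2/3}}$ denote the winners of all first-round tournaments; then $\max_j w_j \ge w_{i^\ast} \ge x^\ast - 2$. Applying Claim~\ref{clm:round-robin} again to the second-round tournament yields a final output $w^\ast$ satisfying $w^\ast \ge \max_j w_j - 2 \ge x^\ast - 4$, which is the desired $4$-approximation.

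I do not expect any serious obstacle; the main idea is just to observe that losses from the two \textsc{Round-Robin} invocations add rather than compound, so the approximation degrades only additively from $2$ to $4$, while the bottleneck set size $k^{1/3}$ is chosen to balance the $k^{2/3} \cdot (k^{1/3})^2$ cost of the first round against the $(k^{2/3})^2$ cost of the second round, both at $k^{4/3}$. The only minor subtlety worth stating explicitly in the write-up is that the partition in Line~\ref{ln:2-round-1} may not divide $k$ evenly, but this affects the counts only by constant factors absorbed into the $O(\cdot)$ notation.
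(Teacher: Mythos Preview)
Your proposal is correct and follows essentially the same approach as the paper's proof: both verify the round count, query count, and approximation guarantee separately, invoking Claim~\ref{clm:round-robin} twice to argue that the $2$-approximation losses from the two levels of \textsc{Round-Robin} add to give a $4$-approximation, and both balance the query costs of the two rounds at $O(k^{4/3})$ in the same way.
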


The resulting corollary for LDP hypothesis selection is the following.
\begin{corollary}
  There exists an $2$-round algorithm which achieves a $(81+\gamma)$-agnostic approximation factor for locally private hypothesis selection with high probability, where $\gamma > 0$ is an arbitrarily small constant.
  The sample complexity of the algorithm is $O\left(\frac{k^{4/3} \log k}{\ve^2\alpha^2}\right)$.
\end{corollary}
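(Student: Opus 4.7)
The plan is to derive this corollary as a direct consequence of Theorem~\ref{thm:2-round} combined with the reduction from locally private hypothesis selection to parallel approximate maximum selection with adversarial comparators described in Section~\ref{sec:comp:reduction}. I will not re-prove Theorem~\ref{thm:2-round}; I will treat its guarantees (a $4$-approximation using $O(k^{4/3})$ comparisons in $2$ rounds) as a black box, and I will also reuse the already-established $\ve$-LDP implementation of a single Scheff\'e comparison via randomized response.

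First I would set up the reduction. Given the $k$ candidate distributions $q_1,\dots,q_k$, define the item values $x_i \eqdef -\log_{3+\gamma_0} \dtv{p}{q_i}$ for a small constant $\gamma_0>0$. As recalled in Section~\ref{sec:comp:reduction}, a single LDP Scheff\'e test on a pair $(q_i,q_j)$ that uses $m$ randomized-response samples correctly implements the adversarial comparator on $(x_i,x_j)$: with probability $\ge 1-\beta$, the test returns the larger of $x_i,x_j$ whenever $|x_i-x_j|>1$, provided $m = O(\log(1/\beta)/(\ve^2\alpha^2))$. The additive ``$+1$'' slack in comparator correctness is exactly what is absorbed by the base of the logarithm.

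Next I would instantiate Theorem~\ref{thm:2-round} on these $x_i$. Its $2$-round protocol issues a sequence of at most $M = O(k^{4/3})$ comparison queries, so by the reduction it suffices to implement $M$ independent LDP Scheff\'e tests. Crucially, because we are in the local model, we cannot reuse samples across comparisons (as one could non-privately): privacy composition would blow up. Instead I would partition the $n$ users into $M$ disjoint groups, and assign one group to each comparison; within Algorithm~\ref{alg:2-round} this partition respects the two rounds, since the Round~$1$ round-robins on the $k^{2/3}$ sets of size $k^{1/3}$ use disjoint comparisons from the Round~$2$ round-robin on the $k^{2/3}$ winners, and users in different rounds are disjoint. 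This makes the overall protocol $\ve$-LDP and sequentially interactive in exactly $2$ rounds. Choosing $\beta = \Theta(1/M)$ and union bounding guarantees all $M$ comparators are simulated correctly with constant probability, giving per-group size $O(\log M/(\ve^2\alpha^2)) = O(\log k/(\ve^2\alpha^2))$ and total sample complexity $n = O\!\left(\tfrac{k^{4/3}\log k}{\ve^2\alpha^2}\right)$.

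Finally I would translate the $4$-approximation in $x$-space back to an agnostic factor in total variation. If $\hat q$ is the output item, Theorem~\ref{thm:2-round} gives $x_{\hat q} \ge x^* - 4$, i.e.\ $\dtv{p}{\hat q} \le (3+\gamma_0)^4 \min_i \dtv{p}{q_i}$. By picking $\gamma_0$ sufficiently small, $(3+\gamma_0)^4 \le 81+\gamma$ for any prescribed $\gamma>0$, yielding the claimed $(81+\gamma)$-agnostic approximation factor. The only mildly delicate point, and the one I would be most careful about, is bookkeeping the failure probabilities and the constants across the two rounds: I must make sure that after conditioning on all $M$ comparators being correctly simulated, the deterministic adversarial-comparator guarantee of Theorem~\ref{thm:2-round} applies verbatim, and that the sample partition is set up before any data are viewed so as to preserve $\ve$-LDP in the sequentially interactive sense.
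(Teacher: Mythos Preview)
Your proposal is correct and is exactly the approach the paper takes: the corollary is stated without a separate proof, as it follows immediately from Theorem~\ref{thm:2-round} via the reduction in Section~\ref{sec:comp:reduction}, with the $4$-approximation translating to $(3+\gamma_0)^4 = 81+\gamma$ and the $M=O(k^{4/3})$ disjoint LDP Scheff\'e tests (each on $O(\log M/(\ve^2\alpha^2))$ users) giving the claimed sample complexity.
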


We proceed to prove the guarantees stated in Theorem~\ref{thm:2-round}.

\begin{proof}
  The number of rounds is easily seen to be 2: Lines~\ref{ln:2-round-1} and~\ref{ln:2-round-2} can be performed in one round, and Line~\ref{ln:2-round-3}, which depends on the results of the previous round, is performed in the second round.

  We next analyze the number of queries. 
  Line~\ref{ln:2-round-2} performs the quadratic round-robin tournament of Claim~\ref{clm:round-robin} on sets of size $k^{1/3}$.
  The resulting number of queries for each set is $O(k^{2/3})$, and since there are $k^{2/3}$ sets, the total number of queries here is $O(k^{4/3})$.
  Line~\ref{ln:2-round-3} performs the same quadratic round-robin tournament on one set of size $k^{2/3}$, which takes $O(k^{4/3})$ queries.
  Therefore, the total number of queries is $O(k^{4/3})$.

  Finally, we justify that this achieves a $4$-approximation to the maximum.
  Consider the first round: a maximum element is placed into one of the $k^{2/3}$ sets, and by the guarantees of Claim~\ref{clm:round-robin}, the winner for this set will be a $2$-approximation to the maximum.
  Therefore, the maximum among the winners is a $2$-approximation to the overall maximum, and again by the guarantees of Claim~\ref{clm:round-robin}, the winner of this round will be a $4$-approximation to the maximum, as desired.
\end{proof}

\subsection{A Near-Linear-Sample Algorithm with $O(\log \log k)$ Rounds}
\label{sec:comp:ub}
In this section, we describe our main result in this setting, a family of algorithms for approximate maximum selection parameterized by $t$, which is the allowed number of rounds.
By setting $t = O(\log \log k)$, we will get an $O(k \log \log k)$-query algorithm which requires only $O(\log \log k)$ rounds, improving exponentially on the round complexity of previous approaches.
In particular, the following corollaries are obtained from Theorem~\ref{thm:better-t-round} and Corollary~\ref{cor:better-t-round-ldp} with an optimized setting of parameters.

\begin{corollary}
  \label{cor:set-params-select}
  There exists an $O(\log \log k)$-round algorithm which, with probability $9/10$, achieves a $3$-approximation in the problem of parallel approximate maximum selection with adversarial comparators.
  The algorithm requires $O(k \log \log k)$ queries.
\end{corollary}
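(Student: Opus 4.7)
The corollary is a direct parameter substitution into Theorem~\ref{thm:better-t-round}, so the plan is to first handle that substitution and then briefly sketch how the underlying theorem itself would be proved (since that is the substantive content). For the substitution, I would set $t = \lceil \log_2(\log_2 k + 1)\rceil$, so that $2^{t}-1 \ge \log_2 k$, which gives $k^{1/(2^t-1)} \le k^{1/\log_2 k} = 2$. Plugging into the theorem's query bound, the total number of comparisons is
\[
O\!\left(k^{1+1/(2^t-1)}\cdot t\right) \;\le\; O\!\left(2k\cdot \log\log k\right) \;=\; O(k\log\log k),
\]
while the round count is $t = O(\log\log k)$ by construction. The $3$-approximation guarantee and the $9/10$ success probability are inherited directly from the theorem.

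To establish the underlying theorem, I would generalize the two-round construction in Algorithm~\ref{alg:2-round} to a recursive tournament with $t$ levels. Let $k_i$ denote the number of surviving items at the start of round $i$, with $k_1 = k$ and $k_{t+1}=1$; in round $i$ the algorithm partitions the survivors into groups of size $m_i = k_i/k_{i+1}$ and runs \textsc{Round-Robin} on each group, promoting the winners. The number of queries in round $i$ is then $k_i m_i = k_i^2/k_{i+1}$. Balancing these terms (so that every round contributes roughly the same number of queries) and solving the recurrence with boundary conditions $k_1 = k$, $k_{t+1}=1$ gives $k_i = k^{(2^{t+1-i}-1)/(2^t-1)}$, from which the total query count works out to $O(k^{1+1/(2^t-1)}\cdot t)$, as required.

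The main obstacle is controlling the approximation factor. A naive analysis, where each \textsc{Round-Robin} loses an additive factor of $2$, would give a $2t$-approximation that grows with $t$. To obtain a constant (in particular $3$) approximation, the algorithm must be designed so the error does not compound across rounds. The idea I would pursue is to keep, from each group, not just the single round-robin winner but a small set of top candidates, together with a final verification step; this preserves the invariant that some item within the information-theoretic optimum of $2$ of the global maximum survives every round, rather than letting errors multiply level by level. The $9/10$ success bound is then obtained by a union bound over the $O(\log\log k)$ rounds.

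Finally, via the reduction from hypothesis selection to adversarial comparator maximum selection described in Section~\ref{sec:comp:reduction}, the same procedure and analysis (with each comparison implemented by a noised Scheff\'e test on a fresh sample batch under randomized response) transfers to the LDP setting, yielding the sample and round complexities in Corollary~\ref{cor:informal-sequential}; the instantiation $t = O(\log\log k)$ in Theorem~\ref{thm:better-t-round} is what produces the corollary here.
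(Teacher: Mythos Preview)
Your parameter substitution for the corollary itself is correct and is exactly what the paper does: Theorem~\ref{thm:better-t-round} is invoked with $t = \Theta(\log\log k)$, and the paper says no more than that. Your recurrence and balancing for the query complexity of the bare recursion also match Lemma~\ref{lem:t-round}.

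Where your sketch diverges from the paper is in the mechanism for controlling the approximation factor in Theorem~\ref{thm:better-t-round}. You propose keeping multiple top candidates per group plus a final verification step; this is left vague, and it is not the paper's technique. The paper's Algorithm~\ref{alg:better-t-round} (following~\cite{DaskalakisK14}) instead runs the naive recursion for $t-1$ rounds on a random permutation of the input to produce a survivor set $L$ of size $k^{2^{t-1}/(2^t-1)}$, draws a uniformly random subset $H$ of the \emph{original} input of that same size, and then runs a single \textsc{Round-Robin} on $L\cup H$. The analysis is a density dichotomy on the fraction of items within $1$ of the maximum: if such items are dense, $H$ contains one with probability at least $9/10$, and \textsc{Round-Robin} on $L\cup H$ yields a $3$-approximation; if they are sparse, then with probability at least $9/10$ the true maximum is never placed in the same subtree as any of them during the first $t-1$ rounds, so it never loses and survives into $L$, giving a $2$-approximation. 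The $9/10$ success probability comes from two Markov-inequality bounds in this case split, not from a union bound over rounds as you suggest.
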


\begin{corollary}
  \label{cor:set-params-ldp}
  There exists an $O(\log \log k)$-round algorithm which achieves a $(27+\gamma)$-agnostic factor for locally private hypothesis selection with probability $9/10$, where $\gamma > 0$ is an arbitrarily small constant.
  The sample complexity of the algorithm is $O\left(\frac{k\log k \log \log k}{\ve^2\alpha^2}\right)$.
\end{corollary}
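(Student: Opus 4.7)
The plan is to derive this corollary by combining three ingredients already developed in the excerpt: (i) the parameterized family of $t$-round algorithms for approximate maximum selection under adversarial comparators (Theorem~\ref{thm:better-t-round}, as previewed in Theorem~\ref{thm:informal-comparisons}), (ii) the black-box reduction from hypothesis selection to maximum selection with adversarial comparators described in Section~\ref{sec:comp:reduction}, and (iii) the LDP implementation of the Scheffé test via randomized response. The only real work is to choose the number of rounds $t$ so that the query complexity $O(k^{1+1/(2^t-1)}\,t)$ becomes near-linear.

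First I would instantiate Theorem~\ref{thm:better-t-round} with $t = \lceil \log_2(\log_2 k + 1)\rceil = \Theta(\log\log k)$, so that $2^t - 1 \ge \log_2 k$, which gives
\[
k^{1+\tfrac{1}{2^t-1}} \;\le\; k \cdot k^{1/\log_2 k} \;=\; 2k \;=\; O(k).
\]
Thus Theorem~\ref{thm:better-t-round} yields an $O(\log\log k)$-round algorithm for adversarial max-selection that uses $m = O(k \log\log k)$ total comparison queries and outputs a $3$-approximation with probability at least $9/10$ (matching Corollary~\ref{cor:set-params-select}). Note that this choice of $t$ is essentially the smallest one that drives the exponent $1/(2^t-1)$ below $1/\log k$, which is the sweet spot where the $k^{1/(2^t-1)}$ factor collapses to a constant.

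Next I would push this through the reduction of Section~\ref{sec:comp:reduction}. In the LDP setting each Scheffé comparison is implemented by randomized response on the indicator of the Scheffé set: to guarantee that all $m$ comparisons are simultaneously correct with the desired probability, we assign a disjoint batch of $O(\log m/(\varepsilon^2 \alpha^2))$ users to each comparison, which preserves sequential interactivity and $\varepsilon$-LDP across rounds. With $m = O(k \log\log k)$ this gives a total sample complexity of
\[
O\!\left(\frac{m \log m}{\varepsilon^2 \alpha^2}\right) \;=\; O\!\left(\frac{k \log\log k \cdot \log(k\log\log k)}{\varepsilon^2 \alpha^2}\right) \;=\; O\!\left(\frac{k \log k \log\log k}{\varepsilon^2 \alpha^2}\right),
\]
as claimed. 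The round complexity is inherited from the max-selection algorithm and remains $O(\log\log k)$.

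Finally, the reduction translates a $\tau$-approximation in the comparator model into a $(3+\gamma)^\tau$ agnostic approximation factor for hypothesis selection (where $\gamma>0$ is an arbitrarily small constant coming from the slack in the Scheffé test). Setting $\tau=3$ yields $(3+\gamma)^3 = 27 + \gamma'$ for an arbitrarily small $\gamma'>0$, matching the stated $(27+\gamma)$-agnostic factor. The only mildly subtle point to verify is that the failure probabilities compose correctly: the $9/10$ success probability of the underlying max-selection algorithm already accounts for all $m$ adversarial comparisons being answered according to the guarantee of the comparator, so it suffices that each privatized Scheffé test realizes its comparator with failure probability $\ll 1/m$, which is exactly what the $\log m$ factor in the per-comparison sample complexity buys. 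No additional union bound or composition argument is needed, and there is no other obstacle in the reduction.
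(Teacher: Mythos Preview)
Your proposal is correct and follows exactly the approach the paper intends: the paper derives Corollary~\ref{cor:set-params-ldp} by instantiating Corollary~\ref{cor:better-t-round-ldp} (itself obtained from Theorem~\ref{thm:better-t-round} via the reduction in Section~\ref{sec:comp:reduction}) with $t = \Theta(\log\log k)$, so that $k^{1/(2^t-1)} = O(1)$, yielding the stated round count, sample complexity, and the $(27+\gamma)$ factor from $(3+\gamma)^3$. You have simply spelled out the parameter choice and the composition of guarantees in more detail than the paper does.
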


The method is a careful recursive application of the approach described in Algorithm~\ref{alg:2-round}. 
Specifically, given $t$ allowed rounds of adaptivity, we partition the items into several smaller sets, perform the round-robin algorithm on each, and then feed the winners into the algorithm which is allowed $t-1$ rounds of adaptivity.
A judicious setting of parameters will allow the number of comparisons to decay quite rapidly as the number of rounds is increased.
This construction is described and analyzed in Section~\ref{sec:comp:ub:recursive}.
One challenge is that each round of the algorithm will potentially lose an additive $2$ in the approximation, resulting in an overall $2t$-approximation. 
To avoid this, we employ ideas from~\cite{DaskalakisK14}: we simultaneously apply two algorithms, at least one of which will be effective depending on whether the density of elements close to the maximum is high or low.
We describe the necessary modification and analyze the resulting approach in Section~\ref{sec:comp:ub:bound}.

\subsubsection{A Recursive Application of the $2$-Round Method}
\label{sec:comp:ub:recursive}
Our main result of this section will be the following lemma.
While the round and query complexity are essentially optimal (see Section~\ref{sec:comp:lb}), the quality of approximation is unsatisfactory -- our approach to improving this approximation is described in~\ref{sec:comp:ub:bound}.
\begin{lemma}
  \label{lem:t-round}
  There exists a $t$-round algorithm which achieves a $2t$-approximation in the problem of parallel approximate maximum selection with adversarial comparators.
  The algorithm requires $O(k^{1 + \frac{1}{2^t-1}}t)$ queries.
\end{lemma}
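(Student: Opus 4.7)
}

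My plan is to give a recursive construction that directly generalizes the two-round algorithm (Algorithm~\ref{alg:2-round}): at each level of recursion, partition the input into groups, run one parallel round of round-robin tournaments inside each group, and pass the winners up to a recursive call that uses one fewer round. The base case is $t=1$, which is simply \textsc{Round-Robin} (Claim~\ref{clm:round-robin}): a single round, $O(k^2)$ queries, and a $2$-approximation, matching $k^{1+1/(2^1-1)} \cdot 1 = k^2$.

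For the inductive step, given $t$ rounds on $k$ items, I would partition the items into $k/s$ groups of size $s$, where the group size $s$ is to be chosen. Running \textsc{Round-Robin} in parallel on every group uses one round and $O((k/s)\cdot s^2) = O(ks)$ total comparisons; by Claim~\ref{clm:round-robin} each group's winner is a $2$-approximation to the maximum inside that group. Then I recursively invoke the $(t-1)$-round algorithm on the $k/s$ surviving winners, which by the inductive hypothesis uses $t-1$ further rounds and $O((k/s)^{1+1/(2^{t-1}-1)}(t-1))$ queries. For the approximation guarantee, observe that the overall maximum sits in some group whose reported winner is within $2$ of it, and the recursive call then returns an element within $2(t-1)$ of the best surviving winner, so the final output is within $2 + 2(t-1) = 2t$ of the true maximum.

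The main (and only delicate) step is choosing $s$ to balance the two query costs so that the recursion collapses cleanly to $k^{1+1/(2^t-1)}t$. Setting $s = k^{1/(2^t-1)}$, the first-round cost becomes $O(ks) = O(k^{1+1/(2^t-1)})$. For the recursive cost, write $f(t) := 1 + 1/(2^t-1) = 2^t/(2^t-1)$; one checks that
\[
    \bigl(k/s\bigr)^{f(t-1)} \;=\; k^{f(t-1)\,(1 - 1/(2^t-1))} \;=\; k^{f(t-1)\,(2^t-2)/(2^t-1)} \;=\; k^{\,2^{t-1}/(2^{t-1}-1)\,\cdot\,2(2^{t-1}-1)/(2^t-1)} \;=\; k^{\,2^t/(2^t-1)} \cdot k^{-1}\cdot k,
\]
which simplifies to $k^{f(t-1) - f(t-1)/(2^t-1)} = k^{1+1/(2^t-1)}$, i.e.\ exactly the same exponent as the first-round cost. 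Hence the total query count is $O(k^{1+1/(2^t-1)}\cdot t)$, as required.

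The main obstacle I anticipate is purely arithmetic — verifying this exponent-balancing identity and checking the base case $t=1$, $t=2$ agree with the formula (they do: $k^2$ and $k^{4/3}$, matching Claim~\ref{clm:round-robin} and Theorem~\ref{thm:2-round}). Everything else — the round count (one round per level of recursion, summing to $t$) and the approximation guarantee (an additive $+2$ per level, summing to $2t$) — follows directly by induction from Claim~\ref{clm:round-robin}. There is no obstacle involving the adversarial nature of the comparators beyond what is already absorbed in Claim~\ref{clm:round-robin}, because at each level we only rely on the guarantee that a round-robin winner is a $2$-approximation to the max of its group.
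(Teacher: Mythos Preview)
Your proposal is correct and essentially identical to the paper's proof: the paper defines exactly the same recursive algorithm \textsc{Multi-Round} with group size $k^{\eta_t}$ where $\eta_t = 1/(2^t-1)$ (your $s$), and proves the three claims (round count, $2t$-approximation, query bound) by the same induction from Claim~\ref{clm:round-robin}. The only cosmetic difference is that the paper leaves the exponent-balancing identity $(k^{1-\eta_t})^{1+1/(2^{t-1}-1)} = k^{1+\eta_t}$ implicit after substituting $\eta_t$, whereas you spell it out.
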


The method is described in Algorithm~\ref{alg:t-round}.
Note that for $t=1$ or $t=2$, this simplifies to Algorithms~\ref{alg:round-robin} and~\ref{alg:2-round}, respectively.
\begin{algorithm}
    \caption{$t$-Round Algorithm for Maximum Selection}
    \label{alg:t-round}
    \hspace*{\algorithmicindent} \textbf{Input:} $k$ items $x_1, \dots, x_k$, number of rounds $t$\\
    \hspace*{\algorithmicindent} \textbf{Output:} Approximate maximum $x_i$ 
    \begin{algorithmic}[1] % The number tells where the line numbering should start
        \Procedure{Multi-Round}{$x_1, \dots, x_k,t$} 
        \If {$t = 1$}
        \State \textbf{return} the winner of \textsc{Round-Robin} on $x_1, \dots, x_k$.
        \EndIf
        \State Set $\eta_t = \frac{1}{2^t - 1}$.
        \State Partition $x_1$ through $x_k$ into $k^{1-\eta_t}$ sets of size $k^{\eta_t}$. \label{ln:partition}
        \State Run \textsc{Round-Robin} on each set to obtain $k^{1-\eta_t}$ winners. \label{ln:round-robin}
        \State \textbf{return} the winner of \textsc{Multi-Round} on the set of $k^{1-\eta_t}$ winners with $t-1$ rounds. \label{ln:recursive}
        %\If {$|q_1(S) - \hat p(S)| < |q_2(S) - \hat p(S)|$}
        %\State \textbf{return} $q_1$.
        %\Else
        %\State \textbf{return} $q_2$.
        %\EndIf
        \EndProcedure
    \end{algorithmic}
\end{algorithm}

\begin{figure}[h!]
  \includegraphics[scale=0.5]{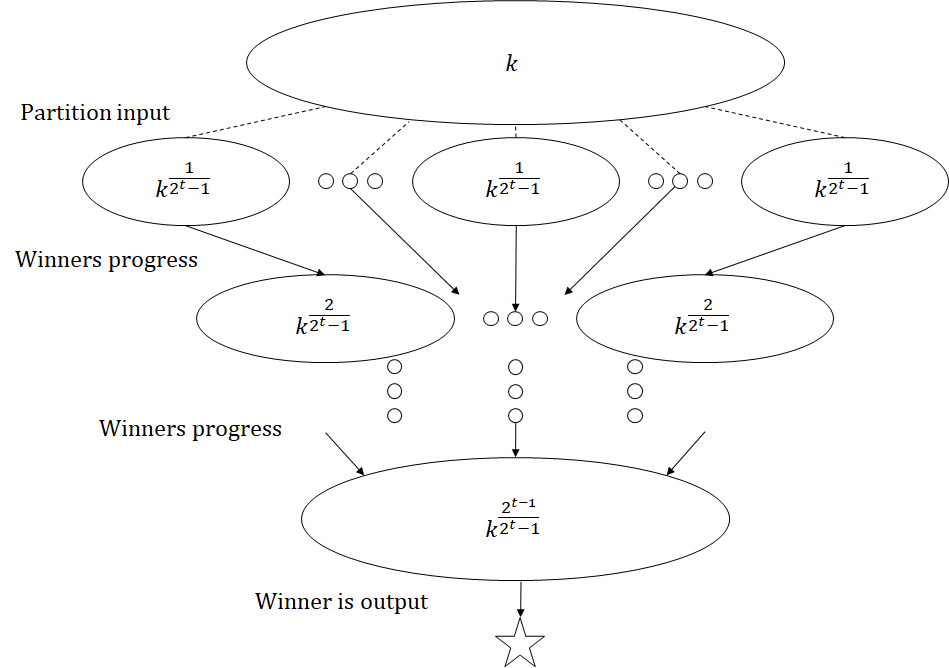}
  \centering
  \caption{An illustration of Algorithm~\ref{alg:t-round}. The input is partitioned into several sets and a round-robin tournament is performed on each. In subsequent rounds, winners are merged into fewer but larger sets, until we have only a single winner.}
\end{figure}

We proceed with proving that this algorithm satisfies the guarantees stated in Lemma~\ref{lem:t-round}.

\begin{proof}
  We prove the guarantees by induction.
  The base case corresponds to $t = 1$.
  As mentioned before, this is exactly equal to Algorithm~\ref{alg:round-robin}, and thus by Claim~\ref{clm:round-robin}, the lemma holds.

  Now, we prove the lemma for a general $t>1$, assuming it holds for $t-1$.
  The number of rounds is trivial: $1$ round is spent performing Lines~\ref{ln:partition} and~\ref{ln:round-robin}, and $t-1$ rounds are spent on the recursive call in Line~\ref{ln:recursive}.
  The approximation is also easy to reason about: the maximum element in the input appears in one of the sets in the partition in Line~\ref{ln:partition}, and therefore the winner of the corresponding set will be a $2$-approximation of the maximum.
  Thus, the set of winners which are fed into the recursive call in Line~\ref{ln:recursive} will have a $2$-approximation of the maximum.
  The inductive hypothesis guarantees that the winner of the recursive call will be a $2(t-1)$-approximation to \emph{this} item, making it a $2t$-approximation to the maximum.
  
  Finally, it remains to reason about the query complexity.
  Comparisons are only performed in Lines~\ref{ln:round-robin} and~\ref{ln:recursive}.
  In the former, we perform the round-robin tournament on $k^{1-\eta_t}$ sets of size $k^{\eta_t}$, so the total number of comparisons is $k^{1-\eta_t} \cdot O(k^{2\eta_t}) = O(k^{1+\eta_t})$.
  In the latter, the recursive call has an input of size $k^{1-\eta_t}$, so by the inductive hypothesis, the number of comparisons done in the recursive call is $O\left(\left(k^{1-\eta_t}\right)^{1 + \frac{1}{2^{t-1}-1}}(t-1)\right)$.
  Substituting in the value $\eta_t = \frac{1}{2^t-1}$, these two terms sum to $O(k^{1+\frac{1}{2^t -1}}t)$, as desired.
\end{proof}

\subsubsection{Bounding the Approximation Factor}
\label{sec:comp:ub:bound}
While the guarantees of Lemma~\ref{lem:t-round} are strong in terms of the round and query complexity, the approximation leaves something to be desired.
We alleviate this issue in a similar way as~\cite{DaskalakisK14}, by running a very simple strategy in parallel to the main method of Algorithm~\ref{alg:t-round}.
The intuition is as follows: if an item with maximum value $x^*$ is never compared with an item with value $x'$ such that $x^* > x' \geq 1$ (i.e., numbers which are $1$-approximations to the maximum), it will never lose a comparison.
If the fraction of such elements is low, then an item with value $x^*$ will make it to the final round, thus guaranteeing that the overall winner will be a $2$-approximation to the maximum.
On the other hand, if the fraction of such elements is high, then we can sample a small number of items such that we select at least one $1$-approximation to $x^*$, and running the round-robin algorithm on this set will guarantee a $3$-approximation to the maximum.

Our method is described more precisely in Algorithm~\ref{alg:better-t-round}, and the guarantees are described in Theorem~\ref{thm:better-t-round}.
\begin{algorithm}
    \caption{Better $t$-Round Algorithm for Maximum Selection}
    \label{alg:better-t-round}
    \hspace*{\algorithmicindent} \textbf{Input:} $K$ items $x_1, \dots, x_k$, number of rounds $t$\\
    \hspace*{\algorithmicindent} \textbf{Output:} Approximate maximum $x_i$ 
    \begin{algorithmic}[1] % The number tells where the line numbering should start
        \Procedure{Better-Multi-Round}{$x_1, \dots, x_k,t$} 
        %\State Permute $x_1, \dots, x_k$ uniformly at random.
        \State Run \textsc{Multi-Round} on a random permutation of $x_1, \dots x_k$ with $t$ rounds, but halt when $t =1$ and let $L$ be the set of all remaining items. \label{ln:better:mr-call}
        \State Let $H$ be a random subset of $\{x_1, \dots, x_k\}$ of size $O\left(k^\frac{2^{t-1}}{2^{t}-1}\right)$. \label{ln:better:subset}
        \State Run \textsc{Round-Robin} on $L \cup H$ and \textbf{return} the winner. \label{ln:better:rr}
        \EndProcedure
    \end{algorithmic}
\end{algorithm}

\begin{figure}[h!]
  \includegraphics[scale=0.5]{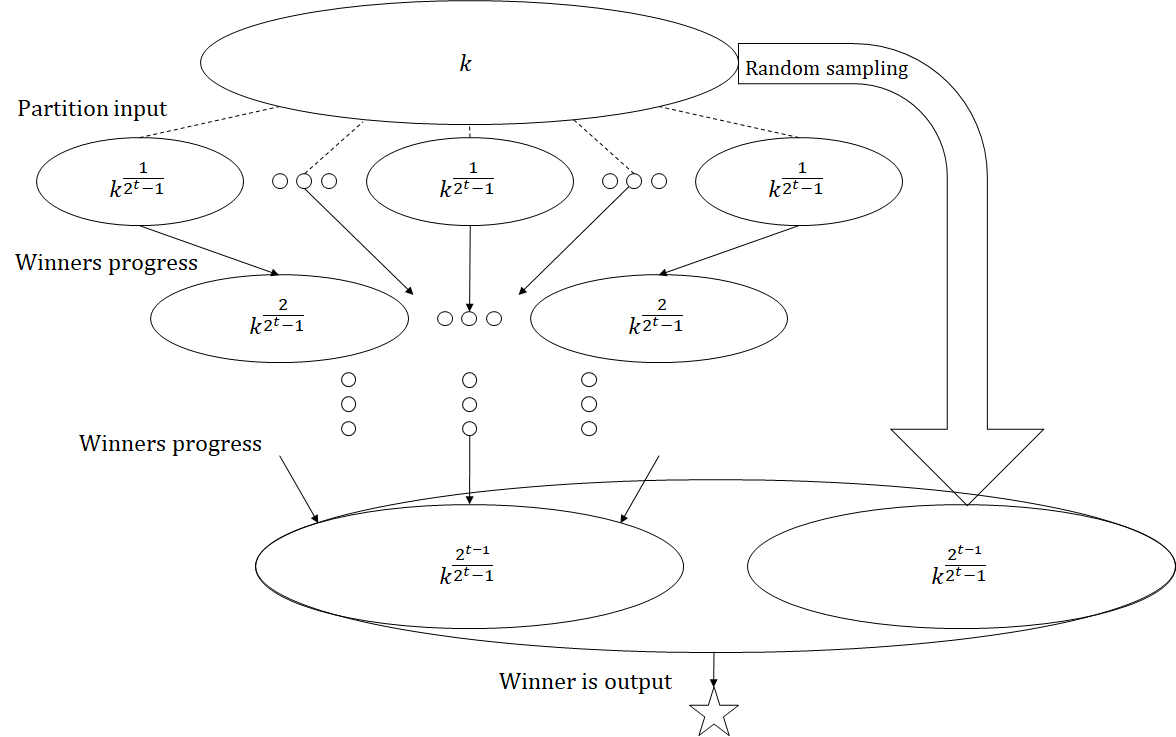}
  \centering
  \caption{An illustration of Algorithm~\ref{alg:better-t-round}. Similar to Algorithm~\ref{alg:t-round}, but in the last round, we perform a round-robin tournament additionally involving a random sample of items from the input.}
\end{figure}

\begin{theorem}
  \label{thm:better-t-round}
  There exists a $t$-round algorithm which, with probability $9/10$, achieves a $3$-approximation in the problem of parallel approximate maximum selection with adversarial comparators.
  The algorithm requires $O(k^{1 + \frac{1}{2^t-1}}t)$ queries.
\end{theorem}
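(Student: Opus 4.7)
The plan is to establish three claims about Algorithm~\ref{alg:better-t-round}: it uses $t$ rounds, issues $O(k^{1+1/(2^t-1)}t)$ queries, and returns a $3$-approximation to the maximum with probability at least $9/10$. The first two items are essentially bookkeeping. Running \textsc{Multi-Round} halted at $t=1$ executes the $t-1$ partition-and-tournament rounds but skips the base-case round-robin, and the final \textsc{Round-Robin} on $L \cup H$ adds one more round, giving $t$ rounds total. For queries, the halted \textsc{Multi-Round} costs at most the $O(k^{1+1/(2^t-1)}t)$ budget of Lemma~\ref{lem:t-round}. The iterated recurrence $s_{i+1} = s_i^{1-\eta_{t-i}}$ with $\eta_j = 1/(2^j-1)$ telescopes via $\prod_{j=2}^{t}(1-\eta_j) = 2^{t-1}/(2^t-1)$ to $|L| = k^{2^{t-1}/(2^t-1)}$, so the final \textsc{Round-Robin} on $L\cup H$ (with $|H| = \Theta(|L|)$) costs $O(|L|^2) = O(k^{1+1/(2^t-1)})$, within budget.

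The heart of the argument is the approximation guarantee. Let $x^*$ denote the value of a maximum item and call an item \emph{high} if its value exceeds $x^* - 1$, and \emph{low} otherwise. The adversarial comparator returns the larger value whenever the value gap exceeds $1$, so $x^*$ beats every low item deterministically; consequently, a round-robin on a group containing $x^*$ returns $x^*$ itself unless the group contains another high item. Chasing this through the levels shows that $x^*$ survives all $t-1$ partitioning rounds and lands in $L$ provided its entire subtree in the recursion (the $k/|L|$ original items whose unique level-$(t{-}1)$ representative is $x^*$'s image) contains no high item other than $x^*$.

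I would then split on the number $K$ of high items against a threshold $T = \Theta(|L|)$. In the small-$K$ regime ($K \leq T$), the random permutation places each of the other $K-1$ high items into $x^*$'s subtree with probability at most $1/|L|$, so a union bound shows that this subtree is ``clean'' with probability at least $1 - T/|L|$; on that event $x^*$ reaches $L$ and the final \textsc{Round-Robin} on $L\cup H$ returns a $2$-approximation to $\max(L \cup H) = x^*$. In the large-$K$ regime ($K > T$), the random subset $H$ of size $\Theta(|L|)$ misses every high item with probability at most $\exp(-K|H|/k)$; when $H$ contains any high item $y$, the final \textsc{Round-Robin} returns a $2$-approximation to $\max(L\cup H) \geq y > x^* - 1$, which is a $3$-approximation to $x^*$. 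Thus failure requires the subtree to be dirty \emph{and} $H$ to miss every high item, which happens with probability at most $\min(T/|L|,\ \exp(-T|H|/k))$.

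The main obstacle is tuning $T$ and the hidden constant $C$ in $|H| = C|L|$ so that both failure bounds are at most $1/20$ simultaneously, independent of $K$. Setting $T = |L|/20$, the two requirements reduce to the single inequality $C \cdot k^{1/(2^t-1)} \gtrsim 1$, which is satisfied by any sufficiently large absolute constant $C$ as long as $k^{1/(2^t-1)}$ is bounded below by a constant; this holds in the relevant regime $t = O(\log \log k)$ for $k$ large (indeed $k^{1/\log k} = 2$). A union bound over both failure modes then yields overall success probability at least $9/10$, completing the proof. Corollaries~\ref{cor:set-params-select} and~\ref{cor:set-params-ldp} follow by specializing $t = \Theta(\log \log k)$ and invoking the hypothesis-selection reduction of Section~\ref{sec:comp:reduction}.
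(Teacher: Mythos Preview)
Your proof is correct and follows the paper's approach closely: the same density dichotomy (few items within $1$ of the max $\Rightarrow$ $x^*$ survives to $L$; many such items $\Rightarrow$ the random subset $H$ catches one), the same telescoping computation of $|L| = k^{2^{t-1}/(2^t-1)}$, and the same final \textsc{Round-Robin} on $L\cup H$ yielding a $3$-approximation. Two cosmetic fixes: your displayed bound ``$\min(T/|L|,\ \exp(-T|H|/k))$'' should be a $\max$ (your very next sentence, requiring \emph{both} bounds to be small, shows you already understand this), and the restriction to $t = O(\log\log k)$ is unnecessary, since $k^{1/(2^t-1)} > 1$ for every $t\ge 1$ and $k\ge 2$, so any absolute constant $C$ with $C/20 \ge \ln 20$ works uniformly in $t$.
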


This gives the following corollary for LDP hypothesis selection. 
\begin{corollary}
  \label{cor:better-t-round-ldp}
  There exists a $t$-round algorithm which achieves a $(27+\gamma)$-agnostic approximation factor for locally private hypothesis selection with probability $9/10$, where $\gamma > 0$ is an arbitrarily small constant.
  The sample complexity of the algorithm is $O\left(\frac{k^{1 + \frac{1}{2^t-1}}t \log k}{\ve^2\alpha^2}\right)$.
\end{corollary}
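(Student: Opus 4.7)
The plan is to derive the corollary by composing Theorem~\ref{thm:better-t-round} with the reduction from locally private hypothesis selection to approximate maximum selection with adversarial comparators, as sketched in Section~\ref{sec:comp:reduction}. First, I would recall that the (private) Scheff\'e test of Algorithm~\ref{alg:scheffe}, when implemented under $\varepsilon$-LDP via randomized response on the indicator $\mathbbm{1}_{X_i\in S}$, realizes an instance of the adversarial comparator function $C$ on the transformed values $x_i = -\log_{3+\gamma} d_{TV}(p,q_i)$, and achieves this with failure probability at most $\beta$ using $O(\log(1/\beta)/(\varepsilon^2\alpha^2))$ samples per comparison. A $\tau$-approximation to the maximum of $\{x_i\}$ therefore corresponds to a $(3+\gamma)^\tau$ agnostic approximation factor in total variation distance for the hypothesis selection problem.

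Next, I would apply Theorem~\ref{thm:better-t-round} with approximation parameter $\tau = 3$. This yields a $t$-round protocol which, with probability $9/10$, returns a $3$-approximate maximum while using $m := O(k^{1+\frac{1}{2^t-1}}t)$ adversarial comparator queries. Translated back to hypothesis selection, the returned hypothesis $\hat q$ satisfies $d_{TV}(p,\hat q) \le (3+\gamma)^3 \min_{q^*\in\mathcal{Q}} d_{TV}(p,q^*) + O(\alpha) = (27+\gamma')\min_{q^*\in\mathcal{Q}} d_{TV}(p,q^*) + O(\alpha)$ for an arbitrarily small constant $\gamma' > 0$, by choosing $\gamma$ small enough.

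The only point that requires more care than in the non-private reduction is the sample accounting, since under LDP we cannot reuse the same batch of samples across comparisons without destroying the privacy guarantee (this is exactly the obstacle ruled out by the lower bound of Theorem~\ref{thm:fully-interactive-lb}). I would therefore partition the $n$ users into $m$ disjoint groups, one per comparator query, and use each group exclusively for its assigned Scheff\'e test. A group of size $O(\log m/(\varepsilon^2\alpha^2))$ suffices to make each comparator correct with probability $1-1/(10m)$; union bounding over all $m$ comparators gives overall failure probability at most $1/10$ from the comparators, which combines with the $9/10$ success probability of Theorem~\ref{thm:better-t-round} to give the claimed $9/10$ success bound (after rescaling constants). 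The total sample complexity is thus
\[
n \;=\; m \cdot O\!\left(\tfrac{\log m}{\varepsilon^2\alpha^2}\right) \;=\; O\!\left(\tfrac{k^{1+\frac{1}{2^t-1}}\, t\,\log k}{\varepsilon^2\alpha^2}\right),
\]
using $\log m = O(\log k)$ since $t \le O(\log\log k)$ in the interesting regime and the polynomial factor in $k$ dominates.

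Finally, I would verify that the protocol fits in $t$ rounds of sequential interactivity: each round of the outer maximum-selection algorithm triggers a batch of comparator queries, but within a single round all comparators act on disjoint user groups simultaneously, so each user is queried at most once and the rounds of the LDP protocol coincide exactly with the $t$ rounds of Algorithm~\ref{alg:better-t-round}. The main ``obstacle'' is really just this disjoint-partition accounting step, as it is the source of the extra $\log k$ factor relative to Theorem~\ref{thm:better-t-round}; the rest of the argument is a direct composition of previously established pieces.
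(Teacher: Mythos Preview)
Your proposal is correct and follows exactly the approach the paper intends: the corollary is stated immediately after Theorem~\ref{thm:better-t-round} with no separate proof, relying on the reduction spelled out in Section~\ref{sec:comp:reduction} (private Scheff\'e tests implementing adversarial comparators, disjoint user groups of size $O(\log m/(\varepsilon^2\alpha^2))$ per comparison, and the translation of a $\tau$-approximation into a $(3+\gamma)^\tau$ agnostic factor). Your write-up is in fact more explicit than the paper's about the success-probability accounting and the $\log m = O(\log k)$ step.
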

Corollaries~\ref{cor:set-params-select} and~\ref{cor:set-params-ldp} follow from these statements with an appropriate setting of $t$.

To conclude, we prove Theorem~\ref{thm:better-t-round}.

\begin{proof}
  The number of rounds is straightforward to analyze: Line~\ref{ln:better:mr-call} takes $t-1$ rounds (since we stop one round early), and Lines~\ref{ln:better:subset} and~\ref{ln:better:rr} can be done in $1$ last round.

  To analyze the number of comparisons, we require the following claim, which quantifies the number of items that make it to the last round of \textsc{Multi-Round}.
  \begin{claim}
    \label{clm:bound-round}
    $|L| = k^\frac{2^{t-1}}{2^t -1}$.
  \end{claim}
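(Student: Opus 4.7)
}
The plan is to track the number of surviving items as the recursion in \textsc{Multi-Round} unwinds from $t$ down to $1$, and then to solve the resulting recurrence by a telescoping product. Let $k_s$ denote the number of items present at the moment the recursion reaches a call to \textsc{Multi-Round} with second argument $s$. By construction we have $k_t = k$, and since Algorithm~\ref{alg:better-t-round} halts just before the $t=1$ call is executed, $|L| = k_1$. So the task reduces to computing $k_1$.

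Reading off Lines~\ref{ln:partition}--\ref{ln:recursive} of Algorithm~\ref{alg:t-round}, at each level $s > 1$ the algorithm partitions the $k_s$ items into $k_s^{1-\eta_s}$ groups (each of size $k_s^{\eta_s}$), runs round-robin in each group, and passes the $k_s^{1-\eta_s}$ winners to the next level. Hence
\[
  k_{s-1} \;=\; k_s^{1-\eta_s}, \qquad \eta_s = \tfrac{1}{2^s-1},\qquad s = 2,3,\dots,t.
\]
Taking $\log_k$ and writing $a_s := \log_k k_s$ turns this multiplicative recursion into a product: $a_t = 1$ and
\[
  a_{s-1} \;=\; a_s\cdot(1-\eta_s) \;=\; a_s\cdot\frac{2^s-2}{2^s-1} \;=\; a_s\cdot\frac{2(2^{s-1}-1)}{2^s-1}.
\]
So $a_1 = \prod_{s=2}^{t} \frac{2(2^{s-1}-1)}{2^s-1}$.

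The main step (and the only place with any content) is showing that this product telescopes. Pulling out the factor of $2$ from each term gives $2^{t-1}$, and the remaining ratio is
\[
  \prod_{s=2}^{t} \frac{2^{s-1}-1}{2^s-1} \;=\; \frac{\prod_{j=1}^{t-1}(2^{j}-1)}{\prod_{s=2}^{t}(2^s-1)} \;=\; \frac{2^1-1}{2^t-1} \;=\; \frac{1}{2^t-1},
\]
since the numerator and denominator share every factor $2^{j}-1$ for $2 \le j \le t-1$. Combining, $a_1 = \frac{2^{t-1}}{2^t-1}$, and therefore $|L| = k_1 = k^{a_1} = k^{\frac{2^{t-1}}{2^t-1}}$, which is the claim. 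I expect no real obstacle here — the only thing that could go wrong is an off-by-one in the indexing of ``halt when $t=1$'' (stopping before versus after the $t=1$ call), which is resolved by noting that Line~\ref{ln:better:rr} takes over the role of the $t=1$ round-robin.
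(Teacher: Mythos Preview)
Your proof is correct and essentially identical to the paper's: both track the number of surviving items through the recursion, take $\log_k$ to convert the recurrence into the product $\prod_{s=2}^{t}(1-\eta_s)$, and telescope it to obtain $\tfrac{2^{t-1}}{2^t-1}$. The only cosmetic difference is that you index by the recursion parameter $s$ descending from $t$ to $1$, whereas the paper indexes by round number ascending, but the computation is the same.
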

  \begin{proof}
    We recall the $\eta_t$ notation of Algorithm~\ref{alg:t-round}.
    The number of items which begin the first round of the algorithm is clearly $k$.
    Since these are partitioned into $k^{1-\eta_t}$ groups, each producing a single winner which progresses to the next round, we have $k^{1-\eta_t}$ items which begin the second round of the algorithm.
    A similar reasoning implies that the number of items entering the third round of the algorithm is $\left(k^{1-\eta_t}\right)^{1-\eta_{t-1}}$.
    Noting that $|L|$ is the number of items entering the $t$-th (i.e., final) round of the algorithm, the same logic shows that 
    \[
      \log_k |L| = \prod_{i=0}^{t-2} \left(1 - \frac1{2^{t-i}-1}\right) = \prod_{i=0}^{t-2} \left(\frac{2(2^{t-i-1}-1)}{2^{t-i}-1}\right) = \frac{2^{t-1}}{2^t - 1},
    \]
    as desired.
    The latter equality can be seen by a telescoping argument, as the numerators cancel the subsequent denominators.
  \end{proof}
  With this in hand, the number of comparisons is the number of comparisons due to Line~\ref{ln:better:mr-call} (which is $O\left(k^{1 + \frac{1}{2^t-1}}(t-1)\right)$ by the same argument as in the proof of Lemma~\ref{lem:t-round}) plus the number of comparisons due to Line~\ref{ln:better:rr}, which is $O\left((|H| + |L|)^2\right) = O\left(\left(k^\frac{2^{t-1}}{2^{t}-1}\right)^2\right) =  O\left(k^{1 + \frac{1}{2^t-1}}\right)$.
  Combining both of these gives the desired number of comparisons.

  Finally, we justify the accuracy guarantee.
  We split the analysis into two cases, based on the density of items which have value comparable to the maximum.
  Let $\zeta = \max_i x_i$ be the maximum value, let $B = \{ i\ :\ \zeta > x_i \geq \zeta - 1\}$ be the set of items which are $1$-approximations to (but not strictly equal to) the maximum value, and let $\gamma = \frac{|B|}{k}$ be their density.

  First, suppose that $\gamma \geq 1/10k^\frac{2^{t-1}}{2^{t}-1}$.
  If we let the hidden constant in the size of $H$ be $100$ (i.e., $|H| = 100k^\frac{2^{t-1}}{2^{t}-1}$), then Markov's inequality says that at least one item in $H$ will be a $1$-approximation to the maximum value with probability at least $9/10$.
  By the guarantees of \textsc{Round-Robin} (quantified in Claim~\ref{clm:round-robin}), the result of Line~\ref{ln:better:rr} will be a $3$-approximation to the maximum value, as desired.

  On the other hand, suppose that $\gamma \leq 1/10k^\frac{2^{t-1}}{2^{t}-1}$.
  We argue that an item with value $\zeta$ makes it to the final round of \textsc{Multi-Round} and is included in $L$ -- if this happens, then by the guarantees of \textsc{Round-Robin}, the result of Line~\ref{ln:better:rr} will be a $2$-approximation to $\zeta$ and the proof is complete.
  This happens if an item with value $\zeta$ is never compared to any element from $B$ within the first $t-1$ rounds.
  Fix some such item: the probability it is compared with some element from $B$ is upper bounded by the probability that any element of $B$ appears in the same subtree of depth $t-1$ leading up to the final round.
  The number of elements contained in this subtree is $k/|L| = k^\frac{2^{t-1}-1}{2^t -1}$, by Claim~\ref{clm:bound-round}.
  The expected number of items from $B$ in this subtree is bounded as $\gamma \cdot k/|L| \leq \frac{1}{10k^\frac{1}{2^t-1}} \leq 1/10$, and the result follows again from Markov's inequality.
\end{proof}

\subsection{A Lower Bound for Selection via Adversarial Comparators}
\label{sec:comp:lb}

\newcommand{\fa}{\tau}

In this section, we provide a lower bound for adversarial maximum selection with constrained interactivity. In Section~\ref{sec:two-round-lower-bound}, we consider a special case when $t=2$ and prove that any 2-round algorithm requires $\Omega\Paren{k^{\frac{4}{3}}}$ comparisons to find an approximate maximum. In Section~\ref{sec:t-round-lower-bound}, we generalize our result and technique to $t$ rounds and prove that any $t$-round algorithm requires $\Omega\Paren{ \frac { k^{1+ \frac{1}{2^{t}-1}}}{3^t} }$ comparisons. These lower bounds hold even for a non-adaptive adversary.

\subsubsection{A Lower Bound for $2$-round Algorithms}
\label{sec:two-round-lower-bound}

We warm up with a simpler case which illustrates the main ideas, namely, a lower bound for $2$-round algorithms. Specifically, we show that no matter how large the approximation factor $\fa$ is, any 2-round algorithm which solves the parallel approximate maximum selection problem requires $\Omega(k^{\frac{4}{3}})$ comparisons. 

\begin{theorem}
 \label{thm:lower-2-round}
For  any  $\fa > 1$, any $2$-round algorithm which achieves a $\fa$-approximation in the problem of parallel approximate maximum selection with non-adaptive adversarial comparators requires $\Omega(k^{\frac{4}{3}})$ queries.
\end{theorem}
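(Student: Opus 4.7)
My plan is to prove Theorem~\ref{thm:lower-2-round} by Yao's minimax principle: I construct a distribution over adversary-committed instances on which any deterministic $2$-round algorithm using $q_1+q_2 = o(k^{4/3})$ queries fails with probability greater than $1/3$.

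\textbf{Hard-instance construction.} All $k$ items take value $0$ except one designated \emph{sink} with value $\tau+2$. Because $\tau>1$, every sink-vs-other gap exceeds $1$ and those comparisons are answered honestly (the sink always wins); every other comparison has gap $0$ and is answered by a tournament the adversary commits to in advance. The hard distribution samples a uniformly random set $I\subseteq[k]$ of size $s := \lceil k^{2/3}\rceil$, a uniformly random linear order $\prec$ on $I$, and a uniformly random tournament on $[k]\setminus I$. The sink $\pi$ is defined as the $\prec$-maximum element of $I$; edges inside $I$ are oriented according to $\prec$ (higher $\prec$ wins) and edges between $I$ and $[k]\setminus I$ are oriented with the $I$-endpoint winning. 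Since the only item of value $\ge \tau+2-\tau=2$ is the sink, the unique $\tau$-approximation of the max is $\pi$, so the algorithm succeeds iff its output is $\pi$.

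\textbf{Tur\'an step: an indistinguishable independent set.} I work conditional on the realization of $I$ and of the non-$I$ tournament, both of which are independent of $\prec$. Under this conditioning, the outcome of any query depends on $\prec$ only if both endpoints of the query lie in $I$: cross-$I$ queries always return the $I$-endpoint (regardless of $\prec$), and outside-$I$ queries are fixed by the conditioning. Let $H$ denote the union of round-$1$ and round-$2$ queries with both endpoints in $I$; by a Markov argument over the randomness of $I$, each query falls inside $I$ with probability at most $(s/k)^2 = k^{-2/3}$, so with constant probability $|E(H)| \le O((q_1+q_2)k^{-2/3})$. Applying Tur\'an's inequality to $H$ on its $s$ vertices yields an independent set $J\subseteq I$ of size
\[
|J| \;\ge\; \Omega\!\left(\frac{s^2}{|E(H)|+s}\right) \;=\; \Omega\!\left(\min\Bigl\{\,s,\;\tfrac{k^2}{q_1+q_2}\,\Bigr\}\right).
\]

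\textbf{Uniformity on $J$ implies low success.} Because no queried edge lies inside $J$, the restriction $\prec\!\!\mid_J$ is uniformly distributed over linear orders on $J$ conditional on the transcript. Decomposing $\prec$ as the interleaving of $\prec\!\!\mid_J$ and $\prec\!\!\mid_{I\setminus J}$, the $\prec$-maximum of $I$, \emph{when it lies in $J$}, is uniformly distributed over $J$, and by symmetry $\Pr[\pi\in J \mid T] = |J|/s$. Hence for any deterministic output $z$ produced by the algorithm,
\[
\Pr[z = \pi \mid T] \;\le\; \frac{1}{|J|}\,\Pr[\pi\in J \mid T] \;+\; \Pr[\pi\notin J \mid T] \;\le\; 1 - \frac{|J|-1}{s}.
\]
For this to be at least $2/3$ we need $|J| \ge 1 + s/3 = \Omega(s)$, and by the Tur\'an bound this forces $\min\{s,\, k^2/(q_1+q_2)\} = \Omega(s)$, i.e.\ $q_1+q_2 = \Omega(k^2/s) = \Omega(k^{4/3})$.

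\textbf{Main obstacle.} The delicate step is that round-$2$ queries are adaptive, so $H$ itself is a random set that depends on the round-$1$ transcript, which in turn is correlated with $I$ and $\prec$. I plan to handle this by exposing $I$ and the non-$I$ tournament first (both independent of $\prec$); under this conditioning, the round-$1$ transcript is determined by $\prec$'s restriction to the within-$I$ edges of the round-$1$ query graph $G_1$, and the round-$2$ query set becomes a deterministic function of the round-$1$ transcript. A union bound over the (at most $2^{q_1}$) possible round-$1$ transcripts then lets me apply the Markov estimate $|E(H)|\le O((q_1+q_2)k^{-2/3})$ with constant probability over the random choice of $I$, after which the Tur\'an bound and the uniformity argument of the previous paragraph close the proof. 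I expect the bookkeeping in this conditioning to be the most intricate part of the full write-up.
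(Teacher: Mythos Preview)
Your argument has a genuine gap at the ``Uniformity on $J$'' step. Taking $J$ to be merely an \emph{independent} set in the within-$I$ query graph $H$ is not enough to make $\prec|_J$ uniform conditional on the transcript: queries between $J$ and $I\setminus J$ (which are allowed even though $J$ is independent) leak information about the order inside $J$. Concretely, with $I=\{a,b,c\}$, $J=\{a,b\}$, and a single query revealing $a\succ c$, the three surviving linear orders are $a\!>\!b\!>\!c$, $a\!>\!c\!>\!b$, $b\!>\!a\!>\!c$, so $\Pr[a\succ b\mid T]=2/3$. In general, a node of $J$ that has beaten many $I\setminus J$ nodes is more likely to be the global maximum, so neither $\Pr[\pi\in J\mid T]=|J|/s$ nor ``$\pi$ uniform on $J$ given $\pi\in J$'' holds. (Using an \emph{isolated} set in $H$ rather than an independent set would repair this particular step.) Separately, your proposed fix for adaptivity---union-bounding over up to $2^{q_1}$ round-$1$ transcripts and then appealing to a Markov bound over the random $I$---cannot work: Markov gives only constant failure probability, nowhere near the $2^{-q_1}$ needed for the union bound; and you cannot simultaneously ``expose $I$'' and later take expectation over $I$.

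The paper's proof avoids both difficulties with a different construction. Inside the random set $U_1$ of size $k^{2/3}$ it plants only a random \emph{pair} $\{i^*,i'\}$ (sink and near-sink), with $i^*$ and $i'$ beating every other node and all other within-$U_1$ edges uniformly random. The point is that $i^*$ and $i'$ behave identically against every other vertex, so the \emph{only} query whose answer depends on which of the two is the sink is the edge $\{i^*,i'\}$ itself. One then shows two easy first-moment events: (i) neither $i^*$ nor $i'$ is touched by a within-$U_1$ round-$1$ query, and (ii) at most half of $U_1$ is touched by such queries. Under these events the round-$1$ transcript is literally independent of the identity of $\{i^*,i'\}$; hence, conditional on the transcript (which fixes the round-$2$ queries), $\{i^*,i'\}$ is uniform over $\binom{R}{2}$ with $|R|\ge k^{2/3}/2$, and round $2$ hits this specific edge with probability at most $m_2/\binom{|R|}{2}$. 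Adaptivity is thus handled automatically, with no Tur\'an step and no union bound over transcripts.
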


We remark that, since our result is proved in the setting of non-adaptive adversarial comparators, it also automatically holds for adaptive comparators as well.

In our lower bound constructions, we reformulate the parallel approximate maximum selection problem as a game between an adversary and the algorithm. Before the game starts, the adversary commits to a random tournament (i.e., a complete directed graph)\footnote{Note that in this section we use ``tournament'' in the graph theoretic sense.} on $k$ nodes, each identified with one of the $k$ items. We will require that the tournament has, with probability $1$, a single sink node. Then, the algorithm player asks $m_1$ queries to the adversary, each query corresponding to a comparison between items $x_i$ and $x_j$. If the corresponding edge between $x_i$ and $x_j$ in the tournament is directed from $x_i$ to $x_j$, then the adversary answers that $x_j > x_i$, and, otherwise, the adversary answers that $x_i > x_j$. Equivalently, the algorithm asks for the directions of $m_1$ edges, which are revealed by the adversary. Afterwards, the player asks $m_2$ additional queries, based on the information gained from the initial $m_1$ queries, and the adversary answers them according to the directions of edges in the tournament. The game continues in this manner for $t$ rounds, where in round $q$ the algorithm asks $m_q$ queries, possibly dependent on all the query answers so far. After the $t$-th round, the algorithm must declare the ``winner'', i.e., the sink in the tournament.

Note that we can always produce item values so that the query answers are valid for the adversarial comparators model, and the sink node is the unique $\fa$-approximate maximum. Let $C_1, \ldots C_\ell$ be the strongly connected components of the tournament, ordered so that, if $i < j$, then all edges between $C_i$ and $C_j$ are directed from $C_i$ to $C_j$. Then we can set, for example, $x_j = 2i\fa$ for all $x_j$ in $C_i$. This way all queries to two items in the same strongly connected component can be answered arbitrarily, and all queries to items in two different components can be answered according to the direction of edges in the tournament. Moreover, we want to mention two special components. First, since there is a unique sink node $x_{i^*}$, $C_\ell$ must be equal to $\{x_{i^*}\}$, and therefore, $x_{i^*}$ is the unique $\fa$-approximate maximum. Second, in order to ``fool'' the player, the adversary sets $C_{\ell-1} = \{ x_{i^{\prime}}\}$, where all edges incident on $x_{i^\prime}$ are directed towards $x_{i^\prime}$, except the edge from $x_{i^\prime}$ to $x_{i^*}$. Thus, if the algorithm can achieve a $\fa$-approximation in the parallel approximate maximum selection problem, it can identify the sink node in the game above, and especially, distinguish it from $ x_{i^{\prime}}$.

We are now ready to prove Theorem~\ref{thm:lower-2-round}.

\begin{proof}
  We model the problem as the game described above, with $t =2$. By Yao's minimax principle, we can assume, without loss of generality, that the algorithm player makes deterministic choices.  
We start with the construction of the random tournament. From now on, to make the notation more convenient, we will denote nodes/items by their indices, i.e., we will write $i$ rather than $x_i$. Let $U_0$ denote the complete set of the nodes. Firstly, the adversary picks a uniformly random subset $U_1$ of $k^{\frac{2}{3}}$ nodes from $U_0$. Then from the adversary picks two nodes $i^*$ and $i^\prime$ uniformly at random from $U_1$.

%Now comes the part of valuing the nodes. There are four possible outcomes for each node, which is $15\fa$, $10\fa$, $5\fa$, and $0$, where the values are assigned by the following rule:
%For $i^{*}$ and $i^{\prime}$, let $x_{i^*} =15\fa $ and  $x_{i^{\prime}} =10\fa$, respectively. Then we define $V_1 \coloneqq U_1 \backslash \{A,B\}$, and for all the nodes $i \in V_1$, $x_i = 5\fa$. Finally let $V_2 \coloneqq U \backslash U_1$, and for $i \in V_2$, $x_i = 0$.
%$$ x_i =\left\{
%\begin{array}{lcl}
%15\fa,    &      & \text{if }  {i= i^{*}},  \\
%10\fa,    &      & \text{if }  {i= i^{\prime}}, \\
%5\fa,     &      & \text{if }  {i\in U_1 \backslash \{A,B\}}, \\
%0,         &      & \text{if }  {i \in  U_0 \backslash U_1}. \\
%\end{array} \right. $$

Now we describe the directions of the edges of the tournament. For convenience, we define $V_0 \coloneqq U_0 \backslash U_1$ and $V_1 \coloneqq U_1 \backslash \{i^*,i'\}$. All edges incident on $i^*$ are directed towards $i^*$, i.e., $i^*$ is our sink node. All edges incident on $i^\prime$ are directed towards $i^\prime$, except the edge from $i^\prime$ to $i^*$. All edges from $V_0$ to $V_1$ are directed towards the node in $V_1$. Finally, the direction of any edge between two nodes in $V_0$ or two nodes in $V_1$ is chosen uniformly and independently from all other random choices.

Now we switch to the side of the player. As noted above, any algorithm which achieves $\fa$-approximation must correctly identify $i^{*}$ as the sink, with probability higher than $\frac{2}{3}$. Given $m_1 = m_2 = \frac{1}{100} k^\frac{4}{3}$, we want to show that any algorithm which asks $m = m_1 + m_2$ queries can not find $i^*$ with this probability. In the first round, the player asks $m_1 =  \frac{1}{100} k^\frac{4}{3} $ number of queries. We use $e_j = \{\alpha_j, \beta_j\}$ to denote the $j$-th query, where $j \in [m_1]$. Let $S$ denote the set of the nodes in $U_1$ which have ever competed with some other nodes from $U_1$, i.e., $S = \{ i_1 \in U_1 : \exists i_2 \in U_1, \exists j \in [m_1], e_j = \{i_1,i_2\}\}$.
Now we want to show that the following two ``bad'' events happen with a small probability:

$$ A_1 = \{ i^{\prime} \in S ~ \cup ~ i^{*} \in S \}, ~~~~ A_2 = \{  | V_1 \cap S |  \ge \frac1{2} \cdot k^\frac{2}{3}\}.$$

We bound the probability of event $A_1$ and $A_2$, respectively. For the rest of the proof, we will assume that $k$ is a large enough constant. 
%We note that $A_1$ can only happen when the player exactly queries the pair $(i^*, i^{\prime})$. 
By a union bound, 
%this happens with probability at most
$$\pr{A_1} \le \pr{ i^{\prime} \in S } + \pr{i^{*} \in S}  \le 2 \cdot m_1 \cdot  \frac{k^{\frac{2}{3}}-1}{\binom{k}{2}} \le 0.05.$$

With respect to $A_2$, let $e_j = \{\alpha_j, \beta_j\}$, where $j \in [m_1]$. We note that $| V_1 \cap S | \le 2\cdot \sum_{j\in [m_1]} \id (\alpha_j \in V_1, \beta_j \in V_1)$, where $\forall j$, $ \E \Paren{ \id (\alpha_j \in V_1, \beta_j \in V_1)} = \binom {k^{\frac{2}{3} }} {2} \backslash \binom{k}{2}  = \frac{k^{\frac{2}{3}}-1 }{k-1} \cdot k^{-\frac13}$. Furthermore, $\forall j_1 \neq j_2$, $\id (\alpha_{j_1} \in V_1, \beta_{j_1} \in V_1)$ and $\id (\alpha_{j_2} \in V_1, \beta_{j_2} \in V_1)$ are negatively correlated.

Therefore,
$$ \E \Paren{\sum_{j\in [m_1]} \id (\alpha_j \in V_1, \beta_j \in V_1)}  = m_1 \cdot \frac{k^{\frac{2}{3}}-1 }{k-1} \cdot k^{-\frac13} = \frac1{100}\cdot \frac{k}{k-1}\cdot (k^{\frac{2}{3}}-1),$$
$$ \Var \Paren{\sum_{j\in [m_1]} \id (\alpha_j \in V_1, \beta_j \in V_1)}  \le m_1 \cdot \frac{k^{\frac{2}{3}}-1 }{k-1} \cdot k^{-\frac13}\cdot \Paren{1 - \frac{k^{\frac{2}{3}}-1 }{k-1} \cdot k^{-\frac13} } < \frac1{100}\cdot \frac{k}{k-1}\cdot (k^{\frac{2}{3}}-1).$$

By Chebyshev's inequality,  
$$ \pr {A_2} \le \pr {\sum_{j\in [m_1]} \id (\alpha_j \in V_1, \beta_j \in V_1) \ge \frac1{4} \cdot k^\frac{2}{3} } \le k^{-\frac{2}{3}} \le 0.05, $$
where in the last inequality, we assume $k\ge 100$.

Now we move to the second round. From now on, we condition on neither $A_1$ nor $A_2$ holding, which happens with probability at least $0.9$. Then, conditional on $A_1$ and on the answers to the first $m_1$ queries, the pair $\{i^*, i^\prime\}$ is distributed uniformly in the set $R = \{i^*, i^\prime\} \cup (V_1 \setminus S)$. Moreover, if the algorithm does not query $\{i^*, i^\prime\}$ in the second round, then $i^*$ and $i^\prime$ will have the same distribution conditional on all $m$ queries, and the algorithm will not be able to identify $i^*$ with probability higher than $0.5$. Then, conditional on $A_1$, $A_2$, and the queries from the first round, the probability that the algorithm queries $\{i^*, i^\prime \}$ in the second round is at most
\[
  m_2 \cdot \frac1{\binom{|R|}{2}} \le \frac{k^{4/3}}{50 \cdot \frac1{2}k^{2/3}\cdot (\frac1{2} k^{2/3}-1)} \le 0.1.
\]
Therefore, the success rate of any deterministic $2$-round algorithm making at most $\frac{k^{2/3}}{100}$ queries is at most $0.1 + 0.1 + 0.5 < \frac{9}{10}$. As already noted, by Yao's minimax principle this also implies the result for randomized algorithms.

% In this situation, if we only look at the nodes from the set $R = \{ i^*, i^{\prime}, V_1 \backslash S \}$, intuitively speaking, all the nodes behave exactly the same in the first round, i.e., from the perspective of the player, they are equally likely in distribution to be the winner of the tournament. Recall that the player needs to exactly query pair $(i^*, i^{\prime})$ to find out the winner, given $m_2 =  \frac{1}{100} k^\frac{4}{3}$, this happens with probability at most 
% $m_2 \cdot \frac1{\binom{|R|}{2}} \le 0.02$. Therefore, the success rate of any deterministic algorithm given $m = \Omega\Paren{k^{\frac{4}{3}}}$ is at most $0.05+0.05+0.02<\frac{2}{3}$. Finally, by Yao's minimax principle, we prove that any algorithm (no matter deterministic or randomized) which achieves the task in 2 rounds must require $\Omega\Paren{k^{\frac{4}{3}}}$ comparisons. 

\end{proof}

\subsubsection{A Lower Bound for $t$-round Algorithms}
\label{sec:t-round-lower-bound}

In this section, we extend our 2-round lower bound to $t$ rounds. Specifically, we want to prove the following theorem.

\begin{theorem}
\label{thm:lower-t-round}
For  any  $\fa > 1$, any $t$-round algorithm which achieves  $\fa$-approximation in the problem of parallel approximate maximum selection with non-adaptive adversarial comparators requires $\Omega\Paren{ \frac { k^{1+ \frac{1}{2^{t}-1}}}{3^t} }$ queries.
\end{theorem}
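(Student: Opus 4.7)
}
The plan is to generalize the reformulation and coupling argument used in Theorem~\ref{thm:lower-2-round} to a $t$-level layered random tournament, and then to proceed by induction on $t$. By Yao's principle, it suffices to design a distribution over tournaments on $[k]$ with a unique sink against which every deterministic $t$-round algorithm needs $\Omega(k^{1+1/(2^t-1)}/3^t)$ queries to identify the sink with probability at least $2/3$. I will take the nested structure $U_0 \supset U_1 \supset \cdots \supset U_t$, where $U_0=[k]$, each $U_q$ is a uniformly random subset of $U_{q-1}$ of size $|U_q|=k^{\alpha_q}$, and $\{i^*,i'\}$ is a uniformly random pair from $U_t$. Edges between layers $W_q := U_q \setminus U_{q+1}$ and $W_r$ with $q<r$ are always directed from $W_q$ to $W_r$ (deeper layers win), edges within each $W_q$ are independently uniformly random, and the final edge between $i'$ and $i^*$ is directed from $i'$ to $i^*$, so $i^*$ is the unique sink and $i'$ is its only ``decoy''.

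I would then prove by induction on $t$ that, for this construction with appropriately chosen exponents $\alpha_q$, any deterministic $t$-round algorithm needs $\Omega(c_t \cdot k^{1+1/(2^t-1)})$ queries with $c_t = \Omega(1/3^t)$. The base case $t=1$ is essentially the claim in Theorem~\ref{thm:lower-2-round} restricted to a single round: distinguishing $i^*$ from a uniformly random decoy $i'$ requires directly querying the pair $\{i^*,i'\}$, which forces $\Omega(k^2)$ queries. For the inductive step, assume the bound holds for $t-1$ and let $\gamma=(2^t-2)/(2^t-1)$, so that $|U_1|=k^\gamma$. Consider the first-round budget $m_1$. If $m_1 \geq \tfrac{1}{100}\,k^{2-\gamma}=\tfrac{1}{100}\,k^{1+1/(2^t-1)}$, we are already done. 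Otherwise I will define the analogues of the bad events $A_1,A_2$ from the two-round proof:
\begin{equation*}
A_1 = \{i^* \in S_1\;\text{or}\;i' \in S_1\}, \qquad A_2=\bigl\{|U_1 \cap S_1| \geq \tfrac12 |U_1|\bigr\},
\end{equation*}
where $S_1$ is the set of $U_1$-nodes that are compared against another $U_1$-node in round $1$. Exactly the same Markov/Chebyshev arguments as in the two-round proof, using $|U_1|=k^\gamma$ and $m_1 \ll k^{2-\gamma}$, give $\Pr[A_1]+\Pr[A_2]\leq 1/3$.

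The heart of the argument is the reduction step: conditional on $\lnot A_1 \wedge \lnot A_2$ and on the transcript of round $1$, the remaining $t-1$ rounds must solve a fresh $(t-1)$-round instance of parallel approximate maximum selection on the surviving set $R=U_1 \setminus S_1$, whose size is at least $|U_1|/2$. To make this precise I will argue by symmetry that, conditional on this information, the identities of $\{i^*,i'\}$ and of the deeper layers $U_2,\dots,U_t$ restricted to $R$ are distributed as a uniformly random $(t-1)$-level nested structure on $R$: round-$1$ outcomes between $W_0$-nodes and $U_1$-nodes are predetermined once $U_1$ is fixed, outcomes within $W_0$ are independent of everything inside $U_1$, and outcomes between two $U_1$-nodes involve only nodes in $S_1$, which is disjoint from $R$. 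Hence the inductive hypothesis applied to $R$ forces the remaining budget to satisfy
\begin{equation*}
m_2+\cdots+m_t \;\geq\; c_{t-1}\bigl(|U_1|/2\bigr)^{1+1/(2^{t-1}-1)} \;=\; c_{t-1}\cdot 2^{-(1+1/(2^{t-1}-1))}\,k^{\gamma\cdot 2^{t-1}/(2^{t-1}-1)}.
\end{equation*}

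The choice $\gamma=(2^t-2)/(2^t-1)$ is precisely the one that equates the two exponents $2-\gamma$ and $\gamma\cdot 2^{t-1}/(2^{t-1}-1)$, both equal to $1+1/(2^t-1)$; so adding the two lower bounds gives $m_1+\cdots+m_t = \Omega(c_t\,k^{1+1/(2^t-1)})$ with $c_t \geq c_{t-1}/C$ for some absolute constant $C$ (arising from the constants in the Markov/Chebyshev bounds and from the $2^{-(1+1/(2^{t-1}-1))}$ factor). A careful bookkeeping of these constants, absorbing the $1-\Pr[A_1]-\Pr[A_2]$ success-probability loss at each level, yields $c_t = \Omega(1/3^t)$, completing the induction. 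The main obstacle is the measure-theoretic argument that the conditional distribution on $R$ is genuinely a fresh random $(t-1)$-level structure; handling this requires being careful about what the transcript of round $1$ reveals about layer membership inside $U_1$, in particular about the layer sizes inside $R$, which are random but tightly concentrated around their deterministic ``target'' values because $|S_1|\ll |U_1|$.
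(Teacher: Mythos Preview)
Your overall strategy is the same as the paper's: a layered random tournament, Yao's principle, and induction on $t$. The layer sizes you set (via $\gamma=(2^t-2)/(2^t-1)$) match the paper's exactly, and your observation that this $\gamma$ equalizes the two exponents $2-\gamma$ and $\gamma\cdot 2^{t-1}/(2^{t-1}-1)$ is the right intuition for why the recursion balances.

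The gap is in the reduction step, and you have correctly identified it as ``the main obstacle'' without resolving it. Your two bad events $A_1,A_2$ mirror the $t=2$ proof but are not enough for general $t$. To invoke the induction on $R=U_1\setminus S_1$, you need the conditional structure on $R$ to be (or to contain) a genuine $(k',t-1)$-construction, which means every deeper layer must survive with essentially the correct size. Knowing only $|S_1|\le\tfrac12|U_1|$ does not prevent $S_1$ from wiping out most of a much smaller layer $U_q$ for $q\ge 2$. Your remark that layer sizes are ``tightly concentrated because $|S_1|\ll |U_1|$'' is morally right, but making it precise requires a separate concentration bound for each $|V_q\cap S|$, which is exactly what the paper does: it introduces $t$ bad events $A_q=\{|V_q\cap S|\ge \tfrac{1}{10}|U_q|\}$ for $q\in[t-1]$ together with $A_t=\{i^*\in S\text{ or }i'\in S\}$, and bounds each $\Pr[A_q]$ by Chebyshev.

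Even after these per-layer bad events, one more idea is needed that your sketch omits: the surviving layer sizes $|V_q\setminus S|$ are still random and do not exactly match the exponents required by a $(k',t-1)$-construction. The paper fixes this by an explicit \emph{subsampling} step: it sets $k'=\tfrac12|U_1|$ and, from each $V_q\setminus S$, draws uniformly a subset $V_q'$ of exactly the size prescribed by the $(k',t-1)$-construction. Conditional on the first-round transcript and no bad event, the induced subgraph on $\bigcup_q V_q'$ is then \emph{exactly} distributed as a $(k',t-1)$-construction, so the inductive hypothesis applies cleanly. Finally, the paper packages the induction as a quantitative lemma with a parameter $\gamma$ controlling both the query budget and the success probability simultaneously; the $3^t$ in the final bound arises because the budget blows up by a constant factor $\le 2.7$ at each level when passing from $k$ to $k'=\tfrac12 k^{(2^t-2)/(2^t-1)}$, plus the union-bound losses. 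Your informal recursion $c_t\ge c_{t-1}/C$ captures this, but the paper's formulation makes the bookkeeping exact.
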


We continue to model the problem as the game described in the previous subsection, but now with general $t$. We start with the construction of the random tournament, where a similar hierarchical structure to the 2-round construction is adopted. In the structure in Section~\ref{sec:two-round-lower-bound}, we can view node $i^*$ and $i^{\prime}$ as layer $2$, nodes in set $V_1$ as layer $1$, and all the other nodes as layer 0. We have thus designed a $3$-layer hierarchical structure in the proof of the 2-round lower bound, where edges are directed from lower to higher layers, and edges in the same layer are directed randomly. In this section, we generalize this construction to the following $(t+1)$-layer hierarchical structure, which we denote as $(k,t)$-construction.

Let $U_0$ denote the complete set of the nodes. In the first round, the adversary uniformly at random picks $k^{\frac{2^t-2}{2^t-1}}$ different nodes from $U_0$, which are denoted as $U_1$; etc.; in the $q$-th round, the adversary uniformly randomly picks $k^{\frac{2^t -2 ^q}{2^t-1} }$ from $U_{q-1}$, denoted as $U_q$, where $q \in [t-1]$. Finally, the adversary uniformly at random picks two nodes from $U_{t-1}$, denoted as $i^*$ and $i^{\prime}$, respectively, and we let $U_t = \{ i^*, i^{\prime}\}$ for the purpose of consistency.
%then the adversary uniformly at random picks up $k^{\frac{2^t-4}{2^t-1}}$ different nodes from $U_1$, which are denoted as $U_2$; 
%Now comes the part of the node valuing. There are $(t+2)$ possible outcomes for each node:
% $$ x_i =\left\{
% \begin{array}{lcl}
% 5(t+1) \cdot \fa,    &      & \text{if }  {i= i^{*}},  \\
% 5t \cdot \fa,    &      & \text{if }  {i= i^{\prime}}, \\
% 5q \cdot \fa,     &      & \text{if }  {i\in U_q \backslash U_{q+1},  0 \le q \le t-1.} 
% \end{array} \right. $$
For convenience, we define $V_0 = U_0 \backslash U_1$, $\cdots$, $V_q = U_q \backslash U_{q+1}$, where $0\le q \le t-1$, and $V_t = U_t = \{ i^*, i^{\prime}\}$. For $i < j$, we direct all edges from $V_i$ to $V_j$; for $q \neq t$,  edges between two nodes in $V_q$ are given a uniformly random direction; finally, the edge between $i^*$ and $i^\prime$ is directed towards $i^*$. Thus, $i^*$ is the unique sink in the graph.

The following is the core lemma in this section.

\begin{lemma}
\label{lem:main-t-lower}
Given a $(k,t)$-construction, and $ \forall \gamma<1$, every
deterministic $t$-round algorithm which finds $i^*$ with
probability higher than $\Paren{ \frac12+ \frac{\gamma}{100} \cdot
  3^{t}}$ requires $\Omega\left(\gamma\Paren{k^{1+ \frac{1}{2^{t}-1}}}\right)$ queries.
\end{lemma}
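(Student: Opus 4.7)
I plan to prove the lemma by induction on $t$, using the argument from the proof of Theorem~\ref{thm:lower-2-round} as the template for both the base case and the inductive step. The $3^t$ success-probability factor is precisely what arises from union-bounding over $O(1)$ bad events per round while losing a multiplicative factor at each recursive step.

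\emph{Base case} ($t = 1$). The $(k,1)$-construction makes $\{i^*, i'\}$ a uniformly random unordered pair in $U_0$. For a deterministic 1-round algorithm making $m$ queries, if the pair $\{i^*, i'\}$ is not among the queried edges, then the transcript is invariant under swapping $i^* \leftrightarrow i'$, and so for any node $j$ output by the algorithm, $\Pr[j = i^* \mid \text{transcript}] \le 1/2$. Since the probability that the pair is queried is at most $m/\binom{k}{2}$, total success is at most $1/2 + m/\binom{k}{2} \le 1/2 + \gamma \cdot 3/100$ whenever $m \le c \gamma k^2$ for a sufficiently small absolute constant $c$.

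\emph{Inductive step.} Assume the lemma holds for $t-1$. By Yao's minimax principle, fix a deterministic $t$-round algorithm making $m = \sum_q m_q$ queries with $m \le c \gamma k^{1+1/(2^t-1)}$. The key algebraic identity, verifiable directly from $\alpha_r := |U_r| = k^{(2^t-2^r)/(2^t-1)}$, is $\alpha_{r-1}^2/\alpha_r = k^{1+1/(2^t-1)}$ for every $r \in [t]$. I introduce the same two bad events as in the proof of Theorem~\ref{thm:lower-2-round}: letting $S$ be the set of $U_1$-nodes queried against another $U_1$-node in round 1, set $A^{(1)} = \{i^* \in S \text{ or } i' \in S\}$ and $A^{(2)} = \{|V_1 \cap S| \ge |V_1|/2\}$. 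Markov's inequality and a Chebyshev-style bound, structurally identical to those used in the $t=2$ proof, yield $\Pr[A^{(1)} \cup A^{(2)}] \le 2\gamma/100$ when $c$ is a sufficiently small constant. Conditional on $\neg A^{(1)} \wedge \neg A^{(2)}$, the distribution of $\{i^*, i'\}$ together with the nested hierarchy $U_2 \supset \cdots \supset U_{t-1} \supset \{i^*, i'\}$, when restricted to the candidate set $R := \{i^*, i'\} \cup (V_1 \setminus S)$ of size at least $\alpha_1/2$, is that of a $(|R|, t-1)$-construction on $R$. Applying the inductive hypothesis to the $(t-1)$-round subgame on $R$, and using the exponent identity $\alpha_1^{1+1/(2^{t-1}-1)} = k^{1+1/(2^t-1)}$ to check that the residual query budget $m - m_1$ lies below the inductive threshold, yields conditional success at most $1/2 + \gamma 3^{t-1}/100$. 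The combined bound is therefore at most $2\gamma/100 + 1/2 + \gamma 3^{t-1}/100 \le 1/2 + \gamma(3^{t-1}+2)/100 \le 1/2 + \gamma \cdot 3^t/100$, completing the induction.

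The main obstacle is rigorously justifying the ``restricted distribution is a $(|R|, t-1)$-construction'' claim, which is subtler than it appears because the algorithm's adaptive rounds-$2$-through-$t$ queries depend on the entire round-1 transcript. The cleanest formalization uses a deferred-decisions view of the adversary: the sampling of $(U_1, U_2, \ldots, U_{t-1}, i^*, i')$ is split into first drawing $U_1$ (needed to answer round-1 queries) and then, conditional on the round-1 transcript and on $\neg A^{(1)} \wedge \neg A^{(2)}$, drawing $(U_2, \ldots, U_{t-1}, i^*, i')$ uniformly inside $R$ with the correct cardinalities. The exchangeability that makes this two-stage sampling valid is that nodes in $V_1 \setminus S$ have query-transcripts indistinguishable from those of $i^*, i'$ (both kinds are unqueried against other $U_1$-nodes and have beaten every queried $V_0$-node), so from the algorithm's perspective the residual game on $R$ is an honest $(|R|, t-1)$-instance, to which the inductive hypothesis applies.
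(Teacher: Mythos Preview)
Your inductive reduction to a $(|R|,t-1)$-construction does not go through as written, and the gap is exactly the step you flag as ``the main obstacle.''

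First, a definitional issue: you set $R := \{i^*,i'\}\cup (V_1\setminus S)$, but $V_1 = U_1\setminus U_2$, so $R$ contains \emph{none} of the intermediate layers $V_2,\dots,V_{t-1}$. Restricting the hierarchy $U_2\supset\cdots\supset U_{t-1}\supset\{i^*,i'\}$ to $R$ therefore collapses to the single pair $\{i^*,i'\}$, which is certainly not a $(|R|,t-1)$-construction for $t\ge 3$. Presumably you intended $R = U_1\setminus S$.

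Even under that reading, the deferred-decisions claim fails. Conditional on $U_1$ and the round-1 transcript, the posterior of $U_2$ is \emph{not} uniform in $U_1\setminus S$: round-1 answers to $U_1$-vs-$U_1$ queries carry information about which $S$-nodes lie in higher layers, so $|U_2\cap S|$ is correlated with the transcript, and hence $|U_2\cap(U_1\setminus S)| = \alpha_2 - |U_2\cap S|$ is random. Your two bad events control only $|V_1\cap S|$ and $\{i^*,i'\}\cap S$; they say nothing about $|V_q\cap S|$ for $q\ge 2$, so the residual layer sizes are uncontrolled and the structure on $U_1\setminus S$ is not a $(k',t-1)$-construction with the prescribed cardinalities.

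The paper's proof addresses precisely this: it introduces $t$ bad events $A_1,\dots,A_t$, with $A_q=\{|V_q\cap S|\ge \tfrac{1}{10}\alpha_q\}$ for each layer $q\in[t-1]$, bounding each by $O(\gamma/t)$ via Chebyshev. Conditioning on $\bigcap_q \neg A_q$, each $V_q\setminus S$ is large enough that one can \emph{subsample} fresh sets $V'_q\subset V_q\setminus S$ of exactly the sizes required so that the induced subgraph on $\bigcup_q V'_q$ is distributed as a genuine $(k',t-1)$-construction with $k'=\tfrac12\alpha_1$. The inductive hypothesis is then applied to this subgraph (using that identifying $i^*$ in the full tournament entails identifying it in the subgraph), with the query budget translated via the exponent identity you correctly state. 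Your two-event version cannot support this subsampling, and without it the induction does not close.
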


It is not hard to show that Theorem~\ref{thm:lower-t-round} can be
viewed as a corollary of the lemma, since given the random
$(k,t)$-construction, by setting $\gamma=\frac1{3^t}$, the lemma tells
that every $t$-round algorithm which finds $i^{*}$ with constant
probability makes at least
$\Omega\Paren{ \frac { k^{1+ \frac{1}{2^{t}-1}}}{3^t} }$ queries, and
any algorithm which achieves $\tau$-approximation should find $i^{*}$
with constant probability. Finally, by Yao's minimax principle, this
also holds for randomized algorithms. Therefore, our remaining task is to prove
Lemma~\ref{lem:main-t-lower}.

\begin{proof}
We prove the lemma by induction. Throughout the proof we assume that
$k$ is large enough with respect to $t$ and $\frac1\gamma$. We will
assume that the algorithm makes at most
$\frac{\gamma}{100}\Paren{k^{1+ \frac{1}{2^{t}-1}}}$ queries, and show
inductively that it succeeds in identifying $i^*$ with probability at
most $\frac12+ \frac{\gamma}{100} \cdot 3^{t}$.

For the base case when $t=2$, the lemma holds from the argument in the previous section.
For the inductive step, let $t$ be any integer where $t\ge 3$. Recall
that the number of queries asked by the algorithm in the first round
is $m_1 \le \frac{\gamma}{100} k^{1+ \frac{1}{2^{t}-1 }} $. We use
$e_j = (\alpha_j, \beta_j)$ to denote the $j$-th query, where $j \in
[m_1]$. By analogy with the $2$-round proof, let $S$ denote the set of nodes in $U_1$ which have ever competed with some other nodes from $U_1$, i.e., $S = \{ i_1 \in U_1 : \exists i_2 \in U_1, \exists j \in [m_1], e_j = (i_1,i_2)\}$. Now we want to show that the following $t$ ``bad'' events happen with a small probability:

$$ \forall q \in [t-1], A_q = \{  | V_q \cap S |  \ge \frac{1}{10} \cdot k^{\frac{2^t -2 ^q}{2^t-1} }\}, ~~~~ A_{t} = \{ i^{\prime} \in S ~ \cup ~ i^{*} \in S \}.$$

We bound the probability of event $A_t$ first. By a union bound, 
$$\pr{A_t} \le \pr{ i^{\prime} \in S } + \pr{i^{*} \in S}  \le 2 \cdot m_1 \cdot  \frac{k^{\frac{2^t -2}{2^t-1} }-1}{\binom{k}{2}} \le 0.05 \gamma.$$

With respect to $A_q, q \in [t-1]$, let $e_j = (\alpha_j, \beta_j)$, where $j \in [m_1]$. We note that $| V_q \cap S | \le 2\cdot \sum_{j\in [m_1]} \id (\alpha_j \in V_1, \beta_j \in V_q)$, where $\forall j$, $ \E \Paren{ \id (\alpha_j \in V_1, \beta_j \in V_q)} =  \frac{ k^{\frac{2^t-2}{2^t-1}} \cdot  k^{\frac{2^t-2^q}{2^t-1}}} { \binom{k}{2} }$, which is roughly $k^{-\frac{2^q}{2^t-1}}$. Furthermore, $\forall j_1 \neq j_2$, $\id (\alpha_{j_1} \in V_1, \beta_{j_1} \in V_q)$ and $\id (\alpha_{j_2} \in V_1, \beta_{j_2} \in V_q)$ are negatively correlated.
Therefore,
$$ \E \Paren{\sum_{j\in [m_1]} \id (\alpha_j \in V_1, \beta_j \in V_q)}  = m_1 \cdot \frac{ k^{\frac{2^t-2}{2^t-1}} \cdot  k^{\frac{2^t-2^q}{2^t-1}}} { \binom{k}{2} }\le \frac{1}{50}\cdot k^{\frac{2^t -2 ^q}{2^t-1} },$$
$$ \Var \Paren{\sum_{j\in [m_1]} \id (\alpha_j \in V_1, \beta_j \in V_q)} \le m_1\cdot \frac{ k^{\frac{2^t-2}{2^t-1}} \cdot  k^{\frac{2^t-2^q}{2^t-1}}} { \binom{k}{2} }\cdot \Paren{1-\frac{ k^{\frac{2^t-2}{2^t-1}} \cdot  k^{\frac{2^t-2^q}{2^t-1}}} { \binom{k}{2} }}  \le\frac{1}{50}\cdot k^{\frac{2^t -2 ^q}{2^t-1} }.$$

By Chebyshev's inequality,  
$$ \pr {A_q} \le \pr {\sum_{j\in [m_1]} \id (\alpha_j \in V_1, \beta_j
  \in V_q) \ge \frac{1}{20} \cdot k^{\frac{2^t -2 ^q}{2^t-1}} } \le 25
k^{- \frac{2^{t-1}}{2^t-1}} \le \frac{0.05\gamma}{t}, $$
where in the last inequality, we assume $k \ge \frac{Ct^2}{\gamma^2}$
for a large enough constant $C$.

From now on, we condition on none of the bad events $A_1, \ldots, A_t$
holding, which happens with probability at least $1 - 0.1\gamma$. We
also condition on the answers to the first $m_1$ queries. We would
like to say that the conditional distribution on the graph induced on
$U_1 \setminus S$ is identical to that of a
$(k^\prime, t-1)$-construction for $k^\prime = U_1 \setminus
S$. However, because of the random choice of $S$, the sizes of
$U_q\setminus S$ are not exactly as prescribed in the definition of a
$(k^\prime, t-1)$ construction. In order to finish the induction, we
consider the following process. For
$ k^{\prime} = \frac12 k^{\frac{2^t -2}{2^t-1} }$, we first 
denote $V^\prime_t = \{i^*, i^{\prime}\}$; then, we uniformly at
random draw $(k^{\prime})^{\frac{2^{t-2}}{2^{t-1}-1} } -2 $ nodes from
from $V_{t-1} \backslash S$,  and denote them as $V^{\prime}_{t-1}$; from
$V_q \backslash S$, $q \in [t-1]$, we uniformly at random draw
$(k^{\prime})^{\frac{2^{t-1} -2^{q-1}}{2^{t-1}-1} } -
|V^{\prime}_{q+1}| < \Paren{\frac12}^{\frac{2^{t-1}
    -2^{q-1}}{2^{t-1}-1}} \cdot k^{\frac{2^t -2^q}{2^t-1}} <
\frac{3}{4}k^{\frac{2^t -2^q}{2^t-1}}$ nodes, and denote them as
$V^{\prime}_q$. Conditonal on the bad events not holding, and on the
query answers from the first round, the subgraph induced on the nodes from $V^{\prime}_1$,
$V^{\prime}_2$, $\cdots$, and $V^{\prime}_t$, is distributed
identically to a $(k^{\prime},t-1)$ construction. Clearly, for the
algorithm to determine the sink $i^*$ in the full tournament, it must
also determine it in this subgraph. Ignoring queries in rounds
$2, \ldots, t$ to edges not in the subgraph, the algorithm is allowed to
ask at most 
$m = \frac{\gamma}{100}\Paren{k^{1+ \frac{1}{2^{t}-1}}} \le
\frac{\gamma}{100} \cdot 2.7 \cdot \Paren{ \Paren{k^{\prime}}^{1+
    \frac{1}{2^{t-1}-1}}}$ queries, and, by the inductive assumption, any
$(t-1)$-round algorithm can find $i^{*}$ with probability at most
$\frac12+ \frac{3^{t-1}}{100}\cdot 2.7\gamma $. Finally, by a union bound, the
probability of success of the $t$-round algorithm is at most
$\frac12+ \frac{3^{t-1}}{100}\cdot 2.7\gamma + 0.1 \gamma \le \frac12+ \frac{ 3^t}{100}
\gamma$. This finishes the inductive step.
% In this situation, if we only look at the nodes from the set $ U_1\backslash S$, intuitively speaking, all the nodes behave exactly the same in the first round, i.e., from the perspective of the player, they are equally likely to be the winner of the tournament. \hnote{However, because of the randomness, we note that the number of the nodes in $U_1\backslash S$ in each layer does not match anymore. 
%From $ V_1\backslash S$, we uniformly randomly draw $ k^{\prime} = \frac1{2}  k^{\frac{2^t -2}{2^t-1} }$ nodes, denoted as $V^{\prime}_1$; from $V_q \backslash S, q \in [t-1]$, we uniformly randomly draw $(k^{\prime})^{\frac{2^{t-1} -2^{q-1}}{2^{t-1}-1} } =  \Paren{\frac12}^{\frac{2^{t-1} -2^{q-1}}{2^{t-1}-1}} \cdot k^{\frac{2^t -2^q}{2^t-1}} \le \frac{2}{3} k^{\frac{2^t -2^q}{2^t-1}}$, denoted as $V^{\prime}_q$; finally we pick up $i^*$ and $i^{\prime}$, denoted as $V^{\prime}_t$.
%}
\end{proof}

\chapter{Future Directions}
Compared with the classical topics in statistical inference, statistical inference with the presence of malicious users is far less understood. In the this section, we will mention some new directions and future work.

\noindent \textbf{Maturing Private Algorithms}

\noindent Up to this point, we have established many results in the area of private learning. However,  this is far from enough to bring differential privacy to real practice. First, there are still many fundamental problems unsolved in this area. For example, efficient algorithms are missing, even for a simple task of estimating product distributions with pure DP constraint. Besides, DP-SGD (a privatized version of SGD algorithm) is the most popular and successful algorithm in solving empirical risk minimization. However, its performance is always unsatisfactory under complex models, which brings significant difficulty to building private neural networks in practice. Second, differential privacy is designed to protect against membership inference attack, while there are many other privacy notions which are more suitable if the attack is different. By developing algorithms for these new models, I will facilitate private machine learning in settings where differential privacy was previously considered untenable. It is also interesting to explore the connections between different privacy notions.

\noindent \textbf{Robust Machine Learning in Graphical Models}

\noindent  Machine learning algorithms are always built under the assumption that the data is clean and well-behaved. However, this is not the case in the real world, where data is always inaccurate, or even malicious. For example, recent studies have shown that autonomous cars can be fooled by toxic signs. In such settings, standard statistical methods may give meaningless results without careful design. 

Graphical models are very useful in high dimensional statistical inference tasks. For example, Ising models are central in statistical physics, and phylogenetic trees are prevalent in biological applications. An interesting direction is to study robustness when dealing with graphical models, which is highly under-explored. 
This can model settings where we want to carry out graphical statistical tasks with the existence of a small number of malicious nodes,  and we wish to prevent them from significantly manipulating our results.

\noindent \textbf{The Interplay between Different Resource Limitations}

\noindent Another interesting direction is to study how different  constraints interplay with each other in machine learning tasks. For example, our model may suffer from attacks of multiple types simultaneously. A recent work has shown that poisoning LDP messages can be far more destructive than poisoning the raw data itself. In other words, privacy guarantees amplify the risk of the algorithms attacked by data manipulation. We believe such a phenomenon also exists in the other resource constraints,  such as the interplay between the communication constraint and data manipulation.  Understanding them is a critical step to building real-world machine learning systems.

\appendix

\bibliography{abr,masterref}

\end{document}